\def\*#1{\mathbf{#1}}
\def\+#1{\mathbb{#1}}
\def\wh#1{\widehat{#1}}
\newcommand*{\Scale}[2][4]{\scalebox{#1}{$#2$}}
\newcommand{\I}{{-1}}
\newcommand{\T}{\top}
\newcommand{\Lp}{\left(}
\newcommand{\Rp}{\right)}
\newcommand{\Ls}{\left[}
\newcommand{\Rs}{\right]}
\newcommand{\tlo}{\mathcal{O}}
\newcommand{\tlq}{\mathcal{Q}}
\newcommand{\tls}{\mathcal{S}}
\newcommand{\dnt}{\delta_{NT}}
\newcommand{\norm}[1]{\left\lVert#1\right\rVert}
\newcommand{\RNum}[1]{\uppercase\expandafter{\romannumeral #1\relax}}
\newcommand{\calN}{\mathcal{N}}
\newcommand{\tvec}{\mathrm{vec}}
\newcommand{\covI}{\Gamma^{\textnormal{obs}}}
\newcommand{\covII}{\Gamma^{\textnormal{miss}}}
\newcommand{\covIII}{\Gamma^{\textnormal{miss, cov}}}
\newcommand{\covIS}{\Gamma^{\textnormal{obs},S}} 
\newcommand{\covIIS}{\Gamma^{\textnormal{miss},S}}
\newcommand{\covIIIS}{\Gamma^{\textnormal{miss, $S$, cov}}}
\newcommand{\Ttreat}{T_{1,i}}
\newcommand{\Tcontrol}{T_{0,i}}
\newcommand{\treat}{{(1)}}
\newcommand{\control}{{(0)}}
\newcommand{\iid}{\mathrm{iid}}
\newcommand{\diag}{\mathrm{diag}}
\newcommand{\obs}{\mathrm{obs}}
\newcommand{\Var}{\mathrm{Var}}
\newcommand{\ps}{P(W_{it}=1|S_i)}
\newcommand{\p}{P(W_{it}=1)}
\newcommand{\cov}{\mathrm{cov}}
\newcommand{\scovI}{\Sigma^{\textnormal{obs}}}
\newcommand{\scovII}{\Sigma^{\textnormal{miss}}}
\newcommand{\scovIII}{\Sigma^{\textnormal{miss, cov}}}
\newcommand{\scovIS}{\Sigma^{\textnormal{obs},S}}
\newcommand{\scovIIS}{\Sigma^{\textnormal{miss},S}}
\newcommand{\scovIIIS}{\Sigma^{\textnormal{miss, $S$, cov}}}
\newcommand{\covItreat}{\Gamma^{\textnormal{obs},(1)}}
\newcommand{\covIcontrol}{\Gamma^{\textnormal{obs}}}
\newcommand{\covIItreat}{\Gamma^{\textnormal{miss},(1)}}
\newcommand{\covIIcontrol}{\Gamma^{\textnormal{miss}}}
\newcommand{\covIItreatcontrol}{\Gamma^{\textnormal{miss}, \Delta}}
\newcommand{\covIItreattreatcontrol}{\Gamma^{\textnormal{miss, cov}, (1), \Delta}}
\newcommand{\covIIcontroltreatcontrol}{\Gamma^{\textnormal{miss, cov},(0), \Delta}}
\newcommand{\covIIItreatcontrol}{\Gamma^{\textnormal{miss, cov}, (0), (1)}}
\newcommand{\covIIIcontrolcontrol}{\Gamma^{\textnormal{miss, cov}}}
\newcommand{\spsi}{p_{it}^{S_i}}
\newcommand{\AVar}{\mathrm{AVar}}
\newcommand{\spsj}{p_{jt}^{S_j}}
\newcommand{\ACov}{\mathrm{ACov}}
\newcommand{\hatspsi}{\hat{p}_{it}^{S_i}}
\newcommand{\Cov}{\text{Cov}}
\newcommand{\Ttr}{T_{1,i}}
\newcommand{\nocontentsline}[3]{}
\newcommand{\tocless}[2]{\bgroup\let\addcontentsline=\nocontentsline#1{#2}\egroup}
\newcommand{\problim}{\mathrm{plim}}
\newtheorem{theorem}{Theorem}
\newtheorem{lemma}{Lemma}
\newtheorem{proposition}{Proposition}
\newtheorem{corollary}{Corollary}
\newtheorem*{remark}{Remark}
\newtheorem{assumpS}{Assumption}
\newtheorem{assumpG}{Assumption}
\newtheorem{assumpC}{Assumption}
\newtheorem{assumpGC}{Assumption}
\newcommand*{\thisdraft}{This version: \today} 
\newcommand*{\firstdraft}{First version: October 16, 2019}
\setlist{nolistsep}
\setlist{noitemsep}  
\newcommand*{\addFileDependency}[1]{
	\typeout{(#1)}
	\@addtofilelist{#1}
	\IfFileExists{#1}{}{\typeout{No file #1.}}
}
\newcolumntype{L}[1]{>{\raggedright\let\newline\\\arraybackslash\hspace{0pt}}m{#1}}
\newcolumntype{C}[1]{>{\centering\let\newline\\\arraybackslash\hspace{0pt}}m{#1}}
\newcolumntype{R}[1]{>{\raggedleft\let\newline\\\arraybackslash\hspace{0pt}}m{#1}}
\newcommand\tcaptab[1]{\captionsetup{position=top, font=normalsize, labelfont=bf, textfont=normalfont, justification=centering, margin=0mm, aboveskip=1mm, belowskip=0mm, labelsep=colon, singlelinecheck=false}\caption{#1}}
\newcommand\bnotetab[1]{\captionsetup{position=bottom, font=footnotesize,  textfont=normalfont, margin=1mm, skip=2mm, justification=justified, singlelinecheck=false}\caption*{#1}}
\newcommand\tcapfig[1]{\captionsetup{position=top, font=normalsize, labelfont=bf, textfont=normalfont, justification=centering, margin=0mm, aboveskip=2mm, belowskip=0mm, labelsep=colon, singlelinecheck=false}\caption{#1}}
\newcommand\bnotefig[1]{\captionsetup{position=bottom, font=footnotesize,  textfont=normalfont, margin=1mm, skip=2mm, justification=justified, singlelinecheck=false}\caption*{#1}}
\begin{document}

	\title{Large Dimensional Latent Factor Modeling with Missing Observations and Applications to Causal Inference\thanks{\scriptsize We thank Susan Athey, Mohsen Bayati, Guillaume Basse, Jianqing Fan, Kay Giesecke, Peter Glynn, Lisa Goldberg, Guido Imbens, Serena Ng, 
			David Simchi-Levi, seminar and conference participants at Stanford, MIT, Berkeley, Chicago Booth, Columbia, Cornell, Cornell Tech, Emory, Minnesota, USC, UCSB, UBC Sauder School of Business, University of Toronto Rotman School of Management, Boston University Questrom School of Business, 
			University of Illinois at Chicago, Stony Brook, Florida, INFORMS and the Marketplace Innovation Workshop for helpful comments.}}
	\date{\firstdraft \\ \thisdraft }
	
	\author{Ruoxuan Xiong\thanks{ \scriptsize Stanford University, Department of Management Science \& Engineering, Email: \url{rxiong@stanford.edu}.}
		\and
		Markus Pelger\thanks{\scriptsize Stanford University, Department of Management Science \& Engineering, Email: \url{mpelger@stanford.edu}.}
	}
	
	\onehalfspacing

	\begin{titlepage}
		\maketitle
		\thispagestyle{empty}
		

		\begin{abstract}
			
			This paper develops the inferential theory for latent factor models estimated from large dimensional panel data with missing observations. We propose an easy-to-use all-purpose estimator for a latent factor model by applying principal component analysis to an adjusted covariance matrix estimated from partially observed panel data. We derive the asymptotic distribution for the estimated factors, loadings and the imputed values under an approximate factor model and general missing patterns. The key application is to estimate counterfactual outcomes in causal inference from panel data. The unobserved control group is modeled as missing values, which are inferred from the latent factor model. The inferential theory for the imputed values allows us to test for individual treatment effects at any time under general adoption patterns where the units can be affected by unobserved factors.
			
			\vspace{1cm}
			
			\noindent\textbf{Keywords:} Factor Analysis, Principal Components, Synthetic Control, Causal Inference, Treatment Effect, Missing Entry, Large-Dimensional Panel Data, Large $N$ and $T$, Matrix Completion

			\noindent\textbf{JEL classification:} C14, C38, C55, G12
		\end{abstract}
	\end{titlepage}

	\section{Introduction}
	Large dimensional panel data with missing entries are prevalent. In causal panel data, the main focus is to estimate the unobserved potential outcomes. In financial data, stock returns can be missing before a company is listed, after its bankruptcy, or because of illiquidity. In macroeconomic datasets, panel data might be collected at different frequencies or not for all geographical locations resulting in missing entries. In the famous Netflix challenge, a majority of users' ratings for films are missing. Estimating missing entries in panel data is a fundamental problem with applications in social science, statistics, and computer science.
	
	This paper develops the inferential theory for latent factor models estimated from large dimensional panel data with missing observations. We propose a novel and easy-to-use approach to estimate a latent factor model by applying principal component analysis (PCA) to an adjusted covariance matrix, which is estimated from partially observed panel data. We derive the asymptotic normal distribution for the estimated factors, loadings, and imputed values. The key application is to estimate counterfactual outcomes for causal inference. The unobserved control group is modeled as missing values, which are inferred from the latent factor model. The inferential theory for the imputed values allows us to test for individual treatment effects at a particular time. This granular test is of practical relevance because we learn not only for whom but also when a treatment is effective.

	The inferential theory for latent factor models with missing data is important for a number of reasons. First, we show how to consistently impute the missing observations in a large dimensional panel data set, which can then be used as an input for other applications. Our confidence intervals for the imputed values can serve as a decision criterion if the imputed data should be used. Second, the distribution of the missing observations can actually be the object of interest itself. For example, the imputed values serve as the synthetic control in causal inference for which we need an asymptotic distribution theory. The inferential theory is key for deriving test statistics for treatment effects. Last but not least, we provide the complete inferential theory for the latent factors themselves, which is relevant when the factors are the object of interest and are used as input for other applications.

	Our method is very simple to adopt and works under general assumptions. We provide an ``all-purpose'' estimator that performs well under all empirically relevant missing patterns and only assumes a general approximate factor model. Our estimation consists of two simple steps, where we first apply PCA to a re-weighted covariance matrix to obtain the loadings and, in a second step, run a regression on these loadings using only the observed units to obtain the factors. The missing entries are estimated by the common components of the factor model.  Importantly, our estimator does not require the estimation of the observation pattern itself. In some cases, we might have additional information about the missing pattern. We provide a modification of our estimator that can take advantage of a probabilistic model of the missing pattern and use an inverse probability weight in the second step regression to obtain the factors. It is inspired by the inverse propensity weighted regression from causal inference that enjoys the doubly-robust property, meaning the estimator is robust to some form of omitted variable bias. Our probability weighted estimator also has similar desirable robustness properties when we omit latent factors, but it is generally less efficient than our all-purpose estimator.

	Our framework stands out by the very general patterns of missing observations that it can accommodate. We cover the common scenarios of missing at random or a simultaneous/staggered treatment adoption, where the treatment cannot be removed once implemented. Importantly, the missing pattern can depend in a general way on the unobserved factor loadings or unit-specific features. Hence, the observations can be missing because of how the units are exposed to the latent factors. Our simple all-purpose estimator does not require us to explicitly model this relationship, but takes it automatically into account. In the case of the propensity weighted estimator, we provide feasible estimators of the probability weights that result in the same distribution as the population weights.

	Deriving the inferential theory under these general conditions is a challenging problem. The missing observations have a complex effect on the asymptotic covariance matrix of the imputed entries. In particular, the asymptotic variance has an additional variance correction term compared with the fully observed panel. This term results in a larger asymptotic variance than in the fully observed case. The variance correction term arises because, in a panel with missing observations, we take averages over a different number of time periods for the different entries in the estimated covariance matrix. The variance correction term is larger if the observation pattern has many missing entries, or if it deviates more from a missing at random scheme. The propensity weighted estimator has a similar asymptotic distribution structure as our all-purpose estimator but in general a larger variance.

	Our work contributes to three distinct fields: large dimensional factor modeling, matrix completion, and causal inference. First, we extend the inferential theory of latent factors to large dimensional data with general patterns in missing entries. Second, matrix completion methods impute missing entries under the assumption of a low-rank structure, which is corrupted with noise. We provide confidence intervals for the imputed values. Lastly, the key question in causal inference is the estimation of counterfactual outcomes, i.e., what would have been the outcome if a unit had not been treated or if a unit had been treated. The unobserved counterfactual outcome can naturally be formulated as a missing observation problem. We are the first to provide a test for the point-wise treatment effect that can be heterogeneous and time-dependent under general adoption patterns where the units can be affected by unobserved factors: 

	This paper works under the framework of an approximate latent factor structure where both the cross-section dimension and time-series dimension are large. When the data is fully observed, \cite{bai2002determining} show that the factor model can be estimated with PCA applied to the covariance matrix of the data. \cite{bai2003inferential} and \cite{fan2013large} derive the consistency and asymptotic normality of the estimated factors, loadings and common components. Extensions of latent factor models with fully observed data include adding observable factors in \cite{bai2009panel}, sparse and interpretable latent factors in \cite{pelgerxiongsparse2020}, time-varying loadings in \cite{fan2016projected} and \cite{pelger2018state}, high-frequency estimation in \cite{pelger2019large} and including additional moments to estimate weak factors as in \cite{lettaupelger2019}. When a panel has missing entries, a common approach is to estimate the factor model from a subset of the data for which a balanced panel is available. This approach has two drawbacks: First, it is, in general, less efficient as our approach makes use of all the data. Second, it can lead to a biased estimate if the data is not missing at random.

	The inferential theory of large dimensional factor models with missing observations is an active area of research. Our paper is most closely related to the recent papers by \cite{jin2020factor}, \cite{bai2019matrix}, and \cite{cahan2021factor}. The papers differ in the algorithms to impute the missing observations, the generality of the missing patterns, and the proportion of required observed entries relative to the missing entries. Our main results are derived under the assumption that entries are observed at the same rate as missing entries, but we show that this assumption can be considerably relaxed. Importantly, in contrast to the other papers, our framework allows the missing pattern to depend on unit-specific features and to test for an individual treatment effect at any time for any cross-section unit or a weighted treatment effect. 
	{This is exactly what we need for the main application in causal inference.}
	\cite{jin2020factor} provide the inferential theory for the estimated factor model with the expectation-maximization (EM) algorithm under the assumption of randomly missing values. This is a major advance in the literature on using the EM algorithm to impute missing values on cross-sectional data \citep{rubin1976inference,dempster1977maximum}.\footnote{\cite{stock2002macroeconomic,banbura2014maximum,negahban2012restricted} propose to use EM algorithms to estimate the factor model from panel data with missing observations. \cite{giannone2008nowcasting,doz2011two,jungbacker2011maximum,stock2016dynamic} propose to use the state-space framework and Kalman Filtering to estimate the factor model with missing observations. \cite{gagliardini2019diagnostic} propose a simple diagnostic criterion for an approximate factor structure in large (unbalanced) panel data sets.} \cite{bai2019matrix} provide the inferential theory for the factor-based imputed values based on the innovative idea of shuffling rows and columns such that there exist fully observed TALL and WIDE blocks for estimating the factor model. Their TALL-WIDE algorithms involves two applications of principal components on the two fully observed blocks.  \cite{cahan2021factor} propose the TALL-PROJECT estimator that extends the TALL-WIDE estimator by first using only the fully observed TALL block for a PCA estimation of the factors and then obtains the loadings from a time-series regression that that uses all observed entries. They provide the inferential theory for this TALL-PROJECT estimator.
	Each of these estimators is designed for a specific observation pattern under which it performs particularly well, but might not generalize to other patterns.  In contrast, we view our estimator as a simple all-purpose estimator that can reliably impute missing data and provide the correct confidence intervals for general missing patterns and factor structures, which makes it appealing for applied researchers in causal inference. In an extensive simulation study, we show that while our estimator has a similar performance as \cite{jin2020factor} for data missing at random, and as \cite{bai2019matrix} for missing with a block structure, our estimator can have a better performance for a staggered design or when the observation pattern depends on unit-specific features.

	Our imputed values are point-wise consistent and have asymptotic normal distributions, which is relevant for the matrix completion literature that studies a similar problem. Both our paper and the matrix completion literature assume a low-rank structure in the panel data. In the matrix completion literature, the most popular method is to estimate the low-rank matrix from a convex optimization problem using a nuclear norm regularization \citep{mazumder2010spectral,negahban2011estimation,negahban2012restricted}.  
	The main results in the matrix completion literature are upper bounds for the mean-squared estimation error of the estimated matrix. However, point-wise consistency does not hold in general because the typically used nuclear norm regularization results in a bias in the estimated matrix. In their path-breaking work, \cite{chen2019inference} propose de-biased estimators and provide an inferential theory under the assumption of i.i.d. sampling and i.i.d. noise.  There is a trade-off in terms of the generality of the model and the required observations, where our work allows for the most general patterns in missing observations with a general approximate factor structure at the cost of observing entries at a higher rate than \cite{chen2019inference}. Our paper contributes to the matrix completion literature by allowing general observation patterns and dependent error structures, which is particularly relevant for applications in social science.

	Our paper allows for heterogeneous and time-dependent treatment effects of an intervention and more general intervention adoption patterns compared with the synthetic control methods in causal inference. Furthermore, our paper provides a flexible test for treatment effects.
	In comparative case studies, a key question is to estimate the counterfactual outcomes for treated units. 
	A valid control unit is ``close'' to the treatment unit except for the treatment effect. Typically synthetic controls are weighted averages of untreated units where the weights depend on unit-specific features.
	A popular model assumption is that the potential outcome is linear in observed covariates and unobserved common factors. \cite{abadie2010synthetic,abadie2015comparative},  \cite{doudchenko2016balancing}, \cite{xu2017generalized}, \cite{li2017estimation} and \cite{li2017statistical} propose to match each treated unit by weighted averages of all control units using the pretreatment observations. \cite{li2017estimation}, \cite{li2017statistical} and  \cite{masini2018counterfactual} show the inferential theory for the average treatment effect over time. These methods rely on the assumption that there is only one treated unit and the treatment effects are either constant or stationary. Another method is to regress the post-treatment outcomes for the control units on the pre-treatment outcomes and covariates and use the coefficients to predict the counterfactual outcome for the treated/control units. \cite{athey2018matrix} propose to use matrix completion methods to impute the control panel data and allow for more general treatment adoption patterns: multiple treated units and staggered treatment adoption. However, they do not provide point-wise guarantees for the imputed values. 
	In this paper, in addition to allowing for general treatment adoption patterns, we also provide the point-wise inferential theory for the imputed counterfactual outcomes. Furthermore, we can test for treatment effects even if they are heterogeneous and time-dependent. Our approach does not require a priori knowledge about which covariates describe if treated and control units are a good match. Instead, our latent loadings capture all unit-specific information in a data-driven way. The synthetic control, that we impute, is a weighted average of the untreated units, that takes all unit-specific information into account.  In causal inference, we can either model the relationship between the covariates and the outcome, or model the probabilities of missingness to estimate causal effects. Doubly robust procedures, as discussed, for example, in \cite{kang2007} combine both by using a propensity weight in regressions to mitigate the selection bias. Our propensity weighted estimator builds on this intuition. Interestingly, we prove that using the estimated feasible propensity instead of the population weights does not affect the asymptotic distribution. This observation is aligned with the results for the classical inverse propensity weighted estimator in \cite{hirano2003efficient}.


	The rest of the paper is organized as follows. Section \ref{sec:model-estimation} introduces the model and provides the simple all-purpose estimator for factors, loadings, and common components. Section \ref{sec:simple-assumptions} states the necessary assumptions for the asymptotic distribution results that are presented in Section \ref{sec:asymptotic-results}. Sections \ref{sec:propensity-weighted-estimator} and \ref{sec:feasible-est-prop-score} extend the results to the propensity weighted estimator. Section \ref{sec:test} shows how to apply our model to test treatment effects. We discuss the feasible estimation in Section \ref{sec:feasible} and how to relax the rate conditions in Section \ref{sec:generalization}. The extensive simulation in Section \ref{sec:simulation} shows the good finite sample properties, the strong performance relative to other methods, and robustness results under misspecification. The Internet Appendix collects additional simulation results and all proofs.


	\section{Model and Estimation}\label{sec:model-estimation}
	
	\subsection{Model}\label{subsec:model}
	
	Assume we partially observe a panel data set $Y$ with $T$ time periods and $N$ cross-sectional units. $Y  \in \+R^{N \times T}$ has a factor structure with $r$ common factors. We denote by $F_t \in \+R^{r}$ the latent factors, $\Lambda_i \in \+R^{r }$ the factor loadings, $C_{it} = \Lambda_i^\top F_t$ the common component, and $e_{it}$  the idiosyncratic error:
	
	\[Y_{it} = \Lambda_i^\top F_t + e_{it} \quad \text{for $i = 1, 2, \cdots, N$ and $t = 1, 2, \cdots, T$}\]
	or in vector notation,
	\[\underbrace{Y_t}_{N \times 1}  = \underbrace{\Lambda}_{N \times r} \underbrace{F_t}_{r \times 1} + \underbrace{e_t}_{N \times 1} \qquad \text{for $ t = 1, 2, \cdots, T$}.\]
	
	In an asymptotic setup where $N$ and $T$ are both large, we randomly observe some entries in $Y$. Let $W_{it} \in \{0, 1\}$ be a binary variable, where $W_{it}= 1$ indicates that the $(i,t)$-th entry is observed and $W_{it} = 0$ otherwise. In this paper, we will estimate the latent factors $F$ and loadings $\Lambda$ from the partially observed $Y$, impute the missing values, and provide the inferential theory for all estimators.

	\subsection{Missing Observations}\label{subsec:missing-obs-pattern}

	We allow for very general patterns in the missing observations.
	Figure \ref{fig:obs-pattern-illustrate} shows three important examples widely seen in empirical applications. The first one is a randomly missing pattern, that is, whether an entry is observed or not does not depend on other entries or observable covariates. For example, the observational pattern of the Netflix challenge is usually modeled as entries missing at random. The second and third ones are the observation patterns for control panels in simultaneous and staggered treatment adoptions. Once a unit adopts the treatment, it stays treated afterwards, which will be modeled as missing values. These two patterns are widely assumed in the literature on causal inference in panel data.\footnote{See \cite{candes2009exact,zhou2008large} for the Netflix challenge and \cite{athey2018matrix,athey2018design} for missing patterns used in causal inference.}
	
	\begin{figure}[t!]
		\tcapfig{Examples of patterns for missing observations}
		\centering
		\begin{tabular}{L{5cm} L{5cm} L{5cm}}
			\includegraphics[width=0.33\textwidth]{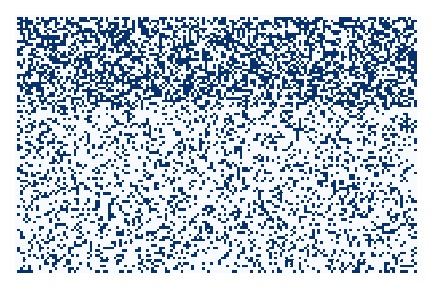} &
			\includegraphics[width=0.33\textwidth]{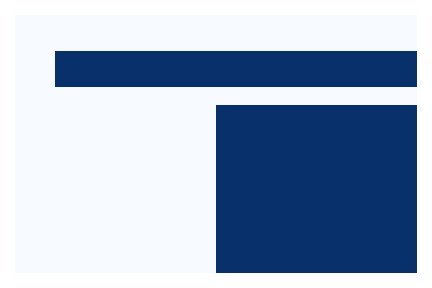} &
			\includegraphics[width=0.33\textwidth]{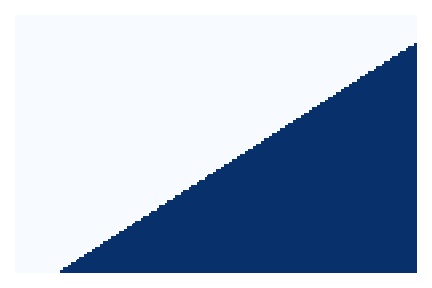} \\
			(a) Randomly missing & (b) Simultaneous treatment adoption & (c) Staggered treatment adoption
		\end{tabular}
		\bnotefig{These figures show examples of patterns of missing observations. The shaded entries indicate the missing entries for different $N\times T$ observation matrices $W$.}
		\label{fig:obs-pattern-illustrate}
	\end{figure}

	$\tlq_{ij} = \{t: W_{it} = 1 \text{ and } W_{jt} = 1 \}$ denotes the set of time periods $t$ when both units $i$ and $j$ are observed. $|\tlq_{ij}|$ is the cardinality of the set $\tlq_{ij}$. Assumption \ref{ass:obs-equal-weight} states the conditions on the observation pattern.
	
	\begin{assumpS}[Observational Pattern] \label{ass:obs-equal-weight}
		\texttt{}
		\begin{enumerate}[wide=0pt, widest=99,leftmargin=\parindent, labelsep=*]
			\item $W$ is independent of $F$ and $e$.
			\item \label{ass:add-obs}  For a given observation matrix $W$, $  \frac{|\tlq_{ij}|}{T}  \geq  \underline{q} > 0$ and  there exist constants $ q_{ij}$ and $q_{ij,kl}$  for all $i,j,k,l$
			such that  $q_{ij} =  \lim_{T \rightarrow \infty}  \frac{|\tlq_{ij}|}{T} $ and $q_{ij,kl} = \lim_{T \rightarrow \infty} \frac{|\tlq_{ij} \cap \tlq_{kl} |}{T }$.  
		\end{enumerate}
	\end{assumpS}
	
	Assumption \ref{ass:obs-equal-weight} allows very general observation patterns that can vary over time and depend on unit-specific features. In particular, the observation pattern can depend on the factor loadings that capture cross-sectional information. For the purpose of identification, we assume that the observation pattern is independent of the factors. Note that the estimator of the common components is ``symmetric'' in $N$ and $T$, and therefore we could switch the roles of $N$ and $T$ in the above assumptions. In that case, the observation pattern would be independent of the loadings but can depend on the factors. The assumption that the observation pattern is independent of the errors is closely related to the unconfoundedness assumption in \cite{rosenbaum1983central}. Assumption \ref{ass:obs-equal-weight} implicitly assumes that for any two units,  the number of time periods when both are observed is proportional to $T$. This simplifies the presentation of our results and is sufficient for most empirically relevant cases, but we will also discuss how this assumption can be relaxed. 
	
	Our framework allows for the following important examples:
	\begin{enumerate}
		\item {\it Missing at random:} $\p=p$ for all $i$ and $t$. In this case all units and times are equally likely to be observed.
		\item {\it Cross-section missing at random:} $\p=p_t$. For each $t$ each cross-sectional unit is equally likely to miss.
		\item {\it Time-series missing at random:} $\p = p_i$. For each $i$ each time observation is equally likely to miss.
        \item {\it Cross-section and time-series dependency:} $P(W_{it}=1)=p_{it}$, which allows for different probabilities for each unit and time.
		\item {\it Staggered treatment adoption:} If $W_{it}=0$ then $W_{it'}=0$ for all $t' \geq t$. This is a special case of 4. with $\p=p_{it}$. For the special case that the probability does not depend on $i$, the staggered design is a special case of cross-section missing at random $\p=p_t$.
		\item {\it Mixed frequency observations:} Each cross-section unit has a fixed known observation pattern over time. This can be modeled as one random draw for each cross-section unit to assign it to a specific pattern. A feasible model approach uses $\p=p_{t}$ as this is another special case of cross-section missing at random.
	\end{enumerate} 
	
	
	We provide an ``all-purpose'' estimator without the need to explicitly model the probability distribution $\p$. However, in some cases, we might have additional information about the missing pattern. In Section \ref{sec:propensity-weighted-estimator}, we provide a modification of our estimator that can take advantage of a model for $\p$. More specifically, we allow the cross-sectional observation pattern $\ps$ to depend on observed cross-sectional features $S = [S_i] \in \+R^{N \times K}$. These covariates $S_i$ are assumed to be time-invariant. They can be discrete, for example, an indicator variable for gender in the evaluation of a drug treatment or continuous, for example, standardized past test scores in the evaluation of an educational policy change. The cross-sectional features $S_i$ can actually be the estimated latent loadings $\Lambda_i$ themselves. We discuss how the observation probability $\ps=p_t(S_i)$ can be estimated with parametric or non-parametric estimators. While this modified estimator requires some changes to Assumption \ref{ass:obs-equal-weight}, it provides the same level of generality for the missing pattern, as discussed in Section \ref{sec:propensity-weighted-estimator}.

	\subsection{Estimator}\label{subsec:estimator}

	There are two steps to estimate the latent factor model from the partially observed panel data: First, we need to estimate the covariance matrix of the data, and second we estimate the latent factors and loadings based on the eigenvectors of the estimated covariance matrix.
	The conventional latent factor estimator without missing values applies principal component analysis to the sample covariance matrix. A natural way to deal with the missing values is to set these entries to zero. However, the conventional PCA estimator will then be biased. Our estimator correctly re-weights the entries in the covariance matrix before applying PCA.

	We first impute the missing entries by 0 and denote the imputed matrix as $\tilde Y$:\footnote{In matrix notation, we have $\tilde Y = Y \odot W$, where $\odot$ denotes the Hadamard product.}
	\[\tilde Y_{it} = Y_{it} W_{it},  \quad \text{for } i = 1, 2, \cdots, N \text{ and } t = 1, 2, \cdots, T. \]
	When some entries in $Y$ are missing, the conventional sample covariance estimator $\frac{1}{T} \tilde Y \tilde Y^\T$ is biased because the actual realizations of the missing values are not equal to zero. We propose a natural estimator of the covariance matrix, where for each entry we only use the time periods when both units are observed. This is equivalent to estimating the sample covariance matrix with $\tilde Y$, but reweighting the entries. Table \ref{tab:toyexample} is a simple example to illustrate the covariance matrix estimation if the entries are partly missing in the second half of the data.
	More generally, our sample covariance matrix estimator equals 
	\begin{eqnarray}\label{eqn:cov-est}
		\tilde\Sigma_{ij} = \frac{1}{|\tlq_{ij}|} \sum_{t \in \tlq_{ij}} Y_{it} Y_{jt}.
	\end{eqnarray}

	\begin{table}[t]
		\tcaptab{Example of covariance matrix estimation with missing entries}
		\begin{subtable}[t]{0.48\textwidth}
			\centering
			\begin{tabular}{|ccc|ccc|}
				\hline
				& &&\cellcolor{blue!25}&\cellcolor{blue!25}&\cellcolor{blue!25}\\
				$\mathbf{Y}_{1,1}$ & $\cdots$ & $\mathbf{Y}_{1,T_0}$ & \cellcolor{blue!25}$\mathbf{Y}_{1,T_0+1}$ & \cellcolor{blue!25}$\cdots$ & \cellcolor{blue!25}$ \mathbf{Y}_{1,T}$\\[2ex]
				\hline
				& &&&&\\
				$\mathbf{Y}_{2,1}$ & $\cdots$ & $\mathbf{Y}_{2,T_0}$ & $\mathbf{Y}_{2,T_0+1}$ & $\cdots$ & $\mathbf{Y}_{2,T}$ \\[2ex]
				\hline
			\end{tabular}
			\caption{Observation pattern for $Y$: Shaded entries are missing.}
			\label{tab:toy-example-obs-pattern}
		\end{subtable}
		\hspace{\fill}
		\begin{subtable}[t]{0.48\textwidth}
			\centering
			\begin{tabular}{|c|c|}
				\hline
				\cellcolor{blue!25} & \cellcolor{blue!25} \\
				\cellcolor{blue!25}$\frac{1}{T_0} \sum_{t=1}^{T_0} \mathbf{Y}_{1,t} \mathbf{Y}_{1,t}^{\top}$ & \cellcolor{blue!25}$\frac{1}{T_0} \sum_{t=1}^{T_0} \mathbf{Y}_{1,t} \mathbf{Y}_{2,t}^{\top}$ \\[2ex]
				\hline
				\cellcolor{blue!25} & \\
				\cellcolor{blue!25}$\frac{1}{T_0} \sum_{t=1}^{T_0} \mathbf{Y}_{2,t} \mathbf{Y}_{1,t}^{\top}$ & $\frac{1}{T} \sum_{t=1}^{T} \mathbf{Y}_{{2},t} \mathbf{Y}_{{2},t}^{\top}$ \\[2ex]
				\hline
			\end{tabular}
			\caption{Sample covariance matrix $\tilde\Sigma$: Shaded entries are estimated using observations up to time $T_0$}
			\label{tab:cov-est}
		\end{subtable}
		\bnotetab{This table shows an illustrative example for the covariance matrix estimation for $Y$ with missing entries. The missing entries follow a simultaneous treatment adoption pattern. For $t=T_0+1,...,T$ the first $N_0$ cross section units are missing, while the elements $N_0+1,...,N$ are observed for all $t$, i.e. $\mathbf{Y}_{1,t}= \protect \begin{pmatrix} Y_{1,t} & \cdots & Y_{N_0,t} \protect \end{pmatrix}$ and $\mathbf{Y}_{2,t}=\protect \begin{pmatrix} Y_{N_0+1,t} & \cdots & Y_{N,t} \protect \end{pmatrix}$.}
		\label{tab:toyexample}
	\end{table}

	When the data is fully observed, we can apply PCA to $\frac{1}{NT} Y  Y^\T $ to estimate the loadings. Up to rescaling, the eigenvectors of the largest eigenvalues estimate the loadings. Then, we regress $Y$ on the estimated loadings to obtain an estimate of the factors.\footnote{Alternatively, we can apply PCA to $\frac{1}{NT} Y Y^\T $ to estimate the loadings and then regress $Y^{\top}$ on the estimated loadings to estimate the factors. \cite{bai2002determining} and \cite{bai2003inferential} show that the estimated factors and loadings from this approach are consistent and asymptotically normally distributed for a fully observed panel.}  Similarly, for the partially observed data, we apply PCA to $\frac{1}{N} \tilde\Sigma$ to estimate the loadings.\footnote{We assume that the true number of factors is $r$ and has been consistently estimated as in \cite{bai2003inferential}.} Under the standard identification assumption $\tilde \Lambda^\T \tilde \Lambda/N = I_r$, we estimate the loadings $\tilde \Lambda$ as $\sqrt{N}$ times the eigenvectors of the $r$ largest eigenvalues of the sample covariance matrix, that is    
	\begin{eqnarray}\label{eqn:pca-estimate-loadings}
		\frac{1}{N} \tilde\Sigma \tilde \Lambda = \tilde \Lambda \tilde D, 
	\end{eqnarray}
	where $\tilde D$ is a diagonal matrix. 
	Then, for every time period $t$, we regress the observed $Y_{t}$ on $ \tilde \Lambda$ to estimate the factors: 
	\begin{align}\label{eqn:reg-estimate-factors}
		\tilde F_t &= \left(\sum_{i = 1}^N W_{it} \tilde \Lambda_i \tilde \Lambda_i^\T  \right)^\I  \left( \sum_{i = 1}^N W_{it} \tilde \Lambda_i Y_{it}    \right).
	\end{align}
	Interestingly, this very simple estimator automatically corrects for the impact of general observation patterns. If we have additional information that allows us to model the observation pattern as $\ps$, we propose an alternative weighted regression:
	\begin{eqnarray}\label{eqn:reg-estimate-factors-conditional}
		\tilde F^S_t =  \Bigg(\sum_{i = 1}^N \frac{W_{it}}{P(W_{it} = 1|S_i)}  \tilde  \Lambda_i \tilde  \Lambda_i^\T  \Bigg)^\I  \Bigg( \sum_{i = 1}^N \frac{W_{it}}{P(W_{it} = 1|S_i)}  \tilde  \Lambda_i  Y_{it}\Bigg).
	\end{eqnarray}
	This conditional estimator uses the weights $\frac{1}{P(W_{it} = 1|S_i)}$ in the cross-sectional regression.
	The estimator for $\tilde F^S_t$ is motivated by the inverse propensity score estimator, which is widely used in causal inference.\footnote{Since \cite{horvitz1952generalization}, weighting observations by inverse probability has been frequently used to account for missing data in mean estimation \citep{david1983nonrandom,little1986survey,little1988missing}, regression analysis \citep{robins1994estimation,robins1995semiparametric}, and causal inference \citep{hirano2003efficient}.} 
	The rationale is that the re-weighted observations correspond to a model where the data is cross-sectionally missing at random. More specifically, after re-weighting the observed data, the loadings should follow the same distribution as in the complete panel without missing observations. This could be relevant if units, that are exposed to specific factors, are more likely to miss. In the special case for cross-sectional missing at random, i.e., $\p=p_t$, the two estimators coincide.\footnote{The cross-sectional reweighting is only affected by cross-sectional differences in the probability $p_{it}$ at time $t$. Therefore we model only the dependency on a set of cross-sectional covariates $S_i$, but allow this dependency to be time-varying. We could without loss of generality consider the dependency on cross-sectional and time-varying covariates $S_{it}$. However, this would not change the cross-sectional reweighting at a specific time $t$, but would come at the cost of a more complicated notation.} We will first study the simple all-purpose estimator $\tilde F_t$ and extend it to the propensity weighted estimator $\tilde F^S_t$ in Section \ref{sec:propensity-weighted-estimator}. We show that both estimators are consistent and asymptotically normal. In most cases, $\tilde F_t$ is more efficient than the propensity score estimator, but $\tilde F^S_t$ can have desirable robustness properties under misspecification. The last step is to estimate the common component $C_{it} = \Lambda_i^\T F_t$. We use the plug-in estimator, $\tilde C_{it} = \tilde \Lambda_i^\T \tilde F_t$ respectively $\tilde C^S_{it} = \tilde \Lambda_i^\T \tilde F^S_t$. If $Y_{it}$ is not observed, we impute the missing values with $\tilde C_{it}$ or $\tilde C^S_{it}$.

	\subsection{Illustration}\label{subsec:illustration}

	We illustrate in a simple example how missing observations change the conventional PCA estimator with fully observed data. Assume that we have only one factor, and the factor, loading and residual component are i.i.d. normally distributed with $F_t \overset{i.i.d.}{\sim} \calN(0,\sigma_F^2)$, $\Lambda_i  \overset{i.i.d.}{\sim} \calN(0,1)$  and $ e_{it}  \overset{i.i.d.}{\sim} \calN(0,\sigma_e^2)$. We assume that the observations for units $1,...,N_0$ and for the times $T_0+1,...,T$ are missing according to the simultaneous adoption pattern of Table \ref{tab:toyexample}. We separate the vector of factor realizations into its first $\mathbf{F}_1= \begin{pmatrix} F_1 & \cdots & F_{T_0} \end{pmatrix}^{\top} $ and second part $\mathbf{F}_2= \begin{pmatrix} F_{T_0+1} & \cdots & F_{T} \end{pmatrix}^{\top} $ and similarly for the loadings $\mathbf{\Lambda}_{1}=\begin{pmatrix} \Lambda_1 & \cdots & \Lambda_{N_0} \end{pmatrix}^{\top}$ and $\mathbf{\Lambda}_{2}=\begin{pmatrix} \Lambda_{N_0+1} & \cdots & \Lambda_{N} \end{pmatrix}^{\top}$. Note that in this simple example $\tilde F$ and $\tilde F^S$ coincide.
	
	We start with the simplest case without error terms $e_t$ to illustrate the logic of reweighting entries. In this case the conventional covariance matrix equals
	\begin{align*}
		\frac{1}{T} \tilde Y \tilde Y^{\top} = \frac{1}{T} \begin{pmatrix} \mathbf{\Lambda}_1 \mathbf{F}_1^{\top} & 0 \\  \mathbf{\Lambda}_2 \mathbf{F}_1^{\top}& \mathbf{\Lambda}_{2} \mathbf{F}_2^{\top}  \end{pmatrix} \begin{pmatrix} \mathbf{F}_1 \mathbf{\Lambda}_1^{\top} &  \mathbf{F}_1 \mathbf{\Lambda}_{2}^{\top}  \\0& \mathbf{F}_2 \mathbf{\Lambda}_2^{\top} \end{pmatrix} = \begin{pmatrix} \sqrt{\frac{T_0}{T}} \mathbf{\Lambda}_1 \\ \mathbf{\Lambda}_2  \end{pmatrix} \left( \sigma_F^2 + o_P(1) \right) \begin{pmatrix} \sqrt{\frac{T_0}{T}} \mathbf{\Lambda}_1^{\top} & \mathbf{\Lambda}_2^{\top}  \end{pmatrix} .
	\end{align*}
	Obviously, the eigenvector of this matrix is a biased estimate of the loadings. In contrast, the eigenvector of the correctly weighted sample covariance matrix consistently estimates the loadings:
	\begin{align*}
		\tilde \Sigma = \begin{pmatrix} \mathbf{\Lambda}_1 \frac{\mathbf{F}_1^{\top} \mathbf{F}_1}{T_0} \mathbf{\Lambda}_1^{\top} &  \mathbf{\Lambda}_1 \frac{\mathbf{F}_1^{\top} \mathbf{F}_1}{T_0} \mathbf{\Lambda}_2^{\top}  \\  \mathbf{\Lambda}_2 \frac{\mathbf{F}_1^{\top} \mathbf{F}_1}{T_0} \mathbf{\Lambda}_1^{\top}  &  \mathbf{\Lambda}_2 \frac{\mathbf{F}_1^{\top} \mathbf{F}_1 + \mathbf{F}_2^{\top} \mathbf{F}_2}{T} \mathbf{\Lambda}_2^{\top}  \end{pmatrix} = \begin{pmatrix}\mathbf{\Lambda}_1 \\ \mathbf{\Lambda}_2  \end{pmatrix} \left( \sigma_F^2 + o_P(1) \right) \begin{pmatrix} \mathbf{\Lambda}_1^{\top} & \mathbf{\Lambda}_2^{\top}  \end{pmatrix}.
	\end{align*}
	The same logic carries over to the estimator of the factors. Assume that we know the population loadings, which we use here instead of the estimated loadings in the regression to estimate the factors:
	\begin{align*}
		\frac{\tilde Y^\T  {\Lambda}}{N} \left ( \frac{{\Lambda}^{\top} {\Lambda}}{N}   \right)^{-1}=\frac{1}{N} \begin{pmatrix}  \mathbf{F}_1 \mathbf{\Lambda}_1^{\top} & \mathbf{F}_2 \mathbf{\Lambda}_2^{\top} \\  0  & \mathbf{F}_2 \mathbf{\Lambda}_2^{\top} \end{pmatrix} \begin{pmatrix} \mathbf{\Lambda}_1 \\ \mathbf{\Lambda}_2 \end{pmatrix} + o_P(1) = \begin{pmatrix}  \mathbf{F}_1   \\ \mathbf{F}_2 \frac{N-N_0}{N}  \end{pmatrix} + o_P(1),
	\end{align*}
	which is a biased estimator for the second time period. The regression in Equation \eqref{eqn:reg-estimate-factors} corresponds to a weighted least square regression which provides the correct estimator:
	\begin{align*}
		\tilde{\mathbf{F}}_1 =&  \mathbf{F}_1 \frac{\mathbf{\Lambda}_1^{\top}\mathbf{\Lambda}_1 + \mathbf{\Lambda}_2^{\top}\mathbf{\Lambda}_2}{N} \left( \frac{\mathbf{\Lambda}_1^{\top}\mathbf{\Lambda}_1 + \mathbf{\Lambda}_2^{\top}\mathbf{\Lambda}_2}{N}  \right)^{-1} = \mathbf{F}_1 + o_P(1) \\
		\tilde{\mathbf{F}}_2 =&  \mathbf{F}_2 \frac{ \mathbf{\Lambda}_2^{\top}\mathbf{\Lambda}_2}{N-N_0} \left( \frac{ \mathbf{\Lambda}_2^{\top}\mathbf{\Lambda}_2}{N-N_0}  \right)^{-1}   = \mathbf{F}_2 + o_P(1) 
	\end{align*}
	The proper reweighting in the loading and factor estimation leads to an additional correction term in the asymptotic variance of the estimator. As an illustration of this additional challenge, we add the i.i.d. error term $e_{it}$ to our example. In our simplified setup our consistent estimator for the loadings $\tilde \Lambda$ has the following expansion for $i=1,...,N_0$:\footnote{The results are similar for $i > N_0$ with the expansion $\sqrt{T} \left( \tilde \Lambda_i - \Lambda_i  \right) =\sqrt{T} \left( \frac{\tilde F^{\top} \tilde F}{T}\right)^{-1} \frac{1}{N} \Big[ \mathbf{\Lambda}_1^{\top}\mathbf{\Lambda}_1  \frac{1}{T_0} \sum_{t=1}^{T_0} F_t e_{it}$ $+ \mathbf{\Lambda}_2^{\top}\mathbf{\Lambda}_2  \frac{1}{T}  \sum_{t=1}^{T} F_t e_{it} \Big]+ \sqrt{T} \left( \frac{\tilde F^{\top} \tilde F}{T}  \right)^{-1} \frac{1}{N} \mathbf{\Lambda}_1^{\top}\mathbf{\Lambda}_1   \left( \frac{\mathbf{F}_1^{\top} \mathbf{F}_1}{T_0} - \frac{F^{\top}F}{T}  \right) \Lambda_i + o_P(1)$ and asymptotic distribution $\sqrt{T} \left( \tilde \Lambda_i - \Lambda_i  \right)  \overset{d}{\rightarrow} \calN \left( 0, \text{plim}\left( \Big( \frac{T-T_0}{T_0} \frac{N_0^2}{N^2}  + 1 \Big) \frac{\sigma_e^2}{\sigma_F^2} +  2 \frac{N_0^2}{N^2}  \frac{T-T_0}{T_0}  \Lambda_i^2  \right) \right)$.
	}
	\begin{align*}
		\sqrt{T} \left( \tilde \Lambda_i - \Lambda_i  \right) = \sqrt{\frac{T}{T_0}} \left( \frac{{\tilde F^{\top}} {\tilde F}}{T}\right)^{-1} \frac{1}{\sqrt{T_0}} \sum_{t=1}^{T_0} F_t e_{it} + \sqrt{T} \left( \frac{{\tilde{F}}^{\top} {\tilde{F}}}{T}  \right)^{-1} \left( \frac{\mathbf{F}_1^{\top} \mathbf{F}_1}{T_0} - \frac{\mathbf{F}^{\top}\mathbf{F}}{T}  \right) \Lambda_i + o_P(1),
	\end{align*}
	which results in the asymptotic normal distribution
	\begin{equation}
		\sqrt{T} \left( \tilde \Lambda_i - \Lambda_i  \right)  \overset{d}{\rightarrow} \calN \left( 0, \text{plim}\left(\frac{T}{T_0} \frac{\sigma_e^2}{\sigma_F^2} + 2 \frac{T-T_0}{T_0} \Lambda_i^2 \right) \right) \qquad \text{for $i=1,...,N_0$.} \nonumber
	\end{equation}

	The second term in the asymptotic expansion is due to averaging over different number of units for different elements of the loadings. This additional variance correction term vanishes for $T_0 \rightarrow T$. Similar terms appear in the distribution of the estimators of the factors and common components. We show under general conditions how these correction terms arise in the asymptotic distribution and how to take them into account for the inferential theory.

	\section{Assumptions}\label{sec:simple-assumptions}
	
	We assume an approximate factor structure at the same level of generality as in \cite{bai2003inferential}. The factors and loadings have non-trivial time-series and cross-sectional dependency. We allow the errors to be weakly correlated in the time-series and cross-sectional dimension. The asymptotic distributions are based on general martingale central limit theorems. The general Assumptions \ref{ass:factor-model} and \ref{ass:mom-clt} are collected in the Appendix. In the main text, we present a simplified factor model with the stronger Assumptions \ref{ass:simple-factor-model} and \ref{ass:simple-moment}, which substantially simplifies the notation but conveys the main conceptual insights of the general model. It allows us to highlight the effect of missing observations.
	
	The consistency results are based on Assumption \ref{ass:simple-factor-model} that assumes that all observations are i.i.d. The key elements are that the factors and loadings are systematic in the sense that they lead to exploding eigenvalues, while the error terms are non-systematic with bounded eigenvalues in the covariance matrix of $Y$. These are standard factor model assumptions. The asymptotic distribution results require additional restrictions on the missing patterns, as stated in Assumption \ref{ass:simple-moment}. 
	
	\begin{assumpS}[Simplified Factor Model]\label{ass:simple-factor-model}
		\texttt{} \\
		There exists a positive constant $M<\infty$ such that:
		\begin{enumerate}[wide=0pt, widest=99,leftmargin=\parindent, labelsep=*]
			\item Factors: $F_t \stackrel{\iid}{\sim} (0, \Sigma_F )$, $\+E[\norm{F_t}^4]  \leq M$, and $\+E \norm{F_t F_t^\T - \Sigma_F}^{2+\epsilon} \leq M$ for some $\epsilon \in (0, 1)$.
			\item Factor loadings:  
			$\Lambda_i \stackrel{\iid}{\sim} (0, \Sigma_{\Lambda} )$ and
			$\+E[\norm{\Lambda_i}^4]  \leq M$.
			\item Errors: $e_{it} \stackrel{\iid}{\sim} (0, \sigma_e^2)$, $\+E [ e_{it}^8] \leq M$.
			\item Independence: $F$, $\Lambda$ and $e$ are independent. 
			\item Eigenvalues: The eigenvalues of $\Sigma_\Lambda \Sigma_F$ are distinct.
		\end{enumerate}
	\end{assumpS}

	\begin{assumpS}[Moments of Simplified Factor Model]\label{ass:simple-moment}
		\texttt{}
		\begin{enumerate}[wide=0pt, widest=99,leftmargin=\parindent, labelsep=*]
			\item Systematic loadings: $\frac{1}{N} \sum_{i=1}^N \Lambda_i \Lambda_i^{\top} W_{it} \overset{P}{\rightarrow} \Sigma_{\Lambda,t}$ for some positive definite matrix $\Sigma_{\Lambda, t}$ for any $t$.
			\item  Dependency in missing pattern: \label{ass:add-obs-equal-weight} $\frac{1}{N^2} \sum_{i=1}^N \sum_{l=1}^N \frac{q_{ij,lj}}{q_{ij} q_{lj}} \overset{P}{\rightarrow}  \omega_{jj} $, $\lim_{N\rightarrow \infty} \frac{1}{N^3} \sum_{i=1}^N \sum_{l = 1}^N  \sum_{k = 1}^N   \frac{q_{li,kj}}{q_{li}q_{kj}}  \overset{P}{\rightarrow}   \omega_j$ and $\lim_{N\rightarrow \infty} \frac{1}{N^4} \sum_{i=1}^N \sum_{l = 1}^N \sum_{j = 1}^N \sum_{k = 1}^N   \frac{q_{li,kj}}{q_{li}q_{kj}}  \overset{P}{\rightarrow}  \omega $ for all $j$ and some constants $\omega_{jj}, \omega_j, \omega$.
		\end{enumerate}
	\end{assumpS}
	Assumption \ref{ass:simple-moment} has two key elements. First, the full rank assumption of $\Sigma_{\Lambda, t}$ captures that the factor loadings are systematic for the observed entries. Second, the number of observed units at every time period $t$ is proportional to $N$ and different units share a number of observed entries that is proportional to $T$. The impact of the missing pattern on the asymptotic variances of the estimators is captured by the three key parameters $\omega, \omega_j$ and $\omega_{jj}$. Note that by construction these constants satisfy $\omega_{jj}, \omega_j, \omega \geq 1$. If the observations are missing at random with probability $p$, then $\omega_{jj} = \frac{1}{p}$, $\omega_j = 1$ and $\omega= 1$. 
	
	As stated in Proposition \ref{prop:simple-assump-imply-general-assump} in the Appendix, the simplified model is just a special case of the general approximate factor model specified by Assumptions \ref{ass:factor-model} and \ref{ass:mom-clt}. The simplified Assumption \ref{ass:simple-factor-model} implies the general Assumption \ref{ass:factor-model}, while Assumption \ref{ass:simple-moment} combined with the other simplified assumptions implies the general Assumption \ref{ass:mom-clt}.

	We assume that the number of factors $r$ is consistently estimated. For example the criteria developed in \cite{bai2002determining} can be extended to our case of missing values based on the various bounds and expansions that we derive in this paper. A promising alternative would be to extend the cross-validation estimator of \cite{jin2020factor} or an eigenvalue ratio argument as in \cite{ahn2013} to general missing patterns. Given a consistent estimator for the number of factors, we can treat $r$ as known.


	\section{Asymptotic Results}\label{sec:asymptotic-results}	
	\subsection{Consistency}\label{subsec:consistency}

	We first show the consistency of our estimators. Our analysis starts with plugging $\tilde Y = (\Lambda^\T F + e) \odot W$ into Equation (\ref{eqn:pca-estimate-loadings}) which yields the following decomposition:
	\begin{eqnarray*}
		\nonumber \tilde{\Lambda}_j =\!\!\!\!\!\!\!\! && \underbrace{\frac{1}{NT} \tilde{D}^\I \sum_{i=1}^N \tilde{\Lambda}_i \Lambda_i^\T F^\T \diag(W_i \odot W_j) F \Lambda_j/ q_{ij} }_{H_j \Lambda_j }  + \underbrace{\frac{1}{NT} \tilde{D}^\I \sum_{i=1}^N \tilde{\Lambda}_i e_i^\T \diag(W_i \odot W_j) F \Lambda_j/q_{ij} }_{(a)}   \\
		&&  + \underbrace{\frac{1}{NT} \tilde{D}^\I \sum_{i=1}^N   \tilde{\Lambda}_i \Lambda_i^\T F^\T \diag(W_i \odot W_j) e_j/q_{ij} }_{(b)}  + \underbrace{\frac{1}{NT} \tilde{D}^\I  \sum_{i=1}^N\tilde{\Lambda}_i e_i^\T \diag(W_i \odot W_j) e_j/q_{ij}  }_{(c)}.
	\end{eqnarray*}
	Similar to \cite{bai2002determining} this decomposition relates the estimated loadings to the population loadings, $\tilde \Lambda_j = H_j \Lambda_j + (a) + (b) + (c)$, up to a rotation matrix $H_j = \frac{1}{NT} \tilde{D}^\I \sum_{i=1}^N \tilde{\Lambda}_i \Lambda_i^\T F^\T \diag(W_i \odot W_j) F/q_{ij}$. The key difference to factor analysis with fully observed data is that this rotation matrix can be different for different units $j$. However, the estimation of the factors is based on a projection on the loading space and hence implicitly requires the same rotation matrix for all loadings. 
	
	We consider for all units a unified rotation matrix defined as $H = \frac{1}{NT} \tilde{D}^\I \tilde{\Lambda}^\T \Lambda F^\T F$ which is essentially the same conventional rotation matrix as in \cite{bai2002determining}. This yields the decomposition 
	\[\tilde{\Lambda}_j - H \Lambda_j = \tilde{\Lambda}_j - H_j \Lambda_j + (H_j - H) \Lambda_j = (a) + (b) + (c) +  (H_j - H) \Lambda_j.  \]
	We show that the cross-section averages of the square of $(a)$, $(b)$ and $(c)$ converge to 0 at the rate  $O_P \Lp \min\Lp \frac{1}{N}, \frac{1}{T} \Rp\Rp$. 
	The key difference compared with the fully observed factor analysis is the last term. 
	If $\frac{1}{T} F^\T F \xrightarrow{P} \Sigma_F$ and $\frac{1}{|\tlq_{ij}|} \sum_{t \in \tlq_{ij}} F_t F_t^\T \xrightarrow{P} \Sigma_F$, we can show that $H_j - H = O_P \Lp \min \Lp  \frac{1}{\sqrt{N}}, \frac{1}{\sqrt{T}} \Rp \Rp$. This rate is sufficiently fast to obtain consistency, but will contribute to the asymptotic normal distribution. Note that the correction term $H_j - H$ is a fundamental problem for any estimator that makes use of all observations.\footnote{The estimator in \cite{bai2019matrix} can avoid this term by neglecting partially observed entries, which means that in general, they are using less information. The estimator of \cite{bai2019matrix} is optimized for the block structure of a simultaneous adoption pattern. It runs two PCA estimates for the block with full cross-sectional observations and the block with full time-series observations. Hence, they can infer the ``local'' rotation matrices for each block and rotate the estimates to avoid the correction term $H_j - H$. \cite{cahan2021factor} leverage the block structure, and in the first step run PCA to estimate factors on the block with full-series observations, so that the correction term $H_j - H$ is avoided. If we seek to use full observations in the first step, which is what we propose in this paper, then the correction term $H_j - H$ cannot be avoided.
	}

	The next theorem shows the consistency of the estimated loadings.
	
	\begin{theorem}\label{thm:consistency-same-H}
		Define $\delta_{NT} = \min (N, T)$.
		Under Assumptions \ref{ass:obs-equal-weight} and \ref{ass:factor-model} it holds that
		\begin{eqnarray}\label{eqn:consistency}
			\delta_{NT} \Lp \frac{1}{N} \sum_{j=1}^N \norm{\tilde{\Lambda}_j - H \Lambda_j}^2 \Rp = O_P(1),
		\end{eqnarray}
		where $H = \frac{1}{NT} \tilde{D}^\I \tilde{\Lambda}^\T \Lambda F^\T F$.  
	\end{theorem}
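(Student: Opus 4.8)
The plan is to follow the classical consistency argument of \cite{bai2002determining}, adapted to accommodate the missing-data reweighting and, crucially, the unit-specific rotation matrices $H_j$. Starting from the decomposition already given in the text, $\tilde\Lambda_j - H\Lambda_j = (a)+(b)+(c)+(H_j-H)\Lambda_j$, I would first apply the elementary inequality $\norm{x_1+x_2+x_3+x_4}^2 \le 4\sum_k\norm{x_k}^2$ to reduce the claim to showing that each of the four cross-sectional averages
\[
\frac{1}{N}\sum_{j=1}^N\norm{(a)}^2,\quad \frac{1}{N}\sum_{j=1}^N\norm{(b)}^2,\quad \frac{1}{N}\sum_{j=1}^N\norm{(c)}^2,\quad \frac{1}{N}\sum_{j=1}^N\norm{(H_j-H)\Lambda_j}^2
\]
is $O_P(1/\delta)$; multiplying the sum of the four bounds by $\delta$ then yields the theorem. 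Throughout I would use two facts that require no consistency of $\tilde\Lambda$: the normalization $\tilde\Lambda^\T\tilde\Lambda/N=I_r$ gives $\frac{1}{N}\sum_i\norm{\tilde\Lambda_i}^2=r$, and the eigenvalue separation together with the exploding systematic eigenvalues in Assumption \ref{ass:factor-model} gives $\tilde D^\I=O_P(1)$. This is what keeps the argument non-circular: the bounds below only exploit the normalization of $\tilde\Lambda$, never its yet-to-be-proven closeness to $H\Lambda$.

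For the three noise terms $(a)$, $(b)$ and $(c)$ I would proceed essentially as in the balanced-panel case. Each is a normalized double sum over units and time of quantities that are mean-zero in $e$, weighted by the bounded inverse overlap frequencies $1/q_{ij}\le 1/\underline q$ (Assumption \ref{ass:obs-equal-weight}) and by the selection matrices $\diag(W_i\odot W_j)$, whose entries lie in $\{0,1\}$ and therefore cannot worsen any rate. Computing second moments and invoking the error moment and weak-dependence conditions together with the moment bounds on $F$ and $\Lambda$ in Assumption \ref{ass:factor-model}, each average is $O_P(1/N)$ or $O_P(1/T)$, hence $O_P(1/\delta)$. These are routine variance calculations paralleling \cite{bai2002determining}; the only new bookkeeping is carrying the bounded weights and selection matrices.

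The genuinely new term is $(H_j-H)\Lambda_j$, which vanishes in fully observed factor analysis and is the source of the variance correction emphasized in the introduction. Writing $\tilde\Lambda^\T\Lambda=\sum_i\tilde\Lambda_i\Lambda_i^\T$, the two rotation matrices differ only through the set of shared time periods, so
\[
H_j-H=\tilde D^\I\,\frac{1}{N}\sum_{i=1}^N\tilde\Lambda_i\Lambda_i^\T\,\Delta_{ij},\qquad \Delta_{ij}:=\frac{1}{Tq_{ij}}\sum_{t\in\tlq_{ij}}F_tF_t^\T-\frac{1}{T}\sum_{t=1}^T F_tF_t^\T .
\]
The heart of the matter is to show $\Delta_{ij}=O_P(\delta^{-1/2})$. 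Both pieces converge in probability to $\Sigma_F$: the first because $|\tlq_{ij}|/(Tq_{ij})\to 1$ under Assumption \ref{ass:obs-equal-weight} while $\frac{1}{|\tlq_{ij}|}\sum_{t\in\tlq_{ij}}F_tF_t^\T\overset{P}{\rightarrow}\Sigma_F$, and the second by the law of large numbers for the factors; since both averages run over $\Theta(T)$ time indices (guaranteed by $|\tlq_{ij}|/T\ge\underline q$), their difference fluctuates at the central-limit rate $O_P(T^{-1/2})$, and an analogous cross-sectional averaging argument supplies the complementary $O_P(N^{-1/2})$ rate. Combined with $\tilde D^\I=O_P(1)$ and bounded average loading norms, this gives $\norm{H_j-H}=O_P(\delta^{-1/2})$, and since $\Delta_{ij}$ enters linearly a Cauchy--Schwarz bound yields
\[
\frac{1}{N}\sum_{j=1}^N\norm{(H_j-H)\Lambda_j}^2\le \frac{1}{N}\sum_{j=1}^N\norm{H_j-H}^2\norm{\Lambda_j}^2 = O_P(1/\delta),
\]
where the average of $\norm{H_j-H}^2\norm{\Lambda_j}^2$ over $j$ is controlled using the moment bounds on $\Lambda$.

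The main obstacle is precisely the control of $\Delta_{ij}$, and hence of $H_j-H$. In a balanced panel there is a single rotation and this term is identically zero; here every unit $j$ carries its own overlap-restricted second moment of the factors, and one must establish on average over $j$ that replacing the overlap average by the full-sample average costs only $O_P(\delta^{-1/2})$. This relies on the factor moment conditions of Assumption \ref{ass:factor-model} together with the proportionality $|\tlq_{ij}|/T\ge\underline q>0$ of Assumption \ref{ass:obs-equal-weight}; without the lower bound $\underline q$ the overlap sets could be too thin for the $T^{-1/2}$ rate. I would expect the delicate point in the full proof to be passing from the pointwise-in-$j$ rate to the average-over-$j$ bound above without a uniform maximum over the $N$ units, but for the stated $O_P(1/\delta)$ average conclusion the linearity of $\Delta_{ij}$ and the loading moment bounds suffice.
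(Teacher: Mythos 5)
Your proposal is correct and follows essentially the same route as the paper: decompose $\tilde{\Lambda}_j - H\Lambda_j$ into the Bai--Ng-type noise terms plus the rotation discrepancy $(H_j-H)\Lambda_j$, bound the noise terms by second-moment calculations that use only the normalization $\tilde{\Lambda}^\T\tilde{\Lambda}/N=I_r$ and $\tilde{D}^\I=O_P(1)$, and control the discrepancy term through the bound $\+E\norm{\frac{1}{|\tlq_{ij}|}\sum_{t\in\tlq_{ij}}F_tF_t^\T - \frac{1}{T}\sum_{t=1}^T F_tF_t^\T}^2 = O(1/T)$ implied by Assumption \ref{ass:factor-model}.1, averaged over $(i,j)$ using the independence of $\Lambda$ and $F$ — which is exactly the paper's argument (its Lemma on consistency with respect to $H_j$ plus the bound \eqref{eqn:squared-difference-Hj-H}). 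One harmless slip: your claim that a cross-sectional averaging argument supplies a complementary $O_P(N^{-1/2})$ rate for $\Delta_{ij}$ is not right ($\Delta_{ij}$ is a purely time-series object), but it is also unnecessary, since the $O_P(T^{-1/2})$ rate alone gives $\frac{1}{N}\sum_{j=1}^N\norm{(H_j-H)\Lambda_j}^2 = O_P(1/T) = O_P(1/\delta)$ because $\delta\le T$.
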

	
	Theorem \ref{thm:consistency-same-H} states that the complete loading matrix can be consistently estimated up to an appropriate rotation as $N,T \rightarrow \infty$ even if we only observe an incomplete panel matrix.
	The convergence rate is the same rate as for the fully observed panel in \cite{bai2002determining}. Theorem \ref{thm:consistency-same-H} is based on the assumption that the observed entries are representative of the missing entries and hence provide a consistent estimation. Theorem \ref{thm:consistency-same-H} is a critical intermediate step to show the asymptotic normality of the estimated factor model in the next section.

	\subsection{Asymptotic Normality}\label{subsec:normality}

	The factors, loadings, and common components are asymptotically normally distributed. Indeed, Theorem \ref{theorem:asy-normal-equal-weight} states that the asymptotic distributions have two parts: First, we recover the asymptotic variance that is identical to the conventional PCA in \cite{bai2003inferential} under the same rate conditions. These are the expression when we set the additional correction terms $\covII_{\Lambda,j}$ and $\covII_{F,t} $ to zero. However, in the presence of missing values, these correction terms are necessary to capture the additional uncertainty. 
	Theorem \ref{theorem:asy-normal-equal-weight}  also includes the asymptotic expansions that lead to the normal distributions. As stated in the previous section, the difference between the unit-specific rotation $H_j$ and the ``global'' rotation matrix $H$ contributes to the distribution and leads to the variance correction terms $\covII_{\Lambda,j}$ and $\covII_{F,t}$. As expected, this variance correction is increasing in the number of missing observations. We want to emphasize again that this type of variance correction is a conceptual issue that cannot be avoided when making use of all observed entries.\footnote{In the asymptotic distribution, we apply the rotation matrices to the estimated loadings and factors instead of their population values as in \cite{bai2003inferential}. Obviously, these two representations are equivalent and can be easily transformed into each other. Our choice of representation was made for exposition purposes only.}

	\begin{theorem}\label{theorem:asy-normal-equal-weight}
		Under Assumptions \ref{ass:obs-equal-weight}, \ref{ass:factor-model} and \ref{ass:mom-clt} and for $N,T \rightarrow \infty$ we have for each $j$ and $t$:
		\begin{enumerate}[wide=0pt, widest=99,leftmargin=\parindent, labelsep=*]
			\item For $\sqrt{T}/N \rightarrow 0$ the asymptotic distribution of the loadings is 
			\begin{align}
				\nonumber  \sqrt{T}  \Sigma_{\Lambda,j}^{-1/2} &(H^{-1}   \tilde{\Lambda}_j - \Lambda_j )   =  \Sigma_{\Lambda,j}^{-1/2}  \left( \frac{1}{T}  F^{\top}F \right)^{-1} \left( \frac{1}{N} \Lambda^{\top} \Lambda \right)^{-1}  \Bigg[\bigg( \frac{1}{N} \sum_{i=1}^N \Lambda_i \Lambda_i^{\top}  \sqrt{\frac{T}{|Q_{ij}|}} \frac{1}{\sqrt{|Q_{ij}|} } \sum_{t \in Q_{ij}} F_t e_{jt} \bigg) \\
				&+ \bigg( \frac{1}{N} \sum_{i=1}^N \Lambda_i \Lambda_i^{\top} \sqrt{T} \Big( \frac{1}{|Q_{ij}|} \sum_{t \in Q_{ij}} F_t F_t^{\top} - \frac{1}{T} F^{\top} F \Big) \bigg)  \Lambda_j  \Bigg] + o_P(1) 
				\xrightarrow{d} \calN  ( 0, I_r ) \label{eqn:asy-var-lam},
			\end{align}
			where 
			\begin{align*}
				\Sigma_{\Lambda,j} = \Sigma_{F}^\I \Sigma_{\Lambda}^\I   \big[ \covI_{\Lambda,j} + \covII_{\Lambda,j} \big]\Sigma_{\Lambda}^\I  \Sigma_{F}^\I
			\end{align*}
			with $\covII_{\Lambda,j} = h_j(\Lambda_j)$. $\covI_{\Lambda,j}$ and the function $h_j(\cdot) $ are defined in Assumptions \ref{ass:mom-clt}.\ref{ass:asy-normal-main-term-thm-loading} and \ref{ass:mom-clt}.\ref{ass:asy-normal-add-term-thm-loading}.
			\item For $\sqrt{N}/T \rightarrow 0$ and $\sqrt{T}/N \rightarrow 0$, the asymptotic distribution of the factors is 
			\begin{align}
				\nonumber \sqrt{\delta_{NT}} &\Sigma_{F,t}^{-1/2}  (H^\T \tilde F_t -  F_t ) =  \Sigma_{F,t}^{-1/2}  \bigg( \frac{1}{N} \sum_{i=1}^N W_{it} \Lambda_i \Lambda_i^{\top} \bigg)^{-1}  \bigg(  \sqrt{\frac{\delta_{NT}}{N}}  \frac{1}{\sqrt{N}} \sum_{i=1}^N W_{it} \Lambda_i e_{it} \bigg) \\  
				&+   \Sigma_{F,t}^{-1/2}  \bigg( \frac{1}{N} \sum_{i=1}^N W_{it} \Lambda_i \Lambda_i^{\top} \bigg)^{-1}  \bigg( \frac{\sqrt{\delta_{NT}}}{N} \sum_{i=1}^N W_{it} \left( H^{-1} \tilde \Lambda_i - \Lambda_i \right) \Lambda_i^{\top} F_t \bigg) + o_P(1) 
				\xrightarrow{d} \calN \bigg(  0,  I_r  \bigg), \label{eqn:asy-var-F-equal-weight}
			\end{align}
			where 
			\begin{align*}
				\Sigma_{F,t} =  \Sigma_{\Lambda,t}^\I 
				\Big[ \frac{\delta_{NT}}{N} \covI_{F,t}  +\frac{\delta_{NT}}{T}  \covII_{F,t}  \Big] \Sigma_{\Lambda,t}^\I,  
			\end{align*} 
			with $\covII_{F,t} = g_t(F_t)$. $ \covI_{F,t}$ and the function $g_t(\cdot)$ are defined in Assumptions \ref{ass:mom-clt}.\ref{ass:asy-normal-main-term-thm-factor} and \ref{ass:mom-clt}.\ref{ass:asy-normal-add-term-thm-loading}.
			\item The asymptotic distribution of the common component is
			\begin{align}
				\sqrt{\delta_{NT}} \Sigma_{C,it}^{-1/2} (\tilde C_{jt} - C_{jt})  =& \sqrt{\delta_{NT}} \Sigma_{C,it}^{-1/2} \left( H^{-1} \tilde \Lambda_j - \Lambda_j \right)^{\top} F_t + \sqrt{\delta_{NT}} \Sigma_{C,it}^{-1/2} \Lambda_j^{\top} \left( H^{\top} \tilde F_t - F_t \right) + o_P(1) \nonumber\\ \xrightarrow{d}& \calN (0, 1), \label{eqn:asy-var-C-equal-weight}
			\end{align} 
			where 
			\begin{align*}
				\Sigma_{C,it} =& \frac{\delta_{NT}}{T}   F_t^{\top}  \Sigma_F^\I \Sigma_\Lambda^\I \left( \covI_{\Lambda,j} +  \covII_{\Lambda,j} \right)  \Sigma_\Lambda^\I \Sigma_F^\I  F_t  
				+ \frac{\delta_{NT}}{T}   \Lambda_j^\T \Sigma_{\Lambda,t}^\I \covII_{F,t}  \Sigma_{\Lambda,t}^\I   \Lambda_j  \\&+ \frac{\delta_{NT}}{N}   \Lambda_j^{\top}  \Sigma_{\Lambda,t}^\I \covI_{F,t} \Sigma_{\Lambda,t}^\I\Lambda_j -    2  \frac{\delta_{NT}}{T}  \Lambda_j^{\top} \Sigma_{\Lambda,t}^\I \covIII_{\Lambda, F, j, t}   \Sigma_\Lambda^\I   \Sigma_F^\I  F_t. 
			\end{align*}	
			$ \covIII_{\Lambda, F, j, t} = g^{\cov}_{j,t}(\Lambda_j, F_t)$, and the function $ g^{\cov}_{j,t}(\cdot,\cdot)$ is defined in Assumption \ref{ass:mom-clt}.\ref{ass:asy-normal-add-term-thm-loading}.
		\end{enumerate}
	\end{theorem}

	Importantly, the estimator for the factors has a different convergence rate compared to the conventional estimator on fully observed data. The asymptotic distribution of the factors is determined by two terms with different convergence rates, $\frac{\delta_{NT}}{N} \covI_{F,t}  +\frac{\delta_{NT}}{T}   \covII_{F,t}$. With a fully observed panel $ \covII_{F,t}$ would disappear, and the factors would converge at a rate of $\sqrt{N}$. However, with observations that are not missing at random, the difference between $H_j$ and $H$, that appears in the loading expansion $ H^{-1} \tilde \Lambda_j - \Lambda_j $ and has a convergence rate of $\sqrt{T}$, also contributes to the asymptotic distribution of estimated factors, which results in the overall rate $\sqrt{\delta_{NT}}$.

	The asymptotic distribution of common components depends on the estimation error of the estimated loadings and factors. In the asymptotic distribution of the estimated loadings and factors, the conventional part with asymptotic variances $\covI_{\Lambda,j}$ and $\covI_{F,t}$ is asymptotically independent as argued in \cite{bai2003inferential}. However, the second part with the asymptotic variances $\covII_{\Lambda,j}$ and $\covII_{F,t}$ that captures the difference between $H_j$ and $H$ is in general correlated, and hence their covariance $\covIII_{\Lambda, F, j, t}$ contributes to the asymptotic variance of common components as stated in Equation \eqref{eqn:asy-var-C-equal-weight}.
	
	The correction terms in the asymptotic variances are determined by the functions $h_j(.)$, $g_t(.)$ and $g^{\cov}_{j,t}(.,.)$. These are quadratic functions in the elements of $\Lambda_i$ and $F_t$, which depend on the moments of the factor model. As the loadings and factors are random, it implies that the asymptotic variances themselves are random. This complicates the analysis, but our assumptions ensure that the normalized estimates converge to a standard normal distribution.

	The distribution results of Theorem \ref{theorem:asy-normal-equal-weight} simplify under Assumptions \ref{ass:simple-factor-model} and \ref{ass:simple-moment}, and we can provide explicit expressions for the asymptotic variances. If we assume in addition that the proportions of observed time-series ($q_{ij}$ and $q_{ij,kl}$) are independent of the second moment of the loadings $\Lambda_i \Lambda_i^\T$, we can further separate the effect of missing patterns from the properties of the factor model.
	
	
	\begin{corollary}\label{corollary:asy-normal-equal-weight}
		Suppose Assumptions \ref{ass:obs-equal-weight}, \ref{ass:simple-factor-model} and \ref{ass:simple-moment} hold and $N,T \rightarrow \infty$. Then Theorem \ref{theorem:asy-normal-equal-weight} holds. If in addition, $q_{ij}$ and $q_{ij,kl}$ are independent of $\Lambda_m \Lambda_m^\T$ for all $i, j, k, l, m$, then the asymptotic variances simplify as follows with the weights $\omega, \omega_{j}$ and $\omega_{jj}$ defined in Assumption \ref{ass:simple-moment}:
		\begin{enumerate}[wide=0pt, widest=99,leftmargin=\parindent, labelsep=*]
			\item 
			The asymptotic variance of the loadings in formula \eqref{eqn:asy-var-lam} simplifies to 
			\[\Sigma_{\Lambda,j} = \omega_{jj} \cdot  \scovI_{\Lambda}+ (\omega_{jj}-1) \scovII_{\Lambda,j}, \]
			where 
			\[\scovI_{\Lambda}= \Sigma_F^{-1} \sigma_e^2, \;\; \quad \scovII_{\Lambda,j}= \Sigma_F^{-1} \Sigma_{\Lambda}^{-1}  \big( \Lambda_j^{\top} \otimes \Sigma_{\Lambda}  \big) \Xi_F \big( \Lambda_j \otimes \Sigma_{\Lambda} \big) \Sigma_{\Lambda}^{-1} \Sigma_F^{-1}, \]
			and $\+E[\tvec(F_t F_t^\T - \Sigma_{F})  \tvec(F_t F_t^\T - \Sigma_{F})^\T] = \Xi_F$.
			\item 
			The asymptotic variance of the factors in formula \eqref{eqn:asy-var-F-equal-weight} simplifies to 
			\[\Sigma_{F,t} = \frac{\delta_{NT}}{N}  \scovI_{F,t}  + \frac{\delta_{NT}}{T} (\omega-1) \scovII_{F,t},  \]
			where 
			\[
			\scovI_{F,t} =  \Sigma_{\Lambda,t}^{-1} \sigma_e^2,  \quad \quad \scovII_{F,t} = \Sigma_{\Lambda,t}^{-1} \big( I_r \otimes (F_t^{\top} \Sigma_F^{-1} \Sigma_{\Lambda}^{-1}) \big) ( \Sigma_{\Lambda,t}  \otimes  \Sigma_\Lambda )  \Xi_F    (  \Sigma_{\Lambda,t} \otimes \Sigma_\Lambda)  \big( I_r \otimes (\Sigma_{\Lambda}^{-1} \Sigma_F^{-1} F_t )\big) \Sigma_{\Lambda,t}^{-1}.
			\]
			\item 
			The asymptotic variance of the common component in formula \eqref{eqn:asy-var-C-equal-weight} simplifies to
			\begin{align*}
				\Sigma_{C,it} = & \frac{\delta_{NT}}{T} \bigg[ F_t^{\top} \big( \omega_{jj} \cdot  \scovI_{\Lambda} + (\omega_{jj}-1) \cdot \scovII_{\Lambda,j} \big)F_t + (\omega-1) \Lambda_j^\T \scovII_{F,t} \Lambda_j      - 2  ( \omega_j - 1  )\Lambda_j^{\top}  \scovIII_{\Lambda,F,j,t} F_{t}  \bigg] \\ &  + \frac{\delta_{NT}}{N}   \Lambda_j^{\top} \scovI_{F,t} \Lambda_j, 
			\end{align*} where
			\[ \scovIII_{\Lambda,F,j,t}  = \Sigma_{\Lambda, t}^\I \big(I_r \otimes (    F_t^\T \Sigma_F^\I  \Sigma_\Lambda^\I ) \big)    ( \Sigma_{\Lambda,t}  \otimes  \Sigma_\Lambda )         \Xi_F  ( \Lambda_j \otimes  \Sigma_\Lambda)   \Sigma_\Lambda^\I \Sigma_F^{-1}  .    \]
		\end{enumerate}
	\end{corollary}

	The simplified model provides a clear interpretation of the effect of missing data. Importantly, the parameters $\omega, \omega_{j}$ and $\omega_{jj}$, that depend only on the missing pattern, but not on the factor model, determine the weights of correction terms. The asymptotic covariance of the loadings is a weighted combination of the variance of an OLS regression of the population factors $F$ on $Y_j$ and the correction term. The weight $\omega_{jj} \geq 1$ depends on the number of the observed entries and the similarities in observation patterns for different units. Without missing data, it equals $\omega_{jj}=1$ and the correction term disappears. If the data is observed uniformly at random with probability $p$, the weight equals $\omega_{jj}=1/p$ which is increasing in the proportion of missing observations. 
	
	Similarly, the asymptotic variance of the factors has two components: the variance of an OLS regression of the population loadings on $Y_t$ using only observed entries, and the correction term. The weight $\omega \geq 1$ increases the scale of the correction term. When all entries are observed, or all entries are observed cross-sectionally at random (with either the same or different probabilities), then $\omega = 1$, the correction term vanishes, and the asymptotic variance only depends on $\scovI_{F,t}$. If the missing pattern does not depend on the loadings, then $\Sigma_{\Lambda,t}=p_t \Sigma_{\Lambda}$ and $\scovI_{F,t}$ simplifies to $\frac{1}{p_t} \Sigma_{\Lambda}^{-1} \sigma_e^2$ which is the variance of an OLS regression of the population loadings on $Y_t$ scaled by the inverse proportion of observed entries at time $t$. 
	
	The distribution of the common component depends on all three parameters $\omega, \omega_{j}$ and $\omega_{jj}$. If all entries are observed at random, then $\omega_j = 1$ and the contribution of the loading and factor distribution to the common component are separated similar to the conventional PCA setup in \cite{bai2003inferential}. In this case, only the two terms $\omega_{jj} F_t^\T \scovI_\Lambda F_t$ and $\Lambda_j^\T \scovI_{F,t} \Lambda_j$ remain in the asymptotic variance.

		\begin{remark}
			%
			If all entries are observed at random with equal probability, we can use the approach of \cite{jin2020factor} to estimate the factor model and impute the missing entries. We compare the efficiency of our approach with the one of \cite{jin2020factor}. For a direct comparison, we follow the order of estimation in \cite{jin2020factor} and switch the role of factors and loadings in our all-purpose estimator: We first estimate the factors from the time-series sample covariance matrix, and then estimate the loadings from a time-series regression of the observed outcomes on the estimated factors. 
			\begin{proposition}\label{prop:compare-su}
				Suppose Assumptions \ref{ass:obs-equal-weight}, \ref{ass:simple-factor-model} and \ref{ass:simple-moment} hold and that every entry is randomly missing with observed probability $p$. We switch the role of factors and loadings in the all-purpose estimator. As  $N,T \rightarrow \infty$, it holds that:
				\begin{enumerate}
					\item The estimated factors $\tilde{F}_t$ are asymptotically the same as the initial estimates of factors in \cite{jin2020factor}, but are asymptotically less efficient than the iterated estimates of factors  in \cite{jin2020factor}. 
					\item The estimated loadings $\tilde{\Lambda}_i$ are asymptotically more efficient than the initial estimates of loadings in \cite{jin2020factor}, but are asymptotically the same as the iterated estimates of loadings in \cite{jin2020factor}.
				\end{enumerate}
			\end{proposition}
		\end{remark}

	\section{Propensity Weighted Estimator}\label{sec:propensity-weighted-estimator}
	
	We provide the assumptions and general distribution theory for the propensity weighted estimator for the factors $\tilde{F}^S_t$ defined in Equation \eqref{eqn:reg-estimate-factors-conditional}. This conditional estimator uses the weights $\frac{1}{P(W_{it} = 1|S_i)}$ in the cross-sectional regression to obtain the factors. We allow the observation probability to depend on observed cross-sectional features $S = [S_i] \in \+R^{N \times K}$ that explain why certain units are more likely to be observed than other units. This conditional setup requires some modifications of the previous assumptions. In addition to Assumption \ref{ass:obs-equal-weight} we require the following assumption:
	
	\begin{assumpC}[Conditional Observational Pattern] \label{ass:obs}
		\texttt{}
		\begin{enumerate}[wide=0pt, widest=99,leftmargin=\parindent, labelsep=*]
			\item $W$ is independent of $\Lambda$ conditional on $S$. 
			\item For any $i$ and $ j$ satisfying $i \neq j$, and for any $t$ and $s$, $W_{it}$ is independent of $W_{js}$ conditional on $S_i$ and $S_j$ where $t$ and $s$ can be the same. The probability of $W_{it} = 1$ depends on $S_i$ and satisfies $P(W_{it}=1|S_i) \geq \underline{p} > 0$.  
		\end{enumerate}
	\end{assumpC}

	We assume $S$ contains all the information in $\Lambda$ that is predictive for the observation pattern. In other words, $W$ is independent of $\Lambda$ conditional on $S$, as stated in Assumption \ref{ass:obs}.1. This is closely related to the {unconfoundedness} assumption in causal inference. It also assumes that the conditional probability $P(W_{it}=1|S_i) $ is bounded away from 0, which implies that the number of observed cross-sectional and time-series entries is proportional to $N$ and $T$, respectively. This corresponds to the {overlap assumption} in causal inference.\footnote{See \citep{rosenbaum1983central} for the connection to unconfoundedness and the overlap assumption. We assume $P(W_{it} = 1|S)$ is bounded away from 0, such that $\frac{1}{P(W_{it}=1|S)}$ does not diverge, which is equivalent to the overlap assumption in causal inference.} Note that it is straightforward to include the covariates of ``neighbor units'' in $S_i$ to allow for network effects.

	We replace Assumptions \ref{ass:factor-model} and \ref{ass:mom-clt} by their conditional counterpart Assumptions \ref{ass:factor-model-conditional} and \ref{ass:mom-clt-conditional} which have a similar level of generality. These are required for the asymptotic normality of $\tilde{F}^S_t$ and $\tilde C^S_{it}$. As before, we collect the Assumptions \ref{ass:factor-model-conditional} and \ref{ass:mom-clt-conditional} for a general approximate factor model in the Appendix and present the assumptions for a simplified factor model in the main text, which are sufficient to convey all conceptual insights. 
	
	\begin{assumpC}[Conditional Factor Model]\label{ass:simple-factor-model-conditional}
		\texttt{}
		\begin{enumerate}[wide=0pt, widest=99,leftmargin=\parindent, labelsep=*]
			\item $S$ is independent of $F$ and $e$. 
			\item For any $i$, $\Lambda_i$ is independent of $S_j$ conditional on $S_i$ for $j \neq i$. Moreover,  for any $i$ and $ j$ satisfying $i \neq j$, $\Lambda_i$ is independent of $\Lambda_j$ conditional on $S_i$ and $S_j$.
		\end{enumerate}
	\end{assumpC}

	\setcounter{assumpC}{2}
	\begin{assumpC}[Moments of Conditional Factor Model]\label{ass:simple-moment-conditional}
		\texttt{}
		\begin{enumerate}[wide=0pt, widest=99,leftmargin=\parindent, labelsep=*]
			\item $\+E[\norm{\Lambda_i}^8|S]  \leq \overline{\Lambda} < \infty$. 
			\item Systematic loadings: $\lim_{N \rightarrow \infty} \frac{1}{N} \sum_{i = 1}^N \frac{1}{P(W_{it}=1|S_i) } \+E[\Lambda_i \Lambda_i^\T | S_i]  \xrightarrow{P} \Sigma_{\Lambda, S,t} $ for every $t$ for some positive definite matrix $\Sigma_{\Lambda,S,t}$.
		\end{enumerate}
	\end{assumpC}
	
	Under Assumption \ref{ass:simple-factor-model-conditional}, $\frac{1}{N} \sum_{i = 1}^N \frac{W_{it}}{P(W_{it} = 1|S_i)}  \tilde  \Lambda_i \tilde  \Lambda_i^\T $ converges in probability to an identity matrix which is the same limit as the loading estimates in conventional PCA without missing data. The assumption that $S$ is independent of $F$ and $e$ is conceptually similar to the assumption that $\Lambda$ is independent of $F$ and $e$, where the latter is standard in the literature on large dimensional factor modeling. The additional moment conditions in Assumption \ref{ass:simple-moment-conditional} are required for the asymptotic distribution, where $\Sigma_{\Lambda,S,t}$ appears in the asymptotic covariances of $\tilde{F}^S_t$ and $\tilde C^S_{it}$. 
	
	Proposition \ref{prop:simple-assump-imply-general-assump} in the Appendix shows that the simplified model is just a special case of the general approximate factor model specified by Assumptions \ref{ass:factor-model-conditional} and \ref{ass:mom-clt-conditional}. The simplified Assumption \ref{ass:simple-factor-model} combined with Assumptions \ref{ass:obs-equal-weight}, \ref{ass:obs} and \ref{ass:simple-factor-model} imply the general conditional Assumption \ref{ass:factor-model}, while Assumption \ref{ass:simple-moment} combined with the other simplified Assumptions \ref{ass:obs-equal-weight}, \ref{ass:obs}, \ref{ass:simple-factor-model}, \ref{ass:simple-moment}.2 and \ref{ass:simple-factor-model-conditional} imply the general Assumption \ref{ass:mom-clt-conditional}.

\subsection{Asymptotic Normality}
The propensity weighted estimator only differs in the distribution of the factors and common components. Both $\tilde F^S_t$ and $\tilde C^S_{it}$ follow a normal distribution, but in most cases have a larger asymptotic variance than the estimators $\tilde F_t$ and $\tilde C_{it}$. The loadings are not affected by the propensity score weighting.

\begin{theorem}\label{theorem:asy-normal}
	Under Assumptions \ref{ass:obs-equal-weight}, \ref{ass:obs}, \ref{ass:factor-model}, \ref{ass:factor-model-conditional} and \ref{ass:mom-clt-conditional} and for $N,T \rightarrow \infty$ we have for each $j$ and $t$:
	\begin{enumerate}[wide=0pt, widest=99,leftmargin=\parindent, labelsep=*]
		\item The asymptotic distribution of the loadings is the same as in Theorem \ref{theorem:asy-normal-equal-weight}.
		\item For ${\sqrt{N}}/{T} \rightarrow 0$ and $\sqrt{T}/N \rightarrow 0$, the asymptotic distribution of the factors is 
		\begin{align}
			\sqrt{\delta_{NT}} (\Sigma_{F,t}^S)^{-1/2} (H^\T \tilde F^S_t - F_t )  \xrightarrow{d} \calN(0,1),   \label{eqn:asy-var-F}
		\end{align}
		where 
		\[ \Sigma_{F,t}^S =  \Sigma_{\Lambda}^\I \Big[ \frac{\delta_{NT}}{N}  \covIS_{F,t}   + \frac{\delta_{NT}}{T} \covIIS_{F,t}  \Big]   \Sigma_{\Lambda}^\I,   \]
		with $\covIIS_{F,t} = g^S_{t}(F_t)$. $\covIS_{F,t}$ and $g^S_t(\cdot)$ are defined in Assumptions \ref{ass:mom-clt-conditional}.\ref{ass:asy-normal-main-term-thm-factor-conditional} and \ref{ass:mom-clt-conditional}.\ref{ass:asy-normal-add-term-thm-loading-conditional}. 
		\item The asymptotic distribution of the common components is
		\begin{align}
			\sqrt{\delta_{NT}} (\Sigma_{C,it}^S)^{-1/2}  (\tilde C^S_{jt} - C_{jt})  \xrightarrow{d} 
			\calN (0,1), \label{eqn:asy-var-C}
		\end{align}
		where 
		\begin{align*}
			\Sigma_{C,it}^S =  & \frac{\delta_{NT}}{T} F_t^\T \Sigma_F^\I \Sigma_\Lambda^\I ( \covI_{\Lambda,j}   +\covII_{\Lambda,j}  )  \Sigma_\Lambda^\I \Sigma_F^\I F_t +  \frac{\delta_{NT}}{T}  \Lambda_j^\T \Sigma_\Lambda^\I \covIIS_{F,t} \Sigma_\Lambda^\I \Lambda_j 
			\\ & + \frac{\delta_{NT}}{N} \Lambda_j^\T \Sigma_\Lambda^\I  \Gamma^{\obs, S}_{F,t}   \Sigma_\Lambda^\I \Lambda_j -   2 \cdot \frac{\delta_{NT}}{T}  \Lambda_j^{\top} \Sigma_{\Lambda}^\I \covIIIS_{\Lambda, F, j, t}  \Sigma_\Lambda^\I   \Sigma_F^\I  F_t ,
		\end{align*}
		with $\covIIIS_{\Lambda, F, j, t} = g^{\cov,S}_{j, t}(\Lambda_j, F_t)$, and the function $g^{\cov,S}_{j, t}(\cdot,\cdot)$ is defined in Assumption \ref{ass:mom-clt-conditional}.\ref{ass:asy-normal-add-term-thm-loading-conditional}. 
	\end{enumerate}
\end{theorem}

The distribution results have the same general structure as in Theorem \ref{theorem:asy-normal-equal-weight}. However, there are two key differences. First, the outer matrices in the variance of $\tilde F^S_t$ are $\Sigma_{\Lambda}^{-1}$ while they depend on the observational pattern in $\Sigma_{\Lambda,t}^\I$ in Equation \eqref{eqn:asy-var-F-equal-weight}. Second, the middle terms $\covIS_{F,t}$ and $\covIIS_{F,t}$ may depend on $\ps$. The same structure carries over to the common component. In the case of generalized least squares regressions, it is straightforward to compare the asymptotic covariances for different weights and to determine an efficient estimator. With missing observations, the problem becomes more challenging as the asymptotic covariances depend on two matrices for the factor estimates and three terms for the common components. For the general models in Theorems \ref{theorem:asy-normal-equal-weight} and \ref{theorem:asy-normal} we cannot state which estimator is more efficient without imposing additional structure. However, for the simplified model, we can rank the efficiency of the two estimators.


\begin{corollary}\label{corollary:asy-normal}
	Suppose Assumptions \ref{ass:obs-equal-weight}, \ref{ass:obs}, \ref{ass:simple-factor-model}, \ref{ass:simple-moment}.2, \ref{ass:simple-factor-model-conditional}, and \ref{ass:simple-moment-conditional} hold and $N,T \rightarrow \infty$. Then Theorem \ref{theorem:asy-normal} holds. If in addition, $q_{ij}$ and $q_{ij,kl}$ are independent of $\Lambda_m \Lambda_m^\T$ for all $i, j, k, l, m$, then the asymptotic variances simplify as follows with the weights $\omega, \omega_{j}$ and $\omega_{jj}$ defined in Assumption \ref{ass:simple-moment}:
	\begin{enumerate}[wide=0pt, widest=99,leftmargin=\parindent, labelsep=*]
		\item 
		The asymptotic variance of the factors in formula \eqref{eqn:asy-var-F} simplifies to 
		\[\Sigma_{F,t}^S = \frac{\delta_{NT}}{N}  \scovIS_{F} + \frac{\delta_{NT}}{T} ( \omega - 1) \scovIIS_{F,t}, \]
		where
		\begin{align*}
			\scovIS_{F,t}=&   \Sigma_\Lambda^\I  \Sigma_{\Lambda,S,t}  \Sigma_\Lambda^\I  \sigma_e^2, \\
			\scovIIS_{F,t}  =& \Sigma_\Lambda^\I \big( I_r \otimes (F_t^{\top} \Sigma_F^{-1} \Sigma_{\Lambda}^{-1}) \big)  \left( \Sigma_{\Lambda} \otimes \Sigma_{\Lambda} \right) \Xi_F \left( \Sigma_{\Lambda} \otimes \Sigma_{\Lambda} \right) \big( I_r \otimes (\Sigma_{\Lambda}^{-1} \Sigma_F^{-1} F_t )\big)  \Sigma_\Lambda^\I.
		\end{align*}
		\item 
		The asymptotic variance of the common component in formula \ref{eqn:asy-var-C} simplifies to 
		\begin{align*}
			\Sigma_{C,it}^S =&	\frac{\delta_{NT}}{T} \bigg[  F_t^{\top}  \Big( \omega_{jj}  \scovI_{\Lambda} + (\omega_{jj}-1)  \scovII_{\Lambda,j} \Big) F_t + (\omega-1)   \Lambda_j^{\top} \scovIIS_{F,t} \Lambda_j   \\
			& \quad \quad  - 2 (\omega_j- 1) \Lambda_j^{\top} \scovIIIS_{\Lambda,F,j,t} F_t \bigg] + \frac{\delta_{NT}}{N}  \Lambda_j^{\top}  \scovIS_{F,t} \Lambda_j, 
		\end{align*} 
		where
		\[\scovIIIS_{\Lambda,F,j,t}  = \Sigma_{\Lambda}^\I \big(I_r \otimes (    F_t^\T \Sigma_F^\I  \Sigma_\Lambda^\I ) \big)    ( \Sigma_{\Lambda}  \otimes  \Sigma_\Lambda )         \Xi_F  ( \Lambda_j \otimes  \Sigma_\Lambda)   \Sigma_\Lambda^\I \Sigma_F^{-1}.  \]
		\item $\tilde F_t^S$ is weakly less efficient than $\tilde F_t$, if $S$ is independent of $\Lambda$. In the case of only one factor, i.e. $r=1$, $\tilde F_t^S$ is weakly less efficient than $\tilde F_t$ for any $S$.
	\end{enumerate}
\end{corollary}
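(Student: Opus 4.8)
The plan is to handle the three parts separately. Parts 1 and 2 are a direct specialization of Theorem \ref{theorem:asy-normal} to the simplified conditional model, while part 3 is the genuine efficiency comparison and carries the real content.

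For parts 1 and 2, I would substitute Assumptions \ref{ass:simple-factor-model}, \ref{ass:simple-factor-model-conditional} and \ref{ass:simple-moment-conditional} into the general objects $\covIS_{F,t}$, $\covIIS_{F,t}$ and $\covIIIS_{\Lambda,F,j,t}$ of Theorem \ref{theorem:asy-normal}. Using that $F$, $\Lambda$, $e$ and $(W,S)$ are mutually independent and that the errors are i.i.d.\ with variance $\sigma_e^2$, the leading CLT term $\frac{1}{\sqrt N}\sum_i \frac{W_{it}}{p_{it}}\Lambda_i e_{it}$ has conditional variance $\frac1N\sum_i \frac{1}{p_{it}}\+E[\Lambda_i\Lambda_i^\top|S_i]\sigma_e^2 \to \Sigma_{\Lambda,S,t}\sigma_e^2$, which after the outer $\Sigma_\Lambda^\I$ factors yields $\scovIS_{F,t}=\Sigma_\Lambda^\I\Sigma_{\Lambda,S,t}\Sigma_\Lambda^\I\sigma_e^2$. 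The correction term collapses the fourth moment of $F_t$ into $\Xi_F=\+E[\tvec(F_tF_t^\top)\tvec(F_tF_t^\top)^\top]$ and the reweighted loading second moment into $\Sigma_\Lambda$, producing the stated Kronecker expression for $\scovIIS_{F,t}$; the common-component formula then follows by collecting the loading, factor and cross terms exactly as in Corollary \ref{corollary:asy-normal-equal-weight}. Throughout I would use the identifications $\Sigma_{\Lambda,t}=\+E[p(S)M(S)]$, $\Sigma_{\Lambda,S,t}=\+E[M(S)/p(S)]$ and $\Sigma_\Lambda=\+E[M(S)]$, where $M(S)=\+E[\Lambda\Lambda^\top|S]$ and $p(S)=P(W_{it}=1|S)$.

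For part 3, since the asymptotic variances of $\tilde F_t$ and $\tilde F_t^S$ share the weights $\frac{\delta}{N}$ and $\frac{\delta}{T}(\omega-1)$ with $\omega\geq 1$, a term-by-term comparison suffices: it is enough to show (i) the correction terms coincide, $\scovIIS_{F,t}=\scovII_{F,t}$, and (ii) the leading term is dominated, $\scovIS_{F,t}\succeq\scovI_{F,t}$. For (i), if $S$ is independent of $\Lambda$ then $M(S)\equiv\Sigma_\Lambda$, so $\Sigma_{\Lambda,t}=\+E[p(S)]\,\Sigma_\Lambda=\bar p_t\Sigma_\Lambda$ is a scalar multiple of $\Sigma_\Lambda$; substituting this into $\scovII_{F,t}$, the powers of the scalar $\bar p_t$ cancel and the expression equals $\scovIIS_{F,t}$. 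If instead $r=1$, then $\Sigma_{\Lambda,t}$ and $\Sigma_\Lambda$ are scalars that enter both $\scovII_{F,t}$ and $\scovIIS_{F,t}$ with total exponent zero, so both correction terms reduce to $F_t^2\Sigma_F^{-2}\Xi_F$ for any $S$. In either case the correction terms agree, and the comparison reduces to the leading term.

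The crux is (ii), a matrix Cauchy--Schwarz inequality. I would show $\Sigma_\Lambda^\I\Sigma_{\Lambda,S,t}\Sigma_\Lambda^\I\succeq\Sigma_{\Lambda,t}^\I$, which by inverting the positive-definite ordering is equivalent to $\Sigma_{\Lambda,t}\succeq\Sigma_\Lambda\Sigma_{\Lambda,S,t}^\I\Sigma_\Lambda$, i.e.\ to $\+E[pM]\succeq\+E[M]\big(\+E[M/p]\big)^\I\+E[M]$. This is a Schur-complement statement: for each realization of $S$ the block matrix $\left(\begin{smallmatrix} pM & M \\ M & M/p\end{smallmatrix}\right)$ is positive semidefinite, since its quadratic form at $(x,y)$ equals $v^\top M v$ with $v=\sqrt p\,x+p^{-1/2}y$; taking expectations preserves positive semidefiniteness, and the Schur complement of the lower-right block is exactly the desired inequality. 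For $S$ independent of $\Lambda$ this reduces to the scalar Jensen inequality $\+E[1/p]\geq 1/\+E[p]$ applied to the fixed matrix $\Sigma_\Lambda^\I\sigma_e^2$, and for $r=1$ to ordinary Cauchy--Schwarz $\+E[M]^2\leq\+E[pM]\,\+E[M/p]$. The main obstacle I anticipate is twofold: verifying cleanly through the Kronecker algebra that the correction terms cancel \emph{exactly} rather than merely admitting an ordering, and confirming that the leading-term domination does \emph{not} extend to general $r$ with $S$ dependent on $\Lambda$ — which is precisely why the two regimes are stated separately, since there the correction terms no longer coincide and the full variances need not be ordered.
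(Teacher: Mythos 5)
Your proposal is correct, and for parts 1 and 2 it follows the paper's own (implicit) route: the paper never writes a standalone proof of this corollary — the simplified objects are computed inside the proof of Proposition \ref{prop:simple-assump-imply-general-assump-conditional}, whose Step 4 gives $\covIS_{F,t}=\lim_{N}\frac{1}{N}\sum_i \frac{1}{\ps}\+E[\Lambda_i\Lambda_i^\T|S_i]\,\sigma_e^2=\Sigma_{\Lambda,S,t}\sigma_e^2$ and whose Steps 5.2--5.3 give $\mathbf{\Phi}_t^S=(\omega-1)(\Sigma_\Lambda\otimes\Sigma_\Lambda)\Xi_F(\Sigma_\Lambda\otimes\Sigma_\Lambda)$ and $\mathbf{\Phi}^{\cov,S}_{j,t}=(\omega_j-1)(\Sigma_\Lambda\otimes\Sigma_\Lambda)\Xi_F(I_r\otimes\Sigma_\Lambda)$; substituting these into Theorem \ref{theorem:asy-normal} yields parts 1 and 2 exactly as you describe. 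For part 3 your route is genuinely different from, and more unified than, the paper's. The paper handles the two regimes with two separate devices: for $r=1$ it uses the scalar Cauchy--Schwarz inequality $\Sigma_{\Lambda,S,t}/\Sigma_{\Lambda}^2 \geq 1/\Sigma_{\Lambda,t}$ together with the identity $\scovIIS_{F,t}=\scovII_{F,t}$, and for $r>1$ with $S$ independent of $\Lambda$ it invokes the concavity of the $1/\ps$-weighted average, which in your notation is the Jensen step $\+E[1/p]\geq 1/\+E[p]$ after observing $\Sigma_{\Lambda,t}=\bar p_t\Sigma_\Lambda$. Your single Schur-complement argument — pointwise positivity of the block matrix with blocks $pM$, $M$, $M$, $M/p$, expectation, then Schur complement and antitonicity of matrix inversion — delivers $\Sigma_\Lambda^\I\Sigma_{\Lambda,S,t}\Sigma_\Lambda^\I\succeq\Sigma_{\Lambda,t}^\I$ for every $r$ and every $S$, subsuming both of the paper's devices; the regime restrictions then enter only through the exact correction-term identity $\scovIIS_{F,t}=\scovII_{F,t}$, which you verify correctly in both cases (scalar cancellation of $\bar p_t$ under independence; total-exponent-zero cancellation to $F_t^2\Sigma_F^{-2}\Xi_F$ at $r=1$). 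What this buys is a strictly more general and cleaner leading-term comparison at no extra cost, with the same term-by-term logic being legitimate because both asymptotic variances carry the identical weights $\delta/N$ and $(\delta/T)(\omega-1)$ and the same rotation $H$.

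One slip in your closing paragraph: the leading-term domination \emph{does} extend to general $r$ with $S$ dependent on $\Lambda$ — your own Schur argument proves it without restriction. What fails outside the two stated regimes is the correction-term identity: the paper reports simulations in which $\scovIIS_{F,t}<\scovII_{F,t}$ when the loadings depend on $S$, so it is the correction term, not the leading term, that forces the regime split in part 3. This mislabeling is expository only and does not affect the validity of your proof of the statement as written.
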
 
An interesting observation is that $\scovIIS_{F,t}$  and $\scovIIIS_{\Lambda,F,j,t}$ depend neither on the observation pattern nor on $S$. This is because $\frac{1}{\ps}$ removes the asymptotic dependency between $W_{it}$ and $\Lambda_i$. Hence, this part of the asymptotic distribution has a complete separation between the missing observation pattern captured by the weights $\omega, \omega_{j}$ and $\omega_{jj}$ and distribution terms that depend only on the factor model. However, $\scovIS_{F,t}$ depends on $\ps$ as this component comes from a probability weighted least square regression of the population loadings on the observed entries in $Y$, which is different from the corresponding OLS regression in Corollary \ref{corollary:asy-normal-equal-weight}.1.

The key observation is that $\tilde F_t^S$ and as a consequence also $\tilde C_{it}^S$ seem to be in many cases less efficient than $\tilde F_t$ and $\tilde C_{it}$, which means that the asymptotic variances of the all-purpose estimator are less than or equal to those of the propensity weighted estimator. In the case of only one factor it holds that $ \scovIIS_{F,t} = \scovII_{F,t} $. Not surprisingly, it holds that $\scovIS_{F,t} \geq \scovI_{F,t}$ as in the case of i.i.d. errors, an OLS regression is the most efficient linear estimator. This result can also be derived from $\Sigma_{\Lambda,S,t}/\Sigma_{\Lambda}^2 \geq 1/\Sigma_{\Lambda,t}$, which follows from the Cauchy-Schwartz inequality. In the case of multiple factors, we take advantage of the concavity of the average weighted by $1/\ps$ to prove the efficiency relationship. In simulations we confirm that when the loadings depend on $S$, it is possible that $ \scovIIS_{F,t} < \scovII_{F,t} $, which can result in minor efficiency gains for $\tilde F_t^S$. For a general residual covariance matrix, the efficiency results are more complex. \cite{pelgerxiongsparse2020} show that for fully observed data under certain assumptions, the optimal weight in the factor regression is the inverse residual covariance matrix or equivalently PCA, applied to a covariance matrix re-weighted by the square-root of the inverse residual covariance matrix, is the most efficient estimator. Hence, if the propensity weight is close to the inverse residual covariance matrix, it lowers the first term $\scovIS_{F,t}$. However, the effect on the second term is more complex, and hence there are in general cases where $\tilde F_t^S$ can be more efficient than $\tilde F_t$. In simulations, we show that for a correctly specified model, the estimates of $\tilde F_t$ and $\tilde F_t^S$ are close, but $\tilde F_t$ is generally more precise. However, the ``doubly-robust'' estimator $\tilde F_t^S$ seems to be less affected by various forms of misspecification, e.g., omitted factors, weak factors, or a nonlinear factor model. Hence, $\tilde F_t^S$ might be appealing because it is more robust but not based on efficiency arguments.

\subsection{Robustness to Model Misspecification}
The propensity weighted regressions can be robust to the selection bias from omitting factors. In the causal inference literature regressions weighted by propensity scores have been used in the estimation of causal effects to reduce the bias that arises from omitting regressors or misspecifying the outcome model. However, as the propensity weighted regressions have a larger variance, regressions without the propensity weights seem to be preferred for correctly specified models.\footnote{\cite{robins1994estimation,robins1995semiparametric} among others discuss the reduction of bias from omitting regressors. \cite{robins2000inference,kang2007,robins2007comment} show the large variance of propensity weighted regressions, and \cite{freedman2008weighting} suggests unweighted regressions for correctly specified models} In this section, we illustrate that this logic carries over to our latent factor model setup. 

Our setup differs from classical causal inference as we estimate the covariates as latent factors from the data. However, we can have a situation similar to omitted variables if we estimate too few latent factors, the factors are weak, or the population model is nonlinear. In the previous section we have shown that the propensity weighted estimator is in general less efficient than our all-purpose estimator when we use the correct number of factors $r$.\footnote{Given our distribution theory, it is relatively straightforward to extend the consistent estimator of \cite{bai2002determining} for the number of factors to the more general case with missing data. However, most existing estimators for the number of latent factors explicitly or implicitly depend on choice parameters, which implies that in practice it is possible to use too few factors \citep{pelger2019large,lettau2018factors}.} However, when we use too few latent factors, our weighted estimator can have a smaller selection bias, and hence be preferable. We illustrate the general logic of this result with an example and confirm it with extensive simulations in Section \ref{sec:robust}. 


We assume a two-factor model $Y_{it} = \Lambda_{i1} F_{t1}   +\Lambda_{i2}  F_{t2}  + e_{it}$, where 
\begin{align*}\begin{bmatrix}
		F_{t1} \\ F_{t2}
	\end{bmatrix} \stackrel{\iid}{\sim} \mathcal{N} \bigg( \begin{bmatrix}
		0 \\ 0
	\end{bmatrix}, \begin{bmatrix}
		\sigma_F^2 & 0 \\ 0 &  \sigma_F^2
	\end{bmatrix} \bigg), \qquad \begin{bmatrix}
		\Lambda_{i1} \\ \Lambda_{i2}
	\end{bmatrix}\stackrel{\iid}{\sim} \mathcal{N} \bigg( \begin{bmatrix}
		\mu_\Lambda \\ 0
	\end{bmatrix}, \begin{bmatrix}
		\sigma_{\Lambda}^2 & 0 \\ 0 &  \sigma_{\Lambda}^2
	\end{bmatrix} \bigg), \qquad e_{it}  \stackrel{\iid}{\sim} \mathcal{N}(0,\sigma_e^2).
\end{align*}
The key assumption is that the observation pattern depends on the loadings. In order to have a transparent example, we assume the following pattern. The first $T_0$ time periods are fully observed. After time $T_0$, whether a unit is observed or not depends on an indicator variable $S_i $, defined as $S_i = \boldsymbol{1}_{\Lambda_{i1}  > c_1,  \Lambda_{i2} > c_2}$ for some $c_1 > 0$ and $c_2 > 0$. Suppose $P(W_{it} = 1|S_i = 1) = p$ and  $P(W_{it} = 1|S_i = 0) = 1 - p$ for some $p > 0.5$. In other words, units with large loadings are more likely to be observed. We would get similar results if large loadings are more likely to be missing. Without loss of generality, we can shuffle the units and obtain the observation pattern in Table \ref{tab:toy-example-obs-pattern}. In the following, we estimate the factor model from the shuffled outcome matrix, where the outcomes for the first $N_0$ units after time $T_0$ are missing, and the outcomes for the last $N-N_0$ units are fully observed.

Assume that we omit one factor and estimate only a one-factor model with both, the simple and propensity weighted, estimators. In this case, our factor model is misspecified. Without loss of generality, we can set $\mu_\Lambda^2 + \sigma_{\Lambda}^2 = 1$.  The estimated loading vector is consistent and the same for both estimators:
\[\tilde{\*\Lambda}_1 = \*\Lambda_{1}  + o_p(1).\]

We compare for both approaches the estimates of the first factor and the common components from time $T_0+1$ to $T$ (the two approaches coincide from time 1 to $T_0$ as all units are fully observed). For the simple regression and for $T_0 < t < T$, we can show that
\begin{align*}
	\tilde{F}_{t1} =&  \bigg( \sum_{i = N_0+1}^N Y_{it} \tilde\Lambda_{i1}  \bigg) \bigg( \sum_{i = N_0+1}^N  \tilde\Lambda^2_{i1}  \bigg)^\I 
	=F_{t1} +  \gamma F_{t2} + o_p(1),
\end{align*}
where $\gamma = \lim_{N_0, N \rightarrow \infty} \frac{\sum_{i = N_0+1}^N \Lambda_{i1}  \Lambda_{i2} }{\sum_{i = N_0+1}^N \Lambda_{i1}^2 }$. The key element is that $\gamma \neq 0$ because of the dependence of the missing pattern on $\Lambda$. In our example, both $\Lambda_{i1} $ and $ \Lambda_{i2}$ tend to have large values on the observed units, and hence are not asymptotically orthogonal on the subset of observed data. A non-zero $\gamma$ creates a selection bias similar to the conventional omitted variable bias.\footnote{\cite{simon1954spurious} refers to this type of selection bias as a spurious correlation.}   

Since  the common component is $C_{it} = \Lambda_{i1} F_{t1}   +\Lambda_{i2}  F_{t2}$, the estimated common components have an error of $\tilde{C}_{it} - C_{it}  =( \gamma \Lambda_{i1}  +\Lambda_{i2}  ) F_{t2}   + o_p(1)$. Hence, the mean squared error from $T_0+1$ to $T$ equals  
\begin{align*}
	\frac{1}{N(T- T_0)} \sum_{i=1}^{N} \sum_{t = T_0 + 1}^T (\tilde{C}_{it} - C_{it}  )^2  =& \frac{1}{N(T- T_0)} \sum_{i=1}^{N} \sum_{t = T_0 + 1}^T \bigg( \gamma \Lambda_{i1}+ \Lambda_{i2} \bigg)^2  F_{t2}^2 + o_p(1) \\ =& (\gamma^2 +\sigma_{\Lambda}^2) \sigma_F^2 + o_p(1)
\end{align*}
as $\frac{1}{N} \sum_{i=1}^N \Lambda_{i1} \Lambda_{i2} \rightarrow 0$. In contrast, the propensity weighted regression for $T_0 < t < T$ equals
\begin{align*}
	\tilde{F}^S_{t1} = & \bigg( \sum_{i = N_0+1}^N \frac{1}{P(W_{it} = 1|S_i)} Y_{it} \tilde\Lambda_{i1}  \bigg) \bigg( \sum_{i = N_0+1}^N \frac{1}{P(W_{it} = 1|S_i)} \tilde\Lambda^2_{i1}  \bigg)^\I 
	=   F_{t1}  + o_p(1).
\end{align*}
Thus, the propensity weighted regression corrects for the selection bias and is the same as if $\gamma$ would have been zero. The estimated common components have an error of $\tilde{C}^S_{it} - C_{it} = \Lambda_{i2}  F_{t2}  + o_p(1)$, which implies that the mean squared error from time $T_0+1$ to $T$ is 
\begin{align*}
	\frac{1}{N(T- T_0)} \sum_{i=1}^{N} \sum_{t = T_0 + 1}^T (\tilde{C}^S_{it} - {C}_{it})^2  
	=  \frac{1}{N(T- T_0)} \sum_{i=1}^{N} \sum_{t = T_0 + 1}^T \Lambda_{i2}^2  F_{t2}^2 + o_p(1) = \sigma_{\Lambda}^2 \sigma_F^2 + o_p(1).
\end{align*}

Importantly, $\tilde{C}^S_{it} $ has a smaller mean squared error than $\tilde{C}_{it} $. In summary, when the estimated factor model has omitted factor(s) and is misspecified, the propensity weighted estimator could adjust for the selection bias and reduce the estimation error.

\section{Feasible Estimator of the Probability Weighting}\label{sec:feasible-est-prop-score}

We provide feasible estimators for $P(W_{it}=1|S_i)$ which we need in Equation \eqref{eqn:reg-estimate-factors-conditional} to estimate the factors, and we show that the asymptotic distribution of factors is not affected by using the estimated weights instead of their population counterpart. While in (stratified) randomized experiments, researchers decide and therefore know the treatment assignment probability given covariates, $P(W_{it}=1|S_i)$, the probability distribution of the missing pattern in observational studies generally needs to be estimated, which can affect the distribution theory for the latent factor model. Here we provide conditions under which the previously derived results continue to hold with a feasible estimator. To simplify notation denote by $p_{it}=P(W_{it}=1|S_i)$ the propensity score and its estimate by $\hat p_{it}=\wh P(W_{it}=1|S_i)$. The feasible estimator for the factors $\hat F_t^S$ replaces $p_{it}$ by $\hat p_{it}$ in Equation \eqref{eqn:reg-estimate-factors-conditional}, which yields the following decomposition:
\begin{align*}
	\hat F_t^S =& \left(\sum_{i = 1}^N \frac{W_{it}}{\hat p_{it}}  \tilde  \Lambda_i \tilde  \Lambda_i^\T  \right)^\I  \left( \sum_{i = 1}^N \frac{W_{it} }{\hat p_{it}} Y_{it} \tilde{\Lambda}_i \right) =  \tilde F_t^S +   \left(\frac{1}{N} \sum_{i = 1}^N \frac{W_{it}}{ p_{it}}  \tilde  \Lambda_i \tilde  \Lambda_i^\T  \right)^\I \left(\frac{1}{N} \sum_{i = 1}^N  \frac{p_{it}-\hat p_{it}}{\hat p_{it}  }  \frac{W_{it} }{p_{it}} Y_{it} \tilde{\Lambda}_i \right) \\
	&+\left(\frac{1}{N} \sum_{i = 1}^N \frac{W_{it}}{ p_{it}}  \tilde  \Lambda_i \tilde  \Lambda_i^\T  \right)^\I \left(\frac{1}{N} \sum_{i = 1}^N \frac{\hat p_{it} - p_{it}}{\hat p_{it}} \frac{W_{it}}{ p_{it}}  \tilde  \Lambda_i \tilde  \Lambda_i^\T  \right) \cdot \hat F_t^S.
\end{align*}

Under weak assumptions on $\hat p_{it}$, that are satisfied for feasible estimators of the most empirically relevant observation patterns, the additional term $\hat F^S_t-\tilde F_t^S$ can be neglected in the asymptotic distribution.
\begin{theorem}\label{thm:feasible-estimator}
	We replace the propensity score in $p_{it}$  in Equation \eqref{eqn:reg-estimate-factors-conditional} by its estimate $\hat p_{it}$.
	\begin{enumerate}[wide=0pt, widest=99,leftmargin=\parindent, labelsep=*]
		\item The estimates of the loadings do not depend on the propensity score. Hence, Theorem \ref{thm:consistency-same-H}  and the asymptotic distribution of the loadings in Theorem \ref{theorem:asy-normal} continue to hold independently of $\hat p_{it}$.
		\item The following holds for the distribution of the factors and common components. 
		\begin{enumerate}
			\item If $\max_i |\hat p_{it} - p_{it} | = o_P (1)$, then the factors and common components are estimated consistently pointwise under the assumptions of Theorem \ref{theorem:asy-normal}.
			\item If $\max_i |\hat p_{it} - p_{it} | = o_P \left( \frac{1}{N^{1/4}} \right)$, then Theorem \ref{theorem:asy-normal} continues to hold as it is. 
		\end{enumerate}
	\end{enumerate}
\end{theorem}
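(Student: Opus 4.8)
The plan is to work directly from the decomposition of $\hat F^S_t$ displayed just before the theorem and show that the two correction terms, relative to $\tilde F^S_t$, are asymptotically negligible at the relevant rate. Part 1 is immediate: the loadings $\tilde\Lambda$ are obtained by PCA on $\tilde\Sigma$ in Equation \eqref{eqn:pca-estimate-loadings}, which never involves the propensity score, so Theorem \ref{thm:consistency-same-H} and the loading distribution in Theorem \ref{theorem:asy-normal} hold verbatim. For the factors, abbreviate the displayed expansion as $\hat F^S_t - \tilde F^S_t = A_N^{-1}\left(B_N + C_N \hat F^S_t\right)$, where $A_N = \frac{1}{N}\sum_i \frac{W_{it}}{p_{it}}\tilde\Lambda_i\tilde\Lambda_i^\top$, $B_N = \frac{1}{N}\sum_i \frac{p_{it}-\hat p_{it}}{\hat p_{it}}\frac{W_{it}}{p_{it}}Y_{it}\tilde\Lambda_i$ and $C_N = \frac{1}{N}\sum_i \frac{\hat p_{it}-p_{it}}{\hat p_{it}}\frac{W_{it}}{p_{it}}\tilde\Lambda_i\tilde\Lambda_i^\top$. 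Since $p_{it}\geq\underline p>0$ and $\max_i|\hat p_{it}-p_{it}|=o_P(1)$, with probability tending to one $\hat p_{it}$ is uniformly bounded away from zero; and the discussion following Assumption \ref{ass:simple-factor-model-conditional} gives $A_N \xrightarrow{P} I_r$, so $A_N^{-1}=O_P(1)$.

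The key algebraic step is to combine the two inner sums. Because $C_N\hat F^S_t = -\frac{1}{N}\sum_i\frac{p_{it}-\hat p_{it}}{\hat p_{it}}\frac{W_{it}}{p_{it}}\tilde\Lambda_i\bigl(\tilde\Lambda_i^\top\hat F^S_t\bigr)$, the common-component parts cancel and leave only the imputation residual $\hat e_{it}=Y_{it}-\tilde\Lambda_i^\top\hat F^S_t$:
\begin{align*}
B_N + C_N\hat F^S_t = \frac{1}{N}\sum_{i=1}^N \frac{p_{it}-\hat p_{it}}{\hat p_{it}}\frac{W_{it}}{p_{it}}\tilde\Lambda_i\,\hat e_{it}.
\end{align*}
Writing $Y_{it}=C_{it}+e_{it}$ and $\tilde C^S_{it}=\tilde\Lambda_i^\top\hat F^S_t$ gives $\hat e_{it}=e_{it}-(\tilde C^S_{it}-C_{it})$, so the correction splits into an idiosyncratic term and a common-component-error term, which I bound separately.

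For the idiosyncratic term $\frac{1}{N}\sum_i \frac{p_{it}-\hat p_{it}}{\hat p_{it}}\frac{W_{it}}{p_{it}}\tilde\Lambda_i e_{it}$, I replace $\tilde\Lambda_i$ by $H\Lambda_i$; the remainder is controlled by $\frac{1}{N}\sum_i\|\tilde\Lambda_i-H\Lambda_i\|^2=O_P(\delta^{-1})$ from Theorem \ref{thm:consistency-same-H} together with the uniform factor $\max_i|p_{it}-\hat p_{it}|$. Since $e_{it}$ is independent of $(S,\Lambda,W)$, and hence of $\hat p_{it}$, the leading piece has conditional mean zero, so its conditional variance is of order $\frac{1}{N^2}\sum_i\bigl(\frac{p_{it}-\hat p_{it}}{\hat p_{it}}\bigr)^2\frac{W_{it}}{p_{it}^2}\|\Lambda_i\|^2\le \frac{C}{N}\max_i|p_{it}-\hat p_{it}|^2\cdot O_P(1)$. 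For the common-component-error term, Cauchy--Schwarz gives the bound $\max_i\bigl|\frac{p_{it}-\hat p_{it}}{\hat p_{it}}\bigr|\cdot O_P(1)\cdot\bigl(\frac{1}{N}\sum_i(\tilde C^S_{it}-C_{it})^2\bigr)^{1/2}$, whose last factor is $O_P(\delta^{-1/2})$ by the average consistency of the imputed common components. Collecting the bounds, $\hat F^S_t-\tilde F^S_t$ is of order $\max_i|p_{it}-\hat p_{it}|\cdot o_P(1)$ plus an $N^{-1/2}$-type term: under $\max_i|\hat p_{it}-p_{it}|=o_P(1)$ this is $o_P(1)$, giving part 2(a), while under $\max_i|\hat p_{it}-p_{it}|=o_P(N^{-1/4})$ multiplying by $\sqrt\delta\le\sqrt N$ yields $o_P(1)$, so $\hat F^S_t$ shares the limiting distribution of $\tilde F^S_t$ in Theorem \ref{theorem:asy-normal}, which is part 2(b). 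The common-component statement then follows from $\hat C^S_{jt}-\tilde C^S_{jt}=\tilde\Lambda_j^\top(\hat F^S_t-\tilde F^S_t)=O_P(1)\cdot o_P(\delta^{-1/2})$.

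The main obstacle is the cancellation of the second paragraph together with the control of the common-component-error term. The latter cannot be bounded pointwise, since $\max_i|\tilde C^S_{it}-C_{it}|$ need not be $O_P(\delta^{-1/2})$, so the argument must trade the uniform propensity error $\max_i|\hat p_{it}-p_{it}|$ against the $L^2$ average consistency rate via Cauchy--Schwarz; this interplay is precisely what pins down the $N^{-1/4}$ threshold in part 2(b).
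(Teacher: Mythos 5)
Your part 1 is exactly the paper's observation, and your starting point $\hat F^S_t=\tilde F^S_t+A_N^{-1}(B_N+C_N\hat F^S_t)$ is the paper's own decomposition; where you genuinely diverge is in how the cancellation is organized. The paper expands $B_N$ into four pieces $B_1,\dots,B_4$ and $C_N$ into $C_1,\dots,C_4$, shows all but $B_1$ and $C_1$ are $o_P(N^{-1/2})$, and cancels the two non-negligible pieces through the exact identity $C_1(H^{-1})^\top F_t=-B_1$, using $\hat F^S_t=(H^{-1})^\top F_t+o_P(N^{-1/4})$ so that the product $C_1\cdot o_P(N^{-1/4})=o_P(N^{-1/2})$. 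The $N^{-1/4}$ condition is consumed precisely there and in the second-order remainder $B_{3,2}$, whose bound involves $\frac{1}{N}\sum_i(p_{it}-\hat p_{it})^4$. You instead perform the cancellation algebraically at the outset: your identity $B_N+C_N\hat F^S_t=\frac{1}{N}\sum_i\frac{p_{it}-\hat p_{it}}{\hat p_{it}}\frac{W_{it}}{p_{it}}\tilde\Lambda_i\hat e_{it}$ is correct and is a cleaner packaging, since every remaining term carries exactly one factor of the propensity error and the paper's second-order artifacts never arise. Your conditional-mean-zero treatment of the idiosyncratic piece mirrors the paper's handling of $B_{3,1}$ (both rely, as the paper implicitly does, on $\hat p_{it}$ being a function of $(W,S)$ and hence independent of $e$).

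There are, however, two concrete soft spots. First, your bound on the common-component-error term is circular as written: you invoke $\bigl(\frac{1}{N}\sum_i(\tilde C^S_{it}-C_{it})^2\bigr)^{1/2}=O_P(\delta^{-1/2})$, but with $\tilde C^S_{it}=\tilde\Lambda_i^\top\hat F^S_t$ this average is bounded by $O_P(\delta^{-1})\|\hat F^S_t\|^2+O_P(1)\,r^2$ with $r=\|\hat F^S_t-(H^{-1})^\top F_t\|$, i.e.\ by the very rate being established. This is fixable, but only via an explicit self-improvement step you never carry out: use part 2(a) to get $\|\hat F^S_t\|=O_P(1)$, feed your three term bounds back into the decomposition to obtain $r\le O_P(\delta^{-1/2})+o_P(1)\,r$, and conclude $r=O_P(\delta^{-1/2})$ before applying Cauchy--Schwarz. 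Second, your final collection step does not follow as stated: a bound of the form $\max_i|\hat p_{it}-p_{it}|\cdot o_P(1)$ multiplied by $\sqrt{\delta}\le\sqrt{N}$ gives $o_P(N^{1/4})\cdot o_P(1)$ under condition 2(b), which is not $o_P(1)$. What your own term-by-term bounds actually deliver, once the fixed point is in place, is $\|\hat F^S_t-\tilde F^S_t\|=O_P\bigl(\max_i|\hat p_{it}-p_{it}|\,\delta^{-1/2}\bigr)+o_P(N^{-1/2})$, which does suffice after scaling by $\sqrt{\delta}$ --- so your closing claim that the Cauchy--Schwarz trade-off ``pins down the $N^{-1/4}$ threshold'' misdiagnoses your own argument; in the paper's route the threshold genuinely binds (at $B_{3,2}$ and at $C_1\cdot o_P(N^{-1/4})$), whereas in yours it enters only as a sufficient condition once the bookkeeping above is repaired.
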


\begin{table}[t!]
	\tcaptab{Examples of feasible estimators of the probability weight}\label{tab:examples}
	\footnotesize{
		\setlength{\tabcolsep}{4pt} 
		\renewcommand{\arraystretch}{0.5} 
		\begin{tabular}{@{}lllll@{}}
			\toprule
			Description & $\ps$ & Estimator & Asymptotic distribution & Effect on  \\ 
			&  &  & of $\hat p_{it}- p_{it}$ & distribution\\ \midrule
			Time-series missing at random& $p(S_i)$ & logit on full panel $W$  &  $ O_P \left( \frac{1}{\sqrt{NT}} \right)$ &  no \\
			(parametric)&  &  &  &   \\
			Time-series missing at random & $p(S_i)$ & kernel on full panel $W$  &  $O_P \left( \frac{1}{\sqrt{NTh}} \right)$ &  no \\ 
			(non-parametric)&  &  &  &   \\
			Cross-section and time-series & $p_t(S_i)$ & logit on $W_t$    & $O_P \Lp \frac{1}{\sqrt{N}} \Rp$   & no \\
			dependency (parametric) &  &  &  &   \\
			Cross-section and time-series & $p_t(s)$  & $\hat p_t(s) = \frac{|\tlo_{s,t}|}{N_s}$     & $\frac{1}{\sqrt{N_s}} \calN \left (0, p_t(s) (1- p_t(s)) \right)$   & no  \\
			dependency (discrete $S$) &  &  &  &   \\
			Staggered treatment adoption & $p_t(S_i)$ & hazard rate model    & $O_P \left(\frac{1}{\sqrt{N}} \right)$   & no  \\
			with $S$ (parametric) &  &  &  &   \\
			\\ \bottomrule
		\end{tabular}
	}
	\bnotetab{This table shows feasible estimators for the probability of the most important cases of missing patterns. We propose examples of feasible estimators. The asymptotic distribution for $\hat p_{it}- p_{it}$ is given under suitable assumptions and for $S_i$ being i.i.d and sub-Gaussian. The main text includes additional details. The effect on distribution refers to the asymptotic distribution of the factors and common components in Theorem \ref{theorem:asy-normal}. The exact details are described in Theorem \ref{thm:feasible-estimator}.}
\end{table}

We discuss feasible estimators for the most important cases of missing patterns which are summarized in Table \ref{tab:examples}. Obviously, we only need to consider the case where $p_{it}$ varies for different cross-sectional units as otherwise the estimator simplifies to our estimator in Equation \eqref{eqn:pca-estimate-loadings}. For simplicity these examples assume that $S_i$ are $i.i.d.$  and sub-Gaussian but can be generalized to weak dependency patterns. The simplest case is missing at random only in the time-series dimension, that is $\ps = p(S_i)$ for some parametric or non-parametric function $p(.)$. A relevant example is the estimation of $p(S_i)$ with a logit model on the full panel $W$ which has the convergence rate $\hat p(S_i)=p(S_i)+ O_P \left( \frac{1}{\sqrt{NT}} \right)$ and a uniform bound of order $\frac{\log(NT)}{\sqrt{NT}}$. Hence, Theorem \ref{thm:feasible-estimator}.2(b) applies. If $p(S_i)$ is estimated non-parametrically with a kernel with bandwidth $h$, the convergence rate is typically $\sqrt{NT h}$ with a uniform bound of order $\frac{\log(NTh)}{\sqrt{NTh}}$, which does not change the distribution results if $Th$ is sufficiently large. In the more complex model $\ps=p_t(S_i)$ the observations probability depends on the  cross-section and time-series information. A relevant example for a parametric model is a logit model estimated on $W_t$ for each $t$ separately with a convergence rate of $\sqrt{N}$. Under weak assumptions on $S_i$, the uniform convergence bound in Theorem \ref{thm:feasible-estimator}.2(b) holds. 

An important special case are discrete values for $S$, that is, the covariates $S$ take only finitely many values. An example for a binary variable $S$ would be gender, when male or female individuals have different probabilities to be treated. If the probabilities for the different discrete outcomes of $S$ are bounded away from zero, then the estimator $P(W_{it}=1|S_i=s)=p_t(s)$ simplifies to $p_t$, but just averaged over the cross-section units for which $S_i=s$. In more detail, consider the estimator $\hat p_t(s) = \frac{|\tlo_{s,t}|}{N_s}$ where $N_{s} = \sum_{i = 1}^N \mathbf{1}(S = s)$  and $\tlo_{s,t} = \{i: W_{it} = 1 \text{ and } S = s\}$. Then, $\sqrt{N_s} \left( \hat p_t(s) - p_t(s) \right)  \xrightarrow{d} \calN \left (0, p_t(s) (1- p_t(s)) \right)$. If $N_s$ is sufficiently large, for example proportional to $N$, then the feasible estimator does not change the distribution results in Theorem \ref{theorem:asy-normal}. These estimators directly carry over to staggered treatment adoption. The staggered design can also be modeled with a parametric hazard model $\ps=p(t,S_i)$, which under appropriate assumptions converges at the rate $\sqrt{N}$ as well. In summary, for all these cases the distribution results are not affected by using a feasible estimator for the propensity score. 

As previously mentioned, we allow $S_i=\Lambda_i$. This is appealing as $\Lambda$ is by construction capturing the unit-specific features and hence should account for the differences in cross-sectional observation patterns. As the estimator $\hat \Lambda$ does not depend on the probability weights, it can be used in the estimation of $P(W_{it}|\Lambda_i)$. Theorem \ref{theorem:asy-normal} states that the estimation error of $\hat \Lambda_i$ is of the order $O_P \Lp \frac{1}{\sqrt{N}} \Rp$. While the consistency results for the factors and common components continue to hold, we need some additional weak assumptions on the tail behavior of the loadings and error terms to satisfy the uniform condition in Theorem \ref{thm:feasible-estimator}.2(b).

\section{Tests of Treatment Effects}\label{sec:test}

The key application of our the asymptotic distribution theory is to test causal effects. The fundamental problem in causal inference is that we observe an outcome either for the control or the treated data, but not for both at the same time. The unknown counterfactual of what the treated observations could have been without treatment can be naturally modeled as a data imputation problem. In this section, we consider the case where once a unit adopts the treatment, it stays treated afterward, for example, the simultaneous and staggered treatment designs illustrated in Figure \ref{fig:obs-pattern-illustrate}. Given the general missing patterns that we allow for, the generalization to more complex adoption patterns is straightforward. We denote by $\Tcontrol$ and $\Ttreat$ the number of control and treated time periods for unit $i$ where their sum adds up to $\Tcontrol+\Ttreat=T$. The superscripts $(0)$ and $(1)$ denote the observations for control and treated observations.

The individual treatment effect measures the difference between the treated and control outcomes:
\begin{align*}
	\tau_{it} = Y_{it}^{(1)} - Y_{it}^{(0)} \qquad \text{for $t > \Tcontrol$, $i=1,...,N$,}
\end{align*} 
where by construction for a specific time $t$ and unit $i$ we only observe either $Y_{it}^{(1)}$ or $Y_{it}^{(0)}$, but not both. Average treatment effects can be estimated by an average over time or the cross-section of the individual treatment effects. We assume that the data has a factor structure which results in a model of the form
\begin{align}
	Y_{it} = \tau_{it} D_{it} + \Lambda_i^{\top} F_t + e_{it}, \label{eqn:treatment}
\end{align}
where $D_{it}=1$ is a treatment indicator. Note that this model is very general and captures many relevant models as special cases. The factor structure includes interactive fixed effects as in Bai (2009). Simple time- and cross-sectional fixed effects are a special case for constant loadings respectively factors. The factors can be either observed covariates or latent factors. One of the main challenges in estimating a treatment effect is to control for all relevant covariates. Failure in doing so results in an omitted variable bias in the treatment effect estimation as discussed among others in \cite{gobillon2016regional}. The strength of our latent factor model is that we can avoid this problem by automatically including all relevant covariates in a data driven way. Note that our latent factor model can also account for some uncertainty in the functional form of the dependency on the factors. For example, if $Y_{it}$ is a polynomial function of a factor, this could be captured by including additional latent factors as described for example in \cite{pelger2018state}. A generalization of Equation \eqref{eqn:treatment} adds additional observed covariates $X_{it} \in \mathbbm R^{K}$ to $Y_{it}$ which yields $Y_{it} = \tau_{it} D_{it} + \Lambda_i^{\top} F_t + X_{it}^{\top} b + e_{it}$. If these observed covariates follow a factor structure $X_{it} = {\Lambda^{X}_i}^{\top} F^X_t + e^X_{it}$, it puts us back into the framework of Equation \eqref{eqn:treatment}. Otherwise it is straightforward to include general observable covariates $X_{it}$ by studying the residual $Y_{it} - X_{it} \hat b$, where $\hat b$ is estimated by a regression on the control group.\footnote{Using the residuals  $Y_{it} - X_{it} \hat b$ for the factor analysis and treatment effect analysis with our method generally adds another covariance term to the asymptotic covariance matrix. This term comes from the regression to obtain $\hat b$ and is straightforward to include. Here we focus on the conceptually more challenging problem of dealing with the unobserved factors.}

We only observe $Y_{it}^{(1)}$ for the treated group and could obtain the counterfactual outcome $Y_{it}^{(0)}$ from the imputed value $\hat Y_{it}^{(0)} =\hat C_{it}^{(0)}$, where $\hat C_{it}^{(0)}$ is the common component estimated only from the untreated control data. This is the same setup as in \cite{bai2019matrix}. Given our asymptotic distribution theory for the common component, we can provide the asymptotic distribution of the individual and average treatment effects analogously to \cite{bai2019matrix}. A shortcoming of estimating the individual treatment effect by $Y_{it}^{(1)} - \hat{Y}_{it}^{(0)}$ is that the observed treated observations $Y_{it}^{(1)}$ contain an idiosyncratic error $e_{it}$. Hence, it is not possible to test for individual treatment effects without imposing very strong additional assumptions on the error. For sufficiently large $T-T_0$, this error component can be averaged out in the average treatment effect.

We impose slightly stronger assumptions on the structure of the treatment effect which will allow us to derive substantially stronger results. Assume that the treatment effect has also a factor structure, that is $\tau_{it}=\left( {\Lambda_i^{\tau}} \right)^{\top} F_t^{\tau}$. In this case we can represent the problem as
\begin{align}
	Y_{it}^{(1)} =\left(  \Lambda_{i}^{(1)}\right)^{\top} F_t^{(1)} + e_{it} \qquad Y_{it}^{(0)} =\left(  \Lambda_{i}^{(0)} \right)^{\top}F_t^{(0)} + e_{it}, \label{eqn:treatmentfactor}
\end{align}
where the factor structure subsumes the treatment effect. Hence, the individual treatment effect is equivalent to the difference in the common components between the treated and control:
\begin{align*}
	\tau_{it}= Y_{it}^{(1)} - Y_{it}^{(0)}=C_{it}^{(1)} -C_{it}^{(0)} \qquad \text{for $t > \Tcontrol$, $i=1,...,N$.}
\end{align*} 
Fundamentally, we are testing if the treatment changes the underlying factor structure. Hence, we can test if the treatment changes interactive fixed effects. This is a very general setup that allows for time and cross-sectional heterogeneity in the treatment effect, while the treatment itself can depend on the latent cross-sectional covariates modeled by the loadings.

In the following we consider three different treatment effects:
\begin{enumerate}[wide=0pt, widest=99,leftmargin=\parindent, labelsep=*]
	\item Individual treatment effect: $\tau_{it} =  C_{it}^\treat - C_{it}^\control$
	\item Average treatment effect over time: $ \tau_i=\frac{1}{\Ttreat} \sum_{t=\Tcontrol+1}^{T} \tau_{it}$
	\item Weighted average treatment effect: $\tau_{\beta,i}= \beta_i^\treat-\beta_{i}^\control$ where $\beta_i$ are the regression coefficients on some covariates $Z$:
	\[\beta_i^\control = (Z^\T Z)^\I Z^\T C^\control_{i,(\Tcontrol+1):T} \quad \text{ and } \quad \beta_i^\treat = (Z^\T Z)^\I Z^\T C^\treat_{i,(\Tcontrol+1):T}.  \]
\end{enumerate}
Here, $C^\control_{i,(\Tcontrol+1):T}  = \begin{bmatrix}C_{i, \Tcontrol+1}^\control & \cdots & C_{iT}^\control \end{bmatrix}^\T \in \+R^{\Ttreat}$ denotes the observations for $t >\Tcontrol$. The weighted average treatment effect $\tau_{\beta,i}$ generalizes the average treatment effect $\tau_i$, which is a special case for $Z = \vec{1}$. Both tests for the individual treatment effect and the weighted average treatment effect cannot be obtained with conventional estimators, but are important to answer economic questions. For example, in our companion paper \cite{pelgerxiongpublication2020}, we test if pricing anomalies of investment strategies as measured by their pricing errors persist after these strategies have been published in academic journals. In this problem the treatment is the publication of an investment strategy in a journal and the treatment effect is measured by a change in regression coefficients. More specifically, the pricing error corresponds to the intercept in a regression of the excess returns of the strategies on a set of benchmark risk factors. A simple average treatment effect would not be sufficient to study this question.

For each of the three treatment effects we derive the asymptotic distribution under the null-hypothesis of no effect, which allows us to run one-sided or two-sided hypothesis tests. For example, the two-sided hypothesis test for the weighted average treatment effect takes the form
\begin{eqnarray}\label{hypo:average}
	\mathcal{H}_0:  \tau_{\beta,i} = 0,  \qquad  \mathcal{H}_1:  \tau_{\beta,i} \neq 0. 
\end{eqnarray}
This is the hypothesis we test in our simulation and the empirical companion paper. The problem formulated in Equation \eqref{eqn:treatmentfactor} can be solved by applying our latent factor model estimation twice: First, we estimate $C_{it}^{(1)}$ from the treated data with the control observations as missing values. Second, we estimate $C_{it}^{(0)} $ from the control data, while the treated observations are viewed as missing. The inferential theory follows readily from Theorems \ref{theorem:asy-normal-equal-weight} and \ref{theorem:asy-normal}. The asymptotic variance for the individual treatment effect $\tau_{it}$ is the sum of the asymptotic variances of $\hat C_{it}^{(0)} $ and $\hat C_{it}^{(1)} $ and a covariance term based on the correction terms for the control and treated. While the calculations are tedious, they are a direct consequence of the distribution results that we have derived. The average treatment effects follow then from the results of the individual treatment effects. In this section, we want to focus on a special case, which we consider the most relevant from a practical perspective.

In most causal inference applications, such as the empirical study in our companion paper and \cite{abadie2010synthetic,abadie2015comparative}, the majority of observations are control observations. Hence, it might be infeasible to estimate a latent factor model only from the treated data as required in Equation \eqref{eqn:treatmentfactor}. For example in the simultaneous treatment case in Table \ref{tab:toyexample}, we can estimate a latent factor for the control, but not for the treated. Hence, we impose the additional assumption that the control and treated panel share the same underlying factors, while the loadings can be different, that is,
\begin{align}
	Y^\control_{it} = (\Lambda_i^\control)^\T F_t + e_{it},  \qquad  Y^\treat_{it} = (\Lambda_i^\treat)^\T F_t + e_{it} . \label{eqn:treatmentloading}
\end{align}
This implies that the treatment can only affect the loadings. This is still a very general setup as the loadings and factors are latent. For example, a model based on Equation \eqref{eqn:treatmentfactor} with one factor that changes on the treated data, can be captured in Equation \eqref{eqn:treatmentloading} by a two-factor model where the corresponding loadings change on the treated data. 

First, we estimate the factor model from the incomplete control panel $Y^\control$ and obtain $\tilde C^{(0)}=( \tilde \Lambda^{(0)})^{\top} \tilde F$. Second, we use an ordinary least squares regression to estimate the loadings for the treated $\tilde \Lambda_i^\treat$, 
\begin{align}\label{eqn:reg-est-loading}
	\textstyle \tilde{\Lambda}_i^\treat = \Lp \sum_{t = \Tcontrol+1}^{T} \tilde F_t \tilde F_t^\T \Rp^\I \sum_{t = \Tcontrol+1}^{T} \tilde F_t Y^\treat_{it}, 
\end{align}
which yields an estimate for the common components for the treated panel $\tilde{C}^\treat_{it} = (\tilde{\Lambda}_i^\treat)^\T \tilde{F}_t.$\footnote{If units switch between treatment and control, we can modify Equation (\ref{eqn:reg-est-loading}) to $\tilde \Lambda_i^{\treat} = \Lp \sum_{t \in \tls_i} \tilde F_t \tilde F_t^\T \Rp^\I \sum_{t \in  \tls_i} \tilde F_t Y^{\treat}_{it}$, where $\tls_i$ it the set of indices for the treated observations. } 

The following theorem shows the asymptotic distributions for $\tilde{C}^\treat_{it}$, the individual treatment effect, and the weighted average treatment effect. The asymptotic distributions allow us to construct test statistics for various treatment effects.

	\begin{theorem}\label{theorem:ate-same-factor}
		Suppose Assumptions \ref{ass:obs-equal-weight}, \ref{ass:factor-model}, \ref{ass:mom-clt} and \ref{ass:add-factor} hold and the control and treated panel share the same factors. For $\delta_{NT_i}= \min(N, \Ttreat) $, as  $\delta_{NT_i} \rightarrow \infty$ the following holds:
		\begin{enumerate}[wide=0pt, widest=99,leftmargin=\parindent, labelsep=*]
			\item The asymptotic distribution for the common component is 
			\begin{align}
				\nonumber & \sqrt{\delta_{NT_i}} (\Sigma^{(1)}_{C,it})^{-1/2} (\tilde{C}^\treat_{it} - C^\treat_{it} ) \xrightarrow{d} \calN (0,1),
			\end{align}
			where 
			\begin{align*}
				\Sigma^{(1)}_{C,it} =& F_t^\T \Sigma_{F}^\I \bigg[  \frac{\delta_{NT_i}}{\Ttreat} \covItreat_{\Lambda,i} + \frac{\delta_{NT_i}}{T} \covIItreat_{\Lambda,i} \bigg]   \Sigma_{F}^\I  F_t + (\Lambda_i^\treat )^\T \Sigma_{\Lambda, t}^\I \bigg[ \frac{\delta_{NT_i}}{N} \covIcontrol_{F,t} + \frac{\delta_{NT_i}}{T} \covIIcontrol_{F,t}  \bigg] \Sigma_{\Lambda, t}^\I \Lambda_i^\treat \\
				& \quad  - 2 \cdot \frac{\delta_{NT_i}}{T} F_t^\T \Sigma_{F}^\I \covIIItreatcontrol_{\Lambda,F,i,t}  \Sigma_{\Lambda, t}^\I \Lambda_i^\treat, 
			\end{align*}
			with $ \covIcontrol_{F,t}$ and $ \covIIcontrol_{F,t}$ given in Theorem \ref{theorem:asy-normal-equal-weight}, $\covItreat_{\Lambda,i} = \Sigma_{F,e_i} $, \\ $\covIItreat_{\Lambda,i}  = \Sigma_{\Lambda}^\I \Big[ \frac{1}{\Ttreat^2} \sum_{u,s = \Tcontrol+1}^T   g_{u,s}(\Sigma_{\Lambda,u}^\I \Lambda_i^\treat,\Sigma_{\Lambda,s}^\I \Lambda_i^\treat)  \Big] \Sigma_{\Lambda}^\I   $, \\ 
			$\covIIItreatcontrol_{\Lambda,F,i,t}  = \Sigma_{\Lambda}^\I \Big[ \frac{1}{\Ttreat} \sum_{u = \Tcontrol+1}^T    g_{u,s}(\Sigma_{\Lambda,u}^\I \Lambda_i^\treat,\Sigma_\Lambda^\I  \Sigma_F^\I F_t) \Big]$, and the function $g_{u,s}(\cdot,\cdot)$ is defined in Assumption \ref{ass:add-factor}.\\
			\item The asymptotic distribution for the individual treatment effect is
			\begin{align}
				\sqrt{\delta_{NT_i}} (\Sigma_{\tau,it})^{-1/2}  \Big((\tilde{C}^\treat_{it} - C^\treat_{it} ) - (\tilde{C}^\control_{it} - C^\control_{it} )\Big) \xrightarrow{d} \calN (0,1)\label{eqn:ite-asy-normal}
			\end{align}
			where 
			\begin{align*}
				\Sigma_{\tau,it} =& F_t^\T \Sigma_{F}^\I  \Gamma_{\Lambda,i}^{\textnormal{obs,miss}}   \Sigma_{F}^\I  F_t + \left(\Lambda_i^\treat - \Lambda_i^\control \right)^\T  \Gamma_{F,t}^{\textnormal{obs,miss}}  \left(\Lambda_i^\treat - \Lambda_i^\control \right) \\&
				+ 2 \cdot  F_t^\T \Sigma_{F}^\I  \Gamma_{\Lambda,F,i,t}^{\textnormal{miss,cov,diff}}  \left(\Lambda_i^\treat - \Lambda_i^\control \right) 
			\end{align*}
			with 
			$\covIcontrol_{\Lambda,i} $, $\covIIcontrol_{\Lambda,i}$ and $\covIIIcontrolcontrol_{\Lambda, F, i,t}$ given in Theorem \ref{theorem:asy-normal-equal-weight}, and $ \Gamma_{F,t}^{\textnormal{obs,miss}}= \Sigma_{\Lambda, t}^\I \bigg[ \frac{\delta_{NT_i}}{N} \covIcontrol_{F,t} + \frac{\delta_{NT_i}}{T} \covIIcontrol_{F,t}  \bigg]  \Sigma_{\Lambda, t}^\I $, \\
			$ \Gamma_{\Lambda,i}^{\textnormal{obs,miss}} =  \frac{\delta_{NT_i}}{T} \Sigma_{\Lambda}^\I \big[\covIcontrol_{\Lambda,i}   + \covIIcontrol_{\Lambda,i} \big]   \Sigma_{\Lambda}^\I +  \frac{\delta_{NT_i}}{\Ttreat} \covItreat_{\Lambda,i} + \frac{\delta_{NT_i}}{T} \covIItreat_{\Lambda,i}  -  \frac{\delta_{NT_i}}{ T} \big( \covIIItreatcontrol_{\Lambda,\Lambda,i}  + (\covIIItreatcontrol_{\Lambda,\Lambda,i})^\T\big)$, \\   	
			$\Gamma_{\Lambda,F,i,t}^{\textnormal{miss,cov,diff}} = \frac{\delta_{NT_i}}{T} \left( \Sigma_{\Lambda}^\I   \covIIIcontrolcontrol_{\Lambda,F,i,t} -  \covIIItreatcontrol_{\Lambda,F,i,t} \right)\Sigma_{\Lambda, t}^\I $,\\
			$\covIIItreatcontrol_{\Lambda,\Lambda,i} = \Sigma_{\Lambda}^\I \Big[ \frac{\delta_{NT_i}}{\Ttreat } \sum_{s = \Tcontrol+1}^T g^\cov_{i,s}(\Lambda_i^\control,\Sigma_{\Lambda,s}^\I \Lambda_i^\treat )  \Big] \Sigma_{\Lambda}^\I    $, and the function
			and the function $g^\cov_{i,s}(\cdot,\cdot)$ is defined in Assumption \ref{ass:mom-clt}.\ref{ass:asy-normal-add-term-thm-loading}.
			\\
			\item 	If, in addition, it holds for $Z \in \+R^{\Ttreat \times L }$, $Z^\T Z/\Ttreat \xrightarrow{P} \Sigma_{Z}$ and $\frac{1}{\Ttreat} \sum_{t = \Tcontrol+1}^T Z_t F_t^\T \xrightarrow{P} \Sigma_{F,Z}$, then the asymptotic distribution for the weighted average treatment effect is
			\begin{align}
				\sqrt{\delta_{NT_i}} (\Sigma_{\beta,t})^{-1/2} \Big(( \tilde \beta_i^\treat - \beta_i^\treat) - ( \tilde \beta_i^\control - \beta_i^\control) \Big) \xrightarrow{d} \calN(0,1) 	\label{eqn:ate-asy-normal}
			\end{align}
			with 
			\begin{align*}
				\Sigma_{\beta,t} =& \Sigma_Z^\I  \Sigma_{F,Z} \Sigma_{F}^\I  \Gamma_{\Lambda,i}^{\textnormal{obs,miss}} \Sigma_{F}^\I  \Sigma_{F,Z}^\T \Sigma_Z^\I + \Sigma_Z^\I \covIItreatcontrol_{Z,i}  \Sigma_Z^\I  \\		
				&+  \frac{\delta_{NT_i}}{ T} \Sigma_Z^\I \bigg[ \Sigma_{F,Z} \Sigma_{F}^\I \Sigma_{\Lambda}^\I   \covIIcontroltreatcontrol_{\Lambda,Z,i} +   (\covIIcontroltreatcontrol_{\Lambda,Z,i})^\T \cdot \Sigma_{\Lambda}^\I  \Sigma_{F}^\I \Sigma_{F,Z}^\T \bigg] \Sigma_Z^\I \\
				& - \frac{\delta_{NT_i}}{ T}\Sigma_Z^\I   \bigg[\Sigma_{F,Z} \Sigma_{F}^\I\cdot  \covIItreattreatcontrol_{\Lambda,Z,i} +   (\covIItreattreatcontrol_{\Lambda,Z,i})^\T \cdot \Sigma_{F}^\I  \Sigma_{F,Z}^\T \bigg] \Sigma_Z^\I  ,
			\end{align*}
			with 
			$\covIIcontroltreatcontrol_{\Lambda,Z,i} = \Big[ \frac{1}{\Ttreat} \sum_{s = \Tcontrol+1}^T  g^\cov_{i,s}(\Lambda_i^\control,\Sigma_{\Lambda, s}^\I (\Lambda_i^\treat - \Lambda_i^\control) ) \Big] \Sigma_{\Lambda}^\I  \Sigma_{F}^\I \Sigma_{F,Z}$,  \\
			$\covIItreattreatcontrol_{\Lambda,Z,i} = \Sigma_{\Lambda}^\I \Big[ \frac{1}{\Ttreat^2}  \sum_{u,s = \Tcontrol+1}^T g_{u,s}(\Sigma_{\Lambda,u}^\I \Lambda_i^\treat,\Sigma_{\Lambda, s}^\I (\Lambda_i^\treat - \Lambda_i^\control))    \Big] \Sigma_{\Lambda}^\I  \Sigma_{F}^\I \Sigma_{F,Z}$, \\ and
			$\covIItreatcontrol_{\Lambda,Z,i} = \Sigma_{F,Z}^\T \Sigma_{F}^\I  \Sigma_{\Lambda}^\I \Big[ \frac{1}{\Ttreat^2} \sum_{u,s = \Tcontrol+1}^T g_{u,s}(\Sigma_{\Lambda, u}^\I (\Lambda_i^\treat - \Lambda_i^\control),\Sigma_{\Lambda, s}^\I (\Lambda_i^\treat - \Lambda_i^\control) )    \Big] \Sigma_{\Lambda}^\I  \Sigma_{F}^\I \Sigma_{F,Z}$, $g^\cov_{i,s}(\cdot,\cdot)$ and the functions $g_{u,s}(\cdot,\cdot)$ are defined in Assumptions \ref{ass:mom-clt}.\ref{ass:asy-normal-add-term-thm-loading} and \ref{ass:add-factor}. 
		\end{enumerate}
		Suppose  Assumptions \ref{ass:obs-equal-weight}, \ref{ass:obs}, \ref{ass:factor-model}, \ref{ass:factor-model-conditional}, \ref{ass:mom-clt-conditional} and \ref{ass:add-factor-conditional} hold. The above three results hold for the propensity weighted estimator after replacing $\Gamma_{F,t}^{\textnormal{obs}}$, $\Gamma_{\Lambda,i}^{\textnormal{obs}}$, $\Gamma_{F,t}^{\textnormal{miss}}$ , $\Gamma_{\Lambda,i}^{\textnormal{miss}}$, $\Sigma_{\Lambda,t}$, $g^\cov_{i,s}(\cdot,\cdot)$ and $g_{u,s}(\cdot,\cdot)$ with $\Gamma_{F,t}^{\textnormal{obs},S}$, $\Gamma_{\Lambda,i}^{\textnormal{obs},S}$, $\Gamma_{F,t}^{\textnormal{miss},S}$ , $\Gamma_{\Lambda,i}^{\textnormal{miss},S}$, $\Sigma_{\Lambda}$, $g^{\cov,S}_{i,s}(\cdot,\cdot)$ and $g^S_{u,s}(\cdot,\cdot)$.    
	\end{theorem}

The results of Theorem \ref{theorem:ate-same-factor} are a consequence of Theorems \ref{theorem:asy-normal-equal-weight} and \ref{theorem:asy-normal}. The challenge arises from correctly capturing the asymptotic covariance between the estimated treated and control common components. This additional covariance term is due to the correction terms from the missing observations. In Theorem \ref{theorem:ate-same-factor}, we impose the additional Assumption \ref{ass:add-factor} for the general estimator and Assumption \ref{ass:add-factor-conditional} for the probability-weighted estimator. Both simply state that the conventional central limit theorems based on the weak dependencies in the errors apply to the subset of treated time periods. These conditions are automatically satisfied in our simplified model and thus can be neglected, as stated in Proposition \ref{prop:simple-assump-imply-general-assump} in the Appendix.

\section{Feasible Estimation and Testing}\label{sec:feasible}

Theorems \ref{theorem:asy-normal-equal-weight}, \ref{theorem:asy-normal} and \ref{theorem:ate-same-factor} are formulated with respect to the asymptotic covariances based on the population model. In order to use them in practice we need feasible estimators for the covariance terms. We propose to use the plug-in estimators $\tilde F_t$, $\tilde{\Lambda}_i$ and $\tilde e_{it} = Y_{it} - \tilde{\Lambda}_i^\T \tilde F_t$ for $(H^\I)^\T F_t $, $H \Lambda_i$ and $e_{it}$. All moments are based on these three objects. For example $\hat \Sigma_F:= \frac{1}{T} \tilde F^{\top} \tilde F$ consistently estimates $(H^\I)^\T  \Sigma_F (H^\I)$. The rotation matrix $H$ can be ignored in the estimated covariances of the common components and the treatment effects as it cancels out. It is only the distribution of the loadings and factors that are estimated up to a rotation matrix. The challenge is to deal with the time and cross-sectional dependency in the residuals. We impose the additional assumption that the time-series and cross-section covariance matrices of the errors $e_{it}$ are sparse in the sense that only a finite number of row elements are non-zero and we know the indices of the non-zero elements. More specifically we define
\begin{align*}
	\mathcal{E}_t = \{ i,j : \+E[e_{it}e_{jt}] \neq 0 \} \qquad \mathcal{E}= \{ i,j,s,t : \+E[e_{it}e_{js}] \neq 0 \} 
\end{align*}
and assume that $|\mathcal{E}_t |= O(N)$ and $| \mathcal{E} | = O(NT)$.
The estimator for $H \covI_{\Lambda,j} H^\T$ and $H \covI_{F,t} H^\T$ depend on the dependency structure in the residuals and we propose the plug-in estimator based on only the non-zero moments of the residuals: 
\begin{align*}
	\widehat \covI_{F,t} &= \frac{1}{N} \sum_{i=1}^N \sum_{j=1}^N W_{it} W_{jt} \tilde \Lambda_i \tilde \Lambda_j^{\top} \tilde e_{it} \tilde e_{jt} \mathbbm 1_{\{i,j \in \mathcal{E}_t \}} \\
	\widehat \covI_{\Lambda,j} &= \frac{T}{N^2} \sum_{i=1}^N \sum_{k=1}^N \tilde \Lambda_i \tilde \Lambda_i^{\top} \frac{1}{|\tlq_{ij}|  |\tlq_{kj}| }  \sum_{t,s \in \tlq_{ij}}  \tilde F_t \tilde F_s^{\top}   \tilde \Lambda_k \tilde \Lambda_k^{\top}  \tilde e_{it} \tilde e_{ks}  \mathbbm 1_{\{i,k,s,t \in \mathcal{E} \}.}
\end{align*}

The estimators are analogous for the probability-weighted estimator. A special case is the estimation approach in \cite{bai2003inferential} that assumes independence of the residuals over time and the cross-section and hence only uses the diagonal entries of the residual covariance and autocovariance matrix. Instead of assuming knowledge of the non-zero entries, it is possible to generalize the estimator similar to \cite{fan2013large} and estimate the non-zero entries with a thresholding estimation approach. We propose a HAC estimator for $\covII_{\Lambda,j}$, $\covII_{F,t}$ and $\covIII_{\Lambda, F, j, t}$
	to account for the time-series dependency in the factors similar to \cite{bai2003inferential}. 




\begin{proposition}\label{prop:cov}
	Suppose that the assumptions of Theorems \ref{theorem:asy-normal-equal-weight}, \ref{theorem:asy-normal} or \ref{theorem:ate-same-factor} hold. In addition, we assume that the time-series and cross-section covariance matrices of the errors $e_{it}$ are sparse in the sense that $|\mathcal{E}_t |= O(N)$ and $| \mathcal{E} | = O(NT)$ and we know the non-zero elements. Then, the plug-in estimators of the asymptotic covariances in Theorems \ref{theorem:asy-normal-equal-weight}, \ref{theorem:asy-normal} and \ref{theorem:ate-same-factor} are consistent and the asymptotic statements in the respective theorems continue to hold with the estimated covariance matrices. 
\end{proposition}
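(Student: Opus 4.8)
The plan is to show that every building block appearing in the asymptotic covariances of Theorems \ref{theorem:asy-normal-equal-weight}, \ref{theorem:asy-normal} and \ref{theorem:ate-same-factor} is consistently estimated by its plug-in analog, and then to conclude by the continuous mapping theorem together with Slutsky's lemma. Each asymptotic covariance is a continuous function---formed from products and inverses of positive-definite matrices---of a finite collection of moments: $\Sigma_F$, $\Sigma_\Lambda$, $\Sigma_{\Lambda,t}$, $\Xi_F$, the core terms $\covI_{F,t}$ and $\covI_{\Lambda,j}$, and the HAC objects $\Phi_j$, $\mathbf{\Phi}^{\cov}_{j,t}$ and $\mathbf{\Phi}_t$. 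First I would record that the simple moment estimators---e.g. $\wh\Sigma_F = \frac{1}{T}\tilde F^\T \tilde F$, $\frac{1}{N}\tilde\Lambda^\T\tilde\Lambda$, and the reweighted cross-sectional averages estimating $\Sigma_{\Lambda,t}$---converge to their population counterparts up to the rotation $H$, which follows directly from Theorem \ref{thm:consistency-same-H} and its factor analog. Since $H$ enters the covariances of the common components and treatment effects only through cancelling pairs $H^\T(H^\I)^\T = I_r$, these rotations drop out of the final expressions, so it suffices to prove consistency of the rotated moments.

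The heart of the argument is the consistency of $\wh\covI_{F,t}$ and $\wh\covI_{\Lambda,j}$, both built from the residuals $\tilde e_{it} = Y_{it} - \tilde\Lambda_i^\T\tilde F_t$. I would expand $\tilde e_{it} = e_{it} - (\tilde\Lambda_i - H\Lambda_i)^\T (H^\I)^\T F_t - (H\Lambda_i)^\T\bigl(\tilde F_t - (H^\I)^\T F_t\bigr) + O_P(\text{second order})$ and substitute into the sums. The leading term reproduces the target with true errors, and the remaining terms must be shown negligible. Here the sparsity assumptions are essential: the indicators $\mathbbm{1}_{\{i,j\in\mathcal{E}_t\}}$ and $\mathbbm{1}_{\{i,k,s,t\in\mathcal{E}\}}$ restrict the sums to $O(N)$ and $O(NT)$ nonzero summands, so that the normalizations $\frac{1}{N}$ and $\frac{T}{N^2}$ produce genuine averages. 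I would then apply a law of large numbers to the true-error leading term, using the moment bounds in Assumptions \ref{ass:factor-model} and \ref{ass:mom-clt}, and bound each cross term by the product of a consistency rate from Theorem \ref{thm:consistency-same-H} (and the factor expansion in Theorem \ref{theorem:asy-normal-equal-weight}) with a bounded average of $F$, $\Lambda$ and $e$ moments.

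For the HAC estimators of $\Phi_j$, $\mathbf{\Phi}^{\cov}_{j,t}$ and $\mathbf{\Phi}_t$, which capture the time-series dependence of the factors entering the variance-correction terms, I would appeal to standard Newey--West type consistency results under the weak dependence of $F_t$ assumed in Assumption \ref{ass:mom-clt}, exactly as in \cite{bai2003inferential}. The kernel-weighted sums of $\tilde F_t\tilde F_s^\T$ products---reweighted by the $|\tlq_{ij}|$ denominators to respect the missing pattern---converge provided the bandwidth grows slowly enough relative to $T$, and the substitution of $\tilde F_t$ for $(H^\I)^\T F_t$ is handled by the same expansion and rate bounds as above. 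Assembling the pieces, consistency of each block and the continuity of the covariance maps yields consistency of the full plug-in covariances, and Slutsky's lemma then gives the stated asymptotic results with the estimated covariances.

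The main obstacle is controlling the substitution error in the residual-based sums. Theorem \ref{thm:consistency-same-H} delivers only an average $L_2$ rate for $\frac{1}{N}\sum_j \norm{\tilde\Lambda_j - H\Lambda_j}^2$, whereas the cross terms in $\wh\covI_{\Lambda,j}$---in particular the quadruple sum over $(i,k,s,t)$---require control that is uniform in the unit and time indices picked out by the sparse index sets. I would therefore first establish uniform bounds of the form $\max_i \norm{\tilde\Lambda_i - H\Lambda_i} = o_P(1)$ and $\max_t\norm{\tilde F_t - (H^\I)^\T F_t} = o_P(1)$, leveraging the tail conditions on the loadings and errors together with a maximal-inequality argument; with these in hand, the products of estimation errors are dominated termwise and the sparsity keeps the number of surviving terms $O(NT)$. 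This uniform-consistency step, rather than the subsequent law-of-large-numbers and HAC arguments, is the technically demanding part of the proof.
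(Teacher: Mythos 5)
Your proposal follows the same overall architecture as the paper's proof: plug-in estimation of each moment block, cancellation of the rotation matrices in the covariances of the common components and treatment effects, and Slutsky's theorem to transfer the asymptotic statements to the estimated covariances. The paper's Lemmas \ref{lemma:Lambda-asy-var-main-estimator}--\ref{lemma:F-var-corrector} do exactly this, so in spirit you are aligned. Two differences in execution are worth noting. First, for the variance-correction terms $\covII_{\Lambda,j}$, $\covII_{F,t}$ (and their conditional analogs), the paper's appendix does \emph{not} invoke any HAC/Newey--West theory: it proves consistency only under the simplified model of Corollaries \ref{corollary:asy-normal-equal-weight} and \ref{corollary:asy-normal}, where $\Phi_j$ and $\mathbf{\Phi}_t$ collapse to explicit products of $\Sigma_\Lambda$, $\Sigma_{\Lambda,t}$ and $\Xi_F$, so that ordinary moment plug-ins $\hat\Sigma_\Lambda$, $\hat\Sigma_{\Lambda,t}$, $\hat\Xi_F$ suffice (Lemmas \ref{lemma:Lambda-var-corrector} and \ref{lemma:F-var-corrector}); the HAC estimator you describe is only asserted in the main text by analogy to Bai (2003), not proved. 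Second, the paper never establishes, nor uses, uniform consistency $\max_i\norm{\tilde\Lambda_i - H\Lambda_i} = o_P(1)$ or $\max_t\norm{\tilde F_t - (H^\I)^\T F_t} = o_P(1)$: the covariances are for a \emph{fixed} $(j,t)$, and the substitution error in the residual-based sums is controlled by pointwise consistency of $\tilde e_{it}$ on observed entries together with the average $L_2$ rate of Theorem \ref{thm:consistency-same-H}.

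On that last point, your plan is where I would push back. Under the sparsity assumption, the sums $\frac{1}{N}\sum_{i,j}W_{it}W_{jt}\tilde\Lambda_i\tilde\Lambda_j^\T\tilde e_{it}\tilde e_{jt}\mathbbm 1_{\{i,j\in\mathcal E_t\}}$ and the quadruple sum in $\widehat\covI_{\Lambda,j}$ contain only $O(N)$ and $O(NT)$ nonzero terms, with each unit appearing in a bounded number of pairs; the cross terms created by the expansion of $\tilde e_{it}$ can therefore be bounded by Cauchy--Schwarz using only the average rates $\bigl(\frac{1}{N}\sum_i\norm{\tilde\Lambda_i - H\Lambda_i}^2\bigr)^{1/2} = O_P(\delta^{-1/2})$ and bounded second moments of $F$, $\Lambda$, $e$. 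The uniform bounds you propose are stronger than what the sums require, and---more importantly---they are likely not obtainable under the paper's assumptions, which impose only fourth/eighth moment bounds (Assumptions \ref{ass:simple-factor-model}, \ref{ass:factor-model}) rather than the sub-Gaussian tail conditions a maximal-inequality argument of that type typically needs. So your designated "technically demanding part" is a detour: replacing it with the sparsity-plus-Cauchy--Schwarz argument closes the proof within the paper's assumption set, while your route as written would require strengthening the assumptions.
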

Hence, the treatment effects normalized by their estimated standard deviations follow asymptotically a standard normal distribution, and we obtain feasible test statistics for the various treatment effects. 	

%
%
%
%
%
%
%
%
\section{Generalization of the Missing Patterns}\label{sec:generalization}

Our results can be generalized to the case where the number of observed entries is not proportional to $N$ or $T$ but grows at a strictly smaller rate. The general arguments of the proofs stay the same but we need to carefully account for the convergence rates of each term based on the set $\tlq_{ij}$. The mean squared consistency of the estimated loadings in Theorem \ref{thm:consistency-same-H} generalizes to
\[ \frac{1}{N} \sum_{j = 1}^{N} \norm{\tilde{\Lambda_j}  - H \Lambda_j}^2 = O_p \Bigg( \max \bigg( \frac{1}{N}, \,\,  \frac{1}{N^2} \sum_{i = 1}^{N} \sum_{j = 1}^N \frac{1}{|\tlq_{ij} | }  \bigg)\Bigg). \]
Moreover, we can show the asymptotic normality of the estimated loadings $\tilde \Lambda$, factors $\tilde F$ from the equally weighted regression \eqref{eqn:reg-estimate-factors}, and common components $\tilde C$ under similar assumptions as those in Theorem \ref{theorem:asy-normal-equal-weight}. The estimated loadings $\tilde{\Lambda}_j$ are asymptotically normal with convergence rate 
\[ H^\I \tilde{\Lambda}_j -  \Lambda_j = O_p \Bigg( \bigg[\max \bigg(  \frac{1}{N} \sum_{i = 1}^{N} \frac{1}{|\tlq_{ij}|}, \,\,   \frac{1}{N^2} \sum_{i = 1}^{N} \sum_{l=1}^{N} \frac{|\tlq_{ij} \cap \tlq_{lj} |}{|\tlq_{ij}| |\tlq_{lj}| } \bigg) \bigg]^{1/2} \Bigg),  \]
where the second term is closely related to $\omega_{jj}$ defined in Assumption \ref{ass:simple-moment}.
The estimated factors $\tilde{F}_t$ are asymptotically normal with convergence rate
\[H^\T \tilde{F}_t - F_t =   O_p \Bigg( \bigg[\max \bigg(  \frac{1}{\sum_{i = 1}^{N} W_{it}}, \,\,   \frac{1}{N^4} \sum_{i = 1}^{N} \sum_{j=1}^{N} \sum_{k = 1}^{N} \sum_{l = 1}^{N} \frac{|\tlq_{ij} \cap \tlq_{kl} |}{|\tlq_{ij}| |\tlq_{kl}| } \bigg) \bigg]^{1/2} \Bigg),   \]
where the second term is closely related to $\omega$ defined in Assumption \ref{ass:simple-moment}.
Similarly, by combining the rates of estimated factors and loadings, the estimated common components $\tilde{C}_{it}$  have an asymptotic normal distribution with rate
\begin{align*}
	\tilde{C}_{jt} - C_{jt}  =& O_p \Bigg( \bigg[\max \bigg(  \frac{1}{N} \sum_{i = 1}^{N} \frac{1}{|\tlq_{ij}|}, \,\,   \frac{1}{N^2} \sum_{i = 1}^{N} \sum_{l=1}^{N} \frac{|\tlq_{ij} \cap \tlq_{lj} |}{|\tlq_{ij}| |\tlq_{lj}| }, \,\, \frac{1}{\sum_{i = 1}^{N} W_{it}}, \\
	& \quad    \frac{1}{N^4} \sum_{i = 1}^{N} \sum_{j=1}^{N} \sum_{k = 1}^{N} \sum_{l = 1}^{N} \frac{|\tlq_{ij} \cap \tlq_{kl} |}{|\tlq_{ij}| |\tlq_{kl}| }, \,\,    \frac{1}{N^3} \sum_{i = 1}^{N} \sum_{k = 1}^{N} \sum_{l = 1}^{N} \frac{|\tlq_{ij} \cap \tlq_{kl} |}{|\tlq_{ij}| |\tlq_{kl}| }\bigg) \bigg]^{1/2} \Bigg). 
\end{align*}
The last term is closely related to $\omega_j$ defined in Assumption \ref{ass:simple-moment}.
The expression for the asymptotic covariances of the estimators become more complex. The proofs for the consistency and asymptotic normality for the general case, when observed entries are not proportional to $N$ and $T$, are very similar to the proofs of Theorems \ref{thm:consistency-same-H} and \ref{theorem:asy-normal-equal-weight}, but just require carefully keeping track of the convergence rates of each term.\footnote{The proofs are available upon request.}

We illustrate the more general convergence rates in the simultaneous treatment observation pattern in Table \ref{tab:toy-example-obs-pattern}, where we can provide explicit expressions for the different rates. The mean square consistency result of the loadings simplifies to
\[ \frac{1}{N} \sum_{j = 1}^{N} \norm{\tilde{\Lambda}_j  - H \Lambda_j}^2 = O_p \Bigg( \max \bigg( \frac{1}{N}, \,\, \frac{N_0}{N T_0}, \,\, \frac{1}{T} \bigg)\Bigg). \]
We obtain two different convergence rates for the estimated loadings:
\begin{align*}
	H^\I \tilde{\Lambda}_j -  \Lambda_j  = \begin{cases}
		O_P \Lp \frac{1}{\sqrt{T_0} }  \Rp & j \leq N_0 \\
		O_p  \Lp \max \Big( \sqrt{\frac{N_0}{N T_0 }} , \frac{1}{\sqrt{T}} \Big)
		\Rp & j > N_0. 
	\end{cases}
\end{align*}
Similarly, the estimated factors have two different convergence rates depending on which time block we consider:
\begin{align*}
	H^\T \tilde{F}_t - F_t  = \begin{cases}
		O_P \Lp \max \Big(  \frac{1}{\sqrt{N}}, \frac{N_0}{N \sqrt{T_0 }}, \frac{1}{\sqrt{T}}  \Big) \Rp & t \leq T_0 \\
		O_P \Lp \max \Big(\frac{1}{\sqrt{N - N_0}}, \frac{N_0}{N \sqrt{T_0}}, \frac{1}{\sqrt{T}} \Big) \Rp & t > T_0. 
	\end{cases}
\end{align*}
This results in four different convergence rates for each block for the estimated common components:
\begin{align*}
	\tilde{C}_{jt} - C_{jt} = \begin{cases}
		O_P \Lp \max \Big(  \frac{1}{\sqrt{N}}, \frac{1}{\sqrt{T_0}}  \Big) \Rp &  j \leq N_0, \,\,  t \leq T_0 \\
		O_P \Lp \max \Big(  \frac{1}{\sqrt{N}}, \sqrt{\frac{N_0}{N T_0}}, \frac{1}{\sqrt{T}}  \Big) \Rp &  j > N_0, \,\,  t \leq T_0 \\
		O_P \Lp \max \Big(  \frac{1}{\sqrt{N - N_0}}, \frac{1}{\sqrt{T_0}}  \Big) \Rp &  j \leq N_0, \,\,  t > T_0 \\
		O_P \Lp \max \Big(  \frac{1}{\sqrt{N - N_0}}, \sqrt{\frac{N_0}{N T_0}}, \frac{1}{\sqrt{T}}  \Big) \Rp &  j > N_0, \,\,  t > T_0. 
	\end{cases}
\end{align*}
\section{Simulation}\label{sec:simulation}

\subsection{Asymptotic Distributions}\label{subsec:simulation-asymptotics}

In this section, we demonstrate the finite sample properties of our asymptotic results for both the observed entries and the missing entries. We confirm the theoretical distribution results for the estimated factor, loadings, common components, and treatment effects. We generate the data from a one-factor model $X_{it} = \Lambda_i  F_t + e_{it}$, where $F_t \stackrel{\iid}{\sim}  \calN(0,1)$, $\Lambda_i \stackrel{\iid}{\sim}  \calN(0,1)$ and $e_{it} \stackrel{\iid}{\sim}  \calN(0,1)$. The observation pattern depends on unit-specific characteristics $S_i = \mathbbm{1}(\Lambda_i \geq 0 )$, which are a function of the factor loadings. We study two observation patterns which are illustrated in Figures \ref{fig:obs-pattern-illustrate}(a) and \ref{fig:obs-pattern-illustrate}(b):
\begin{enumerate}
	\item {\it Missing at random}: Entries are observed independently with probability 0.75 if $S_i = 1$, and 0.5 if $S_i = 0$.	
	\item {\it Simultaneous treatment adoption}:  Once a unit adopts treatment, it stays treated afterward. For the units with $S_i = 1$, $25\% $ randomly selected units adopt the treatment from time $0.75 \cdot T$ and the remaining $75\% $ units stay in the control group until the end. For the units with $S_i= 0$, $62.5\%$ randomly selected units adopt the treatment from time $0.375\cdot T$ and the remaining $37.5\%$  units stay in the control group until the end.  We model the treated data as missing. 
\end{enumerate}	

\begin{figure}[h!]
	\tcapfig{Histograms of Standardized Loadings, Factors, and Common Components}
	\centering
	\begin{subfigure}{0.53\textwidth}
		\centering
		\includegraphics[width=1\linewidth]{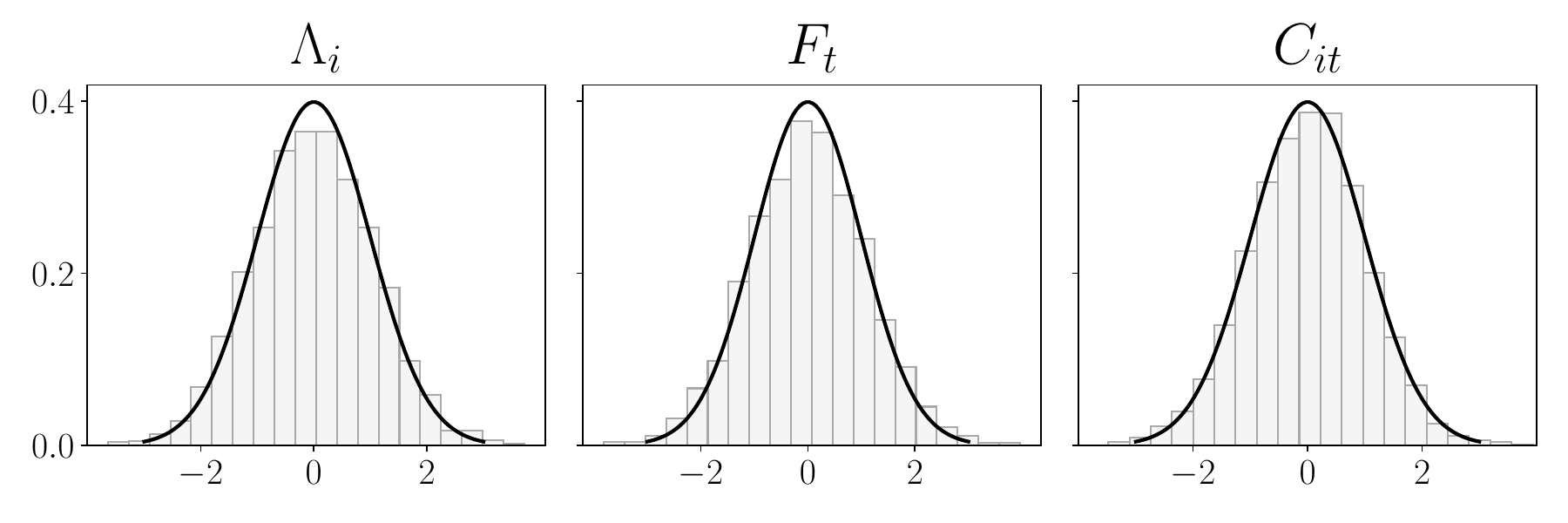}
		\caption{$Y_{it}$ is observed (missing at random)}
		\label{fig:hist-random-obs}
	\end{subfigure}%
	\begin{subfigure}{0.53\textwidth}
		\centering
		\includegraphics[width=1\linewidth]{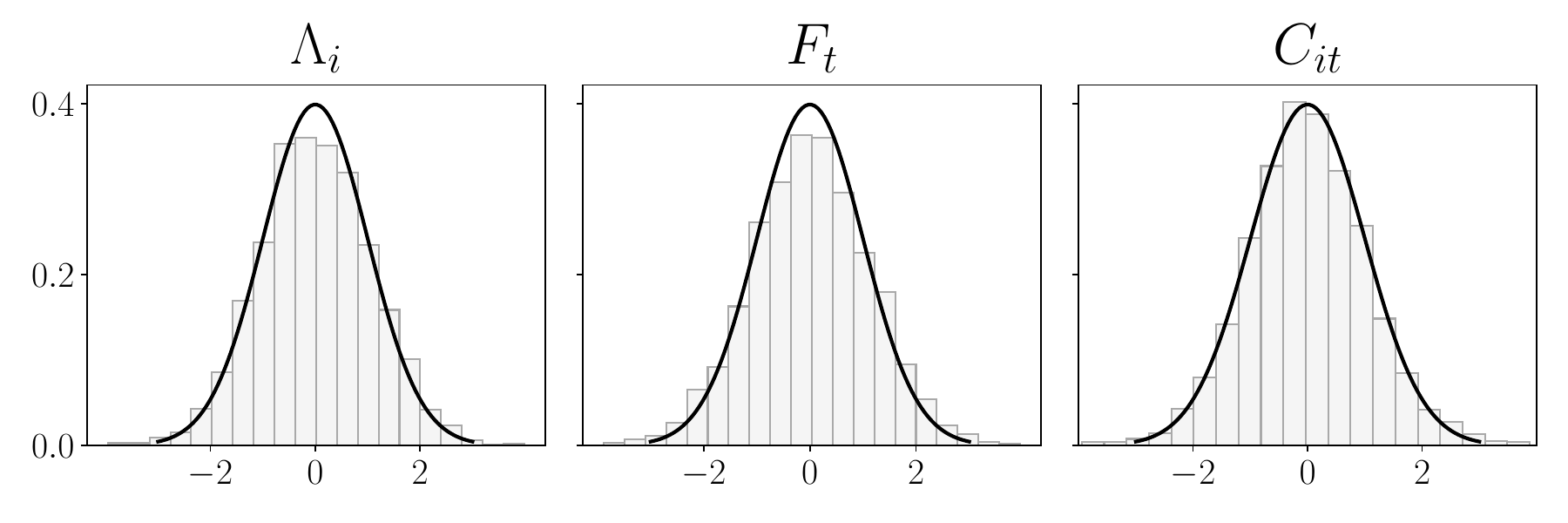}
		\caption{$Y_{it}$ is missing (missing at random)}
		\label{fig:hist-random-miss}
	\end{subfigure}
	\begin{subfigure}{0.53\textwidth}
		\centering
		\includegraphics[width=1\linewidth]{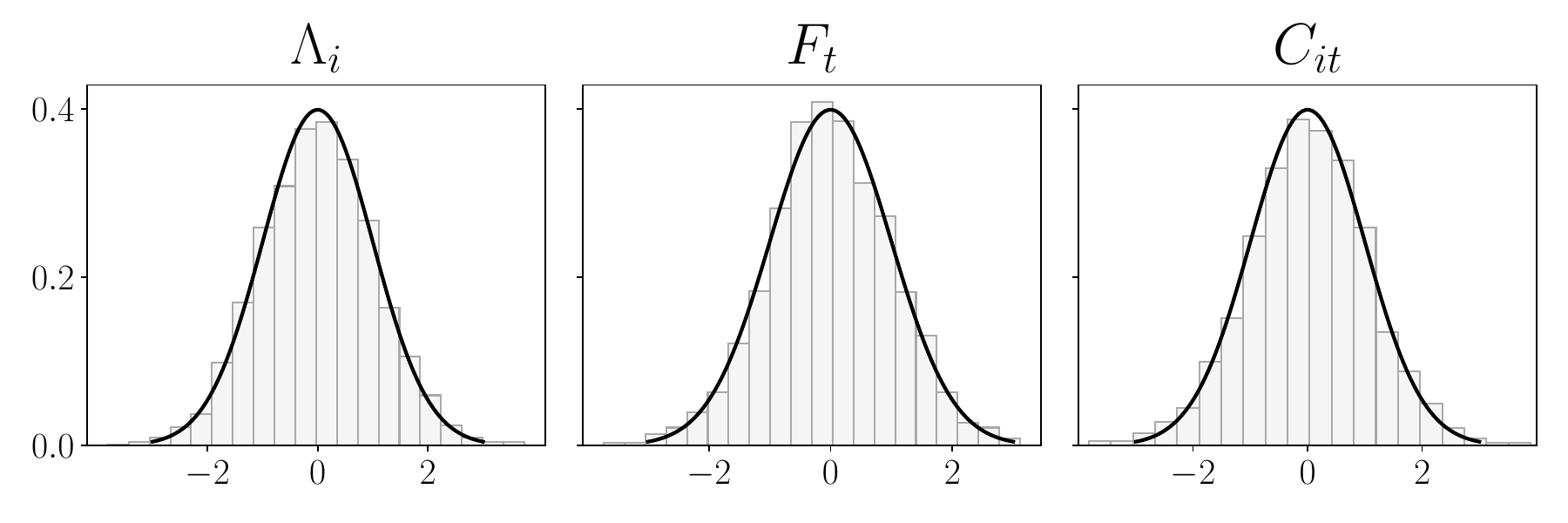}
		\caption{$Y_{it}$ is observed (simultaneous treatment adoption)}
		\label{fig:hist-simul-obs}
	\end{subfigure}%
	\begin{subfigure}{0.53\textwidth}
		\centering
		\includegraphics[width=1\linewidth]{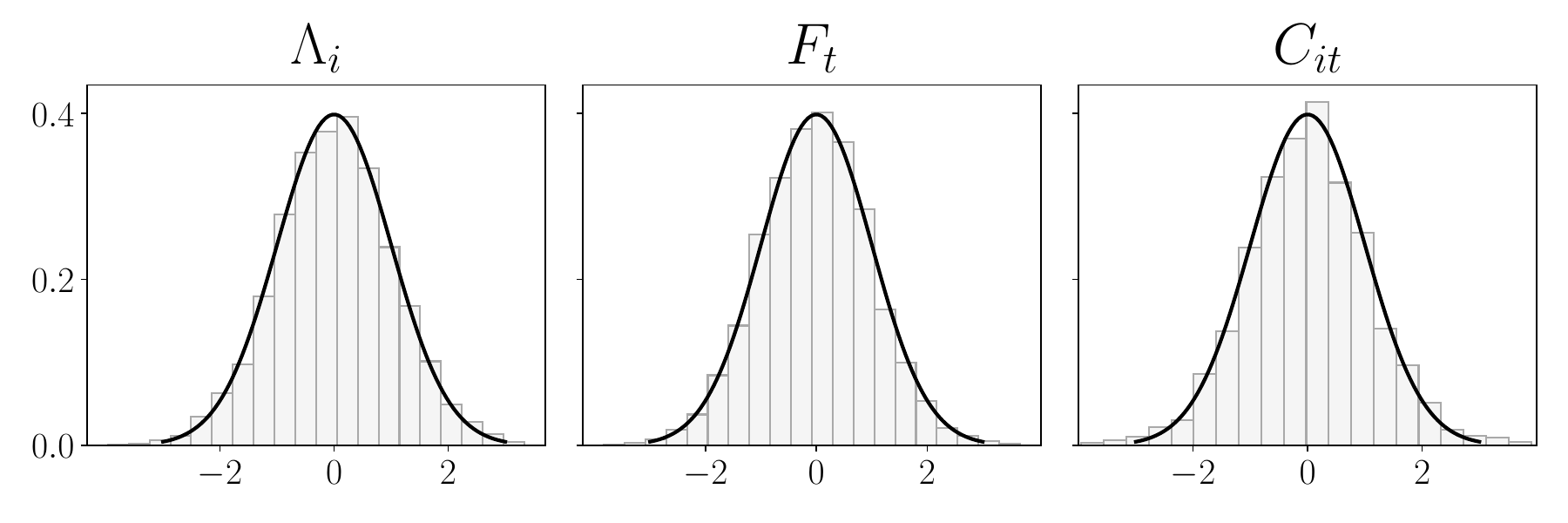}
		\caption{$Y_{it}$ is missing (simultaneous treatment adoption)}
		\label{fig:hist-simul-miss}
	\end{subfigure}
	\label{fig:hist-all}
	\bnotefig{These figures show the histograms of estimated standardized loadings, factors, and common components normalized by their estimated standard devisions, where $N = 100$ and $T = 150$. The normal density function is superimposed on the histograms. The results are based on 2,000 Monte Carlo simulations. The Internet Appendix collects the histograms for other specifications of $N$ and $T$.}
\end{figure}

\begin{figure}[h!]
	\tcapfig{Histograms of Standardized Control and Treated Common Components, Individual and Average Treatment Effects}
	\centering
	\begin{subfigure}{0.7\textwidth}
		\centering
		\includegraphics[width=1\linewidth]{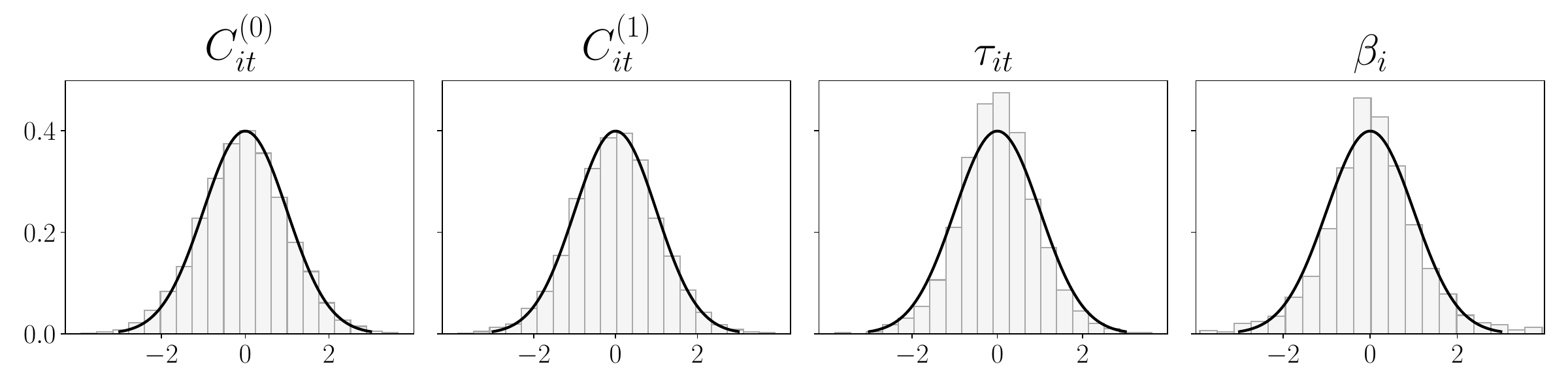}
		\caption{Treatment effect $\tau = 0$}
	\end{subfigure}
	\begin{subfigure}{0.7\textwidth}
		\centering
		\includegraphics[width=1\linewidth]{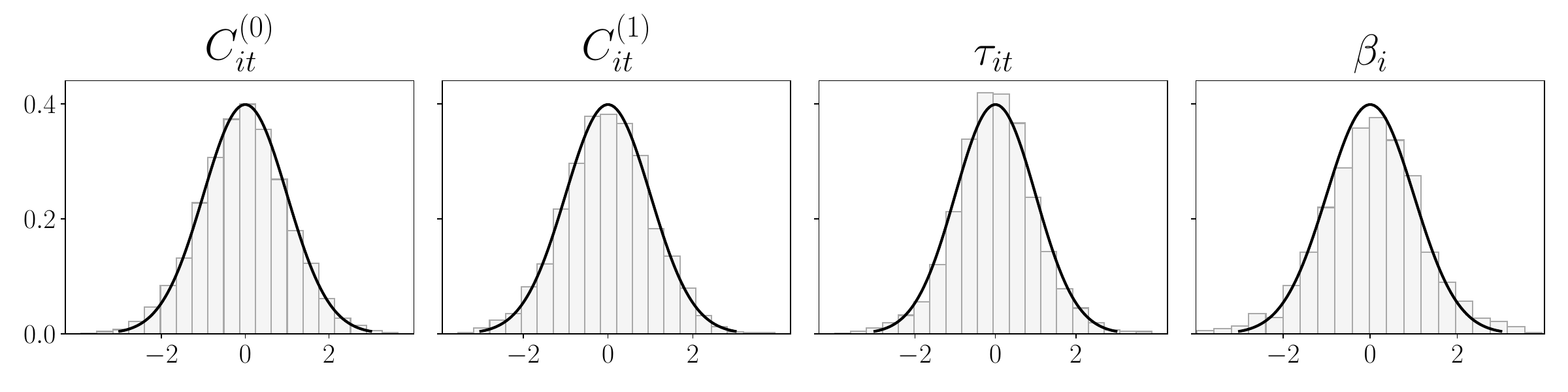}
		\caption{Treatment effect $\tau = 0.25$}
	\end{subfigure}
	\bnotefig{These figures show the histograms of estimated control and treated common components, individual and average treatment effect ($Z = \vec{1}$) normalized by their estimated standard deviations, where $N = 100$ and $T = 150$. The normal density function is superimposed on the histograms. The observation pattern follows the simultaneous treatment adoption pattern. The results are based on 2,000 Monte Carlo simulations. The Internet Appendix collects the histograms for other specifications of $N$ and $T$.}
	\label{fig:hist-treatment}
\end{figure}

\begin{table}[h!]
	\tcaptab{Statistical Power of Treatment Effect Tests}
	\centering
	\begin{tabular}{lrc|rrrr|rrrr}
		\toprule
		&      & {} & \multicolumn{4}{c|}{$\tilde{C}^\treat_{it}- \tilde{C}^\control_{it} $} & \multicolumn{4}{c}{$\tilde{\beta}^\treat_i - \tilde{\beta}^\control_i$} \\
		&      & $\Lambda^\treat_i - \Lambda^\control_i$ &  0.25 &  0.50 &  1.00 &  2.00 &  0.25 &  0.50 &  1.00 &  2.00 \\
		$N$ & $T$ & $\mu_F$ &       &       &       &       &       &       &       &       \\
		\midrule
100                     & 100  & 0.900 & 0.171 & 0.450 & 0.902 & 0.991 & 0.198 & 0.440 & 0.864 & 0.968 \\
                        &      & 0.000 & 0.271 & 0.660 & 0.946 & 0.996 & 0.271 & 0.654 & 0.939 & 0.996 \\
                        &      & 0.500 & 0.347 & 0.835 & 0.981 & 1.000 & 0.345 & 0.831 & 0.979 & 0.998 \\
                        &      & 0.000 & 0.466 & 0.906 & 0.991 & 1.000 & 0.481 & 0.910 & 0.991 & 1.000 \\
		\midrule
250                     & 100  & 0.900 & 0.175 & 0.464 & 0.896 & 0.994 & 0.165 & 0.456 & 0.866 & 0.981 \\
                        &      & 0.000 & 0.273 & 0.722 & 0.959 & 0.998 & 0.271 & 0.731 & 0.954 & 0.998 \\
                        & 500  & 0.900 & 0.572 & 0.954 & 1.000 & 1.000 & 0.558 & 0.931 & 0.983 & 0.989 \\
                        &      & 0.000 & 0.764 & 0.970 & 1.000 & 1.000 & 0.772 & 0.979 & 1.000 & 1.000 \\
		\midrule
500                     & 500  & 0.900 & 0.610 & 0.975 & 1.000 & 1.000 & 0.591 & 0.958 & 0.998 & 1.000 \\
                        &      & 0.000 & 0.805 & 0.987 & 1.000 & 1.000 & 0.809 & 0.987 & 1.000 & 1.000 \\
                        & 1000 & 0.900 & 0.860 & 0.992 & 1.000 & 1.000 & 0.848 & 0.983 & 0.998 & 0.998 \\
                        &      & 0.000 & 0.959 & 1.000 & 1.000 & 1.000 & 0.957 & 1.000 & 1.000 & 1.000 \\
		\bottomrule
	\end{tabular}

	\bnotetab{This table shows the proportion of test statistics of the treatment effect that reject the null hypotheses $\mathcal{H}_0: C^\treat_{it} - C^\control_{it} = 0$ or $\mathcal{H}_0:  \beta^\treat_i - \beta^\control_i = 0$, where $ \beta^\treat_i = \frac{1}{\Ttreat} \sum_{\Tcontrol+1}^T C^\treat_{it}$ and  $ \beta^\control_i = \frac{1}{\Ttreat} \sum_{\Tcontrol+1}^T C^\control_{it}$. We consider a 95\% confidence level (the test statistics are within $[-1.96, 1.96]$) over 500 Monte Carlo simulations . The test statistics normalize $\tilde{C}^\treat_{it}- \tilde{C}^\control_{it} $ and $\tilde{\beta}^\treat_i - \tilde{\beta}^\control_i$ with their estimated standard deviation from Equations \eqref{eqn:ite-asy-normal} and \eqref{eqn:ate-asy-normal}. The estimated standard deviations are estimated under the null hypothesis of $\Lambda^\treat_i - \Lambda^\control_i = 0$. The observation pattern follows the simultaneous treatment adoption pattern. The proportion of acceptance decreases with $N, T, \mu_F$ and $\tilde{\beta}^\treat_i - \tilde{\beta}^\control_i$, implying that the statistical power increases with the data dimensionality and the scale of the treatment effect. The Internet Appendix collects additional robustness tests confirming the same findings for different specifications and also showing that the statistical power increases with the proportion of observed entries in the data}
	\label{tab:treatment-power-main-text}
\end{table}

To conserve space, we report here the distribution results for the regression based estimator based on Equation \eqref{eqn:reg-estimate-factors}, but the results extend to the propensity-weighted estimator. Figure \ref{fig:hist-all} shows the histograms of standardized factors, loadings, and common components for randomly selected observed entries and missing entries based on Theorem \ref{theorem:asy-normal-equal-weight}. The histograms match the standard normal density function very well and support the validity of our asymptotic results in finite samples.

Figure \ref{fig:hist-treatment} confirms that our treatment test in Theorem \ref{theorem:ate-same-factor} has the correct size. The control data follows our benchmark one-factor model. We assume a constant treatment effect, i.e., $\Lambda_i^\treat = \Lambda_i^\control + \tau$, where $\tau$ is set to 0 or 0.25. Figure \ref{fig:hist-treatment} shows the histograms of standardized common components for treated and control, the individual treatment effect, and an equally weighted treatment effect for randomly selected units and times. As expected, the histograms support the validity of our asymptotic results in finite samples.

Table \ref{tab:treatment-power-main-text} demonstrates the statistical power of our tests for individual and average treatment effects, where the null hypotheses are $\mathcal{H}_0:  \beta^\treat_i - \beta^\control_i = 0$ with equal weights for all time periods, i.e.,  $\beta^\treat_i=\tau^\treat_i$ and $\beta^\control_i=\tau^\control_i$. The power increases with the data dimensionality ($N$ and $T$) and the scale of treatment effect that is determined by the mean of the factor $\mu_F$ and the difference between the control and treated loadings $\Lambda_i^\treat - \Lambda_i^\control$. The null hypothesis implies $\Lambda_i^\treat - \Lambda_i^\control=0$, which we use in the estimation of the asymptotic variance. This slightly improves the power, but the results in the Internet Appendix show that we also have good power properties without imposing the null hypothesis in the estimation of the asymptotic covariances. Moreover, the statistical power increases with the proportion of observed entries, as shown in the comparison between Tables \ref{tab:treatment-power-main-text} and \ref{tab:treatment-power-fewer-obs-null} in the Internet Appendix. 

\subsection{Robustness to Missing Patterns}

In this section, we show that our benchmark regression-based estimator (denoted as XP) and propensity-weighted estimator (denoted as $\text{XP}_{\text{PROP}}$) perform well under a variety of missing patterns. As reference we also include the estimators of \cite{jin2020factor} (denoted as JMS) and \cite{bai2019matrix} (denoted as BN). Each of the two estimators is designed for a specific observation pattern and hence provides a natural reference level for that specific pattern. \cite{jin2020factor} assume that observations are missing at random, while \cite{bai2019matrix} is tailored to an observation pattern with a block structure after proper reshuffling. These are the four estimation approaches that provide an inferential theory for imputed common components in an approximate factor model and were available at the time of submission of this paper.


We generate the data from a two-factor model $X_{it} = \Lambda_i^{\top} F_t + e_{it}$, where $F_t \stackrel{\iid}{\sim}  \calN(0,I_2)$, $\Lambda_i \stackrel{\iid}{\sim}  \calN(0,I_2)$ and $e_{it} \stackrel{\iid}{\sim}  \calN(0,1)$.   We consider six different observation patterns. The first three cases are (1) missing uniformly at random, (2) simultaneous treatment adoption, and (3) staggered treatment adoption. Then, we allow the observation pattern for these three cases to depend on a unit-specific characteristic defined as $S_i = \mathbbm{1}(\Lambda_{i,2} \geq 0 )$. Hence, case four to six are (4) missing at random conditional on $S_i$, (5) simultaneous treatment adoption conditional on $S_i$, (6) staggered treatment adoption conditional on $S_i$. Table \ref{tab:comparison-jms-bn} contains figures showing the observation patterns and their detailed descriptions. Note that these are all practically relevant patterns, in particular the staggered treatment adoption that appears in our empirical companion paper and is prevalent in empirical applications. 

\begin{table}[h!]
	\centering
	\tcaptab{Robustness to Missing Patterns}
		\begin{minipage}{0.17\linewidth}
			\vspace{0.7cm}
			\flushright
			\includegraphics[width=15mm]{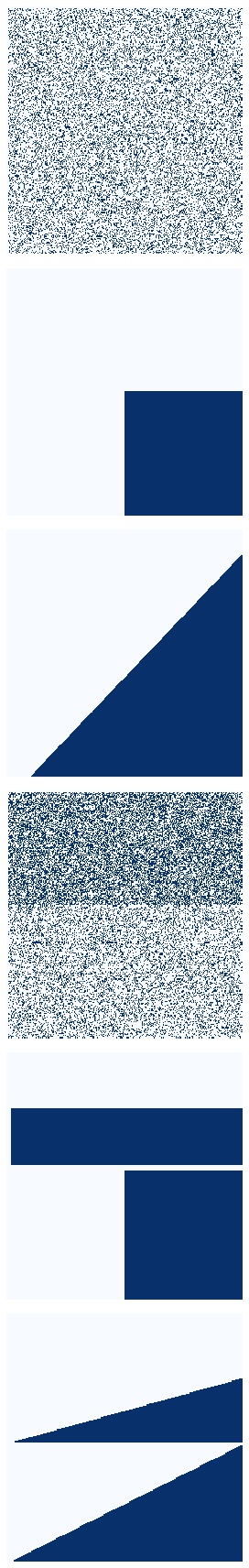}
		\end{minipage}\hfill
		\begin{minipage}{0.8\linewidth}
			\begin{tabular}{l|l|cccc}
				\toprule
				Observation Pattern & $W_{it}$ &  XP  &  $\text{XP}_{\text{PROP}}$ &  JMS &      BN \\
				\midrule
				Random & obs &   {\bf 0.015} & {\bf 0.015} & 0.023 & -- \\
				& miss &   {\bf 0.015} & {\bf 0.015} & 0.021 & -- \\
				& all &   {\bf 0.015} & {\bf 0.015} & 0.023 & -- \\
				\midrule
				Simultaneous & obs &   { \bf 0.012} & {\bf 0.012} & 0.124 &   {\bf 0.012} \\
				& miss &   {  0.020} & {  0.020} & 0.184 &   {\bf 0.017} \\
				& all &   {  0.014} & {  0.014} & 0.139 &   {\bf 0.013} \\
				\midrule
				Staggered & obs &   {\bf 0.017} & {\bf 0.017} & 0.366 &   0.073 \\
				& miss &   {\bf 0.043} & {\bf 0.043} & 0.318 &   0.087 \\
				& all &   {\bf 0.027} & {\bf 0.027} & 0.347 &   0.078 \\
				\midrule
				Random & obs &   {\bf 0.019} & 0.020 & 0.077 & -- \\
				$W$ depends on $S$ & miss &   {\bf 0.024} & {\bf 0.024} & 0.067 & -- \\
				& all &   {\bf 0.021} & {\bf 0.021} & 0.073 & -- \\
				\midrule
				Simultaneous& obs &   {\bf 0.032} & 0.040 & 0.703 &   0.141 \\
				$W$ depends on $S$ & miss &   {\bf 0.231} & 0.256 & 0.521 &   0.279 \\
				& all &   {\bf 0.129} & 0.145 & 0.615 &   0.209 \\
				\midrule
				Staggered & obs &   {\bf 0.016} & 0.018 & 0.272 &   0.117 \\
				$W$ depends on $S$ & miss &   {\bf 0.064} & 0.069 & 0.346 &   0.186 \\
				& all &   {\bf 0.033} & 0.036 & 0.299 &   0.142 \\
				\bottomrule
			\end{tabular}
	\end{minipage} 
	\bnotetab{This table reports the relative MSE of XP (our benchmark method), $\textnormal{XP}_{\textnormal{PROP}}$ (our propensity-weighted method), JMS \citep{jin2020factor}, and BN \citep{bai2019matrix} on observed, missing and all entries, $N = 250$, $T = 250$. The figures on the left show patterns of missing observations with the shaded entries indicating the missing entries. Bold numbers indicate the best relative model performance. We generate a two-factor model and a unit-specific characteristic $S_i = \mathbbm{1}(\Lambda_{i,2} \geq 0 )$. The observation patterns are generated as follows. (1) {\it Missing uniformly at random}: Entries are observed independently with probability $p = 0.75$. (2) {\it Simultaneous treatment adoption}:  $50\% $ randomly selected units adopt the treatment from time $0.5 \cdot T$ and the remaining $50\% $ units stay in the control group until the end. (3) {\it Staggered treatment adoption}:  All units are in the control group for $t < 0.1\cdot T$. At time $0.1 \cdot T \leq  t \leq T$, $\frac{t-0.1 \cdot T}{T} \%$ units are in the treated group. The remaining $10\%$ units stay in the control group until the end. (4) {\it Missing at random conditional on $S_i$}: Entries are observed independently with probability $p_{it} = 0.75$ $S_i = 1$, and $p_{it} = 0.5$ if $S_i = 0$. (5) {\it Simultaneous treatment adoption conditional on $S_i$}: For the units with $S_i = 1$, $95\% $ units adopt the treatment from time $0.5 \cdot T$ and $5\% $ units stay in the control group until the end. For the units with $S_i= 0$, $50\%$ units adopt the treatment from time $0.02\cdot T$ and  $50\%$ units stay in the control group until the end. (6) {\it Staggered treatment adoption conditional on $S_i$}:  All units are in the control group for $t < 0.02\cdot T$. For the units with $S_i = 1$, at time $0.02 \cdot T \leq  t \leq T$, $\frac{t-0.02 \cdot T}{T} \%$ units are in the treated group with the remaining $2\%$ staying in the control group until the end. For the units with $S_i = 0$, at time $0.02 \cdot T \leq  t \leq T$, $\frac{t-0.02 \cdot T}{1.96 T} \%$ units are in the treated group with the remaining $50\%$ units staying in the control group until the end. We run 100 Monte Carlo simulations. The Internet Appendix collects additional robustness results with the same findings.}
	\label{tab:comparison-jms-bn}
\end{table}

Table \ref{tab:comparison-jms-bn} compares the performance of estimating the common components. We report the normalized mean squared error (MSE) of the four methods for observed, missing and all units defined as follows:
\[\text{relative MSE}_{\mathcal{S}} = \frac{\sum_{(i,t) \in \mathcal{S}} \big(\tilde{C}_{it} - C_{it}\big)^2 }{\sum_{(i,t) \in \mathcal{S}} C_{it}^2}, \]
where $\mathcal{S}$ is either the set observed, missing or all observations. 

First, and most importantly, our benchmark estimator shows excellent performance for all observation patterns. Our estimator has the smallest or at least a very similar small MSE compared to the other methods, as indicated by the bold numbers. Hence, we view our approach as a simple and reliable all-purpose estimator. 
Our propensity-weighted estimator is very close to the benchmark estimator but performs slightly worse. This is in line with our theoretical result that propensity weighting is generally less efficient.

In the case of missing at random conditional or unconditional on $S_i$, our methods have the smallest MSE. \cite{jin2020factor} also have a small MSE as long as the observation pattern does not depend on $S_i$ as their method is designed for missing uniformly at random. Missing at random violates the assumptions of \cite{bai2019matrix} and their estimator not applicable as there not sufficiently large blocks of fully observed entries. 


In the case of simultaneous treatment adoption, \cite{bai2019matrix} has the smallest MSE as their method is tailored to this case. Interestingly, our method as an all-purpose estimator is very close to \cite{bai2019matrix}. When the observation pattern depends on $S$, it can shrink the size of fully observed blocks, which increases the importance of using all observed entries resulting in the smallest MSE for our method.
	In the case of simultaneous treatment adoption, the assumptions in \cite{jin2020factor} are violated, which is reflected in the larger MSE.

Our methods have the smallest MSE for the case of staggered treatment adoption that is prevalent in empirical applications \citep{athey2018design}. This holds whether the observation pattern depends or does not depend on $S$. In contrast to \cite{bai2019matrix}, we use all observed entries in the estimation, which provides a more efficient estimator. Note that in this simulation example the fully observed blocks are very small, and hence, similar to the missing at random case, the assumptions in \cite{bai2019matrix} might not be satisfied. As the assumptions of \cite{jin2020factor} are violated, their imputation results in larger errors. 

The Internet Appendix shows that the findings are robust to the size of the panel and the parameters of the observation patterns. We also compare the MSE of the various methods after iterations in Tables \ref{tab:comparison-jms-bn-iter1}-\ref{tab:comparison-jms-bn-iter3} in the Internet Appendix. In more detail, we first impute the missing values with different methods. In the second step, we apply PCA to the full panel with imputed values to estimate the factor model and update the imputed values with the estimated common components. The observed entries stay the same. This process is repeated for multiple iterations. Note that this iterated estimation approach is actually a different estimation approach by itself. The four methods provide different starting values for the same iterative estimation approach that is based on a fixed-point argument. Importantly, there is no inferential theory for iterative estimators under general patterns.\footnote{While \cite{jin2020factor} consider iterations, their asymptotic results only hold for missing at random. \cite{bai2019matrix} provide distribution results for a different iteration that is not making use of all observations and therefore only has a minor effect.} 
Hence, if the goal is to estimate treatment effects, these iterative estimators cannot be used. Since our methods start with a value that has a smaller MSE, our methods, in general, converge faster (often already after three iterations) and also have a small MSE for a fixed number of iterations. Our results are robust to the choice of $N$ and $T$ and we present the corresponding results for $N = 100$ and $T = 150$ in Tables \ref{tab:comparison-jms-bn-iter0-N100T150}-\ref{tab:comparison-jms-bn-iter3-N100T150} in the Internet Appendix. In summary, if the goal is to only minimize the imputation error without an inferential theory, the iterative estimation generally improves the results, but the relative performance of the different estimation approaches without iteration carries over to the iteration setup.

\begin{table}[h!]
	\tcaptab{Benchmark and Propensity-Weighted Estimator for Weak and Missing Factors}
	\centering
	\setlength{\tabcolsep}{6pt} 
	\renewcommand{\arraystretch}{1.0} 
	{\small
		\begin{tabular}{l|rr|rr||rr|rr}
			\toprule
			$k$ estimated factors& \multicolumn{4}{c||}{1} & \multicolumn{4}{c}{2} \\
			\midrule 
			$[\mu_{F,1}, \mu_{F,2}]$ & \multicolumn{2}{c|}{[1,1]} & \multicolumn{2}{c||}{[5,0.5]} & \multicolumn{2}{c|}{[1,1]} & \multicolumn{2}{c}{[5, 0.5]}  \\ 
			$[\sigma_{F,1}, \sigma_{F,2}]$ & \multicolumn{2}{c|}{[1,1]} & \multicolumn{2}{c||}{[5,0.5]} & \multicolumn{2}{c|}{[1,1]} & \multicolumn{2}{c}{[5, 0.5]}  \\ \midrule
			Method &            XP & $\text{XP}_{\text{PROP}}$ &                XP & $\text{XP}_{\text{PROP}}$ &            XP & $\text{XP}_{\text{PROP}}$ &                XP & $\text{XP}_{\text{PROP}}$ \\
			\midrule
			obs $C^{(0)}_{it}$                  &         {\bf 0.227} &   0.251 &             { 0.011} &   { 0.011} &         { 0.014} &   { 0.014} &             {\bf 0.002} &   0.003 \\
			miss $C^{(0)}_{it}$                 &         0.478 &   {\bf 0.288} &             { 0.007} &   { 0.007} &         {\bf 0.044} &   0.045 &             0.026 &   {\bf 0.023} \\
			all $C^{(0)}_{it}$                  &         0.314 &   {\bf 0.264} &             { 0.009} &   { 0.009} &         {\bf 0.024} &   0.025 &             0.014 &   {\bf 0.012} \\
			obs $C^{(0)}_{it}$($S=1$)           &         {\bf 0.184} &   0.254 &             {\bf 0.755} &   0.761 &         { 0.013} &   { 0.013} &             {\bf 0.122} &   0.125 \\
			miss $C^{(0)}_{it}$($S=1$)          &         {\bf 0.046} &   0.261 &             {\bf 0.751} &   0.769 &         { 0.019} &   { 0.019} &             {\bf 0.123} &   0.132 \\
			obs $C^{(0)}_{it}$($S=0$)           &         0.304 &   {\bf 0.268} &             0.001 &   {\bf 0.000} &         { 0.016} &   { 0.016} &             0.001 &   {\bf 0.001} \\
			miss $C^{(0)}_{it}$($S=0$)          &         0.721 &   {\bf 0.308} &             0.003 &   {\bf 0.002} &         { 0.059} &   { 0.059} &             0.025 &   {\bf 0.022} \\
			obs $C^{(1)}_{it}$                  &         0.402 &   {\bf 0.278} &             0.007 &   {\bf 0.006} &         0.037 &   {\bf 0.036} &             {\bf 0.002} &   0.003 \\
			$C^{(1)}_{it} - C^{(0)}_{it}$       &         0.481 &   {\bf 0.294} &             0.008 &   {\bf 0.007} &         { 0.052} &   { 0.052} &             0.026 &   {\bf 0.023} \\
			$\beta^{(1)}_{i} - \beta^{(0)}_{i}$ &         0.168 &   {\bf 0.032} &             { 0.002} &   { 0.002} &         {\bf 0.012} &   0.013 &             0.008 &   {\bf 0.007} \\
			ATE                                 &         0.090 &   {\bf 0.026} &             {\bf 0.006} &   0.007 &         0.009 &   {\bf 0.008} &             0.012 &   {\bf 0.011} \\
			\bottomrule
	\end{tabular}}
	\label{tab:omit}
	\bnotetab{This table compares the percentage errors for various estimates with the benchmark estimator (XP) and the propensity weighted estimator $\text{XP}_{\text{PROP}}$ for omitted and weak factors. The data is simulated with a two-factor model and a simultaneous treatment adoption for different means and variances of the latent factors. {\bf For $k=1$ one factor is omitted in the estimation as the population model is a two-factor model. For $[\sigma_{F,1}, \sigma_{F,2}]=[5, 0.5]$ the second factor is weak}. In more detail: $Y_{it}^\control = \Lambda_{i,1}^\control F_{t,1}  + \Lambda_{i,2}^\control F_{t,2}  + e_{it}^\control$ and $Y_{it}^\treat=\Lambda_{i,1}^\treat F_{t,1}  + \Lambda_{i,2}^\treat F_{t,2}  + e_{it}^\treat$. The first half of the cross-section depends on the first factor, while the second half depends on the second factor: For $i = 1, \cdots, N/2$, $\Lambda^\control_{i,1} \sim \calN(0,1)$, $\Lambda^\treat_{i,1} = \Lambda_{i,1}^\control + \calN(0.2,1)$  and $\Lambda^\treat_{i,2} = \Lambda^\control_{i,2} = 0$, and for $i = N/2+1, \cdots, N$, $\Lambda^\treat_{i,1} = \Lambda^\control_{i,1} = 0$, $\Lambda^\control_{i,2} \sim \calN(0,1)$ and $\Lambda^\treat_{i,2} = \Lambda^\control_{i,2} +  \calN(0.2, 1)$. Let $N = 250$, $T = 250$ and $e_{it} \stackrel{\iid}{\sim}  \calN(0,1)$. The observation pattern depends on an observed unit specific variable defined as $S_i = \mathbbm{1}(\Lambda_{i,2}^\control \neq 0 )$ which only depends on the loadings of the second factor. Once a unit adopts treatment, it stays treated afterwards. For the units with $S_i = 1$, $50\%$ randomly selected units adopt the treatment from time $0.5\cdot T$ and the remaining $50\%$  units stay in the control group until the end. For the units with $S_i= 0$, $90\%$ randomly selected units adopt the treatment from time $0.5\cdot T$ and the remaining $10\%$  units stay in the control group until the end. We report the relative MSE for common components for observed and unobserved treated and control common components. We also report the results conditional on the characteristic $S_i$ and the relative MSE of $\beta^{(1)}_{i} - \beta^{(0)}_{i}$ capturing the average treatment effect over time for each unit and ATE which is the relative MSE of the overall average treatment effect $\sum_{(i,t): W_{it} = 0} \big( \hat{C}^\treat_{it} - \hat{C}^\control_{it} \big)$. The results are generated from 1,000 Monte Carlo simulations. The results show that $\text{XP}_{\text{PROP}}$ can be a more robust estimator for missing observations under misspecification (omitted or weak factors).
	}
\end{table}

\subsection{Misspecification and Robustness of Propensity-Weighted Estimator}\label{sec:robust}

In this section, we show that the propensity-weighted estimator can have desirable robustness properties under misspecification. Our results are motivated by insights from causal inference that propose doubly robust estimation procedures for missing values, as discussed, for example, in \cite{kang2007}. In causal inference, we can either model the relationship between the covariates and the outcome or model the probabilities of missingness to estimate causal effects. Doubly robust procedures combine both by using a propensity weight in regressions to mitigate the selection bias. Their potential advantage is that they can provide reliable estimates in the case of omitted variables. Our setup differs from classical causal inference as we estimate the covariates as latent factors from the data. However, we can have a situation similar to omitted variables if we estimate too few latent factors, the factors are weak, or the population model is nonlinear.

We compare our benchmark estimator (XP) and propensity-weighted estimator ($\text{XP}_{\text{PROP}}$) under two types of model misspecification. In Table \ref{tab:omit}, we consider the case of omitted factors. The population model is generated by a two-factor model, but we only estimate one latent factor. In this case, the propensity-weighted estimator can perform better than the benchmark estimator. However, when the model is correctly specified, and we estimate two factors, the benchmark estimator dominates. When the second factor is weak in the sense that its variance and corresponding eigenvalue are very small, the situation is similar to an omitted factor. In this case, it is possible that the propensity-weighted estimator performs better even if we estimate the correct number of latent factors. Note that weak factors are also a form of misspecification, as discussed in \cite{onatski2012}. In this simulation, observations are more likely to miss if they are exposed to the omitted or weak second factor. Hence, the robustness gains of the propensity-weighted estimator arise for the missing data and the treatment effects.

The case of omitted latent factors shares similarity with the case of a misspecified functional form. In Table \ref{tab:exp} in the Online Appendix we generate the data from a non-linear one-factor model. Under certain assumptions it is possible to approximate a non-linear transformation as a linear function of appropriate basis functions. Such an approximation can be formulated as a linear latent multi-factor model, where the additional factors are non-linear transformations of the underlying one-factor model. Hence, some form of functional model misspecification can be corrected by using more latent factors. In our example, the non-linearity is very well approximated by three latent factors for the simple and propensity-weighted estimator. However, if we use only one or two latent factors, the propensity-weighted estimator is more robust to the misspecification. While we do not provide a formal non-parametric theory, our simulation suggests that a non-linear misspecification can share similar features with the case of omitted factors. Therefore, if a researcher suspects some form of model misspecification, the propensity-weighted estimator can be a useful alternative.\footnote{For a non-linear factor model, \cite{feng2020causal} proposes a local PCA method that uses a linear model approximation in a local neighborhood. Our argument is based on a global approximation of the non-linear functional relationship, where the additional latent factors serve as additional basis functions.}

\section{Conclusion}
This paper develops the inferential theory for latent factor models estimated from large dimensional panel data with missing observations. Our paper stands out by the generality of the missing patterns that we allow for. We propose two estimators for the latent factor model: a simple all-purpose estimator and an extension to a probability-weighted estimator. Our all-purpose estimator is easy to use while it performs well under a variety of missing patterns. The propensity weighted estimator is an alternative that is less efficient for correctly specified models but can be more robust to certain forms of misspecification. The key application of our asymptotic distribution theory is to test causal treatment effects. We provide a test for the point-wise treatment effect that can be heterogeneous and time-dependent under general adoption patterns where the units can be affected by unobserved factors.

\section{Appendix}


\paragraph{Notation.} Let $M<\infty$ denote a generic constant. Let $\norm{v} $ denote the vector norm and $\norm{A} = trace(A^\top A)^{1/2}$ the Frobenius norm of matrix $A$. 

\subsection*{General Assumptions}\label{sec:general-assumptions}


\begin{assumpG}[Factor Model] \label{ass:factor-model}
	\texttt{}
	\begin{enumerate}[wide=0pt, widest=99,leftmargin=\parindent, labelsep=*]
		\item \label{ass:factor} Factors: $\forall \, t$, $\+E[\norm{F_t}^4]  \leq \bar{F} < \infty$. There exists some positive definite $r \times r$ matrix $\Sigma_F$, such that  $\frac{1}{T} \sum_{t=1}^T F_t F_t^\T \xrightarrow{P} \Sigma_F$ and $\+E\norm{\sqrt{T} \Lp \frac{1}{T} \sum_{t =1}^T F_t F_t^\T - \Sigma_F \Rp}^2  \leq M$. Furthermore, for any $\tlq_{ij}$, $\frac{1}{|\tlq_{ij}|} \sum_{t \in \tlq_{ij}} F_t F_t^\T \xrightarrow{P} \Sigma_F$ and $\+E\norm{\sqrt{|\tlq_{ij}|} \Lp \frac{1}{|\tlq_{ij}|} \sum_{t \in \tlq_{ij}} F_t F_t^\T - \Sigma_F \Rp}^2  \leq M$. 
		\item \label{ass:loading} Factor loadings: loadings are random, independent of factors and errors. $\forall \, t$, $\+E[\norm{\Lambda_i}^4]  \leq \bar{\Lambda} < \infty$. 
		There exists some positive definite $r\times r$ matrix $\Sigma_\Lambda$ such that $\frac{1}{N} \sum_{i = 1}^N \Lambda_i \Lambda_i^\T \xrightarrow{P} \Sigma_{\Lambda}  $ and
		$\+E \norm{\sqrt{N} \Lp  \frac{1}{N} \sum_{i=1}^N \Lambda_i \Lambda_i^\T  - \Sigma_\Lambda \Rp} \leq M $. 
		\item \label{ass:error} Time and cross-section dependence and heteroskedasticity of errors: There exists a positive constant $M < \infty$, such that for all $N$ and $T$:
		\begin{enumerate}[wide=0pt, widest=99,leftmargin=\parindent, labelsep=*]
			\item $\+E[e_{it}] = 0$, $\+E|e_{it}|^8 \leq M$.
			\item $\+E[e_{is}e_{it}] = \gamma_{st,i}$ with $|\gamma_{st,i}| \leq \gamma_{st}$ for some $\gamma_{st}$ and all $i$. For all $t$, $\sum_{s=1}^T \gamma_{st} \leq M$.
			\item $\+E[e_{it}e_{jt}] = \tau_{ij,t}$ with $|\tau_{ij,t}| \leq \tau_{ij}$  for some $\tau_{ij}$ and all $t$. For all $i$, $\sum_{j = 1}^N \tau_{ij} \leq M$.
			\item $\+E[e_{it}e_{js}] = \tau_{ij,ts}$ and $\sum_{j=1}^N \sum_{s = 1}^T |\tau_{ij,ts}| \leq M$ for all $i$ and $t$.
			\item For all $i$ and $j$, $\+E \left\vert \frac{1}{|\tlq_{ij}|^{1/2}} \sum_{t \in \tlq_{ij}} \Lp e_{it}e_{jt} -  \+E[e_{it}e_{jt}] \Rp \right\vert^4 \leq M$.
		\end{enumerate}
		\item \label{ass:factor-error} Weak dependence between factor and idiosyncratic errors: for every $(i,j)$,
		\[\+E \norm{\frac{1}{\sqrt{|\tlq_{ij}|}} \sum_{i \in \tlq_{ij}} F_t e_{it} }^4 \leq M. \]
		\item \label{ass:eigen} Eigenvalues: The eigenvalues of $\Sigma_{\Lambda} \Sigma_F$ are distinct.
	\end{enumerate}
\end{assumpG}

\begin{assumpG}[Moments and Central Limit Theorems]\label{ass:mom-clt}
	\texttt{}
	For all $N$ and $T$, 
	\begin{enumerate}[wide=0pt, widest=99,leftmargin=\parindent, labelsep=*]
		\item $\textstyle \+E\Ls \norm{\sqrt{\frac{T}{N}}  \sum_{i=1}^N \frac{1}{|\tlq_{ij}|} \sum_{s \in \tlq_{ij}} \phi_{i,st} \Lp e_{is}e_{js} - \+E[e_{is}e_{js} ] \Rp }^2\Rs \leq M$, where $\phi_{i,st} = W_{it}  F_s$, $ \Lambda_i$, $W_{it} \Lambda_i$,  for every $j$ and $t$.
		\item $\textstyle \+E \Ls \norm{\sqrt{\frac{T}{N}} \sum_{i=1}^N\frac{\phi_{it} }{|\tlq_{ij}|} \sum_{t \in \tlq_{ij}} F_t^\T e_{it} }^2 \Rs \leq M$ for every $t$ and for  $\phi_{it} = \Lambda_i$, $W_{it}$, $W_{it} \Lambda_i$. 
		\item \label{ass:asy-normal-main-term-thm-loading} 
		$\frac{\sqrt{T}}{N} \sum_{i = 1}^N  \Lambda_i \Lambda_i^\T  \frac{1}{|\tlq_{ij}|} \sum_{t \in \tlq_{ij}} F_t e_{it} \xrightarrow{d} \calN (0, \covI_{\Lambda,j}) $ for every $j$.
		\item \label{ass:asy-normal-main-term-thm-factor} 
		$\frac{1}{\sqrt{N}} \sum_{i = 1}^N W_{it} \Lambda_i  e_{it} \xrightarrow{d} \calN(0, \covI_{F,t})$ for every $t$.
		\item \label{ass:asy-normal-add-term-thm-loading} We define the filtration $\mathcal{G}^t = \sigma(\cup_{s= 1}^T \mathcal{G}^t_{Ts} )$ with $\mathcal{G}^t_{Ts} = \sigma(\{W_{ij}, j \leq s,  \text{all } i \}, \Lambda, v_t)$ generated by $\{W_{ij}, j \leq s,  \text{all } i \}$, $\Lambda$ and $v_t$, which is given by $v_t  = \Sigma_{\Lambda}^\I \Sigma_{F}^\I F_t$. For every $i$ and $t$, and $u_i = \Lambda_i$, it holds   
				\begin{align*}
					\sqrt{T} \begin{bmatrix}
						X_{i} u_i \\ \mathbf{X}_t v_t 
					\end{bmatrix} \rightarrow \calN \Bigg( 0, \begin{bmatrix}
						h_i(u_i) & g^{\cov}_{i,t}(u_i, v_t)^\T \\  g^{\cov}_{i,t}(u_i, v_t) & g_t(v_t ) 
					\end{bmatrix} \Bigg)
					\quad \mathcal{G}^t-\text{stably},    
				\end{align*}
				where  $X_{i} = \frac{1}{N} \sum_{l = 1}^N \Lambda_l \Lambda_l^\T  \Big( \frac{1}{|\tlq_{li}|} \sum_{s \in \tlq_{li}} F_s F_s^\T - \frac{1}{T} \sum_{s = 1}^T F_s F_s^\T \Big)$ and $\mathbf{X}_{t} = \frac{1}{N} \sum_{i = 1}^N W_{it} X_i \Lambda_i \Lambda_i^\T$.
			\item \label{ass:add-mom-three-sum-lam-err} $\+E\Ls \norm{ \sqrt{\frac{T}{N}} \sum_{i = 1}^N    \Lp  \frac{1}{|\tlq_{li}|} \sum_{s \in \tlq_{li}} F_s F_s^\T - \frac{1}{T} \sum_{s=1}^T F_s F_s^\T \Rp \Lambda_i W_{it} e_{it} }^2 \Rs\leq M$ for every $l$.
		\end{enumerate}
	\end{assumpG}
	
	\begin{assumpG}[Additional Assumptions on Factor Model]\label{ass:add-factor}
		As $\Ttreat \rightarrow \infty$, it holds
		\begin{enumerate}[wide=0pt, widest=99,leftmargin=\parindent, labelsep=*]
			\item 
			$\frac{1}{\sqrt{\Ttreat}} \sum_{\Tcontrol+1}^{T} F_t e_{it} \xrightarrow{d} \calN (0, \Sigma_{F,e_i})$.
			\item $ \+E \Ls \norm{\frac{1}{\sqrt{N \Ttreat}}  \sum_{t = \Tcontrol+1}^T \sum_{j = 1}^N W_{jt} \Lambda_j  e_{jt} }^2 \Rs \leq M$ and \\  $ \+E \Ls \norm{\frac{1}{\sqrt{N \Ttreat}}   \sum_{t = \Tcontrol+1}^T \sum_{j = 1}^N Z_t F_t^\T W_{jt} \Lambda_j  e_{jt} }^2 \Rs \leq M$ for every $i$, $Z \in \+R^{\Ttreat \times L }$ and $\norm{Z_t} \leq M$.
			\item Assumption \ref{ass:mom-clt}.\ref{ass:asy-normal-add-term-thm-loading} holds for $v_t$ equal to $\Sigma_{\Lambda,t}^\I \Lambda^\treat_i$ and $\Sigma_{\Lambda,t}^\I  (\Lambda^\treat_i- \Lambda^\control_i)$ under the filtration $\mathcal{G} = \sigma(\cup_{s= 1}^T \mathcal{G}_{Ts} )$ with  $\mathcal{G}_{Ts} = \sigma(\{W_{ij}, j \leq s,  \text{all } i \}, \Lambda)$ generated by $\{W_{ij}, j \leq s,  \text{all } i \}$ and $\Lambda$.
				\item For $v_t$ equal to $\Sigma_{\Lambda,t}^\I \Lambda^\treat_i$, $\Sigma_{\Lambda,t}^\I  (\Lambda^\treat_i- \Lambda^\control_i)$ or $\Sigma_{\Lambda}^\I \Sigma_{F}^\I F_t$, the elements of the random vector $\*X_{\Tcontrol+1} v_{\Tcontrol+1} , \cdots, \*X_T v_{T}$ are jointly $\mathcal{G}^t$-stably normal with $\mathrm{ACov}(\*X_t v_t, \*X_s v_s) = g^{\cov}_{t,s}(v_t, v_s)$ for all $\Tcontrol \leq t, s \leq T$, where $\mathcal{G}^t$ is defined as $\mathcal{G}^t = \sigma(\cup_{s= 1}^T \mathcal{G}^t_{Ts} )$, $\mathcal{G}^t_{Ts} = \sigma(\{W_{ij}, j \leq s,  \text{all } i \}, \Lambda)$ if $v_t \neq\Sigma_{\Lambda}^\I \Sigma_{F}^\I F_t  $, and  $\mathcal{G}^t_{Ts} = \sigma(\{W_{ij}, j \leq s,  \text{all } i \}, \Lambda, v_t)$ otherwise.
		\end{enumerate}
	\end{assumpG}

	\begin{assumpGC}[Conditional Factor Model] \label{ass:factor-model-conditional}
		\texttt{}
		\begin{enumerate}[wide=0pt, widest=99,leftmargin=\parindent, labelsep=*]
			\item \label{ass:loading-conditional} Factor loadings: $\+E[\norm{\Lambda_i}^4|S]  \leq \bar{\Lambda} < \infty$.
			There exists some positive definite $r\times r$ matrix $\Sigma_\Lambda$ such that $\frac{1}{N} \sum_{i = 1}^N \frac{W_{it}}{P(W_{it}=1|S_i)} \Lambda_i \Lambda_i^\T \xrightarrow{P} \Sigma_{\Lambda}  $ and  $\+E \norm{\sqrt{N}\Lp \frac{1}{N}  \sum_{i = 1}^N \frac{1}{P(W_{it}=1|S_i)} W_{it}\Lambda_i \Lambda_i^\T - \Sigma_{\Lambda} \Rp }  \leq M$.
		\end{enumerate}
	\end{assumpGC}
	
	\begin{assumpGC}[Conditional Moments and Central Limit Theorems]\label{ass:mom-clt-conditional}
		\texttt{}
		$S$ is independent of $F$ and $e$ and $\+E[\norm{\Lambda_i}^6|S] \leq \bar{\Lambda}$. 
		For all $N$ and $T$,
		\begin{enumerate}[wide=0pt, widest=99,leftmargin=\parindent, labelsep=*]
			\item $\textstyle \+E\Ls \norm{\sqrt{\frac{T}{N}}  \sum_{i=1}^N \frac{1}{|\tlq_{ij}|} \sum_{s \in \tlq_{ij}}  \phi_{i,st} \Lp e_{is}e_{js} - \+E[e_{is}e_{js} ] \Rp }^2  \Rs \leq M$ for every $j$ and $t$, \\ where $\phi_{i,st} = \frac{W_{it}F_s}{\ps }$, $ \Lambda_i$,  $\frac{W_{it}}{\ps}  \Lambda_i$.
			\item $\textstyle \+E \Ls \norm{\sqrt{\frac{T}{N}} \sum_{i=1}^N\frac{\phi_{it} }{|\tlq_{ij}|} \sum_{t \in \tlq_{ij}} F_t^\T e_{it} }^2 \Rs \leq M$ for every $t$ and for  $\phi_{it} = \Lambda_i$, $\frac{W_{it}}{\ps}  \Lambda_i$. 
			\item \label{ass:asy-normal-main-term-thm-loading-conditional} 
			$\frac{\sqrt{T}}{N} \sum_{i = 1}^N  \Lambda_i \Lambda_i^\T  \frac{1}{|\tlq_{ij}|} \sum_{t \in \tlq_{ij}} F_t e_{it} \xrightarrow{d} \calN (0, \covI_{\Lambda,j}) $ for every $j$.
			\item \label{ass:asy-normal-main-term-thm-factor-conditional} 
			$\frac{1}{\sqrt{N}} \sum_{i = 1}^N \frac{W_{it}}{P(W_{it} = 1|S_i)} \Lambda_i  e_{it} \xrightarrow{d} \calN(0, \Gamma^{\obs, S}_{F,t})$ for every $t$.
			\item \label{ass:asy-normal-add-term-thm-loading-conditional}
			We define the filtration $\mathcal{G}^t = \sigma(\cup_{s= 1}^T \mathcal{G}^t_{Ts} )$ with $\mathcal{G}^t_{Ts} = \sigma(\{W_{ij}, j \leq s,  \text{all } i \}, \Lambda, v_t)$ generated by $\{W_{ij}, j \leq s,  \text{all } i \}$, $\Lambda$ and $v_t$, which is given by $v_t  = \Sigma_{\Lambda}^\I \Sigma_{F}^\I F_t$. For every $i$ and $t$, and $u_i = \Lambda_i$, it holds
				\begin{align*}
					\sqrt{T} \begin{bmatrix}
						X_{i} u_i \\ \mathbf{X}_t^S v_t
					\end{bmatrix} \rightarrow \calN \Bigg( 0, \begin{bmatrix}
						h_i(u_i) & g^{\cov,S}_{i, t}(u_i,v_t)^\T  \\g^{\cov,S}_{i, t}(u_i,v_t)  & g^S_t(v_t)
					\end{bmatrix} \Bigg)
					\quad \mathcal{G}^t-\text{stably},    
				\end{align*}
				where  $X_{i} = \frac{1}{N} \sum_{l = 1}^N \Lambda_l \Lambda_l^\T  \Big( \frac{1}{|\tlq_{li}|} \sum_{s \in \tlq_{li}} F_s F_s^\T - \frac{1}{T} \sum_{s = 1}^T F_s F_s^\T \Big) $ and $\mathbf{X}_{t} = \frac{1}{N} \sum_{i = 1}^N \frac{W_{it}}{\ps}  X_i \Lambda_i \Lambda_i^\T $.
			\item \label{ass:add-mom-three-sum-lam-err-conditional} $\+E\Ls \norm{ \sqrt{\frac{T}{N}} \sum_{i = 1}^N    \Lp  \frac{1}{|\tlq_{li}|} \sum_{s \in \tlq_{li}} F_s F_s^\T - \frac{1}{T} \sum_{s=1}^T F_s F_s^\T \Rp \frac{W_{it}}{P(W_{it} = 1|S_i)} \Lambda_i   e_{it}}^2  \Rs\leq M$ for every $l$.
		\end{enumerate}
	\end{assumpGC}

	\begin{assumpGC}[Additional Assumptions on Factor Model]\label{ass:add-factor-conditional}
		As $\Ttreat \rightarrow \infty$, it holds
		\begin{enumerate}[wide=0pt, widest=99,leftmargin=\parindent, labelsep=*]
			\item 
			$\frac{1}{\sqrt{\Ttreat}} \sum_{\Tcontrol+1}^{T} F_t e_{it} \xrightarrow{d} \calN (0, \Sigma_{F,e_i})$.
			\item $ \+E \Ls \norm{\frac{1}{\sqrt{N \Ttreat}}  \sum_{t = \Tcontrol+1}^T \sum_{j = 1}^N \frac{W_{jt}}{\ps}  \Lambda_j  e_{jt} }^2 \Rs \leq M$ and \\  $ \+E \Ls \norm{\frac{1}{\sqrt{N \Ttreat}}   \sum_{t = \Tcontrol+1}^T \sum_{j = 1}^N Z_t F_t^\T \frac{W_{jt}}{\ps} \Lambda_j  e_{jt} }^2 \Rs \leq M$ for every $i$, $Z \in \+R^{\Ttreat \times L }$ and $\norm{Z_t} \leq M$.
			\item Assumption \ref{ass:mom-clt-conditional}.\ref{ass:asy-normal-add-term-thm-loading-conditional} holds for $v_t$ equal to $\Sigma_{\Lambda,t}^\I \Lambda^\treat_i$ and $\Sigma_{\Lambda,t}^\I  (\Lambda^\treat_i- \Lambda^\control_i)$ under the filtration $\mathcal{G} = \sigma(\cup_{s= 1}^T \mathcal{G}_{Ts} )$ with  $\mathcal{G}_{Ts} = \sigma(\{W_{ij}, j \leq s,  \text{all } i \}, \Lambda)$ generated by $\{W_{ij}, j \leq s,  \text{all } i \}$ and $\Lambda$. 
				\item For $v_t$ equal to $\Sigma_{\Lambda,t}^\I \Lambda^\treat_i$, $\Sigma_{\Lambda,t}^\I  (\Lambda^\treat_i- \Lambda^\control_i)$ or $\Sigma_{\Lambda}^\I \Sigma_{F}^\I F_t$, the elements of the random vector $\*X^S_{\Tcontrol+1} v_{\Tcontrol+1} , \cdots, \*X^S_T v_{T}$ are jointly $\mathcal{G}^t$-stably normal with $\mathrm{ACov}(\*X^S_t v_t, \*X^S_s v_s) = g^{S}_{t,s}(v_t, v_s)$ for all $\Tcontrol \leq t, s \leq T$, where $\mathcal{G}^t$ is defined as $\mathcal{G}^t = \sigma(\cup_{s= 1}^T \mathcal{G}^t_{Ts} )$, $\mathcal{G}^t_{Ts} = \sigma(\{W_{ij}, j \leq s,  \text{all } i \}, \Lambda)$ if $v_t \neq\Sigma_{\Lambda}^\I \Sigma_{F}^\I F_t  $, and  $\mathcal{G}^t_{Ts} = \sigma(\{W_{ij}, j \leq s,  \text{all } i \}, \Lambda, v_t)$ otherwise. 
		\end{enumerate}
	\end{assumpGC}

	Assumption \ref{ass:factor-model} describes an approximate factor structure and is at a similar level of generality as \cite{bai2003inferential}: 
	(1) Assumption \ref{ass:factor-model}.\ref{ass:factor}  ensures that each factor has a nontrivial contribution to the variation in $X$.
	(2) We assume loadings are random but independent of factors and errors in Assumption \ref{ass:factor-model}.\ref{ass:loading}. We could study a factor model conditioned on some particular realization of the loadings, and the analysis would essentially be equivalent to that under the assumption that loadings are nonrandom.
	(3) Assumption \ref{ass:factor-model}.\ref{ass:error} allows errors to be time-series and cross-sectionally weakly correlated.
	(4) Assumption \ref{ass:factor-model}.\ref{ass:factor-error} allows factors and idiosyncratic errors to be weakly correlated.
	(5) Assumption \ref{ass:factor-model}.\ref{ass:eigen} guarantees that each loading and factor can be uniquely identified up to some rotation matrix.
	Additionally, we assume that these aspects also hold if we look at a subset of all time periods (the subset is denoted as $\tlq_{ij}$ in Assumption \ref{ass:factor-model}). Together with Assumption \ref{ass:obs}.2, our covariance matrix estimator \eqref{eqn:cov-est} using incomplete observations has similar properties as the conventional covariance matrix estimator $\frac{1}{T} X X^\T$ using full observations. For example, both  $ \frac{1}{|\tlq_{ij}|} \sum_{t \in \tlq_{ij}} X_{it} X_{jt}$ and $\frac{1}{T} \sum_{t=1}^T X_{it} X_{jt}$ are consistent estimators for  $\Sigma_{ij}$. Moreover, the top $r$ eigenvalues estimated from both matrices are consistent as shown in Lemma \ref{lemma:eigenvalue} in the Internet Appendix, which is the foundation for developing the inferential theory of the factor model estimated from Equation \eqref{eqn:cov-est}.
	
	Assumption \ref{ass:mom-clt} is not required to show the consistency of loadings and factors but is only used to show the asymptotic normality of the estimators. Assumption \ref{ass:mom-clt}.1-4 are closely related to the moment and CLT assumptions in \cite{bai2003inferential}. The first two parts in Assumptions \ref{ass:mom-clt} restrict the second moments of certain averages. The 3rd and 4th points state the necessary central limit theorems. $\frac{\sqrt{T}}{N} \sum_{i = 1}^N  \Lambda_i \Lambda_i^\T  \frac{1}{|\tlq_{ij}|} \sum_{t \in \tlq_{ij}} F_t e_{it} \xrightarrow{d} \calN(0, \Phi_{j})$ is one of the leading terms in the asymptotic distribution of the estimated loadings $\tilde \Lambda_j$. However, $\frac{1}{|\tlq_{ij}|} \sum_{t \in \tlq_{ij}} F_t e_{it}$ varies with $j$ so we cannot separately average over the cross-sectional and time dimension as in the conventional framework. Point 5 is specific to the missing value problem and introduces the correction terms that appear in the asymptotic distribution. They are due to the fact that our estimator averages over different number of observations for different entries in the covariance matrix. 
	Assumption \ref{ass:mom-clt}.\ref{ass:asy-normal-add-term-thm-loading} assumes a central limit theorem for $X_i$ and $\mathbf{X}_t$. The usual CLT of the form 
		\begin{align*}
			\sqrt{T} \begin{bmatrix}
				\tvec(X_i) \\ \tvec(\mathbf{X}_t)
			\end{bmatrix} \xrightarrow{d} \calN \Bigg( 0, \begin{bmatrix}
				\Phi_i & (\mathbf{\Phi}^{\cov}_{i,t})^\T  \\ \mathbf{\Phi}^{\cov}_{i,t}  & \mathbf{\Phi}_t
			\end{bmatrix} \Bigg),    
		\end{align*}
		is not sufficient as $X_i$ and $\mathbf{X}_t$ are multiplied with the random variables $u_i$ and $v_t$ in $\sqrt{T} \begin{bmatrix}
			(X_{i} u_i)^\T & (\mathbf{X}_t^S v_t)^\T
		\end{bmatrix}$. The asymptotic variances of these products are quadratic functions in the elements of those random variables given by $h_i(u_i)$ and $g_t(v_t)$ and take the form of $h_i(u_i) = (u_i^\T \otimes I_r) \Phi_i (u_i \otimes I_r)$ and $g_t(v_t) = (v_t^\T \otimes I_r) \mathbf{\Phi}_t (v_t \otimes I_r)$ respectively.

Assumption \ref{ass:mom-clt}.\ref{ass:asy-normal-add-term-thm-loading} requires a central limit theorem for stable convergence in law which is stronger than the conventional central limit theorem for convergence in distribution. The reason is that the asymptotic variance in Assumption \ref{ass:mom-clt}.\ref{ass:asy-normal-add-term-thm-loading} depends on both $\Lambda_i$ and $F_t$, which are random variables. Hence, we deal with a mixed normal limit and stable convergence in law ensures that the normal distribution of the central limit theorem will be independent of $\Lambda_i$ and $F_t$. Because of the stable convergence in law result, the estimated factors and common components normalized by their random standard deviation will converge to a standard normal distribution. In more detail, Assumption \ref{ass:mom-clt}.\ref{ass:asy-normal-add-term-thm-loading} implies that $X_i$ and $\mathbf{X}_t$ jointly converge $\mathcal{G}^t$-stably for $(N, T) \rightarrow \infty$ to a mixed normal distribution, whose asymptotic variance is random but measurable with respect to the sigma-field $\mathcal{G}^t$. Assumption \ref{ass:mom-clt}.\ref{ass:asy-normal-add-term-thm-loading} is used in Theorem \ref{theorem:asy-normal-equal-weight} to show the asymptotic distribution of the variance correction term whose asymptotic variance is random. Our simplified factor model specified by Assumption \ref{ass:simple-factor-model} is sufficient to guarantee a central limit theorem for stable convergence in law. Proposition \ref{prop:simple-assump-imply-general-assump} shows that the simplified model implies Assumption \ref{ass:mom-clt}.\ref{ass:asy-normal-add-term-thm-loading}.

Assumptions \ref{ass:factor-model-conditional} and \ref{ass:mom-clt-conditional} are the corresponding assumptions for the propensity-weighted estimator with a similar level of generality. The additional Assumptions \ref{ass:add-factor} and \ref{ass:add-factor-conditional} are only needed for the treatment effect tests. The simplified assumptions imply the general assumptions as stated in Proposition \ref{prop:simple-assump-imply-general-assump}.

	\begin{proposition}\label{prop:simple-assump-imply-general-assump}The simplified model is a special case of the general model:
		\begin{enumerate}
			\item  Assumptions \ref{ass:factor-model} and \ref{ass:mom-clt} are satisfied in the simplified model:
			\vspace{-0.2cm}
			\begin{enumerate}[wide=0pt, widest=99,leftmargin=\parindent, labelsep=*]
				\item Assumptions \ref{ass:obs-equal-weight} and \ref{ass:simple-factor-model} imply Assumption \ref{ass:factor-model}.
				\item Assumptions \ref{ass:obs-equal-weight}, \ref{ass:simple-factor-model} and \ref{ass:simple-moment} imply Assumption \ref{ass:mom-clt}.
			\end{enumerate}
				\item Assumptions \ref{ass:factor-model-conditional} and \ref{ass:mom-clt-conditional} are satisfied in the simplified conditional model:
				\begin{enumerate}[wide=0pt, widest=99,leftmargin=\parindent, labelsep=*]
					\item Assumptions \ref{ass:obs-equal-weight}, \ref{ass:obs}, \ref{ass:simple-factor-model} and \ref{ass:simple-factor-model-conditional} imply  \ref{ass:factor-model-conditional}.
					\item Assumptions \ref{ass:obs-equal-weight}, \ref{ass:obs}, \ref{ass:simple-factor-model}, \ref{ass:simple-moment}.2, \ref{ass:simple-factor-model-conditional} and \ref{ass:simple-moment-conditional} imply Assumption \ref{ass:mom-clt-conditional}.
				\end{enumerate}
					\item	Assumptions \ref{ass:add-factor} and \ref{ass:add-factor-conditional} are satisfied in the simplified model.
					Specifically, 
					\begin{enumerate}
						\item Assumptions \ref{ass:obs-equal-weight}, \ref{ass:simple-factor-model} and \ref{ass:simple-moment}  imply Assumption \ref{ass:add-factor}. 
						\item Assumptions \ref{ass:obs-equal-weight}, \ref{ass:obs}, \ref{ass:simple-factor-model}, \ref{ass:simple-moment}.2, \ref{ass:simple-factor-model-conditional} and \ref{ass:simple-moment-conditional} imply Assumption \ref{ass:add-factor-conditional}.
					\end{enumerate}
				\end{enumerate}
			\end{proposition}



	\singlespacing
	\bibliographystyle{econometrica}
	\let\oldbibliography\thebibliography
	\renewcommand{\thebibliography}[1]{%
		\oldbibliography{#1}%
		\setlength{\itemsep}{4pt}%
	}
	{\footnotesize
		\bibliography{reference}
	}
	
	\onehalfspacing

	\newpage
	
	\onehalfspacing
	\begin{titlepage}
	
		\centering{\huge Internet Appendix to Large Dimensional Latent Factor Modeling with Missing Observations and Applications to Causal Inference}
		
		\thispagestyle{empty}
		
		\vspace{2cm}
		

		\begin{abstract}
			
			The Internet Appendix collects the proofs and additional results that support the main text. We show in simulations that our estimators perform well relative to alternative estimators and can be improved even further with an iterative approach. We also confirm that the distribution results, statistical power and robustness to misspecification hold under a variety of simulation setups. Lastly, we collect the detailed proofs for all the theoretical statements. 
			
			\vspace{1cm}
			
			\noindent\textbf{Keywords:} Factor Analysis, Principal Components, Synthetic Control, Causal Inference, Treatment Effect, Missing Entry, Large-Dimensional Panel Data, Large $N$ and $T$, Matrix Completion

			\noindent\textbf{JEL classification:} C14, C38, C55, G12
		\end{abstract}
	\end{titlepage}
	
	\onehalfspacing
	\setcounter{section}{0}
	

	\section{Simulation Results}

	\subsection{Robustness to Missing Patterns with Iterative Estimation}

	\begin{table}[H]
		\centering
		\tcaptab{Comparison with \cite{jin2020factor} and \cite{bai2019matrix} with One Iteration}
		\begin{minipage}{0.17\linewidth}
			\vspace{0.7cm}
			\flushright
			\includegraphics[width=15mm]{plots/obs_pattern/obs_pattern.jpg}
		\end{minipage}\hfill
		\begin{minipage}{0.8\linewidth}
			\begin{tabular}{l|l|cccc}
				\toprule
				Observation Pattern & $W_{it}$ &  XP  &  $\text{XP}_{\text{PROP}}$ &  JMS &      BN \\
				\midrule			
				Random & obs &   {\bf 0.011} & {\bf 0.011} & {\bf 0.011} & -- \\
				& miss &   {\bf 0.012} & {\bf 0.012} & {\bf 0.012} & -- \\
				& all &   {\bf 0.011} & {\bf 0.011} & {\bf 0.012} & -- \\ \midrule
				Simultaneous & obs &   {\bf 0.011} & {\bf 0.011} & 0.021 &  {\bf 0.011} \\
				& miss &   { 0.019} & { 0.019} & 0.092 &  {\bf 0.017} \\
				& all &   { 0.013} & 0.013 & 0.039 &  {\bf 0.012} \\ \midrule
				Staggered & obs &   {\bf 0.015} & {\bf 0.015} & 0.036 &  0.022 \\
				& miss &   {\bf 0.039} & {\bf 0.039} & 0.220 &  0.050 \\
				& all &   {\bf 0.024} & {\bf 0.024} & 0.111 &  0.033 \\ \midrule
				Random  & obs &   {\bf 0.014} & {\bf 0.014} & 0.020 & -- \\
				$W$ depends on $S$& miss &   {\bf 0.017} & {\bf 0.017} & 0.026 & -- \\
				& all &   {\bf 0.015} & {\bf 0.015} & 0.022 & -- \\ \midrule
				Simultaneous & obs &   {\bf 0.019} & 0.023 & 0.028 &  0.059 \\
				$W$ depends on $S$& miss &   0.224 & 0.239 & 0.483 &  {\bf 0.151} \\ 
				& all &   0.118 & 0.128 & 0.251 &  {\bf 0.104} \\ \midrule
				Staggered  & obs &   {\bf 0.014} & 0.015 & 0.032 &  0.029 \\
				$W$ depends on $S$& miss &   {\bf 0.058} & 0.061 & 0.250 &  0.097 \\
				& all &   {\bf 0.030} & 0.031 & 0.111 &  0.053 \\
				\bottomrule
			\end{tabular}
		\end{minipage}
		\bnotetab{This table reports the relative MSE of XP (our benchmark method), $\textnormal{XP}_{\textnormal{PROP}}$ (our propensity-weighted method), JMS \citep{jin2020factor}, and BN \citep{bai2019matrix} on observed, missing and all entries, $N = 250$, $T = 250$. The figures on the left show patterns of missing observations with the shaded entries indicating the missing entries. Bold numbers indicate the best relative model performance. The data is generated as in Table \ref{tab:comparison-jms-bn}. We first impute the missing values with the different methods. In a second step we apply PCA to the full panel with imputed values to estimate the factor model and update the imputed values with the common components. The observed entries stay the same. This process is repeated for multiple iterations. Here we consider one iteration.}
		\label{tab:comparison-jms-bn-iter1}
	\end{table}

	\begin{table}[H]
		\centering
		\tcaptab{Comparison with \cite{jin2020factor} and \cite{bai2019matrix} with Two Iterations}
		\begin{minipage}{0.17\linewidth}
			\vspace{0.7cm}
			\flushright
			\includegraphics[width=15mm]{plots/obs_pattern/obs_pattern.jpg}
		\end{minipage}\hfill
		\begin{minipage}{0.8\linewidth}
			\begin{tabular}{l|l|cccc}
				\toprule
				Observation Pattern & $W_{it}$ &  XP  &  $\text{XP}_{\text{PROP}}$ &  JMS &      BN \\
				\midrule			
				Random & obs &   {\bf 0.011} & {\bf 0.011} & {\bf 0.011} &  -- \\
				& miss &   {\bf 0.011} & {\bf 0.011} & {\bf 0.011} &  -- \\
				& all &   {\bf 0.011} & {\bf 0.011} & {\bf 0.011} &  -- \\ \midrule
				Simultaneous & obs &   {\bf 0.011} & {\bf 0.011} & 0.015 &  {\bf 0.011} \\
				& miss &   0.018 & 0.018 & 0.053 &  {\bf 0.017} \\
				& all &   0.013 & 0.013 & 0.025 &  {\bf 0.012} \\ \midrule
				Staggered & obs &   {\bf 0.014} & {\bf 0.014} & 0.028 &  0.016 \\
				& miss &   {\bf 0.036} & {\bf 0.036} & 0.163 &  0.040 \\
				& all &   {\bf 0.023} & {\bf 0.023} & 0.083 &  0.026 \\ \midrule
				Random  & obs &   {\bf 0.013} & {\bf 0.013} & 0.014 & -- \\
				$W$ depends on $S$& miss &   {\bf 0.015} & {\bf 0.015} & 0.018 & -- \\
				& all &   {\bf 0.014} & {\bf 0.014} & 0.016 & -- \\ \midrule
				Simultaneous & obs &   {\bf 0.017} & 0.020 & 0.025 &  0.050 \\
				$W$ depends on $S$& miss &   0.220 & 0.230 & 0.456 &  {\bf 0.128} \\
				& all &   0.116 & 0.122 & 0.236 &  {\bf 0.088} \\ \midrule
				Staggered & obs &   {\bf 0.014} & {\bf 0.014} & 0.024 &  0.019 \\
				$W$ depends on $S$& miss &   {\bf 0.054} & 0.056 & 0.195 &  0.072 \\
				& all &   {\bf 0.028} & 0.029 & 0.086 &  0.038 \\
				\bottomrule
			\end{tabular}
		\end{minipage}
		\bnotetab{This table reports the relative MSE of XP (our benchmark method), $\textnormal{XP}_{\textnormal{PROP}}$ (our propensity-weighted method), JMS \citep{jin2020factor}, and BN \citep{bai2019matrix} on observed, missing and all entries, $N = 250$, $T = 250$. The figures on the left show patterns of missing observations with the shaded entries indicating the missing entries. Bold numbers indicate the best relative model performance. The data is generated as in Table \ref{tab:comparison-jms-bn}.  We first impute the missing values with the different methods. In a second step we apply PCA to the full panel with imputed values to estimate the factor model and update the imputed values with the common components. The observed entries stay the same. This process is repeated for multiple iterations. Here we consider two iterations.}
		
		\label{tab:comparison-jms-bn-iter2}
	\end{table}

	\begin{table}[H]
		\centering
		\tcaptab{Comparison with \cite{jin2020factor} and \cite{bai2019matrix} with Three Iterations}
		\begin{minipage}{0.17\linewidth}
			\vspace{0.7cm}
			\flushright
			\includegraphics[width=15mm]{plots/obs_pattern/obs_pattern.jpg}
		\end{minipage}\hfill
		\begin{minipage}{0.8\linewidth}
			\begin{tabular}{l|l|cccc}
				\toprule
				Observation Pattern & $W_{it}$ &  XP  &  $\text{XP}_{\text{PROP}}$ &  JMS &      BN \\
				\midrule			
				Random & obs &   {\bf 0.011} & {\bf 0.011} & {\bf 0.011} &  -- \\
				& miss &   {\bf 0.011} & {\bf 0.011} & {\bf 0.011} &  -- \\
				& all &   {\bf 0.011} & {\bf 0.011} & {\bf 0.011} &  -- \\ \midrule
				Simultaneous & obs &   {\bf 0.011} & {\bf 0.011} & 0.013 &  {\bf 0.011} \\
				& miss &   0.018 & 0.018 & 0.035 &  {\bf 0.017} \\
				& all &   0.013 & 0.013 & 0.019 &  {\bf 0.012} \\ \midrule
				Staggered & obs &   {\bf 0.014} & {\bf 0.014} & 0.023 &  0.015 \\
				& miss &   {\bf 0.035} & {\bf 0.035} & 0.126 &  0.036 \\
				& all &   {\bf 0.022} & {\bf 0.022} & 0.065 &  0.023 \\ \midrule
				Random  & obs &   {\bf 0.013} & {\bf 0.013} & {\bf 0.013} & -- \\
				$W$ depends on $S$ & miss &   {\bf 0.014} & {\bf 0.014} & 0.015 & -- \\
				& all &   {\bf 0.013} & {\bf 0.013} & 0.014 & -- \\ \midrule
				Simultaneous & obs &   {\bf 0.017} & 0.018 & 0.023 &  0.048 \\
				$W$ depends on $S$ & miss &   0.217 & 0.224 & 0.435 &  {\bf 0.122} \\
				& all &   0.114 & 0.119 & 0.224 &  {\bf 0.084} \\ \midrule
				Staggered & obs &   {\bf 0.014} & {\bf 0.014} & 0.021 &  0.016 \\
				$W$ depends on $S$ & miss &   {\bf 0.051} & 0.052 & 0.159 &  0.059 \\
				& all &   {\bf 0.027} & {\bf 0.027} & 0.070 &  0.032 \\
				\bottomrule
			\end{tabular}
		\end{minipage}
		\bnotetab{This table reports the relative MSE of XP (our benchmark method), $\textnormal{XP}_{\textnormal{PROP}}$ (our propensity-weighted method), JMS \citep{jin2020factor}, and BN \citep{bai2019matrix} on observed, missing and all entries, $N = 250$, $T = 250$. The figures on the left show patterns of missing observations with the shaded entries indicating the missing entries. Bold numbers indicate the best relative model performance. The data is generated as in Table \ref{tab:comparison-jms-bn}. We first impute the missing values with the different methods. In a second step we apply PCA to the full panel with imputed values to estimate the factor model and update the imputed values with the common components. The observed entries stay the same. This process is repeated for multiple iterations. Here we consider three iterations.}
		\label{tab:comparison-jms-bn-iter3}
	\end{table}

	\begin{table}[H]
		\centering
		\tcaptab{Comparison with \cite{jin2020factor} and \cite{bai2019matrix} ($N = 100, T = 150$)}
		\begin{minipage}{0.17\linewidth}
			\vspace{0.7cm}
			\flushright
			\includegraphics[width=23.5mm]{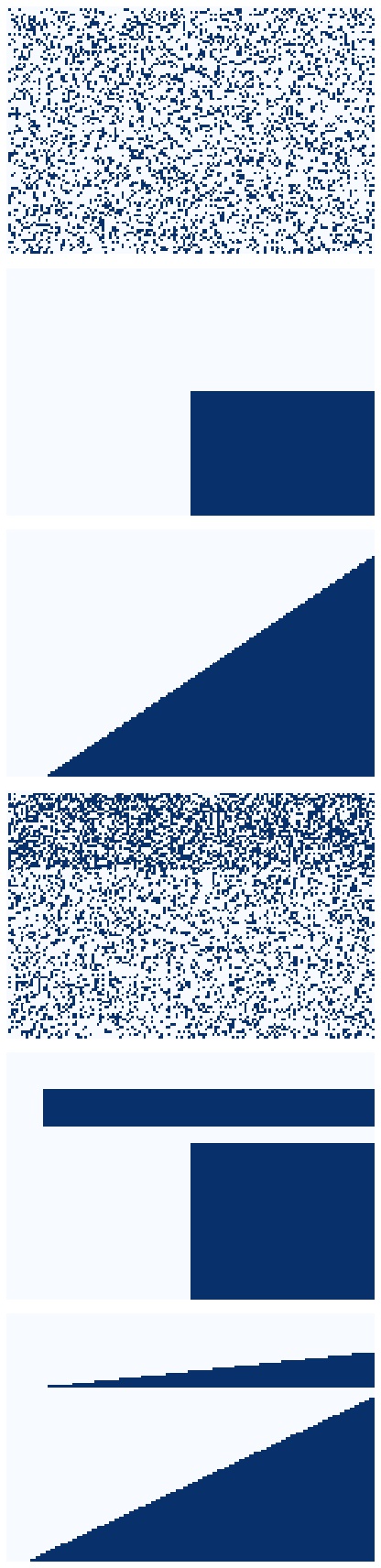}
		\end{minipage}\hfill
		\begin{minipage}{0.8\linewidth}
			\begin{tabular}{l|l|cccc}
				\toprule
				Observation Pattern & $W_{it}$ &  XP  &  $\text{XP}_{\text{PROP}}$ &  JMS &      BN \\
				\midrule			
				Random & obs &   {\bf 0.028} & {\bf 0.028} & 0.046 & -- \\
				& miss &   {\bf 0.031} & {\bf 0.031} & 0.044 & -- \\ 
				& all &   {\bf 0.029} & {\bf 0.029} & 0.046 & -- \\ \midrule
				Simultaneous & obs &   {\bf 0.025} & {\bf 0.025} & 0.142 &  {\bf 0.025} \\
				& miss &   0.042 & 0.042 & 0.202 &  {\bf 0.037} \\
				& all &   0.029 & 0.029 & 0.158 &  {\bf 0.028} \\ \midrule
				Staggered & obs &   {\bf 0.033} & {\bf 0.033} & 0.388 &  0.163 \\
				& miss &   {\bf 0.087} & {\bf 0.087} & 0.341 &  0.205 \\
				& all &   {\bf 0.054} & {\bf 0.054} & 0.370 &  0.179 \\ \midrule
				Random & obs &   {\bf 0.035} &  0.036 & 0.089 & -- \\
				$W$ depends on $S$ & miss &   {\bf 0.043} & {\bf 0.043} & 0.094 & -- \\
				& all &   {\bf 0.038} & {\bf 0.038} & 0.091 & -- \\ \midrule
				Simultaneous  & obs &   {\bf 0.038} & 0.045 & 0.630 &  0.092 \\ 
				$W$ depends on $S$ & miss &   {\bf 0.113} & 0.144 & 0.417 &  0.130 \\
				& all &   {\bf 0.072} & 0.091 & 0.534 &  0.109 \\ \midrule
				Staggered & obs &   {\bf 0.032} & 0.034 & 0.338 &  0.127 \\
				$W$ depends on $S$ 	& miss &   {\bf 0.091} & 0.101 & 0.329 &  0.184 \\
				& all &   {\bf 0.053} & 0.058 & 0.336 &  0.147 \\
				\bottomrule
			\end{tabular}
		\end{minipage}
		\bnotetab{This table reports the relative MSE of XP (our benchmark method), $\textnormal{XP}_{\textnormal{PROP}}$ (our propensity-weighted method), JMS \citep{jin2020factor}, and BN \citep{bai2019matrix} on observed, missing and all entries, $N = 100$, $T = 150$. The figures on the left show patterns of missing observations with the shaded entries indicating the missing entries. Bold numbers indicate the best relative model performance. We generate a two-factor model and a unit-specific characteristic $S_i = \mathbbm{1}(\Lambda_{i,2} \geq -0.5 )$. The observation patterns are generated as follows. (1) {\it Missing uniformly at random}: Entries are observed independently with probability $p = 0.75$. (2) {\it Simultaneous treatment adoption}:  $50\% $ randomly selected units adopt the treatment from time $0.5 \cdot T$ and the remaining $50\% $ units stay in the control group until the end. (3) {\it Staggered treatment adoption}:  All units are in the control group for $t < 0.1\cdot T$. At time $0.1 \cdot T \leq  t \leq T$, $\frac{t-0.1 \cdot T}{T} \%$ units are in the treated group. The remaining $10\%$ units stay in the control group until the end. (4) {\it Missing at random conditional on $S_i$}: Entries are observed independently with probability $p_{it} = 0.75$ $S_i = 1$, and $p_{it} = 0.5$ if $S_i = 0$. (5) {\it Simultaneous treatment adoption conditional on $S_i$}: For the units with $S_i = 1$, $90\% $ units adopt the treatment from time $0.5 \cdot T$ and $10\% $ units stay in the control group until the end. For the units with $S_i= 0$, $50\%$ units adopt the treatment from time $0.02\cdot T$ and  $50\%$ units stay in the control group until the end. (6) {\it Staggered treatment adoption conditional on $S_i$}:  All units are in the control group for $t < 0.05\cdot T$. For the units with $S_i = 1$, at time $0.05 \cdot T \leq  t \leq T$, $\frac{t-0.05 \cdot T}{T} \%$ units are in the treated group with the remaining $5\%$ units staying in the control group until the end. For the units with $S_i = 0$, at time $0.05 \cdot T \leq  t \leq T$, $\frac{t-0.05 \cdot T}{1.96 T} \%$ units are in the treated group with the remaining $50\%$ units to stay in the control group until the end. We run 100 Monte Carlo simulations.}
		\label{tab:comparison-jms-bn-iter0-N100T150}
	\end{table}

	\begin{table}[H]
		\centering
		\tcaptab{Comparison with \cite{jin2020factor} and \cite{bai2019matrix} with One Iteration ($N = 100, T = 150$)}
		\begin{minipage}{0.17\linewidth}
			\vspace{0.7cm}
			\flushright
			\includegraphics[width=23.5mm]{plots/obs_pattern/obs_pattern_N_100_T_150.jpg}
		\end{minipage}\hfill
		\begin{minipage}{0.8\linewidth}
			\begin{tabular}{l|l|cccc}
				\toprule
				Observation Pattern & $W_{it}$ &  XP  &  $\text{XP}_{\text{PROP}}$ &  JMS &      BN \\
				\midrule			
				Random & obs &   {\bf 0.023} & {\bf 0.023} & 0.024 & -- \\
				& miss &   {\bf 0.025} & {\bf 0.025} & 0.026 & -- \\
				& all &   {\bf 0.023} & {\bf 0.023} & 0.024 & -- \\ \midrule
				Simultaneous & obs &   \textbf{0.023} & \textbf{0.023} & 0.032 & \textbf{0.023} \\
				& miss &   {\bf 0.039} & {\bf 0.039} & 0.109 & 0.037 \\
				& all &   0.027 & 0.027 & 0.052 & {\bf 0.026} \\ \midrule
				Staggered & obs &   {\bf 0.030} & {\bf 0.030} & 0.048 & 0.047 \\
				& miss &   {\bf 0.080} & {\bf 0.080} & 0.245 & 0.124 \\
				& all &   {\bf 0.050} & {\bf 0.050} & 0.127 & 0.077 \\ \midrule
				Random & obs &   {\bf 0.027} & {\bf 0.027} & 0.033 & -- \\
				$W$ depends on $S$ & miss &  \textbf{0.032} & \textbf{0.032} & 0.046 & -- \\
				& all &   {\bf 0.029} & {\bf 0.029} & 0.037 & -- \\ \midrule
				Simultaneous & obs &   {\bf 0.034} & 0.038 & 0.055 & 0.043 \\
				$W$ depends on $S$ & miss &   0.107 & 0.128 & 0.328 & {\bf 0.080} \\
				& all &   0.067 & 0.079 & 0.181 & {\bf 0.060} \\ \midrule
				Staggered & obs &   {\bf 0.028} & 0.029 & 0.043 & 0.041 \\ 
				$W$ depends on $S$ & miss &   {\bf 0.083} & 0.089 & 0.235 & 0.106 \\
				& all &   {\bf 0.048} & 0.051 & 0.113 & 0.064 \\
				\bottomrule
			\end{tabular}
		\end{minipage}
		\bnotetab{This table reports the relative MSE of XP (our benchmark method), $\textnormal{XP}_{\textnormal{PROP}}$ (our propensity-weighted method), JMS \citep{jin2020factor}, and BN \citep{bai2019matrix} on observed, missing and all entries, $N = 100$, $T = 150$. The figures on the left show patterns of missing observations with the shaded entries indicating the missing entries. Bold numbers indicate the best relative model performance. The data is generated as in Figure \ref{tab:comparison-jms-bn-iter0-N100T150}. We first impute the missing values with the different methods. In a second step we apply PCA to the full panel with imputed values to estimate the factor model and update the imputed values with the common components. The observed entries stay the same. This process is repeated for multiple iterations. Here we consider one iteration.}
		\label{tab:comparison-jms-bn-iter1-N100T150}
	\end{table}

	\begin{table}[H]
		\centering
		\tcaptab{Comparison with \cite{jin2020factor} and \cite{bai2019matrix} with Two Iterations ($N = 100, T = 150$)}
		\begin{minipage}{0.17\linewidth}
			\vspace{0.7cm}
			\flushright
			\includegraphics[width=23.5mm]{plots/obs_pattern/obs_pattern_N_100_T_150.jpg}
		\end{minipage}\hfill
		\begin{minipage}{0.8\linewidth}
			\begin{tabular}{l|l|cccc}
				\toprule
				Observation Pattern & $W_{it}$ &  XP  &  $\text{XP}_{\text{PROP}}$ &  JMS &      BN \\
				\midrule			
				Random & obs &   {\bf 0.023} & {\bf 0.023} & 0.023 & -- \\
				& miss &   {\bf 0.024} & {\bf 0.024} & 0.024 & -- \\
				& all &   {\bf 0.023} & {\bf 0.023} & 0.023 & --\\ \midrule
				Simultaneous & obs &   {\bf 0.023} & {\bf 0.023} & 0.027 & {\bf 0.023} \\
				& miss &    0.038 &  0.038 & 0.070 & {\bf 0.037} \\
				& all &    0.027 & 0.027 & 0.038 & {\bf 0.026 }\\ \midrule
				Staggered & obs &   {\bf 0.029} & {\bf 0.029} & 0.040 & 0.035 \\
				& miss &   {\bf 0.077} & {\bf 0.077} & 0.189 & 0.099 \\
				& all &   {\bf 0.048} & {\bf 0.048} & 0.100 & 0.061 \\ \midrule
				Random  & obs &   {\bf 0.026} & {\bf 0.026} & 0.027 & -- \\
				$W$ depends on $S$& miss &   {\bf 0.029} & {\bf 0.029} & 0.033 & -- \\
				& all &   {\bf 0.027} & {\bf 0.027} & 0.029 & -- \\ \midrule
				Simultaneous & obs &   {\bf 0.033} & 0.036 & 0.048 & 0.038 \\
				$W$ depends on $S$& miss &   0.102 & 0.118 & 0.271 & {\bf 0.069} \\
				& all &   0.065 & 0.073 & 0.151 & {\bf 0.052} \\ \midrule
				Staggered & obs &   {\bf 0.028} & {\bf 0.028} & 0.036 & 0.032 \\
				$W$ depends on $S$& miss &   {\bf 0.078} & 0.082 & 0.182 & 0.084 \\
				& all &   {\bf 0.046} & 0.047 & 0.089 & 0.051 \\
				\bottomrule
			\end{tabular}
		\end{minipage}
		\bnotetab{This table reports the relative MSE of XP (our benchmark method), $\textnormal{XP}_{\textnormal{PROP}}$ (our propensity-weighted method), JMS \citep{jin2020factor}, and BN \citep{bai2019matrix} on observed, missing and all entries, $N = 100$, $T = 150$. The figures on the left show patterns of missing observations with the shaded entries indicating the missing entries. Bold numbers indicate the best relative model performance. The data is generated as in Figure \ref{tab:comparison-jms-bn-iter0-N100T150}. We first impute the missing values with the different methods. In a second step we apply PCA to the full panel with imputed values to estimate the factor model and update the imputed values with the common components. The observed entries stay the same. This process is repeated for multiple iterations. Here we consider two iterations.}
		\label{tab:comparison-jms-bn-iter2-N100T150}
	\end{table}

	\begin{table}[H]
		\centering
		\tcaptab{Comparison with \cite{jin2020factor} and \cite{bai2019matrix} with Three  Iterations ($N = 100, T = 150$)}
		\begin{minipage}{0.17\linewidth}
			\vspace{0.7cm}
			\flushright
			\includegraphics[width=23.5mm]{plots/obs_pattern/obs_pattern_N_100_T_150.jpg}
		\end{minipage}\hfill
		\begin{minipage}{0.8\linewidth}
			\begin{tabular}{l|l|cccc}
				\toprule
				Observation Pattern & $W_{it}$ &  XP  &  $\text{XP}_{\text{PROP}}$ &  JMS &      BN \\
				\midrule	
				Random & obs &   {\bf 0.023} & {\bf 0.023} & {\bf 0.023} & -- \\
				& miss &   {\bf 0.024} & {\bf 0.024 }& {\bf 0.024} & -- \\
				& all &   {\bf 0.023} & {\bf 0.023} & {\bf 0.023} & -- \\  \midrule
				Simultaneous & obs &   {\bf 0.023} & {\bf 0.023} & 0.024 & {\bf 0.023} \\
				& miss &   0.038 & 0.038 & 0.053 & {\bf 0.037} \\
				& all &   {\bf 0.026} & {\bf 0.026} & 0.031 & {\bf 0.026} \\  \midrule
				Staggered & obs &   {\bf 0.029} & {\bf 0.029} & 0.036 & 0.032 \\
				& miss &   {\bf 0.074} & {\bf 0.074} & 0.153 & 0.088 \\
				& all &   {\bf 0.047} & {\bf 0.047} & 0.083 & 0.054 \\  \midrule
				Random& obs &   {\bf 0.026} & {\bf 0.026} & {\bf 0.026} & -- \\
				$W$ depends on $S$ & miss &   {\bf 0.028} & {\bf 0.028} & 0.029 & -- \\
				& all &   {\bf 0.027} & {\bf 0.027} & {\bf 0.027} & -- \\  \midrule
				Simultaneous & obs &   {\bf 0.033} & 0.034 & 0.043 & 0.037 \\
				$W$ depends on $S$ & miss &   0.099 & 0.110 & 0.229 & {\bf 0.066} \\
				& all &   0.063 & 0.069 & 0.129 & {\bf 0.050} \\  \midrule
				Staggered & obs &   {\bf 0.027} & 0.028 & 0.033 & 0.029 \\
				$W$ depends on $S$ & miss &   {\bf 0.074} & 0.077 & 0.148 & {\bf 0.074} \\
				& all &   {\bf 0.044} & 0.045 & 0.074 & 0.045 \\		
				\bottomrule
			\end{tabular}
		\end{minipage}
		\bnotetab{This table reports the relative MSE of XP (our benchmark method), $\textnormal{XP}_{\textnormal{PROP}}$ (our propensity-weighted method), JMS \citep{jin2020factor}, and BN \citep{bai2019matrix} on observed, missing and all entries, $N = 100$, $T = 150$. The figures on the left show patterns of missing observations with the shaded entries indicating the missing entries. Bold numbers indicate the best relative model performance. The data is generated as in Figure \ref{tab:comparison-jms-bn-iter0-N100T150}. We first impute the missing values with the different methods. In a second step we apply PCA to the full panel with imputed values to estimate the factor model and update the imputed values with the common components. The observed entries stay the same. This process is repeated for multiple iterations. Here we consider three iterations.}
		\label{tab:comparison-jms-bn-iter3-N100T150}
	\end{table}
	
	\pagebreak
	
	\subsection{Asymptotic Distribution}
	
	\begin{figure}[H]
		\tcapfig{Histograms of Standardized Loadings, Factors, and Common Components for Missing at Random}
		\centering
		\begin{subfigure}{0.53\textwidth}
			\centering
			\includegraphics[width=1\linewidth]{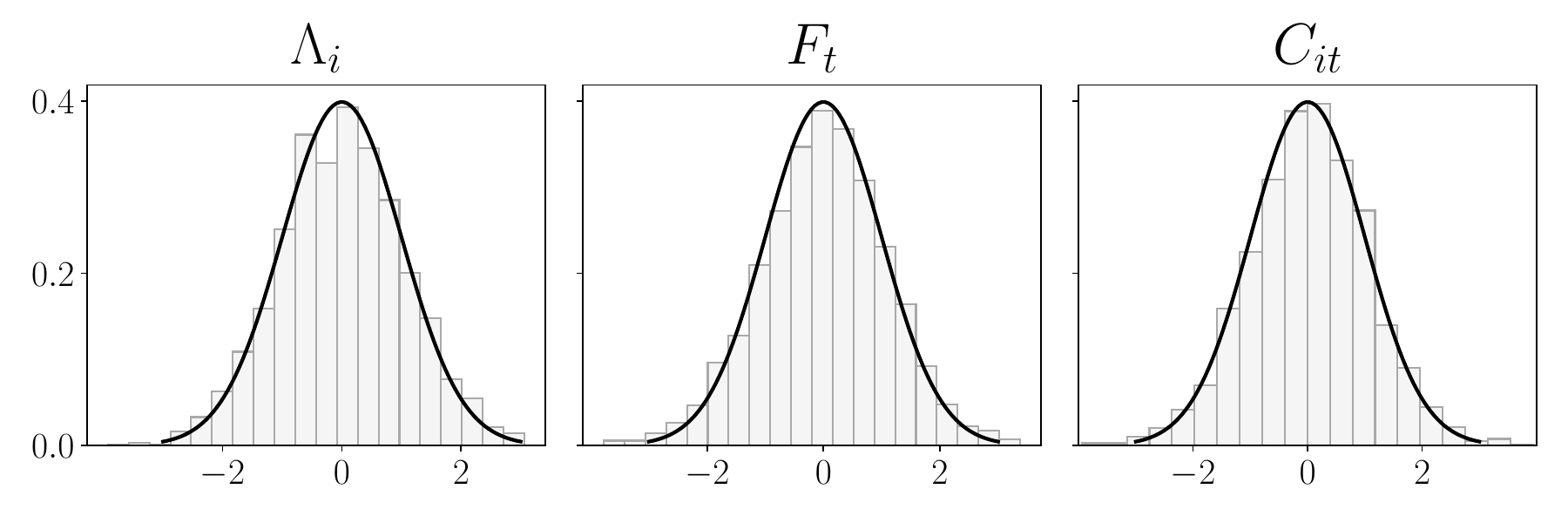}
			\caption{$Y_{it}$ is observed ($N = 250, T = 250$)}
		\end{subfigure}%
		\begin{subfigure}{0.53\textwidth}
			\centering
			\includegraphics[width=1\linewidth]{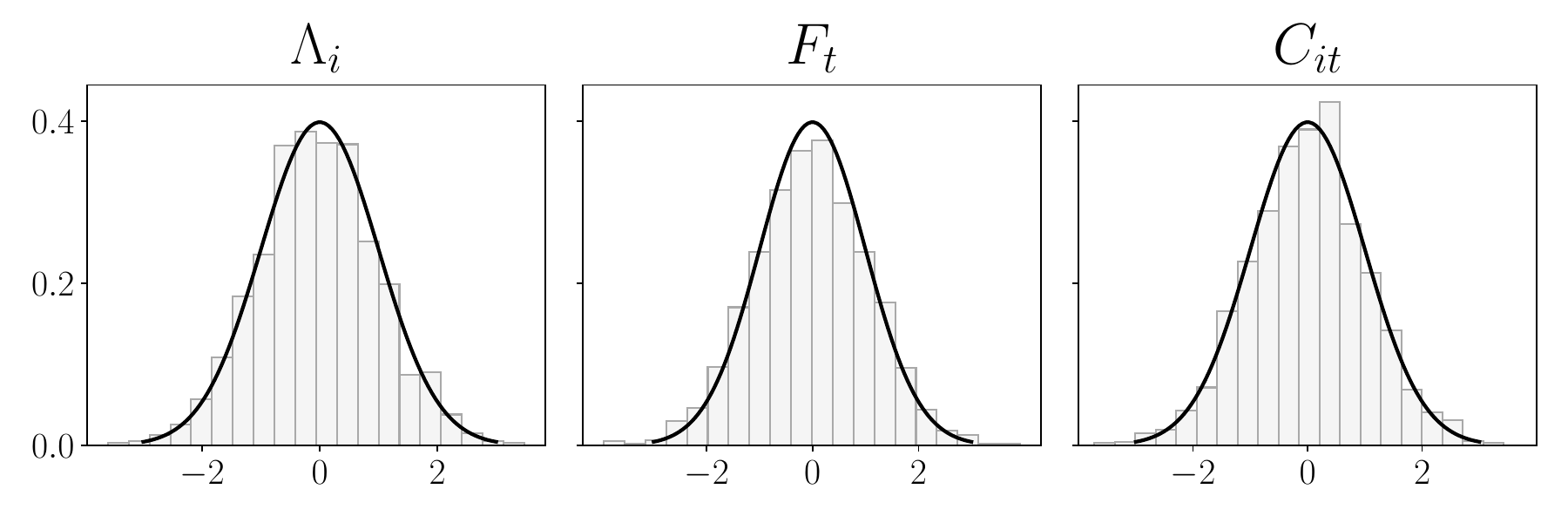}
			\caption{$Y_{it}$ is missing ($N = 250, T = 250$)}
		\end{subfigure}
		\begin{subfigure}{0.53\textwidth}
			\centering
			\includegraphics[width=1\linewidth]{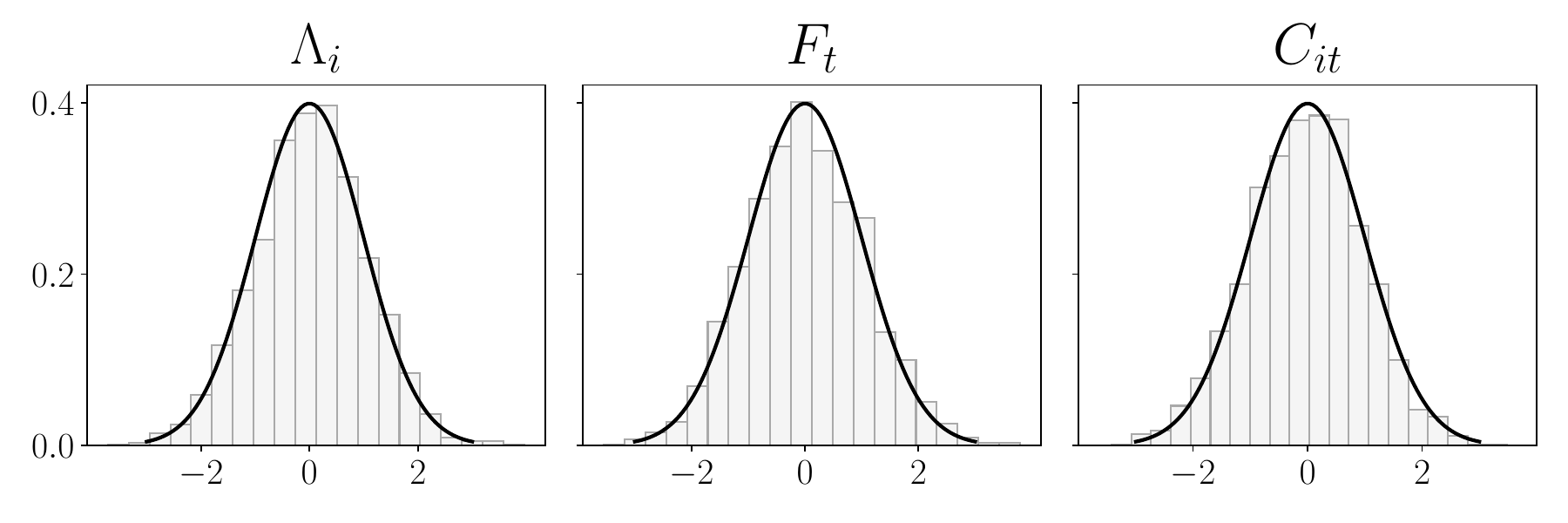}
			\caption{$Y_{it}$ is observed ($N = 250, T = 500$)}
		\end{subfigure}%
		\begin{subfigure}{0.53\textwidth}
			\centering
			\includegraphics[width=1\linewidth]{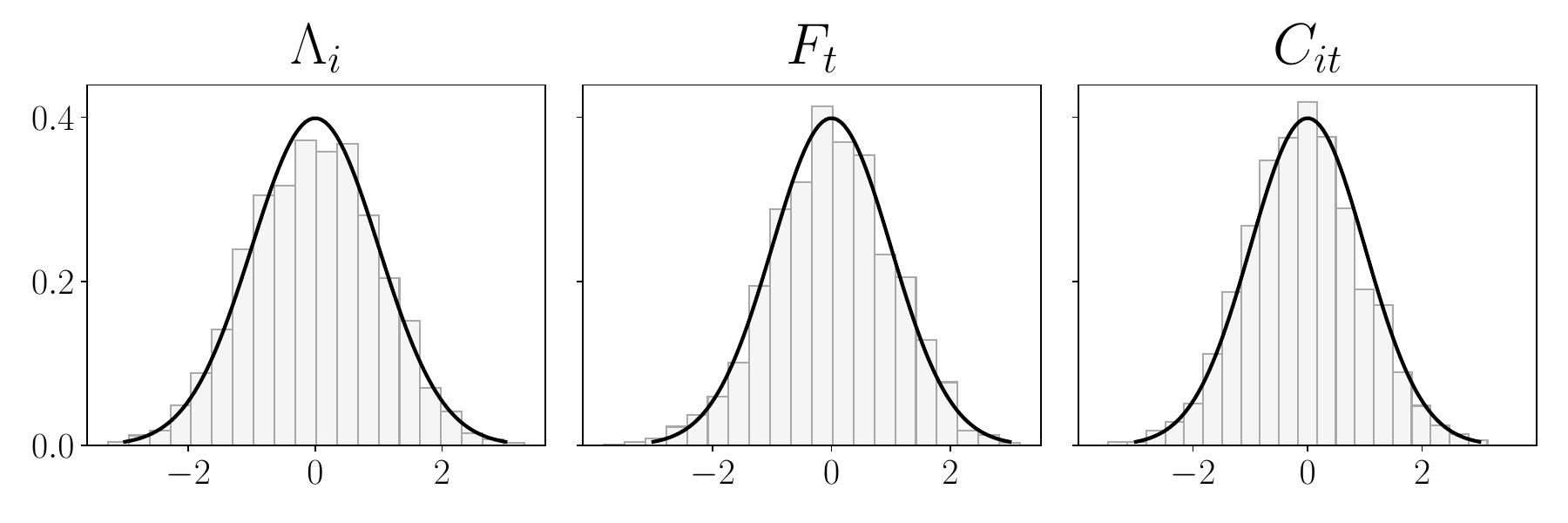}
			\caption{$Y_{it}$ is missing ($N = 250, T = 500$)}
		\end{subfigure}
		\begin{subfigure}{0.53\textwidth}
			\centering
			\includegraphics[width=1\linewidth]{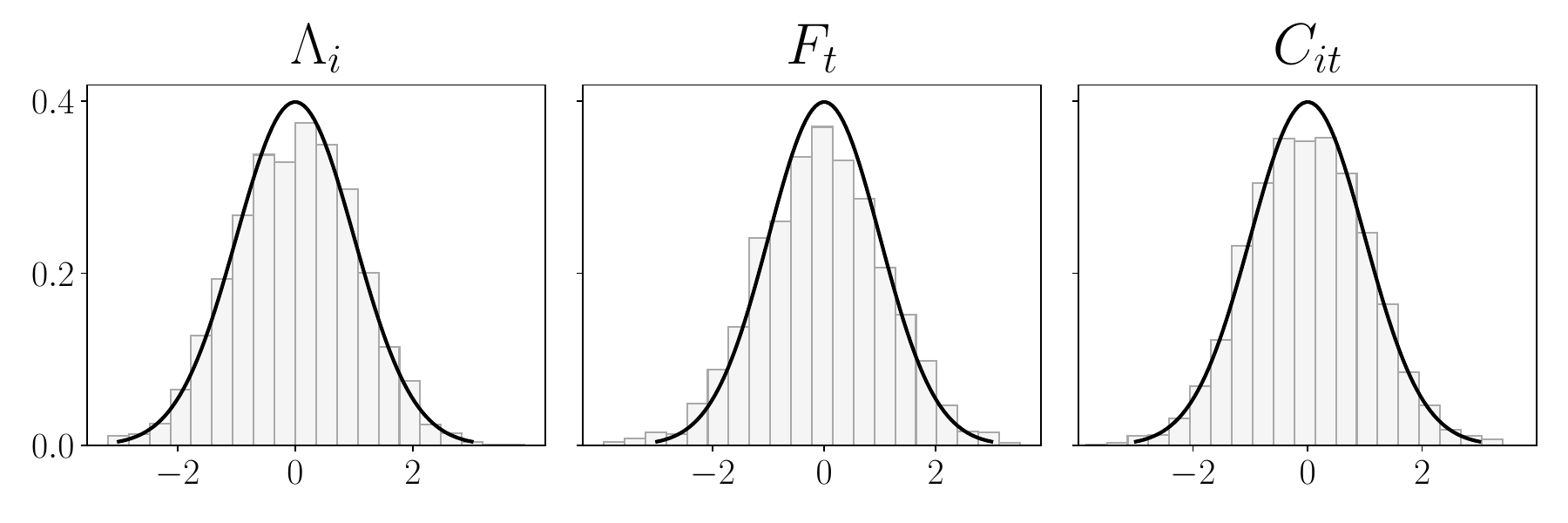}
			\caption{$Y_{it}$ is observed ($N = 500, T = 250$)}
		\end{subfigure}%
		\begin{subfigure}{0.53\textwidth}
			\centering
			\includegraphics[width=1\linewidth]{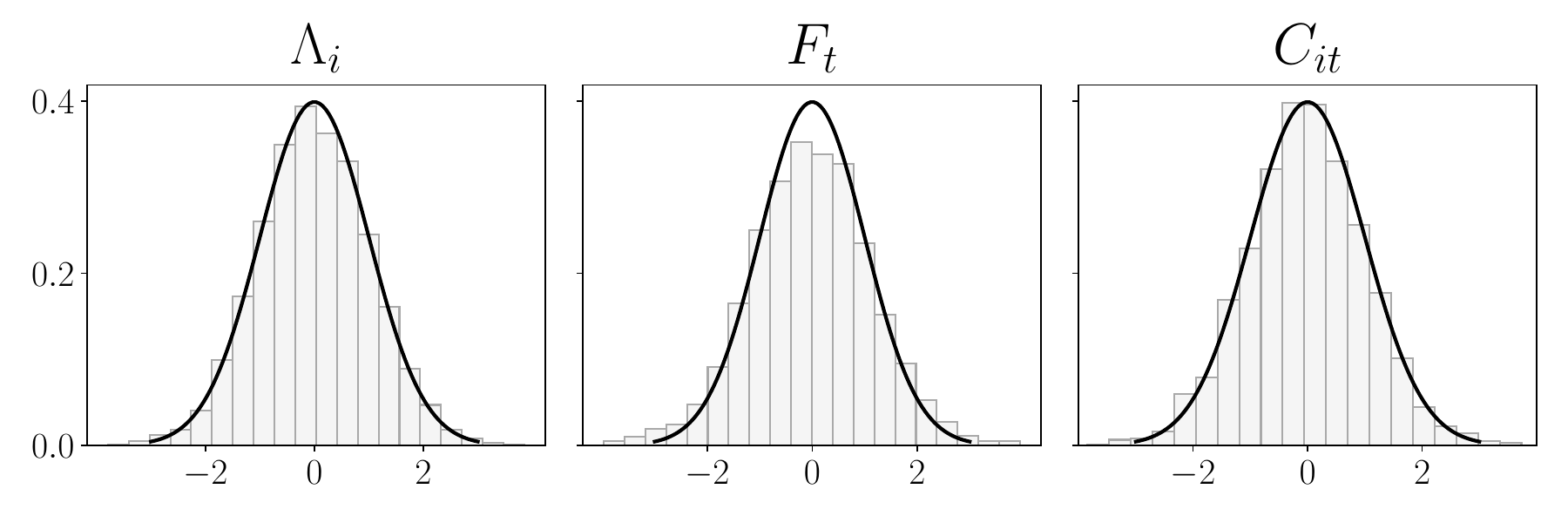}
			\caption{$Y_{it}$ is missing ($N = 500, T = 250$)}
		\end{subfigure}
		\bnotefig{These figures show histograms of estimated standardized loadings, factors, and common components, where entries are missing at random. The normal density function is superimposed on the histograms. The data is simulated with a one-factor model $X_{it} = \Lambda_i \cdot F_t + e_{it}$, where $F_t \stackrel{\text{i.i.d.}}{\sim}  N(0,1)$, $\Lambda_i \stackrel{\text{i.i.d.}}{\sim}  N(0,1)$ and $e_{it} \stackrel{\text{i.i.d.}}{\sim}  N(0,1)$.  The observation pattern depends on an observed state variable defined as $S_i = \mathbbm{1}(\Lambda_i \geq 0 )$. Entries are observed independently with probability 0.75 if $S_i = 1$, and 0.5 if $S_i = 0$. We run 2,000 Monte Carlo simulations.}
		\label{fig:hist-random}
	\end{figure}

	\begin{figure}[H]
		\tcapfig{Histograms of Standardized Loadings, Factors, and Common Components for Simultaneous Treatment Adoption}
		\centering
		\begin{subfigure}{0.53\textwidth}
			\centering
			\includegraphics[width=1\linewidth]{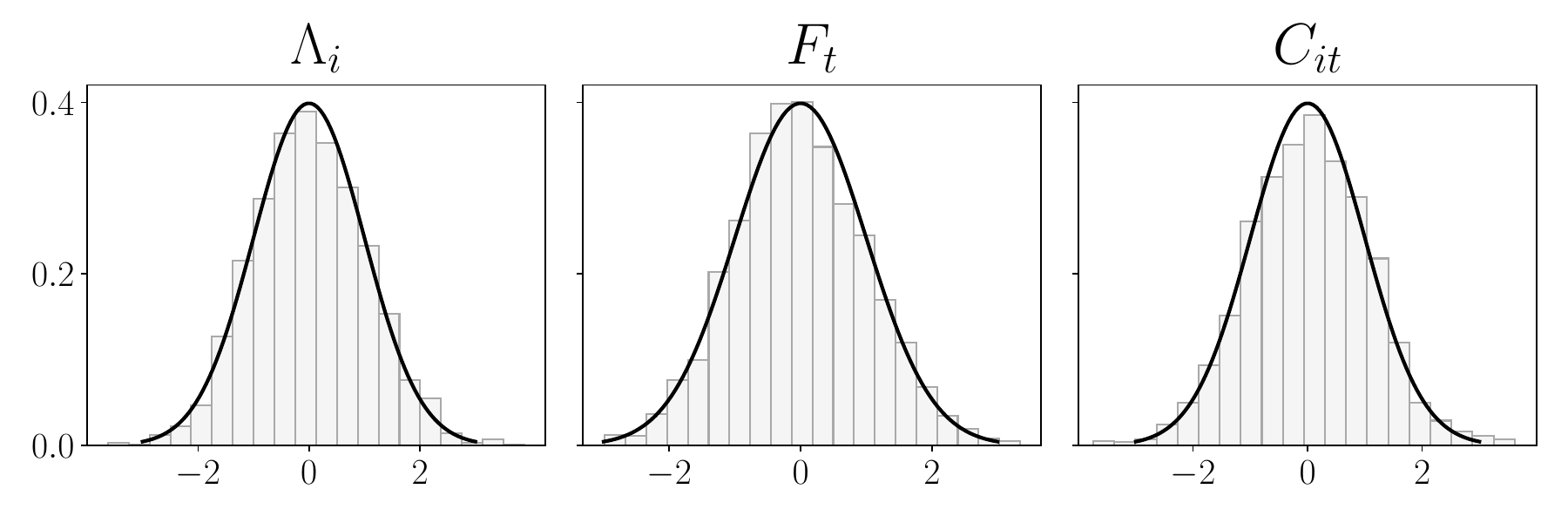}
			\caption{$Y_{it}$ is observed ($N = 250, T = 250$)}
		\end{subfigure}%
		\begin{subfigure}{0.53\textwidth}
			\centering
			\includegraphics[width=1\linewidth]{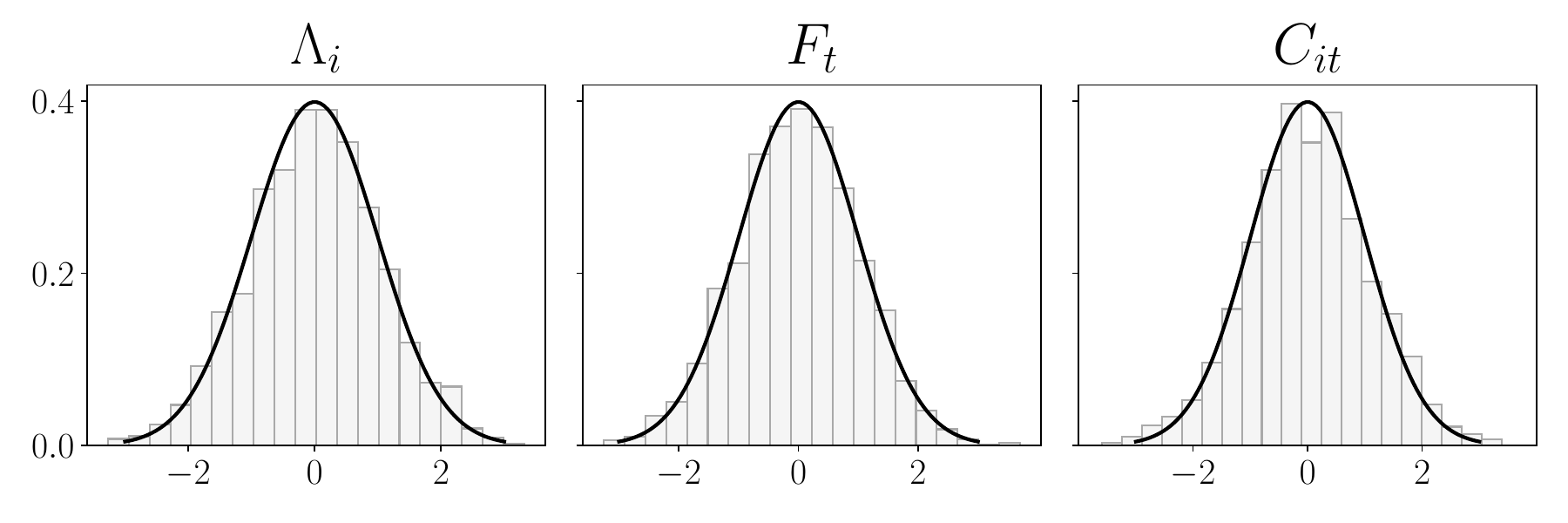}
			\caption{$Y_{it}$ is missing ($N = 250, T = 250$)}
		\end{subfigure}
		\begin{subfigure}{0.53\textwidth}
			\centering
			\includegraphics[width=1\linewidth]{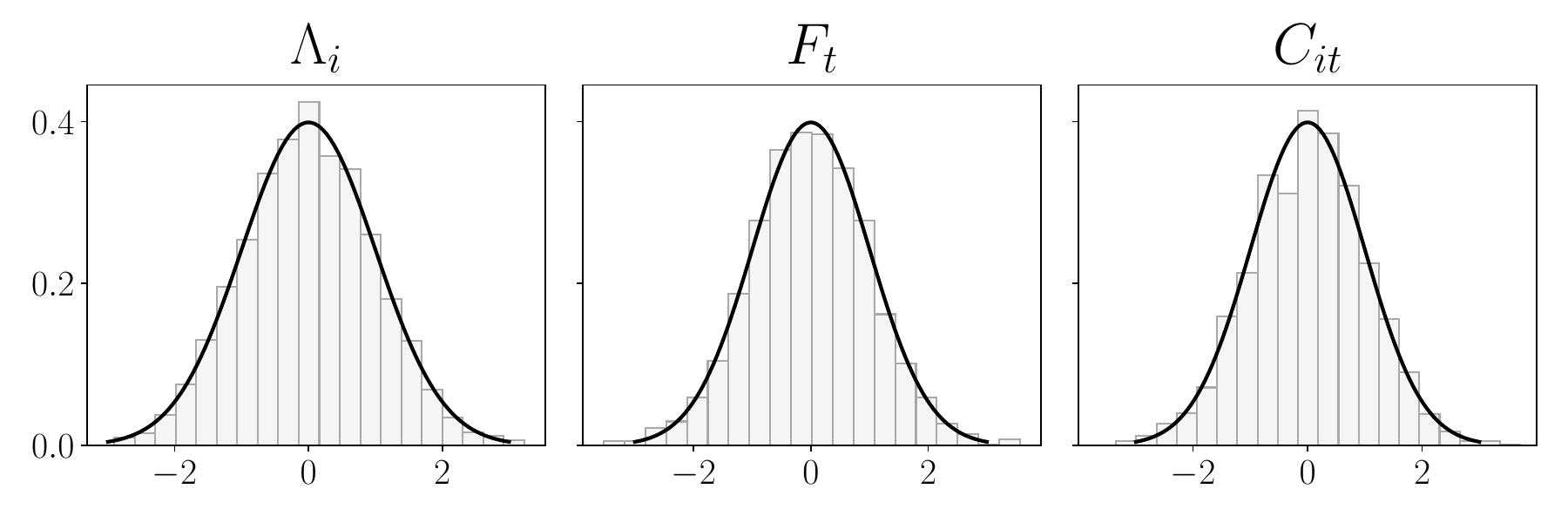}
			\caption{$Y_{it}$ is observed ($N = 250, T = 500$)}
		\end{subfigure}%
		\begin{subfigure}{0.53\textwidth}
			\centering
			\includegraphics[width=1\linewidth]{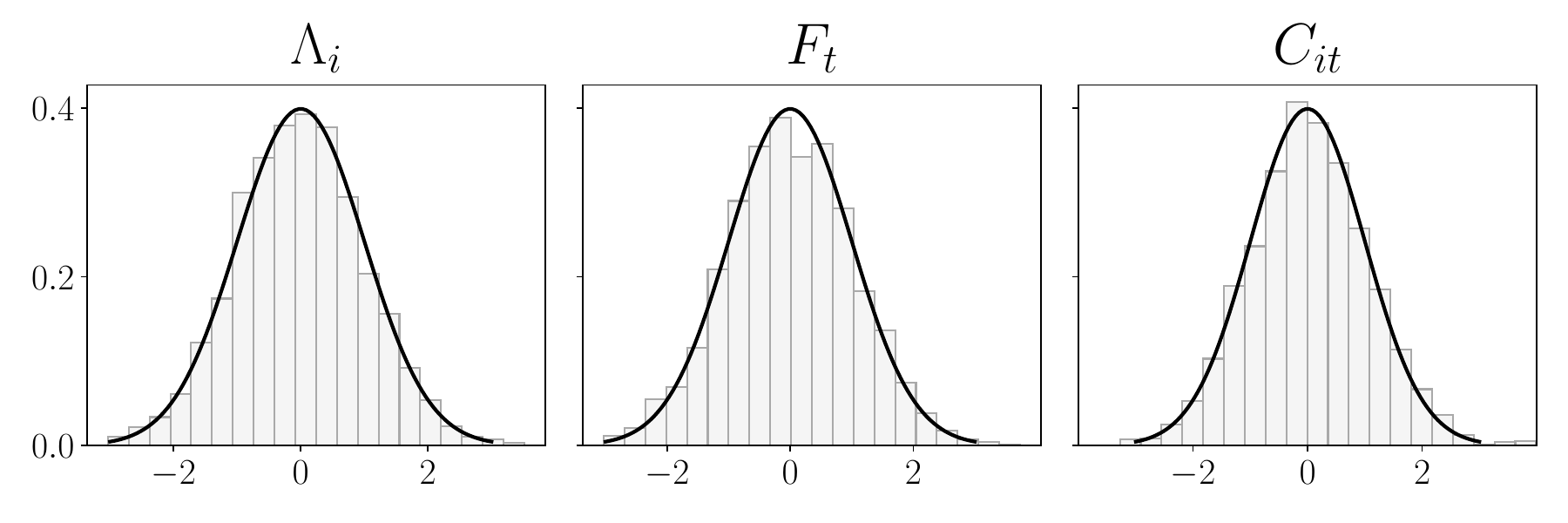}
			\caption{$Y_{it}$ is missing ($N = 250, T = 500$)}
		\end{subfigure}
		\begin{subfigure}{0.53\textwidth}
			\centering
			\includegraphics[width=1\linewidth]{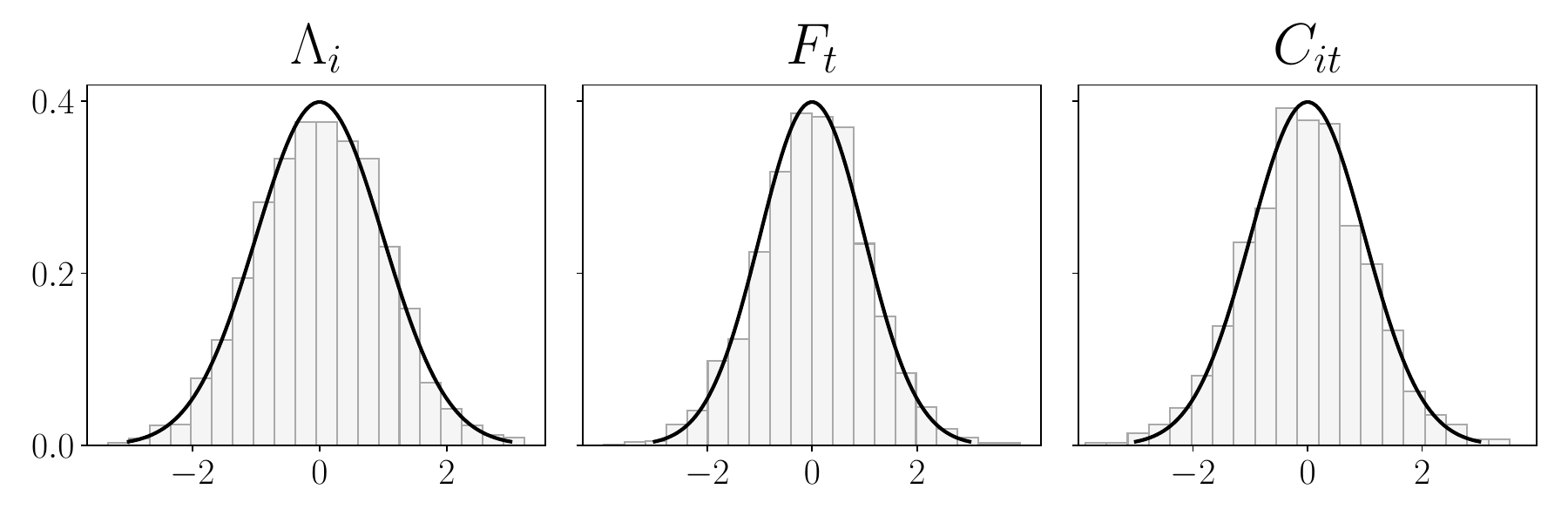}
			\caption{$Y_{it}$ is observed ($N = 500, T = 250$)}
		\end{subfigure}%
		\begin{subfigure}{0.53\textwidth}
			\centering		\includegraphics[width=1\linewidth]{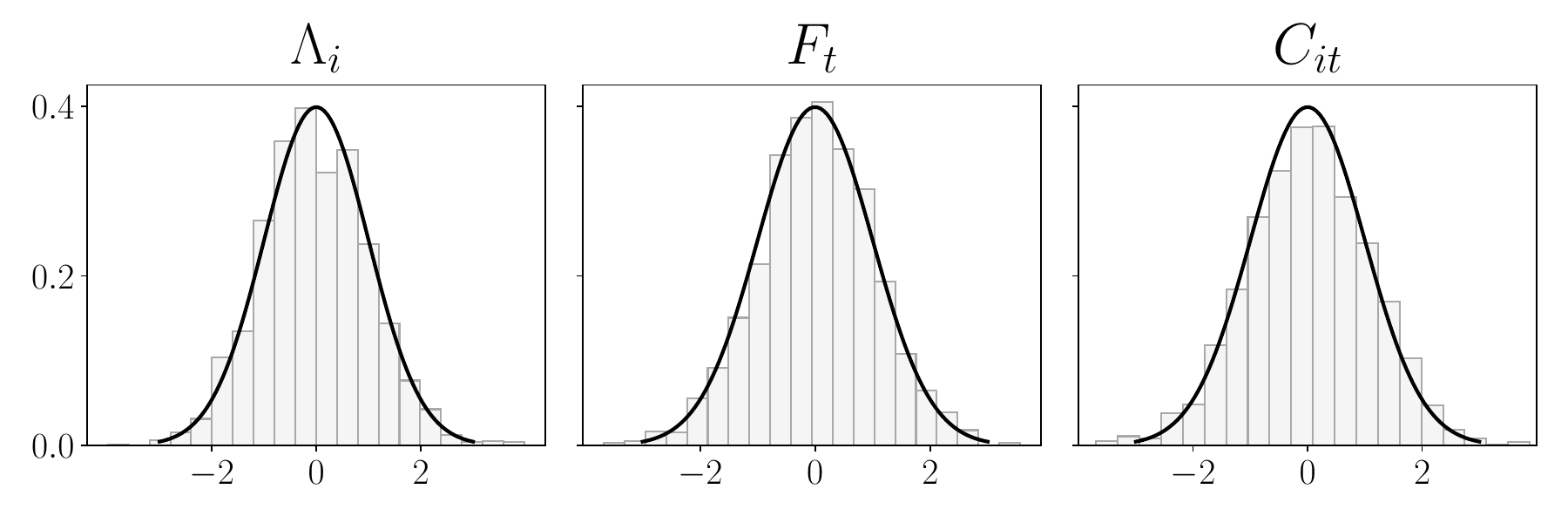}
			\caption{$Y_{it}$ is missing ($N = 500, T = 250$)}
		\end{subfigure}
		\bnotefig{These figures show histograms of estimated standardized loadings, factors, and common components, where the observation pattern is the control panel in the simultaneous treatment adoption case. The normal density function is superimposed on the histograms. The data is simulated with a one-factor model $X_{it} = \Lambda_i \cdot F_t + e_{it}$, where $F_t \stackrel{\text{i.i.d.}}{\sim}  N(0,1)$, $\Lambda_i \stackrel{\text{i.i.d.}}{\sim}  N(0,1)$ and $e_{it} \stackrel{\text{i.i.d.}}{\sim}  N(0,1)$.  The observation pattern depends on an observed state variable defined as $S_i = \mathbbm{1}(\Lambda_i \geq 0 )$. Simultaneous treatment adoption:  Once a unit adopts treatment, it stays treated afterwards. For the units with $S_i = 1$, $25\% $ randomly selected units adopt the treatment from time $0.75 \cdot T$ and the remaining $75\% $ units stay in the control group until the end. For the units with $S_i= 0$, $62.5\%$ randomly selected units adopt the treatment from time $0.375\cdot T$ and the remaining $37.5\%$  units stay in the control group until the end.  We estimate the factor model from the control panel, where a unit's observations are missing if it adopts treatment. We run 2,000 Monte Carlo simulations.}
		\label{fig:hist-simul}
	\end{figure}

	\begin{figure}[H]
		\tcapfig{Histograms of Standardized Control and Treated Common Components, Individual and Average Treatment Effects}
		\centering
		\begin{subfigure}{0.65\textwidth}
			\centering
			\includegraphics[width=1\linewidth]{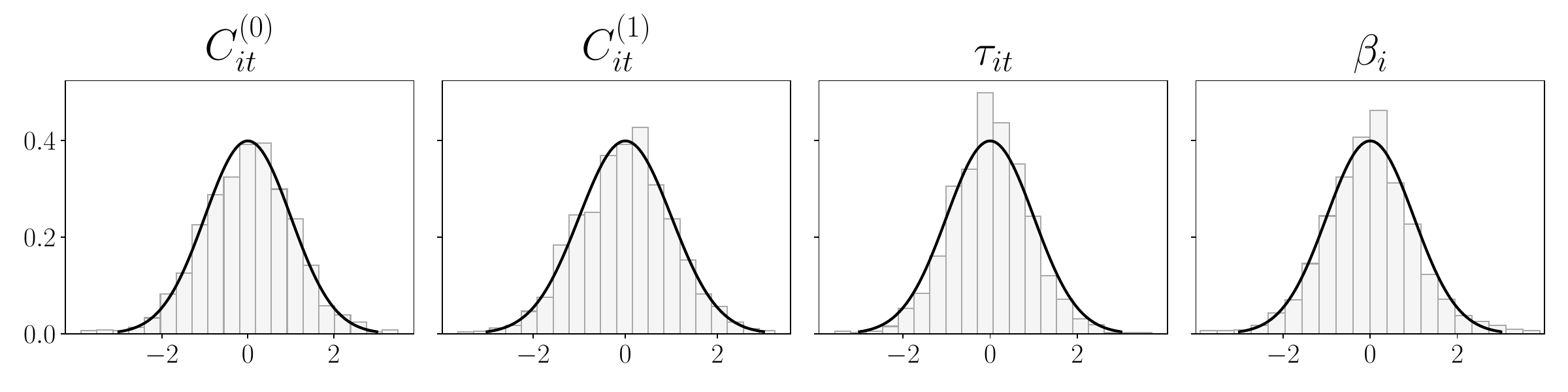}
			\caption{Treatment effect $\tau = 0$ ($N = 250, T = 250$)}
		\end{subfigure}
		\begin{subfigure}{0.65\textwidth}
			\centering
			\includegraphics[width=1\linewidth]{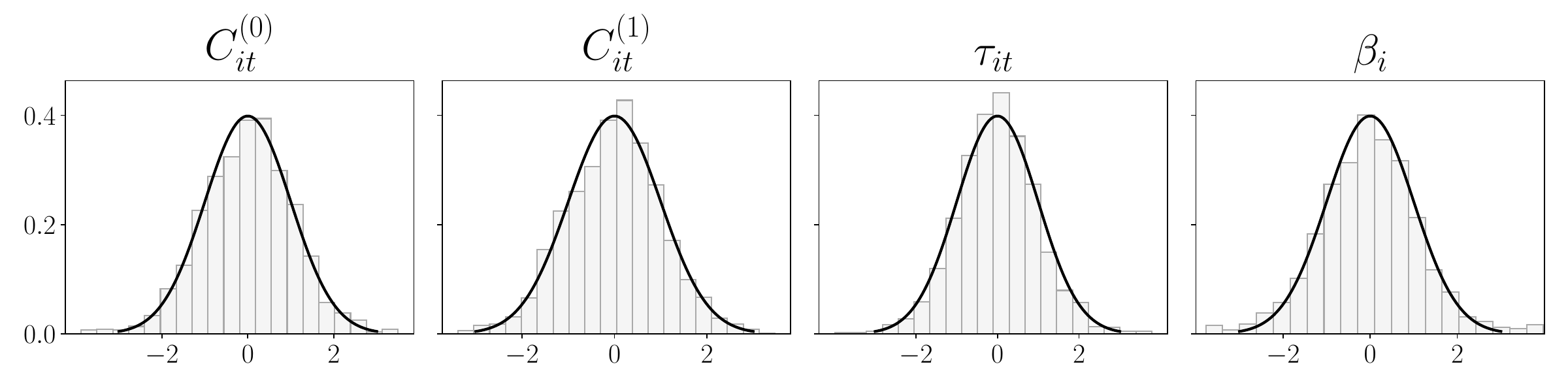}
			\caption{Treatment effect $\tau = 0.25$ ($N = 250, T = 250$)}
		\end{subfigure}
		\begin{subfigure}{0.65\textwidth}
			\centering
			\includegraphics[width=1\linewidth]{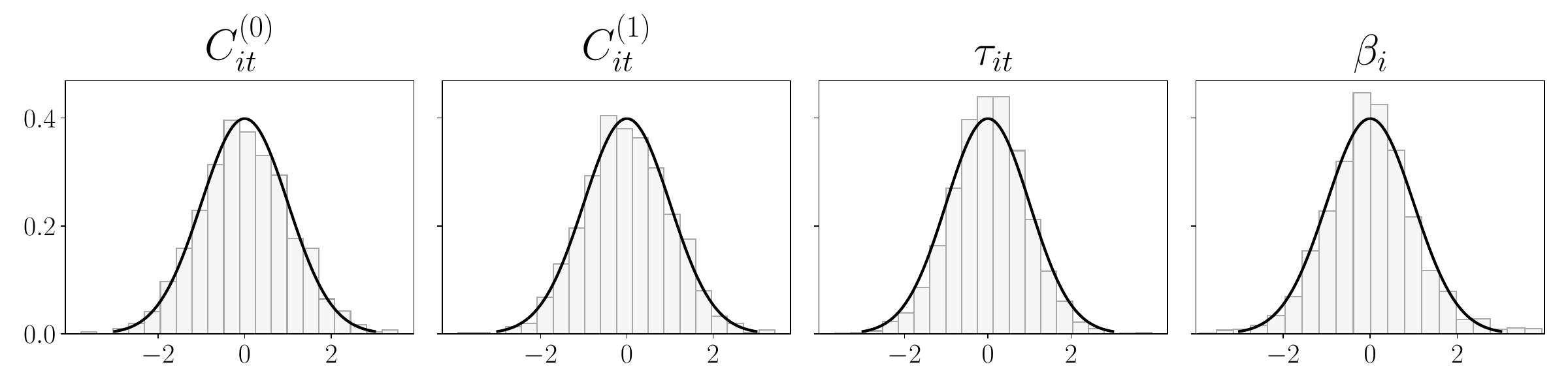}
			\caption{Treatment effect $\tau = 0$ ($N = 250, T = 500$)}
		\end{subfigure}
		\begin{subfigure}{0.65\textwidth}
			\centering
			\includegraphics[width=1\linewidth]{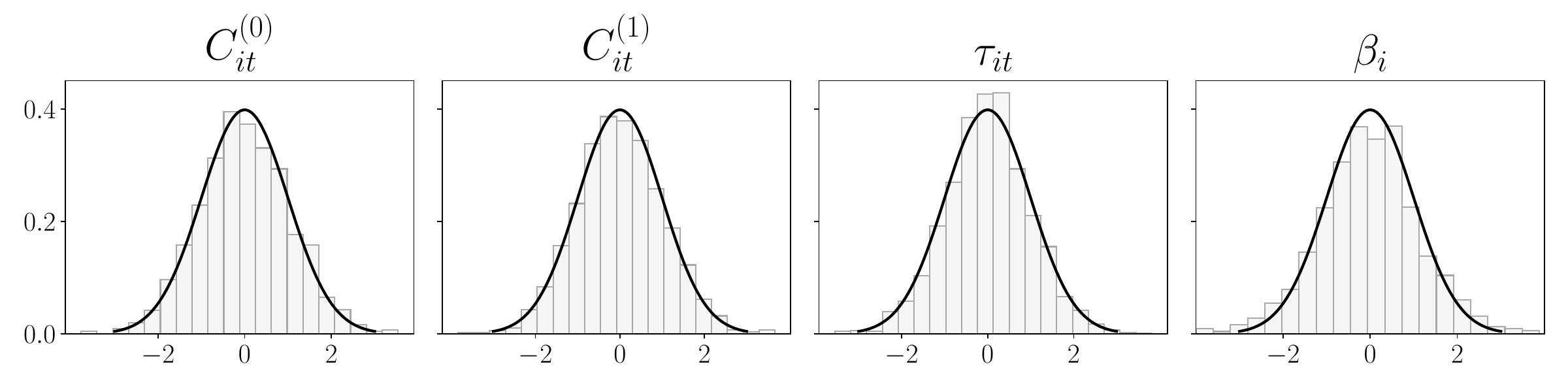}
			\caption{Treatment effect $\tau = 0.25$ ($N = 250, T = 500$)}
		\end{subfigure}
		\begin{subfigure}{0.65\textwidth}
			\centering
			\includegraphics[width=1\linewidth]{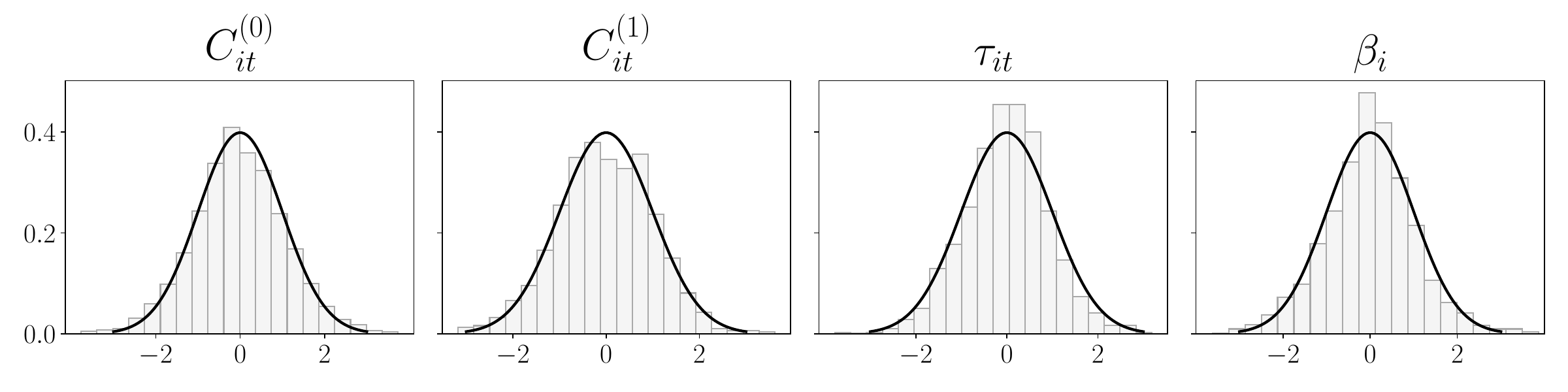}
			\caption{Treatment effect $\tau = 0$ ($N = 500, T = 250$)}
		\end{subfigure}
		\begin{subfigure}{0.65\textwidth}
			\centering
			\includegraphics[width=1\linewidth]{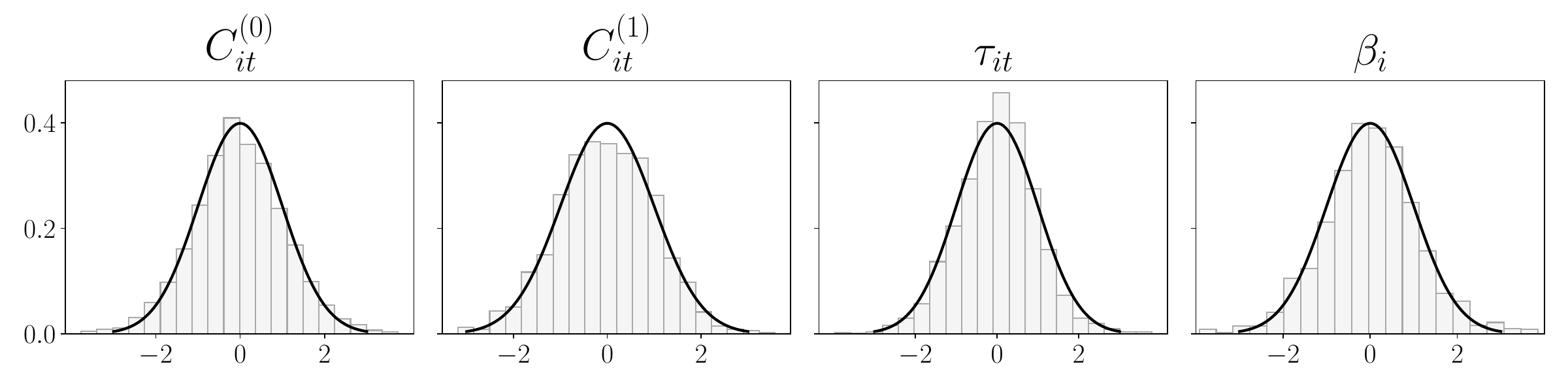}
			\caption{Treatment effect $\tau = 0.25$ ($N = 500, T = 250$)}
		\end{subfigure}
		\bnotefig{These figures show histograms of estimated control and treated common components, individual and average treatment effect ($Z = \vec{1}$) for different combinations of $N$ and $T$. The normal density function is superimposed on the histograms. We run 2,000 Monte Carlo simulations.}
		\label{fig:hist-treatment-appendix}
	\end{figure}

	\subsection{Statistical Power of Treatment Effect Tests}

	\begin{table}[H]
		\tcaptab{Statistical Power of Treatment Effect Tests}
		\centering
		{\small
			\begin{tabular}{lrc|rrrr|rrrr}
				\toprule
				&      & {} & \multicolumn{4}{c|}{$\tilde{C}^\treat_{it}- \tilde{C}^\control_{it} $} & \multicolumn{4}{c}{$\tilde{\beta}^\treat_i - \tilde{\beta}^\control_i$} \\
				&      & $\Lambda^\treat_i - \Lambda^\control_i$ &  0.25 &  0.50 &  1.00 &  2.00 &  0.25 &  0.50 &  1.00 &  2.00 \\
				$N$ & $T$ & $\mu_F$ &       &       &       &       &       &       &       &       \\
				\midrule
				100                     & 100  & 0.1  & 0.171 & 0.450 & 0.902 & 0.991 & 0.198 & 0.440 & 0.864 & 0.968 \\
                        &      & 0.25 & 0.162 & 0.484 & 0.898 & 0.991 & 0.174 & 0.457 & 0.862 & 0.972 \\
                        &      & 0.5  & 0.179 & 0.543 & 0.901 & 0.991 & 0.189 & 0.524 & 0.907 & 0.987 \\
                        &      & 1    & 0.271 & 0.660 & 0.946 & 0.996 & 0.271 & 0.654 & 0.939 & 0.996 \\
                        & 250  & 0.1  & 0.306 & 0.794 & 0.979 & 1.000 & 0.298 & 0.758 & 0.950 & 0.985 \\
                        &      & 0.25 & 0.298 & 0.795 & 0.977 & 1.000 & 0.292 & 0.780 & 0.960 & 0.989 \\
                        &      & 0.5  & 0.347 & 0.835 & 0.981 & 1.000 & 0.345 & 0.831 & 0.979 & 0.998 \\
                        &      & 1    & 0.466 & 0.906 & 0.991 & 1.000 & 0.481 & 0.910 & 0.991 & 1.000 \\
\midrule 
250                     & 100  & 0.1  & 0.175 & 0.464 & 0.896 & 0.994 & 0.165 & 0.456 & 0.866 & 0.981 \\
                        &      & 0.25 & 0.188 & 0.494 & 0.909 & 0.994 & 0.190 & 0.479 & 0.885 & 0.974 \\
                        &      & 0.5  & 0.203 & 0.539 & 0.906 & 0.994 & 0.199 & 0.535 & 0.897 & 0.989 \\
                        &      & 1    & 0.273 & 0.722 & 0.959 & 0.998 & 0.271 & 0.731 & 0.954 & 0.998 \\
                        & 250  & 0.1  & 0.302 & 0.811 & 0.989 & 1.000 & 0.297 & 0.791 & 0.966 & 0.991 \\
                        &      & 0.25 & 0.327 & 0.817 & 0.989 & 1.000 & 0.325 & 0.815 & 0.985 & 1.000 \\
                        &      & 0.5  & 0.366 & 0.857 & 0.994 & 1.000 & 0.355 & 0.853 & 0.991 & 1.000 \\
                        &      & 1    & 0.521 & 0.916 & 0.996 & 1.000 & 0.528 & 0.918 & 0.996 & 1.000 \\
                        & 500  & 0.1  & 0.572 & 0.954 & 1.000 & 1.000 & 0.558 & 0.931 & 0.983 & 0.989 \\
                        &      & 0.25 & 0.575 & 0.962 & 1.000 & 1.000 & 0.577 & 0.960 & 1.000 & 1.000 \\
                        &      & 0.5  & 0.635 & 0.964 & 1.000 & 1.000 & 0.633 & 0.966 & 1.000 & 1.000 \\
                        &      & 1    & 0.764 & 0.970 & 1.000 & 1.000 & 0.772 & 0.979 & 1.000 & 1.000 \\
                        & 1000 & 0.1  & 0.796 & 0.987 & 1.000 & 1.000 & 0.777 & 0.981 & 0.996 & 0.998 \\
                        &      & 0.25 & 0.812 & 0.989 & 1.000 & 1.000 & 0.814 & 0.989 & 1.000 & 1.000 \\
                        &      & 0.5  & 0.850 & 0.994 & 1.000 & 1.000 & 0.854 & 0.994 & 1.000 & 1.000 \\
                        &      & 1    & 0.918 & 0.996 & 1.000 & 1.000 & 0.920 & 0.996 & 1.000 & 1.000 \\
\midrule 
500                     & 500  & 0.1  & 0.610 & 0.975 & 1.000 & 1.000 & 0.591 & 0.958 & 0.998 & 1.000 \\
                        &      & 0.25 & 0.610 & 0.966 & 1.000 & 1.000 & 0.604 & 0.969 & 1.000 & 1.000 \\
                        &      & 0.5  & 0.649 & 0.979 & 1.000 & 1.000 & 0.644 & 0.972 & 1.000 & 1.000 \\
                        &      & 1    & 0.805 & 0.987 & 1.000 & 1.000 & 0.809 & 0.987 & 1.000 & 1.000 \\
                        & 1000 & 0.1  & 0.860 & 0.992 & 1.000 & 1.000 & 0.848 & 0.983 & 0.998 & 0.998 \\
                        &      & 0.25 & 0.874 & 0.996 & 1.000 & 1.000 & 0.877 & 0.996 & 1.000 & 1.000 \\
                        &      & 0.5  & 0.898 & 0.996 & 1.000 & 1.000 & 0.898 & 0.998 & 1.000 & 1.000 \\
                        &      & 1    & 0.959 & 1.000 & 1.000 & 1.000 & 0.957 & 1.000 & 1.000 & 1.000 \\
				\bottomrule
		\end{tabular}}
		\bnotetab{This table shows the proportion of test statistics of the treatment effect that reject the null hypotheses $\mathcal{H}_0: C^\treat_{it} - C^\control_{it} = 0$ or $\mathcal{H}_0:  \beta^\treat_i - \beta^\control_i = 0$, where $ \beta^\treat_i = \frac{1}{\Ttreat} \sum_{\Tcontrol+1}^T C^\treat_{it}$ and  $ \beta^\control_i = \frac{1}{\Ttreat} \sum_{\Tcontrol+1}^T C^\control_{it}$. We consider a 95\% confidence level (the test statistics are within $[-1.96, 1.96]$) over 500 Monte Carlo simulations . The test statistics normalize $\tilde{C}^\treat_{it}- \tilde{C}^\control_{it} $ and $\tilde{\beta}^\treat_i - \tilde{\beta}^\control_i$ by their estimated standard deviation from Equations \eqref{eqn:ite-asy-normal} and \eqref{eqn:ate-asy-normal}. {\bf The estimated standard deviations are estimated under the null hypothesis of $\Lambda^\treat_i - \Lambda^\control_i = 0$}. The observation pattern follows the simultaneous treatment adoption pattern.  The proportion of acceptance decreases with $N, T, \mu_F$ and $\tilde{\beta}^\treat_i - \tilde{\beta}^\control_i$, implying that the statistical power increases with the data dimensionality, the scale of the treatment effect and the proportion of observed entries in the data.}
		\label{tab:treatment-power-null}
	\end{table}

	\begin{table}[H]
		\tcaptab{Statistical Power of Treatment Effect Tests with Fewer Observed Values}
		\centering
		{\small
			\begin{tabular}{lrc|rrrr|rrrr}
				\toprule
				&      & {} & \multicolumn{4}{c|}{$\tilde{C}^\treat_{it}- \tilde{C}^\control_{it} $} & \multicolumn{4}{c}{$\tilde{\beta}^\treat_i - \tilde{\beta}^\control_i$} \\
				&      & $\Lambda^\treat_i - \Lambda^\control_i$ &  0.25 &  0.50 &  1.00 &  2.00 &  0.25 &  0.50 &  1.00 &  2.00 \\
				$N$ & $T$ & $\mu_F$ &       &       &       &       &       &       &       &       \\
				\midrule
				100                     & 100  & 0.1  & 0.167 & 0.439 & 0.789 & 0.931 & 0.200 & 0.437 & 0.755 & 0.896 \\
                        &      & 0.25 & 0.177 & 0.473 & 0.790 & 0.944 & 0.188 & 0.452 & 0.770 & 0.928 \\
                        &      & 0.5  & 0.178 & 0.463 & 0.784 & 0.934 & 0.176 & 0.459 & 0.773 & 0.927 \\
                        &      & 1    & 0.247 & 0.619 & 0.860 & 0.960 & 0.242 & 0.621 & 0.844 & 0.948 \\
                        & 250  & 0.1  & 0.327 & 0.744 & 0.934 & 0.984 & 0.331 & 0.717 & 0.922 & 0.968 \\
                        &      & 0.25 & 0.347 & 0.742 & 0.924 & 0.984 & 0.347 & 0.726 & 0.924 & 0.984 \\
                        &      & 0.5  & 0.363 & 0.764 & 0.933 & 0.984 & 0.369 & 0.757 & 0.931 & 0.984 \\
                        &      & 1    & 0.514 & 0.814 & 0.947 & 0.986 & 0.498 & 0.816 & 0.952 & 0.988 \\
\midrule
250                     & 100  & 0.1  & 0.175 & 0.465 & 0.809 & 0.939 & 0.195 & 0.465 & 0.762 & 0.919 \\
                        &      & 0.25 & 0.175 & 0.466 & 0.809 & 0.944 & 0.188 & 0.457 & 0.787 & 0.933 \\
                        &      & 0.5  & 0.182 & 0.487 & 0.814 & 0.946 & 0.182 & 0.476 & 0.809 & 0.953 \\
                        &      & 1    & 0.241 & 0.629 & 0.862 & 0.957 & 0.245 & 0.611 & 0.857 & 0.957 \\
                        & 250  & 0.1  & 0.304 & 0.733 & 0.933 & 0.991 & 0.300 & 0.703 & 0.918 & 0.980 \\
                        &      & 0.25 & 0.310 & 0.746 & 0.938 & 0.986 & 0.299 & 0.730 & 0.938 & 0.989 \\
                        &      & 0.5  & 0.341 & 0.742 & 0.944 & 0.989 & 0.330 & 0.746 & 0.937 & 0.991 \\
                        &      & 1    & 0.463 & 0.803 & 0.958 & 0.988 & 0.452 & 0.810 & 0.960 & 0.991 \\
                        & 500  & 0.1  & 0.553 & 0.872 & 0.973 & 0.998 & 0.542 & 0.867 & 0.973 & 0.991 \\
                        &      & 0.25 & 0.560 & 0.865 & 0.962 & 0.995 & 0.558 & 0.863 & 0.968 & 0.998 \\
                        &      & 0.5  & 0.609 & 0.873 & 0.975 & 0.998 & 0.609 & 0.873 & 0.977 & 1.000 \\
                        &      & 1    & 0.738 & 0.937 & 0.990 & 1.000 & 0.736 & 0.930 & 0.988 & 1.000 \\
                        & 1000 & 0.1  & 0.731 & 0.944 & 0.993 & 1.000 & 0.718 & 0.935 & 0.989 & 1.000 \\
                        &      & 0.25 & 0.725 & 0.952 & 0.998 & 1.000 & 0.721 & 0.948 & 0.993 & 1.000 \\
                        &      & 0.5  & 0.748 & 0.949 & 0.998 & 1.000 & 0.748 & 0.947 & 0.998 & 1.000 \\
                        &      & 1    & 0.802 & 0.962 & 0.995 & 1.000 & 0.798 & 0.965 & 0.995 & 1.000 \\
\midrule
500                     & 500  & 0.1  & 0.555 & 0.882 & 0.977 & 0.998 & 0.560 & 0.877 & 0.963 & 0.995 \\
                        &      & 0.25 & 0.569 & 0.886 & 0.982 & 0.998 & 0.562 & 0.888 & 0.982 & 0.998 \\
                        &      & 0.5  & 0.606 & 0.891 & 0.984 & 1.000 & 0.595 & 0.889 & 0.984 & 0.998 \\
                        &      & 1    & 0.731 & 0.920 & 0.990 & 1.000 & 0.733 & 0.917 & 0.990 & 0.998 \\
                        & 1000 & 0.1  & 0.760 & 0.958 & 0.996 & 1.000 & 0.758 & 0.960 & 0.998 & 1.000 \\
                        &      & 0.25 & 0.770 & 0.958 & 0.998 & 1.000 & 0.774 & 0.962 & 1.000 & 1.000 \\
                        &      & 0.5  & 0.803 & 0.967 & 0.998 & 1.000 & 0.799 & 0.964 & 1.000 & 1.000 \\
                        &      & 1    & 0.869 & 0.981 & 1.000 & 1.000 & 0.871 & 0.981 & 1.000 & 1.000 \\
                        &      & 1    & 0.886 & 0.927 & 0.932 & 0.932 & 0.962 & 1.000 & 1.000 & 1.000 \\
				\bottomrule
		\end{tabular}}
		\bnotetab{This table shows the proportion of test statistics of the treatment effect that reject the null hypotheses $\mathcal{H}_0: C^\treat_{it} - C^\control_{it} = 0$ or $\mathcal{H}_0:  \beta^\treat_i - \beta^\control_i = 0$, where $ \beta^\treat_i = \frac{1}{\Ttreat} \sum_{\Tcontrol+1}^T C^\treat_{it}$ and  $ \beta^\control_i = \frac{1}{\Ttreat} \sum_{\Tcontrol+1}^T C^\control_{it}$. We consider a 95\% confidence level (the test statistics are within $[-1.96, 1.96]$) over 500 Monte Carlo simulations . The test statistics normalize $\tilde{C}^\treat_{it}- \tilde{C}^\control_{it} $ and $\tilde{\beta}^\treat_i - \tilde{\beta}^\control_i$ by their estimated standard deviation from Equations \eqref{eqn:ite-asy-normal} and \eqref{eqn:ate-asy-normal}. {\bf The estimated standard deviations are estimated under the null hypothesis of $\Lambda^\treat_i - \Lambda^\control_i = 0$}. The observation pattern follows the simultaneous treatment adoption pattern, but has fewer control observations: The observed state variable is defined as $S_i = \mathbbm{1}(\Lambda_i \geq 0 )$. For the units with $S_i = 1$, $50\% $ randomly selected units adopt the treatment from time $0.5 \cdot T$ and the remaining $50\% $ units stay in the control group until the end. For the units with $S_i= 0$, $75\%$ randomly selected units adopt the treatment from time $0.25\cdot T$ and the remaining $25\%$  units stay in the control group until the end. The proportion of acceptance decreases with $N, T, \mu_F$ and $\tilde{\beta}^\treat_i - \tilde{\beta}^\control_i$, implying that the statistical power increases with the data dimensionality, the scale of the treatment effect and the proportion of observed entries in the data.}
		\label{tab:treatment-power-fewer-obs-null}
	\end{table}

	\begin{table}[H]
		\tcaptab{Statistical Power of Treatment Effect Tests under Alternative Estimated Standard Deviations}
		\centering
		{\small
			\begin{tabular}{lrc|rrrr|rrrr}
				\toprule
				&      & {} & \multicolumn{4}{c|}{$\tilde{C}^\treat_{it}- \tilde{C}^\control_{it} $} & \multicolumn{4}{c}{$\tilde{\beta}^\treat_i - \tilde{\beta}^\control_i$} \\
				&      & $\Lambda^\treat_i - \Lambda^\control_i$ &  0.25 &  0.50 &  1.00 &  2.00 &  0.25 &  0.50 &  1.00 &  2.00 \\
				$N$ & $T$ & $\mu_F$ &       &       &       &       &       &       &       &       \\
				\midrule
				100                     & 100  & 0.1  & 0.095 & 0.299 & 0.658 & 0.770 & 0.211 & 0.476 & 0.886 & 0.975 \\
                        &      & 0.25 & 0.106 & 0.313 & 0.686 & 0.781 & 0.196 & 0.490 & 0.904 & 0.980 \\
                        &      & 0.5  & 0.127 & 0.386 & 0.703 & 0.797 & 0.218 & 0.569 & 0.937 & 0.987 \\
                        &      & 1    & 0.220 & 0.531 & 0.788 & 0.836 & 0.302 & 0.710 & 0.973 & 0.995 \\
                        & 250  & 0.1  & 0.184 & 0.571 & 0.763 & 0.802 & 0.311 & 0.794 & 0.962 & 0.990 \\
                        &      & 0.25 & 0.200 & 0.579 & 0.773 & 0.816 & 0.307 & 0.798 & 0.979 & 0.995 \\
                        &      & 0.5  & 0.233 & 0.645 & 0.771 & 0.800 & 0.359 & 0.860 & 0.987 & 0.997 \\
                        &      & 1    & 0.374 & 0.759 & 0.858 & 0.868 & 0.511 & 0.940 & 0.998 & 1.000 \\
\midrule
250                     & 100  & 0.1  & 0.124 & 0.370 & 0.739 & 0.843 & 0.194 & 0.492 & 0.898 & 0.981 \\
                        &      & 0.25 & 0.134 & 0.378 & 0.726 & 0.828 & 0.210 & 0.516 & 0.908 & 0.981 \\
                        &      & 0.5  & 0.167 & 0.454 & 0.782 & 0.872 & 0.238 & 0.577 & 0.940 & 0.995 \\
                        &      & 1    & 0.239 & 0.653 & 0.877 & 0.932 & 0.321 & 0.780 & 0.980 & 1.000 \\
                        & 250  & 0.1  & 0.216 & 0.630 & 0.832 & 0.847 & 0.324 & 0.833 & 0.981 & 0.995 \\
                        &      & 0.25 & 0.226 & 0.658 & 0.848 & 0.853 & 0.345 & 0.845 & 0.991 & 1.000 \\
                        &      & 0.5  & 0.270 & 0.730 & 0.889 & 0.909 & 0.404 & 0.887 & 0.998 & 1.000 \\
                        &      & 1    & 0.448 & 0.831 & 0.907 & 0.916 & 0.550 & 0.937 & 0.998 & 1.000 \\
                        & 500  & 0.1  & 0.441 & 0.776 & 0.843 & 0.844 & 0.590 & 0.946 & 0.989 & 0.993 \\
                        &      & 0.25 & 0.449 & 0.795 & 0.855 & 0.858 & 0.605 & 0.973 & 1.000 & 1.000 \\
                        &      & 0.5  & 0.527 & 0.819 & 0.880 & 0.886 & 0.665 & 0.983 & 1.000 & 1.000 \\
                        &      & 1    & 0.695 & 0.902 & 0.932 & 0.939 & 0.798 & 0.983 & 1.000 & 1.000 \\
                        & 1000 & 0.1  & 0.645 & 0.851 & 0.870 & 0.872 & 0.798 & 0.986 & 1.000 & 1.000 \\
                        &      & 0.25 & 0.669 & 0.850 & 0.866 & 0.866 & 0.833 & 0.994 & 1.000 & 1.000 \\
                        &      & 0.5  & 0.728 & 0.892 & 0.900 & 0.901 & 0.880 & 0.994 & 1.000 & 1.000 \\
                        &      & 1    & 0.818 & 0.899 & 0.914 & 0.917 & 0.934 & 0.996 & 1.000 & 1.000 \\
\midrule
500                     & 500  & 0.1  & 0.496 & 0.878 & 0.910 & 0.910 & 0.619 & 0.971 & 0.998 & 1.000 \\
                        &      & 0.25 & 0.510 & 0.848 & 0.892 & 0.896 & 0.628 & 0.975 & 1.000 & 1.000 \\
                        &      & 0.5  & 0.567 & 0.870 & 0.914 & 0.914 & 0.664 & 0.984 & 1.000 & 1.000 \\
                        &      & 1    & 0.745 & 0.932 & 0.947 & 0.949 & 0.839 & 0.992 & 1.000 & 1.000 \\
                        & 1000 & 0.1  & 0.751 & 0.888 & 0.901 & 0.901 & 0.855 & 0.992 & 0.998 & 1.000 \\
                        &      & 0.25 & 0.752 & 0.902 & 0.910 & 0.911 & 0.883 & 1.000 & 1.000 & 1.000 \\
                        &      & 0.5  & 0.794 & 0.894 & 0.906 & 0.907 & 0.916 & 1.000 & 1.000 & 1.000 \\
                        &      & 1    & 0.886 & 0.927 & 0.932 & 0.932 & 0.962 & 1.000 & 1.000 & 1.000 \\
				\bottomrule
		\end{tabular}}
		\bnotetab{This table shows the proportion of test statistics of the treatment effect that  reject the null hypotheses $\mathcal{H}_0: C^\treat_{it} - C^\control_{it} = 0$ or $\mathcal{H}_0:  \beta^\treat_i - \beta^\control_i = 0$, where $ \beta^\treat_i = \frac{1}{\Ttreat} \sum_{\Tcontrol+1}^T C^\treat_{it}$ and  $ \beta^\control_i = \frac{1}{\Ttreat} \sum_{\Tcontrol+1}^T C^\control_{it}$. We consider a 95\% confidence level (the test statistics are within $[-1.96, 1.96]$) over 500 Monte Carlo simulations . The test statistics normalize $\tilde{C}^\treat_{it}- \tilde{C}^\control_{it} $ and $\tilde{\beta}^\treat_i - \tilde{\beta}^\control_i$ by their estimated standard deviation from Equations \eqref{eqn:ite-asy-normal} and \eqref{eqn:ate-asy-normal}. {\bf The estimated standard deviations are estimated without imposing the null hypothesis of $\Lambda^\treat_i - \Lambda^\control_i = 0$}. The observation pattern follows the simultaneous treatment adoption pattern.  The proportion of acceptance decreases with $N, T, \mu_F$ and $\tilde{\beta}^\treat_i - \tilde{\beta}^\control_i$, implying that the statistical power increases with the data dimensionality, the scale of the treatment effect and the proportion of observed entries in the data. The power decreases slightly if the standard deviations are estimated without imposing the null hypothesis of $\Lambda^\treat_i - \Lambda^\control_i = 0$.}
		\label{tab:treatment-power}
	\end{table}

	\begin{table}[H]
		\tcaptab{Statistical Power of Treatment Effect Tests under Alternative Estimated Standard Deviations with Fewer Observed Values}
		\centering
		{\small
			\begin{tabular}{lrc|rrrr|rrrr}
				\toprule
				&      & {} & \multicolumn{4}{c|}{$\tilde{C}^\treat_{it}- \tilde{C}^\control_{it} $} & \multicolumn{4}{c}{$\tilde{\beta}^\treat_i - \tilde{\beta}^\control_i$} \\
				&      & $\Lambda^\treat_i - \Lambda^\control_i$ &  0.25 &  0.50 &  1.00 &  2.00 &  0.25 &  0.50 &  1.00 &  2.00 \\
				$N$ & $T$ & $\mu_F$ &       &       &       &       &       &       &       &       \\
				\midrule
				100                     & 100  & 0.1  & 0.096 & 0.270 & 0.520 & 0.651 & 0.204 & 0.452 & 0.774 & 0.912 \\
                        &      & 0.25 & 0.098 & 0.300 & 0.532 & 0.659 & 0.192 & 0.469 & 0.797 & 0.933 \\
                        &      & 0.5  & 0.108 & 0.306 & 0.552 & 0.707 & 0.176 & 0.486 & 0.801 & 0.945 \\
                        &      & 1    & 0.153 & 0.481 & 0.712 & 0.813 & 0.249 & 0.633 & 0.881 & 0.959 \\
                        & 250  & 0.1  & 0.198 & 0.529 & 0.692 & 0.747 & 0.352 & 0.742 & 0.932 & 0.970 \\
                        &      & 0.25 & 0.207 & 0.511 & 0.674 & 0.744 & 0.352 & 0.751 & 0.943 & 0.987 \\
                        &      & 0.5  & 0.235 & 0.544 & 0.680 & 0.749 & 0.384 & 0.785 & 0.949 & 0.987 \\
                        &      & 1    & 0.380 & 0.647 & 0.756 & 0.791 & 0.518 & 0.845 & 0.957 & 0.991 \\
\midrule
250                     & 100  & 0.1  & 0.112 & 0.351 & 0.677 & 0.799 & 0.196 & 0.486 & 0.790 & 0.920 \\
                        &      & 0.25 & 0.133 & 0.362 & 0.648 & 0.770 & 0.184 & 0.473 & 0.818 & 0.946 \\
                        &      & 0.5  & 0.151 & 0.408 & 0.690 & 0.817 & 0.196 & 0.492 & 0.837 & 0.961 \\
                        &      & 1    & 0.215 & 0.524 & 0.774 & 0.873 & 0.247 & 0.647 & 0.888 & 0.980 \\
                        & 250  & 0.1  & 0.226 & 0.574 & 0.752 & 0.805 & 0.304 & 0.720 & 0.937 & 0.983 \\
                        &      & 0.25 & 0.233 & 0.594 & 0.779 & 0.828 & 0.307 & 0.754 & 0.952 & 0.991 \\
                        &      & 0.5  & 0.265 & 0.606 & 0.790 & 0.839 & 0.341 & 0.761 & 0.959 & 0.993 \\
                        &      & 1    & 0.399 & 0.746 & 0.887 & 0.918 & 0.463 & 0.828 & 0.971 & 0.993 \\
                        & 500  & 0.1  & 0.421 & 0.715 & 0.811 & 0.842 & 0.549 & 0.877 & 0.973 & 0.991 \\
                        &      & 0.25 & 0.438 & 0.723 & 0.807 & 0.846 & 0.565 & 0.881 & 0.974 & 0.998 \\
                        &      & 0.5  & 0.513 & 0.767 & 0.874 & 0.893 & 0.622 & 0.889 & 0.982 & 1.000 \\
                        &      & 1    & 0.620 & 0.818 & 0.883 & 0.895 & 0.742 & 0.942 & 0.995 & 1.000 \\
                        & 1000 & 0.1  & 0.533 & 0.741 & 0.806 & 0.820 & 0.728 & 0.945 & 0.993 & 1.000 \\
                        &      & 0.25 & 0.556 & 0.729 & 0.808 & 0.817 & 0.737 & 0.954 & 0.998 & 1.000 \\
                        &      & 0.5  & 0.619 & 0.800 & 0.857 & 0.866 & 0.756 & 0.957 & 0.998 & 1.000 \\
                        &      & 1    & 0.702 & 0.843 & 0.886 & 0.894 & 0.813 & 0.968 & 0.995 & 1.000 \\
\midrule
500                     & 500  & 0.1  & 0.466 & 0.778 & 0.854 & 0.884 & 0.574 & 0.889 & 0.973 & 0.996 \\
                        &      & 0.25 & 0.499 & 0.769 & 0.843 & 0.873 & 0.577 & 0.900 & 0.985 & 0.998 \\
                        &      & 0.5  & 0.514 & 0.786 & 0.873 & 0.890 & 0.610 & 0.906 & 0.987 & 0.998 \\
                        &      & 1    & 0.664 & 0.858 & 0.925 & 0.936 & 0.741 & 0.933 & 0.993 & 0.998 \\
                        & 1000 & 0.1  & 0.642 & 0.839 & 0.881 & 0.886 & 0.776 & 0.966 & 0.998 & 1.000 \\
                        &      & 0.25 & 0.646 & 0.832 & 0.868 & 0.876 & 0.785 & 0.970 & 1.000 & 1.000 \\
                        &      & 0.5  & 0.672 & 0.834 & 0.868 & 0.874 & 0.806 & 0.972 & 1.000 & 1.000 \\
                        &      & 1    & 0.784 & 0.908 & 0.929 & 0.933 & 0.881 & 0.982 & 1.000 & 1.000 \\

				\bottomrule
		\end{tabular}}
		\bnotetab{This table shows the proportion of test statistics of the treatment effect that reject the null hypotheses $\mathcal{H}_0: C^\treat_{it} - C^\control_{it} = 0$ or $\mathcal{H}_0:  \beta^\treat_i - \beta^\control_i = 0$, where $ \beta^\treat_i = \frac{1}{\Ttreat} \sum_{\Tcontrol+1}^T C^\treat_{it}$ and  $ \beta^\control_i = \frac{1}{\Ttreat} \sum_{\Tcontrol+1}^T C^\control_{it}$. We consider a 95\% confidence level (the test statistics are within $[-1.96, 1.96]$) over 500 Monte Carlo simulations . The test statistics normalize $\tilde{C}^\treat_{it}- \tilde{C}^\control_{it} $ and $\tilde{\beta}^\treat_i - \tilde{\beta}^\control_i$ by their estimated standard deviation from Equations \eqref{eqn:ite-asy-normal} and \eqref{eqn:ate-asy-normal}. {\bf The estimated standard deviations are estimated without imposing the null hypothesis of $\Lambda^\treat_i - \Lambda^\control_i = 0$}. The observation pattern follows the simultaneous treatment adoption pattern, but has fewer control observations: The observed state variable is defined as $S_i = \mathbbm{1}(\Lambda_i \geq 0 )$. For the units with $S_i = 1$, $50\% $ randomly selected units adopt the treatment from time $0.5 \cdot T$ and the remaining $50\% $ units stay in the control group until the end. For the units with $S_i= 0$, $75\%$ randomly selected units adopt the treatment from time $0.25\cdot T$ and the remaining $25\%$  units stay in the control group until the end. The proportion of acceptance decreases with $N, T, \mu_F$ and $\tilde{\beta}^\treat_i - \tilde{\beta}^\control_i$, implying that the statistical power increases with the data dimensionality, the scale of the treatment effect and the proportion of observed entries in the data. The power decreases slightly if the standard deviations are estimated without imposing the null hypothesis of $\Lambda^\treat_i - \Lambda^\control_i = 0$.}
		\label{tab:treatment-power-fewer-obs}
	\end{table}

	\begin{landscape}
		
		\subsection{Estimation under Misspecification}

		\begin{table}[H]
			\tcaptab{Benchmark and Propensity-Weighted Estimator for Omitted Factors (Estimate $k=1$ Factor)}
			\centering
			\setlength{\tabcolsep}{6pt} 
			\renewcommand{\arraystretch}{1.2} 
			{\small
				\begin{tabular}{l|rr|rr|rr||rr|rr|rr}
					\toprule
					$[\mu_{F,1}, \mu_{F,2}]$ & \multicolumn{6}{c||}{[1,1]} & \multicolumn{6}{c}{[5,0.5]}  \\ 
					$[\sigma_{F,1}, \sigma_{F,2}]$& \multicolumn{6}{c||}{[1,1]} & \multicolumn{6}{c}{[5,0.5]}   \\ 
					\midrule
					$(\mu_\Lambda, \sigma_\Lambda )$  & \multicolumn{2}{c|}{[1,1]} & \multicolumn{2}{c|}{[0,1]} & \multicolumn{2}{c||}{[-1,1]} & \multicolumn{2}{c|}{[1,1]} & \multicolumn{2}{c|}{[0,1]} & \multicolumn{2}{c}{[-1,1]} \\
					method &            XP & XP prop &    XP & XP prop &     XP & XP prop &                XP & XP prop &    XP & XP prop &     XP & XP prop \\
					\midrule
					obs $C^{(0)}_{it}$                  &         0.223 &   0.246 & 0.225 &   0.249 &  0.223 &   0.246 &             0.011 &   0.011 & 0.011 &   0.011 &  0.011 &   0.011 \\
					miss $C^{(0)}_{it}$                 &         0.475 &   0.277 & 0.480 &   0.290 &  0.473 &   0.278 &             0.006 &   0.005 & 0.007 &   0.007 &  0.006 &   0.005 \\
					all $C^{(0)}_{it}$                  &         0.311 &   0.257 & 0.314 &   0.263 &  0.310 &   0.257 &             0.009 &   0.008 & 0.009 &   0.009 &  0.009 &   0.008 \\
					obs $C^{(0)}_{it}$($S=1$)           &         0.182 &   0.252 & 0.183 &   0.253 &  0.180 &   0.249 &             0.746 &   0.752 & 0.753 &   0.759 &  0.746 &   0.752 \\
					miss $C^{(0)}_{it}$($S=1$)          &         0.041 &   0.255 & 0.047 &   0.259 &  0.039 &   0.252 &             0.744 &   0.762 & 0.751 &   0.770 &  0.743 &   0.762 \\
					obs $C^{(0)}_{it}$($S=0$)           &         0.294 &   0.258 & 0.301 &   0.265 &  0.298 &   0.261 &             0.000 &   0.000 & 0.001 &   0.000 &  0.000 &   0.000 \\
					miss $C^{(0)}_{it}$($S=0$)          &         0.717 &   0.294 & 0.724 &   0.311 &  0.720 &   0.297 &             0.002 &   0.001 & 0.003 &   0.002 &  0.002 &   0.001 \\
					obs $C^{(1)}_{it}$                  &         0.402 &   0.270 & 0.407 &   0.281 &  0.401 &   0.271 &             0.006 &   0.005 & 0.007 &   0.006 &  0.006 &   0.005 \\
					$C^{(1)}_{it} - C^{(0)}_{it}$       &         4.224 &   0.829 & 2.438 &   0.684 &  4.216 &   0.836 &             0.025 &   0.024 & 0.023 &   0.022 &  0.025 &   0.024 \\
					$\beta^{(1)}_{i} - \beta^{(0)}_{i}$ &         5.231 &   0.770 & 2.432 &   0.533 &  3.801 &   0.716 &             0.021 &   0.020 & 0.019 &   0.018 &  0.022 &   0.020 \\
					ATE                                 &         1.482 &   0.063 & 0.058 &   0.007 &  0.736 &   0.040 &             0.002 &   0.003 & 0.000 &   0.001 &  0.002 &   0.002 \\
					\bottomrule
				\end{tabular}
			}
			\bnotetab{This table compares the percentage errors for various estimates with the benchmark estimator (XP) and the propensity weighted estimator $\text{XP}_{\text{PROP}}$ for omitted and weak factors. {\bf We estimate only $k=1$ factor while the data is simulated with a two-factor model} and a simultaneous treatment adoption for different means and variances of the latent factors. {\bf For $[\sigma_{F,1}, \sigma_{F,2}]=[5, 0.5]$ the second factor is weak}. In more detail: $Y_{it}^\control = \Lambda_{i,1}^\control F_{t,1}  + \Lambda_{i,2}^\control F_{t,2}  + e_{it}^\control$ and $Y_{it}^\treat=\Lambda_{i,1}^\treat F_{t,1}  + \Lambda_{i,2}^\treat F_{t,2}  + e_{it}^\treat$. The first half of the cross-section depends on the first factor, while the second half depends on the second factor: For $i = 1, \cdots, N/2$, $\Lambda^\control_{i,1} \sim \calN(\mu_\Lambda, \sigma_\Lambda^2)$, $\Lambda^\treat_{i,1} = \Lambda_{i,1}^\control + 0.2$   and $\Lambda^\treat_{i,2} = \Lambda^\control_{i,2} = 0$, and for $i = N/2+1, \cdots, N$, $\Lambda^\treat_{i,1} = \Lambda^\control_{i,1} = 0$, $\Lambda^\control_{i,2} \sim \calN(\mu_\Lambda, \sigma_\Lambda^2)$ and $\Lambda^\treat_{i,2} = \Lambda^\control_{i,2} + 0.2$. {\bf The difference between the treated and control loadings is a constant.} Let $N = 250$, $T = 250$ and $e_{it} \stackrel{\text{i.i.d.}}{\sim}  \calN(0,1)$. The observation pattern depends on an observed state variable defined as $S_i = \mathbbm{1}(\Lambda_{i,2}^\control \neq 0 )$ which only depends on the loadings of the second factor. Once a unit adopts treatment, it stays treated afterwards. For the units with $S_i = 1$, $50\%$ randomly selected units adopt the treatment from time $0.5\cdot T$ and the remaining $50\%$  units stay in the control group until the end.  all units are in the control group. For the units with $S_i= 0$, $90\%$ randomly selected units adopt the treatment from time $0.5\cdot T$ and the remaining $10\%$  units stay in the control group until the end. We report the relative MSE for common components for observed and unobserved treated and control common components. We also report the results conditional on the characteristic $S_i$ and the relative MSE of $\beta^{(1)}_{i} - \beta^{(0)}_{i}$ capturing the average treatment effect over time for each unit and ATE which is the relative MSE of the overall average treatment effect $\sum_{(i,t): W_{it} = 0} \big( \hat{C}^\treat_{it} - \hat{C}^\control_{it} \big)$. The results are generated from 1,000 Monte Carlo simulations. The results show that $\text{XP}_{\text{PROP}}$ can be a more robust estimator for missing observations under misspecification (omitted or weak factors).
			}
		\end{table}

		\begin{table}[H]
			\tcaptab{Benchmark and Propensity-Weighted Estimator for Weak Factors (Estimate $k=2$ Factors)}
			\centering
			{\small
				\begin{tabular}{l|rr|rr|rr||rr|rr|rr}
					\toprule
					$[\mu_{F,1}, \mu_{F,2}]$ & \multicolumn{6}{c||}{[1,1]} & \multicolumn{6}{c}{[5,0.5]}  \\ 
					$[\sigma_{F,1}, \sigma_{F,2}]$& \multicolumn{6}{c||}{[1,1]} & \multicolumn{6}{c}{[5,0.5]}   \\ 
					\midrule
					$(\mu_\Lambda, \sigma_\Lambda )$  & \multicolumn{2}{c|}{[1,1]} & \multicolumn{2}{c|}{[0,1]} & \multicolumn{2}{c||}{[-1,1]} & \multicolumn{2}{c|}{[1,1]} & \multicolumn{2}{c|}{[0,1]} & \multicolumn{2}{c}{[-1,1]} \\
					method &            XP & XP prop &    XP & XP prop &     XP & XP prop &                XP & XP prop &    XP & XP prop &     XP & XP prop \\
					\midrule
					obs $C^{(0)}_{it}$                  &         0.007 &   0.007 & 0.014 &   0.014 &  0.007 &   0.007 &             0.002 &   0.002 & 0.002 &   0.002 &  0.002 &   0.002 \\
					miss $C^{(0)}_{it}$                 &         0.022 &   0.022 & 0.045 &   0.045 &  0.022 &   0.022 &             0.034 &   0.026 & 0.023 &   0.020 &  0.036 &   0.028 \\
					all $C^{(0)}_{it}$                  &         0.012 &   0.012 & 0.025 &   0.025 &  0.012 &   0.012 &             0.018 &   0.014 & 0.013 &   0.011 &  0.019 &   0.015 \\
					obs $C^{(0)}_{it}$($S=1$)           &         0.007 &   0.007 & 0.013 &   0.013 &  0.007 &   0.007 &             0.086 &   0.087 & 0.114 &   0.117 &  0.091 &   0.092 \\
					miss $C^{(0)}_{it}$($S=1$)          &         0.011 &   0.011 & 0.019 &   0.019 &  0.010 &   0.010 &             0.079 &   0.084 & 0.114 &   0.123 &  0.081 &   0.087 \\
					obs $C^{(0)}_{it}$($S=0$)           &         0.008 &   0.008 & 0.016 &   0.016 &  0.008 &   0.008 &             0.000 &   0.000 & 0.001 &   0.001 &  0.001 &   0.001 \\
					miss $C^{(0)}_{it}$($S=0$)          &         0.029 &   0.029 & 0.060 &   0.060 &  0.029 &   0.029 &             0.034 &   0.026 & 0.023 &   0.019 &  0.036 &   0.027 \\
					obs $C^{(1)}_{it}$                  &         0.020 &   0.020 & 0.042 &   0.041 &  0.021 &   0.021 &             0.001 &   0.001 & 0.003 &   0.003 &  0.001 &   0.001 \\
					$C^{(1)}_{it} - C^{(0)}_{it}$       &         0.522 &   0.533 & 0.514 &   0.541 &  0.516 &   0.528 &             1.654 &   1.266 & 0.538 &   0.453 &  1.719 &   1.312 \\
					$\beta^{(1)}_{i} - \beta^{(0)}_{i}$ &         0.317 &   0.321 & 0.301 &   0.309 &  0.316 &   0.319 &             0.492 &   0.405 & 0.177 &   0.164 &  0.496 &   0.413 \\
					ATE                                 &         0.020 &   0.021 & 0.002 &   0.002 &  0.018 &   0.018 &             0.239 &   0.196 & 0.001 &   0.001 &  0.237 &   0.197 \\
					\bottomrule
				\end{tabular}
			}
			\bnotetab{This table compares the percentage errors for various estimates with the benchmark estimator (XP) and the propensity weighted estimator $\text{XP}_{\text{PROP}}$ for omitted and weak factors. {\bf We estimate $k=2$ factors and the data is simulated with a two-factor model} and a simultaneous treatment adoption for different means and variances of the latent factors. {\bf For $[\sigma_{F,1}, \sigma_{F,2}]=[5, 0.5]$ the second factor is weak}. In more detail: $Y_{it}^\control = \Lambda_{i,1}^\control F_{t,1}  + \Lambda_{i,2}^\control F_{t,2}  + e_{it}^\control$ and $Y_{it}^\treat=\Lambda_{i,1}^\treat F_{t,1}  + \Lambda_{i,2}^\treat F_{t,2}  + e_{it}^\treat$. The first half of the cross-section depends on the first factor, while the second half depends on the second factor: For $i = 1, \cdots, N/2$, $\Lambda^\control_{i,1} \sim \calN(\mu_\Lambda, \sigma_\Lambda^2)$, $\Lambda^\treat_{i,1} = \Lambda_{i,1}^\control + 0.2$   and $\Lambda^\treat_{i,2} = \Lambda^\control_{i,2} = 0$, and for $i = N/2+1, \cdots, N$, $\Lambda^\treat_{i,1} = \Lambda^\control_{i,1} = 0$, $\Lambda^\control_{i,2} \sim \calN(\mu_\Lambda, \sigma_\Lambda^2)$ and $\Lambda^\treat_{i,2} = \Lambda^\control_{i,2} + 0.2$. {\bf The difference between the treated and control loadings is a constant.} Let $N = 250$, $T = 250$ and $e_{it} \stackrel{\text{i.i.d.}}{\sim}  \calN(0,1)$. The observation pattern depends on an observed state variable defined as $S_i = \mathbbm{1}(\Lambda_{i,2}^\control \neq 0 )$ which only depends on the loadings of the second factor. Once a unit adopts treatment, it stays treated afterwards. For the units with $S_i = 1$, $50\%$ randomly selected units adopt the treatment from time $0.5\cdot T$ and the remaining $50\%$  units stay in the control group until the end.  all units are in the control group. For the units with $S_i= 0$, $90\%$ randomly selected units adopt the treatment from time $0.5\cdot T$ and the remaining $10\%$  units stay in the control group until the end. We report the relative MSE for common components for observed and unobserved treated and control common components. We also report the results conditional on the characteristic $S_i$ and the relative MSE of $\beta^{(1)}_{i} - \beta^{(0)}_{i}$ capturing the average treatment effect over time for each unit and ATE which is the relative MSE of the overall average treatment effect $\sum_{(i,t): W_{it} = 0} \big( \hat{C}^\treat_{it} - \hat{C}^\control_{it} \big)$. The results are generated from 1,000 Monte Carlo simulations. The results show that $\text{XP}_{\text{PROP}}$ can be a more robust estimator for missing observations under misspecification (weak factors).
			}
		\end{table}
	\end{landscape}

		\begin{table}[h!]
			\tcaptab{Benchmark and Propensity-Weighted Estimator under Model Misspecification}
			\centering
			\setlength{\tabcolsep}{3pt} 
			\renewcommand{\arraystretch}{1.2} 
			{\small \begin{tabular}{l|rr|rr|rr|rr|rr}
					\toprule
					$k$ estimated factors & \multicolumn{2}{c|}{1} & \multicolumn{2}{c|}{2} & \multicolumn{2}{c|}{3} & \multicolumn{2}{c|}{4} & \multicolumn{2}{c}{5} \\ 
					\midrule
					Method &            XP & $\text{XP}_{\text{PROP}}$ &                XP & $\text{XP}_{\text{PROP}}$ &            XP & $\text{XP}_{\text{PROP}}$ &                XP & $\text{XP}_{\text{PROP}}$ &                XP & $\text{XP}_{\text{PROP}}$ \\
					\midrule
					obs $C^{(0)}_{it}$                  & {\bf 0.310} &   0.327 & {\bf 0.070} &   0.072 & {\bf 0.024} &   0.026 & {\bf 0.025} &   0.026 & {\bf 0.030} &   0.032 \\
					miss $C^{(0)}_{it}$                 & 1.163 &   {\bf 0.824} & 0.571 &   {\bf 0.441} & 0.302 &   {\bf 0.295} & {\bf 0.314} &   0.389 & {\bf 0.339} &   0.428 \\
					all $C^{(0)}_{it}$                  & 0.450 &   {\bf 0.391} & 0.149 &   {\bf 0.129} & {\bf 0.077} &   0.078 & {\bf 0.095} &   0.107 & {\bf 0.107} &   0.124 \\
					obs $C^{(0)}_{it}$($S=1$)           & {\bf 0.316} &   0.372 & {\bf 0.070} &   0.079 & {\bf 0.026} &   0.028 & {\bf 0.026} &   0.029 & {\bf 0.032} &   0.036 \\
					miss $C^{(0)}_{it}$($S=1$)          & {\bf 0.287} &   0.459 & {\bf 0.112} &   0.152 & {\bf 0.088} &   { 0.101} & {\bf 0.129} &   0.152 & {\bf 0.144} &   0.172 \\
					obs $C^{(0)}_{it}$($S=0$)           & 0.406 &   {\bf 0.380} & 0.086 &   {\bf 0.080} & {0.027} &   {0.027} & {0.026} &   {0.026} & {\bf 0.031} &   0.032 \\
					miss $C^{(0)}_{it}$($S=0$)          & 1.621 &   {\bf 0.997} & 0.808 &   {\bf 0.574} & 0.392 &   {\bf 0.376} & {\bf 0.372} &   0.467 & {\bf 0.403} &   0.517 \\
					obs $C^{(1)}_{it}$                  & {\bf 0.580} &   0.617 & {\bf 0.283} &   0.286 & {\bf 0.142} &   0.149 & {\bf 0.134} &   0.139 & {\bf 0.131} &   0.135 \\
					$C^{(1)}_{it} - C^{(0)}_{it}$       & 1.160 &   {\bf 1.063} & 0.652 &   {\bf 0.598} & {\bf 0.337} &   0.342 & {\bf 0.324} &   0.387 & {\bf 0.332} &   0.398 \\
					$\beta^{(1)}_{i} - \beta^{(0)}_{i}$ & 6.105 &   {\bf 3.891} & 1.373 &   {\bf 1.026} & {\bf 0.094} &   0.105 & 0.108 &   {\bf 0.105} & 0.121 &   {\bf 0.120} \\
					ATE                                 & 1.379 &   {\bf 1.006} & 0.300 &   {\bf 0.264} & 0.029 &   {\bf 0.027} & 0.222 &   {\bf 0.204} & 0.363 &   {\bf 0.341} \\
					\bottomrule
			\end{tabular}}
			\label{tab:exp}
			\bnotetab{This table compares the percentage errors for various estimates with the benchmark estimator (XP) and the propensity weighted estimator $\text{XP}_{\text{PROP}}$ for a misspecified model. The data is simulated with {\bf non-linear one-factor model} and a simultaneous treatment adoption. The control and treated panel follow $Y_{it}^\control = \exp(\Lambda_i^\control F_t)  + e_{it}^\control$ and $Y_{it}^\treat= \exp(\Lambda_i^\treat F_t) + e_{it}^\treat$, where $F_t \sim \calN(0,1)$, $\Lambda_i \sim \calN(0,0.25) $ $\Lambda^\treat_i = \Lambda_i^\control + \calN(0.2,0.25)$ and $e_{it} \stackrel{\iid}{\sim} \calN(0,1)$. We set $N = 250$, $T = 250$. The observation pattern depends on an observed state variable defined as $S_i = \mathbbm{1}(\Lambda_i^\control \geq 0 )$.  Once a unit adopts treatment, it stays treated afterwards. For the units with $S_i = 1$, $50\%$ randomly selected units adopt the treatment from time $0.5\cdot T$ and the remaining $50\%$  units stay in the control group until the end.  all units are in the control group. For the units with $S_i= 0$, $90\%$ randomly selected units adopt the treatment from time $0.5\cdot T$ and the remaining $10\%$  units stay in the control group until the end.  We report the relative MSE for common components for observed and unobserved treated and control common components for different numbers of estimated factors. We also report the results conditional on the characteristic $S_i$ and the relative MSE of $\beta^{(1)}_{i} - \beta^{(0)}_{i}$ capturing the average treatment effect over time for each unit and ATE which is the relative MSE of the overall average treatment effect $\sum_{(i,t): W_{it} = 0} \big( \hat{C}^\treat_{it} - \hat{C}^\control_{it} \big)$. The results are generated from 1,000 Monte Carlo simulations. The results show that $\text{XP}_{\text{PROP}}$ can be a more robust estimator for missing observations under misspecification (non-linear functional form).
			}
			\setlength{\tabcolsep}{6pt} 
			\renewcommand{\arraystretch}{1.0} 
		\end{table}

		\begin{table}[H]
			\tcaptab{Benchmark and Propensity-Weighted Estimator under Misspecification}
			\centering
			\setlength{\tabcolsep}{3pt} 
			\renewcommand{\arraystretch}{1.2} 
			{\small
				\begin{tabular}{l|rr|rr|rr|rr|rr}
					\toprule
					$k$ estimated factors& \multicolumn{2}{c|}{1} & \multicolumn{2}{c|}{2} & \multicolumn{2}{c|}{3} & \multicolumn{2}{c|}{4} & \multicolumn{2}{c}{5} \\ 
					\midrule
					Method &            XP & $\text{XP}_{\text{PROP}}$ &                XP & $\text{XP}_{\text{PROP}}$ &            XP & $\text{XP}_{\text{PROP}}$ &                XP & $\text{XP}_{\text{PROP}}$ &                XP & $\text{XP}_{\text{PROP}}$ \\
					\midrule
					
					obs $C^{(0)}_{it}$                  & 0.051 &   0.051 & 0.011 &   0.011 & 0.005 &   0.005 & 0.003 &   0.003 & 0.003 &   0.003 \\
					miss $C^{(0)}_{it}$                 & 1.254 &   1.237 & 1.081 &   1.042 & 1.048 &   1.084 & 0.924 &   0.914 & 0.863 &   0.880 \\
					all $C^{(0)}_{it}$                  & 0.138 &   0.137 & 0.198 &   0.192 & 0.192 &   0.221 & 0.164 &   0.169 & 0.148 &   0.154 \\
					obs $C^{(0)}_{it}$($S=1$)           & 0.087 &   0.087 & 0.009 &   0.009 & 0.005 &   0.005 & 0.003 &   0.003 & 0.003 &   0.003 \\
					miss $C^{(0)}_{it}$($S=1$)          & 0.677 &   0.677 & 0.521 &   0.521 & 0.524 &   0.576 & 0.472 &   0.482 & 0.422 &   0.440 \\
					obs $C^{(0)}_{it}$($S=0$)           & 0.928 &   0.928 & 0.566 &   0.559 & 0.181 &   0.172 & 0.041 &   0.039 & 0.021 &   0.021 \\
					miss $C^{(0)}_{it}$($S=0$)          & 5.466 &   3.477 & 4.286 &   3.442 & 3.595 &   3.164 & 2.579 &   2.235 & 2.354 &   2.384 \\
					obs $C^{(1)}_{it}$                  & 0.204 &   0.208 & 0.095 &   0.100 & 0.051 &   0.056 & 0.032 &   0.037 & 0.027 &   0.031 \\
					$C^{(1)}_{it} - C^{(0)}_{it}$       & 0.230 &   0.234 & 0.112 &   0.117 & 0.069 &   0.074 & 0.049 &   0.054 & 0.043 &   0.047 \\
					$\beta^{(1)}_{i} - \beta^{(0)}_{i}$ & 0.915 &   0.887 & 0.486 &   0.534 & 0.181 &   0.198 & 0.099 &   0.110 & 0.068 &   0.074 \\
					ATE                                 & 0.817 &   0.792 & 0.417 &   0.454 & 0.169 &   0.179 & 0.138 &   0.149 & 0.108 &   0.115 \\
					\bottomrule
			\end{tabular}}
			\bnotetab{This table compares the percentage errors for various estimates with the benchmark estimator (XP) and the propensity weighted estimator $\text{XP}_{\text{PROP}}$ for a misspecified model. The data is simulated with {\bf non-linear one-factor model} and a simultaneous treatment adoption. The control and treated panel follow $Y_{it}^\control = \exp(\Lambda_i^\control F_t)  + e_{it}^\control$ and $Y_{it}^\treat= \exp(\Lambda_i^\treat F_t) + e_{it}^\treat$, where $F_t \sim \calN(0,1)$, $\Lambda_i \sim \calN(0,0.25) $ $\Lambda^\treat_i = \Lambda_i^\control + \calN(0.2,0.25)$ and $e_{it} \stackrel{\text{i.i.d.}}{\sim} \calN(0,1)$. We set $N = 250$, $T = 250$. The observation pattern depends on an observed state variable defined as $S_i = \mathbbm{1}(\Lambda_i^\control \geq \mu_\Lambda )$ \textbf{(here loadings have a negative mean, $\mu_\Lambda = -0.5$.} When $\Lambda_i^\control \geq \mu_\Lambda$,  $S_i = 1$). Once a unit adopts treatment, it stays treated afterwards. For the units with $S_i = 1$, $50\%$ randomly selected units adopt the treatment from time $0.5\cdot T$ and the remaining $50\%$  units stay in the control group until the end.  all units are in the control group. For the units with $S_i= 0$, $90\%$ randomly selected units adopt the treatment from time $0.5\cdot T$ and the remaining $10\%$  units stay in the control group until the end.  We report the relative MSE for common components for observed and unobserved treated and control common components for different numbers of estimated factors. We also report the results conditional on the characteristic $S_i$ and the relative MSE of $\beta^{(1)}_{i} - \beta^{(0)}_{i}$ capturing the average treatment effect over time for each unit and ATE which is the relative MSE of the overall average treatment effect $\sum_{(i,t): W_{it} = 0} \big( \hat{C}^\treat_{it} - \hat{C}^\control_{it} \big)$. The results are generated from 1,000 Monte Carlo simulations. The results show that $\text{XP}_{\text{PROP}}$ can be a more robust estimator for missing observations under misspecification (non-linear functional form).
			}
		\end{table}
		
		\begin{table}[H]
			\tcaptab{Benchmark and Propensity-Weighted Estimator under Misspecification}
			\centering
			\setlength{\tabcolsep}{3pt} 
			\renewcommand{\arraystretch}{1.2} 
			{\small
				\begin{tabular}{l|rr|rr|rr|rr|rr}
					\toprule
					$k$ estimated factors & \multicolumn{2}{c|}{1} & \multicolumn{2}{c|}{2} & \multicolumn{2}{c|}{3} & \multicolumn{2}{c|}{4} & \multicolumn{2}{c}{5} \\ 
					\midrule
					Method &            XP & $\text{XP}_{\text{PROP}}$ &                XP & $\text{XP}_{\text{PROP}}$ &            XP & $\text{XP}_{\text{PROP}}$ &                XP & $\text{XP}_{\text{PROP}}$ &                XP & $\text{XP}_{\text{PROP}}$ \\
					\midrule
					obs $C^{(0)}_{it}$                  & 0.026 &   0.026 & 0.008 &   0.008 & 0.004 &   0.004 & 0.004 &   0.004 & 0.003 &   0.003 \\
					miss $C^{(0)}_{it}$                 & 0.345 &   0.321 & 0.509 &   0.457 & 0.377 &   0.395 & 0.324 &   0.342 & 0.312 &   0.326 \\
					all $C^{(0)}_{it}$                  & 0.076 &   0.072 & 0.192 &   0.174 & 0.125 &   0.135 & 0.109 &   0.115 & 0.104 &   0.110 \\
					obs $C^{(0)}_{it}$($S=1$)           & 0.025 &   0.025 & 0.007 &   0.007 & 0.004 &   0.004 & 0.003 &   0.003 & 0.003 &   0.003 \\
					miss $C^{(0)}_{it}$($S=1$)          & 0.305 &   0.305 & 0.463 &   0.436 & 0.366 &   0.387 & 0.319 &   0.337 & 0.306 &   0.319 \\
					obs $C^{(0)}_{it}$($S=0$)           & 0.959 &   0.959 & 0.758 &   0.755 & 0.369 &   0.361 & 0.118 &   0.115 & 0.072 &   0.073 \\
					miss $C^{(0)}_{it}$($S=0$)          & 3.968 &   1.443 & 2.275 &   1.372 & 0.910 &   0.798 & 0.550 &   0.531 & 0.496 &   0.535 \\
					obs $C^{(1)}_{it}$                  & 0.107 &   0.108 & 0.034 &   0.039 & 0.019 &   0.023 & 0.014 &   0.016 & 0.012 &   0.014 \\
					$C^{(1)}_{it} - C^{(0)}_{it}$       & 0.136 &   0.138 & 0.070 &   0.072 & 0.048 &   0.052 & 0.039 &   0.042 & 0.037 &   0.039 \\
					$\beta^{(1)}_{i} - \beta^{(0)}_{i}$ & 0.670 &   0.664 & 0.241 &   0.226 & 0.109 &   0.117 & 0.071 &   0.074 & 0.067 &   0.068 \\
					ATE                                 & 0.446 &   0.444 & 0.166 &   0.155 & 0.081 &   0.083 & 0.058 &   0.060 & 0.057 &   0.058 \\
					\bottomrule
			\end{tabular}}
			\bnotetab{This table compares the percentage errors for various estimates with the benchmark estimator (XP) and the propensity weighted estimator $\text{XP}_{\text{PROP}}$ for a misspecified model. The data is simulated with {\bf non-linear one-factor model} and a simultaneous treatment adoption. The control and treated panel follow $Y_{it}^\control = \exp(\Lambda_i^\control F_t)  + e_{it}^\control$ and $Y_{it}^\treat= \exp(\Lambda_i^\treat F_t) + e_{it}^\treat$, where $F_t \sim \calN(0,1)$, $\Lambda_i \sim \calN(0,0.25) $ $\Lambda^\treat_i = \Lambda_i^\control + \calN(0.2,0.25)$ and $e_{it} \stackrel{\text{i.i.d.}}{\sim} \calN(0,1)$. We set $N = 250$, $T = 250$. The observation pattern depends on an observed state variable defined as $S_i = \mathbbm{1}(\Lambda_i^\control \geq \mu_\Lambda )$ \textbf{(here loadings have a positive mean, $\mu_\Lambda = 0.25$.} When $\Lambda_i^\control \geq \mu_\Lambda$,  $S_i = 1$). Once a unit adopts treatment, it stays treated afterwards. For the units with $S_i = 1$, $50\%$ randomly selected units adopt the treatment from time $0.5\cdot T$ and the remaining $50\%$  units stay in the control group until the end.  all units are in the control group. For the units with $S_i= 0$, $90\%$ randomly selected units adopt the treatment from time $0.5\cdot T$ and the remaining $10\%$  units stay in the control group until the end.  We report the relative MSE for common components for observed and unobserved treated and control common components for different numbers of estimated factors. We also report the results conditional on the characteristic $S_i$ and the relative MSE of $\beta^{(1)}_{i} - \beta^{(0)}_{i}$ capturing the average treatment effect over time for each unit and ATE which is the relative MSE of the overall average treatment effect $\sum_{(i,t): W_{it} = 0} \big( \hat{C}^\treat_{it} - \hat{C}^\control_{it} \big)$. The results are generated from 1,000 Monte Carlo simulations. The results show that $\text{XP}_{\text{PROP}}$ can be a more robust estimator for missing observations under misspecification (non-linear functional form).
			}		
		\end{table}	
		
		\newpage
		\section{Proofs}
		
		\subsection{Proof of Proposition \ref{prop:compare-su}}

		In this note, we assume every entry is randomly missing with observed probability $q$. For a direct comparison, we follow the order of estimation in \cite{jin2020factor} and switch the role of factors and loadings in our all-purpose estimator: We first estimate the factors from the time-series sample covariance matrix, and then estimate the loadings from a time-series regression of the observed outcomes on the estimated factors.

		Let the time-series sample covariance matrix be $\Sigma^\dagger$, where
		\begin{equation}\label{eqn:est-sample-covariance}
			\tilde{\Sigma}^\dagger_{st} = \frac{1}{|\mathcal{O}_{st}|} \sum_{i \in \mathcal{O}_{st}} Y_{is} Y_{it}.
		\end{equation}
		The estimated factors $\tilde{F}$ are $\sqrt{T}$ times the eigenvectors of the $r$ largest eigenvalues of the sample covariance matrix, that is,
		\begin{equation}\label{eqn:pca-estimate-factors}
			\frac{1}{T} \tilde{\Sigma}^\dagger \tilde{F} = \tilde{F} \tilde{V}, 
		\end{equation}
		where $\tilde{V}$ is a diagonal matrix. Then for every unit $i$, we regress the observed $Y_{it}$ on $\tilde{F}$ to estimate the loadings
		\begin{equation}\label{eqn:reg-estimate-loadings}
			\tilde{\Lambda}_i = \left(\sum_{t = 1}^N W_{it} \tilde F_t \tilde F_t^\T  \right)^\I  \left( \sum_{t = 1}^N W_{it} \tilde F_t Y_{it}    \right).
		\end{equation}

		\subsubsection{Proof of Proposition \ref{prop:compare-su}.1}
		
		Based on the proof of Theorem \ref{theorem:asy-normal-equal-weight}.1, we have the following expansion of the estimated factors $\tilde{F}_t$,
		\begin{align*}
			\nonumber \sqrt{N}(\tilde{F}_t - H^\T F_t )   =&   \tilde D^{-1}  \frac{\sqrt{N}}{T} \sum_{s=1}^T  H^\T F_s F_s^\T \frac{1}{|\tlo_{st}|}  \sum_{i = 1}^N W_{it} \Lambda_i e_{it} \\
			& + \tilde D^\I \cdot \frac{\sqrt{N}}{T} \sum_{s=1}^T H^\T F_s F_s^\T \Lp \frac{1}{|\tlo_{st}|}  \sum_{i = 1}^N W_{it} \Lambda_i \Lambda_i^\T - \frac{1}{N} \sum_{i = 1}^N \Lambda_i \Lambda_i^\T \Rp \cdot F_t + o_P(1).
		\end{align*}
		
		Our approach to estimate the factors differs from \cite{jin2020factor} in that  we adjust each entry in the sample covariance by the number of units that are observed in both time periods (i.e., the denominator $\mathcal{O}_{st}$ in Equation \ref{eqn:est-sample-covariance}), while \cite{jin2020factor} adjust for the overall observed proportion (i.e., the denominator in Equation \ref{eqn:est-sample-covariance} is replaced by $N \hat{q}$ where $\hat{q} = \frac{1}{NT} \sum_{i=1}^N \sum_{t=1}^T W_{it}$).
		
		\paragraph{First term: $\frac{\sqrt{N}}{T} \sum_{s=1}^T   F_s F_s^\T \frac{1}{|\tlo_{st}|}  \sum_{i = 1}^N W_{it} \Lambda_i e_{it} $.}
		\texttt{}\\
		The corresponding term in the initial estimator in \cite{jin2020factor} is $\big(\frac{1}{T} \sum_{t=1}^T  F_t F_t^\T \big) \cdot \big( \frac{1}{\sqrt{N} q} \sum_{i = 1}^N W_{is} \Lambda_i e_{is} \big)$. 
		
		When every entry is missing at random, we have $\frac{\sqrt{N}}{|\tlo_{st}|} = \frac{1}{\sqrt{N} \cdot (|\tlo_{st}|/N)} \rightarrow \frac{1}{N \sqrt{q}}$.  \cite{jin2020factor} differ from our term in that \cite{jin2020factor} disentangle the average over the time dimension from the average over the unit dimension. Under the simplified factor model, we can disentangle two averages and show that our term is asymptotically equivalent to the corresponding term in \cite{jin2020factor}. 
		
		Under the simplified factor model, and using a similar proof as the third step in the proof of Proposition \ref{prop:simple-assump-imply-general-assump}.2 on p23 in the online appendix, we can show that this term is asymptotically normal with the asymptotic variance
		\begin{align*}
			\AVar \Lp\frac{\sqrt{N}}{T} \sum_{t=1}^T  F_t F_t^\T \frac{1}{|\tlo_{st}|}  \sum_{i = 1}^N W_{is} \Lambda_i e_{is} \Rp =  \Lp \lim_{T \rightarrow \infty}\frac{1}{T^2} \sum_{s= 1}^T \sum_{u = 1}^T   \frac{q_{st,ut}}{q_{st}q_{ut}}  \Rp   \Sigma_F  \Sigma_\Lambda  \Sigma_F = \omega_{tt} \Sigma_F  \Sigma_\Lambda   \Sigma_F \sigma_e^2, 
		\end{align*}
		where $\omega_{tt} = \frac{1}{q}$. Therefore, the asymptotic variance coincides with the asymptotic variance of the initial estimator of factors in \cite{jin2020factor}. 
		
		\paragraph{Second term: $\frac{\sqrt{N}}{T} \sum_{s=1}^T F_s F_s^\T \Lp \frac{1}{|\tlo_{st}|}  \sum_{i = 1}^N W_{it} \Lambda_i \Lambda_i^\T - \frac{1}{N} \sum_{i = 1}^N \Lambda_i \Lambda_i^\T \Rp \cdot F_t $.} 
		\texttt{}\\
		The corresponding term in the initial estimator in \cite{jin2020factor} is \\ $\Lp \frac{1}{T} \sum_{s=1}^T F_s F_s^\T \Rp \cdot \Lp \frac{1}{\sqrt{N} q}  \sum_{i = 1}^N W_{it} \Lambda_i \Lambda_i^\T - \frac{1}{\sqrt{N}} \sum_{i = 1}^N \Lambda_i \Lambda_i^\T \Rp \cdot F_t $. 
		
		Under the simplified factor model, we can disentangle the average over the time dimension from the average over the unit dimension, and show that our term is asymptotically equivalent to the corresponding term in \cite{jin2020factor}. Using a similar proof as the fourth step in the proof of Proposition \ref{prop:simple-assump-imply-general-assump}.1(b), we can show that this term has a stable limiting distribution with the asymptotic variance
		\begin{align*}
			& \AVar \Lp \frac{\sqrt{N}}{T} \sum_{s=1}^T F_s F_s^\T \Lp \frac{1}{|\tlo_{st}|}  \sum_{i = 1}^N W_{it} \Lambda_i \Lambda_i^\T - \frac{1}{N} \sum_{i = 1}^N \Lambda_i \Lambda_i^\T \Rp \cdot F_t  \Rp \\
			=& \Lp \lim_{T \rightarrow \infty}\frac{1}{T^2} \sum_{s= 1}^T \sum_{u = 1}^T   \frac{q_{st,ut}}{q_{st}q_{ut}}  - 1  \Rp (F_t^\T  \otimes  \Sigma_F )  \Xi_\Lambda   (F_t \otimes  \Sigma_F) = (\omega_{tt} -1)(F_t^\T \otimes  \Sigma_F )  \Xi_\Lambda   (F_t \otimes  \Sigma_F) 
		\end{align*}
		and $\omega_{tt} - 1 = \frac{1 - q}{q}$ when every entry is missing at random. Therefore, the asymptotic variance coincides with the asymptotic variance of the initial estimator in \cite{jin2020factor}.
		
		\paragraph{Combining two terms.} 
		\texttt{}\\
		Since both terms have the same asymptotic variance as those in the initial estimator in \cite{jin2020factor}, our estimated factors are asymptotically the same as the initial estimates of factors in \cite{jin2020factor}.
		Since the iterated estimator is more efficient than the initial estimator in \cite{jin2020factor}, our estimated factors are asymptotically less efficient than the iterated estimates of factors in \cite{jin2020factor}.
		
		\subsubsection{Proof of Proposition \ref{prop:compare-su}.2}
		Based on the proof of Theorem \ref{theorem:asy-normal-equal-weight}.2, we have the following expansion of the estimated loadings $\tilde \Lambda_i$,
		\begin{align*}
			\sqrt{\delta_{NT}}  (\tilde \Lambda_i - H^\I \Lambda_i) =& \bigg( \frac{1}{T} \sum_{t=1}^T W_{it} H^\T F_t F_t^\T H \bigg)^{-1}  \bigg(    \frac{\sqrt{\delta_{NT}}}{T} \sum_{t=1}^T W_{it} H^\T F_t e_{it} \bigg) \\
			& + \bigg( \frac{1}{T} \sum_{t=1}^T W_{it} H^\T F_t F_t^\T H \bigg)^{-1}  \bigg( \sqrt{\delta_{NT}} H^\T \*X_i^\T H \tilde{D}^\I H^\I \Lambda_i \bigg), 
		\end{align*}
		where $\*X_i = \frac{1}{T^2} \sum_{s = 1}^T \sum_{t=1}^T F_s F_s^\T \Lp \frac{1}{|\tlo_{st}|}  \sum_{i = 1}^N W_{it} \Lambda_i \Lambda_i^\T - \frac{1}{N} \sum_{i = 1}^N \Lambda_i \Lambda_i^\T \Rp W_{it} F_t F_t^\T$.
		
		Our approach to estimate loadings differs from \cite{jin2020factor} in that we only regress on observed time periods, while \cite{jin2020factor} regress on all time periods using $\tilde{X}_{it} = X_{it} W_{it}$ and adjust for the observed proportion $\hat{q}$ in the regression. 
		
		\paragraph{First term: $\bigg( \frac{1}{T} \sum_{t=1}^T W_{it} H^\T F_t F_t^\T H \bigg)^{-1}  \bigg(  \frac{\sqrt{\delta_{NT}}}{T} \sum_{t=1}^T W_{it} H^\T F_t e_{it} \bigg)$.}
		\texttt{}\\
		The corresponding term in the initial estimator in \cite{jin2020factor} is $ \frac{H^\T}{\sqrt{T} q }  \cdot \sum_{t=1}^T W_{it} F_t e_{it}$. Note that \cite{jin2020factor} regress on all time periods and then the term $\Big( \frac{1}{T} \sum_{t=1}^T W_{it} H^\T F_t F_t^\T H \Big)^{-1} $ in their case equals to an identity matrix given $\tilde{F}^\T \tilde{F}/T = I_r$. From from Lemma 7 on p41 in the online appendix, we can show $H H^\T = \Sigma_F^\I + O_P\big(\frac{1}{\sqrt{\delta_{NT}}} \big)$, and then the first term equals 
		\begin{align*}
			& \bigg( \frac{1}{T} \sum_{t=1}^T W_{it} H^\T F_t F_t^\T H \bigg)^{-1}  \bigg(  \frac{\sqrt{\delta_{NT}}}{T} \sum_{t=1}^T W_{it} H^\T F_t e_{it} \bigg) \\ 
			=& H^\T \underbrace{(H H^\T)^\I}_{\Sigma_F + O_P\big(\frac{1}{\sqrt{\delta_{NT}}} \big)} \underbrace{\bigg( \frac{1}{T} \sum_{t=1}^T W_{it} F_t F_t^\T \bigg)^{-1}}_{\frac{1}{q} \Sigma_F^\I + O_P\big(\frac{1}{\sqrt{T}} \big)}   \bigg(  \frac{\sqrt{\delta_{NT}}}{T} \sum_{t=1}^T W_{it} F_t e_{it} \bigg) \\
			=& H^\T   \bigg(  \frac{\sqrt{\delta_{NT}}}{T q} \sum_{t=1}^T W_{it} F_t e_{it} \bigg) + O_P\bigg(\frac{1}{\sqrt{\delta_{NT}}}\bigg),
		\end{align*}
		where $\frac{1}{T} \sum_{t=1}^T W_{it} F_t F_t^\T = q \Sigma_F + O_P\big(\frac{1}{\sqrt{T}}\big)$ holds under the assumption that the observation pattern is exogenous and does not depend on the value of $F_t$. Therefore this term is asymptotically the same as the corresponding term in the initial estimator in \cite{jin2020factor}. 
		
		\paragraph{Second term: $\bigg(\frac{1}{T} \sum_{t=1}^T W_{it} H^\T F_t F_t^\T H \bigg)^{-1}  \bigg( \sqrt{\delta_{NT}} H^\T \*X_i^\T H \tilde{D}^\I H^\I \Lambda_i \bigg)$.}
		\texttt{}\\
		The corresponding term in the initial estimator in  \cite{jin2020factor} is $ \frac{H^\T}{\sqrt{T} q }  \cdot \sum_{t=1}^T W_{it} F_t F_t^\T \Lambda_i (W_{it} - 1) $. 
		
		Under the simplified factor model, and using a similar proof as the fifth step in the proof of Proposition \ref{prop:simple-assump-imply-general-assump}.1(b), we can show that $H^\T \*X_i^\T H \tilde{D}^\I H^\I \Lambda_i$ has a stable limiting distribution with the asymptotic variance
		\begin{align*}
			&	\AVar(\sqrt{N} H^\T \*X_i^\T H \tilde{D}^\I H^\I \Lambda_i) \\ =& \Lp \lim_{T \rightarrow \infty}\frac{1}{T^4} \sum_{s= 1}^T \sum_{t= 1}^T \sum_{u = 1}^T \sum_{v = 1}^T   \frac{q_{st,uv}}{q_{st}q_{uv}}  - 1  \Rp \cdot q^2 \cdot \big( I_r \otimes (\Lambda_i^{\top}  \Sigma_{\Lambda}^{-1} \Sigma_F^{-1}) \big)  ( \Sigma_{F} \otimes  \Sigma_F )  \Xi_\Lambda   ( \Sigma_{F} \otimes  \Sigma_F )  \big( I_r \otimes ( \Sigma_F^{-1}  \Sigma_{\Lambda}^{-1} \Lambda_i ) \big) .
		\end{align*}
		Note that $\lim_{T \rightarrow \infty}\frac{1}{T^4} \sum_{s= 1}^T \sum_{t= 1}^T \sum_{u = 1}^T \sum_{v = 1}^T   \frac{q_{st,uv}}{q_{st}q_{uv}} = 1$ when every entry is missing at random, and then $\AVar(\*X_i) = 0$. Therefore, this term vanishes when every entry is missing at random. As a comparison, the corresponding term in the initial estimator in \cite{jin2020factor} is non-negligible. 
		
		\paragraph{Combining two terms.} Since the first term has the same asymptotic variance, and the second term has a smaller asymptotic variance as those in the initial estimator in \cite{jin2020factor}, our estimated factors are asymptotically more efficient than the initial estimates of factors in \cite{jin2020factor}. For the iterated estimator in \cite{jin2020factor}, the asymptotic variance is $\lim_{(\ell, N, T) \rightarrow \infty} \AVar(\hat{\Lambda}^{(\ell)}) = (Q^\T)^\I \Phi_{1g,i}(q) Q^\I$  and $\Phi_{1g,i}(q) = \AVar\big( \frac{1}{\sqrt{T} q }  \cdot \sum_{t=1}^T W_{it} F_t e_{it} \big)$, so the second term vanishes and the first term is asymptotically the same (our definition and \cite{jin2020factor}'s definition of $Q$ are identical under the simplified factor model). Therefore, our estimator is asymptotically the same as the iterated estimator in \cite{jin2020factor}.

		\subsection{Proof of Proposition \ref{prop:simple-assump-imply-general-assump}: Simplified Model}

		In this section, we prove Proposition \ref{prop:simple-assump-imply-general-assump}.1,  \ref{prop:simple-assump-imply-general-assump}.2 and \ref{prop:simple-assump-imply-general-assump}.3 step by step.
		
		\subsubsection{Proof of Proposition \ref{prop:simple-assump-imply-general-assump}.1(a)}
		In this proof, we suppose Assumption \ref{ass:obs-equal-weight} holds without further statement. We show  Assumption \ref{ass:factor-model}.\ref{ass:factor} holds under Assumption \ref{ass:simple-factor-model}.1, \ref{ass:factor-model}.\ref{ass:loading} holds under \ref{ass:simple-factor-model}.2,  \ref{ass:factor-model}.\ref{ass:error} holds under \ref{ass:simple-factor-model}.3, and \ref{ass:factor-model}.\ref{ass:factor-error} holds under \ref{ass:simple-factor-model}.4. 
		\paragraph{Step 1: Show that Assumption \ref{ass:factor-model}.\ref{ass:factor} holds under Assumption \ref{ass:simple-factor-model}.1.} \texttt{} \\
		\textit{Step 1.1: Show that $\frac{1}{T} \sum_{t=1}^T F_t F_t^\T \xrightarrow{P} \Sigma_F$ holds under Assumption \ref{ass:simple-factor-model}.1.}

		From LLN, under Assumption \ref{ass:simple-factor-model}.1, $\frac{1}{T} \sum_{t=1}^T F_t F_t^\T \xrightarrow{P} \Sigma_F$.  \\ \\
		\textit{Step 1.2: Show that $\+E\norm{\sqrt{T} \Lp \frac{1}{T} \sum_{t =1}^T F_t F_t^\T - \Sigma_F \Rp}^2  \leq M$ holds under Assumption \ref{ass:simple-factor-model}.1.} 
		
		Let $F_{t,j}$ be the $j$-th entry of $F_t$ and $\Sigma_{F, jk}$ be the $(j,k)$-th entry of $\Sigma_F$.
		\begin{eqnarray*}
			\Scale[1]{\+E \norm{\sqrt{T} \Lp \frac{1}{T} \sum_{t =1}^T F_t F_t^\T - \Sigma_F \Rp}^2} &=& \Scale[1]{T \sum_{j,k} \+E[\frac{1}{T} \sum_{t =1}^T F_{t,j} F_{t,k} - \Sigma_{F,jk}  ]^2}  \\
			&=&  \Scale[1]{  \sum_{j,k}  \+E[F_{t,j} F_{t,k} - \Sigma_{F,jk}  ]^2 } \leq M.
		\end{eqnarray*} 
		\textit{Step 1.3: Show that $\frac{1}{|\tlq_{ij}|} \sum_{t \in \tlq_{ij}} F_t F_t^\T \xrightarrow{P} \Sigma_F$ and \\ $\+E\norm{\sqrt{|\tlq_{ij}|} \Lp \frac{1}{|\tlq_{ij}|} \sum_{t \in \tlq_{ij}} F_t F_t^\T - \Sigma_F \Rp}^2  \leq M$ hold under Assumption \ref{ass:simple-factor-model}.1.}  
		
		Since $W_{it}$ is independent of $F_s$ for all $i, t, s$ from Assumption \ref{ass:obs-equal-weight}, following the same proof, under Assumption \ref{ass:simple-factor-model}.1, we have  for any $\tlq_{ij}$, $\frac{1}{|\tlq_{ij}|} \sum_{t \in \tlq_{ij}} F_t F_t^\T \xrightarrow{P} \Sigma_F$ and \\ $\+E\norm{\sqrt{|\tlq_{ij}|} \Lp \frac{1}{|\tlq_{ij}|} \sum_{t \in \tlq_{ij}} F_t F_t^\T - \Sigma_F \Rp}^2  \leq M$. 
		\paragraph{Step 2: Show that Assumption \ref{ass:factor-model}.\ref{ass:loading} holds under  Assumption \ref{ass:simple-factor-model}.2.} \texttt{} \\
		\textit{Step 2.1: Show that $\frac{1}{N} \sum_{i = 1}^N \Lambda_i \Lambda_i^\T \xrightarrow{P} \Sigma_{\Lambda}  $ and
			$\+E \norm{\sqrt{N} \Lp  \frac{1}{N} \sum_{i=1}^N \Lambda_i \Lambda_i^\T  - \Sigma_\Lambda \Rp} \leq M $ hold under Assumption \ref{ass:simple-factor-model}.2.}  \\ \\
		Under Assumption \ref{ass:simple-factor-model}.2, similar to the proof for the factors,   we have $\frac{1}{N} \sum_{i = 1}^N \Lambda_i \Lambda_i^\T \xrightarrow{P} \Sigma_{\Lambda}  $ and 
		$\+E \norm{\sqrt{N} \Lp  \frac{1}{N} \sum_{i=1}^N \Lambda_i \Lambda_i^\T  - \Sigma_\Lambda \Rp} \leq M $. 
		\paragraph{Step 3: Show that Assumption \ref{ass:factor-model}.\ref{ass:error} holds under Assumption \ref{ass:simple-factor-model}.3}  \texttt{} \\
		\textit{Step 3.1: Show that $\+E[e_{is}e_{it}] = \gamma_{st,i}$ with $|\gamma_{st,i}| \leq \gamma_{st}$ for some $\gamma_{st}$ and all $i$. For all $t$, $\sum_{s=1}^T \gamma_{st} \leq M$ holds under Assumption \ref{ass:simple-factor-model}.3.} \\ \\
		For $s \neq t$, $\+E[e_{is}e_{it}]  = 0$ and $\gamma_{st} = 0$; for $s = t$, $\+E[e_{is}e_{it}] = \+E[e_{is}^2] = \sigma_e^2$ and $\gamma_{ss} = \sigma_e^2$. Then $\sum_{s=1}^T \gamma_{st} = \sigma_e^2 \leq M$.  \\ \\
		\textit{Step 3.2: Show that $\+E[e_{it}e_{jt}] = \tau_{ij,t}$ with $|\tau_{ij,t}| \leq \tau_{ij}$  for some $\tau_{ij}$ and all $t$. For all $i$, $\sum_{j = 1}^N \tau_{ij} \leq M$ holds under Assumption \ref{ass:simple-factor-model}.3.} \\ \\
		For $i \neq j$, $\+E[e_{it}e_{jt}] = 0$ and $\tau_{ij} = 0$; for $i = j$, $\+E[e_{it}e_{jt}] = \+E[e_{it}^2] = \sigma_e^2$ and $\tau_{ii} = \sigma_e^2$. Then $\sum_{i=1}^N \tau_{ij} = \sigma_e^2 \leq M$. \\ \\
		\textit{Step 3.3: Show that $\+E[e_{it}e_{js}] = \tau_{ij,ts}$ and $\sum_{j=1}^N \sum_{s = 1}^T |\tau_{ij,ts}| \leq M$ for all $i$ and $t$ holds under Assumption \ref{ass:simple-factor-model}.3.} \\ \\
		If $i \neq j$ or $s \neq t$, $\+E[e_{it} e_{js}] = 0$ and $\tau_{ij,ts}$; if $i = j$ and $s = t$, $\+E[e_{it} e_{js}] = \sigma_e^2$ and $\tau_{ii,tt} = \sigma_e^2$. Then $\sum_{j = 1}^N \sum_{s = 1}^T |\tau_{ij,ts}| = \sigma_e^2 \leq M$. \\ \\
		\textit{Step 3.4: Show that for all $i$ and $j$, $\+E \left\vert \frac{1}{|\tlq_{ij}|^{1/2}} \sum_{t \in \tlq_{ij}} \Lp e_{it}e_{jt} -  \+E[e_{it}e_{jt}] \Rp \right\vert^4 \leq M$ holds under  Assumption \ref{ass:simple-factor-model}.3.} \\ \\
		Denote $v_t = e_{it} e_{jt} - \+E[e_{it} e_{jt}]$. Then for any set $\tls$, 
		\begin{eqnarray*}
			\+E\Ls \frac{1}{|\tls|} \sum_{t \in \tls} v_t \Rs^4 &=& \frac{1}{|\tls|^2} \+E \Ls \sum_{t,s,u,w \in \tls} v_t v_s v_u v_w \Rs  = \frac{1}{|\tls|^2} \Lp \sum_{t \in \tls} \+E[v_t^4] + 3 \sum_{t \neq s} \+E[v_t^2 v_s^2]  \Rp \\
			&=& \frac{1}{|\tls|^2} \Lp |\tls| \+E[v_t^4] + 3 |\tls| (|\tls| - 1) \Lp \+E[v_t^2] \Rp^2 \Rp \leq M
		\end{eqnarray*}
		following the boundedness of $\+E[v_t^2]$ and $\+E[v_t^4]$. 
		\paragraph{Step 4: Show that Assumption \ref{ass:factor-model}.\ref{ass:factor-error} holds under  Assumption \ref{ass:simple-factor-model}.4}  \texttt{} \\
		Since $F_t$ and $e_{it}$ are both i.i.d. and $W_{js}$ is independent of $F_t$ and $e_{it}$ ,we can reshuffle the index $t$ for every $(i,j)$, such as that $\tlq_{ij} = \{1, 2, \cdots, |\tlq_{ij}|\}$. Following the weak dependence assumption between the factors and errors, we have $\+E \norm{\frac{1}{\sqrt{|\tlq_{ij}|}} \sum_{i \in \tlq_{ij}} F_t e_{it} }^2 \leq M. $
		
		
		\subsubsection{Proof of Proposition \ref{prop:simple-assump-imply-general-assump}.1(b)}
			In this proof suppose Assumptions \ref{ass:obs-equal-weight} and \ref{ass:simple-factor-model} hold without further statement. We show that each part in Assumption \ref{ass:mom-clt} holds under Assumption \ref{ass:simple-moment}. For notation simplicity, denote $q_{ij} = q_{ij,ij}$.  
			\paragraph{Step 1: Show that Assumption \ref{ass:mom-clt}.1 holds under Assumption \ref{ass:simple-factor-model}} \texttt{} \\
			Denote $v_{ij,s} = e_{is}e_{js} - \+E[e_{is}e_{js}]$.  For $\phi_{i,st} = W_{it} F_s, \Lambda_{i}, W_{it} \Lambda_{i}$ , since $e$ is independent of $F$, $\Lambda$ and $W$, then $\phi_{i,st}$ is independent of $e$ for all $i,s,t$, $\+E[\phi_{i,st}^2] \leq M$ for some generic $M$, and we have 
			\begin{eqnarray*}
				\Scale[1]{\+E\norm{\frac{1}{\sqrt{N}} \sum_{i=1}^N \frac{1}{\sqrt{|\tlq_{ij}|}} \sum_{s \in \tlq_{ij}} \phi_{i,st}  v_{ij,s} }^2 } = \Scale[1]{\frac{1}{N} \sum_{i=1}^N \sum_{l=1}^N \frac{1}{\sqrt{|\tlq_{ij}| \cdot |\tlq_{lj}|}} \+E \Ls(\sum_{s\in \tlq_{ij}} \phi_{i,st} v_{ij,s}) (\sum_{s\in \tlq_{lj}} \phi_{l,st} v_{lj,s}) \Rs}. 
			\end{eqnarray*}
			When $i = l = j$, 
			\begin{align*}
				&\+E \Big[ (\sum_{s\in \tlq_{ij}} \phi_{i,st}  v_{ij,s}) (\sum_{s\in \tlq_{lj}} \phi_{l,st}  v_{lj,s}) \Big] = \+E[(\sum_{s\in \tlq_{jj}} \phi_{j,st} (e_{js}^2 - \sigma_e^2))^2] = \sum_{s \in \tlq_{jj}} \+E[ \phi_{j,st}^2] \+E[(e_{js}^2 -\sigma_e^2)^2 ]  \\
				\leq& M |\tlq_{jj}| (\+E[e_{js}^4] - \sigma_e^4).    
			\end{align*}
			When $i \neq j$ and $l = j$ (when $i = j$ and $l \neq j$, it is similar), $\+E[e_{it}e_{jt}] = 0$ and
			\[\Scale[1]{\+E \Ls(\sum_{s \in \tlq_{ij}} \phi_{i,st}  v_{ij,s}) (\sum_{s \in \tlq_{lj}} \phi_{l,st}  v_{lj,s}) \Rs =  \sum_{s \in \tlq_{ij}} \sum_{s \in \tlq_{jj}} \+E[\phi_{i,st}  \phi_{j,st}  ] \+E[(e_{it}e_{jt})(e_{jt}^2 - \sigma_e^2)] = 0.     }    \]
			When $i \neq j$ and $l \neq j$, 
			\[\Scale[1]{ \+E \Ls(\sum_{s \in \tlq_{ij}}\phi_{i,st}  v_{ij,s}) (\sum_{s \in \tlq_{lj}} \phi_{l,st}  v_{lj,s}) \Rs =  \sum_{s \in \tlq_{ij}}  \sum_{s \in \tlq_{lj}} \+E[\phi_{i,st}  \phi_{l,st}  ]  \+E[e_{is}e_{ls}e_{js}^2]  = 0.   }\]
			Then we have 
			\begin{align*}
				& \Scale[1]{\frac{1}{N} \sum_{i=1}^N \sum_{j=1}^N \frac{1}{\sqrt{|\tlq_{ij}| \cdot |\tlq_{lj}|}} \+E \Ls(\sum_{s \in \tlq_{ij}} \phi_{i,st} v_{ij,s}) (\sum_{t\in \tlq_{lj}} \phi_{l,st} v_{lj,s}) \Rs  } \\
				\leq& \Scale[1]{ \frac{M}{N} \frac{1}{|\tlq_{ll}|}  \cdot |\tlq_{ll}| (\+E[e_{lt}^4] - \sigma_e^4)   = M \cdot \frac{\+E[e_{lt}^4] - \sigma_e^4}{N} \leq M. } 
			\end{align*}
			\paragraph{Step 2: Show that Assumption \ref{ass:mom-clt}.2 holds under Assumptions \ref{ass:simple-factor-model}} \texttt{} \\ 
			Since $F$, $\Lambda$ and $e$ are independent, and $W$ is independent of $F$ and $e$, then $\phi_{it}$ is independent of $F$ and $e$ for any $i$ and $t$ and we have 
			\begin{eqnarray*}
				&& \Scale[1]{\+E \Ls \norm{\frac{1}{\sqrt{N}} \sum_{i=1}^N\frac{1}{\sqrt{|\tlq_{ij}|}} \sum_{s \in \tlq_{ij}} \phi_{it} F_s^\T e_{is} }^2  \Rs } \\
				&=& \Scale[1]{\sum_{k,m} \+E \Ls \Lp \frac{1}{\sqrt{N}} \sum_{i=1}^N\frac{1}{\sqrt{|\tlq_{ij}|}} \sum_{s \in \tlq_{ij}} \phi_{it,k} F_{s,m} e_{is} \Rp^2 \Rs } \\ 
				&=& \Scale[1]{ \sum_{k,m} \frac{1}{N} \sum_{i=1}^N \sum_{l=1}^N\frac{1}{\sqrt{|\tlq_{ij}| |\tlq_{lj}|}} \+E \Ls \Lp  \sum_{s \in \tlq_{ij}} \phi_{it,k} F_{s,m} e_{it} \Rp  \Lp  \sum_{s \in \tlq_{lj}} \phi_{lt,k} F_{s,m} e_{lt} \Rp \Rs  } \\
				&=& \Scale[1]{ \sum_{k,m} \frac{1}{N} \sum_{i=1}^N \sum_{l=1}^N\frac{1}{\sqrt{|\tlq_{ij}| |\tlq_{lj}|}}  \sum_{s \in \tlq_{ij}} \sum_{s^\prime \in \tlq_{lj}} \+E [\Lambda_{i,k} \Lambda_{l,k}]  \+E[ F_{s,m} F_{s^\prime,m}]  \+E[ e_{is} e_{ls^\prime}]    } \\
				&=& \Scale[1]{ \sum_{k,m} \frac{1}{N} \sum_{i=1}^N \frac{1}{|\tlq_{ij}|}  \sum_{s \in \tlq_{ij}}  \+E [\Lambda_{i,k}^2 ] \+E[ F_{s,m}^2]  \+E[ e_{is}^2]    } \leq M, 
			\end{eqnarray*}
			where the last equality follows from $\+E[e_{it}e_{ls}] = 0$ for $(i,t) \neq (l,s)$ from Assumption \ref{ass:simple-factor-model}.3. 
			\paragraph{Step 3: Show that Assumption \ref{ass:mom-clt}.3 holds under Assumption \ref{ass:simple-factor-model} and Assumption \ref{ass:simple-moment}} \texttt{} \\
			\textit{Step 3.1: Show that $\lim_{N \rightarrow \infty} \frac{1}{N^2} \sum_{i=1}^N \sum_{l=1}^N  \frac{q_{ij, lj}}{ q_{ij} q_{lj}}  \Lambda_i \Lambda_i^\T  \Sigma_F \Lambda_l \Lambda_l^\T $ exists under Assumption \ref{ass:simple-moment}} \\ \\
			For notation simplicity, denote $x_{il} \stackrel{\Delta}{=} \tvec \Big( \Lambda_i \Lambda_i^\T  \Sigma_F \Lambda_l \Lambda_l^\T \Big) $ and $x_{il,m}$ is the $m$-th entry in $x_{il}$. $x_{ii}$ is i.i.d, and $x_{il}$ is i.i.d. for $i \neq l$. from Assumption \ref{ass:simple-factor-model}.2. From the definition of $q_{ij}$ and $q_{ij,lj}$,  we have $0 \leq  \frac{q_{ij,lj}}{q_{ij}q_{lj}} - 1 \leq \frac{1}{q_{lj}} - 1\leq \frac{1}{\underline{q}} - 1$. As $q_{ij}$ and $q_{ij,kl}$ are independent of $\Lambda_m \Lambda_m^\T$ for all $i, j, k, l, m$, we have
			\begin{eqnarray*}
				&& \+E \Lp \frac{1}{N^2} \sum_{i= 1}^N \sum_{l = 1}^N    \frac{q_{ij,lj}}{q_{ij}q_{lj}}  (x_{il,m} - \+E[ x_{il,m}]) \Rp^2 \\
				&=&  \frac{1}{N^4} \sum_{i=1}^N \sum_{l = 1}^N \Lp  \frac{q_{ij,lj}}{q_{ij}q_{lj}}   \Rp^2 \cdot \+E [ x_{il,m} - \+E[ x_{il,m}] ] ^2 \\
				&& + \frac{1}{N^4} \sum_{i=1}^N \sum_{l \neq k}   \frac{q_{ij,lj}}{q_{ij}q_{lj}} \cdot \frac{q_{ij,kj}}{q_{ij}q_{kj}} \cdot \+E[ ( x_{il,m}  - \+E[ x_{il,m}]) (x_{ik,m} - \+E[ x_{ik,m}]) ] \\
				&& +  \frac{1}{N^4} \sum_{i \neq k} \sum_{l = 1}^N   \frac{q_{ij,kj}}{q_{ij}q_{kj}} \cdot \frac{q_{lj,kj}}{q_{lj}q_{kj}} \cdot \+E [  (x_{il,m} - \+E[ x_{il,m}])  ( x_{kl,m} - \+E[ x_{kl,m}] )] \\
				&& + \frac{1}{N^4} \sum_{i \neq k} \sum_{l \neq n}  \frac{q_{ij,lj}}{q_{ij}q_{lj}} \cdot \frac{q_{kj,nj}}{q_{kj}q_{nj}} \cdot \+E[ (x_{il, m} - \+E[ x_{il,m}]) ( x_{kn, m} - \+E[ x_{kn,m}] ) ]\\
				&=& O\Lp \frac{1}{N} \Rp
			\end{eqnarray*}
			as the last term is 0 ($\+E[ (x_{il, m} - \+E[ x_{il,m}]) ( x_{kn, m} - \+E[ x_{kn,m}] ) ] = \+E[ x_{il, m} - \+E[ x_{il,m}]] \+E[ x_{kn, m} - \+E[ x_{kn,m}]  ] = 0$ following that $x_{il}$ is i.i.d. for $i \neq l$). Hence, from Chebyshev's inequality and Assumption \ref{ass:simple-moment}.2 we conclude
			\begin{eqnarray*}
				&& \lim_{N \rightarrow \infty}\frac{1}{N^2} \sum_{i= 1}^N \sum_{l = 1}^N   \frac{q_{ij,lj}}{q_{ij}q_{lj}}  \Lambda_i \Lambda_i^\T  \Sigma_F \Lambda_l \Lambda_l^\T  \\
				&\xrightarrow{p}&   \Lp \lim_{N \rightarrow \infty}\frac{1}{N^2} \sum_{i= 1}^N \sum_{l = 1}^N   \frac{q_{ij,lj}}{q_{ij}q_{lj}}  \Rp   \Sigma_\Lambda  \Xi_F   \Sigma_\Lambda = \omega_{jj} \Sigma_\Lambda  \Xi_F   \Sigma_\Lambda.
			\end{eqnarray*}
			
			\noindent \textit{Step 3.2: Show that $\frac{\sqrt{T}}{N} \sum_{i = 1}^N  \Lambda_i \Lambda_i^\T  \frac{1}{|\tlq_{ij}|} \sum_{t \in \tlq_{ij}} F_t e_{jt} $ is asymptotically normal under  Assumption \ref{ass:simple-moment}.1.} \\ \\
			Assumption \ref{ass:simple-moment}.1 and Assumption \ref{ass:simple-factor-model} together with the CLT imply
			\[\Scale[1]{\frac{1}{\sqrt{|\tlq_{ij}|}} \sum_{t \in \tlq_{ij}} F_t e_{jt} \xrightarrow{d} N(0, \Sigma_F \sigma_e^2) } \]
			and following Assumption \ref{ass:obs-equal-weight}.2, we have 
			\[\Scale[1]{\frac{\sqrt{T}}{|\tlq_{ij}|} \sum_{t \in \tlq_{ij}} F_t e_{jt} \xrightarrow{d} N(0, \frac{1}{q_{ij}} \Sigma_F \sigma_e^2).  } \]
			The CLT implies that $\begin{bmatrix}
				\frac{\sqrt{T}}{|\tlq_{1j}|} \sum_{t \in \tlq_{1j}} F_t^\T e_{jt} & \frac{\sqrt{T}}{|\tlq_{2j}|} \sum_{t \in \tlq_{2j}} F_t e_{jt} & \cdots & \frac{\sqrt{T}}{|\tlq_{Nj}|} \sum_{t \in \tlq_{Nj}} F_t e_{jt} 
			\end{bmatrix}^\T$ is jointly asymptotic normal, and for $i \neq l$, 
			\[\Scale[1]{\ACov \Lp \frac{\sqrt{T}}{|\tlq_{ij}|} \sum_{t \in \tlq_{ij}} F_t e_{jt}, \frac{\sqrt{T}}{|\tlq_{lj}|} \sum_{t \in \tlq_{lj}} F_t e_{jt}  \Rp = \lim_{T\rightarrow \infty} \frac{T}{|\tlq_{ij}| |\tlq_{lj}|} \sum_{t \in \tlq_{ij}} \sum_{s \in \tlq_{ij}} \+E \Big[ F_t e_{jt} e_{js} F_s^\T \Big] = \frac{q_{ij, lj}}{ q_{ij} q_{lj}} } \Sigma_F\sigma_e^2   \]
			since $\+E[e_{jt} e_{js}] = 0$. Because of Step 3.1 ($\lim_{N \rightarrow \infty} \frac{1}{N^2} \sum_{i=1}^N \sum_{l=1}^N  \frac{q_{ij, lj}}{ q_{ij} q_{lj}}  \Lambda_i \Lambda_i^\T  \Sigma_F \Lambda_l \Lambda_l^\T $ exists),
			we have 
			\[\frac{\sqrt{T}}{N} \sum_{i = 1}^N  \Lambda_i \Lambda_i^\T  \frac{1}{|\tlq_{ij}|} \sum_{t \in \tlq_{ij}} F_t e_{jt} \xrightarrow{d} N(0, \covI_{\Lambda,j}),   \]
			where 
			\begin{eqnarray*}
				\covI_{\Lambda,j} &=&  \AVar\Lp \frac{\sqrt{T}}{N} \sum_{i = 1}^N  \Lambda_i \Lambda_i^\T \frac{1}{|\tlq_{ij}|} \sum_{t \in \tlq_{ij}} F_t e_{jt}\Rp =  \lim_{N \rightarrow \infty} \frac{\sigma_e^2}{N^2} \sum_{i=1}^N \sum_{l=1}^N \frac{q_{ij, lj}}{ q_{ij} q_{lj}}  \Lambda_i \Lambda_i^\T  \Sigma_F \Lambda_l \Lambda_l^\T   \\
				&=& \omega_{jj}  \cdot \sigma_e^2  \Sigma_\Lambda  \Sigma_F  \Sigma_\Lambda , 
			\end{eqnarray*}
			which follows from the results in Step 3.1. 
			\paragraph{Step 4: Show that Assumption \ref{ass:mom-clt}.4 holds under Assumptions \ref{ass:simple-factor-model} and \ref{ass:simple-moment}} \texttt{} \\
			$\Lambda_i e_{it}$ is independent across $i$. The CLT and the independence of $\Lambda$ and $e$ yield 
			\begin{align*}
				\frac{1}{\sqrt{N}} \sum_{i \in \tlo_t}  \Lambda_i  e_{it}  \xrightarrow{d} \calN(0, \Sigma_{\Lambda,t} \sigma_e^2).
			\end{align*}
			\paragraph{Step 5: Show that Assumption \ref{ass:mom-clt}.5 holds under Assumptions \ref{ass:simple-factor-model} and \ref{ass:simple-moment}} \texttt{} \\ 
			\textit{Step 5.1: Show that $\lim_{N \rightarrow \infty} \frac{1}{N^2} \sum_{i= 1}^N \sum_{k = 1}^N  \Lp  \frac{q_{ij,kj}}{q_{ij}q_{kj}} - 1  \Rp  (I_r \otimes  \Lambda_i \Lambda_i^\T)  \Xi_F   (I_r \otimes  \Lambda_k \Lambda_k^\T) $ exists under Assumption \ref{ass:simple-moment}.5} \\ \\
			For notation simplicity, denote $x_{ik} \stackrel{\Delta}{=}  \tvec(  (I_r \otimes  \Lambda_i \Lambda_i^\T)  \Xi_F   (I_r \otimes  \Lambda_k \Lambda_k^\T) ) $ and $x_{ik,m}$ is the $m$-th entry in $x_{ik}$. $x_{ii}$ is i.i.d and $x_{ik}$ for $i \neq k$ is i.i.d. from Assumption \ref{ass:simple-factor-model}.2. From the definition of $q_{ij}$ and $q_{ij,kj}$,  we have $0 \leq  \frac{q_{ij,kj}}{q_{ij}q_{kj}} - 1 \leq \frac{1}{q_{kj}} - 1\leq \frac{1}{\underline{q}} - 1$. Then, we have
			\begin{eqnarray*}
				&& \+E \Lp \frac{1}{N^2} \sum_{i= 1}^N \sum_{k = 1}^N  \Lp  \frac{q_{ij,kj}}{q_{ij}q_{kj}} - 1  \Rp (x_{ik,m} - \+E[ x_{ik,m}]) \Rp^2 \\
				&=&  \frac{1}{N^4} \sum_{i=1}^N \sum_{k = 1}^N \Lp  \frac{q_{ij,kj}}{q_{ij}q_{kj}} - 1  \Rp^2\+E [ x_{ik,m} - \+E[ x_{ik,m}] ] ^2 \\
				&& + \frac{1}{N^4} \sum_{i=1}^N \sum_{k \neq l} \Lp  \frac{q_{ij,kj}}{q_{ij}q_{kj}} - 1  \Rp \Lp  \frac{q_{ij,lj}}{q_{ij}q_{lj}} - 1  \Rp \+E[(x_{ik,m} - \+E[ x_{ik,m}]) ( x_{il,m}  - \+E[ x_{il,m}]) ] \\
				&& +  \frac{1}{N^4} \sum_{i \neq l} \sum_{k = 1}^N \Lp  \frac{q_{ij,kj}}{q_{ij}q_{kj}} - 1  \Rp  \Lp  \frac{q_{lj,kj}}{q_{lj}q_{kj}} - 1  \Rp \+E [(x_{ik,m} - \+E[ x_{ik,m}]) ( x_{lk,m} - \+E[ x_{lk,m}] ) ] \\
				&& + \frac{1}{N^4} \sum_{i \neq l} \sum_{k \neq n} \Lp  \frac{q_{ij,kj}}{q_{ij}q_{kj}} - 1  \Rp  \Lp  \frac{q_{lj,nj}}{q_{lj}q_{nj}} - 1  \Rp \+E[ (x_{ik, m} - \+E[ x_{ik,m}]) ( x_{ln, m} - \+E[ x_{ln,m}] ) ]\\
				&=& O\Lp \frac{1}{N} \Rp.
			\end{eqnarray*}
			as the last term is 0. Hence, by Chebyshev's inequality
			\begin{eqnarray*}
				&& \lim_{N \rightarrow \infty}\frac{1}{N^2} \sum_{i= 1}^N \sum_{k = 1}^N  \Lp  \frac{q_{ij,kj}}{q_{ij}q_{kj}} - 1  \Rp (I_r \otimes  \Lambda_i \Lambda_i^\T)  \Xi_F   (I_r \otimes  \Lambda_k \Lambda_k^\T) \\
				&\xrightarrow{p}&   \Lp \lim_{N \rightarrow \infty}\frac{1}{N^2} \sum_{i= 1}^N \sum_{k = 1}^N   \frac{q_{ij,kj}}{q_{ij}q_{kj}} - 1  \Rp (I_r \otimes  \Sigma_\Lambda )  \Xi_F   (I_r \otimes  \Sigma_\Lambda) = (\omega_{jj} -1)(I_r \otimes  \Sigma_\Lambda )  \Xi_F   (I_r \otimes  \Sigma_\Lambda) 
			\end{eqnarray*}
			where the last equality follows from Assumption \ref{ass:simple-moment}.2.\\ 
			
			\noindent \textit{Step 5.2: Show that $ \frac{\sqrt{T}}{N} \sum_{i = 1}^N \Lambda_i \Lambda_i^\T \Lp \frac{1}{|\tlq_{ij}|} \sum_{t \in \tlq_{ij}} F_t F_t^\T - \frac{1}{T} \sum_{t = 1}^T F_t F_t^\T \Rp $ is asymptotically normal and $ \frac{\sqrt{T}}{N} \sum_{i = 1}^N \Lambda_i \Lambda_i^\T \Lp \frac{1}{|\tlq_{ij}|} \sum_{t \in \tlq_{ij}} F_t F_t^\T - \frac{1}{T} \sum_{t = 1}^T F_t F_t^\T \Rp  u_i$ converges stably in law for $u_i = \Lambda_{i}$} \\ \\
			Denote $\+E[\tvec(F_t F_t^\T - \Sigma_{F})  \tvec(F_t F_t^\T - \Sigma_{F})^\T] = \Xi_F$. From Assumptions \ref{ass:obs}.3 and \ref{ass:simple-factor-model}.1 we have
			\begin{eqnarray*}
				v^{(i,j)} &\stackrel{\Delta}{=}& \frac{\sqrt{T}}{|\tlq_{ij}|} \sum_{t \in \tlq_{ij}}   \tvec(F_t F_t^\T) - \frac{1}{\sqrt{T}} \sum_{t = 1}^T  \tvec(F_t F_t^\T) \\
				&=& \frac{1}{\sqrt{T}}  \Lp \frac{1}{\tilde q_{ij}} - 1 \Rp \sum_{t \in \tlq_{ij}}  \tvec( F_t F_t^\T - \Sigma_{F}) - \frac{1}{\sqrt{T}} \sum_{t \not\in \tlq_{ij}}  \tvec( F_t F_t^\T - \Sigma_{F})  \\
				&\xrightarrow{d}& N \Lp 0, \frac{1-q_{ij}}{q_{ij}}  \Xi_F \Rp. 
			\end{eqnarray*}
			For any $(i,j) \neq (k,l)$, we define 
				\[A = \sum_{t \in \tlq_{ij} \cap \tlq_{kl}} F_t F_t^\T,\quad  B = \sum_{t \in \tlq_{ij} \cap \tlq_{kl}^\mathsf{C}} F_t F_t^\T, \quad C = \sum_{t \in \tlq_{ij}^\mathsf{C} \cap \tlq_{kl}} F_t F_t^\T,\quad D = \sum_{t \in \tlq_{ij}^\mathsf{C} \cap \tlq_{kl}^\mathsf{C}} F_t F_t^\T.   \]
				Assumption \ref{ass:obs-equal-weight}.2 implies  $\frac{|\tlq_{ij} \cap \tlq_{kl}|}{T} \stackrel{\Delta}{=} \tilde q_{ij,kl} \rightarrow q_{ij,kl}$, $\frac{|\tlq_{ij} \cap \tlq_{kl}^\mathsf{C}|}{T} = \tilde q_{ij} - \tilde q_{ij,kl} \rightarrow q_{ij} -  q_{ij,kl}$, $\frac{|\tlq_{ij}^\mathsf{C} \cap \tlq_{kl}|}{T} = \tilde q_{kl} - \tilde q_{ij,kl} \rightarrow q_{kl} -  q_{ij,kl}$, and $\frac{|\tlq_{ij}^\mathsf{C} \cap \tlq_{kl}^\mathsf{C}|}{T} = 1 - \tilde q_{ij} - \tilde q_{kl} + \tilde q_{ij,kl} \rightarrow 1 -  q_{ij} - q_{kl} +  q_{ij,kl}$.
				Hence, we can show that $v^{(i,j)}$ and $v^{(k,l)}$ are jointly asymptotically normal with 
			\begin{eqnarray*}
				\Cov(v^{(i,j)}, v^{(k,l)}) &=& \Scale[1]{\Cov\Lp \frac{1}{\sqrt{T}}  \Lp \frac{1}{\tilde q_{ij}} - 1 \Rp ( A+B) - \frac{1}{\sqrt{T}} (C+D), \frac{1}{\sqrt{T}}  \Lp \frac{1}{\tilde q_{kl}} - 1 \Rp ( A+C) - \frac{1}{\sqrt{T}} (B+D)  \Rp}  \\
				&=& \Scale[1]{ \Lp \frac{1}{\tilde q_{ij}} - 1 \Rp  \Lp \frac{1}{\tilde q_{kl}} - 1 \Rp   \tilde q_{ij,kl}    \Xi_F - \Lp \frac{1}{\tilde q_{ij}} - 1 \Rp  (\tilde q_{ij} - \tilde q_{ij,kl}) \Xi_F}  \\
				&& \Scale[1]{- \Lp \frac{1}{\tilde q_{kl}} - 1 \Rp   (\tilde q_{kl} - \tilde q_{ij,kl} ) \Xi_F + (1  - \tilde q_{ij}  - \tilde q_{kl} + \tilde q_{ij,kl}) \Xi_F } \\
				&\rightarrow& \Scale[1]{\Lp  \frac{q_{ij,kl}}{q_{ij}q_{kl}} - 1  \Rp \Xi_F  }.
			\end{eqnarray*}
			
			The vectorized form of $\Lambda_i \Lambda_i^\T \Lp \frac{1}{|\tlq_{ij}|} \sum_{t \in \tlq_{ij}} F_t F_t^\T - \frac{1}{T} \sum_{t = 1}^T F_t F_t^\T \Rp$ is $ (I_r \otimes  \Lambda_i \Lambda_i^\T) v^{(i,j)}$. Then
			Assumption \ref{ass:mom-clt}.\ref{ass:asy-normal-add-term-thm-loading} holds with
			\begin{eqnarray*}
				\Phi_j &=& \lim_{N \rightarrow \infty} \frac{1}{N^2} \sum_{i= 1}^N \sum_{k = 1}^N  \Lp  \frac{q_{ij,kj}}{q_{ij}q_{kj}} - 1  \Rp  (I_r \otimes  \Lambda_i \Lambda_i^\T)  \Xi_F   (I_r \otimes  \Lambda_k \Lambda_k^\T)\\ 
				&=&  \Lp\omega_{jj}  - 1  \Rp (I_r \otimes  \Sigma_\Lambda )  \Xi_F   (I_r \otimes  \Sigma_\Lambda)
			\end{eqnarray*}
			following from the results in Step 5.1.

			We can show the stable convergence in law similar to \cite{jin2020factor}.  Note that
			\begin{align*}
				& \frac{\sqrt{T}}{N} \sum_{i = 1}^N \Lambda_i \Lambda_i^\T  \Big( \frac{1}{|\tlq_{ij}|} \sum_{t \in \tlq_{ij}} (F_t F_t^\T - \Sigma_F)  - \frac{1}{T} \sum_{t = 1}^T (F_t F_t^\T - \Sigma_F) \Big)  \Lambda_j \\
				=& \frac{1}{\sqrt{T}} \sum_{t=1}^T \Big( \frac{1}{N} \sum_{i=1}^N \Big\lbrace\frac{W_{it}W_{jt}}{\hat{q}_{ij}} -1 \Big\rbrace \Lambda_i \Lambda_i^\T  \Big) (F_t F_t^\T - \Sigma_F)  \Lambda_j, 
			\end{align*}
			where $\hat{q}_{ij} = \frac{|\tlq_{ij}|}{N}$.
			Define the sigma-field $\mathcal{G}_{Tt} = \sigma(\{W_{is}, s \leq t,  \text{all } i \}, \Lambda)$ that is generated from $\{W_{is}, s \leq t,  \text{all } i \}$ and $\Lambda$. Let $\mathcal{G} = \sigma(\cup_{t = 1}^T \mathcal{G}_{Tt} )$. Let $\omega \in R^r$ be a nonrandom vector with $\norm{\omega} = 1$. Let
			\begin{align*}
				\psi_{jt} = \omega^\T \cdot \frac{1}{\sqrt{T}} \Big( \frac{1}{N} \sum_{i=1}^N \Big\lbrace\frac{W_{it}W_{jt}}{\hat{q}_{ij}} -1 \Big\rbrace \Lambda_i \Lambda_i^\T  \Big) (F_t F_t^\T - \Sigma_F)  \Lambda_j 
			\end{align*}
			Since $F_t$ is i.i.d. from Assumption \ref{ass:simple-factor-model} and is independent of $W$ and $\Lambda$, we have $\+E[\psi_{jt}|\mathcal{G}_{T,t-1} ] = 0$ and $\sum_{t=1}^T  \+E[\psi_{jt}^2| \mathcal{G}_{T,t-1}] \xrightarrow{p} \omega^\T ( \Lambda_j^\T \otimes I_r) \Phi_j (\Lambda_j \otimes I_r) \omega$. Let  $\sum_{t=1}^T  \+E[\psi_{jt}^{2+\epsilon}| \mathcal{G}_{T,t-1}] \leq  \frac{\max_i\norm{\Lambda_i}^{6+3\epsilon}}{T^\epsilon} \Big( \frac{1}{\underline{q}}-1 \Big)^{2+\epsilon}  \frac{1}{T} \sum_{t = 1}^T \norm{F_t F_t^\T - \Sigma_F}^{2+\epsilon} \xrightarrow{p} 0$ from Assumption \ref{ass:simple-factor-model}. The conditional Lindeberg condition in \cite{hausler2015stable} holds and by the stable martingale central limit theorem (Theorem 6.1 in \cite{hausler2015stable}), 
			\[\sum_{t = 1}^T \psi_{jt} \xrightarrow{d} \mathcal{N} (0, h_j(\Lambda_j)) \quad \mathcal{G}\text{-stably as } T \rightarrow \infty, \]
			where $h_j(\Lambda_j) = ( \Lambda_j^\T \otimes I_r) \Phi_j (\Lambda_j \otimes I_r)$.
			\\ \\
			\textit{Step  5.3: Show that $ \lim_{N\rightarrow \infty}  \frac{1}{N^4} \sum_{i=1}^N \sum_{l = 1}^N \sum_{j = 1}^N \sum_{k = 1}^N  \Lp  \frac{q_{li,kj}}{q_{li}q_{kj}} - 1  \Rp W_{it} W_{jt}  (\Lambda_i \Lambda_i^\T \otimes I_r ) (I_r \otimes  \Lambda_l \Lambda_l^\T) \Xi_F$\\ $ (I_r \otimes  \Lambda_k \Lambda_k^\T) (\Lambda_j \Lambda_j^\T \otimes I_r )      $ exists. } \\ \\
			For notation simplicity, denote $x_{ilkj} \stackrel{\Delta}{=}  \tvec(W_{it} W_{jt}  (\Lambda_i \Lambda_i^\T \otimes I_r ) (I_r \otimes  \Lambda_l \Lambda_l^\T) \Xi_F (I_r \otimes  \Lambda_k \Lambda_k^\T) (\Lambda_j \Lambda_j^\T \otimes I_r )  ) $ and $x_{ilkj,m}$ is the $m$-th entry in $x_{ilkj}$. From the definition of $q_{ij}$ and $q_{ij,kj}$,  we have $0 \leq  \frac{q_{ij,kj}}{q_{ij}q_{kj}} - 1 \leq \frac{1}{q_{kj}} - 1\leq \frac{1}{\underline{q}} - 1$ and therefore
			\begin{eqnarray*} 
				\textstyle
				&& \+E \Lp \frac{1}{N^4} \sum_{i=1}^N \sum_{l = 1}^N \sum_{j = 1}^N \sum_{k = 1}^N  \Lp  \frac{q_{li,kj}}{q_{li}q_{kj}} - 1  \Rp  (x_{ilkj,m} - \+E[ x_{ilkj,m}]) \Rp^2 \\
				&=& \Scale[0.98]{\frac{1}{N^8} \sum_{\substack{\text{distinct } \\ i, i', j, j', \\ k, k', l, l'  }  } \Lp  \frac{q_{li,kj}}{q_{li}q_{kj}} - 1  \Rp  \Lp  \frac{q_{l'i',k'j'}}{q_{l'i'}q_{k'j'}} - 1  \Rp \+E[ ( (x_{ilkj,m} - \+E[ x_{ilkj,m}] ) (  (x_{i'l'k'j',m} - \+E[ x_{i'l'k'j',m}] ) ] + \text{other terms}} \\
				&=& O\Lp \frac{1}{N} \Rp, 
			\end{eqnarray*}
			as the number of terms in the other terms is $O(N^7)$ and $\+E[ ( (x_{ilkj,m} - \+E[ x_{ilkj,m}] ) (  (x_{i'l'k'j',m} - \+E[ x_{i'l'k'j',m}] ) ] = 0$ for distinct $ i, i', j, j', k, k', l, l'$.   Hence, by Chebyshev's inequality, it holds that
			\begin{eqnarray*}
				&& \Scale[0.95]{ \lim_{N\rightarrow \infty}  \frac{1}{N^4} \sum_{i=1}^N \sum_{l = 1}^N \sum_{j = 1}^N \sum_{k = 1}^N  \Lp  \frac{q_{li,kj}}{q_{li}q_{kj}} - 1  \Rp W_{it} W_{jt}   (\Lambda_i \Lambda_i^\T \otimes I_r ) (I_r \otimes  \Lambda_l \Lambda_l^\T) \Xi_F (I_r \otimes  \Lambda_k \Lambda_k^\T) (\Lambda_j \Lambda_j^\T \otimes I_r )      }  \\
				&\xrightarrow{p}&  \Scale[0.95]{  \Lp \lim_{N\rightarrow \infty} \frac{1}{N^4} \sum_{i=1}^N \sum_{l = 1}^N \sum_{j = 1}^N \sum_{k = 1}^N   \frac{q_{li,kj}}{q_{li}q_{kj}} - 1  \Rp ( \Sigma_{\Lambda,t} \otimes I_r) (I_r \otimes  \Sigma_\Lambda )  \Xi_F   (I_r \otimes  \Sigma_\Lambda)  ( \Sigma_{\Lambda,t} \otimes I_r)} \\
				&=& ( \omega - 1 )   ( \Sigma_{\Lambda,t} \otimes I_r) (I_r \otimes  \Sigma_\Lambda )  \Xi_F   (I_r \otimes  \Sigma_\Lambda)  ( \Sigma_{\Lambda,t} \otimes I_r),
			\end{eqnarray*} 
			where the last equality follows Assumption \ref{ass:simple-moment}.2. \\ \\
			\textit{Step 5.4: Show that $ \frac{\sqrt{T}}{N^2} \sum_{l = 1}^N \sum_{i = 1}^N \Lambda_l \Lambda_l^\T \Lp \frac{1}{|\tlq_{li}|} \sum_{t \in \tlq_{li}} F_t F_t^\T - \frac{1}{T} \sum_{t = 1}^T F_t F_t^\T \Rp   W_{it} \Lambda_i \Lambda_i^\T$ is asymptotic normal and $ \frac{\sqrt{T}}{N^2} \sum_{l = 1}^N \sum_{i = 1}^N \Lambda_l \Lambda_l^\T \Lp \frac{1}{|\tlq_{li}|} \sum_{t \in \tlq_{li}} F_t F_t^\T - \frac{1}{T} \sum_{t = 1}^T F_t F_t^\T \Rp   W_{it} \Lambda_i \Lambda_i^\T v_t$ converges stably in law for $v_t = H^\T \tilde D^\I (H^\T)^{-1} F_t$ } \\ \\
			The vectorized form of $\Lambda_l \Lambda_l^\T \Lp  \frac{1}{|\tlq_{li}|} \sum_{s \in \tlq_{li}} F_s F_s^\T - \frac{1}{T} \sum_{s=1}^T F_s F_s^\T \Rp W_{it} \Lambda_i \Lambda_{i} $ is $W_{it} (\Lambda_i \Lambda_i^\T \otimes I_r ) (I_r \otimes  \Lambda_l \Lambda_l^\T) v^{(l,i)}$. Then 
			\begin{align*}
				\mathbf{\Phi}_t =& \Scale[0.9]{\ACov_{\Lambda, W}  \Lp  \frac{\sqrt{T}}{N^2} \sum_{i = 1}^N \sum_{l = 1}^N W_{it} (\Lambda_i \Lambda_i^\T \otimes I_r ) (I_r \otimes  \Lambda_l \Lambda_l^\T) v^{(l,i)},  \frac{\sqrt{T}}{N^2} \sum_{i = 1}^N \sum_{l = 1}^N W_{it} (\Lambda_i \Lambda_i^\T \otimes I_r ) (I_r \otimes  \Lambda_l \Lambda_l^\T) v^{(l,i)} \Rp}  \\
				=& \Scale[0.9]{\lim_{N, T\rightarrow \infty} \frac{1}{N^4} \sum_{i=1}^N \sum_{l = 1}^N \sum_{j = 1}^N \sum_{k = 1}^N \Cov_{\Lambda, W} \Lp (W_{it} \Lambda_i \Lambda_i^\T \otimes I_r ) (I_r \otimes  \Lambda_l \Lambda_l^\T) v^{(l,i)}, W_{jt} (\Lambda_j \Lambda_j^\T \otimes I_r ) (I_r \otimes  \Lambda_k \Lambda_k^\T) v^{(k,j)}  \Rp    }  \\
				=& \Scale[0.9]{\lim_{N\rightarrow \infty}   \frac{1}{N^4} \sum_{i=1}^N \sum_{l = 1}^N \sum_{j = 1}^N \sum_{k = 1}^N  W_{it} W_{jt} (\Lambda_i \Lambda_i^\T \otimes I_r ) (I_r \otimes  \Lambda_l \Lambda_l^\T) \mathbf{\Phi}^{\mathbf{v}}_{g(l,i), g(k,j)} (I_r \otimes  \Lambda_k \Lambda_k^\T) (\Lambda_j \Lambda_j^\T \otimes I_r )      }  \\
				=& \Scale[0.9]{\lim_{N\rightarrow \infty}  \frac{1}{N^4} \sum_{i=1}^N \sum_{l = 1}^N \sum_{j = 1}^N \sum_{k = 1}^N  \Lp  \frac{q_{li,kj}}{q_{li}q_{kj}} - 1  \Rp W_{it} W_{jt}  (\Lambda_i \Lambda_i^\T \otimes I_r ) (I_r \otimes  \Lambda_l \Lambda_l^\T) \Xi_F (I_r \otimes  \Lambda_k \Lambda_k^\T) (\Lambda_j \Lambda_j^\T \otimes I_r )      }  \\
				=&  \textstyle \Lp \omega - 1  \Rp  ( \Sigma_{\Lambda,t} \otimes I_r) (I_r \otimes  \Sigma_\Lambda )  \Xi_F   (I_r \otimes  \Sigma_\Lambda)  ( \Sigma_{\Lambda,t} \otimes I_r)
			\end{align*}
			where the last equality follows from Step 5.3.
			
			We can rewrite $ \frac{\sqrt{T}}{N^2} \sum_{l = 1}^N \sum_{i = 1}^N \Lambda_l \Lambda_l^\T \Lp \frac{1}{|\tlq_{li}|} \sum_{t \in \tlq_{li}} F_t F_t^\T - \frac{1}{T} \sum_{t = 1}^T F_t F_t^\T \Rp   W_{it} \Lambda_i \Lambda_i^\T$ as 
			\begin{align*}
				\frac{1}{T} \sum_{s= 1}^T  \underbrace{\Big( \frac{1}{N^2} \sum_{i =1}^N \sum_{l = 1}^N \Big\lbrace\frac{W_{it}W_{lt}}{\hat{q}_{il}} -1 \Big\rbrace  \Lambda_l   \Lambda_l^\T  (F_s F_s^\T -  \Sigma_F)   W_{is}  \Lambda_i \Lambda_i^\T \Big)}_{\phi^\dagger_{st}},
			\end{align*}
			where $\hat{q}_{il} =  \frac{|\tlq_{il}|}{N}$.  Define the sigma-field $\mathcal{G}^t_{Ts} = \sigma(\{W_{iu}, u \leq s,  \text{all } i \}, \Lambda, F_t)$ that is generated from $\{W_{iu}, u \leq s,  \text{all } i \}$, $\Lambda$ and $F_t$. Let $\mathcal{G}^t = \sigma(\cup_{s= 1}^T \mathcal{G}^t_{Ts} )$. Let $\omega \in R^r$ be a nonrandom vector with $\norm{\omega} = 1$. Let
			\begin{align*}
				\phi_{st} =& \frac{\omega^\T}{\sqrt{T}} \cdot \Big( \frac{1}{N^2} \sum_{i =1}^N \sum_{l = 1}^N \Big\lbrace\frac{W_{it}W_{lt}}{\hat{q}_{il}} -1 \Big\rbrace  \Lambda_l   \Lambda_l^\T  (F_s F_s^\T -  \Sigma_F)   W_{is}  \Lambda_i \Lambda_i^\T \Big) H^\T \tilde D^\I (H^\T)^{-1} F_t \\ =& \frac{\omega^\T}{\sqrt{T}} \cdot  \Big(I_r \otimes (H^\T \tilde D^\I (H^\T)^{-1} F_t )^\T \Big) \tvec(\phi^\dagger_{st})
			\end{align*}
			Since $F_t$ is i.i.d. from Assumption \ref{ass:simple-factor-model} and is independent of $W$ and $\Lambda$, we have for $s \neq t$, $\+E[\phi_{st}|\mathcal{G}^t_{T,s-1} ] = 0$ and $\sum_{s \neq t}  \+E[\phi_{st}^2| \mathcal{G}_{T,s-1}^t] \xrightarrow{p} \omega^\T ( \Lambda_j^\T \otimes I_r) \Phi_j (\Lambda_j \otimes I_r) \omega$, given that $\+E[\phi_{tt}^2| \mathcal{G}^t_{T,t-1}] \xrightarrow{p} 0 $ and removing it from the summation does not change the limit. Let  $\sum_{s\neq t}  \+E[\phi_{st}^{2+\epsilon}| \mathcal{G}^t_{T,s-1}] \leq  \frac{\max_i\norm{\Lambda_i}^{8+4\epsilon}}{T^\epsilon} \Big( \frac{1}{\underline{q}}-1 \Big)^{2+\epsilon}  \frac{1}{T} \sum_{t = 1}^T \norm{F_t F_t^\T - \Sigma_F}^{2+\epsilon} \xrightarrow{p} 0$ from Assumption \ref{ass:simple-factor-model}. The conditional Lindeberg condition in \cite{hausler2015stable} holds and by the stable martingale central limit theorem (Theorem 6.1 in \cite{hausler2015stable}), 
			\[ \sum_{s = 1}^T \phi_{st} \xrightarrow{d} \mathcal{N} (0, g_t(v_t)) \quad \mathcal{G}^t \text{-stably as } T \rightarrow \infty, \]
			where $g_t(v_t) =  \mathbf{M}_{F,t} \mathbf{\Phi}_t  \mathbf{M}_{F,t}^\T$.
		\\ \\
		\textit{Step 5.5: Show that $ \frac{\sqrt{T}}{N} \sum_{i = 1}^N \Lambda_i \Lambda_i^\T \Lp \frac{1}{|\tlq_{ij}|} \sum_{t \in \tlq_{ij}} F_t F_t^\T - \frac{1}{T} \sum_{t = 1}^T F_t F_t^\T \Rp $ and \\  $ \frac{\sqrt{T}}{N^2} \sum_{l = 1}^N \sum_{i = 1}^N \Lambda_l \Lambda_l^\T \Lp \frac{1}{|\tlq_{li}|} \sum_{t \in \tlq_{li}} F_t F_t^\T - \frac{1}{T} \sum_{t = 1}^T F_t F_t^\T \Rp   W_{it} \Lambda_i \Lambda_i^\T$ are jointly asymptotically normal and their asymptotic covariance converges.} \\ \\
		The randomness of these two terms come both from $v^{(l,i)}$, which is asymptotically normal. These two terms are weighted average of $v^{(l,i)}$ and therefore they are jointly asymptotically normal as well. Next, we show that their aymptotic covariance converges and we provide the limit.
		\begin{align*}
			\mathbf{\Phi}_{j,t}^\cov =& \Scale[0.9]{\ACov_{\Lambda, W}  \Lp  \frac{\sqrt{T}}{N^2} \sum_{i = 1}^N \sum_{l = 1}^N W_{it} (\Lambda_i \Lambda_i^\T \otimes I_r ) (I_r \otimes  \Lambda_l \Lambda_l^\T) v^{(l,i)},  \frac{\sqrt{T}}{N} \sum_{i = 1}^N   (I_r \otimes  \Lambda_i \Lambda_i^\T) v^{(i,j)}\Rp}  \\
			=& \Scale[0.9]{\lim_{N, T\rightarrow \infty} \frac{1}{N^3} \sum_{i=1}^N \sum_{l = 1}^N  \sum_{k = 1}^N Cov_{\Lambda, W} \Lp (W_{it} \Lambda_i \Lambda_i^\T \otimes I_r ) (I_r \otimes  \Lambda_l \Lambda_l^\T) v^{(l,i)}, (I_r \otimes  \Lambda_k \Lambda_k^\T) v^{(k,j)}  \Rp    }  \\
			=& \Scale[0.9]{\lim_{N\rightarrow \infty}   \frac{1}{N^3} \sum_{i=1}^N \sum_{l = 1}^N  \sum_{k = 1}^N  W_{it} (\Lambda_i \Lambda_i^\T \otimes I_r ) (I_r \otimes  \Lambda_l \Lambda_l^\T) \mathbf{\Phi}^{\mathbf{v}}_{g(l,i), g(k,j)} (I_r \otimes  \Lambda_k \Lambda_k^\T)   }  \\
			=& \Scale[0.9]{\lim_{N\rightarrow \infty}  \frac{1}{N^3} \sum_{i=1}^N \sum_{l = 1}^N  \sum_{k = 1}^N  \Lp  \frac{q_{li,kj}}{q_{li}q_{kj}} - 1  \Rp W_{it}  (\Lambda_i \Lambda_i^\T \otimes I_r ) (I_r \otimes  \Lambda_l \Lambda_l^\T) \Xi_F (I_r \otimes  \Lambda_k \Lambda_k^\T)   }  \\
			=&  \textstyle \Lp \lim_{N\rightarrow \infty} \frac{1}{N^3} \sum_{i=1}^N \sum_{l = 1}^N  \sum_{k = 1}^N   \frac{q_{li,kj}}{q_{li}q_{kj}} - 1  \Rp  ( \Sigma_{\Lambda,t} \otimes I_r) (I_r \otimes  \Sigma_\Lambda )  \Xi_F   (I_r \otimes  \Sigma_\Lambda) \\
			=&  \textstyle \Lp \omega_j - 1  \Rp  ( \Sigma_{\Lambda,t} \otimes I_r) (I_r \otimes  \Sigma_\Lambda )  \Xi_F   (I_r \otimes  \Sigma_\Lambda) ,
		\end{align*}
		where the last equality follows from a similar argument as Step 5.1 and 5.3 and we can show that  $\lim_{N\rightarrow \infty}  \frac{1}{N^3} \sum_{i=1}^N \sum_{l = 1}^N  \sum_{k = 1}^N  \Lp  \frac{q_{li,kj}}{q_{li}q_{kj}} - 1  \Rp W_{it}  (\Lambda_i \Lambda_i^\T \otimes I_r ) (I_r \otimes  \Lambda_l \Lambda_l^\T) \Xi_F (I_r \otimes  \Lambda_k \Lambda_k^\T)$ converges to $\Lp \lim_{N\rightarrow \infty} \frac{1}{N^3} \sum_{i=1}^N \sum_{l = 1}^N  \sum_{k = 1}^N   \frac{q_{li,kj}}{q_{li}q_{kj}} - 1  \Rp  ( \Sigma_{\Lambda,t} \otimes I_r) (I_r \otimes  \Sigma_\Lambda )  \Xi_F   (I_r \otimes  \Sigma_\Lambda) $. 
		
		We use similar arguments as in Steps 5.2 and 5.4 to show that \\ $ \frac{\sqrt{T}}{N} \sum_{i = 1}^N \Lambda_i \Lambda_i^\T \Lp \frac{1}{|\tlq_{ij}|} \sum_{t \in \tlq_{ij}} F_t F_t^\T - \frac{1}{T} \sum_{t = 1}^T F_t F_t^\T \Rp u_i$ and \\  $ \frac{\sqrt{T}}{N^2} \sum_{l = 1}^N \sum_{i = 1}^N \Lambda_l \Lambda_l^\T \Lp \frac{1}{|\tlq_{li}|} \sum_{t \in \tlq_{li}} F_t F_t^\T - \frac{1}{T} \sum_{t = 1}^T F_t F_t^\T \Rp   W_{it} \Lambda_i \Lambda_i^\T v_t$ converge jointly and stably in law. 
		\paragraph{Step 6: Show that Assumption \ref{ass:mom-clt}.6 holds under Assumptions \ref{ass:simple-factor-model} and \ref{ass:simple-moment}}  \texttt{} \\
		Denote $V_{li} = \frac{1}{|\tlq_{li}|} \sum_{s \in \tlq_{li}} F_s F_s^\T - \frac{1}{T} \sum_{s=1}^T F_s F_s^\T $. From Assumption \ref{ass:mom-clt}.\ref{ass:asy-normal-add-term-thm-loading}, it is asymptotic normal. We first calculate the variance of the term $\frac{1}{N} \sum_{i = 1}^N W_{it} V_{li} \Lambda_i e_{it}$: 
		\begin{eqnarray*}
			&& \Scale[1]{\Cov \Lp \sum_{i = 1}^NW_{it} V_{li} \Lambda_i e_{it} ,  \sum_{i = 1}^NW_{it}  V_{li} \Lambda_i e_{it}  \Rp} \\
			&=& \Scale[1]{\sum_{i = 1}^N \sum_{m = 1}^N \+E \Ls W_{it} W_{mt} V_{li} \Lambda_i e_{it} e_{mt} \Lambda_m^\T V_{mi}^\T  \Rs  }  \\
			&=& \Scale[1]{ \sum_{i = 1}^N \sum_{m = 1}^N  \+E[ V_{li} \+E[W_{it} W_{mt} \Lambda_i  \+E[e_{it} e_{mt}]  \Lambda_m^\T ] V_{mi}^\T ]  } \\
			&=&  \Scale[1]{ \sum_{i = 1}^N   \+E[ V_{li} \+E[W_{it} \Lambda_i  \+E[e_{it}^2]  \Lambda_i^\T |S ] V_{li}^\T ] = O\Lp \frac{N}{T} \Rp } 
		\end{eqnarray*}
		since $\+E[e_{it} e_{mt}] = 0$ for $i \neq m$. Then Assumption \ref{ass:mom-clt}.6 holds.

		
		\subsubsection{Proof of Proposition \ref{prop:simple-assump-imply-general-assump}.2(a)}
		\begin{proof}[Proof of Proposition \ref{prop:simple-assump-imply-general-assump}.2(a)]  \texttt{}
			\paragraph{Step 1: Show that Assumption \ref{ass:factor-model-conditional} holds under Assumption \ref{ass:simple-factor-model-conditional}} 
			\texttt{} \\
			\textit{Step 1.1: Show that $\frac{1}{N} \sum_{i = 1}^N \frac{W_{it}}{\ps} \Lambda_i \Lambda_i^\T \xrightarrow{P} \Sigma_{\Lambda}  $ holds under Assumption \ref{ass:simple-factor-model-conditional}} \\ \\
			It is equivalent to show $\frac{1}{N} \sum_{i = 1}^N (\frac{W_{it}}{\ps} - 1)  \tvec(\Lambda_i \Lambda_i^\T)  \xrightarrow{P} 0$. Denote $y_i = (\frac{W_{it}}{\ps} - 1)  \tvec(\Lambda_i \Lambda_i^\T)$. Then
			$\+E[y_i] = \+E[\+E[y_i|S_i]] = \+E[\+E[\frac{W_{it}}{\ps} - 1|S_i] \+E[\tvec(\Lambda_i \Lambda_i^\T)|S_i]] = 0$ and 
			\begin{eqnarray*}
				Var(y_i) &=& \+E[ \Var(y_i|S_i)] + \Var(\+E[y_i|S_i])  \\
				&=&\+E \Bigg[ \Var \Big(\frac{W_{it}}{\ps} - 1|S_i \Big) \Var(\tvec(\Lambda_i \Lambda_i^\T)|S_i) \\
				&& + \Var \Big(\frac{W_{it}}{\ps} - 1|S_i \Big)\+E[\tvec(\Lambda_i \Lambda_i^\T)|S_i]\+E[\tvec(\Lambda_i \Lambda_i^\T)|S_i]^\T \\
				&& + \+E \Big[\frac{W_{it}}{\ps} - 1|S_i \Big]^2 \Var(\tvec(\Lambda_i \Lambda_i^\T)|S_i) \Bigg] \\
				&=& \+E\bigg[ \bigg(\frac{1}{\spsi}-1 \bigg) \bigg(\Var(\tvec(\Lambda_i \Lambda_i^\T)|S_i) + \+E[\tvec(\Lambda_i \Lambda_i^\T)|S_i] \+E[\tvec(\Lambda_i \Lambda_i^\T)|S_i]^\T\bigg)  \bigg] \\
				&\leq& (\frac{1}{\underline{p}}-1) \Xi_\Lambda,
			\end{eqnarray*}
			where $\Xi_\Lambda = \+E[\tvec(\Lambda_i \Lambda_i^\T) \tvec(\Lambda_i \Lambda_i^\T)^\T]$. Since $\Lambda_i$ is independent of $\Lambda_j$ conditional on $S_i$ and $S_j$, together with the fact that $W_{it}$ is independent of $W_{jt}$ conditional on $S$ from Assumption \ref{ass:obs}.2,  by Chebyshev’s Inequality,  we have
			\[ P(|\frac{1}{N}\sum_{i = 1}^N y_i| > t) \leq \frac{\Var( \frac{1}{N}\sum_{i = 1}^N y_i)}{t^2} = \frac{1}{N^2 t^2} \sum_{i = 1}^N (\frac{1}{\underline{p}}-1) \Xi_\Lambda = \frac{1}{N t^2} (\frac{1}{\underline{p} } - 1) \Xi_\Lambda. \]
			Thus, we conclude that $\frac{1}{N} \sum_{i = 1}^N \frac{W_{it}}{\ps} \Lambda_i \Lambda_i^\T \xrightarrow{P} \Sigma_{\Lambda}  $.  \\ \\
			\textit{Step 1.2: Show that $\+E \norm{\sqrt{N}\Lp \frac{1}{N}  \sum_{i = 1}^N \frac{W_{it}}{\spsi} \Lambda_i \Lambda_i^\T - \Sigma_{\Lambda} \Rp }  \leq M$ holds under Assumption \ref{ass:simple-factor-model-conditional}.}
			\begin{eqnarray*}
				\Scale[1]{ \Lp \+E\norm{\sqrt{N} \Lp \frac{1}{N} \sum_{i =1}^N \frac{W_{it}}{\spsi} \Lambda_i \Lambda_i^\T - \Sigma_\Lambda \Rp}\Rp^2} &\leq& \Scale[1]{\+E \norm{\sqrt{N} \Lp \frac{1}{N} \sum_{i =1}^N \frac{W_{it}}{\spsi} \Lambda_i \Lambda_i^\T - \Sigma_\Lambda \Rp}^2} \\
				&=& \Scale[1]{N \sum_{j,k} \+E[\frac{1}{N} \sum_{i =1}^N \frac{W_{it}}{\spsi} \Lambda_{i,j} \Lambda_{i,k} - \Sigma_{\Lambda,jk}  ]^2}  \leq M.
			\end{eqnarray*}
			Denote $y_i = \frac{W_{it}}{\spsi} \Lambda_{i,j} \Lambda_{i,k} - \Sigma_{\Lambda,jk}$. Then, 
			\[\+E(\frac{1}{N} \sum_{i = 1}^N y_i)^2 = \frac{1}{N^2} \Lp \sum_{i = 1}^N \+E[y_i^2] + \sum_{i \neq l} \+E[y_i y_l] \Rp. \]
			Note that $\+E[y_i^2] = \+E\Ls \frac{W_{it}}{\spsi} \Lambda_{i,j} \Lambda_{i,k}   \Rs^2 - \Sigma_{\Lambda,jk}^2 = \+E\Ls \frac{1}{\spsi}\+E[\Lambda_{i,j}^2 \Lambda_{i,k}^2|S_i] \Rs - \Sigma_{\Lambda,jk}^2 \leq \frac{1}{\underline{p}}\+E[\Lambda_{i,j}^2 \Lambda_{i,k}^2] - \Sigma_{\Lambda,jk}^2   $ is bounded and  $\+E[y_i y_l] = \+E[\+E[y_i y_l |S_i]] = \+E[\+E[y_i|S_i] \+E[y_l|S_i]] = 0$. Hence, we get
			\begin{eqnarray*}
				\Scale[1]{N \sum_{j,k} \+E[\frac{1}{N} \sum_{i =1}^N \frac{W_{it}}{\spsi} \Lambda_{i,j} \Lambda_{i,k} - \Sigma_{\Lambda,jk}  ]^2} &\leq& \Scale[1]{\frac{1}{N} \sum_{j, k} \sum_{i = 1}^N \Lp\frac{1}{\underline{p}}\+E[\Lambda_{i,j}^2 \Lambda_{i,k}^2] - \Sigma_{\Lambda,jk}^2 \Rp } \leq M 
			\end{eqnarray*}
			and therefore $\+E \norm{\sqrt{N}\Lp \frac{1}{N}  \sum_{i = 1}^N \frac{W_{it}}{\spsi} \Lambda_i \Lambda_i^\T - \Sigma_{\Lambda} \Rp }  \leq M$. 
		\end{proof}
		
		\subsubsection{Proof of Proposition \ref{prop:simple-assump-imply-general-assump}.2(b)}
		\begin{proof}[Proof of Proposition \ref{prop:simple-assump-imply-general-assump}.2(b)]
				In this proof, suppose Assumptions \ref{ass:obs-equal-weight}, \ref{ass:obs}, \ref{ass:simple-factor-model} and \ref{ass:simple-factor-model-conditional}  hold without further statement. We show each part in Assumption \ref{ass:mom-clt-conditional} holds under Assumption \ref{ass:simple-moment-conditional}. For notation simplicity, denote $q_{ij} = q_{ij,ij}$.  
				\paragraph{Step 1: Show that Assumption \ref{ass:mom-clt-conditional}.1 holds under Assumption \ref{ass:simple-factor-model}}  \texttt{} \\  
				Denote $v_{ij,s} = e_{is}e_{js} - \+E[e_{is}e_{js}]$.  For $\phi_{i,st} = \frac{W_{it} F_s}{\ps}, \Lambda_{i}, \frac{W_{it}}{\ps} \Lambda_{i}$ , since $e$ is independent of $F$, $\Lambda$, $W$ and $S$, then $\phi_{i,st}$ is independent of $e$ for all $i,s,t$. We can use the same steps as in Step 1 of the proof of Proposition \ref{prop:simple-assump-imply-general-assump}.2 to show that Assumption \ref{ass:mom-clt-conditional}.1 holds. 
				\paragraph{Step 2: Show that Assumption \ref{ass:mom-clt-conditional}.2 holds under Assumptions \ref{ass:simple-factor-model}}  \texttt{} \\ 
				Since $F$, $\Lambda$ and $e$ are independent, $W$ is independent of $F$ and $e$, and $S$ is independent of $F$ and $e$, then $\phi_{it} = \Lambda_{i}$ and $\frac{W_{it}}{\ps} \Lambda_{i}$  is independent of $F$ and $e$ for any $i$ and $t$. We can use the same steps as in Step 2 of the proof of Proposition  \ref{prop:simple-assump-imply-general-assump}.2 to show that Assumption \ref{ass:mom-clt-conditional}.2 holds. 
				\paragraph{Step 3: Show that Assumption \ref{ass:mom-clt-conditional}.3 holds under Assumption \ref{ass:simple-factor-model} and Assumption \ref{ass:simple-moment}.2} \texttt{} \\
				Assumption \ref{ass:mom-clt-conditional}.3 is identical to Assumption \ref{ass:mom-clt}.3. We can use the same steps as in Step 3 of the proof of Proposition  \ref{prop:simple-assump-imply-general-assump}.2 to show that Assumption \ref{ass:mom-clt-conditional}.3 holds. 
				\paragraph{Step 4: Show that Assumption \ref{ass:mom-clt-conditional}.4 holds under Assumptions \ref{ass:simple-factor-model}, \ref{ass:simple-factor-model-conditional} and \ref{ass:simple-moment-conditional}.} \texttt{} \\ 
				Conditional on $S_i$ and $S_j$, $\frac{W_{it}}{\spsi} \Lambda_i  e_{it}$ and $\frac{W_{jt}}{p_{jt}} \Lambda_j  e_{jt}$ are independent for $i \neq j$. Let $y_i = \frac{W_{it}}{\spsi} \Lambda_i  e_{it}$. Since $S$ is independent of $e$, we have $\+E[y_i|S_i] = 0$. For any $i$, it holds that
				\begin{eqnarray*}
					\+E [ \norm{y_i}^4 |S_i] &=& \sum_k \+E\Ls \frac{W_{it}}{({\spsi})^4} \Lambda_{i,k}^4 e_{it}^4 |S_i \Rs =  \sum_k \+E\Ls \frac{W_{it}}{({\spsi})^4} \Lambda_{i,k}^4 e_{it}^4 |S_i \Rs \\
					&=& \sum_k \frac{1}{(\spsi)^3} \+E[\Lambda_{i,k}^4|S_i] \+E[e_{it}^4 ] \leq M,
				\end{eqnarray*}
				following from Assumptions \ref{ass:simple-moment}.1 and \ref{ass:simple-moment}.2. From Theorem 6.5 in Hansen (2020), $\frac{1}{\sqrt{N}} \sum_{i \in \tlo_t} \frac{1}{\ps} \Lambda_i  e_{it}$ is asymptotically normal and 
				\[\Scale[1]{\covI_{F,t} = \AVar_{\Lambda,e} \Lp \frac{1}{\sqrt{N}} \sum_{i \in \tlo_t} \frac{1}{\ps} \Lambda_i  e_{it} \Rp  = \lim_{N \rightarrow \infty} \frac{1}{N} \sum_{i = 1}^N \frac{1}{\spsi} \+E[\Lambda_{i} \Lambda_i^\T|S_i] \sigma_e^2   },  \]
				as $\lim_{N \rightarrow \infty} \frac{1}{N} \sum_{i = 1}^N \frac{1}{\spsi} \+E[\Lambda_{i} \Lambda_i^\T|S_i] $ exists based on Assumption \ref{ass:simple-moment-conditional}.3. 
				\paragraph{Step 5: Show that Assumption \ref{ass:mom-clt-conditional}.5 holds under Assumptions \ref{ass:simple-factor-model}, \ref{ass:simple-moment}.2 and \ref{ass:simple-moment-conditional}.} \texttt{} \\
				\textit{Step 5.1: Show that \\ $\Scale[0.9]{ \lim_{N\rightarrow \infty}  \frac{1}{N^4} \sum_{i=1}^N \sum_{l = 1}^N \sum_{j = 1}^N \sum_{k = 1}^N  \Lp  \frac{q_{li,kj}}{q_{li}q_{kj}} - 1  \Rp  \frac{W_{it}}{\spsi} \frac{W_{jt}}{\spsj} (\Lambda_i \Lambda_i^\T \otimes I_r ) (I_r \otimes  \Lambda_l \Lambda_l^\T) \Xi_F (I_r \otimes  \Lambda_k \Lambda_k^\T) (\Lambda_j \Lambda_j^\T \otimes I_r )      }$ exists } \\ \\
				For notation simplicity, denote $y_{ilkj} \coloneqq  \tvec(  (\Lambda_i \Lambda_i^\T \otimes I_r ) (I_r \otimes  \Lambda_l \Lambda_l^\T) \Xi_F (I_r \otimes  \Lambda_k \Lambda_k^\T) (\Lambda_j \Lambda_j^\T \otimes I_r )  ) $, $x_{ilkj} \coloneqq \frac{W_{it}}{\spsi} \frac{W_{jt}}{\spsj}  y_{ilkj} $ and $x_{ilkj,m}$ is the $m$-th entry in $x_{ilkj}$. From the definition of $q_{ij}$ and $q_{ij,kj}$,  we have $0 \leq  \frac{q_{ij,kj}}{q_{ij}q_{kj}} - 1 \leq \frac{1}{q_{kj}} - 1\leq \frac{1}{\underline{q}} - 1$ and therefore 
				\begin{align*} 
					\textstyle
					& \+E \Lp \frac{1}{N^4} \sum_{i=1}^N \sum_{l = 1}^N \sum_{j = 1}^N \sum_{k = 1}^N  \Lp  \frac{q_{li,kj}}{q_{li}q_{kj}} - 1  \Rp  (x_{ilkj,m} - \+E[ x_{ilkj,m}|S]) \Rp^2 \\
					=& \Scale[1]{\frac{1}{N^8} \sum_{\substack{\text{distinct } \\ i, i', j, j', \\ k, k', l, l'  }  } \Lp  \frac{q_{li,kj}}{q_{li}q_{kj}} - 1  \Rp  \Lp  \frac{q_{l'i',k'j'}}{q_{l'i'}q_{k'j'}} - 1  \Rp \+E[ ( x_{ilkj,m} - \+E[ x_{ilkj,m}|S] ) (  x_{i'l'k'j',m} - \+E[ x_{i'l'k'j',m}|S] ) ] } \\
					& \quad + \text{other terms} \\
					=& O\Lp \frac{1}{N} \Rp ,
				\end{align*}
				since the number of terms in the other terms is of order $O(N^7)$ and
				\begin{align*}
					&\+E \Bigg[ (x_{ilkj,m} - \+E[ x_{ilkj,m}|S] ) (  x_{i'l'k'j',m} - \+E[ x_{i'l'k'j',m}|S] )  \Bigg]  \\=& \+E\Bigg[ \bigg( \frac{W_{it}}{\spsi} \frac{W_{jt}}{\spsj} y_{ilkj,m} - \+E \Big[ \frac{W_{it}}{\spsi} \frac{W_{jt}}{\spsj}  y_{ilkj,m} \Big|S \Big] \bigg) \bigg( \frac{W_{i't}}{p_{i't}^{S_{i'}}} \frac{W_{j't}}{p_{j't}^{S_{j'}} }  y_{i'l'k'j',m} - \+E \Big[ \frac{W_{i't}}{p_{i't}^{S_{i'}}} \frac{W_{j't}}{p_{j't}^{S_{j'}} }  y_{i'l'k'j',m} \Big|S \Big] \bigg)   \Bigg] = 0
				\end{align*}
				for distinct $ i, i', j, j', k, k', l, l'$ by Assumption \ref{ass:obs}.2 and \ref{ass:simple-factor-model-conditional}. Similarly, we can show that
				\begin{align*} 
					\textstyle
					& \+E \Lp \frac{1}{N^4} \sum_{i=1}^N \sum_{l = 1}^N \sum_{j = 1}^N \sum_{k = 1}^N  \Lp  \frac{q_{li,kj}}{q_{li}q_{kj}} - 1  \Rp  (\+E[ x_{ilkj,m}|S]  - \+E[ x_{ilkj,m}]) \Rp^2 
					= O\Lp \frac{1}{N} \Rp 
				\end{align*}
				based on the fact that $\Lambda_i$ is iid.  Hence, the Chebyshev's inequality implies
				\begin{align*}
					& \Scale[1]{ \lim_{N\rightarrow \infty}  \frac{1}{N^4} \sum_{i=1}^N \sum_{l = 1}^N \sum_{j = 1}^N \sum_{k = 1}^N  \Lp  \frac{q_{li,kj}}{q_{li}q_{kj}} - 1  \Rp  (\Lambda_i \Lambda_i^\T \otimes I_r ) (I_r \otimes  \Lambda_l \Lambda_l^\T) \Xi_F (I_r \otimes  \Lambda_k \Lambda_k^\T) (\Lambda_j \Lambda_j^\T \otimes I_r )      }  \\
					=&  \Scale[1]{ \lim_{N\rightarrow \infty} \frac{1}{N^4} \sum_{i=1}^N \sum_{l = 1}^N \sum_{j = 1}^N \sum_{k = 1}^N   \Lp  \frac{q_{li,kj}}{q_{li}q_{kj}} - 1  \Rp   \Big(  \+E[\Lambda_i \Lambda_i^\T]  \otimes  \+E[\Lambda_l \Lambda_l^\T]  \Big)  \Xi_F  \Big(  \+E[\Lambda_j \Lambda_j^\T]  \otimes  \+E[\Lambda_k \Lambda_k^\T]  \Big) } \\
					=&  \Scale[1]{  \Lp \lim_{N\rightarrow \infty} \frac{1}{N^4} \sum_{i=1}^N \sum_{l = 1}^N \sum_{j = 1}^N \sum_{k = 1}^N   \frac{q_{li,kj}}{q_{li}q_{kj}} - 1  \Rp  \Big( \Sigma_{\Lambda} \otimes \Sigma_{\Lambda}  \Big)  \Xi_F  \Big(  \Sigma_{\Lambda}  \otimes \Sigma_{\Lambda}  \Big)  } \\
					=& (\omega - 1)  \Big( \Sigma_{\Lambda} \otimes \Sigma_{\Lambda}  \Big)  \Xi_F  \Big(  \Sigma_{\Lambda}  \otimes \Sigma_{\Lambda}  \Big),
				\end{align*} 
				where the last equality follows from Assumption \ref{ass:simple-moment}.2. \\ \\
				\textit{Step 5.2: Show that $\frac{1}{N^2} \sum_{i=1}^N \sum_{l = 1}^N \Lambda_l \Lambda_l^\T \Lp  \frac{1}{|\tlq_{li}|} \sum_{s \in \tlq_{li}} F_s F_s^\T - \frac{1}{T} \sum_{s=1}^T F_s F_s^\T \Rp \frac{W_{it}}{\spsi} \Lambda_i \Lambda_{i} $ is asymptotically normal and $\frac{1}{N^2} \sum_{i=1}^N \sum_{l = 1}^N \Lambda_l \Lambda_l^\T \Lp  \frac{1}{|\tlq_{li}|} \sum_{s \in \tlq_{li}} F_s F_s^\T - \frac{1}{T} \sum_{s=1}^T F_s F_s^\T \Rp \frac{W_{it}}{\spsi} \Lambda_i \Lambda_{i} v_t$ converges stably in law for $v_t = H^\T \tilde D^\I (H^\T)^{-1} F_t$ } \\ 
				The vectorized form of $\Lambda_l \Lambda_l^\T \Lp  \frac{1}{|\tlq_{li}|} \sum_{s \in \tlq_{li}} F_s F_s^\T - \frac{1}{T} \sum_{s=1}^T F_s F_s^\T \Rp \Lambda_i \Lambda_{i} $ is $ (\Lambda_i \Lambda_i^\T \otimes I_r ) (I_r \otimes  \Lambda_l \Lambda_l^\T) v^{(l,i)}$. Then, it holds that 
				\begin{align*}
					\mathbf{\Phi} =& \Scale[0.95]{\ACov  \Lp  \frac{\sqrt{T}}{N^2} \sum_{i = 1}^N \sum_{l = 1}^N  \frac{W_{it}}{\spsi} (\Lambda_i \Lambda_i^\T \otimes I_r ) (I_r \otimes  \Lambda_l \Lambda_l^\T) v^{(l,i)},  \frac{\sqrt{T}}{N^2} \sum_{i = 1}^N \sum_{l = 1}^N  \frac{W_{it}}{\spsi} (\Lambda_i \Lambda_i^\T \otimes I_r ) (I_r \otimes  \Lambda_l \Lambda_l^\T) v^{(l,i)} \Rp}  \\
					=& \Scale[0.93]{\lim_{N, T\rightarrow \infty} \frac{1}{N^4} \sum_{i=1}^N \sum_{l = 1}^N \sum_{j = 1}^N \sum_{k = 1}^N \Cov \Lp \frac{W_{it}}{\spsi} (\Lambda_i \Lambda_i^\T \otimes I_r ) (I_r \otimes  \Lambda_l \Lambda_l^\T) v^{(l,i)}, \frac{W_{jt}}{\spsj} (\Lambda_j \Lambda_j^\T \otimes I_r ) (I_r \otimes  \Lambda_k \Lambda_k^\T) v^{(k,j)}  \Rp    }  \\
					=& \Scale[0.95]{\lim_{N\rightarrow \infty}   \frac{1}{N^4} \sum_{i=1}^N \sum_{l = 1}^N \sum_{j = 1}^N \sum_{k = 1}^N \frac{W_{it}}{\spsi} \frac{W_{jt}}{\spsj} (\Lambda_i \Lambda_i^\T \otimes I_r ) (I_r \otimes  \Lambda_l \Lambda_l^\T) \mathbf{\Phi}^{\mathbf{v}}_{g(l,i), g(k,j)} (I_r \otimes  \Lambda_k \Lambda_k^\T) (\Lambda_j \Lambda_j^\T \otimes I_r )      }  \\
					=& \Scale[0.95]{\lim_{N\rightarrow \infty}  \frac{1}{N^4} \sum_{i=1}^N \sum_{l = 1}^N \sum_{j = 1}^N \sum_{k = 1}^N  \Lp  \frac{q_{li,kj}}{q_{li}q_{kj}} - 1  \Rp  \frac{W_{it}}{\spsi} \frac{W_{jt}}{\spsj}  (\Lambda_i \Lambda_i^\T \otimes I_r ) (I_r \otimes  \Lambda_l \Lambda_l^\T) \Xi_F (I_r \otimes  \Lambda_k \Lambda_k^\T) (\Lambda_j \Lambda_j^\T \otimes I_r )      }  \\
					=&  \textstyle \Lp \omega - 1  \Rp  ( \Sigma_\Lambda  \otimes  \Sigma_\Lambda )  \Xi_F   (\Sigma_\Lambda  \otimes  \Sigma_\Lambda) .
				\end{align*}
				
				The stable convergence follows from a similar argument as in Step 5.4 in the proof of Proposition \ref{prop:simple-assump-imply-general-assump}.1(b)
				
				\textit{Step 5.3: Show that $ \frac{\sqrt{T}}{N} \sum_{i = 1}^N  \Lambda_i \Lambda_i^\T \Lp \frac{1}{|\tlq_{ij}|} \sum_{t \in \tlq_{ij}} F_t F_t^\T - \frac{1}{T} \sum_{t = 1}^T F_t F_t^\T \Rp $ and  \\ $ \frac{\sqrt{T}}{N^2} \sum_{l = 1}^N \sum_{i = 1}^N \Lambda_l \Lambda_l^\T \Lp \frac{1}{|\tlq_{li}|} \sum_{t \in \tlq_{li}} F_t F_t^\T - \frac{1}{T} \sum_{t = 1}^T F_t F_t^\T \Rp   \frac{W_{it}}{\spsi} \Lambda_i \Lambda_i^\T$ are jointly asymptotically normal and their asymptotic covariance converges.} \\ 
				The randomness of these two terms both come from $v^{(l,i)}$, which is asymptotic normal. These two terms are weighted average of $v^{(l,i)}$ and therefore they are jointly asymptotic normal. Next we show their aymptotic covariance converges and we provide the limit:
				\begin{align*}
					\mathbf{\Phi}_t^\cov =& \Scale[0.9]{\ACov \Lp  \frac{\sqrt{T}}{N^2} \sum_{i = 1}^N \sum_{l = 1}^N \frac{W_{it}}{\spsi}   (\Lambda_i \Lambda_i^\T \otimes I_r ) (I_r \otimes  \Lambda_l \Lambda_l^\T) v^{(l,i)},  \frac{\sqrt{T}}{N} \sum_{i = 1}^N   (I_r \otimes  \Lambda_i \Lambda_i^\T) v^{(i,j)}\Rp}  \\
					=& \Scale[0.9]{\lim_{N, T\rightarrow \infty} \frac{1}{N^3} \sum_{i=1}^N \sum_{l = 1}^N  \sum_{k = 1}^N \Cov\Lp \frac{W_{it}}{\spsi}  (\Lambda_i \Lambda_i^\T \otimes I_r ) (I_r \otimes  \Lambda_l \Lambda_l^\T) v^{(l,i)}, (I_r \otimes  \Lambda_k \Lambda_k^\T) v^{(k,j)}  \Rp    }  \\
					=& \Scale[0.9]{\lim_{N\rightarrow \infty}   \frac{1}{N^3} \sum_{i=1}^N \sum_{l = 1}^N  \sum_{k = 1}^N \frac{W_{it}}{\spsi}   (\Lambda_i \Lambda_i^\T \otimes I_r ) (I_r \otimes  \Lambda_l \Lambda_l^\T) \mathbf{\Phi}^{\mathbf{v}}_{g(l,i), g(k,j)} (I_r \otimes  \Lambda_k \Lambda_k^\T)   }  \\
					=& \Scale[0.9]{\lim_{N\rightarrow \infty}  \frac{1}{N^3} \sum_{i=1}^N \sum_{l = 1}^N  \sum_{k = 1}^N  \Lp  \frac{q_{li,kj}}{q_{li}q_{kj}} - 1  \Rp  (\Lambda_i \Lambda_i^\T \otimes I_r ) (I_r \otimes  \Lambda_l \Lambda_l^\T) \Xi_F (I_r \otimes  \Lambda_k \Lambda_k^\T)   }  \\
					=&  \textstyle \Lp \lim_{N\rightarrow \infty} \frac{1}{N^3} \sum_{i=1}^N \sum_{l = 1}^N  \sum_{k = 1}^N   \frac{q_{li,kj}}{q_{li}q_{kj}} - 1  \Rp  ( \Sigma_{\Lambda} \otimes I_r) (I_r \otimes  \Sigma_\Lambda )  \Xi_F   (I_r \otimes  \Sigma_\Lambda) \\
					=&  \textstyle \Lp \lim_{N\rightarrow \infty} \frac{1}{N^3} \sum_{i=1}^N \sum_{l = 1}^N  \sum_{k = 1}^N   \frac{q_{li,kj}}{q_{li}q_{kj}} - 1  \Rp  ( \Sigma_{\Lambda} \otimes  \Sigma_\Lambda )  \Xi_F   (I_r \otimes  \Sigma_\Lambda) \\
					=&  \textstyle \Lp\omega_j - 1  \Rp  ( \Sigma_{\Lambda} \otimes  \Sigma_\Lambda )  \Xi_F   (I_r \otimes  \Sigma_\Lambda) ,
				\end{align*}
				where the second to last equality follow from a similar argument as in Step 5.1 in the proof of Proposition \ref{prop:simple-assump-imply-general-assump}.2.
				Similar as in Step 5.1, we can show\\ $\lim_{N\rightarrow \infty}  \frac{1}{N^3} \sum_{i=1}^N \sum_{l = 1}^N  \sum_{k = 1}^N  \Lp  \frac{q_{li,kj}}{q_{li}q_{kj}} - 1  \Rp \frac{W_{it}}{\spsi}  (\Lambda_i \Lambda_i^\T \otimes I_r ) (I_r \otimes  \Lambda_l \Lambda_l^\T) \Xi_F (I_r \otimes  \Lambda_k \Lambda_k^\T)$ converges to $\Lp \lim_{N\rightarrow \infty} \frac{1}{N^3} \sum_{i=1}^N \sum_{l = 1}^N  \sum_{k = 1}^N   \frac{q_{li,kj}}{q_{li}q_{kj}} - 1  \Rp  ( \Sigma_{\Lambda}\otimes  \Sigma_\Lambda )  \Xi_F   (I_r \otimes  \Sigma_\Lambda) $. 
				
				The joint stable convergence between $ \frac{\sqrt{T}}{N} \sum_{i = 1}^N  \Lambda_i \Lambda_i^\T \Lp \frac{1}{|\tlq_{ij}|} \sum_{t \in \tlq_{ij}} F_t F_t^\T - \frac{1}{T} \sum_{t = 1}^T F_t F_t^\T \Rp u_i$ and $ \frac{\sqrt{T}}{N^2} \sum_{l = 1}^N \sum_{i = 1}^N \Lambda_l \Lambda_l^\T \Lp \frac{1}{|\tlq_{li}|} \sum_{t \in \tlq_{li}} F_t F_t^\T - \frac{1}{T} \sum_{t = 1}^T F_t F_t^\T \Rp   \frac{W_{it}}{\spsi} \Lambda_i \Lambda_i^\T v_t$  follows from a argument as in Step 5 in the proof of Proposition \ref{prop:simple-assump-imply-general-assump}.1(b).
				\paragraph{Step 6: Show that Assumption \ref{ass:mom-clt-conditional}.6 holds under Assumptions \ref{ass:simple-factor-model}, \ref{ass:simple-factor-model-conditional}, and \ref{ass:simple-moment-conditional}.} \texttt{} \\ 
				Denote $V_{li} = \frac{1}{|\tlq_{li}|} \sum_{s \in \tlq_{li}} F_s F_s^\T - \frac{1}{T} \sum_{s=1}^T F_s F_s^\T $, which is asymptotically normal by Assumption \ref{ass:mom-clt-conditional}.\ref{ass:asy-normal-add-term-thm-loading}. We calculate the variance of the term $\frac{1}{N} \sum_{i = 1}^N \frac{W_{it}}{\spsi} V_{li} \Lambda_i e_{it}$. Since $e$ is independent of $F$, $\Lambda$, $W$ and $S$, we have 
				\begin{eqnarray*}
					&& \Scale[1]{\Cov \Lp \sum_{i = 1}^N \frac{W_{it}}{\spsi} V_{li} \Lambda_i e_{it} ,  \sum_{i = 1}^N \frac{W_{it}}{\spsi} V_{li} \Lambda_i e_{it} \Rp} \\
					&=& \Scale[1]{\sum_{i = 1}^N \sum_{m = 1}^N \+E \Ls \frac{W_{it} W_{mt}}{\spsi p_{mt}^{S_m} } V_{li} \Lambda_i e_{it} e_{mt} \Lambda_m^\T V_{mi}^\T \Rs  }  \\
					&=& \Scale[1]{ \sum_{i = 1}^N \sum_{m = 1}^N  \+E[ V_{li} \+E[\frac{W_{it} W_{mt}}{\spsi p_{mt}^{S_m} }  \Lambda_i  \+E[e_{it} e_{mt}]  \Lambda_m^\T  ] V_{mi}^\T ]  } \\
					&=&  \Scale[1]{ \sum_{i = 1}^N   \+E[ V_{li} \+E[\frac{W_{it}}{(\spsi)^2 }  \Lambda_i  \+E[e_{it}^2]  \Lambda_i^\T |S ] V_{li}^\T ] = O\Lp \frac{N}{T} \Rp } ,
				\end{eqnarray*}
				since $\+E[e_{it} e_{mt}] = 0$ for $i \neq m$.  Then Assumption \ref{ass:mom-clt-conditional}.6 holds.
			\end{proof}

			\subsubsection{Proof of Proposition \ref{prop:simple-assump-imply-general-assump}.3: Treatment Tests for Simplified Model} 
			\paragraph{Step 1: Show that Assumption \ref{ass:add-factor}.1 holds} \texttt{} \\ 
			Since $F_t$ is i.i.d. by Assumption \ref{ass:simple-factor-model}.1, $e_{it}$ is i.i.d. by Assumption \ref{ass:simple-factor-model}.3, and $F_t$ is independent of $e_{it}$, we can apply the CLT resulting in $\frac{1}{\sqrt{\Ttr}} \sum_{T-\Ttr+1}^{T} F_t e_{it} \xrightarrow{d} N(0, \Sigma_{F,e_i}), $
			where $\Sigma_{F,e_i} = \sigma_e^2 \Sigma_F $.  
			\paragraph{Step 2:  Show that Assumption \ref{ass:add-factor}.2 holds} \texttt{}  \\ 
			We calculate the covariance of $\sum_{t=T-\Ttr+1}^T  \sum_{j = 1}^N W_{jt}  \Lambda_j  e_{jt}$. Since $e$ is independent of $W$, and $\Lambda$ and $e_{it}$ is i.i.d., we have
			\begin{align*}
				& \Scale[1]{ \Cov\Lp \sum_{t=T-\Ttr+1}^T  \sum_{j = 1}^N W_{jt}  \Lambda_j  e_{jt}, \sum_{t=T-\Ttr+1}^T  \sum_{j = 1}^N W_{jt} \Lambda_j  e_{jt}  \Rp} \\
				=& \Scale[1]{\sum_{t=T-\Ttr+1}^T \sum_{s=T-\Ttr+1}^T  \sum_{i = 1}^N \sum_{j = 1}^N  \Cov \Lp W_{it}  \Lambda_i  e_{it}, W_{js} \Lambda_j  e_{js}  \Rp   } \\
				=& \Scale[1]{ \sum_{t=T-\Ttr+1}^T \sum_{s=T-\Ttr+1}^T  \sum_{i = 1}^N \sum_{j = 1}^N (\+E\Ls W_{it} W_{jt}  \Lambda_i  \Lambda_j^\T   e_{it} e_{js} \Rs - \+E\Ls W_{it}  \Lambda_i  e_{it}\Rs \+E\Ls W_{jt}  \Lambda_j^\T  e_{js} \Rs)  } \\
				=& \Scale[1]{ \sum_{t=T-\Ttr+1}^T   \sum_{i = 1}^N \sigma_e^2 \+E\Ls W_{it}  \Lambda_i  \Lambda_i^\T  \Rs  = O(N \Ttr )  }.
			\end{align*}
			
			Next we calculate the covariance of $ \sum_{t = \Tcontrol+1}^T \sum_{j = 1}^N Z_t F_t^\T W_{jt} \Lambda_j  e_{jt}$. Since $e$ is independent of $F$, $W$, and $\Lambda$,  and $\norm{Z_t} \leq M $, we obtain
			\begin{align*}
				& \Scale[1]{ \Cov\Lp \sum_{t=T-\Ttr+1}^T  \sum_{j = 1}^N Z_t F_t^\T  W_{jt}  \Lambda_j  e_{jt}, \sum_{t=T-\Ttr+1}^T  \sum_{j = 1}^N Z_t F_t^\T  W_{jt} \Lambda_j  e_{jt}  \Rp} \\
				=& \Scale[1]{\sum_{t=T-\Ttr+1}^T \sum_{s=T-\Ttr+1}^T  \sum_{i = 1}^N \sum_{j = 1}^N  \Cov \Lp Z_t F_t^\T  W_{it}  \Lambda_i  e_{it}, Z_s F_s^\T  W_{js} \Lambda_j  e_{js}  \Rp   } \\
				=& \Scale[1]{ \sum_{t=T-\Ttr+1}^T \sum_{s=T-\Ttr+1}^T  \sum_{i = 1}^N \sum_{j = 1}^N Z_t ( \+E\Ls W_{it} W_{jt} F_t^\T \Lambda_i  \Lambda_j^\T F_s  e_{it} e_{js} \Rs - \+E\Ls W_{it} F_t^\T  \Lambda_i  e_{it}\Rs \+E\Ls W_{jt}  \Lambda_j^\T F_s  e_{js} \Rs)  Z_s^\T } \\
				=& \Scale[1]{ \sum_{t=T-\Ttr+1}^T   \sum_{i = 1}^N \sigma_e^2 Z_t   \+E\Ls W_{it} F_t^\T \Lambda_i  \Lambda_i^\T F_t  \Rs Z_t^\T  = O(N \Ttr )  } .
			\end{align*}
			Hence, \ref{ass:add-factor}.2 holds. 
			\paragraph{Step 3:  Show that Assumption \ref{ass:add-factor}.3 holds} \texttt{}  \\ 
			\ref{ass:add-factor}.3 can be shown with similar arguments as in Step 5 in the proof of Proposition \ref{prop:simple-assump-imply-general-assump}.1(b). \\
			
			Assumption \ref{ass:add-factor-conditional}.1 is the same as Assumption \ref{ass:add-factor}.1. Assumptions \ref{ass:add-factor-conditional}.2 and \ref{ass:add-factor-conditional}.3 can be shown similarly as Assumptions \ref{ass:add-factor}.2 and \ref{ass:add-factor}.3.

			\subsection{Proof of Theorem \ref{thm:consistency-same-H}: Consistency of Loadings}
			Denote by $W_i \in \+R^{T \times 1}$ the $i$-th row in $W$, $e_i \in \+R^{T \times 1}$ the $i$-th row in $e$ and $\tilde q_{ij} = \frac{|\tlq_{ij}|}{T}$. Plugging $\tilde Y = (\Lambda F^\T) \odot W + e \odot W$ into
			\[\Lp \frac{1}{N} ( \tilde Y \tilde Y^\T) \odot \Big[ \frac{1}{|\tlq_{ij}|} \Big]  \Rp \tilde \Lambda = \tilde \Lambda \tilde D,\]
			and multiplying with $\tilde D^\I$ on the right, we obtain 
			\[\frac{1}{NT} \Ls\Lp (W \odot (\Lambda F^\T) + W \odot e) ( (F \Lambda^\T) \odot W^\T + e^\T \odot W^\T  )\Rp  \odot \Big[ \frac{1}{\tilde q_{ij}} \Big]  \Rs  \tilde \Lambda \tilde D^\I = \tilde \Lambda.\]
			Note that the $(i,j)$-th entries in $(W \odot (\Lambda F^\T))((F \Lambda^\T) \odot W^\T)$, $(W \odot (\Lambda F^\T)) (e^\T \odot W^\T) $, $(W \odot e) ((F \Lambda^\T) \odot W^\T) $ and $ (W \odot e) (e^\T \odot W^\T)$ take the following form:
			\begin{eqnarray*}
				\Lp(W \odot (\Lambda F^\T)) ((F \Lambda^\T) \odot W^\T) \Rp_{ij} &=& \Lambda_i^\T F^\T\text{diag}(W_i \odot W_j) F  \Lambda_j \\
				\Lp (W \odot (\Lambda F^\T)) (e^\T \odot W^\T) \Rp_{ij} &=&  e_i^\T\text{diag}(W_i \odot W_j) F \Lambda_j  \\
				\Lp (W \odot e) ((F \Lambda^\T) \odot W^\T) \Rp_{ij} &=&  \Lambda_i^\T F^\T\text{diag}(W_i \odot W_j) e_j \\
				\Lp (W \odot e) (e^\T \odot W^\T) \Rp_{ij} &=&  e_i^\T\text{diag}(W_i \odot W_j) e_j. 
			\end{eqnarray*}
			Then, we have 
			\begin{eqnarray}\label{eqn:decomp-f-tilde}
				\nonumber \tilde \Lambda_j = \frac{1}{NT} \tilde D^\I \!\!\!\!\!\!\!\!\! && \left[  \sum_{i=1}^N \tilde \Lambda_i \Lambda_i^\T F^\T\text{diag}(W_i \odot W_j) F \Lambda_j/\tilde q_{ij}  + \sum_{i=1}^N \tilde \Lambda_i e_i^\T\text{diag}(W_i \odot W_j) F \Lambda_j/\tilde q_{ij} \right. \\
				&& \left. + \sum_{i=1}^N \tilde \Lambda_i \Lambda_i^\T F^\T\text{diag}(W_i \odot W_j) e_j/\tilde q_{ij} + \sum_{i=1}^N \tilde \Lambda_i e_i^\T\text{diag}(W_i \odot W_j) F e_j/\tilde q_{ij} \right].
			\end{eqnarray}
			Denote $H_j = \frac{1}{NT} \tilde D^\I \sum_{i=1}^N \tilde \Lambda_i \Lambda_i^\T F^\T\text{diag}(W_i \odot W_j) F/\tilde q_{ij}$. From \eqref{eqn:decomp-f-tilde}, we have 
			\[\tilde \Lambda_j - H_j \Lambda_j = \tilde D^\I \Lp \frac{1}{N} \sum_{i=1}^N \tilde \Lambda_i \gamma(i,j) + \frac{1}{N} \sum_{i=1}^N \tilde \Lambda_i \zeta_{ij} + \frac{1}{N} \sum_{i=1}^N \tilde \Lambda_i \eta_{ij} + \frac{1}{N} \sum_{i=1}^N \tilde \Lambda_i \xi_{ij} \Rp,\]
			where 
			\begin{eqnarray*}
				\gamma(i,j) &=& \frac{1}{|\tlq_{ij}|} \sum_{t \in \tlq_{ij}} \+E [e_{it}e_{jt}]   \\
				\zeta_{ij} &=& \frac{1}{|\tlq_{ij}|} \sum_{t \in \tlq_{ij}} e_{it}e_{jt} - \gamma(i,j) \\
				\eta_{ij} &=& \frac{1}{|\tlq_{ij}|} \sum_{t \in \tlq_{ij}} \Lambda_i^\T F_t e_{jt}  \\
				\xi_{ij} &=& \frac{1}{|\tlq_{ij}|} \sum_{t \in \tlq_{ij}} \Lambda_j^\T F_t e_{it}.
			\end{eqnarray*}

			We first provide a bound for $\gamma(i,j)$, $\eta_{ij}$ and $\xi_{ij}$, which we need to show the consistency of $\tilde \Lambda_j$.
			
			\begin{lemma}\label{lemma:prep-consistency}
				Under Assumptions \ref{ass:obs} and \ref{ass:factor-model}, we have for some $M < \infty$, and for all $N$ and $T$, 
				\begin{enumerate}
					\item $\sum_{i=1}^N \gamma(i,j)^2 \leq M$ and $\frac{1}{N}\sum_{i=1}^N \sum_{i=1}^N \gamma(i,j)^2 \leq M$, where $\gamma(i,j) = \+E\Ls\frac{1}{|\tlq_{ij}|} \sum_{t \in \tlq_{ij}} e_{it}e_{jt}  \Rs$.
					\item $\+E\Ls \Lp\frac{1}{\sqrt{|\tlq_{ij}|}} \Lambda_i^\T \sum_{t \in \tlq_{ij}} F_t e_{jt} \Rp^2 \Rs \leq M$, for all $j$.
				\end{enumerate}
			\end{lemma}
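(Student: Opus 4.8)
The plan is to prove both bounds by conditioning on the observation matrix $W$ (and, for the second part, additionally on the loadings $\Lambda$), exploiting that $F$ and $e$ are independent of $W$ under Assumption \ref{ass:obs-equal-weight}.1 and that $\Lambda$ is independent of $F$ and $e$ under Assumption \ref{ass:factor-model}.\ref{ass:loading}. For part 1, I would first condition on $W$, so that the index set $\tlq_{ij}$ becomes deterministic and $\gamma(i,j)=\frac{1}{|\tlq_{ij}|}\sum_{t\in\tlq_{ij}}\+E[e_{it}e_{jt}]=\frac{1}{|\tlq_{ij}|}\sum_{t\in\tlq_{ij}}\tau_{ij,t}$. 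By the cross-sectional dependence bound $|\tau_{ij,t}|\le\tau_{ij}$ in Assumption \ref{ass:factor-model}.\ref{ass:error} we get $|\gamma(i,j)|\le\tau_{ij}$, and squaring and summing over $i$ gives $\sum_{i=1}^N\gamma(i,j)^2\le\sum_{i=1}^N\tau_{ij}^2\le\Lp\max_i\tau_{ij}\Rp\sum_{i=1}^N\tau_{ij}$. Since $\+E[e_{it}e_{jt}]$ is symmetric in $i,j$, the bounds may be taken symmetric, $\tau_{ij}=\tau_{ji}$, so the absolute summability $\sum_{j}\tau_{ij}\le M$ yields $\sum_{i}\tau_{ij}\le M$ for each fixed $j$, hence $\max_i\tau_{ij}\le M$ and $\sum_i\gamma(i,j)^2\le M^2$. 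The double-sum bound $\frac{1}{N}\sum_{j=1}^N\sum_{i=1}^N\gamma(i,j)^2$ then follows immediately by summing over $j$ and dividing by $N$.

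For part 2, I would condition on the pair $(W,\Lambda)$, under which $\tlq_{ij}$ and $\Lambda_i$ are both fixed while $F,e$ remain unaffected since they are independent of $(W,\Lambda)$. Writing the square as a quadratic form,
\[\+E\Ls\Lp\tfrac{1}{\sqrt{|\tlq_{ij}|}}\Lambda_i^\T\sum_{t\in\tlq_{ij}}F_te_{jt}\Rp^2\,\Big|\,W,\Lambda\Rs=\Lambda_i^\T A_{ij}\Lambda_i,\qquad A_{ij}:=\frac{1}{|\tlq_{ij}|}\+E\Ls\Lp\sum_{t\in\tlq_{ij}}F_te_{jt}\Rp\Lp\sum_{s\in\tlq_{ij}}F_se_{js}\Rp^\T\Rs.\]
The operator norm of the symmetric positive semidefinite matrix $A_{ij}$ is bounded by $\+E\norm{\frac{1}{\sqrt{|\tlq_{ij}|}}\sum_{t\in\tlq_{ij}}F_te_{jt}}^2$, which by Jensen's inequality is at most $\big(\+E\norm{\frac{1}{\sqrt{|\tlq_{ij}|}}\sum_{t\in\tlq_{ij}}F_te_{jt}}^4\big)^{1/2}\le M^{1/2}$, using Assumption \ref{ass:factor-model}.\ref{ass:factor-error} applied to the pair $(j,i)$ and $\tlq_{ji}=\tlq_{ij}$. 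This bound is uniform over all admissible realizations of $\tlq_{ij}$. Thus $\Lambda_i^\T A_{ij}\Lambda_i\le\norm{A_{ij}}\norm{\Lambda_i}^2\le M^{1/2}\norm{\Lambda_i}^2$ pointwise, and taking the total expectation together with $\+E[\norm{\Lambda_i}^2]\le(\+E[\norm{\Lambda_i}^4])^{1/2}\le\bar{\Lambda}^{1/2}$ yields $\+E\Ls\Lp\frac{1}{\sqrt{|\tlq_{ij}|}}\Lambda_i^\T\sum_{t\in\tlq_{ij}}F_te_{jt}\Rp^2\Rs\le M^{1/2}\bar{\Lambda}^{1/2}$, which is the claim.

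The main subtlety, rather than a genuine obstacle, is that $\tlq_{ij}$ is itself random and, under the general observation scheme of Assumption \ref{ass:obs-equal-weight}, may depend on the loadings through $W$. The conditioning argument handles this cleanly: because $F$ and $e$ are independent of both $W$ and $\Lambda$, the inner expectation producing $A_{ij}$ is computed as if the factors and errors were drawn independently of the (possibly loading-dependent) observation set, and Assumption \ref{ass:factor-model}.\ref{ass:factor-error} furnishes a bound on $\norm{A_{ij}}$ valid for every set $\tlq_{ij}$. Consequently I never interchange $\Lambda_i$ with the randomness in $\tlq_{ij}$; bounding the quadratic form pathwise by $\norm{A_{ij}}\norm{\Lambda_i}^2$ and only then taking expectations sidesteps any correlation between $\Lambda_i$ and $W$.
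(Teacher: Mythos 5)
Your proof is correct, and it differs from the paper's in instructive ways on both parts. For part 1, the paper never passes through the envelope $\tau_{ij}$: it introduces the correlation-type quantity $\rho(i,j)=\gamma(i,j)/\left[\gamma(i,i)\gamma(j,j)\right]^{1/2}$, exploits $\rho(i,j)^2\le|\rho(i,j)|$ together with $|\gamma(i,i)|\le M$ to get the self-improving bound $\gamma(i,j)^2\le M|\gamma(i,j)|$, and then invokes absolute summability of $\gamma(\cdot,j)$ itself via Assumption G1.3(c). Your route --- $|\gamma(i,j)|\le\tau_{ij}$, then $\sum_i\tau_{ij}^2\le\left(\max_i\tau_{ij}\right)\sum_i\tau_{ij}$ after symmetrizing the envelope (e.g.\ replacing $\tau_{ij}$ by $\min(\tau_{ij},\tau_{ji})$, which still dominates $|\tau_{ij,t}|=|\tau_{ji,t}|$ and preserves row-summability) --- is more elementary and delivers the same $M^2$; note that the paper's step $\sum_{i=1}^N|\gamma(i,j)|\le M$ also tacitly uses the symmetry $\gamma(i,j)=\gamma(j,i)$, since G1.3(c) only bounds sums over the second index, so your explicit symmetrization makes precise something the paper leaves implicit. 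For part 2, your core computation is the paper's Cauchy--Schwarz-plus-independence argument, but with a genuine refinement: the paper factorizes directly as $\+E\Ls\left(\Lambda_i^\T X\right)^2\Rs\le\+E\norm{\Lambda_i}^2\,\+E\norm{X}^2$ with $X=\frac{1}{\sqrt{|\tlq_{ij}|}}\sum_{t\in\tlq_{ij}}F_te_{jt}$, treating $\Lambda_i$ and $X$ as independent even though $\tlq_{ij}$ is a function of $W$, which under Assumption S1 may depend on the loadings. Your conditioning on $(W,\Lambda)$, the pathwise quadratic-form bound $\Lambda_i^\T A_{ij}\Lambda_i\le\norm{A_{ij}}\norm{\Lambda_i}^2$ with $\norm{A_{ij}}$ controlled by the trace, and the uniform-in-$\tlq_{ij}$ application of Assumption G1.4 (correctly applied to the pair $(j,i)$ with $\tlq_{ji}=\tlq_{ij}$, plus Jensen to pass from the assumed fourth moment to the needed second moment) repair exactly this gap, at the cost only of the harmless constant $M^{1/2}\bar{\Lambda}^{1/2}$ in place of the paper's looser $\bar{\Lambda}^2 M$. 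The one mild caveat is that your argument needs $(F,e)$ independent of the pair $(W,\Lambda)$ jointly, not merely pairwise; this is the standard reading of Assumptions S1.1 and G1.2, and the paper's own proof requires it as well, so it is not a gap relative to the paper.
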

			
			\begin{proof}[Proof of Lemma \ref{lemma:prep-consistency}]
				\begin{enumerate}
					\item Let $\rho(i,j) = \gamma(i,j)/ \Ls \Lp \frac{1}{|\tlq_{ij}|} \sum_{t \in \tlq_{ij}}  \+E [e_{it}^2 ]  \Rp  \Lp \frac{1}{|\tlq_{ij}|} \sum_{t \in \tlq_{ij}}  \+E [e_{jt}^2]  \Rp  \Rs^{1/2}$ \\
					$ = \gamma(i,j)/ \Ls\gamma(i,i)  \gamma(j,j) \Rs^{1/2}$. Then $|\rho(i,j)| \leq 1$ and $\rho(i,j)^2 \leq |\rho(i,j)|$. Assumption \ref{ass:factor-model}.\ref{ass:error}.3 implies that $|\gamma(i,i) | \leq M$ and $|\gamma(j,j) | \leq M$. We then have for all $i$ and $j$,\\ $\gamma(i,j)^2 = \gamma(i,i) \gamma(j,j) \rho(i,j)^2 \leq M |\gamma(i,i) \gamma(j,j) |^{1/2}  |\rho(i,j)| = M |\gamma(i,j)| $. This allows us to bound $\sum_{i=1}^N \gamma(i,j)^2 $ as follows
					\begin{align*}
						\sum_{i = 1}^N \gamma(i,j)^2  &=  M \sum_{i = 1}^N  |\gamma(i,i) \gamma(j,j)|^{1/2}  |\rho(i,j)| \leq M \underbrace{\sum_{i = 1}^N  |\gamma(i,j)|}_{\substack{\leq M \text{ from} \\ \text{ Assumption \ref{ass:factor-model}.3.(c)} } }  \leq M^2. 
					\end{align*}
					For $\frac{1}{N} \sum_{i = 1}^N \sum_{j=1}^N \gamma(i,j)^2$, we have 
					\begin{eqnarray*}
						\frac{1}{N} \sum_{i = 1}^N \sum_{j=1}^N \gamma(i,j)^2  &\leq&  \frac{M}{N} \sum_{i = 1}^N \sum_{j=1}^N |\gamma(i,i) \gamma(j,j)|^{1/2}  |\rho(i,j)| \\
						&\leq& \frac{M}{N} \sum_{i = 1}^N \sum_{j=1}^N |\gamma(i,j)| \leq M^2,
					\end{eqnarray*}
					where the last inequality follows from Assumption \ref{ass:factor-model}.\ref{ass:error}.(c).
					\item
					\begin{eqnarray*}
						\+E\Ls \Lp\frac{1}{\sqrt{|\tlq_{ij}|}} \Lambda_i^\T \sum_{t \in \tlq_{ij}} F_t e_{jt} \Rp^2 \Rs \leq  \+E[\norm{\Lambda_i}^2] \cdot  \+E \norm{\frac{1}{\sqrt{|\tlq_{ij}|}}\sum_{t \in \tlq_{ij}} F_t e_{jt}}^2  \leq \bar \Lambda^2 M
					\end{eqnarray*}
					following from Assumption \ref{ass:factor-model}.4 and the independence of $\Lambda$ with $F$ and $e$. 
				\end{enumerate}
			\end{proof}
			
			\begin{lemma}\label{lemma:consistency}
				Under Assumptions \ref{ass:obs} and \ref{ass:factor-model},  let $\delta = \min (N, T)$, we have
				\begin{eqnarray}
					\delta \Lp \frac{1}{N} \sum_{j=1}^N \norm{\tilde \Lambda_j - H_j \Lambda_j}^2 \Rp = O_P(1),
				\end{eqnarray}
				where $H_j = \frac{1}{NT} \tilde D^\I \sum_{i=1}^N \tilde \Lambda_i \Lambda_i^\T F^\T\text{diag}(W_i \odot W_j) F/\tilde q_{ij}$. 
			\end{lemma}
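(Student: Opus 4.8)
The plan is to argue directly from the decomposition
$\tilde\Lambda_j - H_j\Lambda_j = \tilde D^\I\Lp \frac1N\sum_{i=1}^N\tilde\Lambda_i\gamma(i,j) + \frac1N\sum_{i=1}^N\tilde\Lambda_i\zeta_{ij} + \frac1N\sum_{i=1}^N\tilde\Lambda_i\eta_{ij} + \frac1N\sum_{i=1}^N\tilde\Lambda_i\xi_{ij}\Rp$
established just above, following the consistency template of \cite{bai2002determining} but tracking how the missing pattern enters through the sets $\tlq_{ij}$. By the elementary inequality $\norm{a+b+c+d}^2\le 4\Lp\norm{a}^2+\norm{b}^2+\norm{c}^2+\norm{d}^2\Rp$ and submultiplicativity of the Frobenius norm, it suffices to bound $\norm{\tilde D^\I}^2$ and, for each $a_{ij}\in\{\gamma(i,j),\zeta_{ij},\eta_{ij},\xi_{ij}\}$, the term $\frac1N\sum_{j=1}^N\norm{\frac1N\sum_{i=1}^N\tilde\Lambda_i a_{ij}}^2$. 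Cauchy--Schwarz gives $\norm{\frac1N\sum_{i=1}^N\tilde\Lambda_i a_{ij}}^2 \le \Lp\frac1N\sum_{i=1}^N\norm{\tilde\Lambda_i}^2\Rp\Lp\frac1N\sum_{i=1}^N a_{ij}^2\Rp$, and the identification normalization $\tilde\Lambda^\T\tilde\Lambda/N=I_r$ makes the first factor equal to the constant $r$. Thus everything reduces to showing $\frac1{N^2}\sum_{i,j} a_{ij}^2 = O_P(\delta^\I)$ for each of the four choices, together with $\norm{\tilde D^\I}=O_P(1)$.

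For the four cross-sectional averages I would argue as follows. For the deterministic term $\gamma(i,j)$, Lemma \ref{lemma:prep-consistency}.1 yields $\frac1N\sum_{i,j}\gamma(i,j)^2\le M$, so $\frac1{N^2}\sum_{i,j}\gamma(i,j)^2 = O(N^\I)$. For $\zeta_{ij} = \frac1{|\tlq_{ij}|}\sum_{t\in\tlq_{ij}}\Lp e_{it}e_{jt}-\+E[e_{it}e_{jt}]\Rp$, the fourth-moment bound in Assumption \ref{ass:factor-model}.\ref{ass:error} together with Jensen's inequality gives $\+E[\zeta_{ij}^2]\le M^{1/2}|\tlq_{ij}|^\I$, and the uniform lower bound $|\tlq_{ij}|\ge\underline q\,T$ from Assumption \ref{ass:obs-equal-weight}.\ref{ass:add-obs} turns this into $\+E[\zeta_{ij}^2]=O(T^\I)$ uniformly, so $\frac1{N^2}\sum_{i,j}\zeta_{ij}^2=O_P(T^\I)$ by Markov's inequality. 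For $\eta_{ij}=|\tlq_{ij}|^\I\Lambda_i^\T\sum_{t\in\tlq_{ij}}F_te_{jt}$ and, symmetrically, $\xi_{ij}=|\tlq_{ij}|^\I\Lambda_j^\T\sum_{t\in\tlq_{ij}}F_te_{it}$, Lemma \ref{lemma:prep-consistency}.2 bounds the rescaled inner sums, delivering $\+E[\eta_{ij}^2]\le M|\tlq_{ij}|^\I$ and the same for $\xi_{ij}$; invoking $|\tlq_{ij}|\ge\underline q\,T$ once more yields $\frac1{N^2}\sum_{i,j}\eta_{ij}^2=O_P(T^\I)$ and likewise for $\xi$.

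Collecting the four bounds, the bracketed expression is $O_P\Lp\max(N^\I,T^\I)\Rp=O_P(\delta^\I)$. It remains to control $\norm{\tilde D^\I}$: since $\tilde D$ collects the $r$ largest eigenvalues of $\frac1N\tilde\Sigma$, which converge to the distinct nonzero eigenvalues of $\Sigma_\Lambda\Sigma_F$ under Assumption \ref{ass:factor-model}.\ref{ass:eigen} as recorded in Lemma \ref{lemma:eigenvalue}, these eigenvalues are bounded away from zero with probability approaching one, so $\norm{\tilde D^\I}=O_P(1)$. Multiplying through gives $\delta\Lp\frac1N\sum_{j=1}^N\norm{\tilde\Lambda_j-H_j\Lambda_j}^2\Rp=O_P(1)$. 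I expect the only genuinely delicate point, relative to the fully observed case, to be the bookkeeping of the $(i,j)$-dependent averaging sets $\tlq_{ij}$: the whole argument hinges on the uniform lower bound $|\tlq_{ij}|/T\ge\underline q>0$ and on the moment conditions of Assumptions \ref{ass:factor-model}.\ref{ass:error} and \ref{ass:factor-model}.\ref{ass:factor-error} being stated over precisely these sets, which is what allows every factor $|\tlq_{ij}|^\I$ to be replaced by an $O(T^\I)$ bound uniformly in $(i,j)$. Everything else follows the standard consistency route.
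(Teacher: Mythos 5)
Your proposal is correct and follows essentially the same route as the paper's proof: the same four-term decomposition into the $\gamma(i,j)$, $\zeta_{ij}$, $\eta_{ij}$, $\xi_{ij}$ averages, the pathwise Cauchy--Schwarz bound with $\frac{1}{N}\sum_i\norm{\tilde\Lambda_i}^2=r$ from the normalization, the bounds of Lemma \ref{lemma:prep-consistency} for the $\gamma$, $\eta$, $\xi$ terms, the uniform lower bound $|\tlq_{ij}|\geq \underline{q}\,T$, and $\norm{\tilde D^\I}=O_P(1)$ from Lemma \ref{lemma:eigenvalue-converge}.

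The only point of divergence is the $\zeta_{ij}$ term. The paper follows \cite{bai2002determining} and controls $\frac{1}{N}\sum_j \norm{\frac{1}{N}\sum_i \tilde\Lambda_i \zeta_{ij}}^2$ through fourth moments of the cross-products, using $\+E \Lp \sum_j \zeta_{ij}\zeta_{lj}\Rp^2 \leq N^2 \max_{i,j}\+E|\zeta_{ij}|^4$ together with Assumption \ref{ass:factor-model}.\ref{ass:error}(e). You instead apply to this term the same pathwise Cauchy--Schwarz bound used for the other three, and extract the second-moment bound $\+E[\zeta_{ij}^2]\leq M^{1/2}/|\tlq_{ij}|$ from that same fourth-moment assumption via Jensen, concluding by Markov. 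Both arguments deliver $O_P(1/T)$ under the stated assumptions, so your shortcut is valid and has the merit of handling all four terms uniformly; the paper's heavier machinery buys nothing extra at this step, since Assumption \ref{ass:factor-model}.\ref{ass:error}(e) already controls $\+E|\zeta_{ij}|^4$ directly for every pair $(i,j)$.
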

			
			\begin{proof}[Proof of Lemma \ref{lemma:consistency}]
				From the Cauchy-Schwartz inequality, we have 
				$\norm{\tilde \Lambda_j - H_j \Lambda_j }^2 \leq 4 \norm{\tilde D^\I}^2 (a_j + b_j + c_j + d_j)$, where 
				\begin{align*}
					a_j = \frac{1}{N^2} \norm{\sum_{i=1}^N \tilde \Lambda_i \gamma(i,j)}^2, &\quad \quad b_j = \frac{1}{N^2} \norm{\sum_{i=1}^N \tilde \Lambda_i \zeta_{ij}}^2 \\
					c_j = \frac{1}{N^2} \norm{\sum_{i=1}^N \tilde \Lambda_i \eta_{ij}}^2, & \quad \quad d_j = \frac{1}{N^2} \norm{\sum_{i=1}^N \tilde \Lambda_i \xi_{ij}}^2.
				\end{align*}
				Let us first consider $\frac{1}{N}\sum_{j=1}^N a_j \leq \frac{1}{N}$.  From $\frac{1}{N^2} \norm{\sum_{i=1}^N \tilde \Lambda_i \gamma(i,j)}^2 \leq \Lp \frac{1}{N} \sum_{i=1}^N \norm{\tilde \Lambda_i}^2  \Rp \Lp \frac{1}{N} \sum_{i=1}^N \gamma(i,j)^2 \Rp$, we conclude 
				\begin{eqnarray*}
					\frac{1}{N}\sum_{j=1}^N a_j \leq \frac{1}{N} \underbrace{\Lp  \frac{1}{N}  \sum_{i=1}^N \norm{\tilde \Lambda_i}^2 \Rp}_{O_P(1)}  \underbrace{\Lp \frac{1}{N}\sum_{i=1}^N \sum_{j=1}^N \gamma(i,j)^2  \Rp }_{O_P(1)} = O_P \Lp \frac{1}{N} \Rp,
				\end{eqnarray*}
				since $\frac{1}{N} \sum_{i=1}^N \norm{\tilde \Lambda_i}^2 = O_P(1)$ which follows from $\frac{1}{N} \tilde \Lambda^\T \tilde \Lambda = I_r$,  Assumption \ref{ass:factor-model}.2 and Lemma \ref{lemma:prep-consistency}.1.
				
				Next, let us consider $\frac{1}{N} \sum_{j=1}^N b_j$. Similar to the proof of Theorem 1 in \cite{bai2002determining}, it holds that
				\begin{eqnarray*}
					\frac{1}{N} \sum_{j=1}^N b_j \leq \Lp \frac{1}{N} \sum_{i=1}^N \norm{\tilde \Lambda_i}^2 \Rp \Lp \frac{1}{N^2} \sum_{i=1}^N \sum_{l=1}^N \Lp \sum_{j=1}^N \zeta_{ij} \zeta_{lj} \Rp^2  \Rp^{1/2},
				\end{eqnarray*}
				$\+E \Ls \sum_{j=1}^N \zeta_{ij} \zeta_{lj} \Rs^2 \leq N^2 \max_{i,j} \+E |\zeta_{ij}|^4$ and 
				\[\+E |\zeta_{ij}|^4 = \frac{1}{|\tlq_{ij}|^2} \underbrace{\+E \left\vert \frac{1}{|\tlq_{ij}|^{1/2}} \sum_{t \in \tlq_{ij}} \Lp e_{it}e_{jt} -  \+E[e_{it}e_{jt}] \Rp \right\vert^4 }_{\leq M}  \leq \frac{ M}{|\tlq_{ij}|^2} = O_P \Lp \frac{1}{T^2} \Rp\]
				because of Assumption \ref{ass:factor-model}.\ref{ass:error}.(e). Thus, $\frac{1}{N} \sum_{j=1}^N b_j = O_P \Lp \frac{1}{T} \Rp$.
				
				Next, we consider $\frac{1}{N} \sum_{j=1}^N c_j$. For any $c_j$, it holds that
				\begin{eqnarray*}
					c_j &=& \frac{1}{N^2}\norm{\sum_{i=1}^N \tilde \Lambda_i \eta_{ij}}^2  = \frac{1}{N^2} \norm{ \sum_{i=1}^N \tilde \Lambda_i \Lambda_i^\T \cdot  \Lp \frac{1}{|\tlq_{ij}|} \sum_{t \in \tlq_{ij}} F_t e_{jt} \Rp}^2 \\
					&\leq& \underbrace{\Lp \frac{1}{N} \sum_{i=1}^N \norm{\tilde \Lambda_i}^2 \Rp}_{O_P(1)} \underbrace{\Lp \frac{1}{N} \sum_{i=1}^N  \frac{1}{|\tlq_{ij}|} \Lp \frac{1}{\sqrt{|\tlq_{ij}|}} \Lambda_i^\T \sum_{i \in \tlq_{ij}} F_t e_{jt}\Rp^2 \Rp}_{\leq \max_i \frac{1}{|\tlq_{ij}|} \cdot \frac{1}{N}  \sum_{i=1}^N \Lp \frac{1}{\sqrt{|\tlq_{ij}|}} \Lambda_i^\T \sum_{t \in \tlq_{ij}} F_t e_{jt}\Rp^2 = O_P(\frac{1}{T}) \cdot O_P(1)  }  = O_P \Lp \frac{1}{T} \Rp
				\end{eqnarray*}
				since $\+E\Big[\frac{1}{N}  \sum_{i=1}^N \Lp \frac{1}{\sqrt{|\tlq_{ij}|}} \Lambda_i^\T \sum_{i \in \tlq_{ij}} F_t e_{jt}\Rp^2 \Big] \leq \bar \Lambda^2 M$ following from Lemma \ref{lemma:prep-consistency}.2 and therefore \\ $\frac{1}{N}  \sum_{i=1}^N \Lp \frac{1}{\sqrt{|\tlq_{ij}|}} \Lambda_i^\T \sum_{i \in \tlq_{ij}} F_t e_{jt}\Rp^2 = O_P(1)$. Thus $\frac{1}{N} \sum_{j=1}^N c_j =O_P \Lp \frac{1}{T} \Rp$. Similarly, we can show $\frac{1}{N} \sum_{j=1}^N d_j = O_P \Lp \frac{1}{T} \Rp$. In the following Lemma \ref{lemma:eigenvalue-converge}, we show $\norm{\tilde D^\I} = O_P(1)$. Hence,  
				\[\frac{1}{N} \sum_{j=1}^N \norm{\tilde \Lambda_j - H_j \Lambda_j }^2 \leq 4\norm{\tilde D^\I}^2 \frac{1}{N} \sum_{j=1}^N (a_j + b_j + c_j + d_j) = O_P \Lp \frac{1}{T} \Rp + O_P \Lp \frac{1}{N} \Rp = O_P \Lp \frac{1}{\delta_{NT}} \Rp.  \]
			\end{proof}

			\begin{lemma}\label{lemma:HF-H-diff}
				Under Assumptions \ref{ass:obs} and \ref{ass:factor-model}, let $H = \frac{1}{NT} \tilde D^\I \tilde \Lambda^\T \Lambda F^\T F$  and $\delta_{NT} = \min (N, T)$, then we have $H_j = O_P(1)$, $H = O_P(1)$ and for all $j$
				\[H_j - H = O_P \Lp 1/\dnt \Rp. \]
			\end{lemma}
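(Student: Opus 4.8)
The plan is to first reduce both rotation matrices to a common algebraic form and then bound their difference through the fluctuations of the factor second moments. Writing $F^\T\diag(W_i\odot W_j)F=\sum_{t\in\tlq_{ij}}F_tF_t^\T$ and using $\tilde q_{ij}=|\tlq_{ij}|/T$, the definition of $H_j$ collapses to $H_j=\tilde D^\I\,\frac1N\sum_{i=1}^N\tilde\Lambda_i\Lambda_i^\T\,\big(\frac{1}{|\tlq_{ij}|}\sum_{t\in\tlq_{ij}}F_tF_t^\T\big)$, while $H=\tilde D^\I\,\big(\frac1N\sum_{i=1}^N\tilde\Lambda_i\Lambda_i^\T\big)\,\big(\frac1T\sum_{t=1}^TF_tF_t^\T\big)$. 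Subtracting, the common factor $\tilde D^\I$ and the loadings pull out and I obtain the clean identity $H_j-H=\tilde D^\I\,\frac1N\sum_{i=1}^N\tilde\Lambda_i\Lambda_i^\T G_{ij}$, where $G_{ij}=\frac{1}{|\tlq_{ij}|}\sum_{t\in\tlq_{ij}}F_tF_t^\T-\frac1T\sum_{t=1}^TF_tF_t^\T$ measures the gap between the ``local'' and ``global'' factor second moments.

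For the boundedness claims I would invoke Lemma \ref{lemma:eigenvalue-converge} for $\|\tilde D^\I\|=O_P(1)$, the normalization $\frac1N\tilde\Lambda^\T\tilde\Lambda=I_r$ (so that $\frac1N\sum_i\|\tilde\Lambda_i\|^2=r$), and Assumption \ref{ass:factor-model}.\ref{ass:loading}. A Cauchy--Schwarz split $\|\frac1N\sum_i\tilde\Lambda_i\Lambda_i^\T\|\le(\frac1N\sum_i\|\tilde\Lambda_i\|^2)^{1/2}(\frac1N\sum_i\|\Lambda_i\|^2)^{1/2}=O_P(1)$, together with $\frac1T\sum_tF_tF_t^\T=O_P(1)$ and $\frac{1}{|\tlq_{ij}|}\sum_{t\in\tlq_{ij}}F_tF_t^\T=O_P(1)$ from Assumption \ref{ass:factor-model}.\ref{ass:factor}, then yields $H=O_P(1)$ and $H_j=O_P(1)$.

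The rate for the difference is the substantive step. The key input is the mean-square control in Assumption \ref{ass:factor-model}.\ref{ass:factor}: $\+E\|\sqrt{|\tlq_{ij}|}(\frac{1}{|\tlq_{ij}|}\sum_{t\in\tlq_{ij}}F_tF_t^\T-\Sigma_F)\|^2\le M$ and the analogous full-sample bound, so that $\+E\|G_{ij}\|^2\le 2M(|\tlq_{ij}|^{-1}+T^{-1})$. Using Assumption \ref{ass:obs-equal-weight}.2, namely $|\tlq_{ij}|\ge\underline q T$, this is $O(1/T)$ uniformly in $i,j$. I then bound $\|H_j-H\|\le\|\tilde D^\I\|\,(\frac1N\sum_i\|\tilde\Lambda_i\|^2)^{1/2}(\frac1N\sum_i\|\Lambda_i\|^2\|G_{ij}\|^2)^{1/2}$ by Cauchy--Schwarz. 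Because $W$ is independent of $F$ and $\Lambda$ is independent of $F$ (Assumptions \ref{ass:obs-equal-weight}.1 and \ref{ass:factor-model}.\ref{ass:loading}), conditioning on $W$ factorizes the expectation and gives $\+E[\frac1N\sum_i\|\Lambda_i\|^2\|G_{ij}\|^2]\le\sqrt{\bar\Lambda}\cdot\max_i\+E\|G_{ij}\|^2=O(1/T)$; hence this factor is $O_P(1/\sqrt T)$ and $H_j-H=O_P(1/\sqrt T)$. Since $1/\sqrt T\le 1/\dnt=1/\min(\sqrt N,\sqrt T)$, the claimed bound $H_j-H=O_P(1/\dnt)$ follows.

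The main obstacle I anticipate is controlling the weighted average $\frac1N\sum_i\tilde\Lambda_i\Lambda_i^\T G_{ij}$ without invoking an uncontrolled fourth moment of the estimated loadings: the $i$-dependence of $\tlq_{ij}$ blocks factoring $G_{ij}$ out of the sum, so the correct move is the Cauchy--Schwarz split that keeps $\|\tilde\Lambda_i\|$ only in second moment (bounded by the eigenvector normalization) while absorbing $\|\Lambda_i\|$ and $\|G_{ij}\|$ into a jointly bounded $L^2$ term via the independence of $W$, $\Lambda$, and $F$. It is also worth noting that the ``global'' piece of $G_{ij}$, namely $\frac1T\sum_tF_tF_t^\T-\Sigma_F$, carries no cross-sectional averaging and is genuinely of order $1/\sqrt T$, which is why no improvement beyond $1/\sqrt T$ is available and why $1/\dnt$ is the natural form in which to record the conclusion.
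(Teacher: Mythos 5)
Your proposal is correct and follows essentially the same route as the paper: the identity $H_j - H = \tilde D^\I \frac{1}{N}\sum_{i=1}^N \tilde\Lambda_i \Lambda_i^\T G_{ij}$, the Cauchy--Schwarz split keeping $\|\tilde\Lambda_i\|$ only in second moment, the factorization of the expectation via independence of $F$ from $(\Lambda, W)$, and the mean-square bound of Assumption \ref{ass:factor-model}.\ref{ass:factor} combined with $|\tlq_{ij}| \geq \underline{q}T$ to get $\Delta = O_P(1/T)$ and hence $H_j - H = O_P(1/\sqrt{T}) = O_P(1/\dnt)$ (your rate even corrects a typo in the paper's concluding line, which writes $O_P(1/T)$ despite $\Delta$ entering under a square root). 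The only slightly loose spot is asserting $H_j = O_P(1)$ from a bare $\frac{1}{|\tlq_{ij}|}\sum_{t\in\tlq_{ij}} F_t F_t^\T = O_P(1)$, since this matrix varies with $i$ inside the cross-sectional sum; the fix is the same moment-plus-Cauchy--Schwarz device you already use for the rate, which is exactly how the paper handles it through its term $\Delta_1$.
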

			\begin{proof}[Proof of Lemma \ref{lemma:HF-H-diff}]
				Let us first show $H_j - H = O_P \Lp 1/\dnt \Rp$. From the definition of $H_j$ and $H$, it holds that
				\begin{align*}
					\norm{H_j - H} &= \norm{\frac{1}{N} \tilde D^\I \sum_{i=1}^N \tilde \Lambda_i \Lambda_i^\T \Big( \frac{1}{|\tlq_{ij}|} \sum_{t \in \tlq_{ij}} F_t F_t^\T - \frac{1}{T} \sum_{t=1}^T F_t F_t^\T \Big) } \\
					&\leq  \norm{\tilde D^\I} \Ls \frac{1}{N}  \sum_{i=1}^N\norm{ \tilde \Lambda_i} \norm{\Lambda_i} \norm{ \frac{1}{|\tlq_{ij}|} \sum_{t \in \tlq_{ij}} F_t F_t^\T - \frac{1}{T} \sum_{t=1}^T F_t F_t^\T } \Rs \\
					&\leq \underbrace{\norm{\tilde D^\I}}_{O_P(1)} \underbrace{\Ls \frac{1}{N}  \sum_{i=1}^N \norm{ \tilde \Lambda_i}^2 \Rs^{1/2}}_{O_P(1)} \Bigg[ \underbrace{\frac{1}{N}  \sum_{i=1}^N  \norm{\Lambda_i}^2 \norm{ \frac{1}{|\tlq_{ij}|} \sum_{t \in \tlq_{ij}} F_t F_t^\T - \frac{1}{T} \sum_{t=1}^T F_t F_t^\T }^2}_{:=\Delta}  \Bigg]^{1/2}.
				\end{align*}
				As $\Lambda_i$ is independent of $F_t$ we have 
				\begin{align*}
					\+E [\Delta] &= \frac{1}{N}  \sum_{i=1}^N  \+E  \norm{\Lambda_i}^2 \underbrace{\+E  \norm{ \frac{1}{|\tlq_{ij}|} \sum_{t \in \tlq_{ij}} F_t F_t^\T - \frac{1}{T} \sum_{t=1}^T F_t F_t^\T }^2 }_{\leq \frac{M}{|\tlq_{ij}|} \text{ from Assumption \ref{ass:factor-model}.1} } \leq \Big(\max_i \frac{1}{|\tlq_{ij}|} \Big)  \cdot \frac{M}{N} \cdot  \sum_{i=1}^N  \+E  \norm{\Lambda_i}^2  = O \Big( \frac{1}{T}  \Big).
				\end{align*}
				Hence, we conclude that $\Delta = O_P \Big( \frac{1}{T}  \Big)$ and $H_j - H = O_P \Big( \frac{1}{T}  \Big) = O_P \Lp \frac{1}{\dnt} \Rp $. Finally, let us show \\ $H_j = \frac{1}{NT} \tilde D^\I \sum_{i=1}^N \tilde \Lambda_i \Lambda_i^\T F^\T\text{diag}(W_i \odot W_j) F/\tilde q_{ij} = O_P(1)$:
				\begin{align*}
					\norm{H_j}^2 \leq \underbrace{\norm{\tilde D^{-1}}^2}_{O_P(1)} \underbrace{\Lp \frac{1}{N} \sum_{i=1}^N \norm{ \tilde \Lambda_i}^2  \Rp}_{O_P(1)}  \underbrace{\Lp \frac{1}{N} \sum_{i=1}^N \norm{ \Lambda_i^\T \frac{1}{|\tlq_{ij}|} \sum_{t \in \tlq_{ij}} F_t F_t^\T  }^2 \Rp}_{:=\Delta_1}  
				\end{align*}
				Note that since $\Lambda$ is independent of $F$, we have 
				\begin{align*}
					\+E \Ls \Delta_1 \Rs \leq  \+E \Ls \frac{1}{N} \sum_{i=1}^N \norm{\Lambda_i }^2 \norm{ \frac{1}{|\tlq_{ij}|} \sum_{t \in \tlq_{ij}} F_t F_t^\T }^2 \Rs =  \frac{1}{N} \sum_{i=1}^N \underbrace{\+E [\norm{\Lambda_i }^2]}_{\leq \bar \Lambda} \underbrace{ \+E\norm{ \frac{1}{|\tlq_{ij}|} \sum_{t \in \tlq_{ij}} F_t F_t^\T }^2 }_{\leq \bar F} \leq  \bar \Lambda \cdot \bar F.
				\end{align*}
				Hence, $\Delta_1 = O_P(1)$ and therefore $H_j = O_P(1)$. Moreover, $H = H_j - O_P \Big( \frac{1}{T}  \Big) = O_P(1)$.
			\end{proof}
			
			\begin{proof}[Proof of Theorem  \ref{thm:consistency-same-H}]
				\[\frac{1}{N} \sum_{j=1}^N \norm{\tilde \Lambda_j - H \Lambda_j }^2  \leq  \frac{1}{N} \sum_{j=1}^N \norm{\tilde \Lambda_j - H_j \Lambda_j }^2 + \frac{1}{N} \sum_{j=1}^N \norm{ ( H_j - H) \Lambda_j }^2. \]
				The first term satisfies $\frac{1}{N} \sum_{j=1}^N \norm{\tilde \Lambda_j - H_j \Lambda_j }^2 = O_P \Lp1/\delta \Rp$ as shown in Lemma \ref{lemma:consistency}. The second term is bounded by 
				\begin{align*}
					\frac{1}{N} \sum_{j=1}^N \norm{( H_j - H) \Lambda_j }^2 &\leq \Lp \frac{1}{N}\sum_{j=1}^N \norm{ H_j - H }^2\Rp  \underbrace{\Lp \frac{1}{N}\sum_{j=1}^N \norm{\Lambda_j }^2\Rp   }_{O_P(1)}. 
				\end{align*}
				Note that 
				\begin{align*}
					& \quad \frac{1}{N}\sum_{j=1}^N \norm{ (H_j - H) \Lambda_j }^2 \\
					&= \frac{1}{N}\sum_{j=1}^N  \norm{\frac{1}{N} \tilde D^\I \sum_{i=1}^N \tilde \Lambda_i \Lambda_i^\T \Big( \frac{1}{|\tlq_{ij}|} \sum_{t \in \tlq_{ij}} F_t F_t^\T - \frac{1}{T} \sum_{t=1}^T F_t F_t^\T \Big) \Lambda_j  }^2 \\
					&\leq  \norm{\tilde D^\I} \frac{1}{N}\sum_{j=1}^N  \Ls \frac{1}{N}  \sum_{i=1}^N\norm{ \tilde \Lambda_i} \norm{\Lambda_i} \norm{ \frac{1}{|\tlq_{ij}|} \sum_{t \in \tlq_{ij}} F_t F_t^\T - \frac{1}{T} \sum_{t=1}^T F_t F_t^\T } \norm{\Lambda_j } \Rs^2 \\
					&\leq \underbrace{\norm{\tilde D^\I}}_{O_P(1)} \underbrace{\Ls \frac{1}{N}  \sum_{i=1}^N \norm{ \tilde \Lambda_i}^2 \Rs}_{O_P(1)} \underbrace{\Ls \frac{1}{N^2}  \sum_{i=1}^N \sum_{j=1}^N \norm{\Lambda_i}^2 \norm{\Lambda_j}^2 \norm{ \frac{1}{|\tlq_{ij}|} \sum_{t \in \tlq_{ij}} F_t F_t^\T - \frac{1}{T} \sum_{t=1}^T F_t F_t^\T }^2  \Rs}_{:=\Delta_2}   .
				\end{align*}
				The expectation of $\Delta_2$ is
				\begin{align*}
					\+E [\Delta_2] &= \frac{1}{N^2}  \sum_{i=1}^N \sum_{j=1}^N \+E [ \norm{\Lambda_i}^2 \+E  \norm{\Lambda_j}^2] \underbrace{\+E  \norm{ \frac{1}{|\tlq_{ij}|} \sum_{t \in \tlq_{ij}} F_t F_t^\T - \frac{1}{T} \sum_{t=1}^T F_t F_t^\T }^2 }_{\leq \frac{M}{|\tlq_{ij}|} \text{ from Assumption \ref{ass:factor-model}.1} } \\
					&\leq \Big(\max_{i,j} \frac{1}{|\tlq_{ij}|} \Big)  \cdot \frac{M}{N} \cdot  \sum_{i=1}^N \sum_{j=1}^N  (\+E  \norm{\Lambda_i}^4 \+E  \norm{\Lambda_j}^4)^{1/2}  = O \Big( \frac{1}{T}  \Big).
				\end{align*}
				Hence $\Delta_2 = O_P \Big( \frac{1}{T}  \Big)$ and therefore  
				\begin{align}
					\frac{1}{N} \sum_{j=1}^N \norm{( H_j - H) \Lambda_j }^2 = O_P \Big( \frac{1}{T}  \Big) \label{eqn:squared-difference-Hj-H}    .
				\end{align} 
				Combinging this result with the bound for the first term $\frac{1}{N} \sum_{j=1}^N \norm{\tilde \Lambda_j - H_j \Lambda_j }^2 = O_P \Lp1/\delta_{NT} \Rp$ based on Lemma \ref{lemma:consistency}, we conclude that
				\[\frac{1}{N} \sum_{j=1}^N \norm{\tilde \Lambda_j - H \Lambda_j }^2 = O_P \Lp \frac{1}{\delta_{NT}} \Rp.\]
			\end{proof}

			\subsection{Proof of Theorem \ref{theorem:asy-normal-equal-weight}: Asymptotic Distribution}

			\begin{lemma}\label{lemma:eigenvalue-converge}\label{lemma:eigenvalue}
				Assume Assumptions \ref{ass:obs} and \ref{ass:factor-model} hold. As $T, N \rightarrow \infty$, it holds that
				\begin{enumerate}
					\item $\frac{1}{N} \tilde \Lambda^\T \Lp \frac{1}{N} ( \tilde Y \tilde Y^\T) \odot \Big[ \frac{1}{|\tlq_{ij}|} \Big]  \Rp \tilde \Lambda  \xrightarrow{p} D$,
					\item $\frac{1}{N^2}  \tilde \Lambda^\T \Lp  \Lp (\Lambda F^\T) \odot W \Rp\Lp(F \Lambda^\T) \odot W^\T \Rp   \odot \Big[ \frac{1}{|\tlq_{ij}|} \Big]  \Rp \tilde F \xrightarrow{p} D$
					\item $\frac{1}{N^2 T}  \tilde \Lambda^\T \Lp  \Lambda F^\T F \Lambda^\T \Rp \tilde \Lambda \xrightarrow{p} D$,
				\end{enumerate}
				where $D = \textnormal{diag}(d_1, d_2, \cdots, d_r)$ are the eigenvalues of $\Sigma_\Lambda \Sigma_F$. 
			\end{lemma}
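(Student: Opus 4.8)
The plan is to identify $\tilde D$ with the diagonal matrix of the $r$ largest eigenvalues of $\frac{1}{N}\tilde\Sigma$ and then show this matrix converges to the eigenvalues of $\Sigma_\Lambda\Sigma_F$ by comparing the reweighted sample covariance to the clean signal matrix $\frac{1}{NT}\Lambda F^\top F \Lambda^\top$. The first equality in the lemma is essentially definitional: since $\frac{1}{N}(\tilde Y\tilde Y^\top)\odot[1/|\tlq_{ij}|]=\frac{1}{N}\tilde\Sigma$ and the PCA step \eqref{eqn:pca-estimate-loadings} gives $\frac{1}{N}\tilde\Sigma\tilde\Lambda=\tilde\Lambda\tilde V$ with $\frac{1}{N}\tilde\Lambda^\top\tilde\Lambda=I_r$, left-multiplying by $\frac{1}{N}\tilde\Lambda^\top$ yields $\tilde D=\tilde V$, the eigenvalue matrix. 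The content is therefore (i) that the three displayed quantities share a common probability limit, and (ii) that this limit is $D$.

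For (i) I would plug $\tilde Y=(\Lambda F^\top)\odot W+e\odot W$ into $\frac{1}{N}\tilde\Sigma$ and split it into the signal block $((\Lambda F^\top)\odot W)((F\Lambda^\top)\odot W^\top)\odot[1/|\tlq_{ij}|]$ plus the two cross terms and the pure-error term, exactly as in the decomposition preceding \eqref{eqn:decomp-f-tilde}. Sandwiching each piece between $\frac{1}{N}\tilde\Lambda^\top$ and $\frac{1}{N}\tilde\Lambda$ and using $\frac{1}{N}\|\tilde\Lambda\|^2=r$ together with the moment bounds of Assumption~\ref{ass:factor-model}.\ref{ass:error} and Lemma~\ref{lemma:prep-consistency}, each cross and error contribution is $o_P(1)$; this gives statement~1 $=$ statement~2 $+\,o_P(1)$. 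The passage from statement~2 to statement~3 amounts to replacing the entry-dependent restricted average $\frac{1}{|\tlq_{ij}|}\sum_{t\in\tlq_{ij}}F_tF_t^\top$ by $\frac{1}{T}F^\top F$, whose mean-square cost is $O_P(1/T)$ by Assumption~\ref{ass:factor-model}.\ref{ass:factor}; this is controlled exactly as in Equation~\eqref{eqn:squared-difference-Hj-H} in the proof of Theorem~\ref{thm:consistency-same-H}, so the difference vanishes when sandwiched.

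For (ii) I would evaluate the limit of the cleanest expression, statement~3. Writing it as $\big(\frac{1}{N}\tilde\Lambda^\top\Lambda\big)\big(\frac{1}{T}F^\top F\big)\big(\frac{1}{N}\Lambda^\top\tilde\Lambda\big)$ and using $\frac{1}{T}F^\top F\xrightarrow{p}\Sigma_F$ together with the consistency result $\tilde\Lambda=\Lambda H^\top+o_P(1)$ from Theorem~\ref{thm:consistency-same-H}, the outer factors converge to $Q:=H^*\Sigma_\Lambda$, where $H^*=\plim H$, so $\tilde D\xrightarrow{p}Q\Sigma_F Q^\top$. The identification then uses the normalization $\frac{1}{N}\tilde\Lambda^\top\tilde\Lambda=I_r$, which forces $H^*\Sigma_\Lambda H^{*\top}=I_r$, and the eigen-equation, which forces the columns of $Q^\top$ to be eigenvectors of $\Sigma_\Lambda\Sigma_F$; the distinct-eigenvalue Assumption~\ref{ass:factor-model}.\ref{ass:eigen} pins the limit to the ordered diagonal matrix $D$ of eigenvalues of $\Sigma_\Lambda\Sigma_F$, following the argument of \cite{bai2003inferential}.

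The main obstacle, and the point requiring the most care, is the logical ordering. The consistency Theorem~\ref{thm:consistency-same-H} (via Lemmas~\ref{lemma:consistency} and~\ref{lemma:HF-H-diff}) already invokes $\|\tilde D^\I\|=O_P(1)$ from this lemma, so I cannot use consistency when bounding the eigenvalues. I would therefore first prove, without consistency, that the $r$-th largest eigenvalue of the signal block is bounded away from zero and the largest bounded above, using the min--max characterization and the fact that the signal block is close in operator norm to the rank-$r$ matrix $\frac{1}{N}\Lambda\Sigma_F\Lambda^\top$, whose nonzero eigenvalues converge to those of $\Sigma_\Lambda\Sigma_F$, while the error block has $o_P(1)$ largest eigenvalue. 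This yields $\|\tilde D\|=O_P(1)$ and $\|\tilde D^\I\|=O_P(1)$, which feeds the consistency proof; only afterwards do I invoke consistency for the sharp identification in (ii). Establishing that the entry-reweighted signal block is genuinely close to the low-rank $\frac{1}{N}\Lambda\Sigma_F\Lambda^\top$ in operator norm, rather than merely entrywise, is the delicate step, since the per-pair weighting destroys exact low-rankness.
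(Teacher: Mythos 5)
Your proposal is correct in its essential content, and the step you isolate as the crux --- what you call Stage A --- is in fact the entirety of the paper's own proof. The paper follows the Rayleigh-quotient argument of \cite{stock2002forecasting}: it defines $R(\gamma)$, $\tilde R(\gamma)$, $R^\ast(\gamma)$ as the three quadratic forms over $\Gamma=\{\gamma:\gamma^\T\gamma/N=1\}$, shows $\sup_{\gamma\in\Gamma}|R(\gamma)-R^\ast(\gamma)|\xrightarrow{p}0$ (steps (R2), (R5), (R6)), and then identifies the limits of the ordered maxima sequentially over orthogonal subspaces (steps (R8)--(R13)). Note that $\sup_{\gamma\in\Gamma}|R(\gamma)-R^\ast(\gamma)|$ is exactly the operator-norm distance between $\frac1N\tilde\Sigma$ and $\frac{1}{NT}\Lambda F^\T F\Lambda^\T$, and the bound is obtained by the same entrywise mean-square (Frobenius) computation you sketch in part (i): $\+E\big[\frac{1}{N^2}\sum_{i,j}(\Lambda_i^\T\Delta_{F,ij}\Lambda_j)^2\big]=O(1/T)$ with $\Delta_{F,ij}=\frac{1}{|\tlq_{ij}|}\sum_{t\in\tlq_{ij}}F_tF_t^\T-\frac1T F^\T F$. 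So the step you flag as ``delicate'' --- that per-pair reweighting destroys exact low-rankness --- is not actually an obstacle: Frobenius dominates operator norm, and the Frobenius bound goes through untouched. You also correctly diagnose the logical ordering: the paper's dependency chain is precisely Lemma~\ref{lemma:eigenvalue-converge} $\to$ Lemma~\ref{lemma:consistency} $\to$ Theorem~\ref{thm:consistency-same-H}, so this lemma must be (and in the paper is) proven without consistency.

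Where you diverge from the paper is Stage B, and this is both unnecessary and the one place with a soft gap. Unnecessary: once you have the operator-norm closeness of $\frac1N\tilde\Sigma$ to $\frac{1}{NT}\Lambda F^\T F\Lambda^\T$, Weyl's inequality (equivalently the Courant--Fischer sequential maximization the paper uses) gives convergence of \emph{every} eigenvalue, not just boundedness: the top $r$ eigenvalues of $\frac{1}{NT}\Lambda F^\T F\Lambda^\T$ are those of $\big(\frac{F^\T F}{T}\big)^{1/2}\frac{\Lambda^\T\Lambda}{N}\big(\frac{F^\T F}{T}\big)^{1/2}\xrightarrow{p}\Sigma_F^{1/2}\Sigma_\Lambda\Sigma_F^{1/2}$, whose spectrum equals that of $\Sigma_\Lambda\Sigma_F$. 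So Stage A alone already delivers $\tilde D\xrightarrow{p}D$; routing the identification through Theorem~\ref{thm:consistency-same-H} adds nothing. The gap: Stage B writes $\tilde D\xrightarrow{p}Q\Sigma_FQ^\T$ with $Q=H^\ast\Sigma_\Lambda$, $H^\ast=\plim H$, but the existence of $\plim H$ (equivalently of $\plim\frac1N\tilde\Lambda^\T\Lambda$) is not a consequence of Theorem~\ref{thm:consistency-same-H} alone --- in the paper it is Lemma~\ref{lemma:def-q}, whose proof (following Proposition 1 of \cite{bai2003inferential}) itself relies on the eigenvalue convergence you are trying to establish. To make Stage B non-circular you would need a compactness/subsequence argument: $H$ is tight given $\norm{\tilde D^\I}=O_P(1)$, every convergent subsequence limit is pinned down uniquely by the normalization, the eigen-equation, and distinctness of eigenvalues, hence the whole sequence converges. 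That patch works, but it is extra machinery to reprove what Weyl's inequality gives you for free at the end of Stage A.
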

			\begin{proof}[Proof of Lemma \ref{lemma:eigenvalue-converge}]
				This proof is based on the proof of (R12) on page 1175 in \cite{stock2002forecasting} which shows that the eigenvalues converge on the fully observed panel. Let $\gamma$ denote $N \times 1$ vector and let $\Gamma = \{\gamma| \gamma^\T \gamma/N = 1\}$, $R(\gamma) = \frac{1}{N^2} \gamma^\T \Lp ( \tilde Y \tilde Y^\T) \odot \Big[ \frac{1}{|\tlq_{ij}|} \Big]  \Rp \gamma$, \\ $\tilde R(\gamma) = \frac{1}{N^2} \gamma^\T \Lp \Lp(\Lambda F^\T) \odot W \Rp\Lp(F \Lambda^\T) \odot W^\T \Rp   \odot \Big[ \frac{1}{|\tlq_{ij}|} \Big]  \Rp \gamma $ and $R^\ast(\gamma) = \frac{1}{N^2T} \gamma^\T    \Lambda F^\T F \Lambda^\T \gamma$. We follow similar steps as \cite{stock2002forecasting} and can sequentially show that
				\begin{itemize}
					\item [(R2)] $\sup_{\gamma \in \Gamma} \frac{1}{N^2} \gamma^\T \Lp \Lp (W \odot e) (e^\T \odot W^\T)  \Rp \odot \Big[ \frac{1}{|\tlq_{ij}|} \Big]  \Rp \gamma \xrightarrow{p} 0$
					\item [(R5)] $\sup_{\gamma \in \Gamma} \frac{1}{N^2} \Big|\gamma^\T \Lp \Lp ( (W \odot e)  (F \Lambda^\T) \odot W^\T) \Rp \odot \Big[ \frac{1}{|\tlq_{ij}|} \Big]  \Rp  \gamma \Big| \xrightarrow{p} 0  $
					\item [(R6)] $\sup_{\gamma \in \Gamma} |R(\gamma) - \tilde R(\gamma)| \xrightarrow{p} 0  $ and $\sup_{\gamma \in \Gamma} |R(\gamma) - R^\ast(\gamma)| \xrightarrow{p} 0  $
					
				\end{itemize}
				\begin{proof}[Proof of (R6)]
					We have the decomposition
					\[R(\gamma) - R^\ast(\gamma) = R(\gamma) - \tilde R(\gamma) + \tilde  R(\gamma) - R^\ast(\gamma). \]
					For $R(\gamma) - \tilde R(\gamma)$, we have
					\begin{align*}
						R(\gamma) - \tilde R(\gamma) &= \frac{1}{N^2} \gamma^\T \Lp \Lp  (W \odot e) (e^\T \odot W^\T)  \Rp \odot\Big[ \frac{1}{|\tlq_{ij}|} \Big] \Rp \gamma \\
						& \quad + \frac{2}{N^2} \gamma^\T \Lp \Lp (  (W \odot e) (F \Lambda^\T) \odot W^\T) \Rp \odot \Big[ \frac{1}{|\tlq_{ij}|} \Big] \Rp  \gamma    
					\end{align*}
					and 
					\begin{eqnarray*}
						\sup_{\gamma \in \Gamma} |R(\gamma) - \tilde R(\gamma)| &\leq& \sup_{\gamma \in \Gamma}  \frac{1}{N^2} | \gamma^\T \Lp \Lp  (W \odot e) (e^\T \odot W^\T) \Rp \odot \Big[ \frac{1}{|\tlq_{ij}|} \Big] \Rp \gamma|   \\
						&&+ \sup_{\gamma \in \Gamma} \frac{2}{N^2} |\gamma^\T \Lp \Lp  (W \odot e)( (F \Lambda^\T) \odot W^\T) \Rp \odot \Big[ \frac{1}{|\tlq_{ij}|} \Big] \Rp  \gamma| \rightarrow 0. 
					\end{eqnarray*}
					For $\tilde  R(\gamma) - R^\ast(\gamma) $, we have for any $\gamma \in \Gamma$
					\begin{eqnarray*}
						\tilde  R(\gamma) - R^\ast(\gamma)  &=& \frac{1}{N^2} \sum_{i = 1}^N \sum_{j = 1}^N \gamma_i \gamma_j  \Lambda_i^\T \underbrace{\Lp \frac{1}{|\tlq_{ij}|} \sum_{t \in \tlq_{ij}} F_t F_t^\T - \frac{1}{T} \sum_{t = 1}^T F_t F_t^\T  \Rp }_{\Delta_{F, ij}}   \Lambda_j \\ 
						&\leq& \underbrace{\Lp\frac{1}{N^2} \sum_{i = 1}^N \sum_{j=1}^N \gamma_i^2 \gamma_j^2  \Rp^{1/2}}_{=1 \text{ from } \gamma^\T \gamma/N = 1 }  \Lp\frac{1}{N^2} \sum_{i = 1}^N \sum_{j=1}^N  \Lp \Lambda_i^\T \Delta_{F, ij} \Lambda_j \Rp^2 \Rp^{1/2} \\
						&=& \Lp\frac{1}{N^2} \sum_{i = 1}^N \sum_{j=1}^N  \Lp \Lambda_i^\T \Delta_{F, ij} \Lambda_j \Rp^2 \Rp^{1/2}. 
					\end{eqnarray*}
					
					Note that 
					\begin{align*}
						& \+E \left[\frac{1}{N^2} \sum_{i = 1}^N \sum_{j=1}^N  \Lp \Lambda_i^\T \Delta_{F, ij} \Lambda_j \Rp^2 \right] \leq \frac{1}{N^2} \sum_{i = 1}^N \sum_{j=1}^N  \+E \Lp \Lambda_i^\T \Delta_{F, ij} \Lambda_j \Rp^2 \\ 
						\leq& \frac{1}{N^2} \sum_{i = 1}^N \sum_{j=1}^N  \underbrace{\+E [ \norm{\Lambda_{i}}^2 \norm{\Lambda_{j}}^2 ] }_{\leq M \text{ from Assumption \ref{ass:factor-model}.\ref{ass:loading}}} \+E \norm{\Delta_{F, ij}}^2 \\
						\leq &  \frac{M}{N^2 T} \sum_{i = 1}^N \sum_{j=1}^N \Bigg[ \underbrace{\+E  \norm{\frac{\sqrt{T}}{|\tlq_{ij}|} \sum_{t \in \tlq_{ij}} (F_t F_t^\T - \Sigma_{F})   }^2}_{\leq M \text{ from Assumptions  \ref{ass:factor-model}.\ref{ass:factor} and \ref{ass:obs-equal-weight}.\ref{ass:add-obs}}} + \underbrace{ \+E \norm{\frac{1}{\sqrt{T}} \sum_{t = 1}^T (F_t F_t^\T - \Sigma_{F})  }^2}_{\leq M \text{ from Assumptions  \ref{ass:factor-model}.\ref{ass:factor}} } \Bigg]   = O\left(\frac{1}{T} \right),
					\end{align*}
					where the second inequality follows from the independence between factors and loadings. Then by applying the Markov inequality, we have 
					\[\sup_{\gamma \in \Gamma} |\tilde  R(\gamma) - R^\ast(\gamma)| \xrightarrow{p} 0  \]
					and 
					\[\sup_{\gamma \in \Gamma} | R(\gamma) - R^\ast(\gamma)| \xrightarrow{p} 0. \]

				\end{proof}
				\begin{itemize}
					\item [(R7)] $|\sup_{\gamma \in \Gamma} R(\gamma) - \sup_{\gamma \in \Gamma}\tilde R(\gamma)| \xrightarrow{p} 0  $ and $|\sup_{\gamma \in \Gamma} R(\gamma) - \sup_{\gamma \in \Gamma} R^\ast(\gamma)| \xrightarrow{p} 0  $
					\item [(R8)] $\sup_{\gamma \in \Gamma} R^\ast(\gamma) \xrightarrow{p} d_1$, where $d_1$ is the largest eigenvalue of $\Sigma_{F} \Sigma_{\Lambda}$
					\item [(R9)] $\sup_{\gamma \in \Gamma} R(\gamma) \xrightarrow{p} d_1$
					\item [(R10)] Let $\gamma^\ast = \arg\,\sup_{\gamma \in \Gamma} R(\gamma)$. We have $\tilde R (\gamma^\ast) \xrightarrow{p} d_1$ and  $R^\ast (\gamma^\ast) \xrightarrow{p} d_1$
					
					\item [(R11)] Let $\underline{\tilde \Lambda}_1$ denote the first column of $\tilde \Lambda$ and let $d_{\underline{\Lambda}_1} = \text{sign}(\underline{\tilde \Lambda}_1, \underline{\Lambda}_1)$, meaning $d_{\underline{\Lambda}_1} = 1$ if $\underline{\tilde \Lambda}^\T_1 \underline{\Lambda}_1 \geq 0$ and $d_{\underline{\Lambda}_1} = -1$ if $\underline{\tilde \Lambda}^\T_1 \underline{\Lambda}_1 < 0$. Then $d_{\underline{\Lambda}_1} \underline{\tilde \Lambda}^\T_1 \underline{\Lambda}_1 (\Lambda^\T \Lambda/N)^{-1/2} \xrightarrow{p} l_1^\T$, where $l_1 = (1, 0, \cdots, 0)^\T$. 
					\item [(R12)] Suppose that the $N \times r$ matrix $\tilde \Lambda$ is formed as the $r$ ordered eigenvectors of $ (Y \odot W)(Y^\T \odot W^\T) \odot \left[\frac{1}{|\tlq_{ij}|} \right]$ normalized as $\tilde \Lambda^\T \tilde \Lambda/N = I_r$. Let  $D_{\Lambda} = \text{diag}(\text{sign}(\tilde \Lambda^\T \Lambda))$. Then  $D_{\Lambda} \tilde \Lambda^\T \Lambda (\Lambda^\T \Lambda/N)^{-1/2} \xrightarrow{p} I_r$.  Let $\underline{\tilde \Lambda}_j$ be the $j$-th column in $\tilde{\Lambda}$. By definition, $\underline{\tilde \Lambda}_1 = \sqrt{N} \arg\,\sup_{\gamma \in \Gamma} R(\gamma)$.
					\item [(R13)] For $j = 1, 2, \cdots, r$, $R(\underline{\tilde \Lambda}_j) \xrightarrow{p} d_j$, $\tilde R(\underline{\tilde \Lambda}_j) \xrightarrow{p} d_j$ and $R^\ast(\underline{\tilde \Lambda}_j) \xrightarrow{p} d_j$. 
				\end{itemize}
				\begin{proof}[Proof of (R13)]
					The result for $R(\underline{\tilde \Lambda}_1) \xrightarrow{p} d_1$, $\tilde R(\underline{\tilde \Lambda}_1) \xrightarrow{p} d_1$ and $R^\ast(\underline{\tilde \Lambda}_1) \xrightarrow{p} d_1$ is given in (R9) and (R10). The results for the other columns mimic the steps in (R8)-(R10), for the other principal components, that is, by maximizing $R(\cdot)$ and $R^\ast(\cdot)$ sequentially using orthonormal subspaces of $\Gamma$.
				\end{proof}
				Note that Lemma \ref{lemma:eigenvalue-converge}.1 is a consequence of
				\[ \frac{1}{NT^2} \tilde \Lambda^\T \Lp ( \tilde Y \tilde Y^\T) \odot \Big[ \frac{1}{|\tlq_{ij}|} \Big] \Rp \tilde \Lambda = \text{diag}(R(\underline{\tilde \Lambda}_1), \cdots, R(\underline{\tilde \Lambda}_r)) \xrightarrow{p} \text{diag}(d_1, \cdots d_r)\]
				following from (R13); Lemma \ref{lemma:eigenvalue-converge}.2 follows from
				\begin{align*}
					& \frac{1}{NT^2}  \tilde \Lambda^\T \Lp \Lp \Lp W \odot (\Lambda F^\T) \Rp   \Lp(F \Lambda^\T) \odot W^\T \Rp \Rp \odot\Big[ \frac{1}{|\tlq_{ij}|} \Big] \Rp \tilde \Lambda \\
					= \quad& \text{diag}(\tilde R(\underline{\tilde \Lambda}_1), \cdots, \tilde R(\underline{\tilde \Lambda}_r)) \xrightarrow{p} \text{diag}(d_1, \cdots d_r)    
				\end{align*}
				based on (R13); Lemma \ref{lemma:eigenvalue-converge}.3 holds because of
				\[\frac{1}{NT^2}  \tilde \Lambda^\T \Lp \Lambda F^\T F \Lambda^\T  \Rp \tilde \Lambda = \text{diag}(R^\ast (\underline{\tilde \Lambda}_1), \cdots, R^\ast (\underline{\tilde \Lambda}_r))   \xrightarrow{p} \text{diag}(d_1, \cdots d_r),\]
				which follows from (R13). 
			\end{proof}

			\begin{lemma}\label{lemma:def-q}
				Under Assumptions \ref{ass:obs} and \ref{ass:factor-model}, it holds that
				\begin{enumerate}
					\item $\frac{1}{N} \tilde \Lambda^\T \Lambda \xrightarrow{p} Q $, where $Q$ is invertible, $Q = D^{1/2} \Upsilon \Sigma_F^{-1/2}$, the diagonal entries of \\ $D = \diag(d_1, d_2, \cdots, d_r)$ are the eigenvalues of $\Sigma_F^{1/2} \Sigma_\Lambda \Sigma_F^{1/2}$, and $\Upsilon$ is the corresponding eigenvector matrix such that $\Upsilon^\T \Upsilon = I$.
					\item $H^\I \xrightarrow{p} Q^\T$, where $H = \frac{1}{NT} \tilde D^\I \tilde \Lambda^\T \Lambda F^\T F$.
				\end{enumerate}
			\end{lemma}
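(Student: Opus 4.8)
The plan is to show that $Q_{NT} := \tfrac{1}{N}\tilde\Lambda^\T\Lambda$ converges in probability to the matrix $Q$ in the statement, and that $Q$ is characterized by two limiting identities that I will read off from the eigenvector equation. Note first that $Q_{NT}=O_P(1)$ by Cauchy--Schwarz, since $\tfrac1N\tilde\Lambda^\T\tilde\Lambda=I_r$ and $\tfrac1N\Lambda^\T\Lambda\xrightarrow{p}\Sigma_\Lambda$; hence it suffices to identify the unique limit point. I will work from the eigenvector equation \eqref{eqn:pca-estimate-loadings}, written as $\tfrac1N\tilde\Sigma\tilde\Lambda=\tilde\Lambda\tilde D$ with $\tilde D\xrightarrow{p}D$ by Lemma \ref{lemma:eigenvalue-converge}, together with $\tfrac1T F^\T F\xrightarrow{p}\Sigma_F$ and $\tfrac1N\Lambda^\T\Lambda\xrightarrow{p}\Sigma_\Lambda$ from Assumption \ref{ass:factor-model}.

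The normalization identity comes directly from Lemma \ref{lemma:eigenvalue-converge}.3: factoring its left-hand side as $Q_{NT}\,\tfrac{F^\T F}{T}\,Q_{NT}^\T$ and using $\tfrac{F^\T F}{T}\xrightarrow{p}\Sigma_F$ gives
\begin{equation*}
Q_{NT}\Sigma_F Q_{NT}^\T = D + o_P(1).
\end{equation*}
The eigen-relation comes from premultiplying the eigenvector equation by $\tfrac1N\Lambda^\T$, which yields $\tfrac1{N^2}\Lambda^\T\tilde\Sigma\tilde\Lambda=Q_{NT}^\T\tilde D$. The crucial point is that, inside this quadratic form, the idiosyncratic and cross terms of $\tilde\Sigma=\tilde Y\tilde Y^\T\odot[1/|\tlq_{ij}|]$ are asymptotically negligible and the partially observed signal block $\tfrac{1}{|\tlq_{ij}|}\sum_{t\in\tlq_{ij}}F_tF_t^\T$ may be replaced by $\Sigma_F$ --- precisely the negligibility steps (R2), (R5), (R6) underlying Lemma \ref{lemma:eigenvalue-converge}, now applied with the $O_P(1)$-normalized columns of $\Lambda$ and $\tilde\Lambda$ in place of an arbitrary direction on $\Gamma$. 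This gives $\tfrac1{N^2}\Lambda^\T\tilde\Sigma\tilde\Lambda=\tfrac{\Lambda^\T\Lambda}{N}\Sigma_F Q_{NT}^\T+o_P(1)$, and hence
\begin{equation*}
\Sigma_\Lambda\Sigma_F Q_{NT}^\T = Q_{NT}^\T D + o_P(1).
\end{equation*}

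These two identities identify $Q$. The second says the columns of $Q^\T$ are eigenvectors of $\Sigma_\Lambda\Sigma_F$ with eigenvalues $D$; since the eigenvalues of $\Sigma_\Lambda\Sigma_F$ coincide with those of the symmetric matrix $\Sigma_F^{1/2}\Sigma_\Lambda\Sigma_F^{1/2}$ and are distinct (Assumption \ref{ass:factor-model}.\ref{ass:eigen}), the eigenvector directions are unique up to sign and scale, forcing $Q=C\,\Upsilon\,\Sigma_F^{-1/2}$ for a diagonal $C$, with $\Upsilon$ the orthonormal eigenvector matrix of $\Sigma_F^{1/2}\Sigma_\Lambda\Sigma_F^{1/2}$. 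The first identity then fixes $C^2=D$, and the sign convention built into the construction of $\tilde\Lambda$ (the $D_{\Lambda}$ normalization in Lemma \ref{lemma:eigenvalue-converge}, items (R11)--(R12)) selects $C=D^{1/2}$, giving the invertible matrix $Q=D^{1/2}\Upsilon\Sigma_F^{-1/2}$. Since $Q_{NT}$ is bounded and every limit point satisfies the two now-exact identities, hence equals this unique $Q$, a subsequence argument delivers $\tfrac1N\tilde\Lambda^\T\Lambda\xrightarrow{p}Q$, which proves part 1. For part 2 I would substitute: writing $H=\tilde D^\I\,Q_{NT}\,\tfrac{F^\T F}{T}$ and passing to the limit gives $H\xrightarrow{p}D^\I Q\Sigma_F$, so that $H Q^\T\xrightarrow{p}D^\I(Q\Sigma_F Q^\T)=D^\I D=I_r$ by the normalization identity; therefore $H\xrightarrow{p}(Q^\T)^\I$ and $H^\I\xrightarrow{p}Q^\T$.

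The main obstacle is the derivation of the eigen-relation: one must verify that sandwiching $\tilde\Sigma$ between the true $\Lambda$ and the estimated $\tilde\Lambda$ leaves only the signal contribution, which requires the uniform control of the noise terms and the replacement of the partially observed Gram matrices $\tfrac1{|\tlq_{ij}|}\sum_{t\in\tlq_{ij}}F_tF_t^\T$ by $\Sigma_F$ --- the one place where the missing-data structure genuinely enters and where the bounds of Lemma \ref{lemma:eigenvalue-converge} must be reused in a slightly different configuration. The remaining work is purely linear-algebraic, namely turning the distinct-eigenvalue condition plus the sign normalization into the explicit closed form for $Q$.
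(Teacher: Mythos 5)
Your proposal is correct and follows essentially the same route as the paper: the paper likewise premultiplies the eigen-equation $\tilde\Sigma\tilde\Lambda=\tilde\Lambda\tilde D$ by $\frac{1}{N}\left(\frac{F^\T F}{T}\right)^{1/2}\Lambda^\T$, shows the resulting remainder $d_{NT}$ is $O_P(1/\delta_{NT})$, and then delegates to Proposition 1 of Bai (2003) --- whose content is exactly the eigen-relation plus normalization plus distinct-eigenvalue identification that you spell out, with your normalization identity being the paper's Lemma \ref{lemma:eigenvalue-converge}.3 and your part-2 verification matching the paper's one-line computation. The only loose point is your sourcing of the negligibility step: (R2), (R5), (R6) are sup-over-$\Gamma$ bounds on \emph{quadratic} forms $\gamma^\T A\gamma$, and for the non-symmetric cross term $\left((W\odot e)\left((F\Lambda^\T)\odot W^\T\right)\right)\odot\left[1/|\tlq_{ij}|\right]$ a quadratic-form bound only controls the symmetric part of $A$, so it does not literally bound the bilinear forms $\frac{1}{N^2}\Lambda^\T(\cdot)\tilde\Lambda$ that you need; either rework the underlying Cauchy--Schwarz arguments (which do extend to two different directions) or, as the paper does, cite Lemma \ref{lemma:prep-asy-F}, which supplies precisely these $\tilde\Lambda$-weighted noise-sum bounds.
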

			
			\begin{proof}[Proof of Lemma \ref{lemma:def-q}]
				\texttt{}
				\begin{enumerate}
					\item The proof is similar to the proof of Proposition 1 in \cite{bai2003inferential}. Left multiplying $\tilde \Sigma \tilde \Lambda  = \tilde \Lambda \tilde D$ by $\frac{1}{N} \Lp \frac{F^\T F}{T} \Rp^{1/2} \Lambda$, we obtain 
					\[ \frac{1}{N} \Lp \frac{F^\T F}{T} \Rp^{1/2} \Lambda^\T \tilde \Sigma \tilde \Lambda =    \Lp \frac{F^\T F}{T} \Rp^{1/2} \frac{\Lambda^\T \tilde \Lambda }{N}  \tilde D \]
					and then 
					\[ \Lp \frac{F^\T F}{T} \Rp^{1/2} \frac{\Lambda^\T \Lambda}{N}\Lp \frac{F^\T F}{T} \Rp \frac{\Lambda^\T  \tilde \Lambda}{N}  + d_{NT} =    \Lp \frac{F^\T F}{T} \Rp^{1/2} \frac{\Lambda^\T \tilde \Lambda }{N}  \tilde D  ,\]
					where $d_{NT} =  \frac{1}{N} \Lp \frac{F^\T F}{T} \Rp^{1/2} \Lambda^\T \tilde d_{NT} \tilde \Lambda$ and $\tilde d_{NT}$ has 
					\begin{eqnarray*}
						\tilde d_{NT,ij} &=& \Lambda_i^\T \Lp \frac{1}{|\tlq_{ij}|}  F^\T\text{diag}(W_i \odot W_j) F  -  \frac{1}{T} F^\T F \Rp  \Lambda_j +  \frac{1}{|\tlq_{ij}|} e_i^\T\text{diag}(W_i \odot W_j) F \Lambda_j \\
						&& + \frac{1}{|\tlq_{ij}|} \Lambda_i^\T F^\T\text{diag}(W_i \odot W_j) e_j +  \frac{1}{|\tlq_{ij}|} e_i^\T\text{diag}(W_i \odot W_j) e_j .
					\end{eqnarray*}
					From Assumption \ref{ass:factor-model}.\ref{ass:factor}, $ \frac{1}{|\tlq_{ij}|}  F^\T\text{diag}(W_i \odot W_j) F  -  \frac{1}{T} F^\T F = O_P \Lp \frac{1}{\sqrt{T}} \Rp$ and then it holds that 
					\[\frac{1}{N} \Lambda^\T \tilde d_{NT} = O_P \Lp \frac{1}{\dnt} \Rp \]
					following from Lemma \ref{lemma:prep-asy-F}. The remaining steps to show $\frac{1}{N} \tilde \Lambda^\T \Lambda \xrightarrow{p} Q $ are exactly the same as those in Proposition 1 in \cite{bai2003inferential}. 
					\item Note that 
					\[H =  \frac{1}{NT} \tilde D^\I \tilde \Lambda^\T \Lambda F^\T F \xrightarrow{p} D^\I Q \Sigma_F = D^\I D^{1/2} \Upsilon \Sigma_F^{-1/2} \Sigma_F =  D^{-1/2} \Upsilon \Sigma_F^{1/2} = (Q^\T)^\I. \]
				\end{enumerate}
			\end{proof}
			
			\subsubsection{Proof of Theorem \ref{theorem:asy-normal-equal-weight}.1}
			
			\begin{lemma}\label{lemma:prep-asy-F}
				Suppose Assumptions \ref{ass:obs}, \ref{ass:factor-model} and \ref{ass:mom-clt} hold. Conditional on $S$, we have
				\begin{enumerate}
					\item $\frac{1}{N} \sum_{i=1}^N \tilde \Lambda_i \gamma(i,j)  = O_P \Lp \frac{1}{\sqrt{N\delta_{NT}}} \Rp$
					\item $\frac{1}{N} \sum_{i=1}^N \tilde \Lambda_i \zeta_{ij} = O_P \Lp \frac{1}{\sqrt{T\delta_{NT}}} \Rp$
					\item $\frac{1}{N} \sum_{i=1}^N \tilde \Lambda_i \eta_{ij} = O_P \Lp \frac{1}{\sqrt{T}}\Rp$
					\item $\frac{1}{N} \sum_{i=1}^N \tilde \Lambda_i \xi_{ij} = O_P \Lp \frac{1}{\sqrt{T\delta_{NT}}} \Rp$.
				\end{enumerate}
			\end{lemma}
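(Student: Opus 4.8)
The plan is to handle all four bounds through one decomposition. Writing $\tilde\Lambda_i = H\Lambda_i + (\tilde\Lambda_i - H\Lambda_i)$, each sum splits as
\[
\frac{1}{N}\sum_{i=1}^N\tilde\Lambda_i x_{ij} = H\Lp\frac{1}{N}\sum_{i=1}^N\Lambda_i x_{ij}\Rp + \frac{1}{N}\sum_{i=1}^N(\tilde\Lambda_i - H\Lambda_i)x_{ij},
\]
where $x_{ij}$ stands for $\gamma(i,j)$, $\zeta_{ij}$, $\eta_{ij}$ or $\xi_{ij}$. The remainder piece is common to all four: by Cauchy--Schwarz its norm is at most $\Lp\frac{1}{N}\sum_i\norm{\tilde\Lambda_i - H\Lambda_i}^2\Rp^{1/2}\Lp\frac{1}{N}\sum_i x_{ij}^2\Rp^{1/2}$, and Theorem \ref{thm:consistency-same-H} supplies $\frac{1}{N}\sum_i\norm{\tilde\Lambda_i - H\Lambda_i}^2 = O_P(1/\delta)$. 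So once I control the cross-sectional mean square $\frac{1}{N}\sum_i x_{ij}^2$ of each $x_{ij}$, the remainder is pinned down; the leading piece then uses $H = O_P(1)$ from Lemma \ref{lemma:HF-H-diff} together with the moment and central limit conditions of Assumption \ref{ass:mom-clt}.

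First I would bound the mean squares. For $\gamma(i,j)$, Lemma \ref{lemma:prep-consistency}.1 gives $\sum_i\gamma(i,j)^2\le M$, hence $\frac{1}{N}\sum_i\gamma(i,j)^2 = O_P(1/N)$ and the remainder is $O_P\Lp(N\delta)^{-1/2}\Rp$. For $\zeta_{ij}$, Assumption \ref{ass:factor-model}.\ref{ass:error}.(e) bounds the fourth moment of $|\tlq_{ij}|^{1/2}\zeta_{ij}$, so $\+E[\zeta_{ij}^2] = O(1/|\tlq_{ij}|) = O(1/T)$ using the lower bound $|\tlq_{ij}|\ge\underline{q}\,T$ from Assumption \ref{ass:obs-equal-weight}.\ref{ass:add-obs}; similarly Lemma \ref{lemma:prep-consistency}.2 yields $\+E[\eta_{ij}^2] = O(1/T)$ and $\+E[\xi_{ij}^2] = O(1/T)$. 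Thus $\frac{1}{N}\sum_i x_{ij}^2 = O_P(1/T)$ for $x\in\{\zeta,\eta,\xi\}$, and each of these remainder pieces is $O_P\Lp(T\delta)^{-1/2}\Rp$.

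Next I would handle the leading pieces $H\cdot\frac{1}{N}\sum_i\Lambda_i x_{ij}$. For $\gamma$, the absolute summability $\sum_i|\gamma(i,j)|\le M$ from Assumption \ref{ass:factor-model}.\ref{ass:error}.(c) together with $\+E\norm{\Lambda_i}\le\bar{\Lambda}^{1/4}$ gives, by Markov, $\frac{1}{N}\sum_i\Lambda_i\gamma(i,j) = O_P(1/N)$. For $\zeta$, I would invoke Assumption \ref{ass:mom-clt}.1 with the admissible choice $\phi_{i,st} = \Lambda_i$, which yields $\+E\norm{\sqrt{T/N}\sum_i\Lambda_i\zeta_{ij}}^2\le M$ and hence $\frac{1}{N}\sum_i\Lambda_i\zeta_{ij} = O_P\Lp(NT)^{-1/2}\Rp$; the analogous argument with Assumption \ref{ass:mom-clt}.2 (again $\phi_{it} = \Lambda_i$) controls $\xi$ through $\frac{1}{N}\sum_i\Lambda_i\xi_{ij} = \Lp\frac{1}{N}\sum_i\Lambda_i\frac{1}{|\tlq_{ij}|}\sum_{t\in\tlq_{ij}}F_t^\T e_{it}\Rp\Lambda_j = O_P\Lp(NT)^{-1/2}\Rp$. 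The term $\eta$ is the essential one: here $\frac{1}{N}\sum_i\Lambda_i\eta_{ij} = \frac{1}{N}\sum_i\Lambda_i\Lambda_i^\T\frac{1}{|\tlq_{ij}|}\sum_{t\in\tlq_{ij}}F_t e_{jt}$, and Assumption \ref{ass:mom-clt}.\ref{ass:asy-normal-main-term-thm-loading} tells me its $\sqrt T$-scaled version converges to $\calN(0,\covI_{\Lambda,j})$, so it is $O_P(1/\sqrt T)$. Since $\delta = \min(N,T)$, each leading piece is of order at most the corresponding remainder ($O_P(1/N)\le O_P((N\delta)^{-1/2})$ and $O_P((NT)^{-1/2})\le O_P((T\delta)^{-1/2})$), the sole exception being $\eta$, whose leading $O_P(1/\sqrt T)$ dominates its $O_P((T\delta)^{-1/2})$ remainder. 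Collecting the pieces reproduces the four claimed rates.

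The main obstacle is the joint dependence of the error summands on $j$ through the index set $\tlq_{ij}$, which blocks the usual device of averaging separately over the cross-section and over time. This is exactly why the leading $\zeta$ and $\xi$ pieces cannot be dispatched by an elementary second-moment computation but instead rely on the purpose-built bounds in Assumption \ref{ass:mom-clt}.1--2, where the weights $\phi_{i,st}$ and $\phi_{it}$ are permitted to equal $\Lambda_i$. A secondary subtlety is that the leading $\gamma$ piece must be bounded via the summability of $\gamma(i,j)$ rather than a crude maximal bound on $\norm{\Lambda_i}$, since the latter would only deliver $O_P(N^{-3/4})$ and would not always beat the target rate $O_P((N\delta)^{-1/2})$ when $\delta$ is close to $N$.
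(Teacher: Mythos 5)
Your proof is correct and follows essentially the same route as the paper's: the identical split $\tilde\Lambda_i = H\Lambda_i + (\tilde\Lambda_i - H\Lambda_i)$, Cauchy--Schwarz with the $O_P(1/\delta)$ mean-square bound from Theorem \ref{thm:consistency-same-H} for the remainder, Lemma \ref{lemma:prep-consistency} and Assumption \ref{ass:factor-model}.\ref{ass:error}.(e) for the mean squares, summability of $\gamma(i,j)$ for the $\gamma$ leading piece, and Assumptions \ref{ass:mom-clt}.1--2 with $\phi=\Lambda_i$ for the $\zeta$ and $\xi$ leading pieces. The only cosmetic difference is that for the dominant $\eta$ term you invoke the CLT of Assumption \ref{ass:mom-clt}.\ref{ass:asy-normal-main-term-thm-loading} to get $O_P(1/\sqrt{T})$, while the paper obtains the same rate from the second-moment bound $\+E\norm{\tfrac{1}{\sqrt{|\tlq_{ij}|}}\sum_{t\in\tlq_{ij}}F_t e_{jt}}^2\leq M$; both are valid and the rates collected at the end agree with the lemma.
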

			
			\begin{proof}
				\begin{enumerate}
					\item \textit{Show that $\frac{1}{N} \sum_{i=1}^N \tilde \Lambda_i \gamma(i,j)  = O_P \Lp \frac{1}{\sqrt{N\delta_{NT}}} \Rp$}.\\
					First, we decompose $\frac{1}{N} \sum_{i=1}^N \tilde \Lambda_i \gamma(i,j) $ into
					\begin{eqnarray*}
						\frac{1}{N} \sum_{i=1}^N \tilde \Lambda_i \gamma(i,j) = \frac{1}{N} \sum_{i=1}^N (\tilde \Lambda_i - H \Lambda_i)\gamma(i,j)  + \frac{1}{N} \sum_{i=1}^N H \Lambda_i \gamma(i,j).
					\end{eqnarray*}
					For the second term $\frac{1}{N} \sum_{i=1}^N H \Lambda_i \gamma(i,j)$, we have 
					\[\+E \Ls \norm{\sum_{i=1}^N \Lambda_i \gamma(i,j)}\Rs \leq  \sum_{i=1}^N  \underbrace{\+E[\norm{\Lambda_i} ] }_{\leq \bar \Lambda} \cdot |\gamma(i,j)| \leq \bar \Lambda \sum_{i=1}^N  |\gamma(i,j)|  \leq \bar \Lambda \cdot M \]
					following from the independence of $\Lambda$ and $e$ and  Assumptions \ref{ass:factor-model}.\ref{ass:loading} and \ref{ass:factor-model}.\ref{ass:error}.(c). 
					Together with $H = O_P(1)$, we have $\frac{1}{N}  \sum_{i=1}^N H \Lambda_i \gamma(i,j) =  O_P \Lp \frac{1}{N} \Rp$. Next we consider the first term $\frac{1}{N} \sum_{i=1}^N (\tilde \Lambda_i - H \Lambda_i)\gamma(i,j) $. We conclude that 
					\begin{align*}
						\norm{\frac{1}{N} \sum_{i=1}^N (\tilde \Lambda_i - H \Lambda_i) \gamma(i,j)} &\leq \underbrace{\Lp \frac{1}{N}\sum_{i=1}^N \norm{\tilde \Lambda_i - H \Lambda_i}^2 \Rp^{1/2} }_{O_P \Lp \frac{1}{\dnt} \Rp \text{ from Theorem \ref{thm:consistency-same-H}}}  \frac{1}{\sqrt{N}} \underbrace{\Lp \sum_{i=1}^N \gamma(i,j)^2 \Rp^{1/2}}_{\leq M \text{ from Lemma \ref{lemma:prep-consistency}}} \\  &= O_P \Lp \frac{1}{\sqrt{N\delta_{NT}}} \Rp.
					\end{align*} 
					Hence, 
					\begin{eqnarray*}
						\frac{1}{N} \sum_{i=1}^N \tilde \Lambda_i \gamma(i,j) = \underbrace{\frac{1}{N} \sum_{i=1}^N (\tilde \Lambda_i - H \Lambda_i)\gamma(i,j)}_{ O_P \Lp \frac{1}{\sqrt{N\delta_{NT}}} \Rp}   + \underbrace{\frac{1}{N} \sum_{i=1}^N H \Lambda_i \gamma(i,j) }_{O_P \Lp \frac{1}{N} \Rp}  = O_P \Lp \frac{1}{\sqrt{N\delta_{NT}}} \Rp.
					\end{eqnarray*}
					\item \textit{Show that $\frac{1}{N} \sum_{i=1}^N \tilde \Lambda_i \zeta_{ij} = O_P \Lp \frac{1}{\sqrt{T\delta_{NT}}} \Rp$}. \\ Let us decompose $\frac{1}{N} \sum_{i=1}^N \tilde \Lambda_i \zeta_{ij} $.
					\begin{eqnarray*}
						\frac{1}{N} \sum_{i=1}^N \tilde \Lambda_i \zeta_{ij}  = \frac{1}{N} \sum_{i=1}^N (\tilde \Lambda_i - H \Lambda_i)\zeta_{ij}   + \frac{1}{N} \sum_{i=1}^N H \Lambda_i \zeta_{ij} 
					\end{eqnarray*}
					First, we consider $\frac{1}{N} \sum_{i=1}^N \zeta_{ij}^2$. 
					\begin{eqnarray*}
						\+E \Ls \frac{1}{N} \sum_{i=1}^N \zeta_{ij}^2 \Rs &\leq&  \frac{1}{N} \sum_{i=1}^N \frac{1}{|\tlq_{ij}|} \underbrace{\+E\Ls \frac{1}{\sqrt{|\tlq_{ij}| }  } \sum_{t \in \tlq_{ij}} \Lp e_{it}e_{jt} - \+E[e_{it} e_{jt}] \Rp  \Rs^2 }_{\leq M \text{ from Assumption \ref{ass:factor-model}.\ref{ass:error}.(e)} } = \max_i \frac{1}{|\tlq_{ij}|} \cdot M =  O\Lp \frac{1}{T } \Rp.
					\end{eqnarray*}
					Hence, $\frac{1}{N} \sum_{i=1}^N \zeta_{ij}^2 = O_P \Lp \frac{1}{T } \Rp$. For the first term $\frac{1}{N} \sum_{i=1}^N (\tilde \Lambda_i - H \Lambda_i)\zeta_{ij}$, we obtain 
					\[\norm{\frac{1}{N}  \sum_{i=1}^N (\tilde \Lambda_i - H \Lambda_i)\zeta_{ij} } \leq \underbrace{\Lp \frac{1}{N}\sum_{i=1}^N \norm{\tilde \Lambda_i - H \Lambda_i}^2 \Rp^{1/2}}_{O_P \Lp \frac{1}{\dnt} \Rp \text{ from Theorem \ref{thm:consistency-same-H}} } \underbrace{ \Lp \frac{1}{N} \sum_{i=1}^N \zeta_{ij}^2 \Rp^{1/2} }_{O_P \Lp \frac{1}{\sqrt{T} } \Rp} = O_P \Lp \frac{1}{\sqrt{T\delta_{NT}}} \Rp. \]
					For the second term $\frac{1}{N} \sum_{i=1}^N H \Lambda_i \zeta_{ij} $, let us consider $\frac{1}{N} \sum_{i=1}^N \Lambda_i \zeta_{ij}$. 
					\begin{align*}
						\+E \Ls \norm{\frac{1}{N} \sum_{i=1}^N \Lambda_i \zeta_{ij}}^2 \Rs  =&   \+E \Ls \norm{ \frac{1}{N} \sum_{i=1}^N \Lambda_i \frac{1}{|\tlq_{ij}|} \sum_{t \in \tlq_{ij}} \Lp e_{it}e_{jt} - \+E[e_{it}e_{jt} ] \Rp }^2  \Rs 
						= O\Lp \frac{1}{NT} \Rp,
					\end{align*}
					which follows from Assumption \ref{ass:mom-clt}.1.
					Hence  $\frac{1}{N} \sum_{i=1}^N H \Lambda_i \zeta_{ij} =  O_P \Lp \frac{1}{\sqrt{NT}} \Rp$ and 
					\begin{eqnarray*}
						\frac{1}{N} \sum_{i=1}^N \tilde \Lambda_i \zeta_{ij}  = \underbrace{\frac{1}{N} \sum_{i=1}^N (\tilde \Lambda_i - H \Lambda_i)\zeta_{ij} }_{O_P \Lp \frac{1}{\sqrt{T\delta_{NT}}} \Rp}   + \underbrace{\frac{1}{N} \sum_{i=1}^N H \Lambda_i \zeta_{ij} }_{O_P \Lp \frac{1}{\sqrt{NT}} \Rp} = O_P \Lp \frac{1}{\sqrt{T\delta_{NT}}}\Rp. 
					\end{eqnarray*}
					\item \textit{Show that} $\frac{1}{N} \sum_{i=1}^N \tilde \Lambda_i \eta_{ij} = O_P \Lp \frac{1}{\sqrt{T}}\Rp$. \\ We decompose $\frac{1}{N} \sum_{i=1}^N \tilde \Lambda_i \eta_{ij} $ into two terms:
					\begin{eqnarray*}
						\frac{1}{N} \sum_{i=1}^N \tilde \Lambda_i \eta_{ij}  = \frac{1}{N} \sum_{i=1}^N (\tilde \Lambda_i - H \Lambda_i)\eta_{ij}  + \frac{1}{N} \sum_{i=1}^N H \Lambda_i \eta_{ij}
					\end{eqnarray*}
					Let us first consider  $\frac{1}{N} \sum_{i=1}^N \Lambda_i \eta_{ij}$ in the second term: 
					\begin{align*}
						\norm{\frac{1}{N} \sum_{i=1}^N \Lambda_i \eta_{ij}}^2  &= \norm{ \frac{1}{N} \sum_{i=1}^N \Lambda_i \Lambda_i^\T \frac{1}{|\tlq_{ij}|} \sum_{t \in \tlq_{ij}} F_t e_{jt} }^2 \\
						&\leq \underbrace{\Lp \frac{1}{N} \sum_{i=1}^N  \norm{\Lambda_i}^4 \Rp}_{O_P(1)} \cdot \underbrace{\Lp   \frac{1}{N} \sum_{i=1}^N  \norm{\frac{1}{|\tlq_{ij}|} \sum_{t \in \tlq_{ij}} F_t e_{jt} }^2\Rp}_{O_P \Lp \frac{1}{T} \Rp }   = O_P \Lp \frac{1}{T} \Rp,
					\end{align*}
					where $\frac{1}{N} \sum_{i=1}^N  \norm{\Lambda_i}^4 = O_P(1)$ follows from
					\[\+E \Ls \frac{1}{N} \sum_{i=1}^N  \norm{\Lambda_i}^4 \Rs = \frac{1}{N} \sum_{i=1}^N \+E \Ls  \norm{\Lambda_i}^4 \Rs \leq M\]
					and $ \frac{1}{N} \sum_{i=1}^N  \norm{\frac{1}{|\tlq_{ij}|} \sum_{t \in \tlq_{ij}} F_t e_{jt} }^2 = O_P \Lp \frac{1}{T} \Rp$ holds because of 
					\[\+E \Ls \frac{1}{N} \sum_{i=1}^N  \norm{\frac{1}{|\tlq_{ij}|} \sum_{t \in \tlq_{ij}} F_t e_{jt} }^2  \Rs =  \frac{1}{N} \sum_{i=1}^N \frac{1}{|\tlq_{ij}|} \underbrace{\+E \Ls  \norm{\frac{1}{\sqrt{|\tlq_{ij}| } } \sum_{t \in \tlq_{ij}} F_t e_{jt} }^2 \Rs}_{\leq M} =  O_P \Lp \frac{1}{T} \Rp.  \]
					Since $H =  O_P(1)$, the second term satisfies $\frac{1}{N} \sum_{i=1}^N H \Lambda_i \eta_{ij} = O_P \Lp \frac{1}{\sqrt{T}} \Rp$. 
					Next, we consider the first term 
					\begin{align*}
						\norm{\frac{1}{N}  \sum_{i=1}^N (\tilde \Lambda_i  - H_i \Lambda_i)\eta_{ij} }^2 \leq \underbrace{\Lp \frac{1}{N}\sum_{i=1}^N \norm{\tilde \Lambda_i - H_i \Lambda_i}^2 \Rp}_{O_P \Lp \frac{1}{ \delta } \Rp} \underbrace{\Lp \frac{1}{N} \sum_{i=1}^N \eta_{ij}^2 \Rp}_{ O_P \Lp \frac{1}{T } \Rp }  = O_P \Lp \frac{1}{\sqrt{T\delta_{NT}} } \Rp,
					\end{align*}
					where $\frac{1}{N} \sum_{i=1}^N \eta_{ij}^2 =  O_P \Lp \frac{1}{T } \Rp$ follows from 
					\begin{align*}
						\+E \Ls \frac{1}{N} \sum_{i=1}^N \eta_{ij}^2 \Rs =  \frac{1}{N} \sum_{i=1}^N \frac{1}{|\tlq_{ij}|} \underbrace{\+E\Ls \Lp\frac{1}{\sqrt{|\tlq_{ij}|}} \Lambda_i^\T \sum_{t \in \tlq_{ij}} F_t e_{jt} \Rp^2 \Rs }_{\leq M \text{ from Lemma \ref{lemma:prep-consistency}.2 }}  = O\Lp \frac{1}{T} \Rp.
					\end{align*}
					Putting this together, we conclude
					\begin{eqnarray*}
						\frac{1}{N} \sum_{i=1}^N \tilde \Lambda_i \eta_{ij}  = \underbrace{\frac{1}{N} \sum_{i=1}^N (\tilde \Lambda_i - H \Lambda_i)\eta_{ij} }_{O_P \Lp \frac{1}{\sqrt{T\delta_{NT}} } \Rp }  + \underbrace{\frac{1}{N} \sum_{i=1}^N H \Lambda_i \eta_{ij} }_{O_P \Lp \frac{1}{\sqrt{T}} \Rp } = O_P \Lp \frac{1}{\sqrt{T}} \Rp.
					\end{eqnarray*}
					\item \textit{Show that} $\frac{1}{N} \sum_{i=1}^N \tilde \Lambda_i \xi_{ij} = O_P \Lp \frac{1}{\sqrt{T\delta_{NT}}} \Rp$. \\ We start by decomposing  $\frac{1}{N} \sum_{i=1}^N \tilde \Lambda_i \xi_{ij} $:
					\begin{align*}
						\frac{1}{N} \sum_{i=1}^N \tilde \Lambda_i \xi_{ij}  
						&= \frac{1}{NT} \sum_{i=1}^N \tilde \Lambda_i e_i^\T\text{diag}(W_i \odot W_j)  F \Lambda_j/\tilde q_{ij} \\
						&= \frac{1}{NT} \sum_{i=1}^N (\tilde \Lambda_i - H \Lambda_i)e_i^\T\text{diag}(W_i \odot W_j)  F \Lambda_j /\tilde q_{ij}  \\
						& \quad + \frac{1}{NT} \sum_{i=1}^N H \Lambda_i e_i^\T\text{diag}(W_i \odot W_j) F \Lambda_j/\tilde q_{ij}.
					\end{align*}
					Let us first consider the first term $\frac{1}{NT} \sum_{i=1}^N (\tilde \Lambda_i - H \Lambda_i)e_i^\T\text{diag}(W_i \odot W_j)  F \Lambda_j /\tilde q_{ij}$:  
					\begin{eqnarray*}
						&& \norm{ \frac{1}{NT} \sum_{i=1}^N (\tilde  \Lambda_i - H_i  \Lambda_i )e_i^\T\text{diag}(W_i \odot W_j)  F \Lambda_j/\tilde q_{ij} } \\ &\leq&  \underbrace{\Lp\max_i \frac{1}{\sqrt{|\tlq_{ij}|} } \Rp }_{O_P \Lp \frac{1}{\sqrt{T} }\Rp } \underbrace{ \Lp \frac{1}{N} \sum_{i=1}^N \norm{(\tilde  \Lambda_i  - H  \Lambda_i)}^2 \Rp^{1/2}}_{ O_P \Lp \frac{1}{ \dnt } \Rp  }  \underbrace{\Lp\frac{1}{N} \sum_{i=1}^N \norm{\frac{1}{\sqrt{|\tlq_{ij}|}} \sum_{t \in \tlq_{ij}} F_t e_{it} }^2  \Rp^{1/2}}_{O_P(1)} \underbrace{\norm{\Lambda_j}}_{O_P(1)}   \\
						&=& O_P \Lp \frac{1}{\sqrt{T\delta_{NT}} } \Rp, 
					\end{eqnarray*}
					where $\frac{1}{N} \sum_{i=1}^N \norm{\frac{1}{\sqrt{|\tlq_{ij}|}} \sum_{t \in \tlq_{ij}} F_t e_{it} }^2 = O_P(1)$ follows from 
					\[\+E \Ls \frac{1}{N} \sum_{i=1}^N \norm{\frac{1}{\sqrt{|\tlq_{ij}|}} \sum_{t \in \tlq_{ij}} F_t e_{it} }^2  \Rs =  \frac{1}{N} \sum_{i=1}^N  \underbrace{\+E \Ls  \norm{\frac{1}{\sqrt{|\tlq_{ij}| } } \sum_{t \in \tlq_{ij}} F_t e_{it} }^2 \Rs}_{\leq M} =  O_P \Lp 1\Rp,  \]
					where the first equality holds since $S$ is independent of $F$ and $e$. Let us first consider the second term $\frac{1}{NT} \sum_{i=1}^N H \Lambda_i e_i^\T\text{diag}(W_i \odot W_j) F \Lambda_j/\tilde q_{ij}$:  
					\begin{eqnarray*}
						&& \norm{\frac{1}{NT} \sum_{i=1}^N H  \Lambda_i  e_i^\T\text{diag}(W_i \odot W_j)  F \Lambda_j/\tilde q_{ij}} \\ 
						&\leq& \norm{H} \Lp \max_i \frac{1 }{ \sqrt{N|\tlq_{ij}|}} \Rp \underbrace{ \norm{\frac{1}{\sqrt{N}} \sum_{i=1}^N\frac{1}{\sqrt{|\tlq_{ij}|}} \sum_{t \in \tlq_{ij}} \Lambda_i F_t^\T e_{it} }  }_{O_P(1) \text{ from Assumption \ref{ass:mom-clt}.2}} \norm{\Lambda_j} = O_P \Lp \frac{1}{\sqrt{NT} } \Rp.
					\end{eqnarray*}
					Hence, it holds that
					\begin{align*}
						\frac{1}{N} \sum_{i=1}^N \tilde \Lambda_i \xi_{ij}  
						&= \underbrace{\frac{1}{NT} \sum_{i=1}^N (\tilde \Lambda_i - H \Lambda_i)e_i^\T\text{diag}(W_i \odot W_j)  F \Lambda_j /\tilde q_{ij}   }_{ O_P \Lp \frac{1}{\sqrt{T\delta_{NT}} } \Rp  } \\
						& \quad + \underbrace{\frac{1}{NT} \sum_{i=1}^N H \Lambda_i e_i^\T\text{diag}(W_i \odot W_j) F \Lambda_j/\tilde q_{ij} }_{ O_P \Lp \frac{1}{\sqrt{NT} } \Rp}  =  O_P \Lp \frac{1}{\sqrt{T\delta_{NT}} } \Rp. 
					\end{align*}
					
				\end{enumerate}
			\end{proof}

			\begin{lemma}\label{lemma:HDinvHTinv}
				Under Assumptions \ref{ass:obs} and \ref{ass:factor-model}, it holds that
				\begin{align}
					H^\T \tilde D^\I (H^\T)^{-1}  =  \Big(\frac{\Lambda^\T \Lambda}{N} \Big)^\I \Big(\frac{F^\T F}{T} \Big)^\I + O_P \Big( \frac{1}{\dnt} \Big). \label{eqn:HTH-equi-form}
				\end{align}
			\end{lemma}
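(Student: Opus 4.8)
The plan is to reduce the claim to two algebraic relations for the eigenvalue matrix $\tilde D$ and the sample cross-moments, and then substitute. Introduce the shorthand $A := \frac{1}{N}\tilde\Lambda^\T\Lambda$ and $B := \frac{1}{T}F^\T F$. By Lemmas \ref{lemma:eigenvalue-converge}, \ref{lemma:def-q} and \ref{lemma:HF-H-diff}, the matrices $A$, $B$, $\tilde D$, $H$ and their inverses are all $O_P(1)$ (with $\tilde D\xrightarrow{p}D$ invertible and $B\xrightarrow{p}\Sigma_F$ invertible). From the definition $H=\frac{1}{NT}\tilde D^\I\tilde\Lambda^\T\Lambda F^\T F$ we have the exact identity $H=\tilde D^\I A B$. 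Since $B$ and $\tilde D$ are symmetric this gives $H^\T=BA^\T\tilde D^\I$ and $(H^\T)^\I=\tilde D (A^\T)^\I B^\I$, so after cancelling the middle $\tilde D$ factors,
\[ H^\T\tilde D^\I (H^\T)^\I = B A^\T \tilde D^\I (A^\T)^\I B^\I. \]

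First I would establish the auxiliary rate $\tilde D = A B A^\T + O_P(1/\dnt)$. Left-multiplying the PCA eigen-equation \eqref{eqn:pca-estimate-loadings} by $\frac{1}{N}\tilde\Lambda^\T$ and using $\frac1N\tilde\Lambda^\T\tilde\Lambda=I_r$ yields the exact relation $\tilde D=\frac{1}{N^2}\tilde\Lambda^\T\tilde\Sigma\tilde\Lambda$. Writing $\tilde\Sigma_{ij}=\Lambda_i^\T B\Lambda_j+\tilde d_{NT,ij}$, where $\tilde d_{NT}$ collects exactly the local-average deviation term $\Lambda_i^\T\big(\frac{1}{|\tlq_{ij}|}\sum_{t\in\tlq_{ij}}F_tF_t^\T-B\big)\Lambda_j$ together with the three error terms, the signal part contributes $\frac{1}{N^2}\tilde\Lambda^\T\Lambda B\Lambda^\T\tilde\Lambda = A B A^\T$, while the remainder is $\frac{1}{N^2}\tilde\Lambda^\T\tilde d_{NT}\tilde\Lambda$. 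This remainder is $O_P(1/\dnt)$: substitute $\tilde\Lambda=\Lambda H^\T+R$ with $\frac1N\norm{R}^2=O_P(1/\delta)$ (Theorem \ref{thm:consistency-same-H}), use $H=O_P(1)$, and invoke the bounds on the contracted $\tilde d_{NT}$ already derived in the proof of Lemma \ref{lemma:def-q} (via Lemma \ref{lemma:prep-asy-F}), where the same $\tilde d_{NT}$ is shown to enter $\tilde D$ at order $O_P(1/\dnt)$.

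With this in hand I would substitute $\tilde D^\I=(A B A^\T)^\I+O_P(1/\dnt)=(A^\T)^\I B^\I A^\I+O_P(1/\dnt)$ into the displayed identity. The outer factors are $O_P(1)$, so the error stays $O_P(1/\dnt)$, and the deterministic part telescopes:
\[ B A^\T\big[(A^\T)^\I B^\I A^\I\big](A^\T)^\I B^\I = A^\I (A^\T)^\I B^\I = (A^\T A)^\I B^\I. \]
Finally I would replace $A^\T A$ by $\frac1N\Lambda^\T\Lambda$. Since $\frac1N\tilde\Lambda^\T\tilde\Lambda=I_r$, the matrix $P:=\frac1N\tilde\Lambda\tilde\Lambda^\T$ is the orthogonal projection onto $\mathrm{col}(\tilde\Lambda)$, so $A^\T A=\frac1N\Lambda^\T P\Lambda$. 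Using $\tilde\Lambda=\Lambda H^\T+R$ and $(I_N-P)\tilde\Lambda=0$ gives $(I_N-P)\Lambda=-(I_N-P)R(H^\T)^\I$, hence $\frac1N\Lambda^\T(I_N-P)\Lambda=\frac1N\norm{(I_N-P)\Lambda}^2=O_P(1/\delta)$, i.e. $A^\T A=\frac1N\Lambda^\T\Lambda+O_P(1/\dnt)$. Multiplying by $B^\I=O_P(1)$ yields $H^\T\tilde D^\I(H^\T)^\I=\big(\frac1N\Lambda^\T\Lambda\big)^\I\big(\frac1T F^\T F\big)^\I+O_P(1/\dnt)$, which is \eqref{eqn:HTH-equi-form}.

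The main obstacle is the auxiliary rate $\tilde D=A B A^\T+O_P(1/\dnt)$: Lemma \ref{lemma:eigenvalue-converge} only asserts convergence in probability, so the genuine work is upgrading this to the sharp $O_P(1/\dnt)$ bound on $\frac{1}{N^2}\tilde\Lambda^\T\tilde d_{NT}\tilde\Lambda$, tracking that both the local-average factor deviations and the idiosyncratic cross terms contract to this order once sandwiched between the consistently estimated loadings. Everything after that is bookkeeping with $O_P(1)$-bounded, invertible factors, and the comparison to the right-hand side is made directly, without having to identify the common probability limit $\Sigma_\Lambda^\I\Sigma_F^\I$.
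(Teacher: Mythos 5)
Your proposal is correct, but it takes a genuinely different route through the middle of the argument. Both proofs start from the same exact algebraic cancellation: from $H=\tilde D^\I A B$ one gets $(H^\T)^\I=\tilde D(A^\T)^\I B^\I$, so the two $\tilde D$'s collapse and $H^\T\tilde D^\I(H^\T)^\I=BA^\T\tilde D^\I(A^\T)^\I B^\I$ (the paper writes this as $H^\T(A^\T)^\I B^\I$, which is the same thing). From there the paper substitutes the rate approximation $(A^\T)^\I=H+O_P(1/\dnt)$, reducing the claim to $H^\T H=\big(\tfrac1N\Lambda^\T\Lambda\big)^\I+O_P(1/\dnt)$, which it then proves by a self-consistency argument: it derives $M=M^\I+O_P(1/\dnt)$ for $M=\big(\tfrac1N\Lambda^\T\Lambda\big)H^\T H$ and extracts $M=I_r+O_P(1/\dnt)$. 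You instead keep everything in the observable quantities $A$, $B$, $\tilde D$: you sharpen Lemma \ref{lemma:eigenvalue-converge} to the rate statement $\tilde D=ABA^\T+O_P(1/\dnt)$, telescope, and close with the projection identity $A^\T A=\tfrac1N\Lambda^\T P\Lambda$ together with $(I_N-P)\Lambda=-(I_N-P)R(H^\T)^\I$, which even gives the stronger bound $O_P(1/\delta)$ for that discrepancy. The trade-off is instructive: the paper's route is shorter but leans on two steps it does not fully detail — the $O_P(1/\dnt)$ rate for $H-(A^\T)^\I$ (Lemma \ref{lemma:def-q} only asserts convergence in probability, not a rate) and the matrix square-root extraction from $M=M^\I+O_P(1/\dnt)$, which needs $M$ to converge to a positive definite limit. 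Your route concentrates all the analytic work in the single auxiliary claim $\tilde D=ABA^\T+O_P(1/\dnt)$, and that claim is indeed provable with the tools already in the paper: the remainder $\tfrac{1}{N^2}\tilde\Lambda^\T\tilde d_{NT}\tilde\Lambda$ is controlled exactly by the Cauchy--Schwarz bounds on the averages of $a_j,b_j,c_j,d_j$ and on the factor-deviation term $\Delta_2$ established in the proofs of Lemma \ref{lemma:consistency}, Theorem \ref{thm:consistency-same-H} and Lemma \ref{lemma:prep-asy-F}, each of which is $O_P(1/\sqrt{T})$ or better and hence $O_P(1/\dnt)$. So your argument is more self-contained about where the rate comes from, and it never needs to identify the probability limits $Q$, $\Sigma_F$, $\Sigma_\Lambda$; the paper's is more economical once its asserted rates are granted.
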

			
			\begin{proof}[Proof of Lemma \ref{lemma:HDinvHTinv}]
				From the definition of $H = \tilde D^\I \frac{\tilde \Lambda^\T \Lambda}{N} \frac{F^\T F}{T} $ and $H = (Q^\I)^\T + O_P \Big( \frac{1}{\dnt} \Big) = \Big( \frac{\Lambda^\T \tilde \Lambda}{N} \Big)^\I + O_P \Big( \frac{1}{\dnt} \Big) $, we derive that 
				\begin{align} 
					\nonumber H^\T \tilde D^\I (H^\T)^{-1}  =& H^\T \tilde D^\I \tilde D \Big(\frac{\Lambda^\T \tilde \Lambda}{N} \Big)^\I \Big(\frac{F^\T F}{T} \Big)^\I = H^\T  \Big(\frac{\Lambda^\T \tilde \Lambda}{N} \Big)^\I \Big(\frac{F^\T F}{T} \Big)^\I \\
					\nonumber =& H^\T H \Big(\frac{F^\T F}{T} \Big)^\I + O_P \Big( \frac{1}{\dnt} \Big)  \\
					=&  \Big(\frac{\Lambda^\T \Lambda}{N} \Big)^\I \Big(\frac{F^\T F}{T} \Big)^\I + O_P \Big( \frac{1}{\dnt} \Big), 
				\end{align}
				where the last equality follows from $H^\T H =  \Big(\frac{\Lambda^\T \Lambda}{N} \Big)^\I + O_P \Big( \frac{1}{\dnt} \Big)$. This last statement is a consequence of
				\begin{align*}
					H^\T H  =&   \Big(\frac{\tilde \Lambda^\T  \Lambda}{N} \Big)^\I \Big(\frac{\Lambda^\T \tilde \Lambda}{N} \Big)^\I + O_P \Big( \frac{1}{\dnt} \Big)  \\
					=& \Big(\frac{\Lambda^\T  \Lambda}{N} \Big)^\I \Big(H^\T H  \Big)^\I  \Big(\frac{\Lambda^\T \Lambda}{N} \Big)^\I  + O_P \Big( \frac{1}{\dnt} \Big) .
				\end{align*}
				We multiply both side by $\frac{\Lambda^\T  \Lambda}{N} $
				\begin{align*}
					\Big(\frac{\Lambda^\T  \Lambda}{N} \Big)   H^\T H  =&  \Bigg( \Big(\frac{\Lambda^\T  \Lambda}{N} \Big)   H^\T H \Bigg)^\I  + O_P \Big( \frac{1}{\dnt} \Big)  \\    
				\end{align*}
				and therefore obtain
				\begin{align*}
					\Big(\frac{\Lambda^\T  \Lambda}{N} \Big)   H^\T H =& I_r + O_P \Big( \frac{1}{\dnt} \Big) \\
					\Rightarrow H^\T H =&  \Big(\frac{\Lambda^\T \Lambda}{N} \Big)^\I + O_P \Big( \frac{1}{\dnt} \Big) .
				\end{align*}
			\end{proof}

			\begin{proof}[Proof of Theorem \ref{theorem:asy-normal}.1]
				We have the following decomposition for $ \tilde \Lambda_j - H \Lambda_j$:
				\[\sqrt{T}( \tilde \Lambda_j - H \Lambda_j ) = \sqrt{T}( \tilde \Lambda_j - H_j \Lambda_j ) + \sqrt{T}( H_j - H  ) \Lambda_j.\]
				First, we consider $\sqrt{T}( \tilde \Lambda_j - H_j \Lambda_j )$.
				By Lemma \ref{lemma:prep-asy-F}, the decomposition of $\tilde \Lambda_j - H_j \Lambda_j$ is
				\[\tilde \Lambda_j - H_j \Lambda_j =\tilde D^\I \Big( \underbrace{ \frac{1}{N} \sum_{i=1}^N \tilde \Lambda_i \gamma(i,j)}_{O_P \Lp \frac{1}{\sqrt{N\delta_{NT}} } \Rp}  + \underbrace{\frac{1}{N} \sum_{i=1}^N \tilde \Lambda_i \zeta_{ij}}_{O_P \Lp \frac{1}{\sqrt{T\delta_{NT}} } \Rp}  + \underbrace{\frac{1}{N} \sum_{i=1}^N \tilde \Lambda_i \eta_{ij} }_{O_P \Lp \frac{1}{\sqrt{T}  } \Rp} + \underbrace{\frac{1}{N} \sum_{i=1}^N \tilde \Lambda_i \xi_{ij} }_{O_P \Lp \frac{1}{\sqrt{T\delta_{NT}} } \Rp} \Big).\]
				When $\sqrt{T}/N \rightarrow 0$, the limiting distribution is determined by $\frac{1}{N} \sum_{i=1}^N \tilde \Lambda_i \eta_{ij}$, i.e., 
				\begin{eqnarray*}
					\sqrt{T}( \tilde \Lambda_j - H_j \Lambda_j ) &=& \tilde D^{-1}  \frac{1}{N} \sum_{i=1}^N \sqrt{\frac{T}{|\tlq_{ij}|}} H \Lambda_i \Lambda_i^\T  \frac{1}{\sqrt{|\tlq_{ij}|}} \sum_{t \in \tlq_{ij}} F_t e_{jt} + o_P(1). 
				\end{eqnarray*}
				Assumption \ref{ass:mom-clt}.\ref{ass:asy-normal-main-term-thm-loading} yields
				\[\frac{1}{N} \sum_{i=1}^N \sqrt{\frac{T}{|\tlq_{ij}|}} \Lambda_i \Lambda_i^\T  \frac{1}{\sqrt{|\tlq_{ij}|}} \sum_{t \in \tlq_{ij}} F_t e_{jt} \xrightarrow{d} \calN(0, \covI_{\Lambda,j}).\]
				Lemma \ref{lemma:def-q} implies $H \xrightarrow{p} (Q^\I)^\T$ and Lemma \ref{lemma:eigenvalue-converge} implies $\tilde D^\I \xrightarrow{p} D^\I$. Combined with Slutsky's Theorem, we conclude that
				\begin{align}
					\tilde D^{-1}  \frac{\sqrt{T}}{N} H \sum_{i=1}^N  \Lambda_i  \Lambda_i^\T  \frac{1}{|\tlq_{ij}|} \sum_{t \in \tlq_{ij}} F_t e_{jt} \xrightarrow{d} \calN(0, D^\I (Q^\I)^\T \covI_{\Lambda,j} Q^\I  D^\I).    
				\end{align}
				We provide a consistent estimate for the asymptotic variance $D^\I (Q^\I)^\T \covI_{\Lambda,j} Q^\I  D^\I$ in Lemma \ref{lemma:Lambda-asy-var-main-estimator}. 
				
				Next, we consider $\sqrt{T}( H_j - H  ) \Lambda_j$. Lemma \ref{lemma:HF-H-diff} implies, $H_j - H = O_P \Lp \frac{1}{\sqrt{T}} \Rp$, and therefore $\sqrt{T}( H_j - H  ) \Lambda_j  = O_P(1)$. This term  contributes to the asymptotic distribution of $\tilde \Lambda_j$. Recall the definition, $H = \frac{1}{NT} \tilde D^\I  \tilde \Lambda^\T \Lambda F^\T  F$ and $H_j  = \frac{1}{N} \tilde D^\I \sum_{i = 1}^N \tilde \Lambda_i \Lambda_i^\T \frac{1}{|\tlq_{ij}|} \sum_{t \in \tlq_{ij}} F_t F_t^\T$. We have 
				\begin{align*}
					H_j - H &= \tilde D^\I \cdot \frac{1}{N} \sum_{i = 1}^N  \tilde \Lambda_i \Lambda_i^\T \underbrace{\Lp \frac{1}{|\tlq_{ij}|} \sum_{t \in \tlq_{ij}} F_t F_t^\T - \frac{1}{T} \sum_{t = 1}^T F_t F_t^\T  \Rp }_{\Delta_{F, ij}}   \\
					&= \tilde D^\I \cdot \underbrace{\frac{1}{N} \sum_{i = 1}^N  H \Lambda_i \Lambda_i^\T \Delta_{F, ij}}_{\Delta_{H,1}}  + \tilde D^\I \cdot \underbrace{\frac{1}{N} \sum_{i = 1}^N  (\tilde \Lambda_i - H \Lambda_i) \Lambda_i^\T \Delta_{F, ij}}_{\Delta_{H,2}} .
				\end{align*}
				For the term $\Delta_{H,2}$, we have
				\begin{align*}
					\norm{\Delta_{H,2}} &\leq \underbrace{\Lp \frac{1}{N} \sum_{i=1}^N \norm{\tilde \Lambda_i - H \Lambda_i}^2 \Rp^{1/2}}_{O_P \Lp \frac{1}{\dnt} \Rp }  \Lp \frac{1}{N} \sum_{i=1}^N \norm{\Lambda_i}^2 \norm{\Delta_{F, ij}}^2  \Rp^{1/2},
				\end{align*}
				and the second term $\frac{1}{N} \sum_{i=1}^N \norm{\Lambda_i}^2 \norm{\Delta_{F, ij}}^2 $ satisfies
				\begin{align*}
					\+E \Big[ \frac{1}{N} \sum_{i=1}^N \norm{\Lambda_i}^2 \norm{\Delta_{F, ij}}^2 \Big] &= \frac{1}{N} \sum_{i=1}^N \+E \norm{\Lambda_i}^2  \cdot \+E  \norm{\Delta_{F, ij}}^2 = O \Lp \frac{1}{T} \Rp    .
				\end{align*}
				Hence, it holds that $\norm{\Delta_{H,2}} = O_P \Lp \frac{1}{\sqrt{T\delta }} \Rp$. From Assumption \ref{ass:mom-clt}.\ref{ass:asy-normal-add-term-thm-loading} and Slutsky's theorem, we have 
				\begin{align}
					\nonumber &\tilde D^\I  H  \cdot \frac{\sqrt{T}}{N} \sum_{i = 1}^N \Lambda_i \Lambda_i^\T  \Big( \frac{1}{|\tlq_{ij}|} \sum_{t \in \tlq_{ij}} F_t F_t^\T - \frac{1}{T} \sum_{t = 1}^T F_t F_t^\T \Big) \Lambda_j \\
					\rightarrow& \calN\Lp  0, D^\I (Q^\I)^\T \covII_{\Lambda,j} Q^\I D^\I \Rp   
					\quad \mathcal{G}^t-\text{stably},
				\end{align}
				where $\covII_{\Lambda,j} = h_j(\Lambda_j)$. We provide a consistent estimate for the asymptotic variance $D^\I (Q^\I)^\T \covII_{\Lambda,j} Q^\I D^\I$ in Lemma \ref{lemma:Lambda-var-corrector}.

				Furthermore, $\sqrt{T} (\tilde \Lambda_j - H_j \Lambda_j)$ and $\sqrt{T} (H_j - H) \Lambda_j$ are asymptotically independent because the randomness of $\sqrt{T} (\tilde \Lambda_j - H_j \Lambda_j)$  comes from $ F_t e_{jt}$ while the randomness of $\sqrt{T} (H_j - H) \Lambda_j$ comes from $\frac{1}{|\tlq_{ij}|} \sum_{t \in \tlq_{ij}} F_t F_t^\T - \frac{1}{T} \sum_{t = 1}^T F_t F_t^\T $. Then, we have
				\[\sqrt{T} (\tilde \Lambda_j - H \Lambda_j) \rightarrow  \calN \Big( 0, D^\I (Q^\I)^\T \big[ \covI_{\Lambda,j} + \covII_{\Lambda,j} \big]Q^\I  D^\I \Big) 	\quad \mathcal{G}^t-\text{stably} . \]
				If we multiply $\tilde \Lambda_j - H \Lambda_j$ by $H^\I$ from the left and use the results that $H \xrightarrow{p} (Q^\I)^\T$ from Lemma \ref{lemma:def-q} and $H^\T \tilde D^\I (H^\T)^{-1}  =  \Big(\frac{\Lambda^\T \Lambda}{N} \Big)^\I \Big(\frac{F^\T F}{T} \Big)^\I + O_P \Big( \frac{1}{\dnt} \Big)$ from Lemma \ref{lemma:HDinvHTinv}, we conclude that
				\[\sqrt{T} (H^\I \tilde \Lambda_j - \Lambda_j) \rightarrow \calN \Big( 0, \Sigma_{F}^\I \Sigma_{\Lambda}^\I   \big[ \covI_{\Lambda,j} + \covII_{\Lambda,j} \big]\Sigma_{\Lambda}^\I  \Sigma_{F}^\I  \Big) 	\quad \mathcal{G}^t-\text{stably}, \]
				or equivalently, 
				\[ \sqrt{T}  \Sigma_{\Lambda,j}^{-1/2}  (H^\I \tilde \Lambda_j - \Lambda_j) \xrightarrow{d} \calN \Big( 0,I_r  \Big) \]
				for $ \Sigma_{\Lambda,j} =\Sigma_{F}^\I \Sigma_{\Lambda}^\I   \big[ \covI_{\Lambda,j} + \covII_{\Lambda,j} \big]\Sigma_{\Lambda}^\I  \Sigma_{F}^\I $.
				

			\end{proof}

			\subsubsection{Proof of Theorem \ref{theorem:asy-normal-equal-weight}.2}
			\begin{lemma}\label{lemma:f-est-error-times-f-and-e-adj}
				Under Assumptions \ref{ass:obs}, \ref{ass:factor-model} and \ref{ass:mom-clt}, we have
				\begin{enumerate}
					\item $\frac{1}{N} \sum_{i = 1}^N W_{it}  \Lp \tilde \Lambda_i -  H_i \Lambda_i \Rp  e_{it}  = O_P \Lp \frac{1}{\delta_{NT}} \Rp$
					\item $\frac{1}{N} \sum_{i = 1}^N W_{it}  \Lp \tilde \Lambda_i -  H \Lambda_i \Rp  e_{it}  = O_P \Lp \frac{1}{\delta_{NT}} \Rp$
					\item $\frac{1}{N} \sum_{i = 1}^N \Lp \tilde \Lambda_i - H_i \Lambda_i  \Rp  \Lambda_i^\T = O_P \Lp \frac{1}{\delta_{NT}} \Rp$
					\item $\frac{1}{N} \sum_{i = 1}^N W_{it} \Lp \tilde \Lambda_i - H_i \Lambda_i  \Rp  \Lambda_i^\T = O_P \Lp \frac{1}{\delta_{NT}} \Rp$.
				\end{enumerate}
			\end{lemma}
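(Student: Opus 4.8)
The plan is to substitute the loading expansion into each of the four averages and then interchange the order of summation, so that the extra cross-sectional averaging delivers the faster rate $1/\delta$. A crude bound is not enough here: for statement 3, Cauchy--Schwarz together with Theorem \ref{thm:consistency-same-H} only gives $\big(\frac{1}{N}\sum_i\|\tilde\Lambda_i-H_i\Lambda_i\|^2\big)^{1/2}\big(\frac{1}{N}\sum_i\|\Lambda_i\|^2\big)^{1/2}=O_P(1/\sqrt{\delta})$, which is one order too slow, so genuine cancellation must be exploited. Concretely, I would apply the decomposition of $\tilde\Lambda_j-H_j\Lambda_j$ from the proof of Theorem \ref{thm:consistency-same-H} with $j$ replaced by $i$ and inner index $k$,
\[ \tilde\Lambda_i - H_i\Lambda_i = \tilde D^\I \frac{1}{N}\sum_{k=1}^N \tilde\Lambda_k\big(\gamma(k,i)+\zeta_{ki}+\eta_{ki}+\xi_{ki}\big), \]
and plug this into each statement, producing a double sum such as $\tilde D^\I\frac{1}{N^2}\sum_{i,k}W_{it}e_{it}\,\tilde\Lambda_k(\gamma(k,i)+\zeta_{ki}+\eta_{ki}+\xi_{ki})$ for statement 1, and the analogue with $\Lambda_i^\top$ (and $W_{it}$) replacing $W_{it}e_{it}$ for statements 3 and 4. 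Throughout I would use $\tilde D^\I=O_P(1)$ (Lemma \ref{lemma:eigenvalue-converge}) and $H=O_P(1)$ (Lemma \ref{lemma:HF-H-diff}), replacing $\tilde\Lambda_k$ by $H\Lambda_k$ where convenient and bounding the remainder via Cauchy--Schwarz and the mean-square rate $\frac{1}{N}\sum_k\|\tilde\Lambda_k-H\Lambda_k\|^2=O_P(1/\delta)$.

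The next step is to split each double sum into an off-diagonal part and a diagonal (collision) part. The off-diagonal parts are mean zero after conditioning on $(F,\Lambda,W)$, since each contains an isolated idiosyncratic factor (an $e_{is}$ or $e_{ks}$) that is independent of everything it is paired with; a second-moment computation using the weak-dependence bounds in Assumption \ref{ass:factor-model}.\ref{ass:error} and the moment bounds in Assumption \ref{ass:mom-clt}.1 and \ref{ass:mom-clt}.2 then shows they are $O_P(1/\sqrt{NT})$ or faster, which is $\leq O_P(1/\delta)$. The $\gamma$ term is deterministic in the errors and is handled by the row-sum bounds $\sum_k\gamma(k,i)^2\leq M$ and $\frac{1}{N}\sum_{i,k}\gamma(k,i)^2\leq M$ from Lemma \ref{lemma:prep-consistency}, giving $O_P(1/N)\leq O_P(1/\delta)$.

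The one binding place, and the step I expect to require the most care, is the diagonal collision in statement 1 between the outer $e_{it}$ and the error $e_{is}$ sitting inside $\eta_{ki}=\frac{1}{|\tlq_{ki}|}\sum_{s\in\tlq_{ki}}\Lambda_k^\top F_s e_{is}$. Taking $s=t$ produces $e_{it}^2$, whose expectation does not vanish, so no mean-zero argument can remove it. The decisive observation is that this term carries the internal averaging factor $1/|\tlq_{ki}|$, and since $1/|\tlq_{ki}|\leq 1/(\underline q\,T)$ uniformly by Assumption \ref{ass:obs-equal-weight}.2 while $\frac{1}{N}\sum_k\tilde\Lambda_k\Lambda_k^\top$ and $\frac{1}{N}\sum_i W_{it}e_{it}^2 F_t$ are each $O_P(1)$, the collision is $O_P(1/T)\leq O_P(1/\delta)$. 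The analogous collisions inside $\zeta_{ki}$ and $\xi_{ki}$ involve a triple product of errors and are either mean zero or carry the same $1/T$ suppression. The delicacy is that the indicator $\mathbbm 1(t\in\tlq_{ki})$ couples $k$ and $i$ and blocks a clean factorization, so the $O_P(1/T)$ bound must be obtained by uniform control of $1/|\tlq_{ki}|$ combined with Cauchy--Schwarz on the surviving cross-sectional averages.

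Finally, the remaining statements reduce to the same machinery. Statement 2 follows from statement 1 by writing $\tilde\Lambda_i-H\Lambda_i=(\tilde\Lambda_i-H_i\Lambda_i)+(H_i-H)\Lambda_i$ and noting that $\frac{1}{N}\sum_i W_{it}(H_i-H)\Lambda_i e_{it}$ is mean zero given $(F,\Lambda,W)$ with variance controlled by $\frac{1}{N}\sum_i\|(H_i-H)\Lambda_i\|^2=O_P(1/T)$ from Equation \eqref{eqn:squared-difference-Hj-H}, hence $O_P(1/\sqrt{NT})\leq O_P(1/\delta)$. For statements 3 and 4 the outer $\Lambda_i^\top$ collides with the $\Lambda_i$ already present inside $\xi_{ki}$, producing $\Lambda_i\Lambda_i^\top$ with nonzero mean $\Sigma_\Lambda$; but the paired $e_{ks}$ stays mean zero, so this contribution again averages to $O_P(1/\sqrt{NT})$, while the collisions inside $\eta_{ki}$ and $\zeta_{ki}$ are suppressed by the internal $1/T$ average exactly as in statement 1. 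Collecting the $O_P(1/T)$, $O_P(1/N)$, and $O_P(1/\sqrt{NT})$ contributions then yields the claimed $O_P(1/\delta)$ in all four cases.
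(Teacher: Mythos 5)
Your overall architecture matches the paper's: substitute the expansion $\tilde\Lambda_i - H_i\Lambda_i = \tilde D^\I\frac{1}{N}\sum_{k}\tilde\Lambda_k(\gamma(k,i)+\zeta_{ki}+\eta_{ki}+\xi_{ki})$, split $\tilde\Lambda_k$ into $H\Lambda_k$ plus a remainder controlled by Cauchy--Schwarz and Theorem \ref{thm:consistency-same-H}, and recognize that the binding term is the one where the outer $e_{it}$ meets the errors inside $\eta_{ki}$, which is tamed by the internal $1/|\tlq_{ki}|\leq 1/(\underline q T)$ averaging. That collision insight is exactly right, and your $O_P(1/T)$ bound for it, though cruder than the paper's $O_P(1/\sqrt{NT})$, suffices since $\delta=\min(N,T)$. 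However, there is a genuine gap: the lemma is stated under the \emph{general} Assumptions \ref{ass:factor-model} and \ref{ass:mom-clt}, where errors are weakly serially and cross-sectionally correlated, so your repeated claim that the off-diagonal parts are ``mean zero after conditioning on $(F,\Lambda,W)$'' is false. For $s\neq t$ one has $\+E[e_{is}e_{it}]=\gamma_{st,i}\neq 0$ in general, and for $k\neq i$ one has $\+E[e_{ks}e_{it}]=\tau_{ki,st}\neq 0$; your $1/T$-suppression argument is applied only to the exact diagonal $s=t$, so all these nonzero means in the $\eta_{ki}$ and $\xi_{ki}$ terms are left uncontrolled. The paper's fix is to isolate the \emph{entire} conditional-mean part $\sum_{s\in\tlq_{ki}}F_s\+E[e_{is}e_{it}]$ and bound it via the absolute-summability conditions in Assumption \ref{ass:factor-model}.\ref{ass:error}(b)--(d) combined with Cauchy--Schwarz (this is the $\text{\RNum{3}}_{2,1}$ computation), and to bound the centered remainders in $L^1$ via Assumptions \ref{ass:mom-clt}.1--2 without ever invoking mean-zero-ness (note that even $\+E[\zeta_{ki}e_{it}]$ is a third moment that need not vanish).

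A second gap concerns statement 2. You treat $(H_i-H)$ as deterministic given $(F,\Lambda,W)$ and run a conditional variance calculation, but $H_i-H$ involves $\tilde D^\I$ and $\tilde\Lambda$, both of which depend on the errors $e$, so the conditioning does not deliver what you need; moreover, with cross-sectionally correlated errors the variance is not a sum of individual variances, and Assumption \ref{ass:factor-model}.\ref{ass:factor-error} permits weak factor--error dependence so even the centering is not automatic. The paper resolves this term by first passing to population quantities (at the cost of a Cauchy--Schwarz remainder of order $O_P(1/\sqrt{T\delta})$) and then invoking the high-level moment bound in Assumption \ref{ass:mom-clt}.\ref{ass:add-mom-three-sum-lam-err}, which exists precisely because the entangled dependence structure blocks the clean conditional argument you propose; your write-up never uses that assumption. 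In short: your skeleton and the collision analysis are correct and would yield the result under the simplified i.i.d. model of Assumption \ref{ass:simple-factor-model}, but as written the proof does not go through under the general dependence structure the lemma actually covers.
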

			
			\begin{proof}[Proof of Lemma \ref{lemma:f-est-error-times-f-and-e-adj}]
				\begin{enumerate}
					\item $\frac{1}{N} \sum_{i = 1}^N W_{it} \Lp \tilde \Lambda_i -  H_i \Lambda_i \Rp  e_{it}$ has the decomposition 
					\begin{eqnarray*}
						\frac{1}{N} \sum_{i = 1}^N W_{it} \Lp \tilde \Lambda_i - H_i \Lambda_i \Rp e_{it} &=& \tilde D^\I \Big[ \underbrace{ \frac{1}{N^2} \sum_{i = 1}^N \sum_{l = 1}^N W_{it} \tilde \Lambda_l \gamma(l,i) e_{it} }_{\text{\RNum{1}} }  + \underbrace{\frac{1}{N^2} \sum_{i = 1}^N \sum_{l = 1}^N W_{it}  \tilde \Lambda_l \zeta_{li} e_{it}}_{\text{\RNum{2}}  }  \\
						&& + \underbrace{\frac{1}{N^2} \sum_{i = 1}^N  \sum_{l = 1}^N W_{it}  \tilde \Lambda_l \eta_{li} e_{it} }_{\text{\RNum{3}}}  + \underbrace{\frac{1}{N^2} \sum_{i = 1}^N  \sum_{l = 1}^N  W_{it} \tilde \Lambda_l \xi_{li} e_{it} }_{\text{\RNum{4}}}   \Big]
					\end{eqnarray*}
					We decompose the term $\text{\RNum{1}} $ further into 
					\begin{align*}
						\text{\RNum{1}} &= \underbrace{\frac{1}{N^2} \sum_{i = 1}^N \sum_{l = 1}^N  W_{it}(\tilde \Lambda_l - H \Lambda_l) \gamma(l,i) e_{it}}_{\text{\RNum{1}}_1}  + H \cdot \underbrace{\frac{1}{N^2} \sum_{i = 1}^N \sum_{l = 1}^N  W_{it} \Lambda_l \gamma(l,i) e_{it}}_{\text{\RNum{1}}_2}.
					\end{align*}
					The first term $\text{\RNum{1}}_1$ satisfies
					\begin{align*}
						\norm{\text{\RNum{1}}_1} &\leq \frac{1}{\sqrt{N}} \underbrace{\Big( \frac{1}{N} \sum_{l = 1}^N \norm{\tilde \Lambda_l - H \Lambda_l}^2 \Big)^{1/2}}_{O_P \Lp \frac{1}{\dnt} \Rp}     \Big( \underbrace{\frac{1}{N}\sum_{i = 1}^N \sum_{l = 1}^N W_{it} |\gamma(l,i)|^2}_{\leq M\text{ from Lemma \ref{lemma:prep-consistency}.1}}  \cdot \underbrace{\frac{1}{N}\sum_{i = 1}^N  e_{it}^2}_{O_P(1)}  \Big)^{1/2} \\
						&= O_P \Lp \frac{1}{\sqrt{N\delta_{NT}}} \Rp
					\end{align*}
					The moment of the second term $\text{\RNum{1}}_2$ has the following bound
					\begin{align*}
						\+E[\norm{\text{\RNum{1}}_2}] &\leq  \frac{1}{N^2} \sum_{i = 1}^N \sum_{l = 1}^N |\gamma(l,i)| \underbrace{\+E \Big[W_{it} \norm{\Lambda_l }  \Big]}_{\leq \bar \Lambda}  \underbrace{\+E[| e_{it}|]}_{\leq M}   = O\Lp \frac{1}{N } \Rp.
					\end{align*}
					Hence, it holds that $\text{\RNum{1}} = O_P \Lp \frac{1}{\sqrt{N\delta_{NT}}} \Rp + O_P \Lp \frac{1}{N } \Rp= O_P \Lp \frac{1}{\sqrt{N\delta_{NT}}} \Rp$. We also decompose the term $\text{\RNum{2}} $ into two further terms:
					\begin{align*}
						\text{\RNum{2}} &= \underbrace{\frac{1}{N^2} \sum_{i = 1}^N \sum_{l = 1}^N  W_{it} (\tilde \Lambda_l - H \Lambda_l) \zeta_{li} e_{it}}_{\text{\RNum{2}}_1}  + H \cdot \underbrace{\frac{1}{N^2} \sum_{i = 1}^N \sum_{l = 1}^N  W_{it} \Lambda_l \zeta_{li} e_{it}}_{\text{\RNum{2}}_2}  .    
					\end{align*}
					For the second term $\text{\RNum{2}}_2$, we have 
					\begin{align*}
						\text{\RNum{2}}_2 &=  \frac{1}{N}  \sum_{i = 1}^N \underbrace{\Ls \frac{1}{N} \sum_{l = 1}^N  W_{it} \Lambda_l \Big[ \frac{1}{|\tlq_{il}|} \sum_{s \in \tlq_{il}} e_{is} e_{ls} - \+E[e_{is} e_{ls}] \Big] \Rs}_{z_i}   e_{it}.   
					\end{align*}
					Assumption \ref{ass:mom-clt}.1 implies $\+E \norm{z_i}^2= O\Lp \frac{1}{NT} \Rp$. Then, it holds 
					\[\+E \big[ \norm{\text{\RNum{2}}_2}  \big] \leq \frac{1}{N} \sum_{i = 1}^N \+E \norm{z_i e_{it}} \leq \frac{1}{N} \sum_{i = 1}^N (\+E \big[ \norm{z_i}^2  \big] \+E e_{it}^2)^{1/2}  = O\Lp \frac{1}{\sqrt{NT}} \Rp. \]
					Hence, we obtain $\text{\RNum{2}}_2 = O_P \Lp \frac{1}{\sqrt{NT}} \Rp$. For the first term $\text{\RNum{2}}_1$, we have
					\begin{align*}
						\norm{\text{\RNum{2}}_1} &\leq \underbrace{\Lp \frac{1}{N} \sum_{l = 1}^N \norm{\tilde \Lambda_l - H \Lambda_l}^2  \Rp^{1/2}}_{O_P \Lp \frac{1}{ \dnt} \Rp }  \underbrace{\Lp \frac{1}{N} \sum_{l = 1}^N \Big( \frac{1}{N} \sum_{i = 1}^N W_{it} \zeta_{li} e_{it}  \Big)^2  \Rp^{1/2}}_{O_P \Lp \frac{1}{ \sqrt{T}} \Rp} , 
					\end{align*}
					where the second term is $O_P \Lp \frac{1}{ \sqrt{T}} \Rp$ following from
					\begin{align*}
						\frac{1}{N} \sum_{l = 1}^N \Big( \frac{1}{N} \sum_{i = 1}^N W_{it} \zeta_{li} e_{it}  \Big)^2  \leq& \frac{1}{N} \sum_{l = 1}^N \Big( \frac{1}{N} \sum_{i = 1}^N W_{it}  \zeta_{li}^2    \Big) \cdot \Big( \frac{1}{N} \sum_{i = 1}^N  e_{it}^2  \Big) \\
						\leq& \underbrace{ \frac{1}{N} \sum_{l = 1}^N \Big( \frac{1}{N} \sum_{i = 1}^N \zeta_{li}^2    \Big) \cdot \Big( \frac{1}{N} \sum_{i = 1}^N  e_{it}^2  \Big)}_{\substack{ =O_P \Lp \frac{1}{ T} \Rp \text{ follows from } \+E\Ls \frac{1}{N} \sum_{l = 1}^N \Big( \frac{1}{N} \sum_{i = 1}^N \zeta_{li}^2    \Big) \cdot \Big( \frac{1}{N} \sum_{i = 1}^N  e_{it}^2  \Big)\Rs  \\ =  \frac{1}{N^3} \sum_{l = 1}^N  \sum_{i = 1}^N   \sum_{j = 1}^N \+E[\zeta_{li}^2 e_{jt}^2 ] \\ \leq  \frac{1}{N^3} \sum_{l = 1}^N  \sum_{i = 1}^N   \sum_{j = 1}^N (\+E[\zeta_{li}^4] \+E[ e_{jt}^4 ])^{1/2}
								=  O\Lp \frac{1}{ T} \Rp }  } 
					\end{align*}
					Hence, we conclude $\text{\RNum{2}} = O_P \Lp \frac{1}{\sqrt{T\delta_{NT}}} \Rp$. For the third term $\text{\RNum{3}}$, we have the decomposition 
					\begin{align*}
						\text{\RNum{3}} &= \underbrace{ \frac{1}{N^2} \sum_{i = 1}^N  \sum_{l = 1}^N W_{it}  (\tilde \Lambda_l - H \Lambda_l)\eta_{li} e_{it}}_{\text{\RNum{3}}_1  }  + H \cdot \underbrace{\frac{1}{N^2} \sum_{i = 1}^N  \sum_{l = 1}^N W_{it}  \Lambda_l \eta_{li} e_{it}}_{\text{\RNum{3}}_2}     .
					\end{align*}
					For the first term $\text{\RNum{3}}_1 $, we have 
					\begin{align*}
						\norm{\text{\RNum{3}}_1} &\leq     \underbrace{\Lp \frac{1}{N} \sum_{l = 1}^N \norm{\tilde \Lambda_l - H \Lambda_l}^2  \Rp^{1/2}}_{O_P \Lp \frac{1}{ \dnt} \Rp } \Lp \frac{1}{N} \sum_{l = 1}^N \Big( \frac{1}{N} \sum_{i = 1}^N W_{it} \eta_{li} e_{it}  \Big)^2  \Rp^{1/2}, 
					\end{align*}
					and the second term $\frac{1}{N} \sum_{l = 1}^N \Big( \frac{1}{N} \sum_{i = 1}^N W_{it} \eta_{li} e_{it}  \Big)^2 $ satisfies
					\begin{align*}
						\frac{1}{N} \sum_{l = 1}^N \Big( \frac{1}{N} \sum_{i = 1}^N W_{it} \eta_{li} e_{it}  \Big)^2 \leq&\frac{1}{N} \sum_{l = 1}^N  \Big( \frac{1}{N} \sum_{i = 1}^N \eta_{li}^2   \Big) \cdot   \Big( \frac{1}{N} \sum_{i = 1}^N W_{it}   \Big)\\
						\leq& \frac{1}{N} \sum_{l = 1}^N \Big(\frac{\norm{\Lambda_l}^2 }{N} \sum_{i = 1}^N \norm{\frac{1}{|\tlq_{li}|} \sum_{s \in \tlq_{li}}  F_s e_{is}   }^2  \Big) \Big( \frac{1}{N} \sum_{i = 1}^N W_{it}   \Big)  =  O_P \Lp \frac{1}{ T} \Rp
					\end{align*}
					following from
					\begin{align*}
						& \+E \Ls \frac{1}{N} \sum_{l = 1}^N \Big(\frac{\norm{\Lambda_l}^2 }{N} \sum_{i = 1}^N \norm{\frac{1}{|\tlq_{li}|} \sum_{s \in \tlq_{li}}  F_s e_{is}   }^2  \Big) \Big( \frac{1}{N} \sum_{i = 1}^N W_{it} e_{it}^2   \Big)   \Rs   \\
						=&  \frac{1}{N} \sum_{l = 1}^N  \+E[\norm{\Lambda_l}^2  ] \frac{1}{N^2} \sum_{i = 1}^N \sum_{j=1}^N \+E \Ls  \norm{\frac{1}{|\tlq_{li}|} \sum_{s \in \tlq_{li}}  F_s e_{is}   }^2 W_{jt} e_{jt}^2    \Rs \\
						\leq& \frac{1}{N} \sum_{l = 1}^N  \+E[\norm{\Lambda_l}^2] \frac{1}{N^2} \sum_{i = 1}^N \sum_{j=1}^N  \bigg( \underbrace{\+E \Ls  \norm{\frac{1}{|\tlq_{li}|} \sum_{s \in \tlq_{li}}  F_s e_{is}   }^4  \Rs }_{\leq \frac{M}{|\tlq_{li}|^2}}   \+E \Ls   e_{jt}^4  \Rs \bigg)^{1/2} \\
						\leq& \cdot \max_{li} \frac{1}{|\tlq_{li}|} \cdot \bar{\Lambda} = O\Lp \frac{1}{ T} \Rp.
					\end{align*}
					Hence, we obtain the rate convergence rate $\text{\RNum{3}}_1 = O_P \Lp \frac{1}{\sqrt{T\delta_{NT}} } \Rp$. Next, let us consider $\text{\RNum{3}}_2$: 
					\begin{align*}
						\text{\RNum{3}}_2 &= \frac{1}{N^2}  \sum_{l = 1}^N \Lambda_l \Lambda_l^\T  \sum_{i = 1}^N W_{it}   \frac{1}{|\tlq_{li}|} \sum_{s \in \tlq_{li}}  F_s e_{is} e_{it} \\
						&= \underbrace{\frac{1}{N^2}  \sum_{l = 1}^N \Lambda_l \Lambda_l^\T  \sum_{i = 1}^N W_{it}   \frac{1}{|\tlq_{li}|} \sum_{s \in \tlq_{li}}  F_s \+E[e_{is} e_{it}] }_{\text{\RNum{3}}_{2,1}} \\
						&\quad + \underbrace{\frac{1}{N^2}  \sum_{l = 1}^N \Lambda_l \Lambda_l^\T  \sum_{i = 1}^N W_{it}   \frac{1}{|\tlq_{li}|} \sum_{s \in \tlq_{li}}  F_s \big( e_{is} e_{it} - \+E[e_{is} e_{it}]  \big)}_{\text{\RNum{3}}_{2,2}}\\
					\end{align*}
					The first term $\text{\RNum{3}}_{2,1}$ in the above decomposition satisfies 
					\begin{align*}
						\norm{\text{\RNum{3}}_{2,1}}^2 =& \Bigg(\frac{1}{N} \sum_{l = 1}^N \norm{\Lambda_l}^4 \Bigg) \Bigg(\frac{1}{N} \sum_{l = 1}^N \norm{\frac{1}{N} \sum_{i = 1}^N    \frac{W_{it}}{|\tlq_{li}|} \sum_{s \in \tlq_{li}}  F_s \+E[e_{is} e_{it}]  }^2\Bigg) \\
						\leq& \underbrace{\Bigg(\frac{1}{N} \sum_{l = 1}^N \norm{\Lambda_l}^4 \Bigg)  }_{O_P(1)} \Bigg(\frac{1}{N} \sum_{l = 1}^N \bigg(\frac{1}{NT} \sum_{i = 1}^N \sum_{s \in \tlq_{li}}  \norm{ \frac{T W_{it}}{|\tlq_{li}|} F_s}^2    \bigg) \bigg( \frac{1}{NT} \sum_{i = 1}^N \sum_{s \in \tlq_{li}}  (\+E[e_{is} e_{it}] )^2\bigg) \Bigg) .
					\end{align*}
					Using Assumption \ref{ass:factor-model}.3(d) we conclude that 
					\begin{align*}
						\sum_{i = 1}^N \sum_{s \in \tlq_{li}}  (\+E[e_{is} e_{it}] )^2 \leq \sum_{i = 1}^N \sum_{s \in \tlq_{li}}  |\+E[e_{is} e_{it}]| \leq  \sum_{i = 1}^N \sum_{t = 1}^T  |\+E[e_{is} e_{it}]| \leq M.
					\end{align*}
					Moreover, it holds that
					\begin{align*}
						\+E \Bigg[\frac{1}{N} \sum_{l = 1}^N \bigg(\frac{1}{NT} \sum_{i = 1}^N \sum_{s \in \tlq_{li}}  \norm{ \frac{T W_{it}}{|\tlq_{li}|} F_s}^2    \bigg)   \Bigg] =& \frac{T}{N^2 } \sum_{l = 1}^N \sum_{i = 1}^N \frac{1}{|\tlq_{li}|^2} \sum_{s \in \tlq_{li}}  \+E [W_{it}] \+E  \norm{ F_s}^2  \leq M.
					\end{align*}
					Then, it holds that $\norm{\text{\RNum{3}}_{2,1}} = O_P \Big(\frac{1}{\sqrt{NT}}\big)$. For $\text{\RNum{3}}_{2,2}$, we have 
					\begin{align*}
						\norm{\text{\RNum{3}}_{2,2}}^2 =& \Bigg(\frac{1}{N} \sum_{l = 1}^N \norm{\Lambda_l}^4 \Bigg) \Bigg(\frac{1}{N} \sum_{l = 1}^N \norm{\frac{1}{N} \sum_{i = 1}^N    \frac{W_{it}}{|\tlq_{li}|} \sum_{s \in \tlq_{li}}  F_s (e_{is} e_{it} - \+E[e_{is} e_{it}])  }^2\Bigg).
					\end{align*}
					Assumption \ref{ass:mom-clt}.1 yields that
					\begin{align*}
						& \+E \Bigg[ \frac{1}{N} \sum_{l = 1}^N \norm{\frac{1}{N} \sum_{i = 1}^N    \frac{W_{it}}{|\tlq_{li}|} \sum_{s \in \tlq_{li}}  F_s (e_{is} e_{it} - \+E[e_{is} e_{it}])  }^2 \Bigg] \\
						=&  \frac{1}{N} \sum_{l = 1}^N  \+E \Bigg[ \norm{\frac{1}{N} \sum_{i = 1}^N    \frac{W_{it}}{|\tlq_{li}|} \sum_{s \in \tlq_{li}}  F_s (e_{is} e_{it} - \+E[e_{is} e_{it}])  }^2\Bigg] \leq \frac{M}{NT}.
					\end{align*}
					
					Hence, we obtain the overall rate for the third term $\text{\RNum{3}} = O_P \Lp \frac{1}{\sqrt{T\delta_{NT}}} \Rp$. The rate for the fourth term $\text{\RNum{4}} = O_P \Lp \frac{1}{\sqrt{T\delta_{NT}}} \Rp$ can be shown  similarly.

					\item $\frac{1}{N} \sum_{i = 1}^N W_{it}  \Lp \tilde \Lambda_i -  H \Lambda_i \Rp  e_{it}  $ has the decomposition
					\begin{eqnarray*}
						\frac{1}{N} \sum_{i = 1}^N W_{it}   \Lp \tilde \Lambda_i -  H \Lambda_i \Rp  e_{it} &=& \underbrace{\frac{1}{N} \sum_{i = 1}^N W_{it}   \Lp \tilde \Lambda_i -  H_i \Lambda_i \Rp  e_{it} }_{=O_P \Lp \frac{1}{\delta_{NT}} \Rp \text{ from Lemma \ref{lemma:f-est-error-times-f-and-e-adj}.1}} \\
						&& + \underbrace{\frac{1}{N} \sum_{i = 1}^N W_{it}   (H_i - H) \Lambda_i e_{it}}_{\Delta}.
					\end{eqnarray*}
					The second term $\Delta$ satisfies 
					\begin{align*}
						\norm{\Delta}^2 &=  \norm{ \frac{1}{N^2}  \sum_{l = 1}^N \Lambda_l  \Lambda_l^\T  \Bigg[ \sum_{i =1}^N  W_{it}  \Lp \frac{1}{T} \sum_{s=1}^T F_s F_s^\T - \frac{1}{|\tlq_{li}|} \sum_{s \in \tlq_{li}} F_s F_s^\T \Rp \Lambda_i e_{it} \Bigg] }^2 \\
						&\leq \Bigg( \underbrace{\frac{1}{N} \sum_{l = 1}^N \norm{\Lambda_l }^4}_{O_P(1)} \Bigg) \Bigg( \frac{1}{N} \sum_{l = 1}^N  \Big\lVert\underbrace{\frac{1}{N} \sum_{i =1}^N  W_{it}  \Lp \frac{1}{T} \sum_{s=1}^T F_s F_s^\T - \frac{1}{|\tlq_{li}|} \sum_{s \in \tlq_{li}} F_s F_s^\T \Rp \Lambda_i e_{it}}_{z_l} \Big\rVert^2 \Bigg).
					\end{align*}
					Assumption \ref{ass:mom-clt}.\ref{ass:add-mom-three-sum-lam-err} implies $ \+E[\norm{\sqrt{NT} z_l}^2] \leq M$ and thus
					\[\+E \Big[ \frac{1}{N} \sum_{l = 1}^N \norm{z_l}^2 \Big] = \frac{1}{N}\sum_{l = 1}^N \+E[\norm{z_l}^2] \leq O \Big(\frac{1}{NT} \Big) \]
					yielding $\Delta = O_P \Big(\frac{1}{\sqrt{NT}} \Big)$. 
					Hence, we conclude that 
					\begin{align*}
						\frac{1}{N} \sum_{i = 1}^N W_{it}   \Lp \tilde \Lambda_i -  H \Lambda_i \Rp  e_{it} &=    O_P \Lp \frac{1}{\delta_{NT}} \Rp +  O_P \Lp \frac{1}{\sqrt{NT}}  \Rp = O_P \Lp \frac{1}{\delta_{NT}} \Rp.
					\end{align*}
					\item $\frac{1}{N} \sum_{i = 1}^N \Lp \tilde \Lambda_i - H_i \Lambda_i  \Rp  \Lambda_i^\T $ has the decomposition
					\begin{eqnarray*}
						\frac{1}{N} \sum_{i = 1}^N \Lp \tilde \Lambda_i - H_i \Lambda_i \Rp \Lambda_i^\T &=& \tilde D^\I \Big[ \underbrace{ \frac{1}{N^2} \sum_{i = 1}^N \sum_{l = 1}^N \tilde \Lambda_l \Lambda_i^\T \gamma(l,i)  }_{\text{\RNum{1}} }  + \underbrace{\frac{1}{N^2} \sum_{i = 1}^N \sum_{l = 1}^N  \tilde \Lambda_l \Lambda_i^\T \zeta_{li}  }_{\text{\RNum{2}}  }  \\
						&& + \underbrace{\frac{1}{N^2} \sum_{i = 1}^N  \sum_{l = 1}^N  \tilde \Lambda_l \Lambda_i^\T \eta_{li}  }_{\text{\RNum{3}}}  + \underbrace{\frac{1}{N^2} \sum_{i = 1}^N  \sum_{l = 1}^N   \tilde \Lambda_l \Lambda_i^\T \xi_{li} }_{\text{\RNum{4}}}   \Big]
					\end{eqnarray*}
					We decompose the first term $\text{\RNum{1}} $ further into two parts 
					\begin{align*}
						\text{\RNum{1}} &= \underbrace{\frac{1}{N^2} \sum_{i = 1}^N \sum_{l = 1}^N  (\tilde \Lambda_l - H \Lambda_l)  \Lambda_i^\T \gamma(l,i) }_{\text{\RNum{1}}_1}  + H \cdot \underbrace{\frac{1}{N^2} \sum_{i = 1}^N \sum_{l = 1}^N   \Lambda_l  \Lambda_i^\T \gamma(l,i) }_{\text{\RNum{1}}_2}.
					\end{align*}
					The first term $\text{\RNum{1}}_1$ satisfies
					\begin{align*}
						\norm{\text{\RNum{1}}_1} &\leq \frac{1}{\sqrt{N}} \underbrace{\Big( \frac{1}{N} \sum_{l = 1}^N \norm{\tilde \Lambda_l - H \Lambda_l}^2 \Big)^{1/2}}_{O_P \Lp \frac{1}{\dnt} \Rp}     \Big( \underbrace{\frac{1}{N}\sum_{i = 1}^N \sum_{l = 1}^N  |\gamma(l,i)|^2}_{\leq M \text{ from Lemma \ref{lemma:prep-consistency}.1}}  \cdot \underbrace{\frac{1}{N}\sum_{i = 1}^N  \norm{ \Lambda_i}^2 }_{O_P(1)}  \Big)^{1/2} \\
						&= O_P \Lp \frac{1}{\sqrt{N\delta_{NT}}} \Rp.
					\end{align*}
					The second term $\text{\RNum{1}}_2$ satisfies 
					\begin{align*}
						\+E[\norm{\text{\RNum{1}}_2}] &\leq  \frac{1}{N^2} \sum_{i = 1}^N \sum_{l = 1}^N |\gamma(l,i)| \underbrace{\+E \Big[ \norm{\Lambda_l } \norm{\Lambda_i }  \Big]}_{\leq \bar \Lambda}  \underbrace{\+E[| e_{it}|]}_{\leq M}   = O\Lp \frac{1}{N } \Rp.
					\end{align*}
					Hence, we conclude that $\text{\RNum{1}} = O_P \Lp \frac{1}{\sqrt{N\delta_{NT}}} \Rp + O_P \Lp \frac{1}{N } \Rp= O_P \Lp \frac{1}{\sqrt{N\delta_{NT}}} \Rp$. 
					
					For the term $\text{\RNum{2}} $, we have the following decomposition:
					\begin{align*}
						\text{\RNum{2}} &= \underbrace{\frac{1}{N^2} \sum_{i = 1}^N \sum_{l = 1}^N  (\tilde \Lambda_l - H \Lambda_l) \Lambda_i ^\T \zeta_{li} }_{\text{\RNum{2}}_1}  + H \cdot \underbrace{\frac{1}{N^2} \sum_{i = 1}^N \sum_{l = 1}^N  \Lambda_l \Lambda_i^\T  \zeta_{li} }_{\text{\RNum{2}}_2}      
					\end{align*}
					The second term $\text{\RNum{2}}_2$ satisfies 
					\begin{align*}
						\text{\RNum{2}}_2 &=  \frac{1}{N}  \sum_{i = 1}^N \underbrace{\Ls \frac{1}{N} \sum_{l = 1}^N   \Lambda_l \Big[ \frac{1}{|\tlq_{il}|} \sum_{s \in \tlq_{il}} e_{is} e_{ls} - \+E[e_{is} e_{ls}] \Big] \Rs}_{z_i}  \Lambda_i^\T  .
					\end{align*}
					Assumption \ref{ass:mom-clt}.1 implies $\+E \big[\norm{z_i}^2  \big] = O\Lp \frac{1}{NT} \Rp$, which leads to 
					\[\+E \big[ \norm{\text{\RNum{2}}_2} \big] \leq \frac{1}{N} \sum_{i = 1}^N  \+E \Big[\norm{z_i}\norm{\Lambda_i}   \Big] \leq  \frac{1}{N} \sum_{i = 1}^N \Big( \+E \Big[\norm{\Lambda_i}^2   \Big] \+E \Big[\norm{z_i}^2  \Big] \Big)^{1/2} = O\Lp \frac{1}{\sqrt{NT}} \Rp. \]
					Hence, the second term has the rate $\text{\RNum{2}}_2 = O_P \Lp \frac{1}{\sqrt{NT}} \Rp$. For the first term $\text{\RNum{2}}_1$, we have
					\begin{align*}
						\norm{\text{\RNum{2}}_1} &\leq \underbrace{\Lp \frac{1}{N} \sum_{l = 1}^N \norm{\tilde \Lambda_l - H \Lambda_l}^2  \Rp^{1/2}}_{O_P \Lp \frac{1}{ \dnt} \Rp } \Lp \frac{1}{N} \sum_{l = 1}^N  \norm{\frac{1}{N} \sum_{i = 1}^N  \Lambda_i \zeta_{li} }^2  \Rp^{1/2}  \\
						& \leq O_P \Lp \frac{1}{ \dnt} \Rp \cdot \bigg( \underbrace{\Big( \frac{1}{N} \sum_{i = 1}^N  \norm{\Lambda_i}^2 \Big) }_{O_P(1)}  \underbrace{\Big( \frac{1}{N^2}  \sum_{l = 1}^N  \sum_{i = 1}^N  \zeta_{li}^2 \Big)}_{O_P \Lp \frac{1}{ T} \Rp}  \bigg)^{1/2} = O \Lp \frac{1}{ \sqrt{T \delta_{NT} } } \Rp.
					\end{align*}
					Hence, we obtain the rate $\text{\RNum{2}} = O_P \Lp \frac{1}{\sqrt{T\delta_{NT}}} \Rp$. 
					
					We decompose third term $\text{\RNum{3}}$ further into two parts: 
					\begin{align*}
						\text{\RNum{3}} &= \underbrace{ \frac{1}{N^2} \sum_{i = 1}^N  \sum_{l = 1}^N  (\tilde \Lambda_l - H \Lambda_l)\Lambda_i^\T \eta_{li} }_{\text{\RNum{3}}_1  }  + H \cdot \underbrace{\frac{1}{N^2} \sum_{i = 1}^N  \sum_{l = 1}^N  \Lambda_l \Lambda_i^\T \eta_{li}}_{\text{\RNum{3}}_2}     
					\end{align*}
					For the first term $\text{\RNum{3}}_1 $, we have 
					\begin{align*}
						\norm{\text{\RNum{3}}_1} &\leq     \underbrace{\Lp \frac{1}{N} \sum_{l = 1}^N \norm{\tilde \Lambda_l - H \Lambda_l}^2  \Rp^{1/2}}_{O_P \Lp \frac{1}{ \dnt} \Rp } \Lp \frac{1}{N} \sum_{l = 1}^N \norm{\frac{1}{N} \sum_{i = 1}^N \Lambda_i^\T \eta_{li}}^2  \Rp^{1/2}, 
					\end{align*}
					and the second term $\frac{1}{N} \sum_{l = 1}^N \norm{\frac{1}{N} \sum_{i = 1}^N \Lambda_i^\T \eta_{li}}^2  $ satisfies
					\begin{align*}
						& \+E \Ls \frac{1}{N} \sum_{l = 1}^N \norm{\frac{1}{N} \sum_{i = 1}^N \Lambda_i^\T \eta_{li}}^2 \Rs =\+E \Ls \frac{1}{N} \sum_{l = 1}^N \norm{\frac{1}{N} \sum_{i = 1}^N \Lambda_i^\T   \frac{1}{|\tlq_{li}|} \sum_{s \in \tlq_{li}}\Lambda_l^\T   F_s e_{is}}^2 \Rs \\
						=&\frac{1}{N} \sum_{l = 1}^N  \+E \Ls \norm{\Lambda_l^\T \frac{1}{N} \sum_{i = 1}^N    \frac{1}{|\tlq_{li}|} \sum_{s \in \tlq_{li}}   F_s \Lambda_i^\T e_{is}}^2 \Rs \\
						\leq&\frac{1}{N} \sum_{l = 1}^N     \+E \Ls \Lp \frac{1}{N} \sum_{i = 1}^N \norm{\Lambda_l^\T  \Lambda_i} \norm{ \frac{1}{|\tlq_{li}|} \sum_{s \in \tlq_{li}}   F_s e_{is}} \Rp^2 \Rs \\
						\leq&\frac{1}{N^3} \sum_{l = 1}^N  \sum_{i = 1}^N \sum_{j = 1}^N \underbrace{\+E [\norm{\Lambda_l^\T  \Lambda_i} \norm{\Lambda_l^\T  \Lambda_j}] }_{\leq \bar{\Lambda}}  \Bigg(\underbrace{\+E\norm{ \frac{1}{|\tlq_{li}|} \sum_{s \in \tlq_{li}}   F_s e_{is}}^2}_{O\Lp \frac{1}{ T} \Rp} \underbrace{\+E \norm{ \frac{1}{|\tlq_{lj}|} \sum_{s \in \tlq_{lj}}   F_s e_{js}}^2}_{O\Lp \frac{1}{ T} \Rp}    \Bigg)^{1/2}  \\ 
						=&  O\Lp \frac{1}{ T} \Rp.
					\end{align*}
					Hence, we obtain $\text{\RNum{3}}_1 = O_P \Lp \frac{1}{\sqrt{T\delta_{NT}} } \Rp$. Next, we consider $\text{\RNum{3}}_2$: 
					\begin{align*}
						\norm{\text{\RNum{3}}_2}^2 &= \norm{\frac{1}{N^2}  \sum_{l = 1}^N \Lambda_l \Lambda_l^\T  \sum_{i = 1}^N   \frac{1}{|\tlq_{li}|} \sum_{s \in \tlq_{li}}  F_s \Lambda_i^\T  e_{is} }   \\
						&\leq \underbrace{\Lp \frac{1}{N}  \sum_{l = 1}^N  \norm{\Lambda_l}^4 \Rp}_{O_P(1)} \underbrace{\Lp \frac{1}{N}  \sum_{l = 1}^N \norm{\frac{1}{N} \sum_{i = 1}^N   \frac{1}{|\tlq_{li}|} \sum_{s \in \tlq_{li}}   F_s \Lambda_i^\T e_{is} }^2    \Rp}_{\substack{= O\Lp \frac{1}{ NT} \Rp \text{ from } \+E \Ls \frac{1}{N}  \sum_{l = 1}^N \norm{\frac{1}{N} \sum_{i = 1}^N   \frac{1}{|\tlq_{li}|} \sum_{s \in \tlq_{li}}   F_s \Lambda_i^\T e_{is} }^2   \Rs \\  =\frac{1}{N}  \sum_{l = 1}^N \+E \Ls \norm{\frac{1}{N} \sum_{i = 1}^N   \frac{1}{|\tlq_{li}|} \sum_{s \in \tlq_{li}}  F_s \Lambda_i^\T  e_{is} }^2   \Rs= O\Lp \frac{1}{ NT} \Rp\\ \text{from Assumption \ref{ass:mom-clt}.2} }  } 
					\end{align*}
					Hence, we conclude that $\text{\RNum{3}} = O_P \Lp \frac{1}{\sqrt{T\delta_{NT}}} \Rp$. The rate for the last term $\text{\RNum{4}} = O_P \Lp \frac{1}{\sqrt{T\delta_{NT}}} \Rp$ can be shown  similarly. 
					\item $\frac{1}{N} \sum_{i = 1}^N W_{it} \Lp \tilde \Lambda_i - H_i \Lambda_i  \Rp  \Lambda_i^\T $ has the decomposition
					\begin{eqnarray*}
						&& \frac{1}{N} \sum_{i = 1}^N \Lp W_{it} \tilde \Lambda_i - H_i \Lambda_i \Rp \Lambda_i^\T  \\
						&=& \tilde D^\I \Big[ \underbrace{ \frac{1}{N^2} \sum_{i = 1}^N \sum_{l = 1}^N W_{it} \tilde \Lambda_l \Lambda_i^\T \gamma(l,i)  }_{\text{\RNum{1}} }  + \underbrace{\frac{1}{N^2} \sum_{i = 1}^N \sum_{l = 1}^N W_{it} \tilde \Lambda_l \Lambda_i^\T \zeta_{li}  }_{\text{\RNum{2}}  }  \\
						&& + \underbrace{\frac{1}{N^2} \sum_{i = 1}^N  \sum_{l = 1}^N W_{it}  \tilde \Lambda_l \Lambda_i^\T \eta_{li}  }_{\text{\RNum{3}}}  + \underbrace{\frac{1}{N^2} \sum_{i = 1}^N  \sum_{l = 1}^N W_{it}  \tilde \Lambda_l \Lambda_i^\T \xi_{li} }_{\text{\RNum{4}}}   \Big].
					\end{eqnarray*}
					We decompose the term $\text{\RNum{1}} $ further into two parts 
					\begin{align*}
						\text{\RNum{1}} &= \underbrace{\frac{1}{N^2} \sum_{i = 1}^N \sum_{l = 1}^N  W_{it} (\tilde \Lambda_l - H \Lambda_l)  \Lambda_i^\T \gamma(l,i) }_{\text{\RNum{1}}_1}  + H \cdot \underbrace{\frac{1}{N^2} \sum_{i = 1}^N \sum_{l = 1}^N  W_{it} \Lambda_l  \Lambda_i^\T \gamma(l,i) }_{\text{\RNum{1}}_2}.
					\end{align*}
					The first term $\text{\RNum{1}}_1$ has
					\begin{align*}
						\norm{\text{\RNum{1}}_1} &\leq \frac{1}{\sqrt{N}} \underbrace{\Big( \frac{1}{N} \sum_{l = 1}^N \norm{\tilde \Lambda_l - H \Lambda_l}^2 \Big)^{1/2}}_{O_P \Lp \frac{1}{\dnt} \Rp}     \Big( \underbrace{\frac{1}{N}\sum_{i = 1}^N \sum_{l = 1}^N  |\gamma(l,i)|^2}_{\leq M \text{ from Lemma \ref{lemma:prep-consistency}.1}}  \cdot \underbrace{\frac{1}{N}\sum_{i = 1}^N W_{it} \norm{ \Lambda_i}^2 }_{\leq \frac{1}{N } \sum_{i = 1}^N \norm{ \Lambda_i}^2 = O_P(1) }  \Big)^{1/2} \\
						&= O_P \Lp \frac{1}{\sqrt{N\delta_{NT}}} \Rp.
					\end{align*}
					The second term $\text{\RNum{1}}_2$ satisfies 
					\begin{align*}
						\+E[\norm{\text{\RNum{1}}_2}  ] &\leq  \frac{1}{N^2} \sum_{i = 1}^N \sum_{l = 1}^N |\gamma(l,i)| \underbrace{\+E \Big[W_{it}  \Big]}_{\leq 1 }  \underbrace{\+E \Big[ \norm{\Lambda_l } \norm{\Lambda_i }  \Big]}_{\leq \bar \Lambda}  \underbrace{\+E[| e_{it}|]}_{\leq M}   = O\Lp \frac{1}{N } \Rp.
					\end{align*}
					Hence, we conclude $\text{\RNum{1}} = O_P \Lp \frac{1}{\sqrt{N\delta_{NT}}} \Rp + O_P \Lp \frac{1}{N } \Rp= O_P \Lp \frac{1}{\sqrt{N\delta_{NT}}} \Rp$. 
					
					The term $\text{\RNum{2}} $ can be decomposed into
					\begin{align*}
						\text{\RNum{2}} &= \underbrace{\frac{1}{N^2} \sum_{i = 1}^N \sum_{l = 1}^N W_{it} (\tilde \Lambda_l - H \Lambda_l) \Lambda_i ^\T \zeta_{li} }_{\text{\RNum{2}}_1}  + H \cdot \underbrace{\frac{1}{N^2} \sum_{i = 1}^N \sum_{l = 1}^N  W_{it} \Lambda_l \Lambda_i^\T  \zeta_{li} }_{\text{\RNum{2}}_2}      .
					\end{align*}
					For the second term $\text{\RNum{2}}_2$, we have 
					\begin{align*}
						\text{\RNum{2}}_2 &=  \frac{1}{N}  \sum_{l = 1}^N  \Lambda_l  \underbrace{\Ls \frac{1}{N}  \sum_{i = 1}^N W_{it} \Lambda_i^\T  \Big[ \frac{1}{|\tlq_{il}|} \sum_{s \in \tlq_{il}} e_{is} e_{ls} - \+E[e_{is} e_{ls}] \Big] \Rs}_{z_i}  
					\end{align*}
					From Assumption \ref{ass:mom-clt}.1 we infer that $\+E \big[ \norm{z_i}^2  \big] = O\Lp \frac{1}{NT} \Rp$. Then, it holds that 
					\[\+E \big[ \norm{\text{\RNum{2}}_2} \big] \leq \frac{1}{N} \sum_{l = 1}^N  \+E \Big[\norm{\Lambda_l} \norm{z_l}  \Big] \leq  \frac{1}{N} \sum_{l = 1}^N \Big( \+E \Big[\norm{\Lambda_l}^2   \Big] \+E \Big[\norm{z_l}^2  \Big] \Big)^{1/2} = O\Lp \frac{1}{\sqrt{NT}} \Rp. \]
					Hence, we obtain $\text{\RNum{2}}_2 = O_P \Lp \frac{1}{\sqrt{NT}} \Rp$. The first term $\text{\RNum{2}}_1$ satisfies
					\begin{align*}
						\norm{\text{\RNum{2}}_1} &\leq \underbrace{\Lp \frac{1}{N} \sum_{l = 1}^N \norm{\tilde \Lambda_l - H \Lambda_l}^2  \Rp^{1/2}}_{O_P \Lp \frac{1}{ \dnt} \Rp } \Lp \frac{1}{N} \sum_{l = 1}^N  \norm{\frac{1}{N} \sum_{i = 1}^N W_{it} \Lambda_i \zeta_{li} }^2  \Rp^{1/2}  \\
						& \leq O_P \Lp \frac{1}{ \dnt} \Rp \cdot \bigg(  \underbrace{\Big( \frac{1}{N} \sum_{i = 1}^N  \norm{\Lambda_i}^2 \Big) }_{O_P(1)}  \underbrace{\Big( \frac{1}{N^2}  \sum_{l = 1}^N  \sum_{i = 1}^N   \zeta_{li}^2 \Big)}_{O_P \Lp \frac{1}{ T} \Rp}  \bigg)^{1/2} \\
						&=O_P \Lp \frac{1}{ \sqrt{T \delta_{NT} }} \Rp.
					\end{align*}
					As a result we conclude $\text{\RNum{2}} = O_P \Lp \frac{1}{\sqrt{T\delta_{NT}}} \Rp$. 
					
					For the third term $\text{\RNum{3}}$, we have the decomposition 
					\begin{align*}
						\text{\RNum{3}} &= \underbrace{ \frac{1}{N^2} \sum_{i = 1}^N  \sum_{l = 1}^N W_{it}   (\tilde \Lambda_l - H \Lambda_l)\Lambda_i^\T \eta_{li} }_{\text{\RNum{3}}_1  }  + H \cdot \underbrace{\frac{1}{N^2} \sum_{i = 1}^N  \sum_{l = 1}^N W_{it}  \Lambda_l \Lambda_i^\T \eta_{li}}_{\text{\RNum{3}}_2}  .   
					\end{align*}
					For the first term $\text{\RNum{3}}_1 $ we have 
					\begin{align*}
						\norm{\text{\RNum{3}}_1} &\leq     \underbrace{\Lp \frac{1}{N} \sum_{l = 1}^N \norm{\tilde \Lambda_l - H \Lambda_l}^2  \Rp^{1/2}}_{O_P \Lp \frac{1}{ \dnt} \Rp } \Lp \frac{1}{N} \sum_{l = 1}^N \norm{\frac{1}{N} \sum_{i = 1}^N W_{it}  \Lambda_i^\T \eta_{li}}^2  \Rp^{1/2}, 
					\end{align*}
					and the second term $\frac{1}{N} \sum_{l = 1}^N \norm{\frac{1}{N} \sum_{i = 1}^N W_{it}  \Lambda_i^\T \eta_{li}}^2  $ satisfies
					\begin{align*}
						& \+E \Ls \frac{1}{N} \sum_{l = 1}^N \norm{\frac{1}{N} \sum_{i = 1}^N W_{it}  \Lambda_i^\T \eta_{li}}^2  \Rs \\ =& \+E \Ls \frac{1}{N} \sum_{l = 1}^N \norm{\frac{1}{N} \sum_{i = 1}^N W_{it}  \Lambda_i^\T   \frac{1}{|\tlq_{li}|} \sum_{s \in \tlq_{li}}\Lambda_l^\T   F_s e_{is}}^2 \Rs \\
						=&\frac{1}{N} \sum_{l = 1}^N  \+E \Ls \norm{\Lambda_l^\T \frac{1}{N} \sum_{i = 1}^N  W_{it} \cdot  \frac{1}{|\tlq_{li}|} \sum_{s \in \tlq_{li}}   F_s \Lambda_i^\T e_{is}}^2  \Rs \\
						\leq&\frac{1}{N} \sum_{l = 1}^N  \cdot \+E \Ls \Lp \frac{1}{N} \sum_{i = 1}^N \norm{\Lambda_l^\T  \Lambda_i} \norm{ \frac{1}{|\tlq_{li}|} \sum_{s \in \tlq_{li}}   F_s e_{is}} \Rp^2  \Rs \\
						\leq&\frac{1}{N^3 } \sum_{l = 1}^N  \sum_{i = 1}^N \sum_{j = 1}^N \underbrace{\+E [\norm{\Lambda_l^\T  \Lambda_i} \norm{\Lambda_l^\T  \Lambda_j}|S] }_{\leq \bar{\Lambda}}  \Bigg(\underbrace{\+E\norm{ \frac{1}{|\tlq_{li}|} \sum_{s \in \tlq_{li}}   F_s e_{is}}^2}_{O\Lp \frac{1}{ T} \Rp} \underbrace{\+E \norm{ \frac{1}{|\tlq_{lj}|} \sum_{s \in \tlq_{lj}}   F_s e_{js}}^2}_{O\Lp \frac{1}{ T} \Rp}    \Bigg)^{1/2}  \\ 
						=&  O\Lp \frac{1}{ T} \Rp.
					\end{align*}
					This results in $\text{\RNum{3}}_1 = O_P \Lp \frac{1}{\sqrt{T\delta_{NT}} } \Rp$. Next, we consider $\text{\RNum{3}}_2$: 
					\begin{align*}
						\norm{\text{\RNum{3}}_2}^2 &= \norm{\frac{1}{N^2}  \sum_{l = 1}^N \Lambda_l \Lambda_l^\T  \sum_{i = 1}^N  W_{it} \cdot \frac{1}{|\tlq_{li}|} \sum_{s \in \tlq_{li}}  F_s \Lambda_i^\T  e_{is} }   \\
						&\leq \underbrace{\Lp \frac{1}{N}  \sum_{l = 1}^N  \norm{\Lambda_l}^4 \Rp}_{O_P(1)} \underbrace{\Lp \frac{1}{N}  \sum_{l = 1}^N \norm{\frac{1}{N} \sum_{i = 1}^N W_{it} \cdot  \frac{1}{|\tlq_{li}|} \sum_{s \in \tlq_{li}}   F_s \Lambda_i^\T e_{is} }^2    \Rp}_{\substack{= O\Lp \frac{1}{ NT} \Rp \text{ from } \+E \Ls \frac{1}{N}  \sum_{l = 1}^N \norm{\frac{1}{N} \sum_{i = 1}^N  W_{it}  \frac{1}{|\tlq_{li}|} \sum_{s \in \tlq_{li}}   F_s \Lambda_i^\T e_{is} }^2  |S \Rs \\  =\frac{1}{N}  \sum_{l = 1}^N \+E \Ls \norm{\frac{1}{N} \sum_{i = 1}^N  W_{it}  \frac{1}{|\tlq_{li}|} \sum_{s \in \tlq_{li}}  F_s \Lambda_i^\T  e_{is} }^2  |S \Rs= O\Lp \frac{1}{ NT} \Rp\\ \text{from Assumption \ref{ass:mom-clt}.2} }  } 
					\end{align*}
					In conclusion, we obtain the rate $\text{\RNum{3}} = O_P \Lp \frac{1}{\sqrt{T\delta_{NT}}} \Rp$. The rate for the last term $\text{\RNum{4}} = O_P \Lp \frac{1}{\sqrt{T\delta_{NT}}} \Rp$ follows from similar arguments. 
					
				\end{enumerate}
			\end{proof}

			\begin{proof}[Proof of Theorem \ref{theorem:asy-normal-equal-weight}.2]
				We regress $Y_{it}$ on $\tilde \Lambda_i$ using the observed units at time $t$ (where $W_{it} = 1$)
				\begin{align*}
					\tilde F_t &= \Big( \sum_{i = 1}^N W_{it} \tilde \Lambda_i \tilde \Lambda_i^\T  \Big)^\I  \Big( \sum_{i = 1}^N W_{it} \tilde \Lambda_i Y_{it}    \Big) .
				\end{align*}
				We first analyze 
				\begin{align*}
					\tilde F_t^\dagger  &= \Big( \sum_{i = 1}^N W_{it} H \Lambda_i  \Lambda_i^\T H^\T  \Big)^\I  \Big( \sum_{i = 1}^N W_{it} \tilde \Lambda_i Y_{it}    \Big)  .
				\end{align*}
				We have the following decomposition for $\tilde F_t$
				\begin{align*}
					\tilde F_t^\dagger  =& \Big(\frac{1}{N}  \sum_{i = 1}^N W_{it} H \Lambda_i  \Lambda_i^\T H^\T  \Big)^\I  \Big(\frac{1}{N}  \sum_{i = 1}^N W_{it} \tilde \Lambda_i (\Lambda_i^\T F_t + e_{it})    \Big)    \\
					=& (H^\I)^\T F_t + \underbrace{(H^\I)^\T \Big(\frac{1}{N}  \sum_{i = 1}^N W_{it} \Lambda_i  \Lambda_i^\T  \Big)^\I  \Big(\frac{1}{N}  \sum_{i = 1}^N W_{it} \Lambda_i  e_{it}    \Big) }_{\Delta_1}   \\
					& + \underbrace{\Big(\frac{1}{N}  \sum_{i = 1}^N W_{it} H \Lambda_i  \Lambda_i^\T H^\T  \Big)^\I  \Big(\frac{1}{N}  \sum_{i = 1}^N W_{it} (\tilde \Lambda_i - H \Lambda_i) \Lambda_i^\T F_t   \Big) }_{\Delta_2}   \\
					&+ \Big(\frac{1}{N}  \sum_{i = 1}^N W_{it} H \Lambda_i  \Lambda_i^\T H^\T  \Big)^\I \underbrace{\Big(\frac{1}{N}  \sum_{i = 1}^N W_{it} (\tilde \Lambda_i - H \Lambda_i)  e_{it}   \Big) }_{{O_P \Lp \frac{1}{\delta_{NT} } \Rp \text{ from Lemma \ref{lemma:f-est-error-times-f-and-e-adj}.2 }} } .
				\end{align*}
				For $\Delta_1$, $\frac{1}{\sqrt{N}} \sum_{i = 1}^N W_{it} \Lambda_i  e_{it} \xrightarrow{d} \calN(0, \covI_{F,t})$ from Assumption \ref{ass:mom-clt}.\ref{ass:asy-normal-main-term-thm-factor}  and $\frac{1}{N} \sum_{i = 1}^N W_{it} \Lambda_i \Lambda_i  \xrightarrow{p} \Sigma_{\Lambda,t}$. Slutsky's theorem and Lemma \ref{lemma:def-q} ($H^\I \xrightarrow{p} Q^\T$) yield
				\begin{equation}\label{eqn:F-asy-term1-equal}
					\sqrt{N} \underbrace{(H^\I)^\T \Big(\frac{1}{N}  \sum_{i = 1}^N W_{it} \Lambda_i  \Lambda_i^\T  \Big)^\I  \Big(\frac{1}{N}  \sum_{i = 1}^N W_{it} \Lambda_i  e_{it}    \Big) }_{\bm{\varepsilon}_{F,t,1}}  \xrightarrow{d} \calN(0, Q \Sigma_{\Lambda,t}^\I \covI_{F,t} \Sigma_{\Lambda,t}^\I Q^\T).    
				\end{equation}
				Next, we decompose $\Delta_2$ into two parts
				\begin{align*}
					\Delta_2 =& \Big(\frac{1}{N}  \sum_{i = 1}^N W_{it} H \Lambda_i  \Lambda_i^\T H^\T  \Big)^\I \underbrace{ \Big(\frac{1}{N}  \sum_{i = 1}^N W_{it} (\tilde \Lambda_i - H_i \Lambda_i) \Lambda_i^\T F_t   \Big)}_{{O_P \Lp \frac{1}{\delta_{NT}} \Rp \text{ from Lemma \ref{lemma:f-est-error-times-f-and-e-adj}.4 }} }   \\
					&+  \Big(\frac{1}{N}  \sum_{i = 1}^N W_{it} H \Lambda_i  \Lambda_i^\T H^\T  \Big)^\I  \Big(\frac{1}{N}  \sum_{i = 1}^N W_{it} (H_i - H ) \Lambda_i \Lambda_i^\T F_t   \Big)  .
				\end{align*}
				For $\sum_{i = 1}^N W_{it} (H_i - H ) \Lambda_i \Lambda_i^\T $ in the second term, we obtain
				\begin{align*}
					\frac{1}{N} \sum_{i = 1}^N W_{it} (H_i - H ) \Lambda_i \Lambda_i^\T  =&\tilde D^{-1} \cdot \frac{1}{N^2} \sum_{i =1}^N \sum_{l = 1}^N \tilde \Lambda_l   \Lambda_l^\T  \Lp \frac{1}{|\tlq_{li}|} \sum_{s \in \tlq_{li}} F_s F_s^\T - \frac{1}{T} \sum_{s=1}^T F_s F_s^\T   \Rp W_{it}  \Lambda_i \Lambda_i^\T  \\
					=&  \tilde D^{-1} \cdot \underbrace{\frac{1}{N^2} \sum_{i =1}^N \sum_{l = 1}^N (\tilde \Lambda_l - H \Lambda_l )   \Lambda_l^\T  \Lp \frac{1}{|\tlq_{li}|} \sum_{s \in \tlq_{li}} F_s F_s^\T - \frac{1}{T} \sum_{s=1}^T F_s F_s^\T   \Rp W_{it}  \Lambda_i \Lambda_i^\T }_{\text{\RNum{1}} }  \\
					&+ \tilde D^{-1} H \cdot \underbrace{\frac{1}{N^2} \sum_{i =1}^N \sum_{l = 1}^N  \Lambda_l   \Lambda_l^\T  \Lp \frac{1}{|\tlq_{li}|} \sum_{s \in \tlq_{li}} F_s F_s^\T - \frac{1}{T} \sum_{s=1}^T F_s F_s^\T   \Rp W_{it}  \Lambda_i \Lambda_i^\T }_{\mathbf{X}_t }  .
				\end{align*}
				The first term \RNum{1} satisfies 
				\begin{align*}
					\norm{\text{\RNum{1}}}^2 &\leq \underbrace{\Lp \frac{1}{N} \sum_{l = 1}^N \norm{\tilde  \Lambda_l - H \Lambda_l}^2 \Rp}_{O_P \Lp \frac{1}{\delta_{NT}} \Rp}  \Lp\frac{1}{N} \sum_{l = 1}^N  \norm{ \Lambda_l }^2 \norm{\frac{1}{N} \sum_{i =1}^N W_{it} \Lambda_i \Lambda_i^\T \Lp \frac{1}{T} \sum_{s=1}^T F_s F_s^\T - \frac{1}{|\tlq_{li}|} \sum_{s \in \tlq_{li}} F_s F_s^\T \Rp}^2 \Rp    \\
					&\leq O_P \Lp \frac{1}{\delta_{NT}} \Rp \cdot \underbrace{\Lp\frac{1}{N} \sum_{l = 1}^N   \norm{ \Lambda_l }^2 \Lp \frac{1}{N} \sum_{i =1}^N \norm{\Lambda_i }^2 \norm{\frac{1}{T} \sum_{s=1}^T F_s F_s^\T - \frac{1}{|\tlq_{li}|} \sum_{s \in \tlq_{li}} F_s F_s^\T } \Rp^2  \Rp }_{\text{\RNum{1}}_1} ,
				\end{align*}
				where the first moment of $\text{\RNum{1}}_1$ has the following bound
				\begin{align*}
					\+E[ \text{\RNum{1}}_1 ] =& \frac{1}{N^3} \sum_{l = 1}^N \sum_{i =1}^N \sum_{j =1}^N  \underbrace{\+E\Big[ \norm{ \Lambda_l }^2  \norm{\Lambda_i }^2  \norm{\Lambda_j }^2 |S\Big]}_{\bar{\Lambda}}  \\
					& \cdot \underbrace{\+E\Bigg[ \norm{\frac{1}{T} \sum_{s=1}^T F_s F_s^\T - \frac{1}{|\tlq_{li}|} \sum_{s \in \tlq_{li}} F_s F_s^\T } \norm{\frac{1}{T} \sum_{s=1}^T F_s F_s^\T - \frac{1}{|\tlq_{lj}|} \sum_{s \in \tlq_{lj}} F_s F_s^\T } \bigg| \Bigg]}_{\leq \frac{M}{T} \text{ from } \+E[ab] \leq (\+E[a^2]\+E[b^2])^{1/2}  }. 
				\end{align*}
				Hence, we conclude that $\text{\RNum{1}} = O_P \Lp \frac{1}{\sqrt{T\delta_{NT}}} \Rp$. The second term $\mathbf{X}_t$ is asymptotically normal based on Assumption \ref{ass:mom-clt}.\ref{ass:asy-normal-add-term-thm-loading} and its convergence rate is $\sqrt{T}$. Hence in $\Delta_2$, the leading term is 
				\begin{align}\label{eqn:F-asy-term2-equal}
					\bm{\varepsilon}_{F,t, 2} =    \Big(\frac{1}{N}  \sum_{i = 1}^N W_{it} H \Lambda_i  \Lambda_i^\T H^\T  \Big)^\I  \Big(\tilde D^{-1} H \mathbf{X}_t  F_t   \Big),  
				\end{align}
				where $\mathbf{X}_t = \frac{1}{N^2} \sum_{i =1}^N \sum_{l = 1}^N  \Lambda_l   \Lambda_l^\T  \Lp \frac{1}{|\tlq_{li}|} \sum_{s \in \tlq_{li}} F_s F_s^\T - \frac{1}{T} \sum_{s=1}^T F_s F_s^\T   \Rp W_{it}  \Lambda_i \Lambda_i^\T$.

				
				Next, we consider the difference between $\tilde F_t^\dagger$ and $\tilde F_t$. The leading term is
				\begin{align*}
					&\tilde F_t -  \tilde F_t^\dagger \\
					=&  \Bigg[ \Big( \sum_{i = 1}^N W_{it} \tilde \Lambda_i \tilde \Lambda_i^\T  \Big)^\I - \Big( \sum_{i = 1}^N W_{it} H \Lambda_i  \Lambda_i^\T H^\T  \Big)^\I  \Bigg]  \Big( \sum_{i = 1}^N W_{it} \tilde \Lambda_i Y_{it}    \Big) \\
					=&\Big( \sum_{i = 1}^N W_{it} \tilde \Lambda_i \tilde \Lambda_i^\T  \Big)^\I   \Bigg[ \sum_{i = 1}^N W_{it} H \Lambda_i  \Lambda_i^\T H^\T - \sum_{i = 1}^N W_{it} \tilde \Lambda_i \tilde \Lambda_i^\T    \Bigg] \Big( \sum_{i = 1}^N W_{it} H \Lambda_i  \Lambda_i^\T H^\T  \Big)^\I \Big( \sum_{i = 1}^N W_{it} \tilde \Lambda_i Y_{it}    \Big) \\
					=& \Big( \frac{1}{N} \sum_{i = 1}^N W_{it} \tilde \Lambda_i \tilde \Lambda_i^\T  \Big)^\I \Bigg[ \frac{1}{N} \sum_{i = 1}^N W_{it} H \Lambda_i  \Lambda_i^\T H^\T - \frac{1}{N} \sum_{i = 1}^N W_{it} \tilde \Lambda_i \tilde \Lambda_i^\T   \Bigg]  \Big( \frac{1}{N} \sum_{i = 1}^N W_{it} H \Lambda_i  \Lambda_i^\T H^\T  \Big)^\I  \\
					& \cdot \Big(\frac{1}{N} \sum_{i = 1}^N W_{it} H \Lambda_i  \Lambda_i^\T  F_t +  O_P \Big( \frac{1}{\sqrt{N}} \Big) \Big).
				\end{align*}
				Note that 
				\begin{align*}
					& \norm{\frac{1}{N} \sum_{i = 1}^N W_{it} \tilde \Lambda_i \tilde \Lambda_i^\T  - \frac{1}{N} \sum_{i = 1}^N W_{it} \tilde \Lambda_i \tilde \Lambda_i^\T  }  \leq \frac{1}{N} \sum_{i = 1}^N \norm{\tilde \Lambda_i \tilde \Lambda_i^\T - H \Lambda_i  \Lambda_i^\T H^\T} \\
					\leq& \frac{1}{N} \sum_{i = 1}^N  \norm{\tilde \Lambda_i} \norm{\tilde \Lambda_i - H \Lambda_i } + \frac{1}{N} \sum_{i = 1}^N \norm{\Lambda_i} \norm{\tilde \Lambda_i - H \Lambda_i } \\
					\leq&  \bigg( \frac{1}{N} \sum_{i = 1}^N  \norm{\tilde \Lambda_i}^2  \bigg)^{1/2}  \bigg( \frac{1}{N} \sum_{i = 1}^N \norm{\tilde \Lambda_i - H \Lambda_i }^2\bigg)^{1/2} +  \bigg(  \frac{1}{N} \sum_{i = 1}^N \norm{\Lambda_i}^2 \bigg)^{1/2} \bigg( \frac{1}{N} \sum_{i = 1}^N  \norm{\tilde \Lambda_i - H \Lambda_i }^2\bigg)^{1/2} = O_P \bigg( \frac{1}{\dnt} \bigg).
				\end{align*}
				following from Theorem \ref{thm:consistency-same-H},  $\frac{1}{N} \tilde{\Lambda}^\T \tilde{\Lambda} = I_r$ and Assumption \ref{ass:factor-model}.2. Hence, we have 
				\[\frac{1}{N} \sum_{i = 1}^N W_{it} \tilde \Lambda_i \tilde \Lambda_i^\T  \xrightarrow{p} \frac{1}{N} \sum_{i = 1}^N W_{it} H \Lambda_i  \Lambda_i^\T H^\T.  \]
				This is also equivalent to 
				\[\Big(\frac{1}{N} \sum_{i = 1}^N W_{it} \tilde \Lambda_i \tilde \Lambda_i^\T  \Big)^\I \Big(\frac{1}{N} \sum_{i = 1}^N W_{it} H \Lambda_i  \Lambda_i^\T H^\T  \Big) \xrightarrow{p} I_k.  \]
				For the term $ \sum_{i = 1}^N W_{it} H \Lambda_i  \Lambda_i^\T H^\T - \sum_{i = 1}^N W_{it} \tilde \Lambda_i \tilde \Lambda_i^\T   $, we have the decomposition
				\begin{align*}
					&\frac{1}{N} \sum_{i = 1}^N W_{it} \tilde \Lambda_i \tilde \Lambda_i^\T - \frac{1}{N}\sum_{i = 1}^N W_{it} H \Lambda_i  \Lambda_i^\T H^\T \\
					=& \frac{1}{N}\sum_{i = 1}^N W_{it}  (\tilde \Lambda_i -  H \Lambda_i) \tilde \Lambda_i^\T +  \frac{1}{N}\sum_{i = 1}^N W_{it} H \Lambda_i   (\tilde \Lambda_i -  H \Lambda_i)^\T \\
					=& \frac{1}{N}\sum_{i = 1}^N W_{it}  (\tilde \Lambda_i -  H \Lambda_i)  (H \Lambda_i)^\T +  \frac{1}{N}\sum_{i = 1}^N W_{it} H \Lambda_i   (\tilde \Lambda_i -  H \Lambda_i)^\T   + \frac{1}{N}\sum_{i = 1}^N W_{it}  (\tilde \Lambda_i -  H \Lambda_i) (\tilde \Lambda_i - H \Lambda_i)^\T \\
					=& \underbrace{\frac{1}{N}\sum_{i = 1}^N W_{it} (H_i - H)  \Lambda_i \Lambda_i^\T}_{\tilde D^{-1} H\mathbf{X}_t}  \cdot H^\T + H \cdot \underbrace{\frac{1}{N}\sum_{i = 1}^N W_{it}  \Lambda_i \Lambda_i^\T (H_i - H)^\T }_{(\tilde D^{-1} H\mathbf{X}_t)^\T}   \\
					&+ \underbrace{\frac{1}{N}\sum_{i = 1}^N W_{it}  (\tilde \Lambda_i -  H_i \Lambda_i)  \Lambda_i^\T}_{{O_P \Lp \frac{1}{\delta_{NT}} \Rp \text{ from Lemma \ref{lemma:f-est-error-times-f-and-e-adj}.4 }} }  \cdot H^\T + H \cdot \underbrace{\frac{1}{N}\sum_{i = 1}^N W_{it}\Lambda_i   (\tilde \Lambda_i -  H_i \Lambda_i)^\T}_{{O_P \Lp \frac{1}{\delta_{NT}} \Rp \text{ from Lemma \ref{lemma:f-est-error-times-f-and-e-adj}.4 }} }   \\
					&+ \underbrace{\frac{1}{N}\sum_{i = 1}^N W_{it}  (\tilde \Lambda_i -  H \Lambda_i) (\tilde \Lambda_i - H \Lambda_i)^\T }_{\text{\RNum{1}}} .
				\end{align*}
				For the term \RNum{1}, we have 
				\begin{align*}
					\norm{\text{\RNum{1}} } &\leq   \frac{1}{N}\sum_{i = 1}^N W_{it} \norm{\tilde \Lambda_i -  H \Lambda_i}^2 =  O_P \Lp \frac{1}{\delta_{NT}} \Rp.
				\end{align*}
				The term $\mathbf{X}_t$ is asymptotically normal from Assumption \ref{ass:mom-clt}.\ref{ass:asy-normal-add-term-thm-loading} and its convergence rate is $\sqrt{T}$. Hence, the leading term in $\tilde F_t -  \tilde F_t^\dagger$ is 
				\begin{align}
					\nonumber\bm{\varepsilon}_{F,t, 3} =&  - \Big(\frac{1}{N} \sum_{i = 1}^N W_{it} H \Lambda_i  \Lambda_i^\T H^\T  \Big)^{-1} \Big(\tilde D^{-1} H  \mathbf{X}_t H^\T + H (\tilde D^{-1} H \mathbf{X}_t)^\T \Big) \\
					\nonumber & \cdot\Big( \frac{1}{N} \sum_{i = 1}^N W_{it} H \Lambda_i  \Lambda_i^\T H^\T  \Big)^\I  \Big(\frac{1}{N} \sum_{i = 1}^N W_{it} H \Lambda_i  \Lambda_i^\T  \Big)  F_t \\
					=&  \Big( \frac{1}{N} \sum_{i = 1}^N W_{it} H \Lambda_i  \Lambda_i^\T H^\T  \Big)^\I  \Big(\tilde D^{-1} H \mathbf{X}_t H^\T + H (\tilde D^{-1} H \mathbf{X}_t)^\T \Big) (H^\T)^{-1} F_t , \label{eqn:F-asy-term3-equal}
				\end{align}
				where $\mathbf{X}_t = \frac{1}{N^2} \sum_{i =1}^N \sum_{l = 1}^N  \Lambda_l   \Lambda_l^\T  \Lp \frac{1}{|\tlq_{li}|} \sum_{s \in \tlq_{li}} F_s F_s^\T - \frac{1}{T} \sum_{s=1}^T F_s F_s^\T   \Rp W_{it}  \Lambda_i \Lambda_i^\T$. 
				In summary, the asymptotic distribution of $\tilde F_t$ is determined by \eqref{eqn:F-asy-term1-equal}, \eqref{eqn:F-asy-term2-equal} and \eqref{eqn:F-asy-term3-equal}, that is, 
				\begin{align*}
					\dnt (\tilde F_t -  (H^{-1})^\T F_t) =& \dnt (\bm{\varepsilon}_{F,t, 1} + \bm{\varepsilon}_{F,t, 2} + \bm{\varepsilon}_{F,t, 3})   + o_P(1).
				\end{align*}
				Let us first consider the asymptotic distribution of $\bm{\varepsilon}_{F,t, 2} + \bm{\varepsilon}_{F,t, 3}$. Note by Assumption \ref{ass:mom-clt}.\ref{ass:asy-normal-add-term-thm-loading}, it holds that $\sqrt{T} \tvec(\mathbf{X}_t) \xrightarrow{d} \calN (0, \mathbf{\Phi}_t)$. Denote $\tilde \Sigma_{\Lambda, t} := \frac{1}{N} \sum_{i = 1}^N W_{it}  \Lambda_i  \Lambda_i^\T$. We can rewrite $\bm{\varepsilon}_{F,t, 2}$ as
				\begin{align*}
					\bm{\varepsilon}_{F,t, 2} &= (H^\T)^\I \tilde \Sigma_{\Lambda, t}^\I H^\I \tilde D^{-1} H \Big(  F_t^\T \otimes I_r \Big) \tvec(\mathbf{X}_t).
				\end{align*}
				Furthermore, we can rewrite $\bm{\varepsilon}_{F,t, 3}$ as
				\begin{align*}
					\bm{\varepsilon}_{F,t, 3} &=  - (H^\T)^{-1} \tilde \Sigma_{\Lambda, t}^{-1} H^{-1} \Bigg(\tilde D^{-1} H \mathbf{X}_t F_t  + H \mathbf{X}_t^\T H^\T \tilde D^{-1} (H^\T)^{-1}  F_t  \Bigg) \\
					&=  - (H^\T)^{-1} \tilde \Sigma_{\Lambda, t}^{-1} H^{-1} \Big(\tilde D^{-1} H \big( F_t ^\T \otimes I_r \big)  + H \big(I_r \otimes (H^\T \tilde D^{-1} (H^\T)^{-1}  F_t   )^\T \big)  \Big) \tvec(\mathbf{X}_t).
				\end{align*}
				Then, for $\bm{\varepsilon}_{F,t, 2} + \bm{\varepsilon}_{F,t, 3}  $, we obtain 
				\begin{align*}
					& \sqrt{T} \Big( \bm{\varepsilon}_{F,t, 2} + \bm{\varepsilon}_{F,t, 3} \Big)      \\
					=& \sqrt{T} (H^\T)^\I \tilde \Sigma_{\Lambda, t}^\I H^\I  \Bigg( \tilde D^{-1} H \Big(  F_t^\T \otimes I_r \Big) \\
					&- \Big(\tilde D^{-1} H  \big(  F_t^\T \otimes I_r \big)  + H \big(I_r \otimes (H^\T \tilde D^\I (H^\T)^{-1} F_t )^\T \big)  \Big) \Bigg) \tvec(\mathbf{X}_t)  \\
					=&  -\sqrt{T} (H^\T)^\I \tilde \Sigma_{\Lambda, t}^\I \Big(I_r \otimes (H^\T \tilde D^\I (H^\T)^{-1} F_t )^\T \Big) \tvec(\mathbf{X}_t) .
				\end{align*}
				Combining this with the result of Lemma \ref{lemma:HDinvHTinv},  $H^\T \tilde D^\I (H^\T)^{-1} =  \Big( \frac{\Lambda^\T  \Lambda}{N} \Big)^\I  \Big(\frac{F^\T F}{T} \Big)^\I + O_P \Big( \frac{1}{\sqrt{\delta_{NT}}} \Big) $, we conclude that
				\begin{align*}
					& \sqrt{T} \Big( \bm{\varepsilon}_{F,t, 2} + \bm{\varepsilon}_{F,t, 3} \Big)      \\
					=&  -\sqrt{T} (H^\T)^\I \tilde \Sigma_{\Lambda, t}^\I \Big(I_r \otimes (H^\T \tilde D^\I (H^\T)^{-1} F_t )^\T \Big) \tvec(\mathbf{X}_t) \\
					\rightarrow&   \calN \Bigg(0,  Q \Sigma_{\Lambda, t}^\I    \covII_{F,t}  \Sigma_{\Lambda, t}^\I  Q^\T \Bigg)
					\quad \mathcal{G}^t-\text{stably},
				\end{align*}
				where  $\covII_{F,t} = g_t(F_t)$  and the function $g_t(\cdot)$ is defined in Assumption \ref{ass:mom-clt}.\ref{ass:asy-normal-add-term-thm-loading}. 
				
				Note that $\bm{\varepsilon}_{F,t, 1}  $ and $\bm{\varepsilon}_{F,t, 2} + \bm{\varepsilon}_{F,t, 3}  $ are asymptotically independent because the randomness of $\bm{\varepsilon}_{F,t, 1}  $ comes from the cross-section average of $ W_{it} \Lambda_i e_{it}$, and the randomness of $\bm{\varepsilon}_{F,t, 2} + \bm{\varepsilon}_{F,t, 3}  $  comes from $\frac{1}{T} \sum_{s=1}^T F_s F_s^\T - \frac{1}{|\tlq_{li}|} \sum_{s \in \tlq_{li}} F_s F_s^\T $.  Then, we have
				\begin{align*}
					\sqrt{\delta_{NT}} (\tilde F_t -  (H^{-1})^\T F_t) \rightarrow \calN \Bigg(0,  Q \Sigma_{\Lambda,t}^\I  \Big[  \problim \Big(  \frac{\delta_{NT}}{N} \covI_{F,t} + \frac{\delta_{NT}}{T} \covII_{F,t} \Big) \Big] \Sigma_{\Lambda,t}^\I  Q^\T  \Bigg) 
					\quad \mathcal{G}^t-\text{stably}.
				\end{align*}
				If we left-multiply $\tilde F_t -  (H^{-1})^\T F_t$ by $H^\T$, the delta method implies that  
				\begin{align*}
					\sqrt{\delta_{NT}} (H^\T \tilde F_t -  F_t) \rightarrow\calN \Bigg(0,  \Sigma_{\Lambda,t}^\I  \Big[  \problim \Big(  \frac{\delta_{NT}}{N} \covI_{F,t} + \frac{\delta_{NT}}{T} \covII_{F,t} \Big) \Big] \Sigma_{\Lambda,t}^\I   \Bigg)
					\quad \mathcal{G}^t-\text{stably},
				\end{align*}
				or equivalently, 
				\begin{align*}
					\sqrt{\delta_{NT}} \Sigma_{F,t}^{-1/2} (H^\T \tilde F_t -  F_t) \xrightarrow{d} \calN \big(0,  I_r \big)
				\end{align*}
				for $\Sigma_{F,t} =  \Sigma_{\Lambda,t}^\I  \Big[ \Big(  \frac{\delta_{NT}}{N} \covI_{F,t} + \frac{\delta_{NT}}{T} \covII_{F,t} \Big) \Big] \Sigma_{\Lambda,t}^\I $.
				
				
			\end{proof}
			
			\subsubsection{Proof of Theorem \ref{theorem:asy-normal-equal-weight}.3}
			\begin{proof}[Proof of Theorem \ref{theorem:asy-normal-equal-weight}.3]
				From $\tilde C_{jt} = \tilde \Lambda_j^\T \tilde F_t$ and $C_{jt} = \Lambda_j^\T F_t$, we have
				\[\tilde C_{jt} - C_{jt} = \Lambda_j^\T H^\T  (\tilde F_t - (H^\T)^{-1} F_t) + (\tilde \Lambda_j - H \Lambda_j)^\T \tilde F_t + o_P(1/\sqrt{\delta_{NT}}). \]
				The second term can be written as 
				\begin{eqnarray*}
					(\tilde \Lambda_j - H \Lambda_j)^\T \tilde F_t &=&  (\tilde \Lambda_j - H \Lambda_j)^\T (H^\T)^\I  F_t +  (\tilde \Lambda_j - H \Lambda_j)^\T ( \tilde F_t - (H^\T)^\I  F_t)  \\
					&=& (\tilde \Lambda_j - H \Lambda_j)^\T (H^\T)^\I  F_t + o_P(1/\sqrt{\delta_{NT}}).
				\end{eqnarray*}
				Thus, 
				\[\tilde C_{jt} - C_{jt} = \Lambda_j^\T H^\T  (\tilde F_t - (H^\T)^{-1} F_t)  + (\tilde \Lambda_j - H \Lambda_j)^\T (H^\T)^\I  F_t + o_P(1/\sqrt{\delta_{NT}}). \]
				Following Theorem 3 in \cite{bai2003inferential}, we can show that $H^\T H = \Lp \frac{\Lambda^\T \Lambda}{N} \Rp^\I + O_P \Lp \frac{1}{\delta_{NT}} \Rp$. Then,
				\begin{align*}
					\sqrt{\delta_{NT}} (\tilde C_{jt} - C_{jt}) =& \sqrt{\delta_{NT}} \Lambda_j^\T H^\T  (\tilde F_t - (H^\T)^{-1} F_t) + \sqrt{\delta_{NT}} F_t^\T H^\I (\tilde \Lambda_j - H \Lambda_j) + O_P \Lp \frac{1}{\sqrt{\delta_{NT}}} \Rp  \\   
					=& \sqrt{\delta_{NT}} \Lambda_j^\T  \Big(\frac{1}{N}  \sum_{i = 1}^N W_{it} \Lambda_i  \Lambda_i^\T  \Big)^\I  \Big(\frac{1}{N}  \sum_{i = 1}^N W_{it} \Lambda_i  e_{it}    \Big)  \\
					&- \sqrt{\delta_{NT}} \Lambda_j^\T H^\T \cdot (H^\T)^\I \tilde \Sigma_{\Lambda, t}^\I \Big(I_r \otimes (H^\T \tilde D^\I (H^\T)^{-1} F_t )^\T \Big) \tvec(\mathbf{X}_t) \\
					&+\sqrt{\delta_{NT}}  F_t^\T H^\I \cdot \tilde D^{-1} H  \frac{1}{N} \sum_{i=1}^N  \Lambda_i  \Lambda_i^\T  \frac{1}{|\tlq_{ij}|} \sum_{t \in \tlq_{ij}} F_t e_{jt}\\ 
					& +\sqrt{\delta_{NT}}  F_t^\T H^\I \tilde D^\I  H  \cdot \big(\Lambda_j^\T \otimes I_r  \big) \tvec(X_j )  + O_P \Lp \frac{1}{\sqrt{\delta_{NT}}} \Rp \\
					=& \sqrt{\delta_{NT}} \Lambda_j^\T  \Big(\frac{1}{N}  \sum_{i = 1}^N W_{it} \Lambda_i  \Lambda_i^\T  \Big)^\I  \Big(\frac{1}{N}  \sum_{i = 1}^N W_{it} \Lambda_i  e_{it}    \Big)  \\
					&- \sqrt{\delta_{NT}} \Lambda_j^\T  \tilde \Sigma_{\Lambda, t}^\I \Bigg(I_r \otimes \Big(\Big(\frac{\Lambda^\T \Lambda}{N} \Big)^\I \Big( \frac{F^\T F}{T} \Big)^\I  F_t \Big)^\T \Bigg) \tvec(\mathbf{X}_t) \\
					&+\sqrt{\delta_{NT}}  F_t^\T \Big(\frac{F^\T F}{T} \Big)^\I  \Big(\frac{\Lambda^\T \Lambda}{N} \Big)^\I  \frac{1}{N} \sum_{i=1}^N  \Lambda_i  \Lambda_i^\T  \frac{1}{|\tlq_{ij}|} \sum_{t \in \tlq_{ij}} F_t e_{jt}\\ 
					& +\sqrt{\delta_{NT}}  F_t^\T \Big(\frac{F^\T F}{T} \Big)^\I  \Big(\frac{\Lambda^\T \Lambda}{N} \Big)^\I  \cdot \big(\Lambda_j^\T \otimes I_r  \big) \tvec(X_j )  + O_P \Lp \frac{1}{\sqrt{\delta_{NT}}} \Rp ,
				\end{align*}
				where the last equality follows Lemma \ref{lemma:HDinvHTinv} and $\tilde \Sigma_{\Lambda, t} = \frac{1}{N} \sum_{i = 1}^N W_{it}  \Lambda_i  \Lambda_i^\T$, \\ $X_j = \frac{1}{N} \sum_{l = 1}^N \Lambda_l \Lambda_l^\T  \Big( \frac{1}{|\tlq_{lj}|} \sum_{t \in \tlq_{lj}} F_t F_t^\T - \frac{1}{T} \sum_{t = 1}^T F_t F_t^\T \Big)$, and \\ $\mathbf{X}_t = \frac{1}{N^2} \sum_{i =1}^N \sum_{l = 1}^N  \Lambda_l   \Lambda_l^\T  \Lp \frac{1}{|\tlq_{li}|} \sum_{s \in \tlq_{li}} F_s F_s^\T - \frac{1}{T} \sum_{s=1}^T F_s F_s^\T   \Rp W_{it}  \Lambda_i \Lambda_i^\T$. Note that \\ $\mathbf{X}_t = \frac{1}{N} \sum_{i = 1}^N W_{it} X_i \Lambda_i \Lambda_i^\T$, and then $\mathbf{X}_t$ and $X_j$ are correlated. 
				The other leading terms in $\sqrt{\delta_{NT}} (\tilde C_{jt} - C_{jt}) $ are asymptotically independent. Combining these results with Assumption \ref{ass:mom-clt}.\ref{ass:asy-normal-add-term-thm-loading}, we conclude that 
				\begin{align*}
					\sqrt{\delta_{NT}} (\tilde C_{jt} - C_{jt}) \rightarrow& \calN \Big( 0,  \problim \Big(  \frac{\delta_{NT}}{N} \Lambda_j^\T \Sigma_{\Lambda,t}^\I \covI_{F,t} \Sigma_{\Lambda,t}^\I \Lambda_j + \frac{\delta_{NT}}{T} F_t^\T \Sigma_F^\I \Sigma_\Lambda^\I \covI_{\Lambda,j}\Sigma_\Lambda^\I \Sigma_F^\I F_t \\
					& \quad + \frac{\delta_{NT}}{T}  \Lambda_j^\T \Sigma_{\Lambda,t}^\I  \covII_{F,t}   \Sigma_{\Lambda,t}^\I \Lambda_j 
					+  \frac{\delta_{NT}}{T} F_t^\T  \Sigma_F^\I \Sigma_\Lambda^\I \covII_{\Lambda,j} \Sigma_\Lambda^\I \Sigma_F^\I F_t \\
					& \quad-  2 \cdot \frac{\delta_{NT}}{T}  \Lambda_j^\T \Sigma_{\Lambda,t}^\I   \covIII_{\Lambda, F, j, t} \Sigma_\Lambda^\I \Sigma_F^\I F_t \Big) \Big) 
					\quad \mathcal{G}^t-\text{stably}
				\end{align*}
				$ \covIII_{\Lambda, F, j, t} = g^{\cov}_{j,t}(\Lambda_j, F_t)$, and the function $ g^{\cov}_{j,t}(\cdot,\cdot)$ is defined in Assumption \ref{ass:mom-clt}.\ref{ass:asy-normal-add-term-thm-loading},
				or equivalently,
				\begin{align*}
					\sqrt{\delta_{NT}} \Sigma_{C,jt}^{-1/2} (\tilde C_{jt} - C_{jt}) \rightarrow& \calN ( 0, 1 ) 
				\end{align*}
				for 
				\begin{align*}
					\Sigma_{C,jt} =&  \frac{\delta_{NT}}{N} \Lambda_j^\T \Sigma_{\Lambda,t}^\I \covI_{F,t} \Sigma_{\Lambda,t}^\I \Lambda_j + \frac{\delta_{NT}}{T} F_t^\T \Sigma_F^\I \Sigma_\Lambda^\I \covI_{\Lambda,j}\Sigma_\Lambda^\I \Sigma_F^\I F_t \\
					& \quad + \frac{\delta_{NT}}{T}  \Lambda_j^\T \Sigma_{\Lambda,t}^\I  \covII_{F,t}   \Sigma_{\Lambda,t}^\I \Lambda_j  +  \frac{\delta_{NT}}{T} F_t^\T  \Sigma_F^\I \Sigma_\Lambda^\I \covII_{\Lambda,j} \Sigma_\Lambda^\I \Sigma_F^\I F_t -  2 \cdot \frac{\delta_{NT}}{T}  \Lambda_j^\T \Sigma_{\Lambda,t}^\I   \covIII_{\Lambda, F, j, t} \Sigma_\Lambda^\I \Sigma_F^\I F_t 
				\end{align*}
			\end{proof}
			
			\subsection{Proof of Theorem \ref{theorem:asy-normal}: Asymptotic Distribution of Probability Weighed Estimator}
			
			For notation convenience, we use the notation $\spsi = P(W_{it} = 1|S_i) $ throughout the proof of Theorem \ref{theorem:asy-normal}.
			\subsubsection{Proof of Theorem \ref{theorem:asy-normal}.1}
			\begin{lemma}\label{lemma:f-est-error-times-f-and-e}
				Under Assumptions \ref{ass:obs}, \ref{ass:factor-model}, \ref{ass:factor-model-conditional}, and \ref{ass:mom-clt-conditional}, we have
				\begin{enumerate}
					\item $\frac{1}{N} \sum_{i = 1}^N \frac{W_{it}}{\spsi}  \Lp \tilde \Lambda_i -  H_i \Lambda_i \Rp  e_{it}  = O_P \Lp \frac{1}{\delta_{NT}} \Rp$
					\item $\frac{1}{N} \sum_{i = 1}^N \frac{W_{it}}{\spsi}  \Lp \tilde \Lambda_i -  H \Lambda_i \Rp  e_{it}  = O_P \Lp \frac{1}{\delta_{NT}} \Rp$
					\item $\frac{1}{N} \sum_{i = 1}^N \frac{W_{it}}{\spsi}  \Lp \tilde \Lambda_i - H_i \Lambda_i  \Rp  \Lambda_i^\T = O_P \Lp \frac{1}{\delta_{NT}} \Rp$.
				\end{enumerate}
			\end{lemma}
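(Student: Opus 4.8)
The plan is to observe that the three expressions in Lemma \ref{lemma:f-est-error-times-f-and-e} are precisely the quantities treated in Lemma \ref{lemma:f-est-error-times-f-and-e-adj}, but with the indicator weight $W_{it}$ replaced by the inverse-propensity weight $\frac{W_{it}}{\spsi}$. The crucial structural fact is the overlap condition in Assumption \ref{ass:obs}.2, which gives $\spsi \geq \underline{p} > 0$, so that $\frac{1}{\spsi} \leq \frac{1}{\underline p}$ is uniformly bounded. Consequently every bound in the unconditional argument carries over after absorbing this bounded factor into the generic constant $M$, provided each unconditional moment or central-limit hypothesis is replaced by its conditional counterpart. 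First I would set up the same starting point as before: substitute the consistency decomposition $\tilde \Lambda_j = H_j \Lambda_j + \tilde D^\I \Lp \frac{1}{N}\sum_i \tilde\Lambda_i \gamma(i,j) + \frac{1}{N}\sum_i \tilde\Lambda_i \zeta_{ij} + \frac{1}{N}\sum_i \tilde\Lambda_i \eta_{ij} + \frac{1}{N}\sum_i \tilde\Lambda_i \xi_{ij} \Rp$ into each of the three expressions, producing four sums indexed by $\gamma$, $\zeta$, $\eta$, $\xi$ (the terms $\RNum{1}$--$\RNum{4}$ of the earlier proof).

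For parts 1 and 3 (the versions terminated by $e_{it}$ and by $\Lambda_i^\T$, respectively) the plan is to treat each of these four sums by the same two-way split used before: write $\tilde\Lambda_l = (\tilde\Lambda_l - H\Lambda_l) + H\Lambda_l$ inside each sum. The $(\tilde\Lambda_l - H\Lambda_l)$-piece is controlled by the mean-square consistency $\frac{1}{N}\sum_l \norm{\tilde\Lambda_l - H\Lambda_l}^2 = O_P(1/\delta)$ of Theorem \ref{thm:consistency-same-H} together with Lemma \ref{lemma:prep-consistency}, and the bounded weight $\frac{1}{\spsi}$ does not alter these orders. The $H\Lambda_l$-piece is where the conditional assumptions enter: wherever the unconditional proof took an expectation of a product of loadings and invoked $\+E\norm{\Lambda_i}^k \leq \bar\Lambda$, I would instead condition on $S$ and use the conditional moment bounds of Assumptions \ref{ass:factor-model-conditional} and \ref{ass:mom-clt-conditional} (e.g. $\+E[\norm{\Lambda_i}^8|S]\leq\overline{\Lambda}$), and wherever a second-moment-of-average bound was used (Assumption \ref{ass:mom-clt}.1--2), I would invoke the conditional analogues Assumption \ref{ass:mom-clt-conditional}.1--2, whose summands already carry the factor $\frac{W_{it}}{\spsi}$. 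The key simplification is that, conditional on $S$, $W$ is independent of $\Lambda$ and of $e$, so $\+E\big[\frac{W_{it}}{\spsi}\big|S_i\big] = 1$ and the weight decouples cleanly from the loadings and errors; this is what makes the cross terms with $\+E[e_{is}e_{it}]$ vanish for $i\neq s$ and reduces the orders term by term exactly as before. Collecting the four rates, the dominant contributions are $O_P(1/\sqrt{N\delta})$ from the $\gamma$-sum and $O_P(1/\sqrt{T\delta})$ from the $\zeta$-, $\eta$-, $\xi$-sums, whose maximum is $O_P(1/\delta)$.

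Part 2 I would deduce from part 1 via the decomposition $\frac{1}{N}\sum_i \frac{W_{it}}{\spsi}(\tilde\Lambda_i - H\Lambda_i)e_{it} = \frac{1}{N}\sum_i \frac{W_{it}}{\spsi}(\tilde\Lambda_i - H_i\Lambda_i)e_{it} + \frac{1}{N}\sum_i \frac{W_{it}}{\spsi}(H_i - H)\Lambda_i e_{it}$. The first summand is part 1; for the residual term I would expand $H_i - H$ through its definition, isolate the average $\frac{1}{N}\sum_i \frac{W_{it}}{\spsi}\Lp \frac{1}{T}\sum_s F_s F_s^\T - \frac{1}{|\tlq_{li}|}\sum_{s\in\tlq_{li}} F_s F_s^\T \Rp \Lambda_i e_{it}$, and bound its second conditional moment using Assumption \ref{ass:mom-clt-conditional}.\ref{ass:add-mom-three-sum-lam-err-conditional} (the conditional replacement for \ref{ass:mom-clt}.\ref{ass:add-mom-three-sum-lam-err}), giving $O_P(1/\sqrt{NT})$, which is dominated by $O_P(1/\delta)$.

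The main obstacle is a bookkeeping one rather than a conceptual one: keeping the conditioning consistent throughout. Every time the earlier proof used independence of $W$ and $\Lambda$, or an unconditional moment of $\Lambda$, one must verify that the correct statement here is \emph{conditional} independence given $S$ together with the corresponding conditional moment bound, and that the resulting conditional expectations of the bounded weights collapse to constants. Once this mapping of unconditional to conditional hypotheses is made explicit, no new estimates are required and the orders are identical to those in Lemma \ref{lemma:f-est-error-times-f-and-e-adj}.
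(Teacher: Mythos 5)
Your proposal is correct and is essentially the paper's own argument: the paper proves this lemma by rerunning the proof of Lemma \ref{lemma:f-est-error-times-f-and-e-adj} with $W_{it}$ replaced by $W_{it}/\spsi$, using the overlap bound $\spsi \geq \underline{p} > 0$ to absorb the weight into generic constants, substituting Assumptions \ref{ass:mom-clt-conditional}.1--2 for their unconditional counterparts in the $\gamma$-, $\zeta$-, $\eta$-, $\xi$-sums after the split $\tilde\Lambda_l = (\tilde\Lambda_l - H\Lambda_l) + H\Lambda_l$, and handling the $(H_i - H)\Lambda_i e_{it}$ remainder in part 2 via Assumption \ref{ass:mom-clt-conditional}.\ref{ass:add-mom-three-sum-lam-err-conditional} to obtain $O_P(1/\sqrt{NT})$, exactly as you describe. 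One phrasing deserves care: under the general error assumptions the moments $\+E[e_{is}e_{it}]$ do not vanish but are merely summable, so the uncentered piece of the $\eta$-sum (the paper's term $\text{\RNum{3}}_{2,1}$) is bounded through the weak-dependence condition of Assumption \ref{ass:factor-model}.3(d) rather than through conditional independence --- which is what the paper does and is still covered by your blanket claim that the unconditional bounds carry over with the bounded weight.
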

			
			\begin{proof}[Proof of Lemma \ref{lemma:f-est-error-times-f-and-e}]
				\begin{enumerate}
					\item $\frac{1}{N} \sum_{i = 1}^N \frac{W_{it}}{\spsi}  \Lp \tilde \Lambda_i -  H_i \Lambda_i \Rp  e_{it}$ has the decomposition 
					\begin{eqnarray*}
						&& \frac{1}{N} \sum_{i = 1}^N \frac{W_{it}}{\spsi} \Lp \tilde \Lambda_i - H_i \Lambda_i \Rp e_{it} \\
						&=& \tilde D^\I \Big[ \underbrace{ \frac{1}{N^2} \sum_{i = 1}^N \sum_{l = 1}^N \frac{W_{it}}{\spsi}  \tilde \Lambda_l \gamma(l,i) e_{it} }_{\text{\RNum{1}} }  + \underbrace{\frac{1}{N^2} \sum_{i = 1}^N \sum_{l = 1}^N \frac{W_{it}}{\spsi}  \tilde \Lambda_l \zeta_{li} e_{it}}_{\text{\RNum{2}}  }  \\
						&& + \underbrace{\frac{1}{N^2} \sum_{i = 1}^N  \sum_{l = 1}^N \frac{W_{it}}{\spsi}  \tilde \Lambda_l \eta_{li} e_{it} }_{\text{\RNum{3}}}  + \underbrace{\frac{1}{N^2} \sum_{i = 1}^N  \sum_{l = 1}^N  \frac{W_{it}}{\spsi} \tilde \Lambda_l \xi_{li} e_{it} }_{\text{\RNum{4}}}   \Big].
					\end{eqnarray*}
					We decompose the term $\text{\RNum{1}} $ further into two parts: 
					\begin{align*}
						\text{\RNum{1}} &= \underbrace{\frac{1}{N^2} \sum_{i = 1}^N \sum_{l = 1}^N  \frac{W_{it}}{\spsi} (\tilde \Lambda_l - H \Lambda_l) \gamma(l,i) e_{it}}_{\text{\RNum{1}}_1}  + H \cdot \underbrace{\frac{1}{N^2} \sum_{i = 1}^N \sum_{l = 1}^N  \frac{W_{it}}{\spsi}  \Lambda_l \gamma(l,i) e_{it}}_{\text{\RNum{1}}_2}.
					\end{align*}
					The first term $\text{\RNum{1}}_1$ is bounded by
					\begin{align*}
						\norm{\text{\RNum{1}}_1} &\leq \frac{1}{\sqrt{N}} \underbrace{\Big( \frac{1}{N} \sum_{l = 1}^N \norm{\tilde \Lambda_l - H \Lambda_l}^2 \Big)^{1/2}}_{O_P \Lp \frac{1}{\sqrt{\delta_{NT}}} \Rp}     \Big( \underbrace{\frac{1}{N}\sum_{i = 1}^N \sum_{l = 1}^N \frac{W_{it}}{(\spsi)^2} |\gamma(l,i)|^2}_{\leq \frac{M}{\underline{p}^2} \text{ from Lemma \ref{lemma:prep-consistency}.1}}  \cdot \underbrace{\frac{1}{N}\sum_{i = 1}^N  e_{it}^2}_{O_P(1)}  \Big)^{1/2} \\
						&= O_P \Lp \frac{1}{\sqrt{N\delta_{NT}}} \Rp.
					\end{align*}
					The second term $\text{\RNum{1}}_2$ satisfies 
					\begin{align*}
						\+E[\norm{\text{\RNum{1}}_2}] &\leq  \frac{1}{N^2} \sum_{i = 1}^N \sum_{l = 1}^N |\gamma(l,i)| \underbrace{\+E \Big[\frac{W_{it}}{\spsi} \norm{\Lambda_l } \Big]}_{\leq \bar \Lambda}  \underbrace{\+E[| e_{it}|]}_{\leq M}   = O\Lp \frac{1}{N } \Rp.
					\end{align*}
					Hence, it holds that $\text{\RNum{1}} = O_P \Lp \frac{1}{\sqrt{N\delta_{NT}}} \Rp + O_P \Lp \frac{1}{N } \Rp= O_P \Lp \frac{1}{\sqrt{N\delta_{NT}}} \Rp$. For the term $\text{\RNum{2}} $, we have the decomposition
					\begin{align*}
						\text{\RNum{2}} &= \underbrace{\frac{1}{N^2} \sum_{i = 1}^N \sum_{l = 1}^N  \frac{W_{it}}{\spsi} (\tilde \Lambda_l - H \Lambda_l) \zeta_{li} e_{it}}_{\text{\RNum{2}}_1}  + H \cdot \underbrace{\frac{1}{N^2} \sum_{i = 1}^N \sum_{l = 1}^N  \frac{W_{it}}{\spsi} \Lambda_l \zeta_{li} e_{it}}_{\text{\RNum{2}}_2}      .
					\end{align*}
					For the second term $\text{\RNum{2}}_2$, we have 
					\begin{align*}
						\text{\RNum{2}}_2 &=  \frac{1}{N}  \sum_{i = 1}^N \underbrace{\Ls \frac{1}{N} \sum_{l = 1}^N  \frac{W_{it}}{\spsi} \Lambda_l \Big[ \frac{1}{|\tlq_{il}|} \sum_{s \in \tlq_{il}} e_{is} e_{ls} - \+E[e_{is} e_{ls}] \Big] \Rs}_{z_i}   e_{it}   .
					\end{align*}
					Assumption \ref{ass:mom-clt-conditional}.1 implies $\+E \norm{z_i}^2= O\Lp \frac{1}{NT} \Rp$. Hence, we obtain
					\[\+E \big[ \norm{\text{\RNum{2}}_2} \big] \leq \frac{1}{N} \sum_{i = 1}^N \+E \norm{z_i e_{it}} \leq \frac{1}{N} \sum_{i = 1}^N (\+E \big[ \norm{z_i}^2 \big] \+E e_{it}^2)^{1/2}  = O\Lp \frac{1}{\sqrt{NT}} \Rp. \]
					We conclude that $\text{\RNum{2}}_2 = O_P \Lp \frac{1}{\sqrt{NT}} \Rp$. For the first term $\text{\RNum{2}}_1$, we have the bound
					\begin{align*}
						\norm{\text{\RNum{2}}_1} &\leq \underbrace{\Lp \frac{1}{N} \sum_{l = 1}^N \norm{\tilde \Lambda_l - H \Lambda_l}^2  \Rp^{1/2}}_{O_P \Lp \frac{1}{ \sqrt{\delta_{NT}}} \Rp }  \underbrace{\Lp \frac{1}{N} \sum_{l = 1}^N \Big( \frac{1}{N} \sum_{i = 1}^N \frac{W_{it}}{\spsi} \zeta_{li} e_{it}  \Big)^2  \Rp^{1/2}}_{O_P \Lp \frac{1}{ \sqrt{T}} \Rp} , 
					\end{align*}
					where the second term is $O_P \Lp \frac{1}{ \sqrt{T}} \Rp$ following from
					\begin{align*}
						\frac{1}{N} \sum_{l = 1}^N \Big( \frac{1}{N} \sum_{i = 1}^N \frac{W_{it}}{\spsi} \zeta_{li} e_{it}  \Big)^2 
						\leq& \frac{1}{N} \sum_{l = 1}^N \Big( \frac{1}{N} \sum_{i = 1}^N \frac{W_{it}}{(\spsi)^2}  \zeta_{li}^2    \Big) \cdot \Big( \frac{1}{N} \sum_{i = 1}^N  e_{it}^2  \Big) \\
						\leq& \underbrace{\frac{1}{\underline{p}^2}  \frac{1}{N} \sum_{l = 1}^N \Big( \frac{1}{N} \sum_{i = 1}^N \zeta_{li}^2    \Big) \cdot \Big( \frac{1}{N} \sum_{i = 1}^N  e_{it}^2  \Big)}_{\substack{ =O_P \Lp \frac{1}{ T} \Rp \text{ follows from } \+E\Ls \frac{1}{N} \sum_{l = 1}^N \Big( \frac{1}{N} \sum_{i = 1}^N \zeta_{li}^2    \Big) \cdot \Big( \frac{1}{N} \sum_{i = 1}^N  e_{it}^2  \Big)\Rs  \\ =  \frac{1}{N^3} \sum_{l = 1}^N  \sum_{i = 1}^N   \sum_{j = 1}^N \+E[\zeta_{li}^2 e_{jt}^2 ] \\ \leq  \frac{1}{N^3} \sum_{l = 1}^N  \sum_{i = 1}^N   \sum_{j = 1}^N (\+E[\zeta_{li}^4] \+E[ e_{jt}^4 ])^{1/2}
								=  O\Lp \frac{1}{ T} \Rp }  } 
					\end{align*}
					Hence, we obtain the rate $\text{\RNum{2}} = O_P \Lp \frac{1}{\sqrt{T\delta_{NT}}} \Rp$. We aslo decompose the third term $\text{\RNum{3}}$ into two parts 
					\begin{align*}
						\text{\RNum{3}} &= \underbrace{ \frac{1}{N^2} \sum_{i = 1}^N  \sum_{l = 1}^N \frac{W_{it}}{\spsi}  (\tilde \Lambda_l - H \Lambda_l)\eta_{li} e_{it}}_{\text{\RNum{3}}_1  }  + H \cdot \underbrace{\frac{1}{N^2} \sum_{i = 1}^N  \sum_{l = 1}^N \frac{W_{it}}{\spsi}  \Lambda_l \eta_{li} e_{it}}_{\text{\RNum{3}}_2}    . 
					\end{align*}
					The first term $\text{\RNum{3}}_1 $ is bounded by
					\begin{align*}
						\norm{\text{\RNum{3}}_1} &\leq     \underbrace{\Lp \frac{1}{N} \sum_{l = 1}^N \norm{\tilde \Lambda_l - H \Lambda_l}^2  \Rp^{1/2}}_{O_P \Lp \frac{1}{ \sqrt{\delta_{NT}}} \Rp } \Lp \frac{1}{N} \sum_{l = 1}^N \Big( \frac{1}{N} \sum_{i = 1}^N \frac{W_{it}}{\spsi} \eta_{li} e_{it}  \Big)^2  \Rp^{1/2}, 
					\end{align*}
					and the second term $\frac{1}{N} \sum_{l = 1}^N \Big( \frac{1}{N} \sum_{i = 1}^N \frac{W_{it}}{\spsi} \eta_{li} e_{it}  \Big)^2 $ satisfies
					\begin{align*}
						& \frac{1}{N} \sum_{l = 1}^N \Big( \frac{1}{N} \sum_{i = 1}^N \frac{W_{it}}{\spsi} \eta_{li} e_{it}  \Big)^2 \\
						\leq&\frac{1}{N} \sum_{l = 1}^N  \Big( \frac{1}{N} \sum_{i = 1}^N \eta_{li}^2   \Big) \cdot   \Big( \frac{1}{N} \sum_{i = 1}^N \frac{W_{it}}{(\spsi)^2} e_{it}^2   \Big)\\
						\leq& \frac{1}{N} \sum_{l = 1}^N \Big(\frac{\norm{\Lambda_l}^2 }{N} \sum_{i = 1}^N \norm{\frac{1}{|\tlq_{li}|} \sum_{s \in \tlq_{li}}  F_s e_{is}   }^2  \Big) \Big( \frac{1}{N} \sum_{i = 1}^N \frac{W_{it}}{(\spsi)^2} e_{it}^2   \Big)  =  O_P \Lp \frac{1}{ T} \Rp,
					\end{align*}
					following from
					\begin{align*}
						& \+E \Ls \frac{1}{N} \sum_{l = 1}^N \Big(\frac{\norm{\Lambda_l}^2 }{N} \sum_{i = 1}^N \norm{\frac{1}{|\tlq_{li}|} \sum_{s \in \tlq_{li}}  F_s e_{is}   }^2  \Big) \Big( \frac{1}{N} \sum_{i = 1}^N \frac{W_{it}}{(\spsi)^2} e_{it}^2   \Big)   \Rs   \\
						=&  \frac{1}{N} \sum_{l = 1}^N  \+E[\norm{\Lambda_l}^2 ] \frac{1}{N^2} \sum_{i = 1}^N \sum_{j=1}^N \+E \Ls  \norm{\frac{1}{|\tlq_{li}|} \sum_{s \in \tlq_{li}}  F_s e_{is}   }^2 \frac{W_{jt}}{P(W_{jt} = 1|S_j)^2} e_{jt}^2  \Rs \\
						\leq& \frac{1}{N} \sum_{l = 1}^N  \+E[\norm{\Lambda_l}^2] \frac{1}{N^2} \sum_{i = 1}^N \sum_{j=1}^N \frac{1}{P(W_{jt} = 1|S_j)} \bigg( \underbrace{\+E \Ls  \norm{\frac{1}{|\tlq_{li}|} \sum_{s \in \tlq_{li}}  F_s e_{is}   }^4  \Rs }_{\leq \frac{M}{|\tlq_{li}|^2}}   \+E \Ls   e_{jt}^4  \Rs \bigg)^{1/2} \\
						\leq& \frac{1}{\underline{p} } \cdot \max_{li} \frac{1}{|\tlq_{li}|} \cdot \bar{\Lambda} = O\Lp \frac{1}{ T} \Rp.
					\end{align*}
					Hence, it holds that $\text{\RNum{3}}_1 = O_P \Lp \frac{1}{\sqrt{T\delta_{NT}} } \Rp$. Next let us consider $\text{\RNum{3}}_2$: 
					\begin{align*}
						\text{\RNum{3}}_2 &= \frac{1}{N^2}  \sum_{l = 1}^N \Lambda_l \Lambda_l^\T  \sum_{i = 1}^N \frac{W_{it}}{\spsi}   \frac{1}{|\tlq_{li}|} \sum_{s \in \tlq_{li}}  F_s e_{is} e_{it} \\
						&= \underbrace{\frac{1}{N^2}  \sum_{l = 1}^N \Lambda_l \Lambda_l^\T  \sum_{i = 1}^N \frac{W_{it}}{\spsi}   \frac{1}{|\tlq_{li}|} \sum_{s \in \tlq_{li}}  F_s \+E[e_{is} e_{it}] }_{\text{\RNum{3}}_{2,1}} \\
						&\quad + \underbrace{\frac{1}{N^2}  \sum_{l = 1}^N \Lambda_l \Lambda_l^\T  \sum_{i = 1}^N \frac{W_{it}}{\spsi}   \frac{1}{|\tlq_{li}|} \sum_{s \in \tlq_{li}}  F_s \big( e_{is} e_{it} - \+E[e_{is} e_{it}]  \big)}_{\text{\RNum{3}}_{2,2}}\\
					\end{align*}
					For $\text{\RNum{3}}_{2,1}$ we obtain 
					\begin{align*}
						\norm{\text{\RNum{3}}_{2,1}}^2 =& \Bigg(\frac{1}{N} \sum_{l = 1}^N \norm{\Lambda_l}^4 \Bigg) \Bigg(\frac{1}{N} \sum_{l = 1}^N \norm{\frac{1}{N} \sum_{i = 1}^N    \frac{W_{it}}{|\tlq_{li}|} \sum_{s \in \tlq_{li}}  F_s \+E[e_{is} e_{it}]  }^2\Bigg) \\
						\leq& \underbrace{\Bigg(\frac{1}{N} \sum_{l = 1}^N \norm{\Lambda_l}^4 \Bigg)  }_{O_P(1)} \Bigg(\frac{1}{N} \sum_{l = 1}^N \bigg(\frac{1}{NT} \sum_{i = 1}^N \sum_{s \in \tlq_{li}}  \norm{ \frac{T W_{it}}{\spsi |\tlq_{li}|} F_s}^2    \bigg) \bigg( \frac{1}{NT} \sum_{i = 1}^N \sum_{s \in \tlq_{li}}  (\+E[e_{is} e_{it}] )^2\bigg) \Bigg) .
					\end{align*}
					Assumption \ref{ass:factor-model}.3(d) implies
					\begin{align*}
						\sum_{i = 1}^N \sum_{s \in \tlq_{li}}  (\+E[e_{is} e_{it}] )^2 \leq \sum_{i = 1}^N \sum_{s \in \tlq_{li}}  |\+E[e_{is} e_{it}]| \leq  \sum_{i = 1}^N \sum_{t = 1}^T  |\+E[e_{is} e_{it}]| \leq M.
					\end{align*}
					Moreover, it holds that
					\begin{align*}
						\+E \Bigg[\frac{1}{N} \sum_{l = 1}^N \bigg(\frac{1}{NT} \sum_{i = 1}^N \sum_{s \in \tlq_{li}}  \norm{ \frac{T W_{it}}{\spsi |\tlq_{li}|} F_s}^2    \bigg)   \Bigg] =& \frac{T}{N^2 } \sum_{l = 1}^N \sum_{i = 1}^N \frac{1}{|\tlq_{li}|^2} \sum_{s \in \tlq_{li}} \frac{ \+E [W_{it}]}{\spsi}  \+E  \norm{ F_s}^2  \leq M
					\end{align*}
					Thus, we obtain the rate $\norm{\text{\RNum{3}}_{2,1}} = O_P \Big(\frac{1}{\sqrt{NT}}\big)$. For $\text{\RNum{3}}_{2,2}$, we have 
					\begin{align*}
						\norm{\text{\RNum{3}}_{2,2}}^2 =& \Bigg(\frac{1}{N} \sum_{l = 1}^N \norm{\Lambda_l}^4 \Bigg) \Bigg(\frac{1}{N} \sum_{l = 1}^N \norm{\frac{1}{N} \sum_{i = 1}^N    \frac{W_{it}}{\spsi |\tlq_{li}|} \sum_{s \in \tlq_{li}}  F_s (e_{is} e_{it} - \+E[e_{is} e_{it}])  }^2\Bigg) .
					\end{align*}
					Using Assumption \ref{ass:mom-clt-conditional}.1, we conclude
					\begin{align*}
						& \+E \Bigg[ \frac{1}{N} \sum_{l = 1}^N \norm{\frac{1}{N} \sum_{i = 1}^N    \frac{W_{it}}{\spsi |\tlq_{li}|} \sum_{s \in \tlq_{li}}  F_s (e_{is} e_{it} - \+E[e_{is} e_{it}])  }^2 \Bigg] \\
						=&  \frac{1}{N} \sum_{l = 1}^N  \+E \Bigg[ \norm{\frac{1}{N} \sum_{i = 1}^N    \frac{W_{it}}{\spsi |\tlq_{li}|} \sum_{s \in \tlq_{li}}  F_s (e_{is} e_{it} - \+E[e_{is} e_{it}])  }^2\Bigg] \leq \frac{M}{NT},
					\end{align*}
					and hence $\text{\RNum{3}} = O_P \Lp \frac{1}{\sqrt{T\delta_{NT}}} \Rp$. The last term has the rate $\text{\RNum{4}} = O_P \Lp \frac{1}{\sqrt{T\delta_{NT}}} \Rp$, which can be shown with similar arguments.

					\item 
					\begin{eqnarray*}
						\frac{1}{N} \sum_{i = 1}^N \frac{W_{it}}{\spsi}   \Lp \tilde \Lambda_i -  H \Lambda_i \Rp  e_{it} &=& \underbrace{\frac{1}{N} \sum_{i = 1}^N \frac{W_{it}}{\spsi}   \Lp \tilde \Lambda_i -  H_i \Lambda_i \Rp  e_{it} }_{=O_P \Lp \frac{1}{\delta_{NT}} \Rp \text{ from Lemma \ref{lemma:f-est-error-times-f-and-e}.1}} \\
						&& + \underbrace{\frac{1}{N} \sum_{i = 1}^N \frac{W_{it}}{\spsi}   (H_i - H) \Lambda_i e_{it}}_{\Delta}
					\end{eqnarray*}
					The term $\Delta$ has 
					\begin{align*}
						\norm{\Delta}^2 &=  \norm{ \frac{1}{N^2}  \sum_{l = 1}^N \Lambda_l  \Lambda_l^\T  \Bigg[ \sum_{i =1}^N  \frac{W_{it}}{\spsi}  \Lp \frac{1}{T} \sum_{s=1}^T F_s F_s^\T - \frac{1}{|\tlq_{li}|} \sum_{s \in \tlq_{li}} F_s F_s^\T \Rp \Lambda_i e_{it} \Bigg] }^2 \\
						&\leq \Bigg( \underbrace{\frac{1}{N} \sum_{l = 1}^N \norm{\Lambda_l }^4}_{O_P(1)} \Bigg) \Bigg( \frac{1}{N} \sum_{l = 1}^N  \Big\lVert\underbrace{\frac{1}{N} \sum_{i =1}^N  \frac{W_{it}}{\spsi}  \Lp \frac{1}{T} \sum_{s=1}^T F_s F_s^\T - \frac{1}{|\tlq_{li}|} \sum_{s \in \tlq_{li}} F_s F_s^\T \Rp \Lambda_i e_{it}}_{z_l} \Big\rVert^2 \Bigg).
					\end{align*}
					Assumption \ref{ass:mom-clt-conditional}.\ref{ass:add-mom-three-sum-lam-err} implies $ \+E[\norm{\sqrt{NT} z_l}^2] \leq M$ and thus
					\[\+E \Big[ \frac{1}{N} \sum_{l = 1}^N \norm{z_l}^2\Big] = \frac{1}{N}\sum_{l = 1}^N \+E[\norm{z_l}^2] \leq O \Big(\frac{1}{NT} \Big) \]
					and $\Delta = O_P \Big(\frac{1}{\sqrt{NT}} \Big)$. 
					Hence, we conclude that 
					\begin{align*}
						\frac{1}{N} \sum_{i = 1}^N \frac{W_{it}}{\spsi}   \Lp \tilde \Lambda_i -  H \Lambda_i \Rp  e_{it} &=    O_P \Lp \frac{1}{\delta_{NT}} \Rp +  O_P \Lp \frac{1}{\sqrt{NT}}  \Rp = O_P \Lp \frac{1}{\delta_{NT}} \Rp
					\end{align*}
					\item $\frac{1}{N} \sum_{i = 1}^N \frac{W_{it}}{\spsi} \Lp \tilde \Lambda_i - H_i \Lambda_i  \Rp  \Lambda_i^\T $ has the decomposition
					\begin{eqnarray*}
						\frac{1}{N} \sum_{i = 1}^N \Lp \frac{W_{it}}{\spsi} \tilde \Lambda_i - H_i \Lambda_i \Rp \Lambda_i^\T  
						&=& \tilde D^\I \Big[ \underbrace{ \frac{1}{N^2} \sum_{i = 1}^N \sum_{l = 1}^N \frac{W_{it}}{\spsi} \tilde \Lambda_l \Lambda_i^\T \gamma(l,i)  }_{\text{\RNum{1}} }  + \underbrace{\frac{1}{N^2} \sum_{i = 1}^N \sum_{l = 1}^N \frac{W_{it}}{\spsi} \tilde \Lambda_l \Lambda_i^\T \zeta_{li}  }_{\text{\RNum{2}}  }  \\
						&& + \underbrace{\frac{1}{N^2} \sum_{i = 1}^N  \sum_{l = 1}^N \frac{W_{it}}{\spsi}  \tilde \Lambda_l \Lambda_i^\T \eta_{li}  }_{\text{\RNum{3}}}  + \underbrace{\frac{1}{N^2} \sum_{i = 1}^N  \sum_{l = 1}^N \frac{W_{it}}{\spsi}  \tilde \Lambda_l \Lambda_i^\T \xi_{li} }_{\text{\RNum{4}}}   \Big]
					\end{eqnarray*}
					We decompose term $\text{\RNum{1}} $ further into two parts 
					\begin{align*}
						\text{\RNum{1}} &= \underbrace{\frac{1}{N^2} \sum_{i = 1}^N \sum_{l = 1}^N  \frac{W_{it}}{\spsi} (\tilde \Lambda_l - H \Lambda_l)  \Lambda_i^\T \gamma(l,i) }_{\text{\RNum{1}}_1}  + H \cdot \underbrace{\frac{1}{N^2} \sum_{i = 1}^N \sum_{l = 1}^N  \frac{W_{it}}{\spsi} \Lambda_l  \Lambda_i^\T \gamma(l,i) }_{\text{\RNum{1}}_2}.
					\end{align*}
					The first term $\text{\RNum{1}}_1$ is bounded by
					\begin{align*}
						\norm{\text{\RNum{1}}_1} &\leq \frac{1}{\sqrt{N}} \underbrace{\Big( \frac{1}{N} \sum_{l = 1}^N \norm{\tilde \Lambda_l - H \Lambda_l}^2 \Big)^{1/2}}_{O_P \Lp \frac{1}{\sqrt{\delta_{NT}}} \Rp}     \Big( \underbrace{\frac{1}{N}\sum_{i = 1}^N \sum_{l = 1}^N  |\gamma(l,i)|^2}_{\leq M \text{ from Lemma \ref{lemma:prep-consistency}.1}}  \cdot \underbrace{\frac{1}{N}\sum_{i = 1}^N \frac{W_{it}}{(\spsi)^2} \norm{ \Lambda_i}^2 }_{\leq \frac{1}{N \underline{p}^2} \sum_{i = 1}^N \norm{ \Lambda_i}^2 = O_P(1) }  \Big)^{1/2} \\
						&= O_P \Lp \frac{1}{\sqrt{N\delta_{NT}}} \Rp
					\end{align*}
					The second term $\text{\RNum{1}}_2$ satisfies 
					\begin{align*}
						\+E[\norm{\text{\RNum{1}}_2}  ] &\leq  \frac{1}{N^2} \sum_{i = 1}^N \sum_{l = 1}^N |\gamma(l,i)| \underbrace{\+E \Big[\frac{W_{it}}{\spsi}  \Big]}_{= 1}  \underbrace{\+E \Big[ \norm{\Lambda_l } \norm{\Lambda_i }  \Big]}_{\leq \bar \Lambda}  \underbrace{\+E[| e_{it}|]}_{\leq M}   = O\Lp \frac{1}{N } \Rp.
					\end{align*}
					Hence, we obtain $\text{\RNum{1}} = O_P \Lp \frac{1}{\sqrt{N\delta_{NT}}} \Rp + O_P \Lp \frac{1}{N } \Rp= O_P \Lp \frac{1}{\sqrt{N\delta_{NT}}} \Rp$. 
					
					For the term $\text{\RNum{2}} $, we have the decomposition
					\begin{align*}
						\text{\RNum{2}} &= \underbrace{\frac{1}{N^2} \sum_{i = 1}^N \sum_{l = 1}^N \frac{W_{it}}{\spsi} (\tilde \Lambda_l - H \Lambda_l) \Lambda_i ^\T \zeta_{li} }_{\text{\RNum{2}}_1}  + H \cdot \underbrace{\frac{1}{N^2} \sum_{i = 1}^N \sum_{l = 1}^N  \frac{W_{it}}{\spsi} \Lambda_l \Lambda_i^\T  \zeta_{li} }_{\text{\RNum{2}}_2}      .
					\end{align*}
					For the second term $\text{\RNum{2}}_2$, we obtain 
					\begin{align*}
						\text{\RNum{2}}_2 &=  \frac{1}{N}  \sum_{l = 1}^N  \Lambda_l  \underbrace{\Ls \frac{1}{N}  \sum_{i = 1}^N \frac{W_{it}}{\spsi} \Lambda_i^\T  \Big[ \frac{1}{|\tlq_{il}|} \sum_{s \in \tlq_{il}} e_{is} e_{ls} - \+E[e_{is} e_{ls}] \Big] \Rs}_{z_i}  
					\end{align*}
					Assumption \ref{ass:mom-clt-conditional}.1 implies $\+E \big[ \norm{z_i}^2  \big] = O\Lp \frac{1}{NT} \Rp$. This yields 
					\[\+E \big[ \norm{\text{\RNum{2}}_2}\big] \leq \frac{1}{N} \sum_{l = 1}^N  \+E \Big[\norm{\Lambda_l} \norm{z_l} \Big] \leq  \frac{1}{N} \sum_{l = 1}^N \Big( \+E \Big[\norm{\Lambda_l}^2   \Big] \+E \Big[\norm{z_l}^2  \Big] \Big)^{1/2} = O\Lp \frac{1}{\sqrt{NT}} \Rp. \]
					Hence, it holds that $\text{\RNum{2}}_2 = O_P \Lp \frac{1}{\sqrt{NT}} \Rp$. For the first term $\text{\RNum{2}}_1$, we have
					\begin{align*}
						\norm{\text{\RNum{2}}_1} &\leq \underbrace{\Lp \frac{1}{N} \sum_{l = 1}^N \norm{\tilde \Lambda_l - H \Lambda_l}^2  \Rp^{1/2}}_{O_P \Lp \frac{1}{ \sqrt{\delta_{NT}}} \Rp } \Lp \frac{1}{N} \sum_{l = 1}^N  \norm{\frac{1}{N} \sum_{i = 1}^N \frac{W_{it}}{\spsi} \Lambda_i \zeta_{li} }^2  \Rp^{1/2}  \\
						& \leq O_P \Lp \frac{1}{ \sqrt{\delta_{NT}}} \Rp \cdot \bigg( \underbrace{\max_i \frac{W_{it}}{(\spsi)^2}}_{\leq \frac{1}{\underline{p}^2}} \cdot \underbrace{\Big( \frac{1}{N} \sum_{i = 1}^N  \norm{\Lambda_i}^2 \Big) }_{O_P(1)}  \underbrace{\Big( \frac{1}{N^2}  \sum_{l = 1}^N  \sum_{i = 1}^N   \zeta_{li}^2 \Big)}_{O_P \Lp \frac{1}{ T} \Rp}  \bigg)^{1/2} \\
						&=O_P \Lp \frac{1}{ \sqrt{T\delta_{NT}}} \Rp.
					\end{align*}
					In conclusion, it holds that $\text{\RNum{2}} = O_P \Lp \frac{1}{\sqrt{T\delta_{NT}}} \Rp$. 
					
					For the third term $\text{\RNum{3}}$, we also have a decomposition into two parts
					\begin{align*}
						\text{\RNum{3}} &= \underbrace{ \frac{1}{N^2} \sum_{i = 1}^N  \sum_{l = 1}^N \frac{W_{it}}{\spsi}   (\tilde \Lambda_l - H \Lambda_l)\Lambda_i^\T \eta_{li} }_{\text{\RNum{3}}_1  }  + H \cdot \underbrace{\frac{1}{N^2} \sum_{i = 1}^N  \sum_{l = 1}^N \frac{W_{it}}{\spsi}  \Lambda_l \Lambda_i^\T \eta_{li}}_{\text{\RNum{3}}_2}    . 
					\end{align*}
					For the first term $\text{\RNum{3}}_1 $, we obtain the bound 
					\begin{align*}
						\norm{\text{\RNum{3}}_1} &\leq     \underbrace{\Lp \frac{1}{N} \sum_{l = 1}^N \norm{\tilde \Lambda_l - H \Lambda_l}^2  \Rp^{1/2}}_{O_P \Lp \frac{1}{ \sqrt{\delta_{NT}}} \Rp } \Lp \frac{1}{N} \sum_{l = 1}^N \norm{\frac{1}{N} \sum_{i = 1}^N \frac{W_{it}}{\spsi}  \Lambda_i^\T \eta_{li}}^2  \Rp^{1/2}, 
					\end{align*}
					and the second term $\frac{1}{N} \sum_{l = 1}^N \norm{\frac{1}{N} \sum_{i = 1}^N \frac{W_{it}}{\spsi}  \Lambda_i^\T \eta_{li}}^2  $ satisfies
					\begin{align*}
						& \+E \Ls \frac{1}{N} \sum_{l = 1}^N \norm{\frac{1}{N} \sum_{i = 1}^N \frac{W_{it}}{\spsi}  \Lambda_i^\T \eta_{li}}^2  \Rs \\ =& \+E \Ls \frac{1}{N} \sum_{l = 1}^N \norm{\frac{1}{N} \sum_{i = 1}^N \frac{W_{it}}{\spsi}  \Lambda_i^\T   \frac{1}{|\tlq_{li}|} \sum_{s \in \tlq_{li}}\Lambda_l^\T   F_s e_{is}}^2 \Rs \\
						=&\frac{1}{N} \sum_{l = 1}^N  \+E \Ls \norm{\Lambda_l^\T \frac{1}{N} \sum_{i = 1}^N  \frac{W_{it}}{\spsi} \cdot  \frac{1}{|\tlq_{li}|} \sum_{s \in \tlq_{li}}   F_s \Lambda_i^\T e_{is}}^2 \Rs \\
						\leq&\frac{1}{N} \sum_{l = 1}^N  \max_i \frac{1}{(\spsi)^2}  \cdot \+E \Ls \Lp \frac{1}{N} \sum_{i = 1}^N \norm{\Lambda_l^\T  \Lambda_i} \norm{ \frac{1}{|\tlq_{li}|} \sum_{s \in \tlq_{li}}   F_s e_{is}} \Rp^2  \Rs \\
						\leq&\frac{1}{N^3 \underline{p}^2} \sum_{l = 1}^N  \sum_{i = 1}^N \sum_{j = 1}^N \underbrace{\+E \big[\norm{\Lambda_l^\T  \Lambda_i} \norm{\Lambda_l^\T  \Lambda_j}\big] }_{\leq \bar{\Lambda}}  \Bigg(\underbrace{\+E\norm{ \frac{1}{|\tlq_{li}|} \sum_{s \in \tlq_{li}}   F_s e_{is}}^2}_{O\Lp \frac{1}{ T} \Rp} \underbrace{\+E \norm{ \frac{1}{|\tlq_{lj}|} \sum_{s \in \tlq_{lj}}   F_s e_{js}}^2}_{O\Lp \frac{1}{ T} \Rp}    \Bigg)^{1/2}  \\ 
						=&  O\Lp \frac{1}{ T} \Rp.
					\end{align*}
					Hence, we have the rate $\text{\RNum{3}}_1 = O_P \Lp \frac{1}{\sqrt{T\delta_{NT}} } \Rp$. Next let us consider $\text{\RNum{3}}_2$: 
					\begin{align*}
						\norm{\text{\RNum{3}}_2}^2 &= \norm{\frac{1}{N^2}  \sum_{l = 1}^N \Lambda_l \Lambda_l^\T  \sum_{i = 1}^N  \frac{W_{it}}{\spsi} \cdot \frac{1}{|\tlq_{li}|} \sum_{s \in \tlq_{li}}  F_s \Lambda_i^\T  e_{is} }   \\
						&\leq \underbrace{\Lp \frac{1}{N}  \sum_{l = 1}^N  \norm{\Lambda_l}^4 \Rp}_{O_P(1)} \underbrace{\Lp \frac{1}{N}  \sum_{l = 1}^N \norm{\frac{1}{N} \sum_{i = 1}^N \frac{W_{it}}{\spsi} \cdot  \frac{1}{|\tlq_{li}|} \sum_{s \in \tlq_{li}}   F_s \Lambda_i^\T e_{is} }^2    \Rp}_{\substack{= O\Lp \frac{1}{ NT} \Rp \text{ from } \+E \Ls \frac{1}{N}  \sum_{l = 1}^N \norm{\frac{1}{N} \sum_{i = 1}^N  \frac{W_{it}}{\spsi}  \frac{1}{|\tlq_{li}|} \sum_{s \in \tlq_{li}}   F_s \Lambda_i^\T e_{is} }^2  \Rs \\  =\frac{1}{N}  \sum_{l = 1}^N \+E \Ls \norm{\frac{1}{N} \sum_{i = 1}^N  \frac{W_{it}}{\spsi}  \frac{1}{|\tlq_{li}|} \sum_{s \in \tlq_{li}}  F_s \Lambda_i^\T  e_{is} }^2   \Rs= O\Lp \frac{1}{ NT} \Rp\\ \text{from Assumption \ref{ass:mom-clt-conditional}.2} }  } .
					\end{align*}
					Hence, we conclude that $\text{\RNum{3}} = O_P \Lp \frac{1}{\sqrt{T\delta_{NT}}} \Rp$. The last term satisfies $\text{\RNum{4}} = O_P \Lp \frac{1}{\sqrt{T\delta_{NT}}} \Rp$, which can be shown similarly. 
					
				\end{enumerate}
			\end{proof}

			\begin{proof}[Proof of Theorem \ref{theorem:asy-normal}.1]
				We regress $Y_{it}$ on $\tilde \Lambda_i$ using the observed units at time $t$ (where $W_{it} = 1$)
				\begin{align*}
					\tilde F^S_t &= \Big(\frac{1}{\spsi} \sum_{i = 1}^N W_{it} \tilde \Lambda_i \tilde \Lambda_i^\T  \Big)^\I  \Big(\frac{1}{\spsi} \sum_{i = 1}^N W_{it} \tilde \Lambda_i Y_{it}    \Big) .
				\end{align*}
				We first analyze 
				\begin{align*}
					\tilde{F}^{S\dagger}_t &= \Big(\frac{1}{\spsi} \sum_{i = 1}^N W_{it} H \Lambda_i  \Lambda_i^\T H^\T  \Big)^\I  \Big(\frac{1}{\spsi}  \sum_{i = 1}^N W_{it} \tilde \Lambda_i Y_{it}    \Big)  .
				\end{align*}
				We have the following decomposition for $\tilde F_t$
				\begin{align*}
					\tilde{F}^{S\dagger}_t  =& \Big(\frac{1}{N }  \sum_{i = 1}^N   \frac{W_{it}}{\spsi}  H \Lambda_i  \Lambda_i^\T H^\T  \Big)^\I  \Big(\frac{1}{N}  \sum_{i = 1}^N  \frac{W_{it}}{\spsi} \tilde \Lambda_i (\Lambda_i^\T F_t + e_{it})    \Big)    \\
					=& (H^\I)^\T F_t + \underbrace{(H^\I)^\T \Big(\frac{1}{N }  \sum_{i = 1}^N  \frac{W_{it}}{\spsi}  \Lambda_i  \Lambda_i^\T  \Big)^\I  \Big(\frac{1}{N }  \sum_{i = 1}^N  \frac{W_{it}}{\spsi}  \Lambda_i  e_{it}    \Big) }_{\Delta_1}   \\
					& + \underbrace{\Big(\frac{1}{N }  \sum_{i = 1}^N  \frac{W_{it}}{\spsi}  H \Lambda_i  \Lambda_i^\T H^\T  \Big)^\I  \Big(\frac{1}{N }  \sum_{i = 1}^N  \frac{W_{it}}{\spsi}  (\tilde \Lambda_i - H \Lambda_i) \Lambda_i^\T F_t   \Big) }_{\Delta_2}   \\
					&+ \Big(\frac{1}{N}  \sum_{i = 1}^N  \frac{W_{it}}{\spsi}  H \Lambda_i  \Lambda_i^\T H^\T  \Big)^\I \underbrace{\Big(\frac{1}{N }  \sum_{i = 1}^N  \frac{W_{it}}{\spsi}  (\tilde \Lambda_i - H \Lambda_i)  e_{it}   \Big) }_{{O_P \Lp \frac{1}{\delta } \Rp \text{ from Lemma \ref{lemma:f-est-error-times-f-and-e}.2 }} } 
				\end{align*}
				For $\Delta_1$, $\frac{1}{\sqrt{N}} \sum_{i = 1}^N \frac{W_{it}}{\spsi} \Lambda_i  e_{it} \xrightarrow{d} \calN(0, \covI_{F,t})$ from Assumption \ref{ass:mom-clt-conditional}.\ref{ass:asy-normal-main-term-thm-factor-conditional}  and $\frac{1}{N} \sum_{i = 1}^N \frac{W_{it}}{\spsi} \Lambda_i \Lambda_i  \xrightarrow{p} \Sigma_{\Lambda,t}$. From Slutsky's theorem and Lemma \ref{lemma:def-q} ($H^\I \xrightarrow{p} Q^\T$), we conclude
				\begin{equation}\label{eqn:F-asy-term1}
					\sqrt{N} \underbrace{(H^\I)^\T \Big(\frac{1}{N}  \sum_{i = 1}^N \frac{W_{it}}{\spsi} \Lambda_i  \Lambda_i^\T  \Big)^\I  \Big(\frac{1}{N}  \sum_{i = 1}^N \frac{W_{it}}{\spsi} \Lambda_i  e_{it}    \Big) }_{\bm{\varepsilon}_{F,t,1}}  \xrightarrow{d} \calN(0, Q \Sigma_{\Lambda}^\I \covI_{F,t} \Sigma_{\Lambda}^\I Q^\T).    
				\end{equation}
				For $\Delta_2$, we have the decomposition 
				\begin{align*}
					\Delta_2 =& \Big(\frac{1}{N}  \sum_{i = 1}^N \frac{W_{it}}{\spsi} H \Lambda_i  \Lambda_i^\T H^\T  \Big)^\I \underbrace{ \Big(\frac{1}{N}  \sum_{i = 1}^N \frac{W_{it}}{\spsi} (\tilde \Lambda_i - H_i \Lambda_i) \Lambda_i^\T F_t   \Big)}_{{O_P \Lp \frac{1}{\delta_{NT}} \Rp \text{ from Lemma \ref{lemma:f-est-error-times-f-and-e-adj}.4 }} }   \\
					&+  \Big(\frac{1}{N}  \sum_{i = 1}^N \frac{W_{it}}{\spsi} H \Lambda_i  \Lambda_i^\T H^\T  \Big)^\I  \Big(\frac{1}{N}  \sum_{i = 1}^N \frac{W_{it}}{\spsi} (H_i - H ) \Lambda_i \Lambda_i^\T F_t   \Big) . 
				\end{align*}
				For $\sum_{i = 1}^N \frac{W_{it}}{\spsi}  (H_i - H ) \Lambda_i \Lambda_i^\T $ in the second term, we have
				\begin{align*}
					\frac{1}{N} \sum_{i = 1}^N \frac{W_{it}}{\spsi}  (H_i - H ) \Lambda_i \Lambda_i^\T  =&\tilde D^{-1} \cdot \frac{1}{N^2} \sum_{i =1}^N \sum_{l = 1}^N \tilde \Lambda_l   \Lambda_l^\T  \Lp \frac{1}{|\tlq_{li}|} \sum_{s \in \tlq_{li}} F_s F_s^\T - \frac{1}{T} \sum_{s=1}^T F_s F_s^\T   \Rp \frac{W_{it}}{\spsi}   \Lambda_i \Lambda_i^\T  \\
					=&  \tilde D^{-1} \cdot \underbrace{\frac{1}{N^2} \sum_{i =1}^N \sum_{l = 1}^N (\tilde \Lambda_l - H \Lambda_l )   \Lambda_l^\T  \Lp \frac{1}{|\tlq_{li}|} \sum_{s \in \tlq_{li}} F_s F_s^\T - \frac{1}{T} \sum_{s=1}^T F_s F_s^\T   \Rp \frac{W_{it}}{\spsi}  \Lambda_i \Lambda_i^\T }_{\text{\RNum{1}} }  \\
					&+ \tilde D^{-1} H \cdot \underbrace{\frac{1}{N^2} \sum_{i =1}^N \sum_{l = 1}^N  \Lambda_l   \Lambda_l^\T  \Lp \frac{1}{|\tlq_{li}|} \sum_{s \in \tlq_{li}} F_s F_s^\T - \frac{1}{T} \sum_{s=1}^T F_s F_s^\T   \Rp \frac{W_{it}}{\spsi}   \Lambda_i \Lambda_i^\T }_{\mathbf{X}_t^S }  .
				\end{align*}
				The first term \RNum{1} is bounded by 
				\begin{align*}
					\norm{\text{\RNum{1}}}^2 &\leq \underbrace{\Lp \frac{1}{N} \sum_{l = 1}^N \norm{\tilde  \Lambda_l - H \Lambda_l}^2 \Rp}_{O_P \Lp \frac{1}{\delta_{NT}} \Rp}  \Lp\frac{1}{N} \sum_{l = 1}^N  \norm{ \Lambda_l }^2 \norm{\frac{1}{N} \sum_{i =1}^N \frac{W_{it}}{\spsi}  \Lambda_i \Lambda_i^\T \Lp \frac{1}{T} \sum_{s=1}^T F_s F_s^\T - \frac{1}{|\tlq_{li}|} \sum_{s \in \tlq_{li}} F_s F_s^\T \Rp}^2 \Rp    \\
					&\leq O_P \Lp \frac{1}{\delta_{NT}} \Rp \cdot \frac{1}{\underline{p}} \underbrace{\Lp\frac{1}{N} \sum_{l = 1}^N   \norm{ \Lambda_l }^2 \Lp \frac{1}{N} \sum_{i =1}^N \norm{\Lambda_i }^2 \norm{\frac{1}{T} \sum_{s=1}^T F_s F_s^\T - \frac{1}{|\tlq_{li}|} \sum_{s \in \tlq_{li}} F_s F_s^\T } \Rp^2  \Rp }_{\text{\RNum{1}}_1} ,
				\end{align*}
				where $\text{\RNum{1}}_1$ satiesfies 
				\begin{align*}
					\+E[ \text{\RNum{1}}_1 ] =& \frac{1}{N^3} \sum_{l = 1}^N \sum_{i =1}^N \sum_{j =1}^N  \underbrace{\+E\Big[ \norm{ \Lambda_l }^2  \norm{\Lambda_i }^2  \norm{\Lambda_j }^2 |S\Big]}_{\bar{\Lambda}}  \\
					& \cdot \underbrace{\+E\Bigg[ \norm{\frac{1}{T} \sum_{s=1}^T F_s F_s^\T - \frac{1}{|\tlq_{li}|} \sum_{s \in \tlq_{li}} F_s F_s^\T } \norm{\frac{1}{T} \sum_{s=1}^T F_s F_s^\T - \frac{1}{|\tlq_{lj}|} \sum_{s \in \tlq_{lj}} F_s F_s^\T } \bigg| \Bigg]}_{\leq \frac{M}{T} \text{ from } \+E[ab] \leq (\+E[a^2]\+E[b^2])^{1/2}  } 
				\end{align*}
				Hence, we conclude that $\text{\RNum{1}} = O_P \Lp \frac{1}{\sqrt{T\delta_{NT}}} \Rp$. The second term $\mathbf{X}_t^S$ is asymptotically normal from Assumption \ref{ass:mom-clt-conditional}.\ref{ass:asy-normal-add-term-thm-loading-conditional} and its convergence rate is $\sqrt{T}$. Hence, the leading term in $\Delta_2$ is 
				\begin{align}\label{eqn:F-asy-term2}
					\bm{\varepsilon}_{F,t, 2} =    \Big(\frac{1}{N}  \sum_{i = 1}^N \frac{W_{it}}{\spsi}  H \Lambda_i  \Lambda_i^\T H^\T  \Big)^\I  \Big(\tilde D^{-1} H \mathbf{X}_t^S  F_t   \Big),  
				\end{align}
				where $\mathbf{X}_t^S = \frac{1}{N^2} \sum_{i =1}^N \sum_{l = 1}^N  \Lambda_l   \Lambda_l^\T  \Lp \frac{1}{|\tlq_{li}|} \sum_{s \in \tlq_{li}} F_s F_s^\T - \frac{1}{T} \sum_{s=1}^T F_s F_s^\T   \Rp \frac{W_{it}}{\spsi}   \Lambda_i \Lambda_i^\T$.

				Next let us consider the difference between $\tilde{F}^{S\dagger}_t$ and $\tilde{F}^{S}_t$. The leading term is
				\begin{align*}
					&\tilde{F}^{S}_t -  \tilde{F}^{S\dagger}_t \\
					=&  \Bigg[ \Big(   \sum_{i = 1}^N \frac{W_{it}}{\spsi}\tilde \Lambda_i \tilde \Lambda_i^\T  \Big)^\I - \Big(  \sum_{i = 1}^N \frac{W_{it}}{\spsi} H \Lambda_i  \Lambda_i^\T H^\T  \Big)^\I  \Bigg]   \sum_{i = 1}^N \frac{W_{it}}{\spsi}\tilde \Lambda_i Y_{it}    \Big) \\
					=&\Big( \sum_{i = 1}^N \frac{W_{it}}{\spsi} \tilde \Lambda_i \tilde \Lambda_i^\T  \Big)^\I   \Bigg[ \sum_{i = 1}^N \frac{W_{it}}{\spsi} H \Lambda_i  \Lambda_i^\T H^\T - \sum_{i = 1}^N \frac{W_{it}}{\spsi}  \tilde \Lambda_i \tilde \Lambda_i^\T    \Bigg] \Big( \sum_{i = 1}^N \frac{W_{it}}{\spsi}  H \Lambda_i  \Lambda_i^\T H^\T  \Big)^\I \Big( \sum_{i = 1}^N \frac{W_{it}}{\spsi}  \tilde \Lambda_i Y_{it}    \Big) \\
					=& \Big( \frac{1}{N} \sum_{i = 1}^N \frac{W_{it}}{\spsi}  \tilde \Lambda_i \tilde \Lambda_i^\T  \Big)^\I \Bigg[ \frac{1}{N} \sum_{i = 1}^N \frac{W_{it}}{\spsi}  H \Lambda_i  \Lambda_i^\T H^\T - \frac{1}{N} \sum_{i = 1}^N \frac{W_{it}}{\spsi}  \tilde \Lambda_i \tilde \Lambda_i^\T   \Bigg]  \Big( \frac{1}{N} \sum_{i = 1}^N \frac{W_{it}}{\spsi}  H \Lambda_i  \Lambda_i^\T H^\T  \Big)^\I  \\
					& \cdot \Big(\frac{1}{N} \sum_{i = 1}^N \frac{W_{it}}{\spsi}  H \Lambda_i  \Lambda_i^\T  F_t +  O_P \Big( \frac{1}{\sqrt{N}} \Big) \Big).
				\end{align*}
				Note that we have the following bound on the weighted difference between the estimated and population loadings
				\begin{align*}
					& \norm{\frac{1}{N} \sum_{i = 1}^N \frac{W_{it}}{\spsi} \tilde \Lambda_i \tilde \Lambda_i^\T  - \frac{1}{N} \sum_{i = 1}^N \frac{W_{it}}{\spsi} H \Lambda_i  \Lambda_i^\T H^\T  }  \leq \frac{1}{N \underline{p}} \sum_{i = 1}^N \norm{\tilde \Lambda_i \tilde \Lambda_i^\T - H \Lambda_i  \Lambda_i^\T H^\T} \\
					\leq& \frac{1}{N} \sum_{i = 1}^N  \norm{\tilde \Lambda_i} \norm{\tilde \Lambda_i - H \Lambda_i } + \frac{1}{N} \sum_{i = 1}^N \norm{\Lambda_i} \norm{\tilde \Lambda_i - H \Lambda_i } \\
					\leq&  \bigg( \frac{1}{N} \sum_{i = 1}^N  \norm{\tilde \Lambda_i}^2  \bigg)^{1/2}  \bigg( \frac{1}{N} \sum_{i = 1}^N \norm{\tilde \Lambda_i - H \Lambda_i }^2\bigg)^{1/2} +  \bigg(  \frac{1}{N} \sum_{i = 1}^N \norm{\Lambda_i}^2 \bigg)^{1/2} \bigg( \frac{1}{N} \sum_{i = 1}^N  \norm{\tilde \Lambda_i - H \Lambda_i }^2 \bigg)^{1/2} = O \bigg( \frac{1}{\dnt} \bigg)
				\end{align*}
				following from Theorem \ref{thm:consistency-same-H},  $\frac{1}{N} \tilde{\Lambda}^\T \tilde{\Lambda} = I_r$ and Assumption \ref{ass:factor-model}.2. This yields
				\[\frac{1}{N} \sum_{i = 1}^N \frac{W_{it}}{\spsi} \tilde \Lambda_i \tilde \Lambda_i^\T  \xrightarrow{p} \frac{1}{N} \sum_{i = 1}^N \frac{W_{it}}{\spsi}  H \Lambda_i  \Lambda_i^\T H^\T.   \]
				This is also equivalent to 
				\[\Big(\frac{1}{N} \sum_{i = 1}^N \frac{W_{it}}{\spsi} \tilde \Lambda_i \tilde \Lambda_i^\T  \Big)^\I \Big(\frac{1}{N} \sum_{i = 1}^N \frac{W_{it}}{\spsi} H \Lambda_i  \Lambda_i^\T H^\T  \Big) \xrightarrow{p} I_k.  \]
				For the term $ \sum_{i = 1}^N \frac{W_{it}}{\spsi} H \Lambda_i  \Lambda_i^\T H^\T - \sum_{i = 1}^N \frac{W_{it}}{\spsi} \tilde \Lambda_i \tilde \Lambda_i^\T   $, we have the decomposition
				\begin{align*}
					&\frac{1}{N} \sum_{i = 1}^N \frac{W_{it}}{\spsi} \tilde \Lambda_i \tilde \Lambda_i^\T - \frac{1}{N}\sum_{i = 1}^N \frac{W_{it}}{\spsi} H \Lambda_i  \Lambda_i^\T H^\T \\
					=& \frac{1}{N}\sum_{i = 1}^N \frac{W_{it}}{\spsi} (\tilde \Lambda_i -  H \Lambda_i) \tilde \Lambda_i^\T +  \frac{1}{N}\sum_{i = 1}^N \frac{W_{it}}{\spsi} H \Lambda_i   (\tilde \Lambda_i -  H \Lambda_i)^\T \\
					=& \frac{1}{N}\sum_{i = 1}^N \frac{W_{it}}{\spsi}  (\tilde \Lambda_i -  H \Lambda_i)  (H \Lambda_i)^\T +  \frac{1}{N}\sum_{i = 1}^N \frac{W_{it}}{\spsi} H \Lambda_i   (\tilde \Lambda_i -  H \Lambda_i)^\T   + \frac{1}{N}\sum_{i = 1}^N \frac{W_{it}}{\spsi}  (\tilde \Lambda_i -  H \Lambda_i) (\tilde \Lambda_i - H \Lambda_i)^\T \\
					=& \underbrace{\frac{1}{N}\sum_{i = 1}^N \frac{W_{it}}{\spsi} (H_i - H)  \Lambda_i \Lambda_i^\T}_{\tilde D^{-1} H\mathbf{X}_t^S}  \cdot H^\T + H \cdot \underbrace{\frac{1}{N}\sum_{i = 1}^N \frac{W_{it}}{\spsi}  \Lambda_i \Lambda_i^\T (H_i - H)^\T }_{(\tilde D^{-1} H\mathbf{X}_t^S)^\T}   \\
					&+ \underbrace{\frac{1}{N}\sum_{i = 1}^N \frac{W_{it}}{\spsi} (\tilde \Lambda_i -  H_i \Lambda_i)  \Lambda_i^\T}_{{O_P \Lp \frac{1}{\delta_{NT}} \Rp \text{ from Lemma \ref{lemma:f-est-error-times-f-and-e-adj}.4 }} }  \cdot H^\T + H \cdot \underbrace{\frac{1}{N}\sum_{i = 1}^N \frac{W_{it}}{\spsi} \Lambda_i   (\tilde \Lambda_i -  H_i \Lambda_i)^\T}_{{O_P \Lp \frac{1}{\delta_{NT}} \Rp \text{ from Lemma \ref{lemma:f-est-error-times-f-and-e-adj}.4 }} }   \\
					&+ \underbrace{\frac{1}{N}\sum_{i = 1}^N \frac{W_{it}}{\spsi}  (\tilde \Lambda_i -  H \Lambda_i) (\tilde \Lambda_i - H \Lambda_i)^\T }_{\text{\RNum{1}}} 
				\end{align*}
				The term \RNum{1} is bounded by
				\begin{align*}
					\norm{\text{\RNum{1}} } &\leq   \frac{1}{N}\sum_{i = 1}^N \frac{W_{it}}{\spsi} \norm{\tilde \Lambda_i -  H \Lambda_i}^2 =  O_P \Lp \frac{1}{\delta_{NT}} \Rp.
				\end{align*}
				The term $\mathbf{X}_t$ is asymptotically normal based on Assumption \ref{ass:mom-clt-conditional}.\ref{ass:asy-normal-add-term-thm-loading} and its convergence rate is $\sqrt{T}$. Hence, the leading term in $\tilde{F}^{S}_t -  \tilde{F}^{S\dagger}_t$ is 
				\begin{align}
					\nonumber\bm{\varepsilon}_{F,t, 3} =&  - \Big(\frac{1}{N} \sum_{i = 1}^N \frac{W_{it}}{\spsi}  H \Lambda_i  \Lambda_i^\T H^\T  \Big)^{-1} \Big(\tilde D^{-1} H  \mathbf{X}_t^S H^\T + H (\tilde D^{-1} H \mathbf{X}_t^S)^\T \Big) \\
					\nonumber & \cdot\Big( \frac{1}{N} \sum_{i = 1}^N \frac{W_{it}}{\spsi} H \Lambda_i  \Lambda_i^\T H^\T  \Big)^\I  \Big(\frac{1}{N} \sum_{i = 1}^N \frac{W_{it}}{\spsi} H \Lambda_i  \Lambda_i^\T  \Big)  F_t \\
					=&  \Big( \frac{1}{N} \sum_{i = 1}^N \frac{W_{it}}{\spsi} H \Lambda_i  \Lambda_i^\T H^\T  \Big)^\I  \Big(\tilde D^{-1} H \mathbf{X}_t^S H^\T + H (\tilde D^{-1} H \mathbf{X}_t^S)^\T \Big) (H^\T)^{-1} F_t,  \label{eqn:F-asy-term3}
				\end{align}
				where $\mathbf{X}_t^S = \frac{1}{N^2} \sum_{i =1}^N \sum_{l = 1}^N  \Lambda_l   \Lambda_l^\T  \Lp \frac{1}{|\tlq_{li}|} \sum_{s \in \tlq_{li}} F_s F_s^\T - \frac{1}{T} \sum_{s=1}^T F_s F_s^\T   \Rp \frac{W_{it}}{\spsi}  \Lambda_i \Lambda_i^\T$. 
				In summary, the asymptotic distribution of $\tilde{F}^{S}_t$ is determined by \eqref{eqn:F-asy-term1}, \eqref{eqn:F-asy-term2} and \eqref{eqn:F-asy-term3}, that is, 
				\begin{align*}
					\dnt (\tilde{F}^{S}_t -  (H^{-1})^\T F_t) =& \dnt (\bm{\varepsilon}_{F,t, 1} + \bm{\varepsilon}_{F,t, 2} + \bm{\varepsilon}_{F,t, 3})   + o_P(1).
				\end{align*}
				Let us first consider the asymptotic distribution of $\bm{\varepsilon}_{F,t, 2} + \bm{\varepsilon}_{F,t, 3}$. Note that by Assumption \ref{ass:mom-clt-conditional}.\ref{ass:asy-normal-add-term-thm-loading} it holds that $\sqrt{T} \tvec(\mathbf{X}_t^S) \xrightarrow{d} \calN (0, \mathbf{\Phi}_t^S)$. Denote $\tilde \Sigma_{\Lambda} := \frac{1}{N} \sum_{i = 1}^N \frac{W_{it}}{\spsi}  \Lambda_i  \Lambda_i^\T$. We can then rewrite $\bm{\varepsilon}_{F,t, 2}$ as
				\begin{align*}
					\bm{\varepsilon}_{F,t, 2} &= (H^\T)^\I \tilde \Sigma_{\Lambda}^\I H^\I \tilde D^{-1} H \Big(  F_t^\T \otimes I_r \Big) \tvec(\mathbf{X}_t^S),
				\end{align*}
				and rewrite $\bm{\varepsilon}_{F,t, 3}$ as
				\begin{align*}
					\bm{\varepsilon}_{F,t, 3} &=  - (H^\T)^{-1} \tilde \Sigma_{\Lambda}^{-1} H^{-1} \Bigg(\tilde D^{-1} H \mathbf{X}_t^S F_t  + H (\mathbf{X}_t^S)^\T H^\T \tilde D^{-1} (H^\T)^{-1}  F_t  \Bigg) \\
					&=  - (H^\T)^{-1} \tilde \Sigma_{\Lambda}^{-1} H^{-1} \Big(\tilde D^{-1} H \big( F_t ^\T \otimes I_r \big)  + H \big(I_r \otimes (H^\T \tilde D^{-1} (H^\T)^{-1}  F_t   )^\T \big)  \Big) \tvec(\mathbf{X}_t^S)
				\end{align*}
				This allows us to derive the following expression for $\bm{\varepsilon}_{F,t, 2} + \bm{\varepsilon}_{F,t, 3}  $  
				\begin{align*}
					& \sqrt{T} \Big( \bm{\varepsilon}_{F,t, 2} + \bm{\varepsilon}_{F,t, 3} \Big)      \\
					=& \sqrt{T} (H^\T)^\I \tilde \Sigma_{\Lambda}^\I H^\I  \Bigg( \tilde D^{-1} H \Big(  F_t^\T \otimes I_r \Big) - \Big(\tilde D^{-1} H  \big(  F_t^\T \otimes I_r \big)  + H \big(I_r \otimes (H^\T \tilde D^\I (H^\T)^{-1} F_t )^\T \big)  \Big) \Bigg) \tvec(\mathbf{X}_t^S)  \\
					=&  -\sqrt{T} (H^\T)^\I \tilde \Sigma_{\Lambda}^\I \Big(I_r \otimes (H^\T \tilde D^\I (H^\T)^{-1} F_t )^\T \Big) \tvec(\mathbf{X}_t^S) .
				\end{align*}
				Lemma \ref{lemma:HDinvHTinv} implies  $H^\T \tilde D^\I (H^\T)^{-1} =  \Big( \frac{\Lambda^\T \tilde \Lambda}{N} \Big)^\I  \Big(\frac{F^\T F}{T} \Big)^\I + O_P \Big( \frac{1}{\sqrt{\delta_{NT}}} \Big) $. Thus, we have 
				\begin{align*}
					\sqrt{T} \Big( \bm{\varepsilon}_{F,t, 2} + \bm{\varepsilon}_{F,t, 3} \Big)      
					=&  -\sqrt{T} (H^\T)^\I \tilde \Sigma_{\Lambda}^\I \Big(I_r \otimes (H^\T \tilde D^\I (H^\T)^{-1} F_t )^\T \Big) \tvec(\mathbf{X}_t^S) \\
					\rightarrow&   \calN \Bigg(0,  Q \Sigma_{\Lambda}^\I  \covIIS_{F,t}   \Sigma_{\Lambda}^\I Q^\T  \Bigg) 
					\quad \mathcal{G}^t-\text{stably},
				\end{align*}
				where $\covIIS_{F,t} = g^S_{t}(F_t)$, and the function $g^S_t(\cdot)$ is defined in Assumption \ref{ass:mom-clt-conditional}.\ref{ass:asy-normal-add-term-thm-loading-conditional}.

				Note that $\bm{\varepsilon}_{F,t, 1}  $ and $\bm{\varepsilon}_{F,t, 2} + \bm{\varepsilon}_{F,t, 3}  $ are asymptotically independent because the randomness of $\bm{\varepsilon}_{F,t, 1}  $ comes from the cross-section average of $ \frac{W_{it}}{\spsi}  \Lambda_i e_{it}$, and the randomness of $\bm{\varepsilon}_{F,t, 2} + \bm{\varepsilon}_{F,t, 3}  $  comes from $\frac{1}{T} \sum_{s=1}^T F_s F_s^\T - \frac{1}{|\tlq_{li}|} \sum_{s \in \tlq_{li}} F_s F_s^\T $.  This leads to
				\begin{align*}
					&	\sqrt{\delta_{NT}} (\tilde F^S_t -  (H^{-1})^\T F_t) \rightarrow \calN \Bigg(0, Q \Sigma_{\Lambda}^\I  \Big[  \problim \Big( \frac{\delta_{NT}}{N} \covI_{F,t}   + \frac{\delta_{NT}}{T} \covIIS_{F,t}  \Big) \Big] \Sigma_{\Lambda}^\I  Q^\T \Bigg) 
					\quad \mathcal{G}^t-\text{stably}.
				\end{align*}
				Left multiplying $\tilde{F}^{S}_t -  (H^{-1})^\T F_t)$ by $H^\T$ and using the delta method, we conclude that 
				\begin{align*}
					&	\sqrt{\delta_{NT}} (H^\T \tilde F^S_t -  F_t) \rightarrow \calN \Bigg(0,  \Sigma_{\Lambda}^\I  \Big[  \problim \Big( \frac{\delta_{NT}}{N} \covI_{F,t}   + \frac{\delta_{NT}}{T} \covIIS_{F,t} \Big)  \Big] \Sigma_{\Lambda}^\I  \Bigg) 
					\quad \mathcal{G}^t-\text{stably}.
				\end{align*}
				
			\end{proof}

			\subsubsection{Proof of Theorem \ref{theorem:asy-normal}.2}

			\begin{proof}[Proof of Theorem \ref{theorem:asy-normal}.2]
				From $\tilde C^S_{jt} = \tilde \Lambda_j^\T \tilde F^S_t$ and $C_{jt} = \Lambda_j^\T F_t$, we have
				\[\tilde C^S_{jt} - C_{jt} = \Lambda_j^\T H^\T  (\tilde F^S_t - (H^\T)^{-1} F_t) + (\tilde \Lambda_j - H \Lambda_j)^\T \tilde F^S_t + o_P(1/\sqrt{\delta_{NT}}). \]
				The second term can be written as 
				\begin{eqnarray*}
					(\tilde \Lambda_j - H \Lambda_j)^\T \tilde F^S_t &=&  (\tilde \Lambda_j - H \Lambda_j)^\T (H^\T)^\I  F_t +  (\tilde \Lambda_j - H \Lambda_j)^\T ( \tilde F^S_t - (H^\T)^\I  F_t)  \\
					&=& (\tilde \Lambda_j - H \Lambda_j)^\T (H^\T)^\I  F_t + o_P(1/\sqrt{\delta_{NT}}).
				\end{eqnarray*}
				Thus, we obtain
				\[\tilde C^S_{jt} - C_{jt} = \Lambda_j^\T H^\T  (\tilde F^S_t - (H^\T)^{-1} F_t)  + (\tilde \Lambda_j - H \Lambda_j)^\T (H^\T)^\I  F_t + o_P(1/\sqrt{\delta_{NT}}). \]
				Denote $X_j = \frac{1}{N} \sum_{l = 1}^N \Lambda_l \Lambda_l^\T  \Big( \frac{1}{|\tlq_{lj}|} \sum_{t \in \tlq_{lj}} F_t F_t^\T - \frac{1}{T} \sum_{t = 1}^T F_t F_t^\T \Big)$, and \\ $\mathbf{X}_t^S = \frac{1}{N^2} \sum_{i =1}^N \sum_{l = 1}^N  \Lambda_l   \Lambda_l^\T  \Lp \frac{1}{|\tlq_{li}|} \sum_{s \in \tlq_{li}} F_s F_s^\T - \frac{1}{T} \sum_{s=1}^T F_s F_s^\T   \Rp \frac{W_{it}}{\spsi}  \Lambda_i \Lambda_i^\T$, which we use in the following expression:
				\begin{align*}
					& \sqrt{\delta_{NT}} \Lambda_j^\T H^\T  (\tilde F^S_t - (H^\T)^{-1} F_t)  \\
					=& \sqrt{\delta_{NT}}   \Lambda_j^\T H^\T H  \bigg( \frac{1}{N}  \sum_{i =1}^N \frac{W_{it}}{\spsi} \Lambda_i  e_{it}  \\
					& \qquad - \frac{1}{N^2} \sum_{i =1}^N \sum_{l = 1}^N \frac{W_{it}}{\spsi}  \Lambda_i \Lambda_i^\T \bigg( \frac{1}{T} \sum_{s=1}^T F_s F_s^\T - \frac{1}{|\tlq_{li}|} \sum_{s \in \tlq_{li}} F_s F_s^\T \bigg) \Lambda_l  \Lambda_l^\T  H^\T \tilde D^{-1}  (H^{-1})^\T F_t \bigg) + o_P(1) \\
					=& \sqrt{\delta_{NT}}   \Lambda_j^\T \Lp \frac{\Lambda^\T \Lambda}{N} \Rp^\I  \bigg( \frac{1}{N}  \sum_{i =1}^N \frac{W_{it}}{\spsi} \Lambda_i  e_{it}  - (\mathbf{X}_t^S)^\T  \Sigma_{\Lambda}^\I \Sigma_{F}^\I  F_t \bigg) + o_P(1), 
				\end{align*}
				where the second equality follows from $H^\T H = \Lp \frac{\Lambda^\T \tilde \Lambda}{N} \Rp^\I +o_P(1)  $ and $H^\T \tilde D^{-1}  (H^{-1})^\T = \Sigma_{\Lambda}^\I \Sigma_{F}^\I  + o_P(1)$ from Lemma \ref{lemma:HDinvHTinv}. Moreover, plugging the decomposition of $\tilde \Lambda_j - H \Lambda_j $ into $\sqrt{\delta_{NT}}  F_t^\T H^\I (\tilde \Lambda_j - H \Lambda_j)$, we obtain 
				\begin{align*}
					&\sqrt{\delta_{NT}}  F_t^\T H^\I (\tilde \Lambda_j - H \Lambda_j) \\
					=& \sqrt{\delta_{NT}}  F_t^\T \Lp \frac{F^\T F}{T} \Rp^\I \Lp \frac{\Lambda^\T \tilde \Lambda}{N} \Rp^\I \tilde D \tilde D^{-1}  H  \bigg( \frac{1}{N} \sum_{i=1}^N \Lambda_i \Lambda_i^\T  \frac{1}{|\tlq_{ij}|} \sum_{t \in \tlq_{ij}} F_t e_{jt} \\
					& \quad  \quad +  \frac{1}{N} \sum_{i = 1}^N \Lambda_i \Lambda_i^\T  \Big( \frac{1}{|\tlq_{ij}|} \sum_{t \in \tlq_{ij}} F_t F_t^\T - \frac{1}{T} \sum_{t = 1}^T F_t F_t^\T \Big) \Lambda_j  \bigg) +o_P(1)  \\
					=& \sqrt{\delta_{NT}}  F_t^\T \Lp \frac{F^\T F}{T} \Rp^\I \Lp \frac{\Lambda^\T  \Lambda}{N} \Rp^\I   \bigg( \frac{1}{N} \sum_{i=1}^N \Lambda_i \Lambda_i^\T  \frac{1}{|\tlq_{ij}|} \sum_{t \in \tlq_{ij}} F_t e_{jt} +  X_j \Lambda_j  \bigg) +o_P(1)  
				\end{align*}
				based on $\Lp \frac{\Lambda^\T \tilde \Lambda}{N} \Rp^\I = H^\T + o_P(1) $ from Lemma \ref{lemma:def-q} and $H^\T H = \Lp \frac{\Lambda^\T \tilde \Lambda}{N} \Rp^\I +o_P(1)  $ from Lemma \ref{lemma:HDinvHTinv}. 
				
				Note that $\mathbf{X}_t^S = \frac{1}{N} \sum_{i = 1}^N \frac{W_{it}}{\spsi}  X_i \Lambda_i \Lambda_i^\T$ and hence $\mathbf{X}_t^S$ and $X_j$ are correlated. However, the other terms in $\tilde F^S_t - (H^\T)^\I F_t$ and $\tilde \Lambda_j - H \Lambda_j$ are asymptotically independent. Using Theorem \ref{theorem:asy-normal-equal-weight}.1 and \ref{theorem:asy-normal}.1, we conclude that
				\begin{align*}
					\sqrt{\delta_{NT}} (\tilde C^S_{jt} - C_{jt})  \rightarrow&  \calN \left( 0,  \problim \Big(\frac{\delta_{NT}}{T}  \Lambda_j^\T \Sigma_\Lambda^\I \covIIS_{F,t} \Sigma_\Lambda^\I \Lambda_j + \frac{\delta_{NT}}{N} \Lambda_j^\T \Sigma_\Lambda^\I  \Gamma^{\obs, S}_{F,t}   \Sigma_\Lambda^\I \Lambda_j \right. \\
					&+ \frac{\delta_{NT}}{T} F_t^\T \Sigma_F^\I \Sigma_\Lambda^\I ( \covI_{\Lambda,j}   +\covII_{\Lambda,j} )  \Sigma_\Lambda^\I \Sigma_F^\I F_t \\
					& \left.  - 2 \cdot \frac{\delta_{NT}}{T}  \Lambda_j^{\top} \Sigma_{\Lambda}^\I \covIIIS_{\Lambda, F, j, t}  \Sigma_\Lambda^\I   \Sigma_F^\I  F_t  \right) 
					\quad \mathcal{G}^t-\text{stably}.
				\end{align*}
				where $\covIIIS_{\Lambda, F, j, t} = g^{\cov,S}_{j, t}(\Lambda_j, F_t)$, and the function $g^{\cov,S}_{j, t}(\cdot,\cdot)$ is defined in Assumption \ref{ass:mom-clt-conditional}.\ref{ass:asy-normal-add-term-thm-loading-conditional}. 
				
			\end{proof}

			\subsection{Proof of Theorem \ref{thm:feasible-estimator}: Feasible Probability Weighted Estimator}
			For notation simplicity, denote $\spsi = P(W_{it} = 1|S_i)$ and $\hatspsi = \hat P(W_{it} = 1|S_i) $. We have the following decomposition for $\hat F_t^S$:
			\begin{align}
				\nonumber \hat F_t^S =& \left( \sum_{i = 1}^N \frac{W_{it}}{\hatspsi}  \tilde  \Lambda_i \tilde  \Lambda_i^\T  \right)^\I  \left( \sum_{i = 1}^N \frac{W_{it} }{\hatspsi} Y_{it} \tilde{\Lambda}_i \right) =  \tilde F_t^S + \underbrace{\left(\frac{1}{N} \sum_{i = 1}^N \frac{W_{it}}{ \spsi}  \tilde  \Lambda_i \tilde  \Lambda_i^\T  \right)^\I }_{\tilde \Sigma_{\Lambda,t}^\I}  \underbrace{\left(\frac{1}{N} \sum_{i = 1}^N  \frac{\spsi-\hatspsi}{\hatspsi  }  \frac{W_{it} }{\spsi} Y_{it} \tilde{\Lambda}_i \right)}_{B}   \\
				&+ \underbrace{\left(\frac{1}{N}\sum_{i = 1}^N \frac{W_{it}}{ \spsi}  \tilde  \Lambda_i \tilde  \Lambda_i^\T  \right)^\I}_{\tilde \Sigma_{\Lambda,t}^\I}  \underbrace{ \left(\frac{1}{N} \sum_{i = 1}^N \frac{\hatspsi - \spsi}{\hatspsi} \frac{W_{it}}{ \spsi}  \tilde  \Lambda_i \tilde  \Lambda_i^\T  \right)  }_C  \underbrace{\left(\frac{1}{N} \sum_{i = 1}^N \frac{W_{it}}{\hatspsi}  \tilde  \Lambda_i \tilde  \Lambda_i^\T  \right)^\I  \left(\frac{1}{N} \sum_{i = 1}^N \frac{W_{it} }{\hatspsi} Y_{it} \tilde{\Lambda}_i \right)}_{\hat F_t^S} \label{eqn:Fhat-decompose}
			\end{align}
			Since $\spsi$ is bounded below from 0 by Assumption \ref{ass:obs}, we have $\max_i \frac{1}{\spsi} = O(1)$. From Assumption \ref{ass:factor-model-conditional}, we have $\tilde{\Sigma}_{\Lambda,t} \xrightarrow{p} \Sigma_{\Lambda,t}$ and $\norm{\tilde{\Sigma}_{\Lambda,t}} = O_P(1)$. 
			
			\subsubsection{Proof of Theorem \ref{thm:feasible-estimator}.2 (a)}
			We can bound terms $B$ and $C$ by 
			\begin{align*}
				\norm{B}^2 &\leq \Lp \frac{1}{N} \sum_{i =1}^N \Big(\frac{\spsi-\hatspsi}{\hatspsi \spsi } W_{it}  \Big)^2  \Rp \Lp  \frac{1}{N} \sum_{i =1}^N \norm{(\Lambda_i^\T F_t  +e_{it}) \tilde \Lambda_i}^2 \Rp \\
				&\leq  \Lp \frac{1}{N} \sum_{i =1}^N \Big(\frac{\spsi-\hatspsi}{\hatspsi \spsi }   \Big)^2  \Rp \underbrace{\Lp  \Big(\frac{\norm{F_t}^4}{N} \sum_{i =1}^N \norm{\Lambda_i}^4    +  \frac{1}{N} \sum_{i =1}^N e_{it}^4 \Big)^{1/2}  \Big( \frac{1}{N} \sum_{i =1}^N \norm{\tilde \Lambda_i}^4 \Big)^{1/2}   \Rp }_{=O_P(1) \text{ following Assumption \ref{ass:factor-model}}}
			\end{align*}
			and 
			\begin{align*}
				\norm{C}^2 &\leq  \Lp \frac{1}{N} \sum_{i =1}^N \Big(\frac{\spsi-\hatspsi}{\hatspsi \spsi }   \Big)^2  \Rp  \Lp \frac{1}{N} \sum_{i =1}^N \norm{\tilde \Lambda_i}^4 \Rp.
			\end{align*}
			
			If $\max_i |\hatspsi - \spsi | = o_P (1)$ as assumed in Theorem \ref{thm:feasible-estimator}.2 (a), then $\frac{1}{N} \sum_{i =1}^N \Big(\frac{\spsi-\hatspsi}{\hatspsi \spsi } \Big)^2 = o_P(1)$ and the factors are estimated consistently pointwise. Hence, the common components are estimated consistently pointwise as well.
			
			Furthermore, if $\frac{1}{N} \sum_{i = 1}^N (\hatspsi - \spsi)^2 = o_P \left( \frac{1}{N} \right)$ as assumed in Theorem \ref{thm:feasible-estimator}.2 (b), then $B = o_P \left( \frac{1}{\sqrt{N}} \right)$.

			\subsubsection{Proof of Theorem \ref{thm:feasible-estimator}.2 (b)}
			In Theorem \ref{theorem:asy-normal}.2, we assume that $\sqrt{N}/T \rightarrow 0$, together with the assumption $\max_i |\hatspsi - \spsi | = o_P \left( \frac{1}{N^{1/4}} \right)$. Therefore, we have $\sqrt{N}/(N^{1/4}\dnt) \rightarrow 0$ and $O_P \Lp\frac{1}{N^{1/4}\dnt} \Rp = o_P \Lp\frac{1}{\sqrt{N}} \Rp $. We are going to use this property extensively in the following proof.
			
			We can decompose $B$ into four terms: 
			\begin{align*}
				B =& \frac{1}{N} \sum_{i = 1}^N  \frac{\spsi-\hatspsi}{\hatspsi  }  \frac{W_{it} }{\spsi} \tilde{\Lambda}_i \Lambda_i^\T F_t  +   \frac{1}{N} \sum_{i = 1}^N  \frac{\spsi-\hatspsi}{\hatspsi  }  \frac{W_{it} }{\spsi} \tilde{\Lambda}_i   e_{it}  \\
				=& \underbrace{ \frac{H}{N} \sum_{i = 1}^N  \frac{\spsi-\hatspsi}{\hatspsi  }  \frac{W_{it} }{\spsi} \Lambda_i \Lambda_i^\T F_t }_{B_1} + \underbrace{ \frac{1}{N} \sum_{i = 1}^N  \frac{\spsi-\hatspsi}{\hatspsi  }  \frac{W_{it} }{\spsi} \big(\tilde{\Lambda}_i - H \Lambda_i \big)  \Lambda_i^\T F_t  }_{B_2} \\
				& + \underbrace{\frac{1}{N} \sum_{i =1}^N \frac{\spsi-\hatspsi}{\hatspsi } \frac{W_{it}}{\spsi} H \Lambda_i  e_{it} }_{B_3} +  \underbrace{\frac{1}{N} \sum_{i =1}^N \frac{\spsi-\hatspsi}{\hatspsi } \frac{W_{it}}{\spsi} (\tilde \Lambda_i -  H \Lambda_i)  e_{it} }_{B_4} 
			\end{align*}
			First, we consider $B_1$:
			\begin{align*}
				\norm{B_1}^2 =&   \Lp \frac{1}{N} \sum_{i =1}^N \Big(\frac{\spsi-\hatspsi}{\hatspsi \spsi }   \Big)^2  \Rp \Lp \frac{1}{N} \sum_{i =1}^N \norm{\Lambda_i}^4    \Rp  \norm{F}^2 \\
				\leq&   \max_i  \Big(\frac{\spsi-\hatspsi}{\hatspsi \spsi }   \Big)^2 \cdot  \Lp \frac{1}{N} \sum_{i =1}^N \norm{\Lambda_i}^4    \Rp \norm{F}^2  = o_P \Lp \frac{1}{\sqrt{N}} \Rp.
			\end{align*}
			This yields $B_1 = o_P \Lp \frac{1}{N^{1/4}} \Rp $.
			
			Next, we consider $B_2$. It holds that 
			\begin{align*}
				\norm{B_2}^2 &\leq \max_i \Big|\frac{\spsi-\hatspsi}{\hatspsi \spsi } \Big|^2 \Big( \frac{1}{N} \sum_{i =1}^N \norm{ H \Lambda_i - \tilde \Lambda_i }^2 \Big) \Big(  \frac{1}{N} \sum_{i = 1}^N \norm{\Lambda_i}^2 \Big) \norm{F_t}^2  = O_P \Lp \frac{1}{N^{1/2} \delta_{NT}} \Rp
			\end{align*}
			and therefore $B_2 = o_P \Lp \frac{1}{N^{1/4} \dnt} \Rp = o_P \Lp \frac{1}{\sqrt{N}} \Rp$. 
			
			Third, we deal with $B_3$. By Assumption \ref{ass:mom-clt-conditional}.\ref{ass:asy-normal-main-term-thm-factor}, it holds that
			\begin{align*}
				B_3 &= \underbrace{\frac{1}{N} \sum_{i =1}^N \frac{\spsi-\hatspsi}{\spsi } \frac{W_{it}}{\spsi} H \Lambda_i  e_{it} }_{B_{3,1} } + \underbrace{ \frac{1}{N} \sum_{i =1}^N \frac{(\spsi-\hatspsi)^2 }{\hatspsi \spsi } \frac{W_{it}}{\spsi} H \Lambda_i  e_{it} }_{B_{3,2} }    
			\end{align*}
			The term $B_{3,2}$ is bounded by 
			\[\norm{B_{3,2}}^2 \leq \norm{H}^2 \Lp \frac{1}{N} \sum_{i=1}^N \frac{(\spsi-\hatspsi)^4 }{(\hatspsi)^2 (\spsi)^2 } \Rp  \Lp  \frac{1}{N} \sum_{i=1}^N \norm{\frac{W_{it}}{\spsi} \Lambda_i  e_{it}  }^2 \Rp  = o\Lp \frac{1}{N} \Rp. \]
			Thus, we have $B_{3,2} =  o\Lp \frac{1}{\sqrt{N}} \Rp$. For the first term in this expression, we denote $\alpha_{it} = \frac{\spsi-\hatspsi}{ \spsi }$ and $b_{it} = \frac{W_{it}}{\spsi} \Lambda_i  e_{it}$, then in $\frac{1}{N^2} \+E \Big[ (\sum_{i = 1}^N \alpha_{it} b_{it}) (\sum_{i = 1}^N \alpha_{it} b_{it})^\T |S \Big] = \frac{1}{N^2} \sum_{i = 1}^N \+E[\alpha_{it}^2 b_{it} b_{it}^\T |S_i]$ \\ $ + \frac{1}{N^2} \sum_{i \neq j}  \+E[\alpha_{it} \alpha_{jt} b_{it} b_{jt}^\T |S_i, S_j]    $, we have $\+E[\alpha_{it} \alpha_{jt} b_{it} b_{jt}^\T |S_i, S_j]   = \+E[\alpha_{it} \alpha_{jt}|S_i, S_j]  \+E[ \frac{W_{it}}{\spsi} \frac{W_{jt}}{p_{jt}} |S_i, S_j] $ \\ $ \+E[(\Lambda_i \+E[e_{it} e_{jt}] \Lambda_j|S_i, S_j] =0$ given $e_{it}$ is cross-sectionally independent. Therefore, we get $B_{3,1} = o_P(\frac{1}{\sqrt{N}})$.

			Third, we consider $B_4$, which is bounded by 
			\begin{align*}
				\norm{B_4}^2 \leq \max_i \Big| \frac{\spsi-\hatspsi}{\hatspsi \spsi} \Big|^2 \Big( \frac{1}{N} \sum_{i =1}^N \norm{\tilde \Lambda_i -  H \Lambda_i}^2  \Big) \Big(  \frac{1}{N} \sum_{i =1}^N e_{it}^2     \Big) = O_P \Lp \frac{1}{N^{1/2} \delta_{NT}} \Rp .
			\end{align*}
			We have $B_4 = O_P \Lp \frac{1}{N^{1/4} \dnt} \Rp = o_P \Lp \frac{1}{\sqrt{N}} \Rp $. In summary, we have 
			\begin{align}
				B =& B_1 + o_P \Lp \frac{1}{\sqrt{N}} \Rp =  \frac{H}{N} \sum_{i = 1}^N  \frac{\spsi-\hatspsi}{\hatspsi  }  \frac{W_{it} }{\spsi} \Lambda_i \Lambda_i^\T F_t  + o_P \Lp \frac{1}{\sqrt{N}} \Rp \label{eqn:B-simplified}
			\end{align}
			Next let us consider the following decomposition of $C$:
			\begin{align*}
				C =&  \frac{1}{N} \sum_{i = 1}^N \frac{\hatspsi - \spsi}{\hatspsi} \frac{W_{it}}{ \spsi}  \tilde  \Lambda_i \tilde  \Lambda_i^\T \\  =& \underbrace{ \frac{1}{N} \sum_{i = 1}^N \frac{\hatspsi - \spsi}{\hatspsi} \frac{W_{it}}{ \spsi} H  \Lambda_i  \Lambda_i^\T  H^\T}_{C_1} +  \underbrace{\frac{1}{N} \sum_{i = 1}^N \frac{\hatspsi - \spsi}{\hatspsi} \frac{W_{it}}{ \spsi}  H  \Lambda_i  \big(\tilde{\Lambda}_i - H\Lambda_i \big)^\T}_{C_2}  \\
				& + \underbrace{ \frac{1}{N} \sum_{i = 1}^N \frac{\hatspsi - \spsi}{\hatspsi} \frac{W_{it}}{ \spsi} \big(\tilde{\Lambda}_i - H\Lambda_i \big)  \Lambda_i^\T  H^\T}_{C_3}  + \underbrace{ \frac{1}{N} \sum_{i = 1}^N \frac{\hatspsi - \spsi}{\hatspsi} \frac{W_{it}}{ \spsi}  \big(\tilde{\Lambda}_i - H\Lambda_i \big)  \big(\tilde{\Lambda}_i - H\Lambda_i \big)^\T}_{C_4} .
			\end{align*}
			$C_1$ is bounded by
			\begin{align*}
				\norm{C_1}^2 =& \norm{H}^4  \Lp \frac{1}{N} \sum_{i =1}^N \Big(\frac{\spsi-\hatspsi}{\hatspsi \spsi }   \Big)^2  \Rp \Lp \frac{1}{N} \sum_{i =1}^N \norm{\Lambda_i}^4    \Rp \\
				\leq&  \norm{H}^4  \cdot \max_i  \Big(\frac{\spsi-\hatspsi}{\hatspsi \spsi }   \Big)^2 \cdot  \Lp \frac{1}{N} \sum_{i =1}^N \norm{\Lambda_i}^4    \Rp  = o_P \Lp \frac{1}{\sqrt{N}} \Rp.
			\end{align*}
			Thus, we have $C_1 = o_P \Lp \frac{1}{N^{1/4}} \Rp $. $C_2$ is bounded by
			\begin{align*}
				\norm{C_2}^2 &\leq \max_i \Big|\frac{\spsi-\hatspsi}{\hatspsi \spsi } \Big|^2 \Big( \frac{1}{N} \sum_{i =1}^N \norm{ \tilde \Lambda_i -  H \Lambda_i }^2 \Big) \Big(  \frac{1}{N} \sum_{i = 1}^N \norm{\Lambda_i}^2 \Big) \norm{H}^2  = O_P \Lp \frac{1}{N^{1/2} \delta_{NT}} \Rp
			\end{align*}
			and therefore $C_2 = o_P \Lp \frac{1}{N^{1/4} \dnt} \Rp = o_P \Lp \frac{1}{\sqrt{N}} \Rp$. Similarly, we can show $C_3 =  o_P \Lp \frac{1}{\sqrt{N}} \Rp$ and $C_4 =  o_P \Lp \frac{1}{\sqrt{N} \dnt} \Rp$. When we multiply $C$ by $\hat{F}^S_t$, we have
			\begin{align}
				\nonumber C \hat{F}^S_t =& C \big(\tilde{F}^S_t +   \tilde{\Sigma}^\I_{\Lambda,t} B +  \tilde{\Sigma}^\I_{\Lambda,t} C  \hat{F}^S_t \big) = C \bigg((H^\I)^\T F_t +  o_P \Big( \frac{1}{N^{1/4} } \Big) \bigg) \\
				\nonumber =& \bigg(C_1 + o_P \Big( \frac{1}{\sqrt{N}}\Big)  \bigg)  \bigg((H^\I)^\T F_t + o_P \Big( \frac{1}{N^{1/4} } \Big) \bigg) \\
				\nonumber =& C_1  (H^\I)^\T F_t  + C_1 \cdot o_P \Big( \frac{1}{\sqrt{N}}\Big) + (H^\I)^\T F_t  \cdot o_P \Big( \frac{1}{\sqrt{N}}\Big)+ o_P \Big( \frac{1}{N^{3/4} } \Big) \\
				\nonumber =& C_1  (H^\I)^\T F_t + o_P \Big( \frac{1}{\sqrt{N}}\Big) \\
				=& \frac{1}{N} \sum_{i = 1}^N \frac{\hatspsi - \spsi}{\hatspsi} \frac{W_{it}}{ \spsi} H  \Lambda_i  \Lambda_i^\T  F_t + o_P \Big( \frac{1}{\sqrt{N}}\Big)  = - B_1 + o_P \Big( \frac{1}{\sqrt{N}}\Big). \label{eqn:C-simplified}
			\end{align}
			Pluggin Eq. \eqref{eqn:B-simplified} and Eq. \eqref{eqn:C-simplified} into Eq. \eqref{eqn:Fhat-decompose}, we conclude that 
			\begin{align}
				\hat F_t^S =&   \tilde F_t^S + \tilde{\Sigma}^\I_{\Lambda,t} B_1 -  \tilde{\Sigma}^\I_{\Lambda,t} B_1 + o_P \Big( \frac{1}{\sqrt{N}}\Big) =  \tilde F_t^S + o_P \Big( \frac{1}{\sqrt{N}}\Big) .
			\end{align}
			Since the leading terms of $\tilde{F}_t^S - (H^\T)^{-1} F_t$ are of the order $\min \Big(\frac{1}{\sqrt{N}}, \frac{1}{\sqrt{T}} \Big)$, the difference between $\hat F_t^S $ and $ \tilde F_t^S $ is of the order $o_P \Big( \frac{1}{\sqrt{N}}\Big)$, which is smaller than the leading terms of $\tilde{F}_t^S - (H^\T)^{-1} F_t$. Hence, the asymptotic distributions of $\hat F_t^S $ and $\tilde F_t^S $ are the same.

			\subsection{Proof of Theorem \ref{theorem:ate-same-factor}: Treatment Tests}
			\begin{proof}[Proof of Theorem \ref{theorem:ate-same-factor}]
				Throughout this proof we use the convention, that if $\Lambda_i$ and $e_{it}$ do not have a superscript, they indicate the loadings and errors on the control panel. 
				
				We can decompose the estimated loadings $\tilde \Lambda_i^\treat$ into the following two terms: 
				\begin{align*}
					\tilde \Lambda_i^\treat =& \bigg( \sum_{t = \Tcontrol+1}^T \tilde F_t \tilde F_t^\T \bigg)^\I \sum_{t = \Tcontrol+1}^T \tilde F_t Y^\treat_{it} \\
					=&\bigg( \sum_{t = \Tcontrol+1}^T \tilde F_t \tilde F_t^\T \bigg)^\I \sum_{t = \Tcontrol+1}^T \tilde F_t F_t^\T  \Lambda_i^\treat + \bigg( \sum_{t = \Tcontrol+1}^T \tilde F_t \tilde F_t^\T \bigg)^\I \sum_{t = \Tcontrol+1}^T \tilde F_t e_{it}^\treat
				\end{align*}
				We first analyze the loading estimator that uses the population factors in the denominator of the regression:
				\begin{align*}
					\tilde \Lambda_i^{\treat \dagger} =&  \bigg( \sum_{t = \Tcontrol+1}^T (H^{-1})^\T F_t  F_t^\T H^{-1} \bigg)^\I \sum_{t = \Tcontrol+1}^T \tilde F_t Y^\treat_{it} \\
					=&H  \Lambda_i^\treat  + \underbrace{H \bigg( \sum_{t = \Tcontrol+1}^T  F_t  F_t^\T \bigg)^\I \sum_{t = \Tcontrol+1}^T  F_t e_{it}^\treat }_{\Delta_1}  \\& + \underbrace{\bigg( \sum_{t = \Tcontrol+1}^T (H^{-1})^\T F_t  F_t^\T H^{-1} \bigg)^\I \sum_{t = \Tcontrol+1}^T (\tilde F_t -(H^{-1})^\T F_t )F_t^\T  \Lambda_i^\treat }_{\Delta_2}  \\
					& +  \bigg( \sum_{t = \Tcontrol+1}^T (H^{-1})^\T F_t  F_t^\T H^{-1} \bigg)^\I \underbrace{\sum_{t = \Tcontrol+1}^T (\tilde F_t - (H^{-1})^\T F_t)e_{it}}_{ \substack{O_P(\frac{1}{\sqrt{\delta_{NT_i}\Ttreat}}) \\ \text{that can be shown similar as in Lemma \ref{lemma:f-est-error-times-f-and-e-adj}.2}} } .
				\end{align*}
				Assumption \ref{ass:add-factor}.2 implies that $\Delta_1$ is asymptotically normal with 
				\[\sqrt{\Ttreat} H \bigg( \sum_{t = \Tcontrol+1}^T  F_t  F_t^\T \bigg)^\I \sum_{t = \Tcontrol+1}^T  F_t e_{it}^\treat \xrightarrow{d} \calN \big(0, (Q^\T)^\I \Sigma_{F}^\I \Sigma_{F,e_i} \Sigma_{F}^\I  Q^\I \big). \]
				In order to deal with the second term $\Delta_2$, recall the following result from the proof of Theorem \ref{theorem:asy-normal}: 
				\[\tilde F_t - (H^{-1})^\T F_t  = (H^\I)^\T \Big(\frac{1}{N}   \sum_{l = 1}^N W_{lt} \Lambda_l  \Lambda_l^\T  \Big)^\I \Big[ \Big(\frac{1}{N}  \sum_{l = 1}^N W_{lt} \Lambda_l  e_{lt}    \Big) - \Big(I_r \otimes (H^\T \tilde D^\I (H^\T)^{-1} F_t )^\T \Big) \tvec(\mathbf{X}_t) \Big] . \]
				{Assumption \ref{ass:add-factor}.2} implies $\sum_{t = \Tcontrol+1}^T \sum_{j = 1}^N W_{jt} \Lambda_j  e_{jt} F_t^\T= O_P \Big( \frac{1}{\sqrt{\delta\Tcontrol}} \Big)$. Denote \\ $U_t =  - \Big(\frac{1}{N}   \sum_{l = 1}^N W_{lt} \Lambda_l  \Lambda_l^\T  \Big)^\I    \mathbf{X}_t^\T H^\T  \tilde D^{-1} (H^\T)^{-1} F_t$ and \\  $\mathbf{U}_i =  \frac{1}{\Ttreat} \sum_{t = \Tcontrol+1}^T U_t F_t^\T \Lambda_i^\treat =-  \frac{1}{\Ttreat} \sum_{t = \Tcontrol+1}^T \Big(\frac{1}{N}   \sum_{l = 1}^N W_{lt} \Lambda_l  \Lambda_l^\T  \Big)^\I                                                                                               \mathbf{X}_t^\T H^\T  \tilde D^{-1} (H^\T)^{-1} F_t F_t^\T  \Lambda_i^\treat$. The leading term in $\Delta_2$ is $H \bigg( \sum_{t = \Tcontrol+1}^T  F_t  F_t^\T  \bigg)^\I  \*U_i$.
				
				Next we analyze the difference between $\tilde \Lambda_i^{\treat} $ and $\tilde \Lambda_i^{\treat \dagger} $.
				\begin{align*}
					\tilde \Lambda_i^{\treat}  - \tilde \Lambda_i^{\treat \dagger} =&  \bigg( \sum_{t = \Tcontrol+1}^T \tilde F_t \tilde F_t^\T \bigg)^\I  \Bigg[  \sum_{t = \Tcontrol+1}^T (H^{-1})^\T F_t  F_t^\T H^{-1} - \sum_{t = \Tcontrol+1}^T \tilde F_t \tilde F_t^\T  \Bigg] \\
					& \cdot \bigg( \sum_{t = \Tcontrol+1}^T (H^{-1})^\T F_t  F_t^\T H^{-1} \bigg)^\I  \sum_{t = \Tcontrol+1}^T \tilde F_t Y^\treat_{it} \\
					=& H \bigg( \sum_{t = \Tcontrol+1}^T  F_t  F_t^\T  \bigg)^\I  H^\T  \Bigg[  \sum_{t = \Tcontrol+1}^T (H^{-1})^\T F_t  F_t^\T H^{-1} - \sum_{t = \Tcontrol+1}^T \tilde F_t \tilde F_t^\T  \Bigg]  H \Lambda_i^\treat + o_P \Big(\frac{1}{\sqrt{\delta_{NT}}}\Big).
				\end{align*}
				For the term $\sum_{t = \Tcontrol+1}^T \tilde F_t \tilde F_t^\T  -  \sum_{t = \Tcontrol+1}^T (H^{-1})^\T F_t  F_t^\T H^{-1} $, it holds that
				\begin{align*}
					& \sum_{t = \Tcontrol+1}^T \tilde F_t \tilde F_t^\T  -  \sum_{t = \Tcontrol+1}^T (H^{-1})^\T F_t  F_t^\T H^{-1} \\
					=&  \sum_{t = \Tcontrol+1}^T (\tilde F_t - (H^{-1})^\T F_t) \tilde F_t^\T  +  \sum_{t = \Tcontrol+1}^T (H^{-1})^\T F_t (\tilde F_t - (H^{-1})^\T F_t)^\T \\
					=&  \sum_{t = \Tcontrol+1}^T (\tilde F_t - (H^{-1})^\T F_t) F_t^\T H^\I  +  \sum_{t = \Tcontrol+1}^T (H^{-1})^\T F_t (\tilde F_t - (H^{-1})^\T F_t)^\T \\
					& \quad + \sum_{t = \Tcontrol+1}^T (\tilde F_t - (H^{-1})^\T F_t) (\tilde F_t - (H^{-1})^\T F_t)^\T.
				\end{align*}
				Hence, $\tilde \Lambda_i^{\treat}  - \tilde \Lambda_i^{\treat \dagger}$ satisfies
				\begin{align*}
					\tilde \Lambda_i^{\treat}  - \tilde \Lambda_i^{\treat \dagger} =& - H \bigg( \sum_{t = \Tcontrol+1}^T  F_t  F_t^\T  \bigg)^\I  \*U_i  - H \bigg( \sum_{t = \Tcontrol+1}^T  F_t  F_t^\T  \bigg)^\I   \sum_{t = \Tcontrol+1}^T  F_t U_t^\T \Lambda_i^\treat + o_P \Big(\frac{1}{\sqrt{\delta_{NT}}}\Big)
				\end{align*}
				This leads to the following distribution result: 
				\begin{align*}
					& \sqrt{T}( (\tilde \Lambda_i^{\treat}  - \tilde \Lambda_i^{\treat \dagger} ) + \Delta_2 )\\ =& - \sqrt{T} H \bigg( \sum_{t = \Tcontrol+1}^T  F_t  F_t^\T  \bigg)^\I   \sum_{t = \Tcontrol+1}^T  F_t U_t^\T \Lambda_i^\treat + O_P \Big(\frac{1}{\sqrt{\delta_{NT}}}\Big) \\
					=&  \sqrt{T} H \bigg( \sum_{t = \Tcontrol+1}^T  F_t  F_t^\T  \bigg)^\I   \sum_{t = \Tcontrol+1}^T  F_t F_t H^\I D^\I H \*X_t  \Big(\frac{1}{N}   \sum_{l = 1}^N W_{lt} \Lambda_l  \Lambda_l^\T  \Big)^\I  \Lambda_i^\treat + O_P \Big(\frac{1}{\sqrt{\delta_{NT}}}\Big) \\
					\rightarrow& \calN \Bigg(0, (Q^\T)^\I \Sigma_{F}^\I  \covIItreat_{\Lambda,i} \Sigma_{F}^\I  Q^\I \Bigg) 
					\quad \mathcal{G}^t-\text{stably}.
				\end{align*}
				where $\covIItreat_{\Lambda,i}  = \Sigma_{\Lambda}^\I \Big[ \frac{1}{\Ttreat^2} \sum_{u,s = \Tcontrol+1}^T   g_{u,s}(\Sigma_{\Lambda,u}^\I \Lambda_i^\treat,\Sigma_{\Lambda,s}^\I \Lambda_i^\treat)  \Big] \Sigma_{\Lambda}^\I   $, and the function $g_{u,s}(\cdot,\cdot)$ is defined in Assumption \ref{ass:add-factor}. Here we use the property that
				\begin{align}
					\nonumber &  \frac{1}{\Ttreat}  \sum_{t = \Tcontrol+1}^T  F_t F_t H^\I D^\I H \*X_t  \Big(\frac{1}{N}   \sum_{l = 1}^N W_{lt} \Lambda_l  \Lambda_l^\T  \Big)^\I  \Lambda_i^\treat \\ =&  \left( \frac{1}{\Ttreat}  \sum_{t = \Tcontrol+1}^T  F_t F_t  \right) \left(\frac{1}{\Ttreat}  \sum_{t = \Tcontrol+1}^T  H^\I D^\I H \*X_t  \Big(\frac{1}{N}   \sum_{l = 1}^N W_{lt} \Lambda_l  \Lambda_l^\T  \Big)^\I  \Lambda_i^\treat   \right)   + o_P \left(\frac{1}{\sqrt{\delta_{NT}}}\right). \label{eqn:property-separate-sum}
				\end{align}
				In the simplified factor model, the component in the asymptotic distribution of \\ $H^\I D^\I H \*X_t  \Big(\frac{1}{N}   \sum_{l = 1}^N W_{lt} \Lambda_l  \Lambda_l^\T  \Big)^\I  \Lambda_i^\treat $ that varies with $t$ is $\Sigma_{\Lambda,t}$ that is independent of $F_t F_t^\T $ (Step 5.3 in the proof of Proposition \ref{prop:simple-assump-imply-general-assump}). We can verify that \eqref{eqn:property-separate-sum} holds in the simplified factor model. In the more general case, the asymptotic distribution of $H^\I D^\I H \*X_t  \Big(\frac{1}{N}   \sum_{l = 1}^N W_{lt} \Lambda_l  \Lambda_l^\T  \Big)^\I  \Lambda_i^\treat $ that varies with $t$ is related to $W$, which is independent of $F_t F_t^\T$. We can verify Proposition \ref{prop:simple-assump-imply-general-assump} holds. The detailed proof is available upon request.

				Hence, we conclude for $\tilde \Lambda_i^{\treat }  - H  \Lambda_i^\treat $:
				\begin{align*}
					&\sqrt{\delta_{NT_i}} (\tilde \Lambda_i^{\treat }  - H  \Lambda_i^\treat  ) 
					\rightarrow \calN\bigg(0, (Q^\T)^\I \Sigma_F^\I \bigg[ \frac{\delta_{NT_i}}{\Ttreat} \Sigma_{F,e_i} +  \frac{\delta_{NT_i}}{T} \covIItreat_{\Lambda,i} \bigg]  \Sigma_{F}^\I  Q^\I \Bigg)  
					\quad \mathcal{G}^t-\text{stably}.
				\end{align*}
				
				Next, we consider $\tilde{C}^\treat_{it} - C^\treat_{it}$: 
				\begin{align*}
					\tilde{C}^\treat_{it} - C^\treat_{it}  =& (\tilde \Lambda_i^{\treat }  - H  \Lambda_i^\treat  )^\T (H^\I)^\T F_t + (\Lambda_i^\treat )^\T H^\T ( \tilde{F}_t -(H^\I)^\T F_t   ) + (\tilde \Lambda_i^{\treat }  - H  \Lambda_i^\treat  )^\T  ( \tilde{F}_t -(H^\I)^\T F_t   ). 
				\end{align*}
				Recall, that we have for $\tilde{F}_t -(H^\I)^\T F_t     $:
				\begin{align*}
					&\tilde{F}_t -(H^\I)^\T F_t         \\
					=& (H^\I)^\T \Big(\frac{1}{N}  \sum_{l = 1}^N W_{lt} \Lambda_l  \Lambda_l^\T  \Big)^\I  \bigg[  \Big(\frac{1}{N}  \sum_{l = 1}^N W_{lt} \Lambda_l  e_{lt}    \Big)  -\Big(I_r \otimes (H^\T \tilde D^\I (H^\T)^{-1} F_t )^\T \Big) \tvec(\mathbf{X}_t) \bigg]  + o_P \Big(\frac{1}{\sqrt{\delta_{NT}}}\Big) .
				\end{align*}
				Therefore, the difference between the estimated and population treated common components equals 
				\begin{align*}
					& \tilde{C}^\treat_{it} - C^\treat_{it}  \\ =& F_t^\T \bigg( \sum_{s = \Tcontrol+1}^T  F_s  F_s^\T \bigg)^\I  \bigg[ \sum_{s = \Tcontrol+1}^T  F_s e_{is}^\treat +  \sum_{s = \Tcontrol+1}^T  F_s F_s^\T H^\I D^\I H \*X_s  \Big(\frac{1}{N}   \sum_{l = 1}^N W_{ls} \Lambda_l  \Lambda_l^\T  \Big)^\I  \Lambda_i^\treat \bigg]  \\
					& +  (\Lambda_i^\treat )^\T \Big(\frac{1}{N}  \sum_{l = 1}^N W_{lt} \Lambda_l  \Lambda_l^\T  \Big)^\I  \bigg[  \Big(\frac{1}{N}  \sum_{l = 1}^N W_{lt} \Lambda_l  e_{lt}^\control    \Big)  -\Big(I_r \otimes (H^\T \tilde D^\I (H^\T)^{-1} F_t )^\T \Big) \tvec(\mathbf{X}_t) \bigg]  + o_P \Big(\frac{1}{\sqrt{\delta_{NT}}}\Big) 
				\end{align*} 
				and we obtain the following distribution
				\begin{align*}
					& \sqrt{\delta_{NT_i}} (\tilde{C}^\treat_{it} - C^\treat_{it} ) 
					\rightarrow \calN \Bigg(0, F_t^\T \Sigma_{F}^\I \bigg[  \problim \Big( \frac{\delta_{NT_i}}{\Ttreat}\Sigma_{F,e_i} +  \frac{\delta_{NT_i}}{T} \covIItreat_{\Lambda,i}  \bigg]   \Sigma_{F}^\I  F_t \\
					& \quad + (\Lambda_i^\treat )^\T \Sigma_{\Lambda, t}^\I \bigg[  \problim \Big( \frac{\delta_{NT_i}}{N} \covI_{F,t} +  \frac{\delta_{NT_i}}{T} \covIIcontrol_{F,t} \Big) \bigg] \Sigma_{\Lambda, t}^\I \Lambda_i^\treat\\
					& \quad -  \problim \,\, 2 \cdot \frac{\delta_{NT_i}}{T} F_t^\T \Sigma_{F}^\I \covIIItreatcontrol_{\Lambda,F,i,t}  \Sigma_{\Lambda, t}^\I \Lambda_i^\treat  \Bigg) 
					\quad \mathcal{G}^t-\text{stably}.
				\end{align*}
				Next, we consider individual treatment effect
				\begin{align*}
					& (\tilde{C}^\treat_{it} - C^\treat_{it} ) - (\tilde{C}^\control_{it} - C^\control_{it} ) \\
					=& (\tilde F_t - (H^\T)^\I F_t)^\T H (\Lambda_i^\treat - \Lambda_i^\control) +  (\tilde \Lambda_i^\treat - H \Lambda_i^\treat)^\T (H^\T)^\I  F_t  -  (\tilde \Lambda_i^\control - H \Lambda_i^\control)^\T (H^\T)^\I  F_t+ o_P \Lp\frac{1}{\sqrt{\delta_{NT}}} \Rp  .
				\end{align*}
				Recall, that the leading terms in  $\tilde \Lambda_i^\control - H \Lambda_i^\control$ are 
				\begin{align*}
					\tilde \Lambda_i^\control - H \Lambda_i^\control =& \tilde D^{-1}  \frac{1}{N} \sum_{l=1}^N H \Lambda_l^\control (\Lambda_l^\control)^\T  \frac{1}{|\tlq_{li}|} \sum_{t \in \tlq_{li}} F_t e_{it} \\
					& + \tilde D^\I  H  \cdot \frac{1}{N} \sum_{l = 1}^N \Lambda_l^\control (\Lambda_l^\control)^\T  \Big( \frac{1}{|\tlq_{li}|} \sum_{t \in \tlq_{li}} F_t F_t^\T - \frac{1}{T} \sum_{t = 1}^T F_t F_t^\T \Big) \Lambda_i^\control + o_P \Lp\frac{1}{\sqrt{\delta_{NT}}} \Rp 
				\end{align*}
				Thus, we have 
				\begin{align*}
					& (\tilde{C}^\treat_{it} - C^\treat_{it} ) - (\tilde{C}^\control_{it} - C^\control_{it} ) \\=&  F_t^\T \bigg( \sum_{s = \Tcontrol+1}^T  F_s F_s^\T \bigg)^\I  \bigg[ \sum_{s = \Tcontrol+1}^T  F_s e_{is}^\treat + \sum_{s = \Tcontrol+1}^T  F_s F_s^\T H^\I D^\I H \*X_s  \Big(\frac{1}{N}   \sum_{l = 1}^N W_{ls} \Lambda_l^\control   (\Lambda_l^\control )^\T  \Big)^\I  \Lambda_i^\treat \bigg] \\
					& +  (\Lambda_i^\treat - \Lambda_i^\control )^\T \Big(\frac{1}{N}  \sum_{l = 1}^N W_{lt} \Lambda_l^\control   (\Lambda_l^\control)^\T  \Big)^\I  \bigg[  \Big(\frac{1}{N}  \sum_{l = 1}^N W_{lt} \Lambda_l^\control   e_{lt}^\control    \Big)  -\Big(I_r \otimes (H^\T \tilde D^\I (H^\T)^{-1} F_t )^\T \Big) \tvec(\mathbf{X}_t) \bigg]  \\
					& -  F_t^\T \Big(\frac{F^\T F}{T} \Big)^\I  \Big(\frac{\Lambda^\T \Lambda}{N} \Big)^\I  
					\bigg[  \frac{1}{N} \sum_{l=1}^N \Lambda_l^\control (\Lambda_l^\control)^\T  \frac{1}{|\tlq_{li}|} \sum_{s \in \tlq_{li}} F_s e_{is} +  \big( (\Lambda_i^\control)^\T \otimes I_r  \big) \tvec(X_i )  \bigg] + o_P \Big(\frac{1}{\sqrt{\delta_{NT}}}\Big) 
				\end{align*}
				and 
				\begin{align*}
					& \sqrt{\delta_{NT_i}} \Big((\tilde{C}^\treat_{it} - C^\treat_{it} ) - (\tilde{C}^\control_{it} - C^\control_{it} )\Big) \\ 
					\rightarrow& \calN \Bigg(0,  \problim \bigg( F_t^\T \Sigma_{F}^\I  \Gamma_{\Lambda,i}^{\textnormal{obs,miss}}   \Sigma_{F}^\I  F_t + \left(\Lambda_i^\treat - \Lambda_i^\control \right)^\T  \Gamma_{F,t}^{\textnormal{obs,miss}}  \left(\Lambda_i^\treat - \Lambda_i^\control \right) \\&
					+ 2 \cdot  F_t^\T \Sigma_{F}^\I  \Gamma_{\Lambda,F,i,t}^{\textnormal{miss,cov,diff}}  \left(\Lambda_i^\treat - \Lambda_i^\control \right)  \bigg) \Bigg)  
					\quad \mathcal{G}^t-\text{stably}.
				\end{align*}
				with 
				$\covIcontrol_{\Lambda,i} $, $\covIIcontrol_{\Lambda,i}$ and $\covIIIcontrolcontrol_{\Lambda, F, i,t}$ given in Theorem \ref{theorem:asy-normal-equal-weight}, and $ \Gamma_{F,t}^{\textnormal{obs,miss}}= \Sigma_{\Lambda, t}^\I \bigg[ \frac{\delta_{NT_i}}{N} \covIcontrol_{F,t} + \frac{\delta_{NT_i}}{T} \covIIcontrol_{F,t}  \bigg]  \Sigma_{\Lambda, t}^\I $, \\
				$ \Gamma_{\Lambda,i}^{\textnormal{obs,miss}} =  \frac{\delta_{NT_i}}{T} \Sigma_{\Lambda}^\I \big[\covIcontrol_{\Lambda,i}   + \covIIcontrol_{\Lambda,i} \big]   \Sigma_{\Lambda}^\I +  \frac{\delta_{NT_i}}{\Ttreat} \covItreat_{\Lambda,i} + \frac{\delta_{NT_i}}{T} \covIItreat_{\Lambda,i}  -  \frac{\delta_{NT_i}}{ T} \big( \covIIItreatcontrol_{\Lambda,\Lambda,i}  + (\covIIItreatcontrol_{\Lambda,\Lambda,i})^\T\big)$, \\   	
				$\Gamma_{\Lambda,F,i,t}^{\textnormal{miss,cov,diff}} = \frac{\delta_{NT_i}}{T} \left( \Sigma_{\Lambda}^\I   \covIIIcontrolcontrol_{\Lambda,F,i,t} -  \covIIItreatcontrol_{\Lambda,F,i,t} \right)\Sigma_{\Lambda, t}^\I $,\\
				$\covIIItreatcontrol_{\Lambda,\Lambda,i} = \Sigma_{\Lambda}^\I \Big[ \frac{\delta_{NT_i}}{\Ttreat } \sum_{s = \Tcontrol+1}^T g^\cov_{i,s}(\Lambda_i^\control,\Sigma_{\Lambda,s}^\I \Lambda_i^\treat )  \Big] \Sigma_{\Lambda}^\I    $, and the function
				and the function $g^\cov_{i,s}(\cdot,\cdot)$ is defined in Assumption \ref{ass:mom-clt}.\ref{ass:asy-normal-add-term-thm-loading}.

				Last but not least, we consider the weighted treatment effect
				\begin{align*}
					& ( \tilde \beta_i^\treat - \beta_i^\treat) - ( \tilde \beta_i^\control - \beta_i^\control) \\
					=& (Z^\T Z)^\I Z^\T\Big( (\tilde C^\treat_{i,(\Tcontrol+1):T} - C^\treat_{i,(\Tcontrol+1):T}) - (\tilde C^\control_{i,(\Tcontrol+1):T} - C^\control_{i,(\Tcontrol+1):T})  \Big) \\
					=& (Z^\T Z)^\I \sum_{t = \Tcontrol+1}^T Z_t \Big((\tilde{C}^\treat_{it} - C^\treat_{it} ) - (\tilde{C}^\control_{it} - C^\control_{it} )\Big) \\
					=& (Z^\T Z)^\I  \bigg( \sum_{t = \Tcontrol+1}^T Z_t  F_t^\T \bigg) \bigg( \sum_{s = \Tcontrol+1}^T  F_s F_s^\T \bigg)^\I  \bigg[ \sum_{s = \Tcontrol+1}^T  F_s e_{is}^\treat  \\
					& \qquad \qquad - \sum_{s = \Tcontrol+1}^T  F_s F_s^\T H^\I D^\I H \*X_s  \Big(\frac{1}{N}   \sum_{l = 1}^N W_{ls} \Lambda_l  \Lambda_l^\T  \Big)^\I  \Lambda_i^\treat \bigg] \\
					& - (Z^\T Z)^\I \sum_{t = \Tcontrol+1}^T Z_t F_t^\T H^\I \tilde D^\I H \*X_t  \Big(\frac{1}{N}  \sum_{l = 1}^N W_{lt} \Lambda_l  \Lambda_l^\T  \Big)^\I  (\Lambda_i^\treat - \Lambda_i^\control )   \\
					& -   (Z^\T Z)^\I \bigg( \sum_{t = \Tcontrol+1}^T Z_t  F_t^\T \bigg) \Big(\frac{F^\T F}{T} \Big)^\I  \Big(\frac{\Lambda^\T \Lambda}{N} \Big)^\I  
					\bigg[  \frac{1}{N} \sum_{l=1}^N \Lambda_l^\control (\Lambda_l^\control)^\T  \frac{1}{|\tlq_{li}|} \sum_{s \in \tlq_{li}} F_s e_{is} +  \big( (\Lambda_i^\control)^\T \otimes I_r  \big) \tvec(X_i )  \bigg] \\
					& + o_P \Big(\frac{1}{\sqrt{\delta_{NT}}}\Big) ,
				\end{align*}
				which results in the distribution
				\begin{align*}
					& \sqrt{\delta_{NT_i}} \Big(( \tilde \beta_i^\treat - \beta_i^\treat) - ( \tilde \beta_i^\control - \beta_i^\control) \Big) \\ 
					\rightarrow& \calN \Bigg(0,   \problim \bigg(  \Sigma_Z^\I  \Sigma_{F,Z} \Sigma_{F}^\I  \Gamma_{\Lambda,i}^{\textnormal{obs,miss}} \Sigma_{F}^\I  \Sigma_{F,Z}^\T \Sigma_Z^\I + \Sigma_Z^\I \covIItreatcontrol_{Z,i}  \Sigma_Z^\I  \\		
					&+  \frac{\delta_{NT_i}}{ T} \Sigma_Z^\I \bigg[ \Sigma_{F,Z} \Sigma_{F}^\I \Sigma_{\Lambda}^\I   \covIIcontroltreatcontrol_{\Lambda,Z,i} +   (\covIIcontroltreatcontrol_{\Lambda,Z,i})^\T \cdot \Sigma_{\Lambda}^\I  \Sigma_{F}^\I \Sigma_{F,Z}^\T \bigg] \Sigma_Z^\I \\
					& - \frac{\delta_{NT_i}}{ T}\Sigma_Z^\I   \bigg[\Sigma_{F,Z} \Sigma_{F}^\I\cdot  \covIItreattreatcontrol_{\Lambda,Z,i} +   (\covIItreattreatcontrol_{\Lambda,Z,i})^\T \cdot \Sigma_{F}^\I  \Sigma_{F,Z}^\T \bigg] \Sigma_Z^\I  \bigg) \Bigg) 
					\quad \mathcal{G}^t-\text{stably}.
				\end{align*}
				with 
				$\covIIcontroltreatcontrol_{\Lambda,Z,i} = \Big[ \frac{1}{\Ttreat} \sum_{s = \Tcontrol+1}^T  g^\cov_{i,s}(\Lambda_i^\control,\Sigma_{\Lambda, s}^\I (\Lambda_i^\treat - \Lambda_i^\control) ) \Big] \Sigma_{\Lambda}^\I  \Sigma_{F}^\I \Sigma_{F,Z}$,  \\
				$\covIItreattreatcontrol_{\Lambda,Z,i} = \Sigma_{\Lambda}^\I \Big[ \frac{1}{\Ttreat^2}  \sum_{u,s = \Tcontrol+1}^T g_{u,s}(\Sigma_{\Lambda,u}^\I \Lambda_i^\treat,\Sigma_{\Lambda, s}^\I (\Lambda_i^\treat - \Lambda_i^\control))    \Big] \Sigma_{\Lambda}^\I  \Sigma_{F}^\I \Sigma_{F,Z}$, \\ and
				$\covIItreatcontrol_{\Lambda,Z,i} = \Sigma_{F,Z}^\T \Sigma_{F}^\I  \Sigma_{\Lambda}^\I \Big[ \frac{1}{\Ttreat^2} \sum_{u,s = \Tcontrol+1}^T g_{u,s}(\Sigma_{\Lambda, u}^\I (\Lambda_i^\treat - \Lambda_i^\control),\Sigma_{\Lambda, s}^\I (\Lambda_i^\treat - \Lambda_i^\control) )    \Big] \Sigma_{\Lambda}^\I  \Sigma_{F}^\I \Sigma_{F,Z}$, $g^\cov_{i,s}(\cdot,\cdot)$ and the functions $g_{u,s}(\cdot,\cdot)$ are defined in Assumptions \ref{ass:mom-clt}.\ref{ass:asy-normal-add-term-thm-loading} and \ref{ass:add-factor}. 
				
				Here we use $\frac{1}{\Ttreat}  \sum_{t = \Tcontrol+1}^T Z_t  F_t^\T \xrightarrow{P} \Sigma_{F,Z}$ and the property that
				\begin{align}
					\nonumber &  \frac{1}{\Ttreat}  \sum_{t = \Tcontrol+1}^T  Z_t F_t H^\I D^\I H \*X_t  \Big(\frac{1}{N}   \sum_{l = 1}^N W_{lt} \Lambda_l  \Lambda_l^\T  \Big)^\I  \Lambda_i^\treat \\ =&  \left( \frac{1}{\Ttreat}  \sum_{t = \Tcontrol+1}^T  Z_t F_t  \right) \left(\frac{1}{\Ttreat}  \sum_{t = \Tcontrol+1}^T  H^\I D^\I H \*X_t  \Big(\frac{1}{N}   \sum_{l = 1}^N W_{lt} \Lambda_l  \Lambda_l^\T  \Big)^\I  \Lambda_i^\treat   \right)   + o_P \left(\frac{1}{\sqrt{\delta_{NT}}}\right). \label{eqn:property-separate-sum-1}
				\end{align}
				Equation \eqref{eqn:property-separate-sum-1} holds for the same reason as equation \eqref{eqn:property-separate-sum}.

				
			\end{proof}

			\subsection{Proof of Proposition \ref{prop:cov}: Feasible Estimator of Asymptotic Variances}
			\subsubsection{Feasible Estimator for Theorem \ref{theorem:asy-normal-equal-weight}.1 }\label{subsec:feasible-loadings}
			
			\begin{lemma}\label{lemma:Lambda-asy-var-main-estimator}
				Assume  we know the set $\mathcal{E}= \{ i,j,s,t : \+E[e_{it}e_{js}] \neq 0 $ and $| \mathcal{E} | = O(NT)$. Under the assumptions of Theorem \ref{theorem:asy-normal-equal-weight}.1, we have
				\[ \widehat \covI_{\Lambda,j}=   \frac{T}{N^2} \tilde D^{-1} \sum_{i=1}^N \sum_{l=1}^N \tilde \Lambda_i \tilde \Lambda_i^\T \Lp  \frac{1}{|\tlq_{ij}| |\tlq_{lj}|}  \sum_{t,s \in \tlq_{ij}}  \tilde F_t \tilde F_s^{\top}    \tilde e_{it} \tilde e_{ls}  \mathbbm{1}_{\{i,k,s,t \in \mathcal{E} \}}  \Rp \tilde \Lambda_l \tilde \Lambda_l^\T \tilde D^{-1} \xrightarrow{P} H \covI_{\Lambda,j} H^\T \]
				with  $\tilde e_{it} = Y_{it} - \tilde \Lambda_i^\T \tilde F_t$ for the observed $Y_{it}$.
			\end{lemma}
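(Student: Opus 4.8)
The plan is to show consistency by substituting, into $\widehat\covI_{\Lambda,j}$, the population counterpart of every estimated object ($\tilde D$, $\tilde\Lambda_i$, $\tilde F_t$, and $\tilde e_{it}$), checking that the rotation matrices cancel so that the estimator collapses to $H$ times a sample analogue of $\covI_{\Lambda,j}$ times $H^\T$, and then invoking a law of large numbers to identify that sample analogue with $\covI_{\Lambda,j}$. The whole argument parallels the consistency proof for the variance estimator in \cite{bai2003inferential}, but must be carried out on the reweighted sets $\tlq_{ij}$ and combined with the thresholding device encoded by $\mathcal{E}$.

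First I would record the three inputs we may quote: $\tilde D \xrightarrow{P} D$ from Lemma \ref{lemma:eigenvalue-converge}, the mean-square bound $\frac1N\sum_j \norm{\tilde\Lambda_j - H\Lambda_j}^2 = O_P(1/\delta)$ from Theorem \ref{thm:consistency-same-H}, and the expansion $\tilde F_t - (H^\T)^\I F_t = O_P(1/\sqrt\delta)$ extracted from the proof of Theorem \ref{theorem:asy-normal-equal-weight}.2. From these I would obtain the residual expansion $\tilde e_{it} - e_{it} = -\Lambda_i^\T H^\T\Lp\tilde F_t - (H^\T)^\I F_t\Rp - (\tilde\Lambda_i - H\Lambda_i)^\T(H^\T)^\I F_t - (\tilde\Lambda_i - H\Lambda_i)^\T\Lp\tilde F_t - (H^\T)^\I F_t\Rp$, in which the leading common-component piece $\Lambda_i^\T H^\T (H^\T)^\I F_t$ exactly cancels $\Lambda_i^\T F_t$, leaving $\tilde e_{it}$ equal to $e_{it}$ up to terms of order $O_P(1/\sqrt\delta)$.

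Next I would substitute. Writing $\tilde\Lambda_i\tilde\Lambda_i^\T = H\Lambda_i\Lambda_i^\T H^\T$ plus error and $\tilde F_t\tilde F_s^\T = (H^\T)^\I F_tF_s^\T H^\I$ plus error, the interior products $H^\T(H^\T)^\I = I_r$ and $H^\I H = I_r$ cancel, so the estimator equals $H\,\Lp \frac{T}{N^2}\sum_{i,l}\Lambda_i\Lambda_i^\T \big( \frac{1}{|\tlq_{ij}||\tlq_{lj}|}\sum_{t,s} F_tF_s^\T e_{it}e_{ls}\,\mathbbm 1_{\{i,l,s,t\in\mathcal{E}\}}\big)\Lambda_l\Lambda_l^\T \Rp H^\T$ plus remainder terms. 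Each remainder collects at least one factor of an estimation error $\tilde\Lambda_i - H\Lambda_i$, $\tilde F_t - (H^\T)^\I F_t$, or $\tilde e_{it}-e_{it}$, and I would bound these by Cauchy--Schwarz in the same manner as Lemmas \ref{lemma:prep-asy-F} and \ref{lemma:f-est-error-times-f-and-e-adj}, showing each is $o_P(1)$. The concluding step is a law of large numbers: the sparsity $|\mathcal{E}| = O(NT)$ retains only $O(T)$ nonzero $(t,s)$ pairs per unit with known support, so $\frac{1}{|\tlq_{ij}||\tlq_{lj}|}\sum_{t,s} F_tF_s^\T e_{it}e_{ls}\,\mathbbm 1$ concentrates on its expectation, and the double cross-sectional average over $(i,l)$ then reproduces $\covI_{\Lambda,j}$ by the moment bounds of Assumption \ref{ass:factor-model}.\ref{ass:error} and the variance computed in Assumption \ref{ass:mom-clt}.\ref{ass:asy-normal-main-term-thm-loading}.

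The hard part will be the simultaneous control of the residual error $\tilde e_{it}-e_{it}$ as it propagates through the products $\tilde e_{it}\tilde e_{ls}$ inside \emph{both} the restricted double time sum and the double cross-sectional sum: since $\tilde e_{it}-e_{it}$ itself carries averaged factor- and loading-estimation errors, the cross terms of the form $\frac{T}{N^2}\sum_{i,l}\tilde\Lambda_i\tilde\Lambda_i^\T(\cdots) e_{it}(\tilde e_{ls}-e_{ls})(\cdots)\tilde\Lambda_l\tilde\Lambda_l^\T$ must be shown not to accumulate once summed against the full panel. Here the sparsity of $\mathcal{E}$ is indispensable---without it the $O(T^2)$ time pairs together with the serial and cross-sectional error dependence would keep these remainders from vanishing---and the bound must rely on the uniform fourth-moment controls in Assumption \ref{ass:factor-model}.\ref{ass:error} together with the $1/\sqrt\delta$ convergence rates to close.
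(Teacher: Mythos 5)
Your proposal is correct and follows essentially the same route as the paper's own proof: a plug-in consistency argument that replaces $\tilde D$, $\tilde\Lambda_i$, $\tilde F_t$, and $\tilde e_{it}$ by their rotated population limits (using Lemma \ref{lemma:eigenvalue-converge}, Theorem \ref{thm:consistency-same-H}, and the factor expansion from Theorem \ref{theorem:asy-normal-equal-weight}.2), notes that the rotation matrices cancel, and identifies the remaining sample average with $\covI_{\Lambda,j}$ via the sparsity of $\mathcal{E}$ and the moment conditions in Assumptions \ref{ass:factor-model}.\ref{ass:error} and \ref{ass:mom-clt}.\ref{ass:asy-normal-main-term-thm-loading}. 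If anything, you are more explicit than the paper, which asserts consistency of the pairwise covariance plug-in without spelling out the Cauchy--Schwarz control of the cross terms generated by $\tilde e_{it}-e_{it}$ that you correctly flag as the delicate step.
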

			
			\begin{proof}[Proof of Lemma \ref{lemma:Lambda-asy-var-main-estimator}]
				$\tilde e_{it} = Y_{it} - \tilde \Lambda_i^\T \tilde F_t$ is a consistent estimator for $e_{it}$ for $(i,t) \in \{(i,t): W_{it} = 1\}$ because $\tilde F_t$ and $\tilde \Lambda_i$ are consistent estimators for $(H^\T)^\I F_t$ and $H \Lambda_i$ following from Theorem \ref{theorem:asy-normal-equal-weight}. Recall that
				\[\sqrt{T}( \tilde \Lambda_j - H_j \Lambda_j ) = \tilde D^{-1}  \frac{1}{N} \sum_{i=1}^N \sqrt{\frac{T}{|\tlq_{ij}|}} H_i \Lambda_i \Lambda_i^\T  \frac{1}{\sqrt{|\tlq_{ij}|}} \sum_{t \in \tlq_{ij}} F_t e_{jt} + o_P(1).\]
				Note that $X_{it}$ is observed for $t \in \tlq_{ij}$ so $\tilde e_{it}$ is a consistent estimator for $e_{it}$ for $t \in \tlq_{ij}$. Then, for each $i$ and $l$, a consistent estimator for the asymptotic covariance between 
				$\frac{1}{\sqrt{|\tlq_{ij}|}} \sum_{t \in \tlo_{ij}} (H^\T)^\I F_t e_{jt} $ and $\frac{1}{\sqrt{|\tlq_{lj}|}} \sum_{t \in \tlo_{lj}} (H^\T)^\I F_t e_{jt} $ is 
				\[\frac{1}{|\tlq_{ij}| |\tlq_{lj}|} \sum_{s \in \tlq_{ij} , t \in \tlq_{lj}, (s,t) \in \Omega_{e_j}} \tilde F_s \tilde F_t^\T \tilde e_{js} \tilde e_{jt} . \]
				Combing this with the result that $\tilde \Lambda_i$ is a consistent estimator for $H_i \Lambda_i$ and $H \Lambda_i$, we conclude that $\widehat \covI_{\Lambda,j}$ is a consistent estimator for the asymptotic variance of $\sqrt{N}(\tilde \Lambda_i - H_i \Lambda_i)$: 
				\[ \widehat \covI_{\Lambda,j} =   \frac{T}{N^2} \tilde D^{-1} \sum_{i=1}^N \sum_{l=1}^N \tilde \Lambda_i \tilde \Lambda_i^\T \Lp \frac{1}{|\tlq_{ij}| |\tlq_{lj}|}  \sum_{t,s \in \tlq_{ij}}  \tilde F_t \tilde F_s^{\top}    \tilde e_{it} \tilde e_{ls}  \mathbbm{1}_{\{i,k,s,t \in \mathcal{E} \}}  \Rp \tilde \Lambda_l \tilde \Lambda_l^\T \tilde D^{-1}. \]
			\end{proof}

			\begin{lemma}\label{lemma:Lambda-var-corrector}
				Under the assumptions in Corollary \ref{corollary:asy-normal-equal-weight}, the plug-in estimator is consistent, i.e., 
				\[\widehat \covII_{\Lambda, j} =  \big( \tilde \Lambda_j^{\top} \otimes \hat \Sigma_{\Lambda}  \big) \hat \Xi_F \big( \tilde \Lambda_j \otimes \hat \Sigma_{\Lambda} \big)  \xrightarrow{P} H  \covII_{\Lambda,j} H^\T, \]
				where $\hat \Sigma_{\Lambda} =  \frac{1}{N} \sum_{i = 1}^N \tilde \Lambda_i \tilde \Lambda_i^\T$ and $\hat \Xi_F = \frac{1}{T} \sum_{t = 1}^T \tvec( \tilde F_t \tilde F_t^\T ) \tvec( \tilde F_t \tilde F_t^\T )^\T $. 
			\end{lemma}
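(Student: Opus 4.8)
The plan is to treat this as a plug-in/continuous-mapping result: each of the three random ingredients $\tilde\Lambda_j$, $\hat\Sigma_{\Lambda}$ and $\hat\Xi_F$ converges in probability to its population counterpart rotated by the appropriate power of $H$, after which Slutsky's theorem together with the Kronecker mixed-product identity $(A_1A_2)\otimes(B_1B_2)=(A_1\otimes B_1)(A_2\otimes B_2)$ collapses the rotations into a single outer $H(\cdot)H^\top$. First I would record the three limits. (i) $\tilde\Lambda_j \xrightarrow{p} H\Lambda_j$ pointwise, which is immediate from the $\sqrt{T}$-consistency established in Theorem \ref{theorem:asy-normal-equal-weight}.1. (ii) $\hat\Sigma_{\Lambda}=\tfrac1N\sum_i\tilde\Lambda_i\tilde\Lambda_i^\top \xrightarrow{p} H\Sigma_{\Lambda}H^\top$: writing $\tilde\Lambda_i=H\Lambda_i+(\tilde\Lambda_i-H\Lambda_i)$, the leading term converges by Assumption \ref{ass:simple-factor-model}.2, while the cross and quadratic remainders are $O_P(\delta^{-1/2})$ and $O_P(\delta^{-1})$ by Cauchy--Schwarz and the mean-square bound $\tfrac1N\sum_i\|\tilde\Lambda_i-H\Lambda_i\|^2=O_P(1/\delta)$ of Theorem \ref{thm:consistency-same-H} (indeed $\hat\Sigma_{\Lambda}=I_r$ exactly under the normalization $\tilde\Lambda^\top\tilde\Lambda/N=I_r$, and correspondingly $\plim\,H\Sigma_{\Lambda}H^\top=I_r$). (iii) $\hat\Xi_F \xrightarrow{p} (G\otimes G)\,\Xi_F\,(G\otimes G)^\top$ with $G:=(H^\top)^{-1}$, using $\tvec(\tilde F_t\tilde F_t^\top)=(G\otimes G)\tvec(F_tF_t^\top)+o$ and the LLN $\tfrac1T\sum_t\tvec(F_tF_t^\top)\tvec(F_tF_t^\top)^\top\xrightarrow{p}\Xi_F$.

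The key algebraic step is the telescoping of the rotation matrices. Substituting the three limits into $\widehat\covII_{\Lambda,j}=(\tilde\Lambda_j^\top\otimes\hat\Sigma_{\Lambda})\hat\Xi_F(\tilde\Lambda_j\otimes\hat\Sigma_{\Lambda})$ and applying the mixed-product identity together with $H^\top(H^{-1})^\top=I_r$, the inner $G\otimes G=(H^{-1})^\top\otimes(H^{-1})^\top$ factors of $\hat\Xi_F$ cancel against the $H^\top$ carried by $\tilde\Lambda_j^\top$ and $\hat\Sigma_{\Lambda}$, yielding
\begin{align*}
\widehat\covII_{\Lambda,j}\xrightarrow{p}(\Lambda_j^\top\otimes H\Sigma_{\Lambda})\,\Xi_F\,(\Lambda_j\otimes\Sigma_{\Lambda}H^\top)=H(\Lambda_j^\top\otimes\Sigma_{\Lambda})\,\Xi_F\,(\Lambda_j\otimes\Sigma_{\Lambda})H^\top,
\end{align*}
where the last equality again uses the mixed-product identity to pull $H$ and $H^\top$ outside. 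To land exactly on $H\covII_{\Lambda,j}H^\top$ I must then reconcile this with the definition $\covII_{\Lambda,j}=(\Lambda_j^\top\otimes I_r)\Phi_j(\Lambda_j\otimes I_r)$: in the simplified model Step 5.2 of the proof of Proposition \ref{prop:simple-assump-imply-general-assump}.2 gives $\Phi_j=(\omega_{jj}-1)(I_r\otimes\Sigma_{\Lambda})\Xi_F(I_r\otimes\Sigma_{\Lambda})$, so that the factor-model quantity above carries the observation-pattern weight $\omega_{jj}-1$. The remaining ingredient of the plan is therefore to supply a consistent estimate $\hat\omega_{jj}=\tfrac1{N^2}\sum_{i,l}\frac{|\tlq_{ij}\cap\tlq_{lj}|\,T}{|\tlq_{ij}|\,|\tlq_{lj}|}$, whose consistency for $\omega_{jj}$ follows from Assumption \ref{ass:simple-moment}.2 and the Chebyshev argument of Step 5.1 there, since this weight cannot be recovered from the purely time-series object $\hat\Xi_F$.

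The main obstacle I anticipate is ingredient (iii), the consistency of $\hat\Xi_F$. Unlike $\tilde\Lambda_j$, this is a time-average of a quartic functional of the estimated factors, so the pointwise $\sqrt{\delta}$-consistency of $\tilde F_t$ from Theorem \ref{theorem:asy-normal-equal-weight}.2 does not suffice; I would need the mean-square statement $\tfrac1T\sum_t\|\tilde F_t-GF_t\|^2=O_P(1/\delta)$ (the factor analogue of Theorem \ref{thm:consistency-same-H}), together with moment bounds controlling the cross terms $\tfrac1T\sum_t\|\tilde F_t-GF_t\|^{k}\|F_t\|^{4-k}$ that arise when expanding $\tvec(\tilde F_t\tilde F_t^\top)\tvec(\tilde F_t\tilde F_t^\top)^\top$ around $(G\otimes G)\tvec(F_tF_t^\top)\tvec(F_tF_t^\top)^\top(G\otimes G)^\top$; a Hölder/Cauchy--Schwarz splitting using the finite eighth-moment assumption on $e$ and the finite fourth-moment assumption on $F$ should close these. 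By contrast the rotation bookkeeping is routine once the mixed-product identity is invoked, and the interchangeability of $\hat\Sigma_{\Lambda}$ with $H\Sigma_{\Lambda}H^\top$ is immediate from the normalization, so the genuine effort sits in the averaged control of the estimated factors and in separating out the observation weight.
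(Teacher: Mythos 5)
Your proposal is correct and follows essentially the same route as the paper's own (one-paragraph) proof: plug in $\tilde\Lambda_j$, $\hat\Sigma_{\Lambda}$ and $\hat\Xi_F$, invoke consistency of the estimated loadings and factors, and let the rotation matrices cancel via the Kronecker mixed-product identity, leaving the outer $H(\cdot)H^\top$. Your explicit computation of the limit, $H(\Lambda_j^\top\otimes\Sigma_{\Lambda})\,\Xi_F\,(\Lambda_j\otimes\Sigma_{\Lambda})H^\top$, is exactly the cancellation the paper asserts in one line (``the rotation matrices cancel out by the definition of $X_i$'').

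One of your refinements is worth highlighting because it concerns what the lemma actually proves. You note that Theorem \ref{theorem:asy-normal-equal-weight} defines $\covII_{\Lambda,j}=(\Lambda_j^\top\otimes I_r)\Phi_j(\Lambda_j\otimes I_r)$, and that in the simplified model Step 5.2 of the proof of Proposition \ref{prop:simple-assump-imply-general-assump} gives $\Phi_j=(\omega_{jj}-1)(I_r\otimes\Sigma_{\Lambda})\Xi_F(I_r\otimes\Sigma_{\Lambda})$, so the plug-in object converges to the \emph{unweighted} core and misses the factor $(\omega_{jj}-1)$. This is a real discrepancy: the paper's proof of the lemma simply asserts that $\Phi_j$ ``simplifies to $(I_r\otimes\Sigma_{\Lambda})\Xi_F(I_r\otimes\Sigma_{\Lambda})$,'' silently dropping the weight, which is inconsistent with its own Proposition \ref{prop:simple-assump-imply-general-assump} but consistent with Corollary \ref{corollary:asy-normal-equal-weight}, where the loading variance is written as $\omega_{jj}\scovI_{\Lambda}+(\omega_{jj}-1)\scovII_{\Lambda,j}$ with the weight factored out multiplicatively. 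Your resolution --- treat the plug-in estimator as targeting the unweighted middle term and estimate $\omega_{jj}$ separately by $\hat\omega_{jj}=\frac{1}{N^2}\sum_{i,l}\frac{T\,|\tlq_{ij}\cap\tlq_{lj}|}{|\tlq_{ij}|\,|\tlq_{lj}|}$, which is computable directly from the observed $W$ --- is the internally consistent reading and is, if anything, more careful than the paper. Your second concern, that pointwise consistency of $\tilde F_t$ does not automatically yield consistency of the quartic time-average $\hat\Xi_F$, is also legitimate; the paper glosses over this in exactly the way you anticipate, invoking only ``the consistency of $\tilde\Lambda_i$ and $\tilde F_t$,'' so your proposed mean-square control of $\frac{1}{T}\sum_t\|\tilde F_t-(H^\top)^{-1}F_t\|^2$ together with a H\"older/Cauchy--Schwarz splitting is a genuine (and needed) strengthening rather than a deviation in strategy.
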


			\begin{proof}[Proof of Lemma \ref{lemma:Lambda-var-corrector}]
				
				Note that $\covII_{\Lambda,j} = ( \Lambda_j^\T \otimes I_r ) \Phi_j (\Lambda_j \otimes I_r) $, where $\Phi_j$ is the asymptotic variance of $X_j = \frac{1}{N} \sum_{l = 1}^N \Lambda_l \Lambda_l^\T  \Big( \frac{1}{|\tlq_{lj}|} \sum_{t \in \tlq_{lj}} F_t F_t^\T - \frac{1}{T} \sum_{t = 1}^T F_t F_t^\T \Big)$. Under the assumptions in Corollary \ref{corollary:asy-normal-equal-weight}, $\Phi_j$ simplifies to $\Phi_j = (I_r  \otimes  \Sigma_\Lambda )  \Xi_F    (  I_r  \otimes \Sigma_\Lambda)$. We use the plug-in estimators $\hat \Sigma_{\Lambda} =  \frac{1}{N} \sum_{i = 1}^N \tilde \Lambda_i \tilde \Lambda_i^\T$ and $\hat \Xi_F = \frac{1}{T} \sum_{t = 1}^T \tvec( \tilde F_t \tilde F_t^\T ) \tvec( \tilde F_t \tilde F_t^\T )^\T $ for $\Sigma_{\Lambda} $ and $\Xi_F $ respectively.
				The rotation matrices cancel out by the definition of $X_i $. Combining this result with the consistency of $\tilde \Lambda_i$ and $\tilde F_t$, we conclude that $\widehat \covII_{\Lambda,j}$ is consistent. 
			\end{proof}
			
			\subsubsection{Feasible Estimators for Theorem \ref{theorem:asy-normal-equal-weight}.2 and \ref{theorem:asy-normal}.1 } \label{subsec:feasible-factor}

			\begin{lemma}\label{lemma:F-asy-var-main-estimator} 
				Assume  we know the set $\mathcal{E}= \{ i,j,s,t : \+E[e_{it}e_{js}] \neq 0 $ and $| \mathcal{E}_t | = O(N)$.  Under the assumptions in Theorem \ref{theorem:asy-normal},  we have
				\[\widehat \covI_{F,t}  =  \frac{1}{N} \sum_{i=1}^N \sum_{j=1}^N W_{it} W_{jt} \tilde \Lambda_i \tilde \Lambda_j^{\top} \tilde e_{it} \tilde e_{jt} \mathbbm 1_{\{i,j \in \mathcal{E}_t \}} \xrightarrow{P} H \covI_{F,t} H^\T, \]
				where $\tilde e_{it} = \tilde Y_{it} - \tilde \Lambda_i^\T \tilde F_t $  for observed $Y_{it}$.
				Under the assumptions in Theorem \ref{theorem:asy-normal} we have 
				\[\widehat \covIS_{F,t}  =  \frac{1}{N} \sum_{i=1}^N \sum_{j=1}^N \frac{W_{it} W_{lt}}{\hat P(W_{it}=1|S_i) \hat P_t(W_{lt}=1|S_l)}  \tilde \Lambda_i \tilde \Lambda_l^\T  \tilde e^S_{it} \tilde e^S_{jt} \mathbbm 1_{\{i,j \in \mathcal{E}_t \}}\xrightarrow{P} H \covIS_{F,t} H^\T, \]
				$\tilde e^S_{it} = \tilde Y_{it} - \tilde \Lambda_i^\T \tilde F^S_t$ for observed $Y_{it}$ and $\hat P(W_{it}=1|S)$ is a consistent estimate for $P(W_{it}=1|S)$. 
			\end{lemma}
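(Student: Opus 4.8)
The plan is to mirror the argument for $\widehat \covI_{\Lambda,j}$ in Lemma \ref{lemma:Lambda-asy-var-main-estimator}, transposed to the cross-sectional direction. By Assumption \ref{ass:mom-clt}.\ref{ass:asy-normal-main-term-thm-factor}, the population object $\covI_{F,t}$ is the asymptotic variance of $\frac{1}{\sqrt N}\sum_{i=1}^N W_{it}\Lambda_i e_{it}$, which under the sparsity assumption $|\mathcal E_t|=O(N)$ equals the limit of $\frac{1}{N}\sum_{i=1}^N\sum_{j=1}^N W_{it}W_{jt}\Lambda_i\Lambda_j^\top \+E[e_{it}e_{jt}]\,\mathbbm 1_{\{i,j\in\mathcal E_t\}}$, since only $O(N)$ of the $N^2$ error cross-moments are non-zero. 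First I would establish that $\tilde e_{it}=Y_{it}-\tilde\Lambda_i^\top\tilde F_t$ is a pointwise-consistent estimate of $e_{it}$ on the observed entries: writing $\tilde e_{it}=e_{it}-(\tilde C_{it}-C_{it})$ and invoking Theorem \ref{theorem:asy-normal-equal-weight}.3 gives $\tilde e_{it}-e_{it}=O_P(1/\sqrt\delta)$, while $\tilde\Lambda_i\xrightarrow{p}H\Lambda_i$ from Theorem \ref{theorem:asy-normal-equal-weight}.1.

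Next I would substitute these expansions into the double sum and expand $\tilde\Lambda_i\tilde\Lambda_j^\top \tilde e_{it}\tilde e_{jt}$ around $H\Lambda_i\Lambda_j^\top H^\top e_{it}e_{jt}$. The leading term is $\frac1N\sum_{i,j}W_{it}W_{jt}H\Lambda_i\Lambda_j^\top H^\top e_{it}e_{jt}\,\mathbbm 1_{\{i,j\in\mathcal E_t\}}$; its conditional mean given $(\Lambda,W)$ converges to $H\covI_{F,t}H^\top$ by the definition of the asymptotic variance, and a Chebyshev (weak LLN) argument over the $O(N)$ non-zero index pairs, using the bounded eighth moments of $e$ from Assumption \ref{ass:factor-model}.\ref{ass:error}.(a), shows the deviation from this mean is $o_P(1)$. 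The rotation matrices $H$ survive precisely because $\tilde\Lambda_i$ estimates $H\Lambda_i$, and they are what cancel against the $H^{-1}$ factors when $\widehat\covI_{F,t}$ is later inserted into the factor distribution of Theorem \ref{theorem:asy-normal-equal-weight}.2.

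The remaining work is to dominate the cross- and higher-order remainders, for instance $\frac1N\sum_{i,j}W_{it}W_{jt}\tilde\Lambda_i\tilde\Lambda_j^\top e_{it}(\tilde C_{jt}-C_{jt})\,\mathbbm 1_{\{i,j\in\mathcal E_t\}}$ and the companion term carrying $(\tilde\Lambda_i-H\Lambda_i)$. Here the leverage is again $|\mathcal E_t|=O(N)$: for each fixed $i$ there are only $O(1)$ indices $j$ with $(i,j)\in\mathcal E_t$, so each remainder is a cross-sectional average of $O(N)$ terms, each carrying a factor of order $O_P(1/\sqrt\delta)$ or $O_P(1/\dnt)$, and Cauchy--Schwarz together with the consistency rates of $\tilde\Lambda$ and $\tilde C$ forces all of them to be $o_P(1)$.

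I expect the main obstacle to be controlling these remainders without leaking a non-vanishing term: the estimated residual $\tilde e_{it}$ depends on the entire $(\tilde\Lambda,\tilde F_t)$, so its estimation error is correlated across $i$ within column $t$, and a crude bound over all $N^2$ pairs would not close. The sparsity $|\mathcal E_t|=O(N)$, which restricts the double sum to $O(N)$ effective terms, is exactly what prevents this accumulation. For the propensity-weighted estimator $\widehat\covIS_{F,t}$ the argument is identical once I note that $\tilde F^S_t\xrightarrow{p}(H^\top)^{-1}F_t$ by Theorem \ref{theorem:asy-normal} keeps $\tilde e^S_{it}=Y_{it}-\tilde\Lambda_i^\top\tilde F^S_t$ consistent, and that replacing $P(W_{it}=1|S_i)$ by a uniformly consistent $\hat P(W_{it}=1|S_i)$ contributes only an asymptotically negligible perturbation by the mechanism of Theorem \ref{thm:feasible-estimator}, so that the weighted double sum converges in probability to $H\covIS_{F,t}H^\top$.
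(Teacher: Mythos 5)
Your proposal is correct and takes essentially the same route as the paper's own (very terse) proof, which simply observes that $\tilde e_{it}$ and $\tilde e^S_{it}$ are consistent for $e_{it}$ on observed entries by the reasoning of Lemma \ref{lemma:Lambda-asy-var-main-estimator}, combines this with the consistency of $\tilde \Lambda_i$ for $H\Lambda_i$ and of $\hat P(W_{it}=1|S)$ for the propensity, and concludes by a plug-in argument. Your write-up fills in the details the paper leaves implicit --- the expansion of $\tilde\Lambda_i\tilde\Lambda_j^\top\tilde e_{it}\tilde e_{jt}$ around $H\Lambda_i\Lambda_j^\top H^\top e_{it}e_{jt}$, the law-of-large-numbers step over the $O(N)$ non-zero pairs in $\mathcal{E}_t$, and the role of this sparsity in keeping the cross-sectionally correlated estimation error of $\tilde e_{it}$ from accumulating --- but introduces no new idea beyond the paper's argument.
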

			
			\begin{proof}
				If $Y_{it}$  is observed,  $\tilde e_{it}$ and $\tilde e^S_{it}$ are consistent estimators for $e_{it}$ following from the same reasoning as in Lemma \ref{lemma:Lambda-asy-var-main-estimator}. Combined with the result that $\tilde \Lambda_i$ is a consistent estimator for $H_i \Lambda_i$ and  $\hat P(W_{it}=1|S)$ is a consistent estimate for $P(W_{it}=1|S)$, it follows that Lemma \ref{lemma:F-asy-var-main-estimator} holds.  
			\end{proof}
			
			\begin{lemma}\label{lemma:F-var-corrector} 
				Under the assumptions in Corollary \ref{corollary:asy-normal-equal-weight}, it holds that
				\[\widehat \covII_{F,t} =  \big( I_r \otimes (\tilde F_t^{\top} \hat \Sigma_F^{-1} \hat  \Sigma_{\Lambda}^{-1}) \big) ( \hat  \Sigma_{\Lambda,t}  \otimes  \hat  \Sigma_\Lambda )  \Xi_F    ( \hat  \Sigma_{\Lambda,t} \otimes \hat  \Sigma_\Lambda)  \big( I_r \otimes ( \hat  \Sigma_{\Lambda}^{-1} \hat  \Sigma_F^{-1} \tilde F_t )\big) \xrightarrow{P} H  \covII_{F,t} H^\T; \]
				Under the assumptions in Corollary \ref{corollary:asy-normal}, it holds that
				\[\widehat \covIIS_{F,t} =  \big( I_r \otimes (\tilde F_t^{\top} \hat \Sigma_F^{-1} \hat  \Sigma_{\Lambda}^{-1}) \big) ( \hat  \Sigma_{\Lambda}  \otimes  \hat  \Sigma_\Lambda )  \Xi_F    ( \hat  \Sigma_{\Lambda} \otimes \hat  \Sigma_\Lambda)  \big( I_r \otimes ( \hat  \Sigma_{\Lambda}^{-1} \hat  \Sigma_F^{-1} \tilde F_t )\big) \xrightarrow{P} H  \covIIS_{F,t} H^\T, \]
				where $\hat \Sigma_{\Lambda} =  \frac{1}{N} \sum_{i = 1}^N \tilde \Lambda_i \tilde \Lambda_i^\T$, $\hat \Sigma_{\Lambda,t} = \frac{1}{N} \sum_{i = 1}^N W_{it} \tilde \Lambda_i \tilde \Lambda_i^\T$,  $\hat \Sigma_F = \frac{1}{T} \sum_{t = 1}^T \tilde F_t \tilde F_t^\T$ and \\ $\hat \Xi_F = \frac{1}{T} \sum_{t = 1}^T \tvec( \tilde F_t \tilde F_t^\T ) \tvec( \tilde F_t \tilde F_t^\T )^\T $. 
			\end{lemma}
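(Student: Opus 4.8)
The plan is to prove the convergence by the continuous mapping theorem, showing that each sample moment entering $\widehat \covII_{F,t}$ converges in probability to an appropriately rotated population analogue and that the rotation matrices telescope into the outer factors $H(\cdot)H^\T$; the argument parallels Lemma \ref{lemma:Lambda-var-corrector}. First I would settle the building blocks. Writing $\tilde\Lambda_i = H\Lambda_i + (\tilde\Lambda_i - H\Lambda_i)$ and bounding the cross and quadratic terms by Cauchy--Schwarz together with $\frac{1}{N}\sum_{i=1}^N\norm{\tilde\Lambda_i - H\Lambda_i}^2 = O_P(1/\delta)$ from Theorem \ref{thm:consistency-same-H}, I obtain $\hat\Sigma_\Lambda \xrightarrow{P} H\Sigma_\Lambda H^\T$ and, invoking Assumption \ref{ass:simple-moment}.1, $\hat\Sigma_{\Lambda,t} \xrightarrow{P} H\Sigma_{\Lambda,t}H^\T$. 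The pointwise consistency of the factors in Theorem \ref{theorem:asy-normal-equal-weight}.2 gives $\tilde F_t \xrightarrow{P} (H^{-1})^\T F_t$, and the laws of large numbers discussed in Section \ref{sec:feasible} yield $\hat\Sigma_F \xrightarrow{P} (H^{-1})^\T \Sigma_F H^{-1}$ and $\hat\Xi_F \xrightarrow{P} \big((H^{-1})^\T \otimes (H^{-1})^\T\big)\Xi_F\big(H^{-1}\otimes H^{-1}\big)$.

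Next I would substitute these limits into the plug-in formula and keep track of the rotations with the Kronecker rules $(A\otimes B)(C\otimes D) = (AC)\otimes(BD)$ and $H(I_r\otimes M) = H\otimes M$. The two Kronecker endpoints simplify because the interior $H$ factors cancel, giving $\tilde F_t^\T\hat\Sigma_F^{-1}\hat\Sigma_\Lambda^{-1} \xrightarrow{P} F_t^\T\Sigma_F^{-1}\Sigma_\Lambda^{-1}H^{-1}$ and, symmetrically, $\hat\Sigma_\Lambda^{-1}\hat\Sigma_F^{-1}\tilde F_t \xrightarrow{P} (H^{-1})^\T\Sigma_\Lambda^{-1}\Sigma_F^{-1}F_t$. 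Setting $G = H\otimes H$ and using $\hat\Xi_F \xrightarrow{P} (G^{-1})^\T\Xi_F G^{-1}$, the middle sandwich converges as
\[ (\hat\Sigma_{\Lambda,t}\otimes\hat\Sigma_\Lambda)\,\hat\Xi_F\,(\hat\Sigma_{\Lambda,t}\otimes\hat\Sigma_\Lambda) \xrightarrow{P} G\,(\Sigma_{\Lambda,t}\otimes\Sigma_\Lambda)\Xi_F(\Sigma_{\Lambda,t}\otimes\Sigma_\Lambda)\,G^\T, \]
so that the interior $G$'s cancel as well. Collecting the Kronecker products then isolates a single $H$ on the left and $H^\T$ on the right of a loading/factor sandwich.

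To identify this sandwich with $\covII_{F,t}$ I would appeal, under the assumptions of Corollary \ref{corollary:asy-normal-equal-weight}, to the simplified-model closed form of $\mathbf{\Phi}_t$ obtained in the proof of Proposition \ref{prop:simple-assump-imply-general-assump} (Step 5.4), namely $\mathbf{\Phi}_t = (\omega-1)(\Sigma_{\Lambda,t}\otimes\Sigma_\Lambda)\Xi_F(\Sigma_{\Lambda,t}\otimes\Sigma_\Lambda)$, together with the definition $\covII_{F,t} = \big(I_r\otimes(F_t^\T\Sigma_F^{-1}\Sigma_\Lambda^{-1})\big)\mathbf{\Phi}_t\big(I_r\otimes(\Sigma_\Lambda^{-1}\Sigma_F^{-1}F_t)\big)$ from Theorem \ref{theorem:asy-normal-equal-weight}; matching the two expressions identifies the plug-in limit with $H\covII_{F,t}H^\T$, the only non-moment ingredient being the scalar weight $\omega-1$ carried by $\mathbf{\Phi}_t$. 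The propensity-weighted statement is identical, except that by Proposition \ref{prop:simple-assump-imply-general-assump-conditional} (Step 5.2) the weight $1/\spsi$ removes the asymptotic dependence between $W_{it}$ and $\Lambda_i$, so $\mathbf{\Phi}_t^S = (\omega-1)(\Sigma_\Lambda\otimes\Sigma_\Lambda)\Xi_F(\Sigma_\Lambda\otimes\Sigma_\Lambda)$ no longer involves $\Sigma_{\Lambda,t}$; this is precisely why, under the assumptions of Corollary \ref{corollary:asy-normal}, the feasible estimator $\widehat \covIIS_{F,t}$ replaces $\hat\Sigma_{\Lambda,t}$ by $\hat\Sigma_\Lambda$.

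The main obstacle is the rotation bookkeeping: the correction term is a genuine double rotation in which the loading rotation $H$ interacts with the factor rotation $(H^{-1})^\T$ through their Kronecker products, so the cancellation of the interior $H$'s (and of the $G$'s in the middle sandwich) must be checked term by term rather than asserted. A related subtlety is that $\mathbf{\Phi}_t$ carries the observation-pattern weight $\omega-1$, which is not a moment of the loadings or factors; ensuring that the plug-in limit reproduces this scalar is the delicate constant-matching step, and for a fully feasible test it is the point at which a separate consistent estimate of $\omega$, built from the observed pattern $W$, would enter. Away from these points the argument is routine given the consistency of $\tilde\Lambda$, $\tilde F$ and the sample moments.
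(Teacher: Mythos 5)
Your proposal is correct and follows essentially the same route as the paper's proof: plug-in consistency of $\hat \Sigma_\Lambda$, $\hat \Sigma_{\Lambda,t}$, $\hat \Sigma_F$, $\hat \Xi_F$ and $\tilde F_t$ (with $\tilde \Lambda_i \approx H\Lambda_i$, $\tilde F_t \approx (H^{-1})^\T F_t$), followed by the Kronecker bookkeeping under which all interior rotations cancel and a single $H(\cdot)H^\T$ survives --- the paper compresses this into ``the rotation matrices cancel out by the definition of $\mathbf{X}_t$.'' Your closing observation about the scalar $\omega-1$ is also well taken: the paper's own lemma proof writes $\mathbf{\Phi}_t$ without that weight (in contrast to Step 5.4 of the proof of Proposition \ref{prop:simple-assump-imply-general-assump}, where $\mathbf{\Phi}_t$ carries the factor $\omega-1$), so the plug-in limit is the unweighted moment sandwich, and the weight --- computable directly from the observed pattern $W$ via the sample analogues of the $q_{ij,kl}$ --- must enter the feasible variance separately, exactly as you note.
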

			
			\begin{proof}[Proof of Lemma \ref{lemma:F-var-corrector}]
				Note that $\covII_{F,t} =  \big( I_r \otimes (F_t^{\top}  \Sigma_F^{-1} \Sigma_{\Lambda}^{-1}) \big) \mathbf{\Phi}_t \big( I_r \otimes (\Sigma_{\Lambda}^{-1} \Sigma_F^{-1} F_t )\big) $, where $\mathbf{\Phi}_t$ is the asymptotic variance of $\mathbf{X}_t = \frac{1}{N^2} \sum_{i = 1}^N \sum_{l = 1}^N  W_{it} \Lambda_l \Lambda_l^\T  \Big( \frac{1}{|\tlq_{li}|} \sum_{t \in \tlq_{li}} F_t F_t^\T - \frac{1}{T} \sum_{t = 1}^T F_t F_t^\T \Big) \Lambda_i \Lambda_i^\T$. Under the assumptions in Corollary \ref{corollary:asy-normal-equal-weight}, $\mathbf{\Phi}_t$ simplifies to $\mathbf{\Phi}_t = ( \Sigma_{\Lambda,t}  \otimes  \Sigma_\Lambda )  \Xi_F    (  \Sigma_{\Lambda,t} \otimes \Sigma_\Lambda)$. We use plug-in estimators $\hat \Sigma_{\Lambda} =  \frac{1}{N} \sum_{i = 1}^N \tilde \Lambda_i \tilde \Lambda_i^\T$, $\hat \Sigma_{\Lambda,t} = \frac{1}{N} \sum_{i = 1}^N W_{it} \tilde \Lambda_i \tilde \Lambda_i^\T$,  $\hat \Sigma_F = \frac{1}{T} \sum_{t = 1}^T \tilde F_t \tilde F_t^\T$ and $\hat \Xi_F = \frac{1}{T} \sum_{t = 1}^T \tvec( \tilde F_t \tilde F_t^\T ) \tvec( \tilde F_t \tilde F_t^\T )^\T $ for $\Sigma_{\Lambda} $, $\Sigma_{\Lambda,t} $, $ \Sigma_F $, and $\Xi_F $ respectively.
				The rotation matrices cancel out by the definition of $\mathbf{X}_t $. Combining these results with the consistency of $\tilde \Lambda_i$ and $\tilde F_t$, and the non-singularity of $\Sigma_F$ and $\Sigma_\Lambda$, we conclude that $\widehat \covII_{F,t}$ and $\widehat \covIIS_{F,t}$ are consistent. 
			\end{proof}

			For the other terms in Theorem \ref{theorem:asy-normal-equal-weight}.3, Theorem \ref{theorem:asy-normal}.2 and Theorem \ref{theorem:ate-same-factor}, we can use similar arguments as in Section \ref{subsec:feasible-loadings} and \ref{subsec:feasible-factor} to prove that the plug-in estimators of the asymptotic covariances are consistent. By Slusky's theorem, the  asymptotic statements in the respective theorems continue to hold with the estimated covariance matrices.

	\onehalfspacing

\end{document}